\documentclass[a4paper,11pt]{article}

\pdfoutput=1 
\usepackage{jheppub} 
\usepackage{makeidx}
\usepackage[totoc]{idxlayout}
\usepackage[T1]{fontenc} 
\definecolor{darkgreen}{rgb}{0.0, 0.5, 0.0}
\usepackage{amsmath}
\usepackage{hhline}
\usepackage{multirow}
\usepackage{rotating}
\usepackage{mathdots}
\usepackage{tikz}
\usetikzlibrary{matrix}
\pgfdeclarelayer{background}
\pgfsetlayers{background,main}
\usepackage{slashed}

\usepackage{cancel}

%
%
\usepackage{slashed}
\usepackage{url}

\usepackage{amssymb}  
\usepackage{float} 
\usepackage{xcolor} 
\usepackage{graphicx} 
\usepackage{mathdots} 
\usepackage[normalem]{ulem}
\usepackage{bbm}
\usepackage{dsfont}
\usepackage[mathscr]{euscript}
\usepackage{ntheorem}
\usepackage{tensor} 
\usepackage{pdfpages} 

\newcommand{\ptcend}[1]{}

\newcommand{\Qnobeta}{\blue{\mathring Q}}

\newcommand{\zTSlap}{\blue{\mathring \Delta}}

\newcommand{\nohatL}{\blue{\operatorname{L}}}

\newcommand{\Eqref}[1]{\blue{Equation~\eqref{#1}}}

\newcommand{\HSmthree}[1]{\blue{H^{k_{#1}-3}(\secN)}}

\newcommand{\secNp}{{\mcN}}
\newcommand{\CrHk}{\blue{ H^{k_{\gamma};+}_{\secNp}}}
\newcommand{\CrHkm}{\blue{ H^{k_{\gamma}-1;+}_{\secNp}}}
\newcommand{\CrHkmi}{\blue{ H^{k_{\gamma}-2i-1;+}_{\secNp}}}

\newcommand{\CrHkmmi}{\blue{ H^{k_{\gamma}-2i ;+}_{\secNp}}}

\newcommand{\CrHkmoj}{\blue{ H^{k_{\gamma}- j  ;+}_{\secNp}}}
\newcommand{\CrHkmmell}{\blue{ H^{k_{\gamma}-2\ell ;+}_{\secNp}}}
\newcommand{\CrHkmm}{\blue{ H^{k_{\gamma}-2;+}_{\secNp}}}

\newcommand{\HkdM}{\blue H^{k}({\dmanif})}

\newcommand{\Hkng}{\blue H^{k}({\secN})}

\newcommand{\Hkm}{\blue{ H^{k_{\gamma}-1}({\secN})}}
\newcommand{\Hkmi}{\blue{ H^{k_{\gamma}-2i-1}({\secN})}}
\newcommand{\Hkmmi}{\blue{ H^{k_{\gamma}-2i}({\secN})}}
\newcommand{\Hkmmj}{\blue{ H^{k_{\gamma}-2j}({\secN})}}
\newcommand{\Hkmm}{\blue{ H^{k_{\gamma}-2}({\secN})}}
\newcommand{\Hkp}{\blue{ H^{k_{\gamma}+1}({\secN})}}
\newcommand{\Hkpp}{\blue{ H^{k_{\gamma}+2}({\secN})}}
\newcommand{\Hkdt}{\blue{H^{k_{\gamma}}_{\mathrm{Bo}}({\secN})}}
\newcommand{\Hkzeta}{\blue{H^{k_{\gamma}}_{\zeta}({\secN})}}

\newcommand{\MHktwoi}{\blue H^{k_{\gamma}+2i}}
\newcommand{\MHktwoin}{\blue H^{k+n-3}}

\newcommand{\MHk}{\blue H^{k} }
\newcommand{\MHka}{\blue H^{k+a} }
\newcommand{\MHkab}{\blue H^{k+a+b} }
\newcommand{\MHell}{\blue H^{\ell} }
\newcommand{\MHlk}{\blue H^{\ell+k} }

\newcommand{\MHkm}{\blue{ H^{k-1} }}

\newcommand{\MHkp}{\blue{ H^{k+1} }}
\newcommand{\MHkpp}{\blue{ H^{k+2} }}

\newcommand{\MHkpppp}{\blue{ H^{k+4} }}

\newcommand{\MMHktwoi}{\blue{H^{k+2i}(\dmanif)}}
\newcommand{\MMHktwon}{\blue H^{k+2n}({\dmanif}) }
\newcommand{\MMHktwonm}{\blue H^{k+2(n-2)}({\dmanif}) }

\newcommand{\MMHktwoip}{\blue H^{k+2i+1}({\dmanif}) }
\newcommand{\MMHkn}{\blue H^{k+n}({\dmanif}) }
\newcommand{\MMHknm}{\blue H^{k+(n-2)}({\dmanif}) }

\newcommand{\MMHk}{\blue H^{k}({\dmanif})}
\newcommand{\MMHka}{\blue H^{k+a}({\dmanif})}
\newcommand{\MMHkab}{\blue H^{k+a+b}({\dmanif}) }

\newcommand{\MMHlk}{\blue H^{\ell+k}({\dmanif})}

\newcommand{\MMHkp}{\blue{ H^{k+1}({\dmanif})}}
\newcommand{\MMHkpp}{\blue{ H^{k+2}({\dmanif})}}

\newcommand{\kgamma}{\blue{k_\gamma}}

\newcommand{\ofP}{\blue{(\TSzlap,P)}}
\newcommand{\ofPnoP}{\red{\TSzlap,P}}

\newcommand{\ofDC}{\blue{(\TSzlap,\zdivtwo\circ\, C)}}
\newcommand{\ofDCnoDC}{\red{\TSzlap,\zdivtwo\circ\, C}}



{\catcode `\@=11 \global\let\AddToReset=\@addtoreset}
\AddToReset{equation}{section}

{\catcode `\@=11 \global\let\AddToReset=\@addtoreset}
\AddToReset{figure}{section}

{\catcode `\@=11 \global\let\AddToReset=\@addtoreset}
\AddToReset{table}{section}

\newcommand{\mrL}{\blue{\!\operatorname{\mathring{\,L}}}}

\newcommand{\zmetric}{\blue{\zzhTBW}}

\newcommand{\bluek}{\blue{k}}
\newcommand{\hak}{\blue{\kgamma}}

\newcommand{\Ck}{\blue{C^k_u\, C^\infty_{(r,x^A)}}}

\newcommand{\zmu}{\blue{\mathring{\mu}}}
\newcommand{\zlambda}{\blue{\mathring{\lambda}}}

\newcommand{\pref}[1]{\ref{#1}, p.~\pageref{#1}}

{\catcode `\@=11 \global\let\AddToReset=\@addtoreset}
\AddToReset{figure}{section}

{\catcode `\@=11 \global\let\AddToReset=\@addtoreset}
\AddToReset{table}{section}

\newcommand{\peqref}[1]{\eqref{#1}, p.~\pageref{#1}}

\newcommand{\myGauss}{{\blue{\twoscsign}}}

\newcommand{\Done}{{\purple{\operatorname{L_1}}}}
\newcommand{\Dtwo}{{\purple{\operatorname{L_2}}}}

\newcommand{\myhatopL}{\blue{\operatorname{\widehat L}}}

 \newcommand{\ddim}{{\blue{d}}}
 \newcommand{\dmanif}{\blue{{}^\ddim M}}
 \newcommand{\dmanifold}{\dmanif}
 \newcommand{\dmetric}{{\blue{\mathring{\gamma}}}}
 \newcommand{\dnabla}{\blue{\mathring{D}}}

 \newcommand{\CKV}{{\blue{\mathrm{CKV}}}}
\newcommand{\CKVp}{{\blue{\mathrm{CKV}^\perp}}}
 \newcommand{\KV}{{\blue{\mathrm{KV}}}}

\newcommand{\TSzlap}{\blue{\mathring \Delta}}
\newcommand{\zDelta}{\TSzlap}

\newcommand{\TTt}{{\blue{\mathrm{TT}}}}
\newcommand{\TTtp}{{\blue{\mathrm{TT}^\perp}}}

\newcommand{\zdivtwo}{\blue{\operatorname{{}\mathring{\text div}_{(2)}}}}
\newcommand{\zDivtwo}{\zdivtwo}
\newcommand{\zdivone}{\blue{\operatorname{{}\mathring{\text div}_{(1)}}}}
\newcommand{\zDivone}{\zdivone}

\newcommand{\zdivonedagger}{\blue{\operatorname{(\mathring{\text div}_{(1)})^\dagger}}}
\newcommand{\zdivtwodagger}{\blue{\operatorname{(\mathring{\text div}_{(2)})^\dagger}}}

\newcommand{\Lopdagger}{\red{\operatorname{\blue{L}_n^\dagger}}}
\newcommand{\opL}{\operatorname{L}}

\newcommand{\hLopdagger}{\hLop_n^\dagger}

\newcommand{\Lop}{\red{\operatorname{\blue{L}_n}}}
\newcommand{\Lndagger}{\red{\operatorname{\blue{L}_n^\dagger}}}

\newcommand{\hLop}{\operatorname{\!\!\blue{\check{\,\,L}}}}


\newcommand{\TS}{\mathop{\mathrm{TS}}}

\newcommand{\twoscsign}{{\blue{\varepsilon}}}

\newcommand{\interph}{\blue{v}}

\newcommand{\overadd}[2]
{\overset{ (#1) }{#2}}

\newcommand{\secN}{\blue{\mathbf{S}}}
\newcommand{\secNone}{\blue{\mathbf{S}_1}}


\newcommand{\wh}{\blue{w}}

\newcommand{\red}[1]{{\color{red} #1}}

\newcommand{\bluec}{\color{blue}}
\newcommand{\barh}{\blue{ h }}
\newcommand{\hBo}{\blue{ h^{\mathrm{Bo}}}}



\newcommand{\R}{\mathbb{R}}

\newcommand{\purple}[1]{{\purplec#1}}

\newcommand{\Ipsi}{{\lambda}}



\newcommand{\zhTB}{\blue{\check h}}

\newcommand{\zhTBW}{\blue{\gamma}}
\newcommand{\zzhTBW}{\blue{\mathring{\zhTBW}}}

\newcommand{\mcN}{{\mycal N}}

\newcommand{\sectionofScri}%
{{ \,\,\,\,\mathring{\!\!\!\!\mcN}}}




%

\newcommand{\mcL}{{\mycal L}}


%
%
%
%
%



%
%
%


%
\newcommand{\ringh}{{  \zzhTBW }}%

\newcommand{\redc}{\color{red}}

\newcommand{\zR}{\blue{\mathring{R}}}

\newcommand{\eq}[1]{(\ref{#1})}




%

\newcommand{\eeal}[1]{\label{#1}\end{eqnarray}}

\newtheorem{theorem}{\sc  Theorem\rm}[section]

\newtheorem{Theorem}[theorem]{\sc  Theorem\rm}

\newtheorem{corollary}[theorem]{\sc  Corollary\rm}

\newtheorem{lemma}[theorem]{\sc Lemma\rm}
\newtheorem{Lemma}[theorem]{\sc Lemma\rm}

\newtheorem{proposition}[theorem]{\sc Proposition\rm}
\newtheorem{Proposition}[theorem]{\sc Proposition\rm}

\theorembodyfont{}

\newtheorem{remark}[theorem]{\sc Remark\rm}
\newtheorem{Remark}[theorem]{\sc Remark\rm}

\newtheorem{remarks}[theorem]{\sc Remarks\rm}

\DeclareFontFamily{OT1}{rsfs}{}
\DeclareFontShape{OT1}{rsfs}{m}{n}{ <-7> rsfs5 <7-10> rsfs7 <10-> rsfs10}{}
\DeclareMathAlphabet{\mycal}{OT1}{rsfs}{m}{n}

\newcommand{\mnote}[1]
{\protect{\stepcounter{mnotecount}}$^{\mbox{\footnotesize $
\bullet$\themnotecount}}$ \marginpar{
\raggedright\tiny\em $\!\!\!\!\!\!\,\bullet$\themnotecount: #1} }

\newcommand{\zgamma}{\blue{\mathring \gamma}}

\newcommand{\Z}{\mathbbm{Z}}

\newcommand{\blue}[1]{{\color{blue} #1}}

\definecolor{applegreen}{rgb}{0.55, 0.71, 0.0}
\definecolor{armygreen}{rgb}{0.29, 0.33, 0.13}

\definecolor{caribbeangreen}{rgb}{0.0, 0.8, 0.6}


\newcounter{mnotecount}[section]

\renewcommand{\themnotecount}{\thesection.\arabic{mnotecount}}

\newcommand{\mnotex}[1]
{\protect{\stepcounter{mnotecount}}$^{\mbox{\footnotesize
$
\bullet$\themnotecount}}$ \marginpar{
\raggedright\tiny\em
$\!\!\!\!\!\!\,\bullet$\themnotecount: #1} }

\newcommand{\bel}[1]{\begin{equation}\label{#1}}
\newcommand{\bea}{\begin{eqnarray}}
\newcommand{\bean}{\begin{eqnarray}\nonumber}
\newcommand{\beal}[1]{\begin{eqnarray}\label{#1}}
\newcommand{\eea}{\end{eqnarray}}

\newcommand{\nn}{\nonumber}

\def\typeout{:<+ #.tex}\include{#}\typeout{:<-}1{\typeout{:<+ #1.tex}\include{#1}\typeout{:<-}}
\newcommand{\qed}{\hfill $\Box$}
\newcommand{\qedskip}{\hfill $\Box$\medskip}
\newcommand{\proof}{\noindent {\sc Proof:\ }}
\newcommand{\be}{\begin{equation}}
\newcommand{\eeq}{\end{equation}}
\newcommand{\ee}{\end{equation}}
\newcommand{\beqa}{\begin{eqnarray}}
\newcommand{\eeqa}{\end{eqnarray}}
\newcommand{\beqan}{\begin{eqnarray*}}
\newcommand{\eeqan}{\end{eqnarray*}}
\newcommand{\ba}{\begin{array}}
\newcommand{\ea}{\end{array}}
\newcommand{\dt}[1]{\blue{\delta\Psi_{\mathrm{Bo}}[#1,\bluek]}}

\newcommand{\const}{\mbox{\rm const}} 

\newcommand{\seccheck}[1]{{\color{darkgreen}\mnotex{reread by ptc up to here on #1}}}

\newcommand{\ptclater}[1]{{\color{darkgreen}\mnotex{ptc , do later: #1}}}
\newcommand{\ptcheck}[1]{{\color{darkgreen}\mnotex{ptc : checked on #1}}}
\newcommand{\wancheck}[1]{{\color{darkgreen}\mnotex{wan : checked on #1}}}
%




\def\beq{\begin{eqnarray}}
\def\eeq{\end{eqnarray}}

\def\a{\alpha}
\def\b{\beta}

\def\be{\begin{equation}}
\def\ee{\end{equation}}
\def\bea{\begin{eqnarray}}
\def\eea{\end{eqnarray}}

\newcommand{\spaceD}{{ D}}
\newcommand{\zspaceD}{ {\mathring D}}

\newcommand{\mcE}{\mycal E}

\newcommand{\nobarg}{\blue{g}}
\newcommand{\nobarzg}{\blue{\mathring g}}
\newcommand{\zguu}{\blue{\mathring g_{uu}}}

\newcommand{\tmcN}{{\,\,\,\,\,\widetilde{\!\!\!\!\!\mcN}}}

\newcommand{\Lie}{{\mathcal{L}}}




\newcommand{\TSxip}{\blue{\zeta}}
\newcommand{\TSxi}{\blue{\xi}}
\newcommand{\TSr}{\blue{r}}
\newcommand{\TSu}{\blue{u}}

\newcommand{\TSoLie}{\blue{\mathring \Lie}}

\newcommand{\Gmap}{\blue{z^*}}
\newcommand{\gsim}{\blue{\sim_{\mathrm{gauge}}}}


%
\newcommand{\kerpsi}{\blue{\mu}}


\newcommand{\T}{\mathbb{T}}
\newcommand{\N}{\mathbb{N}}
\newcommand{\sm}{d\mu_{\zzhTBW}}
\newcommand{\ip}[2]{\langle #1, #2\rangle }


\newcommand{\wc}[1]{{\mnote{{\bf wan :}
#1 }}}


\newcommand{\tdu}{\red{u}}
\newcommand{\tdr}{r}
\newcommand{\tdA}{A}
\newcommand{\tdB}{B}

\newcommand{\Hf}{\blue{H}}

\newcommand{\qh}{\blue{\hat q}}

\newcommand{\hkappa}{\hat{\kappa}}

\newcommand{\ochi}{\red{\chi}}

\newcommand{\kphi}[1]{\blue{\overset{[#1]}{\hat\varphi}}{}}

\newcommand{\kphit}[2]{\red{\overset{[#2]}{{\hat\varphi}_{#1}}}{}}

\newcommand{\vphi}[1]{\red{\overset{[#1]}{\varphi}}}
\newcommand{\kxi}[1]{\red{\overset{(#1)}{\xi}}{}}

\newcommand{\ck}[2]{\mathcal{\blue{K}}\left(#1,#2\right)}
\newcommand{\zck}[2]{\mathring{\mathcal{\blue{K}}}\left(#1,#2\right)}
\newcommand{\cka}[1]{\mathcal{\blue{K}}_{[\alpha]}(#1)}
\newcommand{\ckm}[1]{\mathcal{\blue{K}}_{[m]}(#1)}
\newcommand{\hck}[2]{\mathcal{\blue{\tilde K}}\left(#1,#2\right)}

\renewcommand{\kxi}[1]{\red{\overset{(#1)}{\red{\xi}}}{}}

\newcommand{\kzeta}[1]{\red{\overset{(#1)}{\red{\zeta}}}}

\newcommand{\TTtpvec}[1]{\overset{[#1]}{w}}

\newcommand{\kQ}[2]{\red{\overset{[#1]}{\red{Q}}_{#2}}}

\newcommand{\im}{\text{im}}
\newcommand{\tric}{\blue{\mathring{\mathscr{R}}}}

\newcommand\hlight[1]{\tikz[overlay, remember picture,baseline=-\the\dimexpr\fontdimen22\textfont2\relax]\node[rectangle,fill=gray!50,rounded corners,fill opacity = 0.2,text opacity =1] {$#1$};} 

\newcommand{\FGp}[1]{{
\mnote{{
{\bf finnCheck:}
#1} }}}

\title{\boldmath Characteristic Gluing with $\Lambda$: II. Linearised equations in higher dimensions
\protect\footnote{Preprint: UWThPh-2024-2}	
}

 \author{Wan Cong,}
 \author{Piotr T.\ Chru\'sciel,}
 \author{and Finnian Gray}
 \affiliation{University of Vienna, Faculty of Physics
  \\Boltzmanngasse 5, A 1090 Vienna, Austria}
 \emailAdd{wan.cong@univie.ac.at}
 \emailAdd{piotr.chrusciel@univie.ac.at}
 \emailAdd{finnian.gray@univie.ac.at}

\abstract{We prove a gluing theorem for any finite number of derivatives for linearised vacuum gravitational fields in Bondi gauge on  a class of characteristic hypersurfaces in  static vacuum $(n+1)$-dimensional backgrounds with cosmological constant $ \Lambda  \in \R$, $n\ge 4$.   This provides the key step for a full nonlinear characteristic gluing for the vacuum Einstein equations near the family of metrics considered. Our work extends, in the linearised case, the pioneering analysis of Aretakis, Czimek and Rodnianski, carried-out for two derivatives on light cones in four-dimensional Minkowski spacetime, as well as our previous work on four-dimensional spacetimes.
}

\makeindex

\renewcommand{\redc}{}
\renewcommand{\bluec}{}
\renewcommand{\purple}[1]{#1}
\renewcommand{\blue}[1]{#1}
\renewcommand{\red}[1]{#1}

\renewcommand{\FGp}[1]{}
\renewcommand{\wancheck}[1]{}
\renewcommand{\wc}[1]{}
\renewcommand{\ptcheck}[1]{}
\renewcommand{\ptclater}[1]{}
\renewcommand{\checkmark}{}
\renewcommand{\seccheck}[1]{}
\renewcommand{\hBo}{h}

\begin{document}
\maketitle
\flushbottom
\index{psi@$\overadd{i}{\psi}\ofP$!$\overadd{i}{\tilde\psi}\ofDC$|see{U@$\tilde{U}\ofDC$}}%
\index{chi@$\overadd{i}{\chi}\ofDC$!$\overadd{i}{\tilde\chi}\ofDC$|see{U@$\tilde{U}\ofDC$}}%
\index{K@$\ck{k}{\ofPnoP}$!$\hck{k}{\TSzlap,\zdivtwo\circ\, C}$|see{U@$\tilde{U}\ofDC$}}%

\index{psi@$\overadd{i}{\psi}\ofP$!$\overadd{i}{\tilde\psi}\ofDC$|see{$\tilde{U}\ofDC$}}%
\index{chi@$\overadd{i}{\chi}\ofDC$!$\overadd{i}{\tilde\chi}\ofDC$|see{$\tilde{U}\ofDC$}}%
\index{K@$\ck{k}{\ofPnoP}$!$\hck{k}{\TSzlap,\zdivtwo\circ\, C}$|see{$\tilde{U}\ofDC$}}%

\ptclater{tilde K does not show in the index;tilde S and N I are together?; see command in the index does not work...}

\ptclater{do unobstructed gluing to perturbed data in  Minkowski on a toroidal flat null hypersurface?}

\ptclater{the higher dimensional nonlinear version and the seesaw gluing could provide optimal decay on hyperboloids? in asymptotically Euclidean already taken care of by Mao and Tao~\cite{MaoTao}}

\ptclater{perhaps relevant to Le Pengyu on Null Penrose, to check; also prove linearised Null Penrose in this context?}
\section{Introduction}

Mathematical general relativity is  concerned with  mathematical properties of physically relevant solutions of Einstein equations. The ultimate goal is to understand  all  key features of physically relevant spacetimes. This requires the ability to construct  generic solutions,  as well as special solutions exhibiting significant features.
Because of the nature of the Einstein equations, solutions  can be constructed by solving various Cauchy problems. The celebrated work of Yvonne Four\`es-Bruhat~\cite{ChBActa} provided the first key tool for this, by showing well posedness of the general relativistic spacelike Cauchy problem. An alternative is provided by the characteristic Cauchy problem~\cite{RendallCIVP} (compare \cite{ChPaetz,CCG}), 
which provides a construction of spacetimes by evolving   characteristic data. The importance and usefulness of this Cauchy problem has been growing, starting with Christodoulou's proof of cosmic censorship for spherically symmetric gravitating scalar fields~\cite{ChristodoulouAnnals99}, followed   by  Christodoulou's monumental treatise on trapped-surface formation~\cite{ChristodoulouBHFormation}.

An important new insight into the characteristic Cauchy problem has been provided recently by Aretakis, Czimek and Rodnianski~\cite{ACR2,ACR3}, who showed how to glue together characteristic initial data sets, together with two transverse derivatives,  near a four-dimensional Minkowskian light cone. This led to significant improvements~\cite{ACR3,CzimekRodnianski} in previous gluing constructions for spacelike Cauchy data, and to the construction of dynamical black hole spacetimes~\cite{KehleUnger2,KehleUnger1,KehleUnger3} with interesting properties.

 A serious shortcoming of these constructions is the gluing of only two transverse derivatives on the characteristic hypersurface.  This leads to poor differentiability of the resulting spacetimes, see e.g.\ \cite{ChristodoulouMzH,MzHAIHP} or \cite{LukRodnianski} in a small data setting. 
In a recent work~\cite{ChCong1} we showed how to do the linearised gluing with any number of transverse derivatives in the setting of \cite{ACR2,ACR3}. We further included a cosmological constant in the analysis, and  allowed for non-spherical section of the characteristic hypersurface.

The aim of this work is to extend these results  to all higher dimensions.

The analysis here provides the key step to a full nonlinear gluing, which is carried-out in the accompanying paper~\cite{ChCongGray2}.

The work here differs from that in~\cite{ChCong1} in several respects. First, the constraint equation for $\partial_u h_{AB}$
(Equation~\eqref{3IX22.1HD} below),
contains a term  which vanishes when $n=3$, with the new term forcing a different approach. Next, the kernels of some operators, such as the divergence operator acting on trace-free two-covariant symmetric tensors,  are now infinite dimensional, which requires considerably more care. Last but not least,
 there is a qualitative difference
in the analysis when $n>3$,
 in that two separate cases arise:
\begin{enumerate}
  \item
  either $n$ is even, or $n$ is odd and the number of $u$-derivatives to be glued is strictly less than $\frac{n-3}{2}$; or
  \item
 $n$ is odd, and the number of $u$-derivatives to be glued exceeds the threshold $\frac{n-3}{2}$.
\end{enumerate}
In the former case,
 the constraint equations involving $\partial_u^ih_{uA}$ and $\partial_u^\ell h_{AB}$ are decoupled, in the sense that they involve different free ``gluing fields'', $\vphi{j}$ (roughly speaking these correspond to $\int_{r_1}^{r_2} s^{-j} h_{AB} ds$, see \eqref{27VII22.1a+}) --- the indices $j$ for $\partial_u^ih_{uA}$ are always integers,
while those for $\partial_u^ih_{AB}$ are always half-integers. In case 2., for general values of the background mass parameter $m$, the equations involving $\partial_u^ih_{uA}$ and $\partial_u^\ell h_{AB}$ are coupled, calling for a more involved strategy as compared to case 1.

To make things precise, we consider the linearisation of the vacuum Einstein equations at a metric
\be
{\nobarg}  = - \big(\twoscsign
 -\frac{2\Lambda r^2 }{n(n-1)}  -\frac{2m}{r^{n-2}}
  \big) du^2-2du \, dr
 + r^2 \ringh_{AB}dx^A dx^B
   \,,
    \quad
    n>3
    \,,
 \quad
   m \in \R
  \,,
   \label{23VII22.3intro}
\ee
where
$\ringh_{AB}dx^A dx^B$ is a $u$- and $r$-independent \emph{Einstein metric} with scalar curvature equal to $(n-1)(n-2)\twoscsign$, with $\twoscsign\in \{0,\pm1\}$.
As in~\cite{ChCong1}, the question addressed here is the following: given two smooth linearised solutions of the vacuum Einstein equations defined near the null hypersurfaces $\{u=0\,,\ r<r_1\}$ and  $\{u=0\,,\ r > r_2\}$, where $r_2>r_1$, can we find characteristic initial data on the missing region $\{u=0\,,\ r_1\le r\le r_2\}$ which,
when evolved to a solution of the linearised Einstein equations,  provide a linearised metric perturbation which coincides on $\{u=0\}$, together with $u$-derivatives up to order $k$, with the original data?
We refer to this construction as the $\Ck$-gluing.
 
 Our results can be summarised by the following theorem:
 
\begin{Theorem}
 \label{T25XI23.1}
A  $\Ck$-linearised vacuum data set on $\mcN_{(r_0,r_1]}$ can be smoothly glued to another  such set on $\mcN_{[r_2,r_3)}$ up to gauge if and only if the obstructions listed in  
Tables~\ref{T11III23.2}-~\ref{T11III23.1} are satisfied.
\end{Theorem}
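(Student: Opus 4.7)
The plan is to reduce the characteristic gluing problem, order by order in transverse $u$-derivatives, to a family of linear inhomogeneous equations on the sections $\secN$ of $\mcN$, and then identify precisely which of these equations carry genuine obstructions (i.e.\ non-trivial cokernel modulo residual gauge). First I would fix Bondi gauge for the linearised perturbation $h_{\mu\nu}$ on $\mcN_{[r_1,r_2]}$, so that the only non-gauge components of the characteristic data are the trace-free part of $h_{AB}$ together with the boundary values at $r=r_1$ and $r=r_2$ of the quantities that the linearised Einstein equations determine transversally ($h_{uA}$, $\partial_u h_{uA}$, $h_{uu}$, $\partial_u h_{AB}$, and their higher $u$-derivatives). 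Using the linearised Raychaudhuri equation and its $\partial_u$-derivatives, together with the $uA$-constraint and the new $h_{AB}$-constraint (equation \eqref{3IX22.1HD}) that has no $n=3$ analogue, I would propagate all these quantities in $r$ in terms of the free field $h_{AB}(r,x^A)$ and its boundary data at $r=r_1$.

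Next, following the strategy used in~\cite{ChCong1}, I would introduce the \emph{gluing fields}
\[
\vphi{j}_{AB}(x^C) \;=\; \int_{r_1}^{r_2} s^{-j}\, h_{AB}(s,x^C)\, ds\,,
\]
cf.\ \eqref{27VII22.1a+}, and express the $C^k$-matching conditions at $r=r_2$ for each component $\partial_u^i h_{uA}$ ($0\le i \le k$) and $\partial_u^\ell h_{AB}$ ($0 \le \ell \le k$) as linear equations on $\secN$ for the $\vphi{j}_{AB}$ with source terms determined by the prescribed data on $\mcN_{(r_0,r_1]}$ and $\mcN_{[r_2,r_3)}$. A key book-keeping point is that the powers $j$ appearing in the $\partial_u^i h_{uA}$ equations are integers, while those appearing in the $\partial_u^\ell h_{AB}$ equations are half-integers plus a shift depending on $n$; this is precisely what produces the dichotomy announced before the theorem statement.

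The analytic core of the proof is to understand the operators acting on $\vphi{j}_{AB}$ on $\secN$. These are built out of $\zdivtwo$ (divergence of trace-free symmetric $2$-tensors), $\TSzlap$, and zeroth-order curvature pieces of the Einstein background $\ringh_{AB}$. I would handle this by York-type decomposition of $h_{AB}$ into a TT part and a conformal Killing image, noting that in higher dimensions $\Ker\zdivtwo$ acting on trace-free tensors is infinite-dimensional (TT tensors exist in abundance). Nonetheless, the operators $\cka{\cdot}$, $\Lcka{\cdot}$ that govern the matching are Fredholm elliptic operators on $\secN$ of the form $\TSzlap + $ lower order, so finite-dimensional cokernels exist in each $L^2$-orthogonal sector. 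In case (1) the system is block-diagonal in $i$, and the cokernel is computed sector by sector, giving the obstructions listed in Table~\ref{T11III23.2}; in case (2), for odd $n$ and $i \ge (n-3)/2$, the $\partial_u^i h_{uA}$ and $\partial_u^{i-(n-3)/2} h_{AB}$ equations become coupled for generic $m$, and I would diagonalise the resulting $2\times 2$ block system to extract the true cokernel, which yields Table~\ref{T11III23.1}.

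Finally, to pass from ``gluing modulo obstructions'' to ``gluing up to gauge'', I would enumerate the residual Bondi-gauge freedom (the linearised diffeomorphisms preserving the gauge) acting on the sections $\secN$, show that its action on the space of apparent obstructions hits exactly the gauge-removable ones, and identify the quotient with the lists in the tables. The main obstacle I anticipate is case (2): the coupling between $h_{uA}$ and $h_{AB}$ constraints produces a block system whose cokernel is not the naive sum of the individual cokernels, so a careful spectral decomposition on $\secN$ (splitting into TT, divergence, and conformal Killing modes) is required to isolate the true obstructions and to check that no spurious obstruction is introduced at the transition threshold $i=(n-3)/2$. Verifying that the solution constructed is smooth, not merely $C^k$, then follows by standard bootstrap on the linearised hierarchy once all finitely many obstructions are satisfied.
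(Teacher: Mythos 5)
Your overall architecture does match the paper's: transport equations in $r$ with sources of the form $r^{q}\hat D(h_{AB})$, gluing fields indexed by powers of $r$, the integer/half-integer dichotomy producing the convenient/inconvenient split, and obstructions identified as cokernel modulo residual gauge. But two steps, as you describe them, would fail. First, your claim that the operators governing the matching are Fredholm elliptic ``of the form $\TSzlap+$ lower order'' with finite-dimensional cokernels breaks down exactly in the regime that makes the theorem delicate: for odd $n$ the operators $\overadd{i}{\psi}\ofP$ governing the $\partial_u^i h_{AB}$-matching annihilate the whole scalar sector $h^{[S]}$ once $i\ge \frac{n-3}{2}$ and the whole vector sector $h^{[V]}$ once $i\ge \frac{n-1}{2}$ (Proposition~\ref{P9X23.1m}), so their cokernels are infinite-dimensional. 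It is precisely this degeneracy, played against the $r$-independent parts of the residual gauge action (the operators $\hLop_n$ and $\Lop$), that generates the charges $\kQ{3}{}$, $\kQ{4}{}$ and $\kQ{5,i}{}{}$ and the structure of Table~\ref{T11III23.1}; without identifying these kernels one cannot even state which projections are genuine obstructions, and a naive Fredholm count gives the wrong answer.

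Second, the coupling is not a $2\times 2$ block that can be diagonalised, and the convenient case is not ``block-diagonal in $i$'' once $m\neq 0$: each power of $m$ (respectively $\alpha^2$) shifts the relevant $r$-exponent by $n-2$ (respectively $-2$), so a single gluing field $\kphi{j}_{AB}$ enters the matching equations at many orders simultaneously, and for inconvenient $(n,k)$ the $\partial_u^p h_{uA}$- and $\partial_u^p h_{AB}$-equations additionally share fields through their $\CKVp$/$\TTtp$ projections. One is therefore forced to treat a full system of size growing with $k$ (plus the gauge unknowns), which is elliptic only in the Agmon--Douglis--Nirenberg sense with a triangular principal part; the paper solves it mode-by-mode using joint eigenfunctions of the commuting operators $\operatorname{L}_{a,c}$, and solvability of the finitely many low modes rests on surjectivity of explicit matrices whose pivotal entries are the non-vanishing numbers $m^p\overadd{p}{\chi}_{[m]}$, $m^p\overadd{p}{\psi}_{[m]}$, $\alpha^{2p}\overadd{p}{\psi}_{[\alpha]}$ --- not on any diagonalisation. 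Finally, for the ``if'' direction you still need the Bianchi-identity induction showing that continuity of the finitely many transported fields forces continuity of all remaining $C^k$ Bondi data ($\partial_u^p h_{uu}$, $\partial_u^p\partial_r h_{uA}$, \dots); your closing bootstrap remark does not supply this step.
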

\begin{remark}
 \label{R19VI24.1}
In \cite[Table~4.1]{ChCong1}, we have similarly listed the obstructions in four spacetime dimensions and  $m=0$. However,
 some of
the obstructions there were not \textit{gauge-invariant},
 as needed for the non-linear analysis in \cite{ChCongGray2}. Table~\ref{T11III23.1} provides a complete list of gauge-invariant obstructing radial charges which also applies for $n=3$; 
 see Appendix~\ref{A19VI24.1}.

    \qed
\end{remark}
\begin{table}[t]
\small
\hspace{-2cm}
 \begin{tabular}{|p{.1cm}||c|c|c|c||}
  \hhline{~|====|}
  \multicolumn{1}{c||}{}
    &
    & Gluing field & Gauge field & Obstruction
\\
  \hhline{~|----|}
  \multicolumn{1}{c||}{}
    &$h_{AB} $
        &  $\interph_{AB} $
                & -
                     & -
\\
\hhline{~|----|}
  \multicolumn{1}{c||}{}
    &$\partial^i_u\tilde\hBo_{ur}\,,\  i\ge 0$
        & -
                &   $\partial_u^{i+1}\kxi{1}^{u }$
                and $\partial_u^{i+1}\kxi{2}^{u }$
                     & -
  \\
  \hhline{~|----|}
  \multicolumn{1}{c||}{}
    &$\tilde\hBo_{uu}$
        &   $\kphi{5-n}_{AB}^{[(\ker \mathring L)^\perp]}$
                &   only on $S^{n-1}$: $(\zspaceD^A\kxi{2}_A)^{[=1]}$
                     &     $\kQ{2}{}(\lambda^{[1]})$
  \\
\hhline{~|----|}
  \multicolumn{1}{c||}{}
     &$\partial_r \tilde\hBo_{uA}$
        &     $\kphi{4-n}_{AB}^{[\TTt^\perp]}$
            &
        only on $S^{n-1}$:  $(\kxi{2}^u)^{[=1]}$
                &     $\kQ{1}{}(\pi^A)$
                \\
                \multicolumn{1}{c||}{}
                &
                 &
                    &
                        &  $\pi^A$ -- KV of $\secN$
\\
  \hhline{~|----|}
  \multicolumn{1}{c||}{}
    &$\tilde\hBo_{uA}^{[\CKVp]}$
        & $\kphi{4}{}^{[\TTtp]}_{AB}$
            & -
                & -
\\
\hhline{~|----|}
  \multicolumn{1}{c||}{}
    &$\{\partial^p_u\tilde\hBo_{uA}^{[\CKV]}\}$, $0\leq p \leq k$
        &   -
            & $\{\partial^{p+1}_u\kxi{2}{}_A^{[\CKV]}\}$
                & -
\\
\hline
\multirow{4}{0pt}{
\begin{sideways}\textbf{Convenient}
\end{sideways}
}
    & $\{\partial_u^p\tilde\hBo_{AB}$, $1\leq p \leq k\}$
        &
            &
                &
\\
    & $\alpha = 0$
        & $\{\kphi{j}{}_{AB}\}_{j\in[\frac{9-n}{2}\,,\frac{7-n+2k(n-1)}{2}]}$
            &  -
                & -
\\
    & $\alpha \neq 0$
        & the above fields or
            &
                &
\\
    &
        & \hspace{-1.5cm} $\{\kphi{j}{}_{AB}\}_{j\in[\frac{7-n-2k}{2}\,,\frac{7-n-2}{2}]}$
            &  -
                & -
\\
    &
        & \hspace{1cm} ${}_{\cup[\frac{9-n+2k}{2},\frac{7-n+2k(n-1)}{2}]}$
            &
                &
\\
  \hhline{~|----|}
   & $\{\partial_u^p\tilde\hBo_{uA}^{[\CKVp]}$,  $1\leq p \leq k\}$
        &  $\{\kphi{j}_{AB}^{[\TTtp]}\}_{j\in[5\,,k(n-1)+4]}$
            &
            -
                & -
\\
  &&&&
\\
\hline
\multirow{4}{0pt}{
\begin{sideways}\textbf{Inconvenient\!\!\!}
\end{sideways}
}
 &&&&\\
    & $\{\partial_u^p\tilde\hBo{}^{[\TTtp]}_{AB},
    \partial_u^p\tilde\hBo{}^{[\CKVp]}_{uA} \,,
    1\leq p \leq k \}$
        & $\{\kphi{j}{}_{AB}^{[\TTtp]}\}_{j\in[\frac{9-n}{2}\,,1]\cup[5,4+k(n-1)]}$
            &  $\{\partial_u^j\kxi{2}{}^{[\CKVp]}_A\}_{j\in [0,k_n]}$%
                &    %
\\
    &
        & and $\kphi{2}{}_{AB}^{[V]}$
            &  and $(\zspaceD_A\kxi{2}^u)^{[\CKVp]}$
                & -
\\
  \hhline{~|----|}
  & $\partial_u^p\tilde\hBo{}^{[\TTt]}_{AB},
    1\leq p \leq k $
        & $\kphit{AB}{\frac{7-n+2p(n-1)}{2}}^{[\ker\overadd{p}{\psi}\cap\TTt]}$
            &  -
                &   - %
\\
  &
        & and $\kphit{AB}{\frac{7-n+2p}{2}}^{[(\ker\overadd{p}{\psi})^\perp\cap\TTt]}$
            &
                &

\\
  \hline
  \multicolumn{1}{c||}{}
   & $\partial_u^p\tilde\hBo_{uu}$,  $1\leq p \leq k$
        & -
            & -
                & -
\\
  \hhline{~|----|}
  \multicolumn{1}{c||}{}
   & $\partial_u^p\partial_r\tilde\hBo_{uA}$,  $1\leq p \leq k$
        & -
            & -
                & -
  \\
  \hhline{~|----|}
  \hhline{~|----|}
\end{tabular}
  \caption{Fields used for gluing when $m\ne 0$ for $(n,k)$ convenient or, $m\neq 0$, $\alpha=0$ for $(n,k)$ inconvenient. The gluing of the fields in the last two lines follows from Bianchi identities.
  A pair $(n,k)$ is deemed convenient if $n$ is even or if $n$ is odd and $k\le (n-5)/2$.
  The fields $\kxi{a}^u$ and $\kxi{a}^A$ are ``gauge fields'' at $\secN_a$, cf.~\eqref{3XII19.t1HD}-\eqref{3XII19.t2HD}.
  The operator $\mrL$ is defined in \eqref{28XI23.p1}.
 The fields $\tilde\hBo_{\mu\nu}$ are the gauge-transformed
  $ \hBo_{\mu\nu}$ fields,
 using the gauge fields $\protect\kxi{1}$
 for $r\le r_1$ and $\protect\kxi{2}{}$ for $r\ge r_2$.
 The fields $\interph_{AB}$ and $\protect\kphi{p}_{AB}$
  are defined  in \eqref{16III22.2old} and \eqref{27VII22.1a}-\eqref{27VII22.1a+};
    $\TTt$ denotes the space of
  transverse-traceless symmetric two-tensors; $V$ is the space of tensors which are Lie derivatives
  of the metric with respect
  to divergence-free vectors;
  $\KV$ denotes the space of Killing vectors and $\CKV$ that of conformal Killing vectors;
  projections such as
  $(\cdot)^{[\TTt^\perp]}$ or $(\cdot)^{[\CKV]}$
  are defined in Section~\ref{s28XI23.1};
  the radial charges  $\protect\kQ{a}{}$, $a=1,2$, are defined in \eqref{24VII22.4} and \eqref{20VII22.1}; the integer $k_n$ is defined in \eqref{2X23.1}.
  }
  \label{T11III23.2}
\end{table} %
\begin{table}[t]
\small
\hspace{-1.6cm}
 \begin{tabular}{||c|c|c|c||}
   \hline
    & Gluing field & Gauge field & Obstruction
\\
   \hline
     $h_{AB}$
        &  $\interph_{AB} $
                & -
                     & -
\\
 \hline
    $\partial^i_u\tilde\hBo_{ur}\,,\  i\ge 0$
        & -
                &   $\partial_u^{i+1}\kxi{1}^{u }$
                and $\partial_u^{i+1}\kxi{2}^{u }$
                     & -
  \\
   \hline
    $\tilde\hBo_{uu}^{[(\im \mrL)^\perp]}$
        &  -
                &   only on $S^{n-1}$: $(\zspaceD^A\kxi{2}_A)^{[=1]}$
                     &     $\kQ{2}{}(\lambda^{[1]})$
  \\
 \hline
     $\partial_r \tilde\hBo_{uA}^{[\CKV]}$ 
        &     -
            &
        only on $S^{n-1}$:  $(\kxi{2}^u)^{[=1]}$
                &     $\kQ{1}{}(\pi^A)$
                \\
                 & 
                    & 
                        &  $\pi^A$ -- KV of $\secN$
\\
 \hline
    $\{\partial^p_u\tilde\hBo_{uA}^{[\CKV]}\}$, $0\leq p \leq k$
        &   -
            & $\{\partial^{p+1}_u\kxi{2}{}_A^{[\CKV]}\}$
                & -
\\
\hline
     $\{\tilde h_{uu}\,,\partial_r\tilde h_{uA}\,, \partial_u^p\tilde\hBo{}^{[\TTtp]}_{AB},
   $
        & $\{\kphi{j}{}_{AB}^{[\TTtp]}\,, j\in[\max\{\frac{7-n-2k}{2},4-n\}\,,1]$
            &  $\{\partial_u^j\kxi{2}{}^{[\CKVp]}_A\}_{j\in [0,k_n]}$%
                &    %
\\
    \hspace{1cm} $\partial_u^p\tilde\hBo{}^{[\CKVp]}_{uA} \,,
    1\leq p \leq k \}$
        & $\cup[4,4+k(n-1)]\}$ and $\kphi{2}{}_{AB}^{[V]}$
            &  and $(\zspaceD_A\kxi{2}^u)^{[\CKVp]}$
                & -
\\
  \hline
   $
  \{\partial_u^p\tilde\hBo{}^{[\TTt]}_{AB},
    1\leq p \leq k \}$
        & $\{\kphi{j}_{AB}^{[\TTt]}\}_{j \in \left[\max\{\frac{7-n-2k}{2},4-n\},
\frac{7-n+2k(n-1)}{2}\right]}$
            &  -
                &   - %
\\
  \hline
    $\partial_u^p\tilde\hBo_{uu}$,  $1\leq p \leq k$
        & -
            & -
                & -
\\
   \hline
    $\partial_u^p\partial_r\tilde\hBo_{uA}$,  $1\leq p \leq k$
        & -
            & -
                & -
   \\
   \hline
   \hline
\end{tabular}
  \caption{Fields for gluing in the $(n,k)$ inconvenient, $m\alpha\ne 0$ case. The notation, and the last two lines, are as in Tables~\ref{T11III23.2}-\ref{T11III23.1}.}
  \label{T11XII23.1}
\end{table}%
%
\begin{table}[t]
\footnotesize
 \begin{tabular}{|p{.1cm}||c|c|c|c||}
  \hhline{~|====|}
  \multicolumn{1}{c||}{}
    &
    & Gluing field & Gauge field & Obstruction
\\
  \hhline{~|----|}
  \multicolumn{1}{c||}{}
    &$h_{AB} $
        & $\interph_{AB} $
                & -
                     & -
  \\
  \hhline{~|----|}
  \multicolumn{1}{c||}{}
    &$\partial^i_u\tilde\hBo_{ur}\,,\  i\ge 0$
        & -
                &  $\partial_u^{i+1}\kxi{1}^{u }$
                and $\partial_u^{i+1}\kxi{2}^{u }$
                     & -
  \\
  \hhline{~|----|}
  \multicolumn{1}{c||}{}
    &$\tilde\hBo_{uu}$
        &  $\kphi{5-n}{}{}^{[(\ker \mathring L)^\perp]}_{AB}$
                &  -
                     &     $\kQ{2}{}$
  \\
  \hhline{~|----|}
  \multicolumn{1}{c||}{}
    &$\partial_r \tilde\hBo_{uA}$
        &   $\kphi{4-n}_{AB}^{[\TTtp]}$
            & -
                &  $\kQ{1}{}(\pi^A)$
 \\
  \multicolumn{1}{c||}{}
    &
         &
             &
                & $\pi^A\in\CKV$
\\
  \hhline{~|----|}
  \multicolumn{1}{c||}{}
    &$\tilde\hBo_{uA}$
        &  $\kphi{4}_{AB}^{[\TTtp]}$
             & $\partial_u\kxi{2}{}_A^{[\CKV]}$
                 & -
\\
\hline
\multirow{5}{0pt}{
\begin{sideways}\textbf{$(n,k)$ convenient \!\!\!\!\!\!\!\!}
\end{sideways}
}
    & $\partial_u^p\tilde\hBo_{AB}$, $1\leq p \leq k$
        &
            &   %
                &    %
\\
    &$\alpha = 0$
        & $\kphit{AB}{(7-n+2p)/2}^{[(\ker\overadd{p}{\psi} )^\perp]}$
            &  -
                & $ \overadd{p}{q}{}_{AB}^{[\ker(\overadd{p}{\psi}  )]}$
\\
    & $\alpha \neq 0$
        & {\small the above field and}
            &
                &
\\
    &
        & $\kphit{AB}{(7-n-2p)/2}^{[(\ker\overadd{p}{\psi})]}$
            &  -
                & -
\\
  \hhline{~|----|}
   & $\partial_u^p\tilde\hBo_{uA}$,  $1\leq p \leq k$
        &  $\kphi{p+4}_{AB}^{[(\ker( \overadd{p}{\chi}\circ\, C  ))^\perp]}$
            & 
            $\partial_u^{p+1} \kxi{2}_A^{[\ker(\overadd{p}{\chi}   \circ\, C)]}$
                & -
\\
\hline
\multirow{8}{0pt}{
\begin{sideways}\textbf{$(n,k)$ inconvenient}
\end{sideways}
}
    & $\partial_u^p\tilde\hBo_{AB}$, $1\leq p \leq \frac{n-5}{2}$
        &
            &   %
                &    %
\\
    &$\alpha = 0$
        & $\kphit{AB}{(7-n+2p)/2}^{[(\ker\overadd{p}{\psi} )^\perp]}$
            &  -
                & $ \overadd{p}{q}{}_{AB}^{[\ker(\overadd{p}{\psi}  )]}$
\\
    & $\alpha \neq 0$
        & {\small the above field and}
            &
                &
\\
    &
        & $\kphit{AB}{(7-n-2p)/2}^{[(\ker\overadd{p}{\psi})]}$
            &  -
                & -
\\
  \hhline{~|----|}
   & $\partial_u^p\tilde\hBo_{AB}$, $\frac{n-3}{2}\leq p \leq k$
        &
            & %
                & %
\\
    & $p=\frac{n-3}{2}$,
    $\alpha = 0$
        &  $\kphi{2}_{AB}^{[(\ker \overadd{\frac{n-3}{2}}{\psi}  )^\perp]}$
            &
            $(\kxi{2}^u)^{[(\ker \hLop_n)^{\perp}]}$
                &  $\overadd{\frac{n-3}{2}}{q}{}_{AB}^{[\ker \overadd{\frac{n-3}{2}}{\psi}   \cap (\ker \hLop_n^\dagger)]}$
\\
    & \phantom{$p=\frac{n-3}{2}$,}
    $\alpha \neq 0$
        & {\small the above field and}
            &
                &
\\
    &
        & $\kphi{5-n}_{AB}^{[\ker\overadd{\frac{n-3}2}{\psi}\cap \ker\mrL]}$
            & $(\kxi{2}^u)^{[(\ker \mrL\circ\hLop_n)^{\perp}]}$
                &  $\kQ{3}{}{}^{[(\im (\mrL\circ \hLop_n))^\perp]} $
\\
   & $p=\frac{n-1}{2}$,
   $\alpha = 0$
         & $\kphi{3}_{AB}^{[(\ker \overadd{\frac{n-1}{2}}{\psi}  )^\perp]}$
            & $(\kxi{2}^A)^{[(\ker \Lop)^{\perp}]}$
                &  
                $\overadd{\frac{n-1}{2}}{q}{}_{AB}^{[(\im \Lop)^\perp) \cap \ker \overadd{\frac{n-1}{2}}{\psi}  ]}$
\\
   & \phantom{$p=\frac{n-1}{2}$,}
   $\alpha \neq 0$
         & $\kphi{4-n}_{AB}^{[\TTt]}$
            & $(\kxi{2}^A)^{[(\ker \zdivtwo \circ \Lop)^{\perp}]}$
                &   $\kQ{4}{}{}^{[(\im (\zdivtwo \circ \Lop))^\perp]}$
\\
   & $p=\frac{n+1}{2} + j$, $j\geq 0$
         & $\kphi{4+j}_{AB}^{[(\ker \overadd{\frac{n+1}{2}+j}{\psi}  )^\perp]}$
            & $(\partial_u^{j+1}\kxi{2}^A)^{[(\ker \Lop)^{\perp}]}$
                &  
                $\overadd{p}{q}{}_{AB}^{[ (\im \Lop)^\perp) \cap  \ker \overadd{p}{\psi}  ]}$
\\
  \hhline{~|----|}
   & $\partial_u^p\tilde\hBo_{uA}$,  
        & 
            & 
                & 
\\
    &$1\leq p \leq k - \frac{n+1}{2}$
        & $\kphi{p+4}_{AB}^{[(\ker( \zspaceD^A\overadd{p}{\chi}  ))^\perp]}$
            &$(\partial_u^{p+1}\kxi{2}^A)^{[\CKV]}$
                &$\myGauss>0$ only: $\kQ{5,p}{}{}_B$
\\
    &$k - \frac{n+1}{2}< p \leq k $
        & $\kphi{p+4}_{AB}^{[(\ker( \zspaceD^A\overadd{p}{\chi}  ))^\perp]}$
            &$(\partial_u^{p+1}\kxi{2}^A)^{[\ker( \overadd{p}\chi\circ\, C)]}$
                & -
            
\\
  \hline
  \multicolumn{1}{c||}{}
   & $\partial_u^p\tilde\hBo_{uu}$,  $1\leq p \leq k$
        & -
            & -
                & -
\\
  \hhline{~|----|}
  \multicolumn{1}{c||}{}
   & $\partial_u^p\partial_r\tilde\hBo_{uA}$,  $1\leq p \leq k$
        & -
            & -
                & -
  \\
  \hhline{~|----|}
  \hhline{~|----|}
\end{tabular}
\caption{Fields used for gluing when $m=0$,
in space-dimension $n\ge 3$ (see Remark~\ref{R19VI24.1}).
The notation, and the last two lines, are as in Table~\ref{T11III23.2}. 
    Next, the the fields $\overadd{j}{\Hf}_{uA}$ and $\overadd{j}{q}_{AB}$ are defined in \eqref{6III23.w4} and \eqref{6III23.w8} respectively, in the gauge $\delta \beta = 0$. Still with $\delta \beta = 0$, the charges $\protect\kQ{3}{}$ are defined in \eqref{17X23.1} (for $n>3$), $\protect\kQ{4}{}$ in \eqref{17X23.4} and $\protect\kQ{5,i}{}{}$ in \eqref{7VI24.21} (for $n>3$).
   $C$ is the conformal Killing operator; 
   the remaining operators are defined as follows: $\mrL$ in \eqref{28XI23.p1}; 
   $\protect\overset{(p)}{\ochi}=\protect\overset{(p)}{\ochi}\ofP$ in \eqref{6III23.w2}; 
   $\protect\overset{(p)}{\psi}=\protect\overset{(p)}{\psi}\ofP$ in \eqref{6III23.w9} and \eqref{17IV23.4}; $P$ in \eqref{16V22.1b}; 
   $\hLop_n$ in \eqref{24IV23.1xs}; $\Lop$ in \eqref{24IV23.3} for $n>3$. 
   When $n=3$, $\kQ{3}{}$ is defined in \eqref{19VI24.7}, $\protect\kQ{5,i}{}{}$ below \eqref{19VI24.21} and $\Lop$ in \eqref{19VI24.1}.
    }
        \label{T11III23.1}
\end{table} %

This  theorem is the main ingredient of the nonlinear gluing~\cite{ChCongGray2}, where a suitable implicit function theorem is used. In fact, particular care has been taken
here  to do the linearised gluing  in a way which can be promoted to a nonlinear one.

It was found by Aretakis et al., in the case $k=2$, $n=3$, $\Lambda=0$ and $\myGauss=1$, that there exists a ten-parameter family of obstructions to do such a gluing, when requiring continuity of  two $u$-derivatives of the metric components along the null-hypersurface.
Our analysis shows that the result is affected by the dimension, the cosmological constant, the topology of sections of the level sets of $u$ (which we assume to be compact), the mass, and the number of transverse derivatives which are required to be continuous.
%

We note that some introductory material below is essentially identical to that  in~\cite{ChCong1}, except for minor modifications related to the change of dimensions.

Unless explicitly indicated otherwise, we assume throughout that the Lorentzian metrics relevant for the problem at hand are $(n+1)$-dimensional with $n>3$.

\bigskip

\noindent{\sl Acknowledgements.} The first two authors are grateful to   the Beijing Institute for Mathematical Sciences and Applications for   hospitality and support during part of the work on this paper. 

\section{Notation}
 \label{s28XI23.1}

An index of notation  has been added at the end of this work for the convenience of the reader.

Unless explicitly indicted otherwise we use the notations of~\cite{ChCong1}, see Section~2 there, {and of~\cite{ACR2}. In particular, on a $d$-dimensional sphere $S^d$, the notation $t^{[= \ell]}$ will denote the $L^2$-orthogonal projection of a tensor $t$ on the space of $\ell$-spherical harmonics, with
 \begin{equation}\label{29X22}
   t^{[\le \ell]} = \sum_{i=0}^\ell t^{[=i]}
    \,,
    \qquad
   t^{[> \ell]} = t - t^{[\le \ell]}
   \,,
 \end{equation}
and  with obvious similar definition of $ t^{[< \ell]}$, etc.
 As in \cite{ChCong1}
 }
 we denote by $\zdivone$ the divergence operator  on vector fields,%
\index{div@$\zdivone$}%
 \begin{equation}\label{30X22.CKV5a}
    \zdivone \xi := \dnabla_A \xi^{A}
   \,,
 \end{equation}
and by  $\zdivtwo$ that on two-symmetric trace-free tensors:
\index{div@$\zdivtwo$}
 \begin{equation}\label{30X22.CKV5}
   (\zdivtwo h)_A:= \dnabla^B h_{AB}
   \,.
 \end{equation}
 However, the operator $\mrL$ is now defined to be
\index{L@$\mrL$}
\begin{equation}
\label{28XI23.p1}
 \mrL:= \zdivone\circ\zdivtwo\,.
\end{equation}
\index{Delta@$\TSzlap$}%
The Laplace operator of the metric $\ringh$ is denoted by $\TSzlap$.

We let $H^k(\secN)$ denote the Hilbert space of tensor fields whose derivatives of order less than or equal to $k$ are in $L^2(\secN)$.
Given any tensor subspace $X\subseteq H^k(\secN)$ and tensor field $T\in H^k(\secN)$, we will write $T^{[X]}$ for the $L^2$-orthogonal projection of $T$ on $X$. In particular, given any linear differential operator $\hat D$ acting on $H^k(\secN)$, we will write $T^{[\ker \hat D]}$ for the $L^2$-orthogonal projection of $T$ on the kernel of $\hat D$, and $T^{[(\ker \hat D)^\perp]}$ for the projection on  $(\ker \hat D)^\perp$; $T^{[\im \hat D]}$ and $T^{[(\im \hat D)^\perp]}$ are defined similarly.

The formal $L^2$-adjoint of an operator $\hat D$ will be denoted by $\hat D^{\dagger}$ .

 We  define the subspaces $S,V \subset \Hkng$ of vector fields on $\secN$ as%
\index{S@$S$}\index{V@$V$}%
 $$
 S = \{ \xi_A: \xi_A = \zspaceD_A \phi\,,
 \phi \in H^{k+1}(\secN) \}
 \,,\quad
 V = \{\xi_A \in \Hkng: \zspaceD^A\xi_A = 0  \} \,,
 $$
 where the differentiability index $k$ should be clear from the context.
 When $\secN$ is compact and boundaryless, the spaces $S$ and $V$ are $L^2$-orthogonal.
Any vector field $\xi\in \Hkng$, $k\ge 1$, can thus be decomposed into its ``scalar'' and ``vector'' parts  (cf. Appendix~\pref{app16X23.1}), which we shall denote by%
\index{S@$S$}\index{V@$V$}%
 \begin{align}
    \xi = \xi^{[S]} + \xi^{[V]} \,.
 \end{align}

 We denote the decomposition of a symmetric traceless two-covariant tensor field $h$  into its ``scalar'', ``vector'', and ``tensor'' part as
\begin{equation}
\label{14XII23.1}
h = h^{\red[S]} + h^{\red[V]} + h^{[\TTt]}
\,,
\end{equation}
where $h^{[\TTt]}$ is a transverse-traceless ($\TTt$) tensor, $h^{\red[V]}$ is the Killing operator acting on a divergence-free vector, and $h^{\red[S]}$ is the trace-free part of the Hessian of a function (cf. Appendix~\ref{ss8IX23.1}).

 Note that we use the notation $S$ both for the space of ``scalar vectors'' $ \xi^{[S]} $ and ``scalar tensors'' $h^{[S]} $, similarly for $V$, hoping that the distinction will be clear from the context.

As already pointed out, different strategies are needed when the number $k$ of transverse derivatives that one wants to glue exceeds the threshold $k=(n-3)/2$ for odd values of $n$. To address this we will use the following terminology: consider a pair $(n,k)\in \N^2$, where $n>3$ stands for the space-dimension and $k$ for the number of transverse derivatives that are glued. We say that $(n,k)$ is
\emph{convenient} if $n$ is even and  $k$ is arbitrary, or if $n$ is odd  and $k < \frac{n-3}{2}$. Otherwise the pair $(n,k)$ will be said to be \emph{inconvenient}.

\ptclater{flat torus commented out}

 %
\section{Characteristic constraint equations in Bondi coordinates}
 \label{s3X22.1}

Let $(\mathcal{M},g)$ be an $(n+1)$-dimensional spacetime. Locally near a null hypersurface with non-vanishing  divergence scalar one can assign Bondi-type coordinates $(u,r,x^A)$ in which the metric takes the form (see~\cite{MaedlerWinicour} and references therein)
\begin{align}
    g_{\alpha \beta}dx^{\alpha}dx^{\beta}
   &=  -\frac{V}{r}e^{2\beta} du^2-2 e^{2\beta}dudr
  \nonumber
\\
 &\qquad
   +r^2\zhTBW_{AB}\Big(dx^A-U^Adu\Big)\Big(dx^B-U^Bdu\Big)
    \label{23VII22.1}
    \,,
\end{align}
with $\zhTBW_{AB}$ satisfying
\begin{equation}
    \det [\gamma_{AB} ] = \det[ \zzhTBW_{AB}]
     \,,
\end{equation}
where $\det [ \zzhTBW_{AB}]$  is  $r$- and $u$-independent. This implies in particular
\begin{equation}
    \gamma^{AB}\partial_r\gamma_{AB} = 0\,,\qquad \gamma^{AB}\partial_u\gamma_{AB} = 0\,.
    \label{23VII22.2}
\end{equation}
The inverse metric reads
\begin{equation}
    g^{\sharp} =  e^{-2\beta} \frac{V}{r}\, \partial_r^2 - 2 e^{-2\beta} \, \partial_u\partial_r - 2e^{-2\beta} U^A \,  \partial_r \partial_A +\frac{1}{r^2} \gamma^{AB} \, \partial_A\partial_B\,.
\end{equation}
In these coordinates, each level set of $u$   is a null hypersurfaces with  normal field $\partial_r$, while $r$ is a parameter which varies along the null generators. The $x^C$'s are local coordinates on the codimension two surfaces of constant $(u,r)$, which foliate each constant $u$ null hypersurface.

The restriction of the Einstein equations to a level set of $u$  gives rise to a set of transport equations for the metric functions $(V,\beta,U^A,\gamma_{AB})$. In this paper we consider the linearised equations around a
Birmingham-Kottler background, which includes the Minkowski, anti-de Sitter, and  de Sitter background.
 In Bondi coordinates the  background metrics can be written as
\be
{\nobarzg}\equiv {\nobarzg}_{\a \b} dx^\a dx^\b = \zguu  du^2-2du \, dr
 + r^2 \ringh_{AB}dx^A dx^B
   \,,
   \label{23VII22.3}
\ee
with
\begin{equation}
\label{17III23.12}
\zguu :=
-\left(\twoscsign
 -\alpha^2 r^2- {\frac{2m}{r^{n-2}}}\right)
\,,
 \quad
 \twoscsign \in \{0,\pm 1\}
 \,,
 \quad
  \alpha \in \{0,\sqrt{\frac{2|\Lambda|}{n(n-1)}}, \sqrt{\frac{2|\Lambda|}{n(n-1)}} i\}
  \,,
\end{equation}
and where
$\ringh_{AB}dx^A dx^B$ is a $u$- and $r$-independent
Einstein metric of scalar curvature equal to $(n-1)(n-2)\twoscsign$, with the associated Ricci tensor, which we denote by $\zR_{AB}$, thus equal to%
\index{R@$\zR _{AB}$}%
\begin{equation}
\label{17III23.1}
  \zR_{AB} = (n-2) \myGauss\,\ringh_{AB}
   \,.
\end{equation}
We emphasise that  $\alpha \in \R\cup i \R$, with a purely imaginary value of $\alpha$ allowed to accommodate for a cosmological constant $\Lambda <0$.  Finally, the parameter $m$ is related to the total mass of the spacetime.  The inverse background metrics read
$$
\nobarzg^{\alpha\beta}\partial_\alpha\partial_\beta = -2 \partial_u\partial_r -  \zguu  (\partial_r)^2
 + r^{-2}\zzhTBW^{AB}\partial_A\partial_B
 \,.
$$

Consider now a perturbation of the metric of the form%
 \index{h@$h_{\mu\nu}$}%
\begin{equation}
 \label{3VIII22.5}
 \nobarzg_{\mu\nu} \rightarrow \nobarzg_{\mu\nu} + \epsilon \hBo_{\mu\nu}\,.
\end{equation}
The conditions on the linearised fields such that the perturbed metric is still in the Bondi form to $O(\epsilon)$ are,
\begin{equation}
\label{23VII22.4}
  h_{rA}=h_{rr}=\zzhTBW^{AB} h_{AB}=0
  \,.
\end{equation}
We shall sometimes denote the metric perturbations by $\{\delta V,\delta \beta, \delta U_A\}$. These correspond respectively to
\begin{equation}\label{3X22.1p}
 \{\delta V + 2 V \delta \beta,\delta \beta, \delta U_A\}
 \equiv
 \{- r h_{uu}, - h_{ur} / 2, - h_{uA}/ r^2\}
 \,.
\end{equation}
 We will also use the notation%
 \index{h@$h_{\mu\nu}$!$\zhTB_{\mu\nu}$}%
\begin{equation}\label{10XII23.9}
 \zhTB_{\mu\nu}:=
  \frac{\hBo_{\mu\nu}}{r^2}
  \,.
\end{equation}

\subsection{The linearised $\Ck$-gluing problem}

\index{N@$\mcN_{I}$}%
Let $\mcN_{I}$ be a null hypersurface $\{ u=u_0, r\in I\}$, where $I$ is an interval in $\R$, and let $\secN$ be a cross-section of $\mcN_{I}$, i.e. a two-dimensional submanifold of $\mcN_{I}$ meeting each null generator of $\mcN_{I}$ precisely once.
Let $2\le \bluek  \in \N$ be the number of derivatives of the metric that we want to control at $\secN$. Using the  Bondi parameterisation of the metric, we define
linearised Bondi \emph{cross-section data} of order $k$ as the collection of fields
\begin{equation}\label{23III22.992}
  x = (\partial_u^{j}h_{AB}|_{\secN}
  ,\,  \partial_r^jh_{AB}|_{\secN}
  ,\,  \partial_u^{j}\delta\beta|_{\secN},
  \, \partial_u^{j}\delta U^A|_{\secN}
  ,\, \partial_r \delta U^A|_{\secN}
  ,\,   \delta V|_{\secN}
  )_{0\le j\leq k}
  \,.
\end{equation}
and we denote the set of all possible data as $\dt{\secN} $.
The data $\dt\secN$ correspond  to (an optimised version of) the linearised data $\Psi_{\mathrm{Bo}}[\secN,k]$ of ~\cite[Section~5]{ChCong0}.

Unless explicitly indicated otherwise we assume that all the  fields in \eqref{23III22.992}
are smooth, though a finite sufficiently large degree of differentiability would suffice for our purposes, as can be verified by chasing the number of derivatives in the relevant equations.

In the linearised $\Ck$-gluing problem we start with two cross-sections $\secN_1$ and $\secN_2$, each equipped with some linearised Bondi cross-section data of order $k$, which we denote as $x_1\in \dt{\secN_1}$ and
$x_2\in\dt{\secN_2}$. The goal is to construct linearised fields
\begin{align}
y:=(\partial^{\ell}_u \delta V,\partial^{\ell}_u \delta \beta, \partial^{\ell}_u \delta U^{A},\partial^{\ell}_u h_{AB})
\end{align}
for $0\leq \ell \leq k$ which interpolate between  $x_1$ and $x_2$ along a null hypersurface $\mcN_{[r_1,r_2]}$ such that (i) $x_1$ agrees with the restriction to $r_1$ of $y$ ;  (ii) $x_2$ agrees with the restriction  to $r=r_2$ of $y$; and (iii) $y$ satisfy the linearised null constraint equations.

Since linearised Bondi data are defined up to linearised gauge transformations, we shall use these transformations to remove part of the obstructions to the gluing.

\subsection{Gauge Freedom}
\label{ss11III23.1}

\index{gauge transformations|(}
Recall that linearised gravitational fields are defined up to a gauge transformation
\begin{equation}\label{4XII19.12}
  h\mapsto h+ \Lie_\TSxip g
\end{equation}
determined by a vector field $\TSxip$. Once the metric perturbation have been put into Bondi gauge, there remains the freedom to make gauge transformations which preserve this gauge:
\begin{eqnarray}
\Lie_{\TSxip} g_{\TSr \TSr}  &=&0
 \,,
 \label{4II20.3}
  \\
   \label{4II20.4}
  \Lie_{\TSxip} g_{\TSr A}  &=&0   \,,
\\
 g^{AB} \Lie_{\TSxip} g_{AB}
  &  =  &
  0
   \label{4XII19.14}
 \, .
\end{eqnarray}
For the metric  \eqref{23VII22.3} this is solved by (cf., e.g.,~\cite{ChHMS})
\begin{align}
    \zeta^u(u,r,x^A) & = \xi^u(u, x^A)\,,
    \label{3XII19.t1HD}
\\
    \zeta^B(u,r,x^A) & = \xi^B(u,x^A) - \frac{1}{r}\zspaceD^B \xi^u(u,x^A)\,,
     \label{1VIII22.1HD}
\\
    \zeta^r(u,r,x^A) &= -\frac{r}{n-1} \zspaceD_B \xi^B(u,x^A) + \frac{1}{n-1}\TSzlap \xi^u(u,x^A)\,,
    \label{3XII19.t2HD}
\end{align}
for some fields $\TSxi^{u} (\TSu,x^A)$, $\TSxi^{B} (\TSu,x^A)$, and where $\zspaceD_A$ and $\TSzlap$ are respectively the covariant derivative and the Laplacian operator associated with the $(n-1)$-dimensional metric $\ringh_{AB}$ appearing in \eqref{23VII22.3}.

We will use the symbol%
\index{L@$\TSoLie_\TSxip$}
$$
 \TSoLie_\TSxip
$$
to denote   Lie-derivation in the $x^A$-variables
with respect to the vector field $\TSxip^A\partial_A$.

The transformation~\eqref{4XII19.12} can be viewed as a result of linearised coordinate transformation to new coordinates $\tilde{x}^{\mu}$ such that
\begin{equation}
    x^{\mu} = \tilde{x}^{\mu} + \epsilon \zeta^{\mu}(\tilde{x}^{\mu})\,,
    \label{3VIII22.3}
\end{equation}
where $\epsilon$ is as in \eqref{3VIII22.5}.
Recalling that $\zguu$ reads
$$
\zguu =
  -\twoscsign
	+\alpha^2 r^2+ {\frac{2m}{r^{n-2}}}
  \,,
$$
under \eqref{3VIII22.3} the linearised metric transforms as%
\index{gauge transformation law!$h_{\mu\nu}$}%
 \FGp{done}
\begin{eqnarray}
  \hBo_{uA}
   &\to &
  \tilde{\hBo}_{uA} = \barh_{uA} + \mcL_\zeta g_{uA}
  \nonumber
  \\
  & &
  =
  \barh_{uA} +
   \partial_{\tdA}(\zguu \zeta ^u - \zeta ^r)
  + r^2 \ringh_{AB} \partial_{\tdu}\zeta ^B
  \nonumber
\\
& &
 =
  \barh_{uA}  -\frac{1}{n-1}\partial_A \,
    \big[\, (\TSzlap \xi^u + (n-1) \twoscsign \xi^u) - r( \zspaceD_B \xi^B - (n-1)\partial_u \xi^u)
    \big]
 \nonumber
 \\
 & & \quad + r^2 (\ringh_{AB} \partial_u \xi^B + (\alpha^2 +{2m}{r^{-n}} ) \partial_A \xi^u)\,,
  \label{24IX20.1HD}
  \\
  \hBo_{ur}
   &\to &
   \tilde{\hBo}_{ur} =
  \barh_{ur} + \mcL_\zeta g_{ur} =
  \barh_{ur}
  - \partial_{\tdu} \zeta^u
  + \zguu \partial_{\tdr} \zeta^u
  - \partial_{\tdr} \zeta^r
  \nonumber
\\
& &
=
  \barh_{ur}
  - \partial_{\tdu} \xi^u
  + \frac{1}{n-1} \zspaceD^{\tdA}
   \xi_A
    \,,
    \label{24IX20.23aHD}
\\
      \hBo_{uu}
    & \to &
 \tilde \hBo_{uu} =  \hBo_{uu}  + \mcL_\zeta g_{uu} =
  \barh_{uu}
  +   \TSxip^{\TSr} \partial_{\tdr} \zguu
	+2 \partial_{\tdu}(\zguu \zeta^u - \zeta^r)
\nonumber
\\
& &
=
\barh_{uu}
 -2\big(\twoscsign+\frac{1}{n-1}\TSzlap\big)
  \partial_u\xi^u
   +
    \frac{2r}{n-1}
    \Big(\zspaceD_B \partial_u\xi^B +\big(
         \alpha^2-\frac{(n-2)m}{r^n}
          \big)
           \TSzlap \xi^u
           \Big) \nonumber
\\
 & &+ 2 r^2 \left[\left(\alpha^2+\frac{2m}{r^n}\right)\partial_u\xi^u - \frac{1}{n-1}\left(\alpha^2-\frac{(n-2)m}{r^{n}}\right)\zspaceD_B \xi^B\right]\, ,
 \label{24IX20.23xHD}
\\
  \hBo_{AB}
    & \to  &
    \tilde \hBo_{AB} =
  \barh_{AB} + \mcL_\zeta g_{AB} =
  \barh_{AB}
  +2 r \TSxip^{\TSr} \ringh_{AB}
	+
	r^2 \TSoLie_{\TSxip} \ringh_{AB}
\nonumber
\\
 &&  =
  \barh_{AB}
  + r^2\TS[
	\TSoLie_{\TSxip} \ringh_{AB} ]
 \,,
  \label{24IX20.23HD}
\end{eqnarray}
with
\index{TS@$\TS$}%
$$
\TS[X_{AB}] := \frac 12
    \big(
     X_{AB}+X_{BA} - \frac{2}{n-1}\ringh ^{CD} X_{CD} \ringh_{AB}
    \big)
$$
denoting
the traceless symmetric part of a tensor on a section $\secN$.

Given a section%
\index{S@$\secN_{u_0,r_0}$}%
$$
 \secN_{u_0,r_0}:=\{u=u_0,r=r_0\}
$$
of a null hypersurface $\mcN_I$ together with linearised Bondi sphere data $x\in \dt{\secN_{u_0,r_0}}$, Equations~\eqref{24IX20.1HD}-\eqref{24IX20.23HD} and their
$u$- and $\tdr$-derivatives provide a set of order-$k$ cross-section data $\tilde x \in \dt{\tilde\secN_{u_0,r_0}}$ on%
\index{S@$\tilde{\secN}_{u_0,r_0}$}%
$$
  \tilde{\secN}_{u_0,r_0}:= \{\tilde u=u_0,\tilde r=r_0\}=\{u=u_0+ \epsilon
  \zeta^u(u_0,r_0,x^A),
   r=r_0+\epsilon \zeta^r(u_0,r_0,x^A)\}
 \,,
$$
which is a small deformation of  the original $\secN_{u_0,r_0}$, in terms of the gauge fields
$$
z:=\{\partial^{\ell}_{\tdu}\xi^B|_{\tdu = u_0},\partial^{\ell}_{\tdu}\xi^u|_{u = u_0}\}_{0\leq \ell \leq k+1}$$
as well as the original data   $x$.
We shall write the gauge transformation as a map $\Gmap$ such that
\index{G@$\Gmap$}%
\begin{align}
\label{12V24.1}
\tilde x = \Gmap( x)
\,,
\end{align}
with $\Gmap( x)$ given by the Equations \eqref{24IX20.1HD}-\eqref{24IX20.23HD} and their
$u$- and $\tdr$-derivatives

Equation \eqref{24IX20.23aHD} shows that we can always choose $z$ so that
\begin{equation}\label{8III22.1}
  \tilde \hBo_{ur} = 0
  \,.
\end{equation}
After having done this, we are left with a residual set of gauge transformations,   defined by a $\tdu$-parameterised family of vector fields $\TSxi^A(\tdu,\cdot)$ on $\secN$, together with
$\zeta^r(\tdu,\cdot)$ and
\begin{equation}\label{5XII19.1aHD}
	 \partial_{\tdu}  \xi^{u}(\tdu,x^A) =
\frac{ \zspaceD_{\tdB} \TSxi^{B}(\tdu, x^{A})}{n-1}
 \,.
\end{equation}
Taking into account~\eqref{5XII19.1aHD}, the transformed fields now read
\ptcheck{18III23, crosschecked with NullGluing around eq. 3.16, and note that there are slightly more general equations allowing for a change in beta}
\begin{eqnarray}
  \tilde \hBo_{uA}
   & = &
  \hBo_{uA} -\frac{1}{n-1}\zspaceD_{\tdA} \TSzlap \xi^u + \zguu \zspaceD_{\tdA} \xi^u + r^2 \partial_{\tdu} \xi_A
  \nonumber
\\
& = &
  \hBo_{uA} -\frac{1}{n-1}\partial_A \, [\, (\TSzlap \xi^u + (n-1)\twoscsign \xi^u)] \nonumber\\
& &
	+ r^2 (\ringh_{AB} \partial_u \xi^B+ (\alpha^2+ {{2m}{ r^{-n} }}) \partial_A \xi^u)
 \,,
  \label{17III22.1HD}
\\
      \tilde \hBo_{uu}
    & = &
\barh_{uu}-\frac{2}{n-1} \Big(\twoscsign+\frac{1}{n-1}\TSzlap  {-\frac{nm}{r^{n-2}} }\Big)
  \zspaceD_B\xi^B\nonumber\\
& &	+ \frac{2r}{n-1}\Big(\big(\alpha^2- {\frac{(n-2)m}{r^n} } \big) \TSzlap \xi^u+\zspaceD_B \partial_u\xi^B \Big)
\, ,
 \label{13III22.2HD}
\\
  \tilde \hBo_{AB}
    & = &
 \barh_{AB}
  + 2 r^2
 \TS[\zspaceD_{\tdA}\xi _B]
  - 2 r \TS[\zspaceD_{A}\zspaceD_B \xi^u]
 \,,
  \label{17III22HD}
\end{eqnarray}
where $\xi_A:=\ringh_{AB}\xi^B$.

In the linearised gluing problem, we shall allow for such gauge perturbations to the data.
That is, we consider gluing along a null hypersurface of the perturbed data, which we will denote as $\tilde x_1 \in \dt{\tilde{\secN}_1}$ and $\tilde x_2 \in\dt{\tilde{\secN}_2}$, with the freedom of choosing gauge fields to achieve the gluing.

To simplify notation we will write%
\index{L@$\Done$}%
\index{L@$\Dtwo$}%
\index{C@$C$}%
\begin{align}
    \Done(\xi^u)_A &:=-\frac{1}{n-1}\zspaceD_{\tdA} \, [\, \TSzlap \xi^u + (n-1)\twoscsign \xi^u]
   =
   \checkmark
   -\frac{1}{n-2}\zspaceD^B\TS[\zspaceD_A\zspaceD_B \xi^u]
    \,,
    \label{30III23.2}
\\
    C(\zeta)_{AB}&:=\TS[\zspaceD_{\tdA}\zeta_B]
    \,,
     \label{18X22.41}
\\
    \Dtwo(\xi)&:=-\frac{2}{n-1}\left(\twoscsign+\frac{1}{n-1}\TSzlap\right)
   \zspaceD_{\tdB}\xi^B\,.
    \label{30III23.1}
\end{align}
For further convenience  we note the transformation laws, in this notation,
 \FGp{Checked 29II23. }
	\begin{eqnarray}
	\tilde \hBo_{uA}
	& =
	\checkmark
	&
	\hBo_{uA} +\Done(\xi^u)_A
	+ r^2 \left( \partial_{\tdu} \xi_A + \left(\alpha^2  {+{2m}{r^{-n}} }\right) \zspaceD_{\tdA} \xi^u\right)
	\,,
	\label{26IX22.1HD}
	\\
	\partial_u^i \tilde \hBo_{uA}
	& =
	\checkmark
 &
	\partial_u^i  \hBo_{uA} + \frac{1}{n-1} \Done(\zspaceD_{B}  \partial_u^{i-1}\xi^B)_A
	+ r^2 (  \partial_u^{i+1} \xi_A + \frac{\alpha^2  {+2mr^{-n}} }{n-1} \zspaceD_{\tdA} \zspaceD_{B}  \partial_u^{i-1}\xi^B)
	\,,
	\    i\ge 1
	\,,
	\phantom{xxxxxx}
	\label{26IX22.1iHD}
	\\
	\tilde \hBo_{uu}
	& = &
	\barh_{uu} + \frac{2r}{n-1}\big((\alpha^2 {-(n-2)mr^{-n}}) \TSzlap \xi^u +   \zspaceD_{\tdB} \partial_{\tdu}\TSxi^{B}   {+nmr^{-n+1}\zspaceD_B\xi^B}
	\big) + \Dtwo(\xi)  
	\, ,
	\label{26IX22.2HD}
	\\
	\tilde \hBo_{AB}
	& =
	\checkmark
 &
	\barh_{AB}
	+ 2 r^2 C(\zeta)_{AB}
	\nonumber
	\\
	& =
	\checkmark
 &
	\barh_{AB}
	+ 2 r^2 C(\xi )_{AB} - 2 r \TS[\zspaceD_A\zspaceD_B \xi^u]
	\,,
	\\
	\partial_u^i
	\tilde \hBo_{AB}
	& =
	\checkmark
 &
	\partial_u^i
	\barh_{AB}
	+ 2 r^2 C(
	\partial_u^i \xi )_{AB} -  \frac{2r}{n-1} \TS[\zspaceD_A\zspaceD_B
	\zspaceD_C \partial_u^{i-1}\xi^C]
	\,,
	\ i \ge 1
	\,,
	\\
	\zspaceD^A \tilde \hBo_{uA}
	& = &
	\zspaceD^A \hBo_{uA}-\frac{1}{n-1}\TSzlap \, [\, (\TSzlap \xi^u
	+ (n-1) \twoscsign \xi^u)
	]
	+ r^2 (\zspaceD_A  \partial_{\tdu} \xi^A +\left(\alpha^2  {+{2m}{r^{-n}} }\right)\TSzlap \xi^u)
	\,,
	\label{17III22.1apHD}
	\\
	\zspaceD^B \tilde \hBo_{AB}
	& =
\checkmark
 &
	\zspaceD^B \barh_{AB} +   r^2
	\big((\TSzlap + (n-2)\twoscsign ) {\tensor{\delta}{_A^B}}  + \frac{n-3}{n-1}\zspaceD_A\zspaceD^B\big)\xi _B
	+  2r  {(n-2)} \Done( \xi^u)_A
	\,,
	\label{17III22.1aprHD}
	\\
	\zspaceD^A \zspaceD^B\tilde \hBo_{AB}
	& =
\checkmark
   &
	\zspaceD^A \zspaceD^B\barh_{AB} +
	r^2  \bigg[\frac{2(n-2)}{n-1}\TSzlap + 2(n-2)\twoscsign \bigg]
	\zspaceD_{\tdA}\xi^A
	-    \frac{2r  {(n-2)}}{n-1}  \TSzlap \big(\TSzlap + (n-1)\twoscsign \big) \red{\xi^u}
	\nonumber
	\\
	& =
\checkmark
 &
	\zspaceD^A \zspaceD^B\barh_{AB} -
	(n-1)(n-2)r^2  \Dtwo(\xi)
	-    \frac{2r {(n-2)}}{n-1}  \TSzlap \big(\TSzlap + (n-1)\twoscsign \big) \red{\xi^u}\,.
	\label{17III22pHD}
	\end{eqnarray}
\index{gauge transformations|)}
\seccheck{12X23}

\subsection{Null Constraint Equations}
\label{sec:28VII22.1}

We now turn our attention to the Einstein equations,
\begin{equation}
    G_{\mu\nu} := R_{\mu\nu}-\frac{1}{2}g_{\mu\nu}R = 8\pi T_{\mu\nu} - \Lambda g_{\mu\nu}
    \,,
    \label{2X22.2}
\end{equation}
and their linearisation, in Bondi coordinates,
at a Birmingham-Kottler metric \eqref{23VII22.3}-\eqref{17III23.12}. We shall see that the restrictions of the Einstein equations on a null hypersurface $\mcN_{[r_1,r_2]}$ can be written as \textit{transport equations} for various linearised metric functions in $r$, that is, ordinary differential equations in the variable $r$, with source terms depending on $h_{AB}$.
 In particular, given $x_{r_1}\in\dt{\secN_{r_1}}$ and any $h_{AB}(r)|_{\mcN_{[r_1,r_2]}}$, $C^k$ in the $r$ variable, compatible with $x_{r_1}$,
the linearised Einstein equations can be solved on $\mcN_{[r_1,r_2]}$ by integrating the transport equations, such as e.g.\ \eqref{25VII22.3} below. Thus the characteristic gluing problem amounts to solving for a field $h_{AB}|_{\mcN}$ that can interpolate continuously between the given $x_{r_1}$ and $x_{r_2} \in \dt{\secN_{r_2}}$ up to gauge. More details will be provided in the following sections.

\subsubsection{$h_{ur}$}
 \label{CHGss29VII20.1}

The $G_{rr}$ component of the Einstein tensor (compare~\cite{MaedlerWinicour}), reads:
\ptcheck{17III23, with FG}%
\index{G@$G_{\mu\nu}$!$G_{rr}$}%
\index{The Einstein tensor!$G_{rr}$}%
\begin{equation}
         \label{CBCHG:beta_eq}
          \frac{r}{2(n-1)} G_{rr} = \partial_{r} \beta - \frac{r}{8(n-1)}\zhTBW^{AC}\zhTBW^{BD} (\partial_{r} \zhTBW_{AB})(\partial_r \zhTBW_{CD})
          \,.
\end{equation}
Since the second term on the right-hand side of \eqref{CBCHG:beta_eq} is quadratic in $\partial_r\gamma_{AB}$,  after linearising in vacuum we find that the transport equation of $\delta\beta$ is given by
\ptcend{unchanged}
\begin{equation}\label{CHG28XI19.4aa}
 \partial_r \delta \beta =0  \quad
   \Longleftrightarrow
   \quad
   \delta \beta = \delta \beta (u,x^A)
  \,.
\end{equation}
Using a terminology somewhat similar to that of~\cite{ACR2}, we thus obtain a pointwise radial conservation law for $\delta \beta$, and a seeming obstruction to gluing: two linearised Bondi cross-section data can be glued together if and only if their Bondi functions $\delta \beta$ coincide.

However, it follows from \eqref{8III22.1} that we can always choose a gauge so that $\delta \beta \equiv 0$. Thus, \eqref{CHG28XI19.4aa} does not lead to an obstruction to gluing-up-to-gauge. The vanishing of $\delta \beta$ will be assumed when gluing.

But, in the current section we will  \emph{not} assume $\delta \beta =0$
unless explicitly indicated otherwise.

\subsubsection{$h_{uA}$}
 \label{CHGss29VII20.2}

From the $G_{rA}$-component of the Einstein equations one has
 \ptcheck{17III23, with FG}
 \ptcend{does not change anything?}
 \index{G@$G_{\mu\nu}$!$G_{rA}$}%
\index{The Einstein tensor!$G_{rA}$}%
\begin{eqnarray}
    \label{CBCHG:UA_eq}
          &&  \partial_r \left[r^{n+1} e^{-2\beta}\zhTBW_{AB}(\partial_r U^B)\right]
             =   2r^{2(n-1)}\partial_r \Big(\frac{1}{r^{n-1}}\spaceD_A\beta  \Big)
                 \nonumber \\ &&\qquad
                 -r^{n-1}\zhTBW^{EF} \spaceD_E (\partial_r \zhTBW_{AF})
                  +16\pi r^{n-1}    T_{rA}
                  \,.
           \end{eqnarray}
The linearisation of $G_{rA}$ at a Birmingham-Kottler metric gives
\begin{eqnarray}
          &&
           2r^{n-1}  \delta G_{rA} =
           \partial_r \left[r^{n+1}  \ringh_{AB}(\partial_r \delta U^B)\right]
         -  2r^{2(n-1)}\partial_r \Big(\frac{1}{r^{n-1}}\zspaceD_A\delta \beta  \Big)
           \nonumber
\\
      &&
      \qquad \qquad \qquad \qquad
      +
                    r^{n-1}
                   \partial_r \left(r^{-2}
                    \zspaceD^B\red{h}_{AB}\right)
                 \,.
                            \label{CHG28XI19.5}
           \end{eqnarray}
The linearised vacuum Einstein equation thus gives
\begin{align}
    \partial_r\left(r^{n+1}\partial_r  \zhTB_{uA}
    - r^{n-3}  \zspaceD^B  h_{AB} \right)
     &=
      -  2r^{2(n-1)}\partial_r \Big(\frac{1}{r^{n-1}}\zspaceD_A\delta \beta  \Big)
                 -
                 (n-1)  r^{n-4} \zspaceD^B h_{AB}
                  \,.
                   \label{24VII22.1a}
\end{align}
Using \eqref{CHG28XI19.4aa}, we can rewrite this as
\begin{align}
    \partial_r\left(r^{n+1}\partial_r  \zhTB_{uA}
     -
   2 {r^{n-1}}\zspaceD_A\delta\beta
    - r^{n-3}  \zspaceD^B  h_{AB}
        \right)
     =
                 -
                 (n-1)  r^{n-4} \zspaceD^B h_{AB}
                  \,.
                   \label{24VII22.1c}
\end{align}
We write this as a transport equation for a field $\red{\overadd{*}{\Hf}_{uA}}$ (the rationale behind this notation will become clear in Section \ref{sec6III23.1}):
\index{H@$H_{uA}$!$\overadd{*}{\Hf}_{uA}$}%
\begin{align}
    \red{\overadd{*}{\Hf}_{uA}} :=
    r^{n+1}\partial_r  \zhTB_{uA}  - r^{n-3}  \zspaceD^B  h_{AB}
    -
   2 {r^{n-1}}\zspaceD_A\delta\beta
                  \,,
\quad
    \partial_r \red{\overadd{*}{\Hf}_{uA}}
     = -(n-1)  r^{n-4} \zspaceD^B h_{AB}
     \,.
    \label{24VII22.1b}
\end{align}
Integrating, one obtains a representation formula for $\red{\overadd{*}{\Hf}_{uA}}$ on $\mcN_{[r_1,r_2]}$ which reads
\ptcend{varying end point would introduce a term linear in $\xi^r(r_2)$ multiplied by the integrand at $r_2$; this will introduce innocuous boundary terms multiplying a gauge term, so quadratic, so will not change anything?}
\begin{align}
    \red{\overadd{*}{\Hf}_{uA}}(r,\cdot)
      =
     \red{\overadd{*}{\Hf}_{uA}}(r_1,\cdot)
                     - (n-1)\int_{r_1}^r \hat{\kappa}_{4-n}(s)\zspaceD^B h_{AB}(s,\cdot)\,ds \,,
    \label{25VII22.3}
\end{align}
with%
\index{kappa@$\hat{\kappa}_i$}%
\begin{align}
    \hat{\kappa}_{-(n-4)}(s):=s^{n-4}\,,
    \label{25VII22.2}
\end{align}
and where the change of sign in the exponent, compared to the sign of the index of $\hat{\kappa}$, is motivated by consistency of notation
with~\cite{ChCong1}.

\index{C@$\CKV$}
Now, the kernel of $\zdivtwodagger$ is spanned by the space, which we denote by $\CKV$, of conformal Killing vectors of $\secN$,  i.e.\ solutions of the system
\begin{equation}\label{6III22.3}
 \TS[\zspaceD_A \pi_B] =0
  \,,
\end{equation}
with $\pi_A= \pi_A(u,x^B)$.
On $S^{n-1}$ the dimension of $\CKV$ is  $ \frac{ n(n+1) }{2} $.

A classical theorem in conformal geometry shows that, for all remaining compact Riemannian Einstein manifolds, conformal Killing vectors are Killing vectors.
(This follows e.g.\ from ~\cite[Theorem~24]{Kuehnel}, since the flow of a proper conformal Killing vector, i.e., a conformal Killing vector which is not a Kiling vector, provides the conformal factor in this theorem, which cannot be constant  on compact manifolds.)
On a $(n-1)$-dimensional torus $\T^{n-1}$, solutions of \eqref{6III22.3} belong to the $(n-1)$-dimensional space of covariantly constant vectors. Finally, the space of solutions of \eqref{6III22.3} on a $(n-1)$-dimensional negatively curved compact manifold is trivial;
compare Appendix~\ref{App30X22}.

The projection of the transport Equation~\eqref{24VII22.1b} onto $\pi_A \in \CKV$ gives
\ptcheck{17III23}
\begin{eqnarray}
          &&
           \partial_r \int_{\secN}\pi^A \red{\overadd{*}{\Hf}_{uA}} \,\sm =
                   -(n-1)\int_{\secN}\pi^A  r^{n-4}
                    \zspaceD^B \hBo_{AB} \,\sm = 0
                 \,,
                 \label{24VII22.3}
           \end{eqnarray}
and thus%
\index{Q@$\kQ{1}{}$}%
\begin{eqnarray}
          &&
           \kQ{1}{}(\pi^A)  := \int_{\secN}\pi^A \red{\overadd{*}{\Hf}_{uA}}
           = \int_{\secN}\pi^A \left[r^{n+1}  \partial_r(r^{-2} \red{h}_{uA})  -
   2 {r^{n-1}}\zspaceD_A\delta\beta \right] \,\sm
           \label{24VII22.4}
           \end{eqnarray}
forms a family of \textit{conserved radial   charges}, with
$$\partial_r \kQ{1}{}(\pi^A) = 0 \,,\quad \pi^A\in\CKV\,,$$
along any $u = $ constant null hypersurfaces with the gauge choice $\delta\beta = 0$. Whenever useful we shall denote the dependence of
$\kQ{1}{}$ on $x\in \dt{\secN}$ as $ \kQ{1}{}[x]$.
Two data sets $x_{r_1}\in \dt{\secN_1}$ and $x_{r_2}\in\dt{\secN_2}$ can only be glued-up-to-gauge if one can find gauge transformations $z_{r_1}$ and $z_{r_2}$  such that
\begin{equation}
\label{31VII22.12}
 \kQ{1}{}[z_{r_1}^*(x _{r_1})](\pi^A) = \kQ{1}{}[z_{r_2}^*(x _{r_2})](\pi^A)\,.
 \end{equation}

Under  gauge transformations 
$ \red{\overadd{*}{\Hf}_{uA}}$ transforms as%
\index{gauge transformation law!H@$H_{uA}$!$\overadd{*}{\Hf}_{uA}$}%
\index{H@$H_{uA}$!$\overadd{*}{\Hf}_{uA}$!gauge transformation law}%
\begin{align}
    \red{\overadd{*}{\Hf}_{uA}} \rightarrow
    \red{\overadd{*}{\Hf}_{uA}} + 2 \zspaceD^B \big( r^{n-2} \TS[\zspaceD_A\zspaceD_B \xi_u] - r^{n-1} \TS[\zspaceD_A\xi_B]\big) - 2 (r^{n-2}\Done(\xi_u) + n m \zspaceD_A \xi_u) \,;
    \label{30IX23.1}
\end{align}
recall that $\Done$ has been defined in \eqref{30III23.2}.
Upon projection onto $\CKV$, only the last term on the right-hand side of \eqref{30IX23.1} survives and thus $\kQ{1}{}$ transforms as%
\index{gauge transformation law!Q@$\kQ{1}{}$}%
\index{Q@$\kQ{1}{}$!gauge transformation law}%
\ptcheck{22III}
\begin{align}
        \int_{\secN} \pi^A \overadd{*}{\Hf}_{uA} \,\sm
        \to &
        \int_{\secN} \big(  \pi^A \overadd{*}{\Hf}_{uA}
         + {2mn \zspaceD_A \pi^A \xi^u}\big)\,\sm
          \,.
         \label{14VIII22.5}
    \end{align}
(Either by an explicit calculation, or by general considerations, the last two formulae remain the same in the gauge $\delta\beta=0$ when using gauge transformations preserving this gauge.)

If $m=0$ we see that  $\kQ{1}{}$ is gauge invariant, hence
\begin{equation}
 \kQ{1}{}[\tilde x_{r_1}] = \kQ{1}{}[\tilde x_{r_2}]
 \quad
  \iff
  \quad
  \kQ{1}{}[ x_{r_1}] = \kQ{1}{}[ x_{r_2}]
 \,,
 \label{31VII22.1}
\end{equation}
and the equality on the left in \eqref{31VII22.1} has to hold in this case to achieve gluing-up-to-gauge. We will call gauge-invariant radial charges ``\textit{obstructions to gluing-up-to-gauge}''.

However, if $m\neq 0$, the gauge field $\xi^u$ can be used to overcome the obstruction associated to \eqref{31VII22.12} when $\zspaceD_A \pi^A$ is non-vanishing (see Section \ref{s12I22.1} below). This is possible only when $\secN$ is the round sphere and $\pi^A$ is a proper conformal Killing vector of $S^{n-1}$. In all other topologies, the radial charges  $\kQ{1}{}(\pi^A)$ are gauge-invariant independently of whether or not the mass parameter $m$ vanishes.


Next, we can rewrite \eqref{24VII22.1c} with $\delta\beta=0$  as a transport equation involving the field $h_{uA}$:%
  \ptcheck{25III}
\index{H@$H_{uA}$!$\overadd{i}{\Hf}_{uA}$!$\overadd{0}{\Hf}_{uA}$}%
\begin{align}
    &
    \partial_r
     (
     r^{n+1} \partial_r \zhTB_{uA}
      )
      = (n+1) r^n \partial_r \zhTB_{uA} + r^{n+1} \partial^2_r \zhTB_{uA}
    =
    r^{n-1} \partial_r(r^{-2} \zspaceD^B h_{AB})
    \nonumber
\\
    & \implies
     \partial_r\bigg(
    \underbrace{ n \zhTB_{uA} + r \partial_r \zhTB_{uA}
     - \frac{1}{r^3} \zspaceD^B h_{AB} }_{=: \overadd{0}{\Hf}_{uA}}
     \bigg)
     =
    \frac{1}{r^4}  \zspaceD^B h_{AB} \,.
    \label{21II23.1}
\end{align}
Integrating this equation gives
\begin{align}
    \overadd{0}{\Hf}_{uA}(r,\cdot)
     =
     \overadd{0}{\Hf}_{uA}(r_1,\cdot)
     +
    \int_{r_1}^r \frac{1}{s^4}  \zspaceD^B h_{AB} \ ds \,.
    \label{21II23.2}
\end{align}
Under gauge transformations
preserving $\delta \beta=0$
we have%
  \FGp{ 13VI23}
\index{H@$H_{uA}$}%
\index{gauge transformation law!H@$H_{uA}$}%
\index{H@$H_{uA}$!gauge transformation law}%
\begin{align}
     \overadd{0}{\Hf}_{uA}
    \rightarrow &
    \overadd{0}{\Hf}_{uA}
    -\frac2r\zspaceD^B C(\xi)_{AB}
    + n \partial_u \xi_A
    {-\frac{n-2}{r^2}} 
    \Done (\xi^u)_A + \alpha^2 n \zspaceD_A \xi^u
    \,.
    \label{28II23.1}
\end{align}

\begin{Remark}
\label{R20V24.1} For the purpose of Section~\ref{s12I22.1} we will need to analyse the regularity of the fields resulting from our construction. In the linearised case there are many ways to obtain a consistent setup, using H\"older spaces, or Sobolev spaces, possibly with a non-integer regularity index. Strong constraints arise  from the requirement of compatibility with the nonlinear analysis, carried out in the accompanying paper~\cite{ChCongGray2}. For simplicity and definiteness, here we will only consider $L^2$-based Sobolev spaces, which are natural for the evolution problem. The diligent reader will note that all the results below apply in functional spaces where the Agmon-Douglis-Nirenberg type estimates are available, in particular in H\"older spaces, or in $L^p$-based Sobolev spaces with $p\in (1,\infty)$.
\end{Remark}

\paragraph{Regularity.}
\index{regularity|(}
 Let $H^\ell(\secN)$ denote the Hilbert space of tensor fields whose derivatives of order less than or equal to $\ell$ are in $L^2(\secN)$.

Let $\kgamma\in \N$. A tensor  field  $A$ will be said to be in $\CrHk$ if
  it holds that%
\index{H@$\CrHk$}%
\begin{equation}\label{28VIII23.1a}
  A|_{r=r_1}\in H^{\kgamma}(\secN)
  \ \mbox{and} \  \partial_r A \in H^{\kgamma}( [r_1,r_2]\times \secN)
  \,.
\end{equation}
One easily checks that
\begin{equation}\label{28VIII23.1a-}
  \forall \ j\in \Z\cap [1,\kgamma]
  \qquad
  \partial^j _r \CrHk \subset \CrHkmoj
  \,,
\end{equation}
and
\begin{equation}\label{28VIII23.1a+}
  \forall \ j\in \Z\cap [0,\kgamma]
  \,, \ r  \in [r_1,r_2] \qquad \partial_r^j A(r,\cdot) \in H^{\kgamma-j}(\secN)
  \,.
\end{equation}
Indeed, \eqref{28VIII23.1a+} with  $j=0$ is obtained by integration in $r$ (cf.\ \cite[Proposition~3.2]{ChCongGray2}).
 When $j=1$ we have
$$
 \partial_r  A(r_1,\cdot) \in H^{\kgamma - 1/2}(\secN)
  \subset H^{\kgamma - 1}(\secN)
  \,,
$$
where ``$\in$'' is obtained from a standard trace theorem (cf., e.g., \cite{Adams:Sobolev}).
  This proves \eqref{28VIII23.1a-} with $j=1$, and \eqref{28VIII23.1a+} again by integration. For $j>1$ the result is obtained by induction.

We have the trivial observation: for all $s\in \R$
\begin{equation}\label{28VIII23.1}
  A \in \CrHk
  \qquad
  \Longleftrightarrow
  \qquad
  r^s A \in \CrHk
  \,.
\end{equation}
From \eqref{25VII22.3} and \eqref{21II23.2}, we  have:
\begin{eqnarray}
 \lefteqn{
  h_{AB} \in \CrHk
  \,, \
  h_{uA}|_{r=r_1}
  \in \Hkm
  \,, \
  \partial_rh_{uA}|_{r=r_1}\in \Hkm
   }
   &&
 \nonumber
\\
  &&  \Longrightarrow
   \Big(
   h_{uA} \in \CrHkm
  \
  \Longleftrightarrow
  \
    r^n \zhTB_{uA} \in \CrHkm
  \
  \Longleftrightarrow
  \
    \overadd{0}{\Hf}_{uA} \in \CrHkm
  \nonumber
\\
 &&
 \phantom{ \Longrightarrow
   \Big(
   }
  \Longleftrightarrow
  \
    \red{\overadd{*}{\Hf}_{uA}} \in \CrHkm
    \Big)
    \,.
 \label{28VIII23.3}
\end{eqnarray}
We also note that the differentiability class $\overadd{0}{\Hf}_{uA} \in \CrHkm$ is preserved under gauge transformations \eqref{28II23.1}  if
\begin{equation}\label{28VIII23.4}
  \xi^u\in  \Hkpp
  \,,
  \
  \xi^A\in  \Hkp
  \,,
  \
 \partial_u \xi^A\in  \Hkm
  \,.
\end{equation}
\index{regularity|)}

\subsubsection{$\hBo_{uu}$}
 \label{ss29VII20.4}


To obtain the transport equation for the function $V$ occurring in the Bondi form of the metric, it turns out to be convenient to consider the expression for $ 2 G_{ur} + 2 U^A G_{rA} - V/r\, G_{rr} $:
 \ptcheck{17III23, with FG against his mathematica file}
 \ptcend{watch out for the R term }
\index{G@$G_{\mu\nu}$!$ 2 G_{ur} + 2 U^A G_{rA} - V/r\, G_{rr}$}%
\index{The Einstein tensor!$ 2 G_{ur} + 2 U^A G_{rA} - V/r\, G_{rr}$}%
 \begin{eqnarray}
            &&
               r^2 e^{-2\beta} (2 G_{ur} + 2 U^A G_{rA} - V/r\, G_{rr} )
                =
                R[\zhTBW]
                -2\zhTBW^{AB}  \Big[\spaceD_A \spaceD_B \beta
                + (\spaceD_A\beta) (\spaceD_B \beta)\Big]\nonumber\\
                &&\qquad
               +\frac{e^{-2\beta}}{r^{2(n-2)} }\spaceD_A \Big[ \partial_r (r^{2(n-1)}U^A)\Big]
                -\frac{1}{2}r^4 e^{-4\beta}\zhTBW_{AB}(\partial_r U^A)(\partial_r U^B)
                -\frac{(n-1)}{r^{n-3}} e^{-2\beta} \partial_r( r^{n-3} V)\,.
                \nonumber
                \\
                  \label{3X22.1HD}
           \end{eqnarray}
(It follows directly from the definition of $G_{\mu\nu}$ and the Bondi parametrisation of the metric that $ r^2 e^{-2\beta}(2 G_{ur} + 2 U^A G_{rA} - V/r\, G_{rr} )$ can equivalently be written as $r^2 g^{AB}R_{AB}$.)
\ptcheck{17III23}
In vacuum one thus obtains
\ptcend{a constant times some powers of r2 times zeta r at the end point added to the formula for delta V?}
\begin{eqnarray}
            &&
            2\Lambda r^2
                =
                R[\zhTBW]
                -2\zhTBW^{AB}  \Big[\spaceD_A \spaceD_B \beta
                + (\spaceD_A\beta) (\spaceD_B \beta)\Big]
               +\frac{e^{-2\beta}}{r^{2(n-2)} }\spaceD_A \Big[ \partial_r (r^{2(n-1)}U^A)\Big]
               \nonumber
\\
                &&\qquad
                -\frac{1}{2}r^4 e^{-4\beta}\zhTBW_{AB}(\partial_r U^A)(\partial_r U^B)
                -\frac{(n-1)}{r^{n-3}} e^{-2\beta} \partial_r( r^{n-3} V)\,.
                   \label{eq:V_eqnHD}
           \end{eqnarray}
%

Recall that  $\zR _{AB} = (n-2)\twoscsign \ringh _{AB}$
 denotes the Ricci tensor of the metric $\ringh_{AB}$. As $h_{AB}$ is $\ringh$-traceless we have
  \ptcheck{17III23}
\begin{eqnarray}
\nonumber 
 r^2\delta
               ( R[\zhTBW])|_{\zhTBW=\ringh}
  & = &
   -\zspaceD^A\zspaceD_A( \ringh^{BC }h_{BC})+\zspaceD^A\zspaceD^B h_{AB}-\zR ^{AB}h_{AB}
\\
  & =  &
  \zspaceD^A\zspaceD^B h_{AB}
  \,.
\end{eqnarray}

Linearising~\eqref{eq:V_eqnHD} around the Birmingham-Kottler background and rearranging terms thus gives
\ptcend{varying end point would introduce a term linear in $\xi^r(r_2)$ after integration?}
 \begin{align}
                 \partial_r( r^{n-3} \delta V
                 &
                  -  \frac{ r^{n-1  }}{n-1}
                \zspaceD_A \delta U^A)
                =
                \Big( 2 r^{n-3} (n-2) \myGauss -\frac{4 \Lambda r^{n-1}}{(n-1)} \Big)\delta\beta
                           \nonumber
 \\
                &   + \frac{r^{n-3}}{(n-1)}
                 \Big\{
                  \zspaceD^A\zspaceD^B\zhTB_{AB}
                -2\zhTBW^{AB}  \zspaceD_A \zspaceD_B \delta\beta\Big\}
                +
                 r^{ n-2}\zspaceD_A\delta U^A
                  \,.
                   \label{27III2022.3b}
           \end{align}
We note that since $\delta(G_{ur} + U^A G_{rA}) = \delta G_{ur}$,
Equation~\eqref{27III2022.3b} is equivalent to the equation $2r^{n-1}/(n-1) (\delta G_{ur}-\Lambda \hBo_{ur})= 0  $.

Let us show that \eqref{27III2022.3b} gives another
family of conserved radial   charges:
    \begin{align}
        \kQ{2}{}(\lambda)&:= \int_{\secN}\lambda\Big[r^{n-3}\delta V
         - \frac{r^{n-2}}{n-1}\partial_r\big(r^2 \zspaceD^A \delta U_A\big) - \tfrac{2 r^{n-2}}{n-1} \TSzlap \delta\beta
         - 2 r^{n-3} V \delta\beta
 \Big]\sm \,,
      \label{20VII22.1}
    \end{align}
\index{Q@$\kQ{2}{}$}%
where the functions $\lambda(x^A) \in \ker(\mrL^{\dagger})$ are solutions of the equation
\begin{equation}
  \TS[\zspaceD_A \zspaceD_B \lambda] = 0\,.
    \label{24VII22.6}
\end{equation}
(Recall that  the operator $\mrL$ is defined as:%
\index{L@$\mrL$}%
\begin{equation}
    \mrL := \zdivone\circ\zdivtwo .
    )
    \label{31V23.1}
\end{equation}
The only solutions of \eqref{24VII22.6} on a compact Einstein manifold are constants, except  on a round  $S^{n-1}$, in which case  such $\lambda$'s are linear combinations of $\ell=0$ or $\ell=1$ spherical harmonics~\cite[p.~127]{Kuehnel}; cf.\ also~\cite[Section~3]{BoucettaArxiv}.

When $\lambda$ is a constant the conservation equation $\partial_r \kQ{2}{}=0$ is essentially straightforward, as several terms  both in $  \kQ{2}{} $ and in $\partial_r \kQ{2}{} $ integrate out to zero, and using $\partial_r \delta \beta =0$  (cf.\ \eqref{CHG28XI19.4aa}). For general $\lambda$'s satisfying \eqref{24VII22.6} the $r$-independent of $\kQ{2}{}$ follows
from the following calculation,
 where the linearised Einstein equation
 $\partial_r \delta \beta = 0$ has again been used:
    \begin{align}
        \partial_r  \bigg[r^{n-3}\delta V
         &
         - \frac{r^{n-2}}{n-1}\partial_r\bigg(r^2 \zspaceD^A \delta U_A\bigg)\bigg]
        =
        \partial_r (r^{n-3}\delta V)
        - \partial_r \bigg(\frac{r^{n-2}}{n-1}\partial_r\bigg(r^2 \zspaceD^A \delta U_A\bigg)\bigg)
        \nonumber
\\
        &=
         \underbrace{\partial_r (r^{n-3}\delta V) -2 r^{n-2}\zspaceD_A \delta U^A -\frac{r^{n-1}}{n-1}\zspaceD_A\partial_r \delta U^A}_{
         \tfrac{r^{n-3}}{(n-1)} \Big(\zspaceD^A\zspaceD^B \zhTB_{AB} - 2 \TSzlap \delta \beta \Big)
         + \Big( 2 r^{n-3} (n-2) \myGauss -\frac{4 \Lambda r^{n-1}}{(n-1)} \Big)\delta\beta
         \text{ by \eqref{27III2022.3b}}}
         \nonumber
\\
         &\qquad
         \underbrace{-\frac{1}{n-1}( r^n \partial_r^2 \zspaceD_A \delta U^A
         + (n+1) r^{n-1}\partial_r \zspaceD_A \delta U^A)}_{\tfrac{1}{(n-1)r} \partial_r \big(2r^{n-1} \TSzlap \delta \beta\big)+ \tfrac{r^{n-2} }{n-1} \partial_r\big(\zspaceD^A\zspaceD^B\zhTB_{AB}\big)
          \text{ by \eqref{24VII22.1c} }}
         \nonumber
\\
         & = \zspaceD^A\zspaceD^B\bigg[\frac{r^{n-3}}{n-1} \zhTB_{AB}
         + \tfrac{r^{n-2}}{n-1}  \partial_r\zhTB_{AB}\bigg]
         \nonumber
\\
    & \quad
    + \partial_r\bigg[\bigg(\tfrac{2 r^{n-2}}{n-1} \left((n-1) \epsilon + \TSzlap \right)-\frac{4 \Lambda  r^n}{(n-1) n}\bigg)\delta\beta \bigg]
    \nn
\\
    &=
    \frac{1}{n-1}\zspaceD^A\zspaceD^B
     \big[\partial_r (r^{n-2}\zhTB_{AB}) - (n-3)r^{n-3}\zhTB_{AB}
      \big]
               \nonumber
\\
    & \quad
    + \partial_r\bigg[\bigg(\tfrac{2 r^{n-2}}{n-1} \left((n-1) \epsilon + \TSzlap \right)-\frac{4 \Lambda  r^n}{(n-1) n}\bigg)\delta\beta \bigg]
      \,.
    \label{24VII22.5HD}
    \end{align}
Collecting some $r$-derivatives on one side, we find
    \begin{align}
        \partial_r  \Big[r^{n-3}\delta V
         - \frac{r^{n-2}}{n-1}\partial_r\Big(
         &
         r^2 \zspaceD^A \delta U_A\bigg) - \bigg(
         \tfrac{2 r^{n-2}}{n-1} \big((n-1) \epsilon + \TSzlap \big)-\frac{4 \Lambda  r^n}{(n-1) n}\Big)\delta\beta
         \Big]
        \nn
\\
    &=
    \frac{1}{n-1}\zspaceD^A\zspaceD^B
     \big[\partial_r (r^{n-2}\zhTB_{AB}) - (n-3)r^{n-3}\zhTB_{AB}
      \big]
      \,.
    \label{19V24.2}
    \end{align}
Adding $4 \partial_r(m\ \delta \beta) = 0$ to the left-hand side allows one to rewrite this in a more condensed and, as we will see shortly,   gauge-invariant form:
\ptcend{varying end point would introduce a term linear in $\xi^r(r_2)$ after integration?}
    \begin{align}
\partial_r  \Big[r^{n-3}\delta V
         - \frac{r^{n-2}}{n-1}\partial_r
          \big(
          &
          r^2 \zspaceD^A \delta U_A\big)
          - \tfrac{2 r^{n-2}}{n-1} \TSzlap \delta\beta
         - 2 r^{n-3} V \delta\beta
         \Big]
        \nn
\\
    &=
    \frac{1}{n-1}\zspaceD^A\zspaceD^B
     \big[\partial_r (r^{n-2}\zhTB_{AB}) - (n-3)r^{n-3}\zhTB_{AB}
      \big]
      \,.
    \label{19V24.3}
    \end{align}
Hence
\ptcend{if this is direct integration, maybe not affected?}
\begin{equation}
    \partial_r \kQ{2}{} = \frac{ 1}{n-1} \int_{\secN}\lambda
     \mrL \big[\partial_r (r^{n-1}\zhTB_{AB}) - (n-3)r^{n-3}\zhTB_{AB}\big]\,\sm = 0\,,
\end{equation}
and the condition, for $x_{r_1}\in \dt{\secN_1}$ and for $x_{r_2}\in \dt{\secN_2}$
\begin{equation}
    \kQ{2}{}[x_{r_1}] = \kQ{2}{}[x_{r_2}]
\end{equation}
provides another family of obstructions to linearised characteristic gluing.

In the gauge $\delta\beta = 0$,
and under gauge transformations preserving this, it holds that
\index{Q@$\kQ{2}{}$!gauge transformation law}%
\index{gauge transformation law!Q@$\kQ{2}{}$}%
    \begin{align}
 \kQ{2}{}[x](\lambda)  &\to
    \int_{\secN}\lambda\Bigg[
   r^{n-3}\delta V
   + 2r^{n-2}\left(\twoscsign+\frac{1}{n-1}\TSzlap\right)\left(\frac{1}{n-1} \zspaceD_B\xi^B
      \right)
      \nonumber
      \\
   &- \frac{2r^{n-1}}{n-1}(\zspaceD_B \partial_u\xi^B +\left(\alpha^2  {-2m(n-2)r^{-n}}\right) \TSzlap \xi^u)   {+nmr^{n-3}\zspaceD_B\xi^B}) 
      \nonumber
   \\
   &\qquad
    +\frac{r^{n-2}}{n-1}\bigg(\zspaceD^A \partial_r\hBo_{uA}
    + 2 r (\zspaceD_B \partial_u \xi^B + \left(\alpha^2  {-2m(n-2)r^{-n}}\right) \TSzlap \xi^u)
    \bigg)
    \Bigg]\,\sm
    \nonumber
\\
    &=
\kQ{2}{}[x](\lambda)
    +\int_{\secN}\lambda\bigg[   2r^{n-2}\left(\twoscsign + \frac{1}{n-1}\TSzlap  {-\frac{n m}{r^{n-2}}}\right)\left(\frac{1}{n-1} \zspaceD_B\xi^B
      \right)
    \bigg]
    \,\sm
     \,.
    \label{20VII22.2}
    \end{align}
    It can be verified that in the general Bondi gauge with $\delta\beta \neq 0$, the gauge transformation of $\kQ{2}{}$ continues to be given by \eqref{20VII22.2}.

    Taking $\zspaceD^A$ of~\eqref{24VII22.6} gives,
    \begin{equation}
        \bigg(  {1}-\frac{1}{n-1}\bigg)\zspaceD_B\TSzlap\lambda = -\zR _{AB}\zspaceD^A\lambda \equiv  -(n-2)\myGauss \zspaceD_B\lambda
        \quad
         \Longleftrightarrow
         \quad  \frac{1}{n-1} \zspaceD_B\TSzlap\lambda =   - \myGauss \zspaceD_B\lambda
          \,.
    \end{equation}
    %
   This shows that
    when $m=0$, or when $(\secN,\ringh$) is \emph{not} the round sphere, the second integral in~\eqref{20VII22.2} vanishes and hence $\kQ{2}{}{}$ is gauge invariant. \underline{When  $m\neq 0$,} on a round sphere the radial charges $\kQ{2}{}{(\lambda^{[=1]})}{}$ can be gauged-away using $(\zspaceD_B\xi^B)^{[=1]}$.
    \FGp{6IV23}

Equation \eqref{24VII22.5HD} in the gauge $\delta\beta = 0$ can be rewritten as
\ptcheck{23IV23}
    \begin{align}
        \partial_r  \bigg[r^{n-3}\delta V - \frac{r^{n-2}}{n-1}\partial_r\bigg(r^2 \zspaceD^A \delta U_A\bigg)-\frac{r^{n-2}}{n-1}\zspaceD^A\zspaceD^B \zhTB_{AB}\bigg]
    &=
    -\frac{(n-3)r^{n-5}}{n-1}\mrL(h_{AB})\,,
    \label{1XI22.w1}
    \end{align}
which gives a representation formula for $\delta V$ after integration.
\ptcend{should not be affected?}

From \eqref{27III2022.3b} we have, setting $\delta\beta=0$,
 \begin{align}
                -\frac{r^n}{n-1}\partial_r\zspaceD_A \delta U^A
                &=
                -r \partial_r( r^{n-3} \delta V)  + 2r^{n-1}
                \zspaceD_A \delta U^A
                + \frac{r^{n-2}}{n-1}
                  \zspaceD^A\zspaceD^B\zhTB_{AB}
                  \,.
                   \label{20II23.1}
           \end{align}
We use this to rewrite the term in the square brackets of \eqref{1XI22.w1}, in the gauge $\delta \beta =0$, as
 \index{chi@$\chi$}%
\begin{align}
       - \chi &:=
       r^{n-3}\delta V - \frac{r^{n-2}}{n-1}\partial_r\bigg(r^2 \zspaceD^A \delta U_A\bigg)-\frac{r^{n-2}}{n-1}\zspaceD^A\zspaceD^B \zhTB_{AB}
      \nonumber
\\
    &=
   \frac{2(n-2)r^{n-1}}{n-1}\zspaceD^A \delta U_A
    - r^2\partial_r (r^{n-4}\delta V)
=
   -\frac{2(n-2)r^{n-3}}{n-1}\zspaceD^A h_{uA}
    + r^2\partial_r (r^{n-3} h_{uu})
    \,.
    \label{20II23.2}
\end{align}

Under gauge transformations preserving $\delta\beta = 0$, the function $\chi$ transforms as%
 \FGp{19V23}
\index{chi@$\chi$}%
\index{gauge transformation law!chi@$\chi$}%
\begin{align}
\label{1III23.1}
    \chi &\rightarrow
    \chi + \frac{2r^{n-2}(n-3)}{n-1} \bigg(\frac{1}{n-1}\TSzlap + \twoscsign\bigg) \zspaceD_B \xi^B
    {+}  \frac{2 n m}{n-1} \zspaceD_B \xi^B
    \nonumber
    \\
    & \quad
    - \frac{2r^{n-3}(n-2)}{(n-1)^2} \TSzlap ( \TSzlap + (n-1) \twoscsign) \xi^u\,.
\end{align}
\paragraph{Regularity.}
It holds that%
\index{regularity|(}%
\begin{eqnarray}
 \delta V|_{r=r_1} \in \Hkmm\,,
  &&
  h_{AB} \in \CrHk
  \,,
  \
   h_{uA} \in \CrHkm
   \nonumber
\\
 &&   
  \Longrightarrow
   \quad
   \Big(\delta V \in \CrHkmm
  \
  \Longleftrightarrow
  \
    \chi \in \CrHkmm
    \Big)
    \,.
 \label{28VIII23.2}
\end{eqnarray}
The differentiability class $\chi \in \CrHkmm$ is preserved under gauge transformations \eqref{1III23.1}  if
\begin{equation}\label{28VIII23.5}
  \xi^u\in  \Hkpp
  \,,
  \
  \xi^A\in  \Hkp
  \,,
\end{equation}
consistently with \eqref{28VIII23.4}.%
\index{regularity|)}%
\ptcheck{20V for the new differentiability}

\seccheck{10XII23}

\subsubsection{$\partial_u \hBo_{AB}$}
 \label{ss29VII20.3}

We continue with the equation involving $\partial_u \hBo_{AB}$:
 \ptcheck{17III23, with FG}
\begin{eqnarray}\label{eq:ev_eqnHD}
     \lefteqn{
     \TS\Big[e^{2\beta} R[\gamma]_{AB} +
      r^{(5-n)/2}\partial_r [r^{\frac{n-1}2}  (\partial_u \zhTBW_{AB})]
     	 - \frac{1}{2} r^{3-n} \partial_r[ r^{n-2}V  (\partial_r \zhTBW_{AB})]
          }
          &&
           \nonumber
\\
         && -2e^{\beta} \spaceD_A \spaceD_B e^\beta +\frac{1}{r^{n-3}} \zhTBW_{CA} \spaceD_B[ \partial_r (r^{n-1}U^C) ]
          - \frac{1}{2} r^4 e^{-2\beta}\zhTBW_{AC}\zhTBW_{BD} (\partial_r U^C) (\partial_r U^D)
          \nonumber \\
       &&
       +
               \frac{r^2}{2}  (\partial_r \zhTBW_{AB}) (\spaceD_C U^C )
              +r^2 U^C \spaceD_C (\partial_r \zhTBW_{AB})
                \nonumber \\
       &&
      + \frac{1}{2} rV \gamma^{CD} \partial_r\gamma_{AC}\partial_r\gamma_{BD}-\frac{1}{2}r^2\gamma^{CD}(\partial_r\gamma_{BD}\partial_u\gamma_{AC}+\partial_u\gamma_{BD}\partial_r\gamma_{AC})
       \nonumber \\
       &&
        -
	r^2 (\partial_r \zhTBW_{AC}) \zhTBW_{BE} (\spaceD^C U^E -\spaceD^E U^C)
      +\Lambda e^{2\beta} g_{AB}  -8\pi e^{2\beta}T_{AB}
       \Big]
        =0.
\end{eqnarray}
 It is convenient to rewrite this equation as
 \FGp{28III23; Crossed-checked by FG and Wan}
 \ptcend{moving  the first term on the rhs to the left would give an equation which can be integrated with a moving end  }
\begin{eqnarray}\label{eq:30III22.1bHD}
     \lefteqn{
    \partial_r
    \Big[
     r^{\frac{n-1}2}   \partial_u \zhTBW_{AB}
     	 - \frac{1}{2} r^{\frac{n-3}{2} } V   \partial_r \zhTBW_{AB}
     	 -  \frac{n-1}{4} r^{\frac{n-5}2  } V    \zhTBW_{AB}
     \Big]
          }
          &&
           \nonumber
\\
         &=&
     	 -   \frac{n-1}{4} \partial_r(r^{\frac{n-5}2  } V )   \zhTBW_{AB}
     	- \frac{1}{2} r^{\blue{\frac{n-3}{2}}}V \gamma^{CD} \partial_r\gamma_{AC}\partial_r\gamma_{BD}
     	 \nonumber
\\
		&&+
		\frac{1}{2}r^{\blue{\frac{n-1}2}}\gamma^{CD}(\partial_r\gamma_{BD}\partial_u\gamma_{AC}+\partial_u\gamma_{BD}\partial_r\gamma_{AC})
		\nonumber
\\
        &&
       -r^{\frac{n-5}2}
       \TS\Big[
      e^{2\beta}   R[\gamma]_{AB} -2e^{\beta} \spaceD_A \spaceD_B e^\beta
      \nonumber
\\
 	  &&
      + r^{3-n} \zhTBW_{CA} \spaceD_B[ \partial_r (r^{n-1}U^C) ]
          - \frac{1}{2} r^4 e^{-2\beta}\zhTBW_{AC}\zhTBW_{BD} (\partial_r U^C) (\partial_r U^D)
          \nonumber \\
       &&
       +
               \frac{r^2}{2}  (\partial_r \zhTBW_{AB}) (\spaceD_C U^C )
              +r^2 U^C \spaceD_C (\partial_r \zhTBW_{AB})
                \nonumber \\
       &&
        -
	r^2 (\partial_r \zhTBW_{AC}) \zhTBW_{BE} (\spaceD^C U^E -\spaceD^E U^C)
       -8\pi e^{2\beta}T_{AB}
       \Big]
       \,.
\end{eqnarray}

Let us define the operators $P$ and $\tric$ acting on two-covariant tensors as
\index{P@$P$}
\index{R cal@$\tric$}
\begin{equation}\label{16V22.1b}
  P(h)_{AB} := \TS[\zspaceD_A \zspaceD^C h_{BC}] \,,\quad
  \tric (h)_{AB} := \TS[\zR_A{}^C{}_B{}^D h_{CD}]
  \,,
\end{equation}
\index{R@$\zR_{ABCD}$}%
where $\zR_{ABCD}$ is the curvature tensor of $\ringh$.
We note that for any tensor $X_{AB}$ we have
\begin{equation}\label{16V22.1bxc}
 \tric (X)_{AB} = \tric\big(\TS(X)\big){}_{AB}
 \,.
\end{equation}
If $h$ is symmetric and trace-free  it holds that
\begin{equation}\label{16V22.1bx}
  \tric (h)_{AB}
  = \zR_A{}^C{}_B{}^D h_{CD}
  \,,
\end{equation}
and if moreover $\ringh$ is a space form we have
\begin{equation}
 \label{12V23.1}
  \tric (h)_{AB} = -\myGauss h_{AB}
  \,.
\end{equation}

Denoting by   $\zR$ the Ricci scalar of $\ringh$,
we find
 \ptcheck{22III}
\begin{align}
    \delta \TS[R[\gamma]_{AB}] &=
    \TS[\delta R[\gamma]_{AB}]
    - \frac{1}{n-1} \zR h_{AB}
    \nonumber
\\
    &=
    \TS[ -\frac{1}{2}\TSzlap h_{AB} + \zspaceD^C\zspaceD_A h_{BC}]
    - (n-2) \myGauss h_{AB}
    \nonumber
\\
    & =
    -\frac{1}{2}\TSzlap h_{AB} + P (h)_{AB}
      - \tric(h)_{AB}\,.
      \label{15VI24.1}
\end{align}
Thus
the linearisation of \eqref{eq:30III22.1bHD} around a  Birmingham-Kottler background, in vacuum, reads,
\begin{eqnarray}
\label{eq:31III22.3p0}
\lefteqn{0 = r^{\frac{n-5}2} \TS[\delta G_{AB}] }
&&
\nonumber
\\ &&
=
     \partial_r \Big[r^{\frac{n-1}2}  \blue{\partial_u \zhTB_{AB}}
     - \frac{ r^{\frac{n-3}{2}}}{2}  V  \partial_r \zhTB_{AB}
     -  \frac{n-1 }{4}  r^{\frac{n-5}2}   V   \zhTB_{AB}
     -
            {r^{\frac{n-1}2}} \TS\big[\zspaceD_A   \zhTB_{uB}]
     \Big]
      \nonumber
\\&&
       +\frac{n-1}{4}  \partial_r  (r^{\frac{n-5}2}  V)  \zhTB_{AB}
         -
          r^{\frac{n-5}2}
          \Big(
          2\zspaceD_A \zspaceD_B \delta\beta
           +
           \frac{n-1}{2 }r \TS\big[\zspaceD_A   \zhTB_{uB})
       \big]
       \Big)
       \nonumber
\\&&
    + r^{(n-9)/2}( -\frac{1}{2}\TSzlap h_{AB} + \red{P(h)}_{AB}
      - \red{\tric(h)}_{AB})
        \,.
        \phantom{xxx}
\end{eqnarray}

To obtain a transport equation  involving $\partial_u h_{AB}$, with source terms depending only on the field $h_{AB}$, we take $\frac{2}{(n+1) r^{(n+1)/2}}\times C$ of
 \eqref{24VII22.1c} with $\delta\beta=0$, giving
\begin{align}
    \frac{2}{(n+1) r^{(n+1)/2}}\partial_r\left[r^{n+1}\partial_r  (r^{-2}\TS[\zspaceD_B\red{h}_{uA}])  \right]
    - \frac{2}{n+1}r^{\frac{n-3}{2}}\partial_r(r^{-2}P(\hBo)_{AB})
     &=0
                  \,.
                 \label{3IX22.w2HD}
\end{align}
Subtracting \eqref{3IX22.w2HD} from \eqref{eq:31III22.3p0} with $\delta \beta=0$  leads to the desired equation: 
 \ptcheck{29III23}
\begin{align}
     & \partial_r \Big[r^{\frac{n-1}2}  \blue{\partial_u \zhTB_{AB}}
     - \frac{ r^{\frac{n-3}{2}}}{2}  V  \partial_r \zhTB_{AB}
     -  \frac{n-1 }{4}  r^{\frac{n-5}2}   V   \zhTB_{AB}
     -\frac{2}{(n+1)r^{(n+1)/2}}\partial_r(r^{n-1}TS\big[\zspaceD_A   h_{uB}])
     \Big]
      \nonumber
\\
    &\qquad
    =
     -\frac{n-1}{4}  \partial_r  (r^{\frac{n-5}2}  V)  \zhTB_{AB}
       - \frac{2}{n+1}r^{\frac{n-3}{2}}\partial_r(r^{-2}P(h)_{AB})
       \nonumber
\\
       & \qquad\qquad
       - r^{(n-9)/2}( -\frac{1}{2}\TSzlap h_{AB} + P(h)_{AB}
      - \red{\tric(h)}_{AB})
        \,.
        \label{3IX22.1aHD}
\end{align}
Setting
\index{q@$q_{AB}$}%
\begin{eqnarray}
 q_{AB}
  & :=
  &  r^{\frac{n-1}2}  \blue{\partial_u \zhTB_{AB}}
     - \frac{ r^{\frac{n-3}{2}}}{2}  V  \partial_r \zhTB_{AB}
     -  \frac{n-1 }{4}  r^{\frac{n-5}2}   V   \zhTB_{AB}
     \nonumber
\\
   & &
     -\frac{2}{(n+1)r^{(n+1)/2}}\partial_r(r^{n-1}TS\big[\zspaceD_A   h_{uB}])
     +\frac{2r^{\frac{n-3}{2}}}{n+1}P(\zhTB)_{AB}
     \,,
     \label{28VIII23.6}
\end{eqnarray}
Equation~\eqref{3IX22.1aHD}
can be rewritten as
 \ptcheck{29III23 }
\ptcend{varying end point would introduce a term linear in $\xi^r(r_2)$ multiplied by the integrand at $r_2$; this will introduce innocuous boundary terms multiplying a gauge term, so quadratic, so will not change anything?}
\begin{align}
    \partial_r  q_{AB}
    =
    &
     -\frac{n-1}{4}  \partial_r  (r^{\frac{n-5}2}  V)  \zhTB_{AB}
       + \frac{n-3}{n+1}r^{(n-9)/2}P(h)_{AB}
        \nonumber
\\
       &
       - r^{(n-9)/2}
        \big(
         -\frac{1}{2}\TSzlap h_{AB} + P(h)_{AB}
      - \red{\tric(h)}_{AB}
      \big)
       \nonumber
\\
   = &
   \
     \frac{1}{8}  \left(n^2-1\right)\alpha ^2
   r^{\frac{n-5}2} h_{AB}
    - \frac{(n -1)^2 }{4} m r^{-(n+5)/2}h_{AB}
    \nonumber
\\
       &
       - r^{(n-9)/2}\Big[\frac{4}{n+1} P(h)_{AB}-  \red{\tric(h)}_{AB}- \big(\frac{1}{2}\TSzlap
       -
       \frac{(n-3)(n-1)\myGauss}{8}\big)h_{AB}
       \Big]
        \,.
        \label{3IX22.1HD}
\end{align}
Under a gauge transformation 
preserving the gauge condition $\delta \beta=0$, 
the field $q_{AB}$ transforms as%
 \FGp{31V23
 \\--\\
 wan : mass checked 23VI}
\index{q@$q_{AB}$}%
\index{gauge transformation law!q@$q_{AB}$}%
\index{q@$q_{AB}$!gauge transformation law}%
 \begin{align}
     q_{AB} \mapsto\  & q_{AB}\nonumber
    \\
    &
    - r^{\frac{n-3}{2}}
    \left(
    \frac{2}{n-1} \TS[\zspaceD_A\zspaceD_B\zspaceD_C\xi^C] - \left( \frac{4}{n+1} P -\frac{n-1}{2} (\twoscsign - r^2 (\alpha^2+\ {2mr^{-n}}) \right)  C(\xi)_{AB}
    \right)
     \nonumber
\\
    &
     + \frac{r^{\frac{n-5}2}}{2} \left( (n-3)\twoscsign - (n+1) r^2 \alpha^2 \ {-\frac{(n-1)^2}{{n+1}} 2mr^{-n+2}} - \frac{4(n-3)}{(n-2)(n+1)}P\right) \TS[\zspaceD_A \zspaceD_B \xi^u]
    \,.
    \label{7III23.w1}
 \end{align}
\paragraph{Regularity.}%
\index{regularity|(}%
\Eqref{3IX22.1HD} shows  that
\ptcheck{20V24; for higher diff}
\begin{eqnarray}
\lefteqn{
  q_{AB}|_{r=r_1} \in \Hkmm
  \,,
  \
  h_{AB} \in \CrHk
  \,,
  \
   h_{uA} \in \CrHkm
  }
  &&
  \nonumber
\\
 &&
  \phantom{xxxxxxxxxxxx}
  \Longrightarrow
  \quad
   \Big( q_{AB} \in \CrHkmm
  \
  \Longleftrightarrow
  \
    \partial_u h_{AB} \in \CrHkmm
    \Big)
    \,.
 \label{28VIII23.9}
\end{eqnarray}
The differentiability class $q_{AB} \in \CrHkmm$ is preserved under gauge transformations \eqref{7III23.w1}  if
\begin{equation}\label{28VIII23.10}
  \xi^u\in  \Hkpp
  \,,
  \
  \xi^A\in  \Hkp
  \,,
\end{equation}
again consistently with \eqref{28VIII23.4}.%
\index{regularity|)}%

\subsection{The remaining Einstein equations}
\label{sec:25VII22.1}

In order to appreciate what follows the reader is invited to consult the argument involving the Bianchi identities presented at  the beginning of Section~3.5 of~\cite{ChCong1}. The reasoning presented there applies as is  in all dimensions, with irrelevant dimension-dependent changes in the equations, and therefore will not be repeated here.

\subsubsection{$\partial_u\partial_r\hBo_{uA}$}
\label{ss3VIII22.9}

The set of equations $ \mcE_{u A}=0$ can be found in~\cite{ChCongGray2} and
 is too long to be usefully displayed here.
Its linearisation $ \delta \mcE_{u A}\equiv - \delta \mcE^r{}_A +(\epsilon-\alpha^2 r^2)\delta \mcE_{rA}$  in vacuum reads
\FGp{20IV23}
\begin{eqnarray}
2 \delta \mcE_{u A}
& = &
\frac{1}{ r^2} \Bigg[
\bigg(  \frac{V}{r}(n-1)(n-2)
+ 2r(n-2)\partial_r \left(\frac{V}{r}\right)
+ r^2\partial^2_r \left(\frac{V}{r}\right)
  - R[\gamma] -\zspaceD^{B}\zspaceD_{B}
 \bigg){h}_{uA}
\nonumber
\\
&&\quad\quad
+\zspaceD^{B}\zspaceD_{A}{h}_{uB}
+ \partial_{u} \zspaceD^{B}{h}_{A B}
+ r^{5-n}\partial_{r}(r^{n-4}\zspaceD_{A}\delta V)
\nonumber
\\
&&\quad\quad
  -  r^{2}
   \Big(
\frac{V}{r^{n-2}}\partial_{r}(r^{n-3}\partial_r h_{uA})
  -
    r^2 \partial_{r}\partial_{u}\left(\frac{h_{uA}}{r^{2}}\right)
 \Big)
- 2r \left(
            r \partial_u
            \zspaceD_A\delta \beta
        \right)
      \Bigg]
   +    2 \red{\Lambda}  h_{uA}
   \nonumber
\\
&=&
\frac{1}{r^2} \bigg[
\zspaceD^{B}\zspaceD_{A}{h}_{uB} - \zspaceD^{B}\zspaceD_{B}{h}_{uA} - 2 (n-2) (r^2\alpha^2 + \frac{2m}{r^{n-2}})h_{uA}
+
\partial_{u} \zspaceD^{B}{h}_{A B}
\nonumber
\\
&&\quad\quad
  + r^{5-n}\partial_{r}(r^{n-4}\zspaceD_{A}\delta V)
  -  r^{2}
   \Big(
\frac{V}{r^{n-2}}\partial_{r}(r^{n-3}\partial_r h_{uA})
  -
    r^2 \partial_{r}\partial_{u}\left(\frac{h_{uA}}{r^{2}}\right)
 \Big)
\nonumber
\\
&& \quad\quad
- \ 2r^2 
             \partial_u 
             \zspaceD_A\delta \beta
       \bigg]
\,.
\label{9XI20.t1}
\end{eqnarray}
%

 Assuming $\delta G_{rA}=0 = \delta \beta$, using the transport equation \eqref{24VII22.1c}
 to eliminate $\partial_r^2\zhTB_{uA}$ and
the identity \eqref{20II23.2} to eliminate $\partial_r\hBo_{uu}$, we can rewrite \eqref{9XI20.t1} as
\wancheck{23VI: mass terms checked}
\ptcend{varying end point would introduce a term linear in $\xi^r(r_2)$ multiplied by the integrand at $r_2$; this will introduce innocuous boundary terms multiplying a gauge term, so quadratic, so will not change anything?}
\begin{align}
- r^{n+1}\partial_{r}\partial_{u}\left(\frac{h_{uA}}{r^{2}}\right)
& =
r^{n-3}\zspaceD^{B}\zspaceD_{A}{h}_{uB} - r^{n-3}\zspaceD^{B}\zspaceD_{B}{h}_{uA} - 2 (n-2) r^{n-1}(\alpha^2 + {2m r^{-n}}) h_{uA}
\nonumber
\\
&\quad
+  r^{n-3}\partial_{u} \zspaceD^{B}{h}_{A B}
- r^{2}\partial_{r}(r^{n-3}\zspaceD_{A}{h}_{uu})
\nonumber\\
&\quad
-r^{(n-2)}(\twoscsign - r^2(\alpha^2+ {2mr^{-n}}) )\underbrace{((n-3)\partial_r h_{uA} + r\partial^2_r h_{uA})}_{= r \zspaceD^B\partial_r\zhTB_{AB} + 2(n-2)/r h_{uA}}
\nn
\\
& =
r^{n-3}\zspaceD^{B}\zspaceD_{A}{h}_{uB} - r^{n-3}\zspaceD^{B}\zspaceD_{B}{h}_{uA}
+
r^{n-3}\partial_{u} \zspaceD^{B}{h}_{A B}
\nonumber
\\
&\quad
- r^{2}\partial_{r}(r^{n-3}\zspaceD_{A}{h}_{uu})
-r^{n-1} (\twoscsign - r^2(\alpha^2+ {2mr^{-n}})) \zspaceD^B\partial_r\zhTB_{AB}
\nonumber
\\
&\quad
- 2(n-2) r^{n-3}\twoscsign  h_{uA}
\nonumber
\\
&=
r^{n-3}\bigg( {-{2}}\zspaceD^B\TS[\zspaceD_B h_{uA}]  + \partial_{u} \zspaceD^{B}{h}_{A B}\bigg)
+ \zspaceD_A \chi
\nonumber
\\
&\quad
-r^{n-1} (\twoscsign - r^2(\alpha^2+  {2mr^{-n}}) ) \zspaceD^B\partial_r\zhTB_{AB}
\, .
\label{20II23.3}
\end{align}

\paragraph{Regularity.}%
\index{regularity|(}%
We can use \eqref{20II23.3} to determine algebraically $\partial_r\partial_u h_{uA}|_{\secNone}$ in terms of the remaining fields. We obtain 
\begin{eqnarray}
  h_{AB} \in \CrHk
  \,,
  \
   h_{uA}|_{r=r_1} \in \Hkm
 \quad
  \Longrightarrow
  \quad
  \partial_r\partial_u h_{uA}|_{r=r_1} \in \HSmthree\gamma 
    \,.
 \label{28VIII23.9ar}
\end{eqnarray}
%

\subsubsection{$\partial_u \hBo_{uu}$}
\label{ss3VIII22.8}

The equation $\mcE_{uu}=0$, to be found in~\cite{ChCongGray2}, is likewise too long to be usefully displayed here.
Its linearised version is shorter and reads
\FGp{24IV23}
\ptcend{we are not integrating this equation, so moving end will not change anything?}
\begin{eqnarray}
   0  & =
    & 2 \delta \mcE_{uu}
 \nonumber
\\
 &     = &
  \frac{1}{ r^2}\bigg[
       2\partial_{u} \zspaceD^{A} {h}_{u A}
        + \partial_r \left(\frac{V}{r}\right)  \zspaceD^{A} {h}_{u A}
        - \frac{2 V}{r^{n-1}}  \partial_{r} \bigg(r^{n-2} \zspaceD^{A} {h}_{u A}\bigg)
    +\frac{V}{r^3} \bigg(\zspaceD^{A} \zspaceD^{B} -\zR^{AB} \bigg) h_{A B}
 \nonumber
\\
 &&
 +\left(r (n-1) (\partial_{u}
 -\partial_r\left(\frac{V}{r}\right)) +  \zR + \zspaceD^{A} \zspaceD_{A}\right)  \frac{\delta V}{r}
 - \frac{(n-1)V}{r^{2n-4}} \partial_{r}(r^{2n-\blue{5}} \delta V)
 \nonumber
\\
& &
  -\frac{2 V}{r} \big(\zspaceD^A\zspaceD_A
  -(n-1)(\twoscsign (n-2) +V\partial_r -r\partial_u)\delta\beta
   \bigg]
    + 2 \Lambda h_{uu}
\, . 
\label{13VIII20.t3}
\end{eqnarray}
 This must be satisfied by all $x_{r_1}\in \dt{\secN_1}$ and $x_{r_2}\in\dt{\secN_2}$ when the linearised vacuum Einstein equations hold,
 and allows us to determine in particular $ \partial_u V$ at $r=r_1$  in terms of the remaining fields there.
%
%

 \section{Further $u$-derivatives}
\label{sec6III23.1}

Assuming now that the Einstein equations
\begin{equation}\label{24V23.1}
  \mbox{$\mcE_{rr} = 0$, $\mcE_{rA} = 0$, and $\TS[\mcE_{AB}] = 0$}
\end{equation}
are satisfied, we can obtain transport equations involving $\partial_u^i h_{uA}$ and $\partial_u^{i+1} h_{AB}$ for $i\geq 1$ by combining suitably the transport equation \eqref{21II23.1} involving $h_{uA}$ with \eqref{3IX22.1HD}  involving $\partial_u h_{AB}$ and with higher $u$-derivatives of these equations. We present the transport equations in this section; a detailed derivation can be found in Appendix \ref{App6III23.1}. In this section, we assume the gauge $\delta\beta= 0$.

\subsection{Transport equation involving $\partial_u^i h_{uA}$}
Recall that in the gauge $\delta \beta = 0$ (cf.\ \eqref{21II23.1})
\begin{align}
    \overadd{0}{\Hf}_{uA}=
      n \zhTB_{uA} + r \partial_r \zhTB_{uA}
     - \frac{1}{r^3} \zspaceD^B h_{AB}
     \,,
\ \
\mbox{with} \ \
    &
     \partial_r \overadd{0}{\Hf}_{uA}
     =
    \frac{1}{r^4}  \zspaceD^B h_{AB}
    \,.
    \label{23IV23.1b}
\end{align}
Assuming \eqref{24V23.1}, for all $i \geq 1$  the equations $\partial_u^i \mcE_{rA}=0$ are equivalent
  to the transport equations
\ptcend{varying end point would introduce a term linear in $\xi^r(r_2)$ multiplied by the integrand at $r_2$; this will introduce innocuous boundary terms multiplying a gauge term, so quadratic, so will not change anything?}
\begin{align}
    \partial_r  \overadd{i}{\Hf}_{uA}
       &=  \zspaceD^B \overadd{i}{\chi}\ofP\, r^{-(i+4)} h_{AB}
     + m^i  \zspaceD^B \overadd{i}{\chi}_{[m]} r^{-(4+i(n-1))} h_{AB}
    \nonumber
    \\
    &\quad
    + \sum_{j,\ell}^{i_*}  m^{j} \alpha^{2\ell} \zspaceD^B \overadd{i}{\chi}_{j,\ell}\ofP\, r^{-(i + 4) - j (n - 2) + 2 \ell} h_{AB} \,,
    \label{6III23.w1a}
\end{align}
 where $\sum_{j,\ell}^{i_*}$ denotes the sum over
 $j,\ell\in \N$ satisfying
 \index{s@$\sum_{j,\ell}^{i_*}$}%
 \begin{equation}
 1 \leq j \leq i-1 \,,\  j+\ell\leq i \,,\   \text{and } 0\le 2\ell\leq  i +j(n-2)
 \,.
 \label{19V23.3-1}
 \end{equation}
 When $\alpha = 0$, this sum reduces to
 \begin{equation}
     \sum_{j,\ell}^{i_*} \Big|_{\alpha= 0}=  \sum_{j=1}^{i-1}\,.
 \end{equation}
We also have%
\index{chi@$\overadd{i}{\chi}\ofP$}%
\index{chi@$\overadd{i}{\chi}_{[m]}$}%
\begin{align}
    \overadd{i}{\chi}\ofP  &= \prod_{j=1}^{i} \ck{-(j+3)}{\ofPnoP}
      \,,
    \quad
    \overadd{0}{\chi}\ofP  := 1 \,,
        \label{6III23.w2}
\\
    \overadd{i}{\chi}_{[m]} &= \prod_{j=1}^{i}  \ckm{-(4+(j-1)(n-1))}
     \,,
    \quad
    \overadd{0}{\chi}_{[m]} := 0\,,
        \label{6III23.w2b}
\end{align}
where
\index{K@$\ckm{k}$}%
\index{K@$\ck{k}{\ofPnoP}$}%
\begin{align}
    \ck{k}{\ofPnoP} &:= -\frac{1}{7 -  n + 2 k} \bigg[\frac{2 (n - 1) P}{(3 + k) (3 -  n + k) }
    +  2 \tric + \TSzlap -   (n - 4 -  k) (2 + k) \myGauss \bigg]\,,
        \label{6III23.w3}
\\
    \ckm{k} &:= \frac{2(4-n+k)^2}{7 -  n + 2 k} \,;
    \label{17IV23.1}
\end{align}
with
\begin{equation}\label{23IV23.11}
 \fbox{$k\in \mathbb{Z}$ satisfying $  k \not \in \{ -3, n-3,  \frac{n-7}{2}\}$}
 \end{equation}
(in fact the numbers $k=-3$ or $k=n-3$ do  not occur  in \eqref{6III23.w2}-\eqref{6III23.w2b}, but these values can occur in \eqref{6III23.w9} below). Thus the operators $\overadd{i}{\chi}\ofP$, respectively  $\overadd{i}{\chi}_{j,\ell}\ofP$, are polynomials in $\TSzlap$ and $P$ of order $i$, respectively  $i-j-\ell\le i-1$.
Next, the fields $\overadd{i}{\Hf}_{uA} $ take the form
$\overadd{i}{\Hf}_{uA} = n \partial^i_u \zhTB_{uA} + r \partial_r\partial_u^i \zhTB_{uA} \, +$ terms which depend on
$(r, \partial_u^{j-1} h_{uA}, \partial_r \partial_u^{j-1} h_{uA}, h_{AB},\partial^j_u h_{AB})_{j=1}^i$,
 and are defined recursively by the equations%
\index{H@$H_{uA}$!$\overadd{i}{\Hf}_{uA}$}%
\begin{align}
\overadd{1}{\Hf}_{uA} &=  \partial_u \overadd{0}{\Hf}_{uA}
    - \zspaceD^B \qh^{(-4)}_{AB} \,,
    \nonumber
    \\
    \overadd{i}{\Hf}_{uA} &=  \partial_u \overadd{i-1}{\Hf}_{uA}
    - \zspaceD^B \overadd{i-1}{\chi}\ofP \, \qh^{(-i-3)}_{AB}
    - m^{i-1}  \zspaceD^B \overadd{i-1}{\chi}_{\!\!\![m]\,} \qh^{-(4+(i-1)(n-1))}_{AB}
    \nonumber
    \\
    &\quad
    - \sum_{j,\ell}^{(i-1)_*}  m^{j} \alpha^{2\ell} \zspaceD^B \overadd{i-1}{\chi}_{j,\ell}\ofP \, \qh^{-(i+3)-j(n-2)+2\ell}_{AB}
    \nonumber
    \\
    &\quad
    -  \alpha^2
    \cka{-(i+3)}
    \hck{-(i+2)}{\TSzlap,\zdivtwo C}
    \overadd{i-2}{\Hf}_{uA} \,, \quad i\geq 2 \,,
    \label{6III23.w4}
\end{align}
where:
\begin{enumerate}
  \item
\index{U@$\tilde{U}\ofDC$}%
 the notation $U\ofP \mapsto \tilde{U}\ofDC$ (a tilde over an operator $U$) denotes the replacement in $U$ of all appearances of the operator $P := C\circ\zdivtwo$, respectively   $\tric$, by the operator $\zdivtwo\circ\, C$,  respectively   $1/2(n-2)\myGauss$;
  \item
   and where%
\index{K@$\cka{k}$}%
\begin{align}
    \cka{k} &:= \frac{(k +4) (n-k -4)}{n - 7 -2 k}\,;
    \label{6III23.w5}
\end{align}
\item
  the field $\qh_{AB}^{(k)}$ is defined as%
\index{qt@$\tilde q_{AB}$}%
\index{qh@$\qh_{AB}^{(k)}$}%
\begin{align}
    \label{5III23.1ah}
    \qh_{AB}^{(k)}& := \ E_{k} r^{k + 1} \partial_u h_{AB} + G_{k} r^{5-n+k +\frac{n-1}{k +4 - n}} \partial_r (r^{\frac{n-1}{n-k -4}+n-3} C(h_{uA}) )
    \nonumber
    \\
    &\quad \quad \quad
    + B_{k} r^{k -  \frac{n-7}{2}} \tilde q_{AB} - H_{k} r^{k} P(h)_{AB} + \frac{r^{k}}{2} (\twoscsign - \alpha^2 r^2- \frac{2m}{r^{n-2}}) h_{AB}
   \,,
\\
 \tilde q_{AB}
  & :=
  \frac{ r^{\frac{n-3}{2}}}{2}  V  \partial_r \zhTB_{AB}
     -  \frac{n-1 }{4}  r^{\frac{n-5}2}   V   \zhTB_{AB}
     +\frac{2r^{\frac{n-3}{2}}}{n+1}P\zhTB_{AB}
     \,,
\end{align}
for some non-zero numbers $E_k$, $G_k$, $B_k$ and $H_k$ to be found in \eqref{3III23.6}, the precise values of which being irrelevant at this stage.
\end{enumerate}
\paragraph{Regularity.}
We see that%
\index{regularity|(}%
\begin{eqnarray}
\lefteqn{%
\text{if } \
  h_{AB} \in \CrHk
  \,,
  \
   h_{uA} \in \CrHkm
  \,,
  \
   V \in \CrHkmm
   \,,
   }
   &&
   \nonumber
\\
   \lefteqn{\text{then we have the equivalences}
   }
 &&
    \nonumber
\\
 &&
   \Big(  \qh^{(\red{k})}_{AB} \in \CrHkmm
  \
  \Longleftrightarrow
  \
  q_{AB} \in \CrHkmm
  \
  \Longleftrightarrow
  \
    \partial_u h_{AB} \in \CrHkmm
    \Big)
    \,.
 \label{3IX23.1}
\end{eqnarray}
Furthermore
\begin{eqnarray}
    \overadd{i}{\Hf}_{uA}|_{r=r_1} \in \Hkmi
  \,,
  \
  h_{AB} \in \CrHk
  \qquad
  \Longrightarrow
  \qquad
    \overadd{i}{\Hf}_{uA} \in \CrHkmi
   \,.
   \label{2X23.1a}
\end{eqnarray}
\index{regularity|)}%

\seccheck{10XII23}
\subsection{Transport equation involving $\partial_u^i h_{AB}$}

For $ 1 \leq i $ the equations $\TS[\partial_u^{i-1} \mcE_{AB}] = 0$ are equivalent
to the transport equations
\begin{align}
    \partial_r  \overadd{i}{q}_{AB}  &=  \overadd{i}{\psi}\ofP\, r^{(n-7-2i)/2} h_{AB}
    + \alpha^{2i} \overadd{i}{\psi}_{[\alpha]} r^{(n-7+2i)/2} h_{AB}
    + m^i \overadd{i}{\psi}_{[m]} r^{\frac{n-7-2i(n-1)}{2}} h_{AB}
    \nonumber
    \\
    &\quad
     + \sum_{j,\ell}^{i_{**}}  m^{j} \alpha^{2\ell} \overadd{i}{\psi}_{j,\ell}\ofP\, r^{\frac{n-7}{2} - i -  j (n-2) + 2 \ell} h_{AB} \, ,
    \label{6III23.w6}
\end{align}
where   $\sum_{j,\ell}^{i_{**}}$ denotes the sum over $j,\ell$ with
\index{s@$\sum_{j,\ell}^{i_{**}}$}%
\begin{equation}
\begin{cases}
 1\leq j \leq i-1 \,,\  j+\ell \leq i \, , & \mbox{if $n$ is even}   \\
 1\leq j \leq i-1 \,,\  j+\ell \leq i  \, ,
 \
  \frac{n-7}{2} - i -  j (n-2) + 2 \ell \leq -4 , & \mbox{if $n$ is odd.}
\end{cases}
 \label{19V23.2b}
\end{equation}
 %
 When $\alpha = 0$, this sum reduces to
 \begin{equation}
     \sum_{j,\ell}^{i_{**}} \quad \overset{\alpha= 0}{=} \quad  \sum_{j=1}^{i-1}\,.
 \end{equation}
In \eqref{6III23.w6}, the field $\overadd{i}{q}_{AB} $ is of
the form  $\overadd{i}{q}_{AB} = r^{\frac{n-1}2}\partial_u^i\zhTB_{AB} \,+ $ terms which depend on fields of lower $u$-derivatives, specifically,
$(r ,\partial^{j-1}_u h_{AB} , \partial_u^{j-1} h_{uA}, \partial_r \partial_u^{j-1} h_{uA})_{j=1}^i$; the operators
$\overadd{i}{\psi}\ofP$ and $\overadd{i}{\psi}_{j,\ell}\ofP$ are polynomials in $\TSzlap$ and $P$ of orders $i$ and $i-j-\ell$ respectively; $\overadd{i}{\psi}_{[\alpha]}$ and $\overadd{i}{\psi}_{[m]}$ are constants. For $i \geq 2$ these are given by the recursion relations%
\index{qi@$\overadd{i}{q}_{AB}$}%
 \ptcheck{17VI; crosschecked with the appendix}
\begin{align}
    \overadd{i}{q}_{AB}&= \partial_u \overadd{i-1}{q}_{AB}
    - \overadd{i-1}{\psi}\ofP \, \qh_{AB}^{(n-5-2i)/2}
    - \alpha^{2(i-1)} \overadd{i-1}{\psi}_{[\alpha]} \qh_{AB}^{(n-9+2i)/2}
    \nonumber
    \\
    &\quad
    - m^{i-1} \overadd{i-1}{\psi}_{[m]} \qh^{(\frac{n-7-2(i-1)(n-1)}{2})}_{AB}
    - \alpha^2 \mathcal{\widehat K}(2(i-1),P)  \overadd{i-2}{q}_{AB}
    \nonumber
    \\
    &\quad
    - \sum_{j,\ell}^{(i-1)_{**}}  m^{j} \alpha^{2\ell} \overadd{i-1}{\psi}_{j,\ell}\ofP \, \qh^{(\frac{1}{2} (n-7- 2 (i-1) - 2 j (n-2) + 4 \ell))}_{AB}
    \,,
    \label{6III23.w8}
\end{align}
\index{psi@$\overadd{i}{\psi}\ofP$}%
\index{psi@$\overadd{i}{\psi}_{[\alpha]}$}%
\index{psi@$\overadd{i}{\psi}_{[m]}$}%
\begin{align}
    \overadd{i}{\psi}\ofP  & = \prod_{j=2}^{i} \ck{\tfrac{n-5-2j}{2}}{\ofPnoP}\overadd{1}{\psi}\ofP \, \,,
    \quad
    \overadd{i}{\psi}_{[\alpha]} = \prod_{j=2}^{i}\cka{\tfrac{n-9+2j}{2}} \overadd{1}{\psi}_{[\alpha]} \,,
    \label{6III23.w9}
\\
    \overadd{i}{\psi}_{[m]} &=\prod_{j=2}^{i} \ckm{\tfrac{n-7-2(j-1)(n-1)}{2}} \overadd{1}{\psi}_{[m]}
    \,,
    \label{6III23.w10}
\end{align}
with the initiating functions
\begin{align}
    \overadd{0}{q}_{AB}&= 0 \,, \quad
    \overadd{1}{q}_{AB}= q_{AB} \,,
\\
    \overadd{1}{\psi}\ofP  &= - \bigg[\frac{4}{n+1} P -  \tric - \frac{1}{2}\TSzlap
       +\frac{(n-3)(n-1)\myGauss}{8}\bigg]\,,
    \label{6III23.w7a}
\\
    \overadd{1}{\psi}_{[\alpha]} &= \frac{1}{8}(n^2-1)\,,
    \quad
    \overadd{1}{\psi}_{[m]} = -\frac{(n-1)^2}{4} \,,
    \label{6III23.w7b}
\end{align}
and where
\index{K@$\mathcal{\widehat K}(2i,P)$}%
\begin{align}
    \mathcal{\widehat K}(2i,P) &:= \ck{\tfrac{n-5-2i}{2}}{\ofPnoP} \cka{\tfrac{n-7-2i}{2}}\,.
\end{align}
Equations \eqref{6III23.w8}-\eqref{6III23.w10} holds for all $i\in \mathbb{Z}^+$ when $n$ \textbf{is even} and until $i=\frac{n-1}2$ when $n>3$ \textbf{is odd}, with $\hat{q}{}^{(-3)}_{AB}$ being given by~\eqref{5III23.1} where necessary. The case $n=3$ is special and has been analysed in~\cite{ChCong1}.

When $n\neq 3$ \textbf{is odd}, at order $i = (n+1)/2$, we have
\begin{align}
    \overadd{\frac{n+1}{2}}{\psi}\ofP = \zck{-3}{\ofPnoP} \overadd{\frac{n-1}{2}}{\psi}\ofP\,,
    \quad
     \overadd{\frac{n+1}{2}}{\psi}_{[\alpha]} = 0\,,
     \label{17IV23.4}
\end{align}
with the second equation agreeing with the second equation of~\eqref{6III23.w9} after noting that $\cka{n-4}=0$. For $i \geq (n+1)/2$, the recursion formula for $\overadd{i}{\psi}\ofP$ and $\overadd{i}{\psi}_{[\alpha]}$ is as given in~\eqref{6III23.w9} (thus $\overadd{i}{\psi}_{[\alpha]}=0$), but with $\ck{-3}{\ofPnoP}$ replaced by $\zck{-3}{\ofPnoP}$ where necessary.
Meanwhile,
$\overadd{i}{q}_{AB}$ is given by
\begin{align}
    \overadd{i}{q}_{AB}&= \partial_u \overadd{i-1}{q}_{AB}
    - \overadd{i-1}{\psi}\ofP \, \qh_{AB}^{(n-5-2i)/2}
    - m^{i-1} \overadd{i-1}{\psi}_{[m]} \qh^{(\frac{n-7-2(i-1)(n-1)}{2})}_{AB}
    \nonumber
    \\
    &\quad
    - \sum_{j,\ell}^{(i-1)_{**}}  m^{j} \alpha^{2\ell} \overadd{i-1}{\psi}_{j,\ell}\ofP \, \qh^{(\frac{1}{2} (n-7- 2 (i-1) - 2 j (n-2) + 4 \ell))}_{AB}
    \,.
    \label{17IV23.3}
\end{align}

The following proposition, which we prove in Appendix~\pref{ss24IX23.1}, will play a key role in what follows:

\begin{Proposition}
\label{P9X23.1m}
We have, for $n\ge 5$ odd and $j\ge0$, and for any vector field $W$ and symmetric two-covariant tensor field $h$,
\begin{eqnarray}
\label{16X.f1m}
     & \overadd{\frac{n-3}{2}+j}{\psi}\ofP  \, h^{\red[S]} \equiv 0 \,, \quad  \overadd{\frac{n-1}{2}+j}{\psi}\ofP \, h^{\red[V]} \equiv 0 \,,
     &
     \\
\label{12VI.1m}
&\overadd{\frac{n-1}{2}+j}{\psi}\ofP  \circ\, C(W) \equiv 0 \,,
 &
\\
 &
\label{5X23f.1m}
\overadd{\frac{n-1}{2}+j}{\psi}\ofP \circ P  \, (h) \equiv 0 \,.
 &
\end{eqnarray}
\end{Proposition}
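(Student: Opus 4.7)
The plan is to exploit the fact that on Einstein manifolds with $\zR_{AB}=(n-2)\myGauss\,\ringh_{AB}$, the subspaces $S$ and $V$ in the decomposition of trace-free symmetric two-tensors are invariant under each of $\TSzlap$, $P$, and $\tric$, hence under every polynomial combination, including $\ck{k}{\ofPnoP}$, $\zck{-3}{\ofPnoP}$, and $\overadd{i}{\psi}\ofP$. Parametrising $S$ by scalar potentials via the map $T:\phi\mapsto\TS[\zspaceD_A\zspaceD_B\phi]$ and $V$ by divergence-free vectors via $C$, the Ricci identity together with the Einstein condition yields the intertwining relations
\begin{align*}
P\circ T &= T\circ\bigl(\tfrac{n-2}{n-1}\TSzlap+(n-2)\myGauss\bigr),
 \quad
 \TSzlap\circ T = T\circ\bigl(\TSzlap+2(n-1)\myGauss\bigr),
\end{align*}
together with analogous formulas on $V$. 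Eliminating $\TSzlap$ between them yields the operator identity $P=\tfrac{n-2}{n-1}\TSzlap-(n-2)\myGauss$ on $S$, and similarly $P=\tfrac12\TSzlap+\tfrac{n-2}{2}\myGauss$ on $V$, so the problem reduces to polynomial identities in $\TSzlap$ on each subspace.

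For the first identity in \eqref{16X.f1m}, I would substitute $P|_S$ and $\tric|_S=-\myGauss$ into \eqref{6III23.w3}; a short calculation shows that the coefficient of $\TSzlap$ in $\ck{k}{\ofPnoP}|_S$ factors as $\tfrac{(k+1)(k-n+5)}{(3+k)(3-n+k)}$, and the coefficient of $\myGauss$ also vanishes at $k=-1$, so $\ck{-1}{\ofPnoP}|_S\equiv 0$ whenever $n\ne 5$. In the product \eqref{6III23.w9}, the factor $\ck{(n-5-2j)/2}{\ofPnoP}$ at $j=(n-3)/2$ equals $\ck{-1}{\ofPnoP}$, so this annihilating factor appears in $\overadd{i}{\psi}\ofP$ for every $i\ge(n-3)/2$; combined with the $S$-invariance of the remaining factors --- and the harmlessness of the replacement $\ck{-3}{\ofPnoP}\to\zck{-3}{\ofPnoP}$ once $i\ge(n+1)/2$, because the input is already zero --- this gives the identity for $n\ge 7$. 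The case $n=5$, in which $(n-3)/2=1$ and no $\ck{k}{\ofPnoP}$ factor is present in \eqref{6III23.w9}, reduces to a direct check of $\overadd{1}{\psi}\ofP\circ T\equiv 0$ from \eqref{6III23.w7a} using the intertwining formulas. The second identity in \eqref{16X.f1m} follows by an entirely parallel argument, with $\ck{-2}{\ofPnoP}|_V\equiv 0$ replacing $\ck{-1}{\ofPnoP}|_S\equiv 0$ and the annihilating factor appearing from $i=(n-1)/2$ onward.

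The remaining two identities follow quickly. For \eqref{12VI.1m}, the Hodge-type decomposition $W=\zspaceD\phi+W^{[V]}$ with $\zspaceD^A W_A^{[V]}=0$ gives $C(W)=T(\phi)+C(W^{[V]})\in S\oplus V$; observing that $(n-1)/2+j=(n-3)/2+(j+1)$, both summands are annihilated by $\overadd{(n-1)/2+j}{\psi}\ofP$ using \eqref{16X.f1m}. For \eqref{5X23f.1m}, one uses the factorisation $P(h)=C(\zdivtwo h)$ on trace-free symmetric $h$ to reduce to \eqref{12VI.1m} applied with $W=\zdivtwo h$.

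The main obstacle is the algebraic verification that the $\myGauss$-coefficient in $\ck{-1}{\ofPnoP}|_S$ (and its counterpart for $\ck{-2}{\ofPnoP}|_V$) actually vanishes: the $\TSzlap$-factorisation is a cheap discriminant computation, but the $\myGauss$-cancellation is a genuine identity in the rational coefficients of \eqref{6III23.w3} which must be performed by direct substitution. A secondary subtlety, should one want the proof to cover general Einstein backgrounds rather than only space forms, is to verify that the Weyl-tensor contributions appearing both in the commutator formula underlying $\TSzlap\circ T$ and in $\tric|_S$ drop out of all final identities; this should follow from the trace-freeness of the Weyl tensor together with the explicit form of $T$, but must be tracked with care.
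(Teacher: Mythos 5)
Your overall strategy mirrors the paper's own proof: decompose trace-free symmetric tensors into scalar, vector and $\TTt$ parts, compute the action of $P$ on the scalar and vector parts via the Ricci identity, observe that the factor $\ck{-1}{\ofPnoP}$ (resp.\ $\ck{-2}{\ofPnoP}$), which enters the product \eqref{6III23.w9} at $j=\frac{n-3}{2}$ (resp.\ $j=\frac{n-1}{2}$), annihilates $h^{[S]}$ (resp.\ $h^{[V]}$), treat $n=5$ by a direct check of $\overadd{1}{\psi}\ofP$, and then deduce \eqref{12VI.1m} from $C(W)\in S\oplus V$ and \eqref{5X23f.1m} from $P=C\circ\zdivtwo$. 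Your ordering argument for the product (including the harmlessness of $\zck{-3}{\ofPnoP}$) is fine, and your route to \eqref{5X23f.1m} via $P(h)=C(\zdivtwo h)$ is a legitimate small variant of the paper's use of the TT decomposition together with $[\overadd{i}{\psi}\ofP,P]=0$.

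The genuine gap is that, as written, your argument only proves the proposition on space forms, whereas $(\secN,\ringh)$ is an arbitrary compact Einstein manifold, and for $n\ge 5$ (so $\dim\secN\ge 4$) such manifolds need not be space forms. Your opening premise — that $S$ and $V$ are invariant under each of $\TSzlap$, $P$ and $\tric$ separately, "hence under every polynomial combination" — fails in that generality: $\tric$ contains the Weyl tensor, and neither $\tric\,h^{[S]}$ nor $\TSzlap h^{[S]}$ need lie in $S$; only the combinations $P$ and $\TSzlap+2\tric$ preserve $S$ and $V$. Correspondingly, your intertwining relation $\TSzlap\circ T=T\circ(\TSzlap+2(n-1)\myGauss)$ and your substitution $\tric|_S=-\myGauss$ are space-form identities (cf.\ \eqref{12V23.1}); off space forms they acquire Weyl contributions, and these do not drop out "by trace-freeness of the Weyl tensor", so your proposed repair is not the right mechanism. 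They drop out because every operator involved ($\overadd{1}{\psi}\ofP$, $\ck{k}{\ofPnoP}$, $\zck{-3}{\ofPnoP}$) depends on $\tric$ only through the Lichnerowicz-type combination $\TSzlap+2\tric$, and it is this combination that intertwines with $\TS[\zspaceD_A\zspaceD_B\,\cdot\,]$ and with the Killing operator on divergence-free vectors with purely $\myGauss$-coefficients. That is exactly how the paper proceeds: it keeps $\tric$ as an operator, derives $P(h^{[S]})=\tfrac{n-2}{n-1}(\TSzlap+2\tric-(n-3)\myGauss)h^{[S]}$ and $P(h^{[V]})=\tfrac12(\TSzlap+2\tric)h^{[V]}$, and the vanishing of $\ck{-1}{\ofPnoP}$ on $S$ and of $\ck{-2}{\ofPnoP}$ on $V$ then follows as identities in $P$ and $\TSzlap+2\tric$ valid on any Einstein background. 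Your computation is salvaged by making this replacement throughout; as it stands, the elimination step yielding "$P=\tfrac{n-2}{n-1}\TSzlap-(n-2)\myGauss$ on $S$" is false off space forms. (A minor further point: your "$P=\tfrac12\TSzlap+\tfrac{n-2}{2}\myGauss$ on $V$" is the operator conjugated to the divergence-free potential; acting on $h^{[V]}$ itself the space-form formula is $\tfrac12\TSzlap-\myGauss$, so you must be consistent about which Laplacian is substituted into $\ck{k}{\ofPnoP}$ when verifying the cancellations.)
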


\paragraph{Regularity.}%
\index{regularity|(}%
We have
\begin{eqnarray}
 h_{AB} \in \CrHk
 \,,
 \
 \forall \ 0\le i\le \ell
 \
  \overadd{i}{q} _{AB}|_{r=r_1}
 \in \Hkmmi
 \quad
  \Longrightarrow
  \quad
 \overadd{\ell}{q}_{AB}
 \in \CrHkmmell
    \,,
 \label{2X23.2}
\end{eqnarray}
as well as the equivalence
\begin{eqnarray}
  \partial_u^i h_{AB} \in \CrHk
   \qquad
  \Longleftrightarrow
   \qquad
 \overadd{i}{q}_{AB} \in   \CrHkmmi
 \label{2X23.5}
\end{eqnarray}
whenever all the previous equations hold and 
\begin{eqnarray}
 h_{AB} \in \CrHk
 \,,
 \
h_{uA} \in \CrHkm
 \,,
 \
\delta V  \in \CrHkmm
 \,,\
 \text{and }
  \nonumber
\\
 \partial_u^j h_{AB}|_{r=r_1} \in
  \Hkmmj \ \text{ for $0\le j \le i$}
    \,.
 \label{2X23.4}
\end{eqnarray}
\index{regularity|)}%

\paragraph{Radial charges.} We end this section by introducing two further radial charges in the \underline{case $m=0$}.
 The first arises from taking a combination of the fields $\overadd{\frac{n-3}2}{q}_{AB}$ and $\chi$. From their respective transport equations \eqref{6III23.w6}, with $i=\frac{n-3}2$, and \eqref{1XI22.w1}, it can be readily seen that we have,%
 \FGp{23X}
\index{Q@$\kQ{3}{}$}%
\begin{align}
    \partial_r
    \underbrace{\bigg(
    \mrL \big( \overadd{\frac{n-3}2}{q}{} ^{[\ker \overadd{\frac{n-3}2}{\psi}]} )
    +   \alpha^{n-3} \frac{n-1}{n-3} \overadd{\frac{n-3}2}{\psi}_{[\alpha]} \chi
    \bigg)}_{
    =:\kQ{3}{}
    } = 0 \,.
    \label{17X23.1}
\end{align}
While relatively straightforward, we note that in arriving at \eqref{17X23.1}, we made use of the 
facts\footnote{Thus we could have replaced $\mrL \big( \overadd{\frac{n-3}2}{q}{}_{AB}^{[\ker \overadd{\frac{n-3}2}{\psi}]} )$ in \eqref{17X23.1} by $\mrL \big( \overadd{\frac{n-3}2}{q}{}_{AB})$, but we chose to keep the projection to emphasise that the first term on the right-hand side of \eqref{6III23.w6}, with $i=\frac{n-3}2$, will not contribute to the right-hand side of \eqref{17X23.1}.}
\begin{align}
   \Big(
   \underbrace{ h\in  (\ker \overadd{\frac{n-3}2}{\psi})^\perp \Rightarrow h = h^{[V]}+  h ^{[\TTt]} }_{\text{cf.\ \eqref{16X.f1m}}}\,, \quad \text{and}
   \quad
   & \mrL(h^{[V]} )
   = 0= \mrL(h^{[\TTt]} )
   \Big)
   \nonumber
   \\
   &
    \implies  \mrL(h^{[\ker \overadd{\frac{n-3}2}{\psi} ]} ) = \mrL(h ) \,.
    \label{18X23.1}
\end{align}
 \ptcheck{23X}

The next radial charge arises from taking a combination of the fields $\overadd{\frac{n-1}2}{q}_{AB}$ and $\overadd{*}{\Hf}_{uA}$. From their respective transport equations \eqref{6III23.w6}, with $i=\frac{n-1}2$, and \eqref{24VII22.1b},
again \underline{in the case $m=0$}, it can be readily seen that we have,%
\index{Q@$\kQ{4}{}$}%
\ptcheck{23X}
\begin{align}
    \partial_r
    \underbrace{\bigg(
    \zspaceD^B \overadd{\frac{n-1}2}{q}{}_{AB}
    +    \frac{\alpha^{n-1}}{n-1} \overadd{\frac{n-1}2}{\psi}_{[\alpha]} \overadd{*}{\Hf}_{uA}
    \bigg)}_{
    =:\kQ{4}{A}
    } = 0 \,,
    \label{17X23.4}
\end{align}
where we note that \eqref{12VI.1m} with $j=0$ was used for the vanishing contribution of the first term on the right-hand side of \eqref{6III23.w6} to the right-hand side of \eqref{17X23.4}. 

\subsection{Gauge transformations}
\label{ss9IX23.1}

The recursion formula \eqref{6III23.w6} gives another family of radially conserved charges for $i\in \Z^+$ \underline{when $m=0=\alpha$}: 
\begin{align}
    \partial_r \int_{\secN} \overadd{i}{\kerpsi}{}^{AB} \overadd{i}{q}_{AB} \sm = 0\,,
    \label{23IV23.2}
\end{align}
where $\overadd{i}{\kerpsi}$ satisfies%
\index{m@$\overadd{i}{\kerpsi}$}%
\begin{align}
    0=\overadd{i}{\psi}\ofP ^\dagger  (\overadd{i}{\kerpsi})
    \equiv    \overadd{i}{\psi}\ofP  (\overadd{i}{\kerpsi})
    \,.
    \label{23IV23.3}
\end{align}
\seccheck{10XII23}

We now move on to determine the gauge dependence of these radial charges. In the rest of this section, unless otherwise indicated, we set \underline{$m=0$}. Dimension considerations show that the $r$-dependence of gauge fields in the gauge transformation of $\overadd{i}{q}_{AB}$ \underline{when $\alpha = 0$} is determined by terms of the form
\begin{align}
r^{\frac{n-3}{2}-i} \xi^u\, \quad \text{and}\quad r^{\frac{n-3}{2}-j} \partial_u^{i-1-j} \xi^A
\label{1V23.1}
\end{align}
for $i\geq 1$ and $0\leq j \leq i-1$,  up to differential operators acting in the $x^A$-variables. However, it follows from the radial conservation of these charges that their gauge transformations must  be $r$-independent. Clearly, from \eqref{1V23.1}, when $n$ is even,
no  gauge term in the gauge transformation of $\overadd{i}{q}_{AB}$ will contain $r$-independent terms,
whatever $i$. This implies in particular that we must have: when $n$ is even, under gauge transformations,%
\index{gauge transformation law!q@$\overadd{i}{q}_{AB}$}%
\index{qi@$\overadd{i}{q}_{AB}$!gauge transformation law}%
\begin{align}
    \int_{\secN} (\overadd{i}{\kerpsi})^{AB} \overadd{i}{q}_{AB}  \sm \rightarrow \int_{\secN} (\overadd{i}{\kerpsi})^{AB} \overadd{i}{q}_{AB} \sm \,,
    \label{23IV23.5}
\end{align}
and \eqref{23IV23.2} gives a family of gauge-invariant radially conserved charges.
The transformation \eqref{23IV23.5} remains true for \underline{$\alpha\neq 0$} by a similar argument when $n$ is even.

When $n$ \textit{is odd}, the smallest $i$ for which an $r$-independent gauge transformation is possible is $i=\frac{n-3}{2}$. In what follows, we will look at the transformations for $\overadd{i}{q}{}^{[\ker\overadd{i}{\psi}]}_{AB}$ in the cases $i\in\{\frac{n-3}{2}\,,\frac{n-1}{2}\,,\frac{n+1}{2}\}$ and $i>\frac{n+1}{2}$ for odd $n>3$.

\paragraph{Transformation when $i=\frac{n-3}{2}$.}
For this value of $i$ it follows from \eqref{1V23.1} that the transformation depends only on the gauge field $\xi^u$ \underline{when $\alpha = 0$}. From the recursion formula \eqref{6III23.w8} of $\overadd{i}{q}_{AB}$, $i\geq 2$, and the gauge transformation \peqref{23IV23.1} below, of
$\qh_{AB}^{(k)}$, the dependence on $\xi^u$ only comes from the term $\overadd{i-1}{\psi}\ofP \, \qh^{((n-5-2i)/2)}_{AB}$. This gives for $i=\frac{n-3}{2}$, $i\geq 2$ (and hence $n>5$),
\index{L@$\hLop$}%
\begin{align}
    \int_{\secN}  \overadd{\frac{n-3}{2})}{\kerpsi}{}_{AB}
     &
      \overadd{\frac{n-3}{2}}{q}{}^{AB}  \sm
    \rightarrow
    \int_{\secN} \overadd{\frac{n-3}{2}}{\kerpsi}{}^{AB} \overadd{\frac{n-3}{2})}{q}_{AB} \sm
    +\int_{\secN} \overadd{\frac{n-3}{2}}{\kerpsi}{}^{AB} {\hLop_{n}}(\xi^u)_{AB} \sm
    \,,
    \label{24IV23.1}
\end{align}
where
 \ptcheck{4X23}
\index{Ln@$\hLop_{n}$}%
\begin{align}
	\hLop_{n}(\xi^u)_{AB}:=
 - \frac{n-4}{(n-5)(n-2)^2} \overadd{\frac{n-5}2}{\psi}\ofP \, \bigg( (n-1) P - 2 (n-2)^2 \myGauss
		\bigg) \TS[\zspaceD_A\zspaceD_B \xi^u]
    \,. 
    \label{24IV23.1xs}
\end{align}
%
In the case $n=5$ we have $i=\frac{n-3}{2}=1$, so we must use the gauge transformation formula \eqref{7III23.w1} of $ \overadd{1}{q}_{AB}= q_{A B}$. This leads to
 \ptcheck{4X23}
\begin{equation}\label{6VI23.f2}
	\hLop_{5}(\xi^u)_{AB}:= -\left(\frac{2}{9}P-\myGauss\right) \TS[\zspaceD_A\zspaceD_B \xi^u]\,.
\end{equation}

It follows from \eqref{24IV23.1} and the definition \eqref{17X23.1} of $\kQ{3}{}$  that \underline{when $m=0$ and for any $\alpha$}, under a gauge transformation, we have%
\index{Q@$\kQ{3}{}$}%
\index{gauge transformation law!Q@$\kQ{3}{}$}%
\index{Q@$\kQ{3}{}$!gauge transformation law}%
\begin{align}
\label{17X23.2}
    \kQ{3}{} \rightarrow \kQ{3}{} + \mrL   \circ \hLop_{n}(\xi^u)\,.
\end{align}

Recall that $\alpha$ has the same   dimension as $r^{-1}$, and that the fields $\overadd{\frac{n-3}{2}}{q}_{AB}$, and $\xi^u$  both have the same   dimension  as $r$, while field $\xi_A$ is dimensionless. Since $\alpha$ appears in the gauge transformations (cf.\ Section~\ref{ss11III23.1}) 
and in the recursion formulae (cf.\ Section~\ref{sec6III23.1}) with positive powers only,
we see that there will be no additional $r$-independent contributions to \eqref{6VI23.f2} and \eqref{17X23.2} \underline{when $\alpha\neq 0$}. 

For later convenience, we define%
\index{Ln@$\underline{\hLop}_{n}$}%
\begin{equation}
    \underline{\hLop}_{n}(\xi_A)
    := \begin{cases}
        -\left(\frac{2}{9}P-\myGauss\right) \circ\, C(\xi) \,, & n=5 
         \\
        - \frac{n-4}{(n-5)(n-2)^2} \overadd{\frac{n-5}2}{\psi}\ofP \, \bigg( (n-1) P - 2 (n-2)^2 \myGauss
		\bigg) \circ\, C(\xi)
    \,,
    & n>5\,,
    \end{cases}
     \label{11V23.21}
\end{equation}
so that
\begin{equation}\label{5XI23.31}
    \hLop_{n}(\xi^u) \equiv  \underline{\hLop}_n(\zspaceD_A\xi^u)
 \,.
\end{equation}
Clearly, $\CKV\in \ker \underline{\hLop}_n$.

\paragraph{Transformation when $i=\frac{n-1}{2}$.} A similar analysis as the above for $i=\frac{n-1}{2}$ with {$n\geq5$} gives
\begin{align}
    \int_{\secN} (\overadd{\frac{n-1}2}{\kerpsi})^{AB} \overadd{\frac{n-1}2}{q}_{AB}  \sm
    \rightarrow &
    \int_{\secN} (\overadd{\frac{n-1}2}{\kerpsi})^{AB} \overadd{\frac{n-1}2}{q}_{AB} \sm
    + \int_{\secN} (\overadd{\frac{n-1}2}{\kerpsi})^{AB}
    \Lop(\xi)_{AB}
    \sm
    \,,
    \label{24IV23.2}
\end{align}
where
\index{Ln@$\Lop$}%
\begin{align}
    \Lop(\xi)_{AB} &:= -\overadd{\frac{n-5}2}{\psi}\ofP \, \bigg[ \tfrac{\mathcal{K}_n\ofP}{(n-3)(n-1)}
        \big(
             2(2-n)(2P+(1-n)\myGauss) C(\xi)_{AB}
            + 4 \TS[\zspaceD_A\zspaceD_B\zspaceD_C\xi^C]
        \big)\bigg]
    \nonumber
\\
   & \qquad\qquad\qquad
   +{\frac{1}{n-1} 	\hLop_{n}(\zspaceD_C\xi^C)_{AB}}
    \,,
    \label{24IV23.3}
\end{align}
with%
\index{K@$\mathcal{K}_n\ofP$}%
\begin{align}
    \mathcal{K}_n\ofP := \begin{cases}
        \overadd{1}{\psi}\ofP \,, & n = 5 \\
        \ck{-1}{\ofPnoP} \,, & \text{otherwise\,.}
    \end{cases}
\end{align}

\begin{remark}
 \label{R16X23.1}
 {\rm
 \ptcheck{16X23}
 For further use we note that \eqref{24IV23.3} can be simplified to
 \begin{equation}\label{5XI23.41}
    \Lop = \overadd{\frac{n-5}{2}}{\psi}\ofP \, \underline{\Lop}
 \,,
 \end{equation}
 with $\overadd{0}{\psi}:=1$ and
\index{Ln@$\underline{\Lop}$}%
\begin{align}
     \underline{\Lop}(\xi)_{AB} &=
     \begin{cases}
        \frac{1}{8} (\TSzlap +2 \tric-4 \myGauss ) (\TSzlap +2 \tric-6 \myGauss )  C(\xi)_{AB}
        &
        \\
        \qquad\qquad\qquad
        - \frac{1}{6}  (\TSzlap + 2 \tric - 5 \myGauss ) \TS[\zspaceD_A\zspaceD_B\zspaceD_C\xi^C] \,, & n=5
        \\
        \frac{1}{(n-1)(n-5)}
     \bigg( (\TSzlap  +2 \tric -2 (n-2) \myGauss) (\TSzlap +2 \tric +(1-n) \myGauss)
     C(\xi)_{AB} &
     \\
     \qquad\qquad\qquad
     -\frac{2 (n-3)}{n-2} (\TSzlap +2 \tric +(5-2 n) \myGauss )
     \TS[\zspaceD_A\zspaceD_B\zspaceD_C\xi^C]
     \bigg) \,, & n\neq 5\,.
     \end{cases}
     \label{30VI.1}
\end{align}
}
Clearly (cf.\ Lemma \ref{L21X23.1}), we have $\CKV\in \ker\underline{\Lop}$.
\qed
\end{remark}

It follows from \eqref{24IV23.2} and the definition of $\kQ{4}{}$ in \eqref{17X23.4} that \underline{when $m=0$ and for any $\alpha$}, under a gauge transformation  we have%
\index{Q@$\kQ{4}{}$}%
\index{gauge transformation law!Q@$\kQ{4}{}$}%
\index{Q@$\kQ{4}{}$!gauge transformation law}%
\begin{align}
\label{17X23.6}
    \kQ{4}{} \rightarrow \kQ{4}{} + \zspaceD^A\Lop(\xi)_{AB}\,.
\end{align}
This is justified by a similar argument to that below \eqref{17X23.2} when $\alpha\neq 0$.

\paragraph{Transformation when $i=\frac{n+1}{2}$.} Returning to our main line of thought, we continue with the case $i=\frac{n+1}{2}$, $ n\geq5$, which gives
\begin{align}
    \int_{\secN}
     &
     (\overadd{(n+1)/2}{\kerpsi})^{AB}
     \overadd{(n+1)/2}{q}_{AB}  \sm
    \rightarrow
    \int_{\secN} (\overadd{(n+1)/2}{\kerpsi})^{AB} \overadd{(n+1)/2}{q}_{AB} \sm
    \nonumber
    \\
    &
    + \int_{\secN} (\overadd{(n+1)/2}{\kerpsi})^{AB}
  \Big(
    \Lop(\partial_u\xi)_{AB}
    +
    \big(n-\frac{2}{n}-\frac{2}{n-1}\big)
   \underbrace{
     \overadd{\frac{n-1}2}{\psi}\ofP C(\partial_u\xi)_{AB}
     }_{=0}
    \Big)
   \,   \sm
    \,,
    \label{24IV23.4}
\end{align}
where the underbraced term arises from \peqref{15V23.1} and
vanishes by Proposition~\pref{P9X23.1} below.

 \FGp{5X23}

\paragraph{Transformation when $i>\frac{n+1}{2}$.} Finally for $i=\frac{n+3}{2}+j$ with $j\geq 0$, $n\geq5$, we have
\begin{align}
    \int_{\secN} (\overadd{(n+3)/2+j}{\kerpsi})^{AB} \overadd{(n+3)/2+j}{q}_{AB}  \sm
    \rightarrow &
    \int_{\secN} (\overadd{(n+3)/2+j}{\kerpsi})^{AB} \overadd{(n+3)/2+j}{q}_{AB} \sm
    \nonumber
    \\
    &
    + \int_{\secN} (\overadd{(n+3)/2+j}{\kerpsi})^{AB}
   \Lop(\partial^{j+2}_u\xi)_{AB}
    \sm
    \,.
    \label{25IV23.1}
\end{align}

\paragraph{The fields $\protect\overadd{i}{\Hf}_{uA}$.} We continue with an analysis of the fields $\protect\overadd{i}{\Hf}_{uA}$. We shall work out  the $r$-independent parts of the associated gauge transformations, which is relevant for gluing. Recall that, \underline{for any $m$ and $\alpha$}, the field $\protect\overadd{1}{\Hf}_{uA}$ is given by (see \eqref{6III23.w4})
\begin{equation}\label{2VI23.31}
  \overadd{1} H_{uA} = \partial_u \overadd{0} H_{uA} - \zspaceD^B \qh^{(-4)}{}_{AB}
   \,.
\end{equation}
After making use of the gauge transformation law \eqref{28II23.1} of $\overadd{0}{\Hf}_{uA}$, and that of $\qh_{AB}^{(-4)}$ in \eqref{23IV23.1}, we  obtain, \underline{for any $m$ and $\alpha$}, 
\FGp{ 14VI23}
\begin{align}
    \overadd{1}{\Hf}_{uA} \rightarrow 
    \overadd{1}{\Hf}_{uA}
    &
     + n \partial_u^2 \xi_A + \alpha^2 n \underbrace{\big(\frac{1}{n-1} \zspaceD_A \zspaceD_B \xi^B + \frac{2}{n+1} \zspaceD^B C(\xi)_{AB}\big)}_{=: \overadd{1,1}{D}(\xi)_A}
 \nonumber
\\
 &    +
    \text{ $r$-dependent terms}
    \,.
    \label{25V23.2}
\end{align}
Making use of \eqref{6III23.w4}, it can be shown inductively that for $i\geq 1$ and \underline{any $m$ and $\alpha$},%
\index{D@$\overadd{i}{D}(\partial^{i-1}_u\xi)$}
\FGp{ 14VI23:}
\begin{align}
    \overadd{i}{\Hf}_{uA} &\rightarrow \overadd{i}{\Hf}_{uA}
    + n \partial^{i+1}_u \xi_A
    + \alpha^2 n \underbrace{
    \Big( \overadd{1}{D} + \sum_{j=1}^i \cka{-(j+3)} \hck{-(j+2)}{\TSzlap,\zdivtwo\circ\, C}
    \Big)(\partial^{i-1}_u\xi)_A
    }_{=:\overadd{i,1}{D}(\partial^{i-1}_u\xi)_A}
    \nonumber
\\
    &\quad
    + O(\alpha^4)
    +
    \text{ $r$-dependent terms}
    \,.
    \label{26V23.1}
\end{align}
\seccheck{11XII23} 
The $O(\alpha^4)$ terms depend on the fields $\xi^u$ and $\partial_u^j\xi^A$ with $1\leq j \leq i-3$,
and their explicit form is not needed for the arguments that follow. 
 
\subsection{Summary on regularity}
\label{ss26IV24.1}
From what has been said so far about the regularity of the fields, assuming
\begin{align}
    h_{AB} \in \CrHk\,,
    \label{23VI24.1}
\end{align}
the following regularity at $r=r_1$ will be preserved by the various transport equations along $\mcN$: for $0\le i \le \kgamma /2-1$,
\begin{align}
  \partial_u^i \partial_r h_{uA}, \partial_u^i h_{uA} &\in \Hkmi
  \\
  \partial_u^i\delta\beta\,, \partial_u^i h_{AB} &\in
  \Hkmmi
  \,, \quad
  \partial_u^i \delta V \in H^{k_{\gamma} - 2i - 2}(\secN)\,.
\end{align}
In addition these will be preserved under gauge transformations by the following regularity of the gauge fields: for $0\le i \le \kgamma /2-1$,
\begin{equation}
  \partial_u^i\xi^u\in  H^{k_{\gamma} - 2i +2}(\secN)
  \,,
  \
  \partial_u^i \xi^A\in   H^{k_{\gamma} - 2i +1}(\secN)
  \,.
  \label{23IV24.2}
\end{equation}
Recall the definition of  linearised Bondi sphere data from \eqref{23III22.992}:
\begin{align}
  x&= (\partial_u^{j}h_{AB}|_{\secN}
  ,\,  \partial_r^jh_{AB}|_{\secN}
  ,\,  \partial_u^{j}\delta\beta|_{\secN},
  \, \partial_u^{j}\delta U^A|_{\secN}
  ,\, \partial_r \delta U^A|_{\secN}
  ,\,   \delta V|_{\secN}
  )_{0\le j\leq k} \in \dt{\secN}
  \,.
\end{align}
In view of the above, we define the following spaces for linearised Bondi sphere data and gauge fields:
\index{H@$\Hkdt$}%
\index{H@$\Hkzeta$}%
\begin{align}
\Hkdt &:= \prod_{j\in[0,k]}
    \bigg( H^{k_{\gamma}-2j}(\secN)
     \times
     \left\{
       \begin{array}{ll}
           H^{k_{\gamma}-j}(\secN), & \hbox{$j=0$} \\
           H^{k_{\gamma}+1-j}(\secN), & \hbox{$j>0$}
       \end{array}
     \right.
    \times
    H^{k_{\gamma}-2j}(\secN) \times
    H^{k_{\gamma}-2j-1}(\secN)\bigg)
    \nn
    \\
    &\qquad \qquad \qquad
    \times
    H^{k_{\gamma}-1}(\secN) \times
    H^{k_{\gamma}-2}(\secN) \,,
    \\
    \Hkzeta :&= \prod_{j\in[0,k+1]}H^{k_{\gamma}-2j+1}(\secN)\times H^{k_{\gamma}-2j+2}(\secN) \,.
     \label{10V24.11}
\end{align}
 We note that the gauge transformation map $\Gmap$ of \eqref{12V24.1}, 
\index{G@$\Gmap$}%
\begin{align}
\tilde x = \Gmap(x)
   \,,
\end{align}
is a linear map preserving the regularity of the given data:
\begin{equation}
   \mbox{if} \ (x,z)\in \Hkdt\times\Hkzeta
   \ \mbox{then}
   \
   \Gmap(x)
   \in 
    \Hkdt\,.
     \label{14V24.1}
\end{equation}

\section{Gluing up to gauge}
 \label{s12I22.1}

We now present a scheme for gluing, up to residual gauge,   the linearised fields
\begin{equation}\label{19XII22.1}
\{\hBo_{\mu\nu},
 \partial_u\hBo_{\mu\nu}\,, \ldots\,, \partial_u^\bluek \hBo_{\mu\nu}\}
\end{equation}
in Bondi gauge,   with $2\le \bluek  < \infty$.
 We will assume for simplicity that each of the fields $ \partial_u^i \hBo_{\mu\nu}\big|_{\{u=0\}}$, $0\le i \le \bluek $, is smooth. The collection of fields of this differentiability class will be denoted by $\Ck$. The case of finite Sobolev regularity will be discussed in the next section.

The formulation of the problem follows that of~\cite[Section 4]{ChCong1}, we repeat it here for
the convenience of the reader: Let $0\le r_0<r_1<r_2<r_3\in \R $.
Consider two sets of vacuum linearised gravitational fields  in Bondi gauge
 of, say for simplicity, $\Ck$-differentiability class,
  defined in spacetime neighborhoods of $\mcN_{(r_0,r_1]} $ and $\mcN_{[r_2,r_3)} $. Denote by $\secN_{1}$ the section of $\mcN_{(r_0,r_1]}$ at $r=r_1$. The linearised gravitational field near $\mcN_{(r_0,r_1]}$
 induces an element, say  $x_{1}$, of the  set $\dt{\secN_1}$ of Bondi cross-section data (cf.\ \eqref{23III22.992}).
 Similarly, we denote by $\secN_{2}$ the section of $\mcN_{[r_2,r_3)}$ at $r=r_2$ and the induced  data by $x_2\in \dt{\secN_2}$.
  To take into account gauge transformation, $\tilde{\secN}_1$ (resp. $\tilde{\secN}_2$) will denote the cross-section obtained by gauge-transforming $\secN_{1}$ (resp. $\secN_2$).
  The associated Bondi data is denoted by $\tilde x_1 \in \dt{\tilde{\secN}_1}$
  (resp. $\tilde x_2 \in \dt{\tilde{\secN}_2}$), and the outgoing null hypersurface on which it lies by $\tmcN _{(r_0,r_1]}$ (resp.
 $\tmcN_{[r_2,r_3)}$).

 The  goal
is to glue $\tilde x_1$ and $\tilde x_2$
 along $\tmcN _{[r_1,r_2] }$ so that the resulting linearised field on $\tmcN _{(r_0,r_3)} $ provide smooth characteristic data for Einstein equations
 together.  Indeed, we will show that
a  $\Ck$-linearised vacuum data set on $\mcN_{(r_0,r_1]}$ can be smoothly glued to another  such set on $\mcN_{[r_2,r_3)}$ up to gauge if and only if the obstructions listed in Tables~\ref{T11III23.2}-\ref{T11XII23.1} in the Introduction  are satisfied.

Let $\interph _{AB}$ be any symmetric traceless tensor field  defined on a neighbourhood of $\mcN_{[r_1,r_2]}$ which interpolates between the original fields $ \hBo_{AB}|_{\mcN_{(r_0,r_1]}}$  and $  \hBo_{AB}|_{\mcN_{[r_2,r_3)}}$,
so that the resulting field on $\mcN_{(r_0,r_3)}$ is as differentiable as the original fields.
When attempting a $\Ck$-gluing, we can add to $\interph _{AB} $  a field $\wh_{AB}|_{[r_1,r_2]}$
which vanishes  smoothly
(i.e.\ together with  $r$-derivatives of all orders)
at the end cross-sections $\{r_1\}\times \secN$ and  $\{r_2\}\times \secN$ without affecting the smoothness of $h_{AB}$.
To take into account the gauge freedom, let $\phi(\tdr)\ge 0$ be a function which equals $1$ near $\tdr=r_1$ and equals $0$ near $\tdr=r_2$. Let $\kxi{1}{}^u$ and $\kxi{1}^A$ be gauge fields which will be used to gauge the metric around $\mcN_{(r_0,r_1]}$, and let $\kxi{2}{}^u$ and $\kxi{2}{}^A$ be  gauge fields which will be used to gauge the metric around $\mcN_{[r_2,r_3)}$. For $r_1\le \tdr \le r_2$ we set%
\index{v@$\interph_{AB}$}%
\index{w@$\wh_{AB}$}
\index{h@$\tilde\hBo_{AB}$}%
\begin{equation}\label{16III22.2old}
  \tilde \hBo_{AB} = \interph _{AB}+ \wh_{AB}
   +  \phi   \tdr^2 \TS[
	\TSoLie_{\kzeta{1}} \ringh_{AB} ] + (1-\phi )  \tdr^2 \TS [
	\TSoLie_{\kzeta{2}} \ringh_{AB} ]
  \,.
\end{equation}
(Recall that $\TSxip^A=\xi^A -\zspaceD^A \xi^u/r$, cf.\ \eqref{1VIII22.1HD}.)

In the gluing problem, the gauge fields evaluated on $\tilde{\secN}_{a}\,, a=1,2$ and the field $\wh_{AB}$ on $\tmcN _{(r_1,r_2)}$ are \textit{free fields} which can be chosen arbitrarily. Our aim in what follows is to show how to choose these fields to solve the transport equations of Section \ref{sec:28VII22.1} and Section \ref{sec6III23.1} to achieve gluing-up-to-gauge.

For the $\Ck$-gluing we will need smooth functions%
\index{kappa@$\kappa_i$}%
\index{iota@$\iota_{\alpha,m}$}%
\begin{align}
\label{26IV24.1}
 \kappa_i:(r_1,r_2)\rightarrow\R
 \,,
 \quad
    i\in \iota_{\alpha,m} :=
    \{k_{[\alpha]},k_{[\alpha]}+ \frac 12 ,k_{[\alpha]}+1 ,\ldots,k_{\red{[m]}}+4\} \subset \frac 12 \Z 
    \,,
\end{align}
where
\begin{align}
    k_{[\alpha]} := \begin{cases}
        4-n \,, & \alpha = 0
        \\
        \min(4-n,\tfrac{7-n-2k}{2}) \,, &\alpha \neq 0
    \end{cases}
    \,,\qquad
    k_{[m]} := \begin{cases}
        k \,, & m = 0
        \\
        k(n-1) \,, & m \neq 0
    \end{cases}
    \,,
\end{align}
and with $\kappa_i$'s satisfying%
\index{k@$k_{[m]}$}%
\index{k@$k_{[\alpha]}$}%
\index{kappa@$\hat{\kappa}_i$}%
\begin{eqnarray}
  &
  \displaystyle
 \ip{\kappa_i}{\hat \kappa_j} \equiv \int_{r_1}^{r_2} \kappa_i(s) \hat \kappa_j(s) \, ds = \delta_{ij}
  \,,
  \ 
 \mbox{where} \ \hat{\kappa}_i(s) :=s^{-i}\,,
  &
  \label{13VIII22.2a}
\end{eqnarray}
and vanishing   near the end points $r\in\{r_1,r_2\}$, which is possible since the $\hat{\kappa}_i$'s are linearly independent.
The existence of such functions is standard, as we will work in a space where only a finite number of the $\kappa_i$'s is needed: Indeed, let $\phi\ge0 $, $\phi \not\equiv 0$, be any smooth  function compactly supported in $(r_1,r_2)$. For $i,j\in \iota_{\alpha,m}$  define
\begin{equation}\label{24VI24.1}
  A_{ij} = \int_{r_1}^{r_2} \phi(s)\hat \kappa_i(s)\hat\kappa_j(s) \, ds
  \,.
\end{equation}
The matrix $A_{ij}$ is symmetric, and positive definite since the $\hat \kappa_i$'s are linearly independent:
\begin{equation}\label{24VI24.2}
  A_{ij}x^i x^j  = \int_{r_1}^{r_2} \phi(s)(\hat\kappa_i(s)x^i)^2 \, ds >0
  \,.
\end{equation}
Hence its inverse, say $A^{ij}$, exists. Using the summation convention, set
\begin{equation}\label{24VI24.4}
     \kappa_i (s) = A^{ik} \phi(s) \hat\kappa_k(s)
     \,.
\end{equation}
Then 
\begin{equation}\label{24VI24.3}
   \int_{r_1}^{r_2}  \kappa_i(s)\hat\kappa_j(s) \, ds
   = A^{ik} \underbrace{
   \int_{r_1}^{r_2} \phi(s)\hat\kappa_k(s)\hat \kappa_j(s) \, ds
   }_{A_{kj}}
   = \delta_{ij}
  \,,
\end{equation}
as desired.

The fields $ \wh_{AB}$ of \eqref{16III22.2old} will be taken of the following form: for  $s\in[r_1,r_2]$,%
\index{phii@$\vphi{i}_{AB}$}
\index{w@$\wh_{AB}$}%
\begin{align}
   \wh_{AB}(r) = \sum_{i\in \iota_{\alpha,m}}\kappa_i(r)
     \vphi{i}_{AB}
   \,.
   \label{27VII22.1a}
\end{align}
We also define
\index{phij@$\kphi{i}_{AB}$}%
\begin{equation}
    \kphi{i}_{AB}\equiv \ip{\hkappa_i}{\wh_{AB}}\,.
   \label{27VII22.1a+}
\end{equation}
In the following, we will show how to determine the fields $\kphi{i}_{AB}$, which we will sometimes refer to as the ``interpolating fields''. The un-hatted fields $\vphi{i}_{AB}$ can be obtained from $\kphi{i}_{AB}$ by solving a linear system of equations, see~\cite[Equation (4.7)]{ChCong1}.
\ptclater{crosscheck the crossrefence when updating or preparing for publication}

We will use the York decomposition of the fields  $ \vphi{i}_{AB}$:%
\index{w@$\TTtpvec{p}$}%
\index{phii@$\vphi{i}_{AB}$!York decomposition}%
\begin{equation}\label{12III23.1}
   \vphi{i}_{AB} = \vphi{i}{}{}^{[\TTtp]}_{AB} +  \vphi{i}{}^{[\TTt]}_{AB}
   \equiv C(\TTtpvec{i})_{AB}+  \vphi{i}{}^{[\TTt]}_{AB}
   \,,
\end{equation}
where $\vphi{i}{}^{[\TTt]}_{AB}$ is transverse and traceless,
\begin{equation}\label{12III23.2}
   \zspaceD^A \vphi{i}{}^{[\TTt]}_{AB} = 0 = \zzhTBW^{AB} \vphi{i}{}^{[\TTt]}_{AB}
   \,,
\end{equation}
and $\TTtpvec{i}$ a uniquely defined vector field which is $L^2$-orthogonal to the space of conformal Killing vector fields of $\zmetric$;
similarly for $\kphi{i}{}_{AB}$, etc.


To achieve linearised characteristic $\Ck$-gluing in the $\delta \beta = 0$ gauge,
 it is sufficient to find a 
smooth interpolation on $\tmcN$ of the field $\tilde h_{AB}$ and a  continuous interpolation  of
the fields
\begin{align}
    \{
   \tilde\chi,\overadd{*}{\tilde\Hf}_{uA}, \overadd{0}{\tilde\Hf}_{uA}, \overadd{\ell}{\tilde\Hf}_{uA}, \overadd{\ell}{\tilde q}_{AB}
    \}_{\ell=1}^{k}
    \,,
    \label{4V23.1}
\end{align}
where we remind the reader that $\tilde{h}$ denotes the gauge-transformed counterpart of a field $h$. We will justify this shortly.

We consider first the case where the pair $(n,k)$ is convenient with \underline{$m=0=\alpha$}, because it contains most of the main ideas without
the technical complications which come with the remaining cases.
Then the transport equation for a field $\chi$,  $H_{uA}$ or $q$
 appearing in \eqref{4V23.1} generally takes the form
  \begin{align}
      \partial_r \Hf = r^{q} \hat{D}(h_{AB})
      \,,
      \label{5V23.2}
  \end{align}
for some $q\in \Z/2$ and for a  linear differential operator $\hat{D}$ in the $x^A$-variables acting on $h_{AB}$;
we list the exponents $q$ and the operators $\hat{D}$ for the various fields in Table~\ref{T6V23.1}.\ptcheck{2VI23}%
\begin{table}[t]
{\small
\hspace{-.9cm}
 \begin{tabular}{||c|c|c|c|c|c||}
  \hline
  \hline
  &&&&&\vspace{-0.3cm}
  \\
     Field
        & Defined in
            & Exponent $q$
                & Gluing operator $\hat D$
                    & Gauge $\myhatopL(\hat\xi)$
                        & Equation
     \\
  \hline
    $\red{\overadd{*}{\Hf}_{uA}}$
    \checkmark
      & \peqref{24VII22.1b}
        &   $n-4$
                & $ \zdivtwo$
                     & 
                     no gauge
                        &   \peqref{7III23.4}
  \\
  \hline
    $  \overadd{0}{H}_{uA}$
    \checkmark
     & \peqref{21II23.1}
        & $-4$
                & $\zdivtwo$
                     & $ \partial_u \kxi{2}_A$
                        &\peqref{7III23.7}
  \\
  \hline
    $\overadd{p}{\Hf}_{uA}$, $ p\ge 1$
     \checkmark
      & \peqref{6III23.w4}
        &  $-p-4$
                &  $\zspaceD^B \overadd{p}{\chi}\ofP  $ (\peqref{6III23.w2})
                     &  $
       \partial_u^{p+1} \kxi{2}_A$
                        &    \peqref{11III23.1}
  \\
  \hline
   $ \chi$
    \checkmark
     &\peqref{20II23.2}
        & $n-5$
            & $\zdivone\circ\zdivtwo$
                & 
                no gauge
                        &  \peqref{7III23.1}
\\
  \hline
   $ \overadd{i}{q}_{AB}$, $n$ even,
    \checkmark
      & \peqref{6III23.w8}
         &  $\frac{n-7-2i}{2}$
             &  $\overadd{i}{\psi}\ofP$ (\peqref{6III23.w9})
                 & no gauge
                        & \peqref{11III23.2}
\\
     or $n$ odd, $i<\frac{n-3}{2}$
        &
            &
                &
                    &
                        &
\\
  \hline
   $ \overadd{i}{q}_{AB}$, $n$ odd,
        &
            &
                &
                    &
                        &
\\
     $i=\frac{n-3}{2}$
        &
            & $-2$
                & $\overadd{\frac{n-3}{2}}{\psi}\ofP$
                    & $\hLop_n(\xi^u)$
                        & \peqref{24IV23.1xs}
\\
     $i=\frac{n-1}{2}$
        &
            & $-3$
                & $\overadd{\frac{n-1}{2}}{\psi}\ofP$
                    & $\Lop(\xi)$
                        & \peqref{24IV23.2}
\\
     $i=\frac{n+1}{2}+j$
        & \peqref{17IV23.3}
            & $-j-4$
                &  $\overadd{\frac{n+1}{2}+j}{\psi}\ofP$
                    & $\Lop(\partial^{j+1}_u\xi) $
                        & \peqref{25IV23.1}
  \\
  \hline
  \hline
\end{tabular}
}
\caption{The operators appearing in \eqref{5V23.2} and \eqref{6V23.p1}.
The exponent $q$ of the third column refers to the exponent of $r$ in \eqref{5V23.2}.  The operator $P$ is defined in \peqref{16V22.1b}.
Radially conserved charges arise from the kernel of the gluing operator $\hat D$ in the case $m=\alpha=0$, whenever non-trivial.
}\label{T6V23.1}
\end{table}
 Integrating \eqref{5V23.2} from $r_1$ to $r_2$ gives
  \begin{align}
       \Hf|_{\secN_2} - \Hf|_{\secN_1} = \int_{r_1}^{r_2} s^{q} \hat{D}( h_{AB}(s)) \, ds\,;
      \label{5V23.2b}
  \end{align}
this integrated transport equation tells us how to choose $h_{AB}$ along $\mcN_{[r_1,r_2]}$ to achieve continuous gluing of $\Hf[x_1]$ with $\Hf[x_2]$.

However, it follows from \eqref{5V23.2} that $\Hf{}^{[\ker \hat{D}^\dagger]}$ is a radially conserved charge, since for any
  \begin{equation*}
      \mu \in \ker \hat{D}^\dagger \,,
  \end{equation*}
we have
\begin{align}
    \partial_r \int_{\secN_r} \mu \Hf \sm
     =
      r^{q} \int_{\secN_r} \mu \hat{D}( h_{AB}) \sm
      =
       r^{q} \int_{\secN_r}\big( \hat{D}^\dagger \mu \big){}^{AB}h_{AB} \sm
       = 0\,.
\end{align}
 This shows that no choice  of $h_{AB}$ along $\mcN_{[r_1,r_2]}$ can  achieve the  gluing of $\Hf{}^{[\ker \hat{D}^\dagger]}$. Therefore, a necessary condition  to achieve characteristic gluing-up-to-gauge is the existence of a gauge transformation so that
\begin{equation}
    \widetilde\Hf{}^{[\ker \hat{D}^\dagger]}|_{\tilde\secN_2} -  \widetilde\Hf{}^{[\ker \hat{D}^\dagger]}|_{\tilde\secN_1} = 0\,,
    \label{5V23.1}
\end{equation}
where $\widetilde{\Hf{}}$ denote the gauge-transformed counterpart of $\Hf{}$.

Let us denote  by $\myhatopL$ the linear differential operator occurring in the formula for the gauge-transformation of the charge associated with $\Hf$:
\begin{equation}
     \int_{\secN_2} \mu \Hf \sm \rightarrow \int_{\tilde\secN_2} \mu (\Hf
      + \myhatopL(\hat \xi) )\sm
     \,,
      \label{6V23.p1}
\end{equation}
where $\hat \xi$ stands for a gauge field from  the collection $\{\partial^i_u\kxi{2}{}^A, \kxi{2}{}^u\}_{i=0}^{k+1}$.

We will check that for all the  operators $ \myhatopL$ and $\hat D$ occurring  in Table~\ref{T6V23.1} we have (see Appendix~\ref{s25X23.1-})
\begin{equation}\label{6V23.1}
 \fbox{$(\ker \myhatopL^\dagger)^\perp = \im \myhatopL$, similarly $(\ker \hat{D}^\dagger)^\perp = \im \hat D$.
 }
\end{equation}
Then for all $\mu\in(\ker \hat{D}^\dagger)\cap(\ker \myhatopL^\dagger)$ it holds that
\begin{align}
    \int_{\secN_2} \mu \tilde\Hf \sm =\int_{\secN_2} \mu (\Hf
     + \myhatopL(\hat \xi) )\sm = \int_{\tilde\secN_2} \mu \Hf \sm\,.
\end{align}
In other words, the $L^2$-orthogonal projection  $\Hf{}^{[(\ker \hat{D}^\dagger)\cap(\ker \myhatopL^\dagger)]}$ of  $\Hf$
is gauge-invariant and the fields $\Hf|_{\secN_1}$ and $\Hf|_{\secN_2}$ cannot be continuously glued-up-to-gauge unless
\begin{equation}\label{15XII23.1}
\fbox{$\Hf{}^{[(\ker \hat{D}^\dagger)\cap(\ker \myhatopL^\dagger)]}|_{\secN_1}=\Hf{}^{[(\ker \hat{D}^\dagger)\cap(\ker \myhatopL^\dagger)]}|_{\secN_2} \,.
$
}
\end{equation}
We shall refer to such projections as \textit{gauge-invariant obstructions} to gluing.

Thus:
\begin{itemize}
    \item[(i)] The $L^2$-orthogonal projection  to
    $$
     \ker (\hat{D}^\dagger)\cap\im (\myhatopL )
     =
     \ker (\hat{D}^\dagger)\cap\ker (\myhatopL^\dagger)^\perp
     $$
      of the gauge-transformed version of the condition \eqref{5V23.1} can be by solving the equation
    \begin{align}
        \myhatopL(\hat\xi^{[(\ker \myhatopL)^\perp]}) = \Hf{}^{[(\ker \hat{D}^\dagger)\cap(\ker \myhatopL^\dagger)^\perp]}|_{\secN_1} - \Hf{}^{[(\ker \hat{D}^\dagger)\cap(\ker \myhatopL^\dagger)^\perp]}|_{\secN_2} \,,
    \end{align}
    for a unique field $\hat\xi^{[(\ker \myhatopL)^\perp]}$.
    \item[(ii)] The projection onto $\im \hat D =(\ker \hat{D}^\dagger)^\perp$  of the gauge-transformed version of the condition \eqref{5V23.2b} results in the equation
    \begin{align}
    \int_{r_1}^{r_2} s^q \hat D (\tilde h_{AB}(s)) ds = \widetilde \Hf{}^{[(\ker \hat{D}^\dagger)^\perp]}|_{\secN_2} - \widetilde \Hf{}^{[(\ker \hat{D}^\dagger)^\perp]}|_{\secN_1}  \,,
     \label{6V23.2}
    \end{align}
    which can be solved for a unique field $\kphi{-q}{}^{[(\ker \hat{D})^\perp]}_{AB}$.
\end{itemize}
To summarise: we solve the integrated transport equation \eqref{5V23.2b} up to gauge by projecting it onto three parts. Then the part

\begin{enumerate}
  \item   $\ker \hat D^\dagger \cap \ker \myhatopL^\dagger$ gives a gauge-invariant obstruction;
  \item   $\ker \hat D^\dagger \cap (\ker \myhatopL^\dagger)^\perp$ is solved by the gauge field $\hat\xi^{[(\ker \myhatopL)^\perp]}$;  and
  \item $(\ker \hat D^\dagger)^\perp$ is solved by the interpolating field $\kphi{-q}{}^{[(\ker \hat{D})^\perp]}_{AB}$.
\end{enumerate}

To make this work one needs to isolate the operators involved, establish their mapping properties, and make sure that the powers $r^q$ arising in the integrals \eqref{6V23.2} are distinct for distinct fields $H$ from the collection \eqref{4V23.1}.
As an example, we provide an analysis of all operators involved in the case when $\secN \approx S^{n-1}$ in Appendix~\ref{ss20X22.1}, with the results summarised in Table~\ref{T8VI23.1}.

The strategy of the case with non-vanishing $\alpha$ and/or $m$, and of the inconvenient case, is more involved due to the coupling between various equations, but comes with less obstructions to gluing. These cases will be treated in detail in Section \ref{ss30VI.1} below.

We sketch now the linearised characteristic $\Ck$-gluing. This can be achieved in three steps:
\begin{enumerate}
  \item
  Rather similarly to the four-dimensional case addressed in~\cite{ChCong1}, we solve the integrated transport equations for each field appearing in the set \eqref{4V23.1}. These will determine the gauge fields $\kxi{2}{}_A$ and $\kxi{2}{}^u$ and the interpolating fields $\kphi{p}_{AB}$, as well as the conditions on $x_1\in \dt{\secN_1}$ and $x_2 \in \dt{\secN_2}$ associated with the gauge-invariant-obstructions.
  We list the  fields involved in Tables \ref{T11III23.2}-\ref{T11XII23.1}, and present more details  in the next section. All remaining free fields, such as $\partial_u^p\kxi{1}$, not appearing in these tables are   set to zero. The last columns (denoted as ``obstructions'') in these tables list the gauge-invariant radial charges associated to the indicated fields appearing in the respective rows.  See Table~\ref{T8VI23.1} for an explicit list of the spaces involved in the case $\secN \approx S^{n-1}$, $n\ge 4$, and $\alpha=0=m$.
\begin{table}[t]
\hspace{-2cm}
{\small
 \begin{tabular}{||c|c|c|c|c|c||}
  \hline
  \hline
     Gluing field
        & Gluing operator $\hat D$
            & $\im(\hat D)^{\perp}$
                & Gauge operator $\myhatopL$
                    & $\im(\myhatopL)^{\perp}$
                        & Obstructions
\\
  \hline
   $ \overadd{i}{q}_{AB}$, $n$ even,
      & $\overadd{i}{\psi}\ofP $
         &  trivial
             &  no gauge
                 & no gauge
                        & -
\\
     or $n$ odd, $i<\frac{n-3}{2}$
        &
            &
                &
                    &
                        &
\\
  \hline
   $ \overadd{i}{q}_{AB}$, $n$ odd,
        &
            &
                &
                    &
                        &
\\
     $i=\frac{n-3}{2}$
        & $\overadd{\frac{n-3}{2}}{\psi}\ofP $
            & $\mathbb{S}_{AB}^I$
                & $\hLop_n(\xi^u)$
                    & $\mathbb{V}_{AB}^I$, $\mathbb{T}_{AB}^I$
                        & -
\\
     $i=\frac{n-1}{2}$
        &  $\overadd{\frac{n-1}{2}}{\psi}\ofP $
            & $\mathbb{S}_{AB}^I$, $\mathbb{V}_{AB}^I$
                & $\Lop(\xi^{[\CKVp]})$
                    & $\mathbb{T}_{AB}^I$
                        & -
\\
          $i=\frac{n+1}{2}$
        & $\overadd{\frac{n+1}{2}}{\psi}\ofP $
            & $\mathbb{S}_{AB}^I$, $\mathbb{V}_{AB}^I$
                &  $\Lop(\partial_u\xi^{[\CKVp]})$
                    & $\mathbb{T}_{AB}^I$
                        & -
\\
     $i=\frac{n+1}{2}+p$
        & $\overadd{\frac{n+1}{2}+p}{\psi}\ofP $
            & $\mathbb{S}_{AB}^I$, $\mathbb{V}_{AB}^I$  ,
                &  $\Lop(\partial^{p+1}_u\xi^{[\CKVp]})$
                    & $\mathbb{T}_{AB}^I$
                        & $\overadd{i}{q}{}_{AB}^{[\mathbb{T}^I]}$,
\\
        &
            & and  $\mathbb{T}_{AB}^I$, $\ell\leq p+1$
                &
                    &
                        &  $\ell\leq p+1$

\\
  \hline
    $\red{\overadd{*}{\Hf}_{uA}}$

      & $ \zdivtwo$
        &  $\CKV$
                & no gauge
                     &
                     no gauge
                        &   $\overadd{*}{\Hf}{}_{uA}^{[\CKV]}$
  \\
  \hline
    $  \overadd{0}{H}_{uA}$

     & $\zdivtwo$
        & $\CKV$
                & $ \partial_u \xi^{[\CKV]}_A $
                     & trivial
                        & -
  \\
  \hline
    $\overadd{p}{\Hf}_{uA}$, $ p\ge 1$

      & $\zspaceD^B \overadd{p}{\chi}\ofP  $
        &  $\zspaceD_A\mathbb{S}^I$, $\mathbb{V}^I_{A}$, $1  \leq \ell \leq i+1$
                & $
       \partial_u^{p+1} \xi^{[\CKV]}_A $
                     &  trivial
                        &    $\overadd{p}{\Hf}{}_{uA}^{[2 \leq\ell\leq p+1]}$
  \\
  \hline
   $ \chi$

     & $\zdivone\circ\zdivtwo$
        & $\mathbb{S}^{[\ell = 0,1]}$
            & no gauge
                & no gauge
                    &  $ \chi^{[\ell=0,1]}$
  \\
  \hline
  \hline
\end{tabular}
}
\caption{Summary of gluing operators on $S^{n-1}$ in the case $\alpha=0=m$. The harmonic tensors $\mathbb{S}_{AB}^I$, $\mathbb{V}^I_{AB}$ and $\mathbb{T}^I_{AB}$ appearing in the table refer to  modes as given in \eqref{K1k}-\eqref{K1kT}. The column labeled ``obstructions'' refers to the gauge-invariant radial charges associated to the gluing fields.
 The notation
``$\protect\overadd{i}{q}{}_{AB}^{[\mathbb{T}^I]}$, $\ell\le p+1$''
refers to the projection of $\protect\overadd{i}{q}{}_{AB}$ onto the space spanned by eigentensors $ \mathbb{T}_{AB}^I $ of the Laplacian with eigenvalues smaller than or equal to $\ell(\ell+d-1)-2$, with  $\ell\le p+1$.
}
\label{T8VI23.1}
\end{table}

    \item Once the gauge fields and the fields $\kphi{p}_{AB}$ with $k_{[\alpha]} \leq p \leq k_{\red{[m]}}+4$
    have been determined, we let $\tilde h_{AB}$ be as in \eqref{16III22.2old} and use this to construct the fields 
\begin{align}
    \{
   \chi,\overadd{*}{\Hf}_{uA}, \overadd{0}{\Hf}_{uA}, \overadd{\ell}{\Hf}_{uA}, \overadd{\ell}{q}_{AB}
    \}_{\ell=1}^{k}
    \label{4V23.12}
\end{align}
    on $\tmcN_{[r_1,r_2)}$ using the explicit formulae of Sections
    \ref{sec:28VII22.1} and \ref{sec6III23.1}:
\seccheck{12XII23}

     \begin{enumerate}

     \item[i.] $\partial_u^p \tilde h_{ur}$ for $0\leq p\leq k$:
     We set $\partial_u^p \tilde h_{ur}|_\tmcN \equiv 0$, which guarantees both smoothness of $\tilde h_{ur}$ and the validity of  the equations, for all $i$,
    \begin{eqnarray}
    0
   &  = &
  \partial_u^i \delta \mcE _{rr}  |_\tmcN
  \equiv
  - \partial_u^i \delta \mcE ^u{}_{r}  |_\tmcN
  \equiv
  \partial_u^i \delta \mcE ^{uu}  |_\tmcN
   \,.
   \label{30IX22.11}
    \end{eqnarray}
         \item[ii.]
         $\overadd{*}{\tilde\Hf}_{uA}$ and $\overadd{p}{\tilde\Hf}_{uA}$ for $0\leq p \leq k$: We determine these fields algebraically using the integrated transport equation~\eqref{25VII22.3} and the integrated version of~\eqref{6III23.w1a}. Here and in what follows, all $h_{\mu\nu}$ fields in the representative formulae are understood to be replaced by $\tilde h_{\mu\nu}$'s.
    Now, satisfying the transport equation \eqref{6III23.w1a} for $i=p$
          guarantees that on $\tmcN_{[r_0,r_2)}$ we have
      \begin{equation}\label{15XI22.w1}
          \partial_u^p \delta  \mcE _{rA}|_{\tmcN_{[r_0,r_2)}}  \equiv -  \partial_u^p \delta \mcE ^u{} _{A}|_{\tmcN_{[r_0,r_2)}} = 0
         \,.
      \end{equation}
    It follows that
    \begin{equation}
        \partial^p_u \delta\mcE^A{}_B|_{\blue{\tmcN_{[r_0,r_2)}}} =  g^{AC}\partial^p_u\delta \mcE_{CB}|_{\blue{\tmcN_{[r_0,r_2)}}}\,.
        \label{15XI22.w2}
    \end{equation}
    The divergence identity for the Einstein tensor with a lower index $A$,
\begin{eqnarray}
  0 &\equiv & \nabla_\mu \delta \mcE ^{\mu}{}_A
    \nonumber
\\
 &= &
 r^{-(n-1)}\partial_r(r^{n-1}  \delta \mcE ^r{}_A)
   + \partial_u \delta \mcE ^u{}_A
    + \zspaceD_B \delta \mcE ^B{}_A
 \,,
 \label{2X22.6}
\end{eqnarray}
together with its $u$-derivatives, shows that we also have
\begin{equation}
 \forall \
  0 \le i \le k-1  \quad
   \Big(
  r^{-(n-1)}\partial_r(r^{n-1}  \partial_u^i\delta \mcE ^r{}_A)
    + \zspaceD_B \partial_u^i\delta \mcE ^B{}_A
    \Big)\Big|_{\tmcN_{[r_0,r_2)}}
     =0
 \,.
  \label{1X22.3}
\end{equation}
         \item[iii.]
         $\tilde \chi$ for $0\leq p \leq k$: We use the integrated~\eqref{1XI22.w1} to set the field $\tilde\chi$ on $\tmcN$.
         \item[iv.] $\overadd{p}{\tilde q}_{AB}$ for $1\leq p \leq k$:
         We use the representation formulae obtained by integrating \eqref{6III23.w6} in $r$ to
         determine the field $\overadd{p}{\tilde q}_{AB}$ on $\tmcN$.
         This ensures that
    \begin{equation}\label{15XI22.w4}
  \TS\big(
   \delta \partial^{p-1}_u \mcE _{AB}
    \big)
  \big|_{\tmcN_{[r_0,r_2)}}
   =0
    \,.
    \end{equation}
    We have thus obtained all fields in the set~\eqref{4V23.12}. We emphasis that the continuity of these fields at $r_2$
    is guaranteed by the construction from Step 1. We continue now to determine the other metric components from them.
    \item[v.] $\partial_u^p \tilde\chi$ for $1\leq p \leq k$:
    For $p=1$, the fields $\overadd{*}{\tilde\Hf}_{uA}$, $\overadd{0}{\tilde\Hf}_{uA}$, $\tilde\chi$ and $\overadd{1}{\tilde q}_{AB}$, as determined from items ii., iii., and iv. above, together determine the fields $\tilde h_{uA}$, $\tilde h_{uu}$ and $\partial_u\tilde h_{AB}$ algebraically.
    Using the field $\partial_u\tilde h_{AB}$ thus obtained, we set $\partial_u\tilde\chi$ on $\tmcN$ according to
         \begin{equation}
         \label{20XII23.12}
             \partial_r \partial_u^p\tilde\chi = \frac{(n-3)r^{n-5}}{n-1}\zspaceD^A\zspaceD^B \partial_u^p \tilde h_{AB} \,,
         \end{equation}
         for $p=1$, with the initial conditions (compare \eqref{1III23.1})
    \begin{align}
    \label{1III23.1c}
    \partial_u^p\tilde\chi|_{r_1} &=
    \partial_u^p \chi|_{r_1}
    - \frac{2r_1^{n-3}(n-2)}{(n-1)^2} \TSzlap ( \TSzlap + (n-1) \twoscsign) \partial_u^p\kxi{1}^u\
    \,.
    \end{align}
    Note that \textit{a priori}, the value of $\partial_u\tilde\chi|_{r_2}$ thus obtained may not agree with the gauged data $\dt{\tilde\secN_2}$. However, we will justify this in item 3. below.

    Next, the field $\overadd{1}{\tilde \Hf}_{uA}$ and the equation $\delta \mcE_{uA} = 0$ (cf.~\eqref{9XI20.t1}, with all $h_{\mu\nu}$'s replaced by $\tilde h_{\mu\nu}$'s) can be used to determine the field $\partial_u\tilde h_{uA}$. Using this, the field $\partial_u^2\tilde h_{AB}$ can then be obtained from the field $\overadd{2}{\tilde q}_{AB}$.  The procedure then iterates: one determines $\partial^2_u\tilde\chi$ from~\eqref{20XII23.12}-\eqref{1III23.1c} with $p=2$ and then the field $\partial_u^2\tilde h_{uA}$ from $\overadd{2}{\tilde\Hf}_{uA}$ and $\partial_u\delta\mcE_{uA}=0$ so on, until $p=k$.

    The following analysis will be useful for step 3. below: Equations~\eqref{20XII23.12}-\eqref{1III23.1c} ensure
      \begin{equation}\label{15XI22.w3}
           \partial^p_u\delta\mcE _{ru}|_{\blue{\tmcN_{[r_0,r_2)}}}
           - \frac 1{(n-2)r}
  \zspaceD ^A  \partial^p_u\delta \mcE _{rA}|_{\blue{\tmcN_{[r_0,r_2)}}}  = 0\,.
      \end{equation}
      Together with \eqref{15XI22.w1}, Equation~\eqref{15XI22.w3} guarantees that
      \begin{equation}\label{1X22.2}
    \partial_u^p \delta  {\mcE}_{ru}|_{\tmcN_{[r_0,r_2)}}
     \equiv
    -
    \partial_u^p \delta  \mcE^u{}_{ u}|_{\tmcN_{[r_0,r_2)}}
      = 0\,.
      \end{equation}
    The $u$-differentiated divergence identity with lower index $r$ reads
    \begin{align}
     0 &\equiv \nabla_\mu \partial_u^p \delta \mcE ^{\mu}{}_r
     \nonumber
     \\
     &=
     \partial^p_u  \delta  \mcE ^u{}_r
   +
   \frac{1}{r^{n-1}} \partial_r  (r^{n-1}  \delta  \partial^{p-1}_u\mcE ^r{}_r )
   \nonumber
    \\
    &\qquad
   +   \frac{1}{\sqrt{|\det \zgamma |}} \partial_A (\sqrt{|\det \zgamma |} \delta  \partial^{p-1}_u\mcE ^A{}_r )
   -
   \frac{1}{r}
        g^{AB}
    \delta \partial^{p-1}_u\mcE _{AB}
   \,,
   \label{C21X22.5ii}
    \end{align}
    so that, in view of \eqref{30IX22.11}, \eqref{15XI22.w1} and \eqref{1X22.2}, we have now
    \begin{equation}\label{1X22.6}
    \forall \
     0 \le i \le k \qquad
      0 =
      \frac{1}{r}
        g^{AB}
      \partial_u^i  \delta \mcE _{AB}
        \big|_{\tmcN_{[r_0,r_2)}}
   \,.
    \end{equation}
    Together with \eqref{15XI22.w4}, it follows that
    \begin{equation}\label{2X22.4}
 \forall \
  0 \le i \le k-1 \qquad
   \partial_u^i \delta \mcE _{AB}
  \big|_{\tmcN_{[r_0,r_2)}}
   =0
    \,.
    \end{equation}
    Equation \eqref{1X22.3} then gives
    \begin{align}
     \forall \
      0 \le i \le k-1 \qquad
        0 & =r^{-(n-1)}\partial_r(r^{n-1}
       \partial_u^i \delta \mcE ^r{}_A) |_{\tmcN_{[r_0,r_2)}}
       \nonumber
       \\
        &=
         - r^{-(n-1)}\partial_r(r^{n-1}
       \partial_u^i
       \delta \mcE _{uA}) |_{\tmcN_{[r_0,r_2)}}
       \,,
        \label{15XI22.w5}
    \end{align}
    where we have used
    $$ \partial_u^i\delta \mcE^{r}{}_A |_{\tmcN_{[r_0,r_2)}}
     = - g_{uu}  \partial_u^i\delta \mcE_{r A}|_{\tmcN_{[r_0,r_2)}} -  \partial_u^i\delta \mcE _{u A}|_{\tmcN_{[r_0,r_2)}}
      =   -  \partial_u^i\delta \mcE _{u A}|_{\tmcN_{[r_0,r_2)}}
      \,;
    $$
    note that the last equality is justified by \eqref{15XI22.w1}. Continuity at $r_1$, where all the fields $
   \partial_u^i  \mcE _{\mu\nu} $, $i\in \N$, vanish when the data there arise from a smooth solution of linearised Einstein equations, together with \eqref{15XI22.w5} implies that
    \begin{equation}
     \forall \
      0 \le i \le k-1 \qquad
       \partial_u^i  \delta \mcE ^r{}_A  |_{\tmcN_{[r_0,r_2)}}=0 = \partial_u^i  \delta \mcE _{uA} |_{\tmcN_{[r_0,r_2)}}
        \,.
         \label{2X22.8}
    \end{equation}

    Meanwhile, the divergence identity for the Einstein tensor with a lower index $u$ now reduces to
    \begin{equation}
     \forall \
      0 \le i \le k-1 \qquad
      0  \equiv  \partial_u^i \nabla_\mu  \delta \mcE ^{\mu}{}_u
        \big |_{\tmcN_{[r_0,r_2)}}
       = r^{-(n-1)} \partial_r (r^{n-1} \partial_u^i \delta \mcE^r {} _u)
        \big |_{\tmcN_{[r_0,r_2)}}
       \,.
    \end{equation}
    Continuity and vanishing at $r_1$
     together with \eqref{30IX22.11}
     and  \eqref{1X22.2}
     implies that
    \begin{equation}\label{2X22.7}
     \forall \
      0 \le i \le k-1 \qquad
       0 = \partial_u^i  \delta \mcE _{uu}
        \big |_{\tmcN_{[r_0,r_2)}}
         =  -  \partial_u^i  \delta \mcE ^r{}_{u}
        \big |_{\tmcN_{[r_0,r_2)}}
         =   \partial_u^i  \delta \mcE ^{rr}
        \big |_{\tmcN_{[r_0,r_2)}}\,.
    \end{equation}
         \end{enumerate}
     \item
     The construction above guarantees
     the continuous gluing at $r_2$ of $\tilde\hBo_{uu}$, $\partial_u\tilde\hBo_{AB}$, $\overadd{p}{\tilde\Hf}_{uA}$ with $0\leq p \leq k$, and $\overadd{p}{\tilde q}_{AB}$ with $2\leq p \leq k$. Continuity of the fields $\partial^p_u \tilde\hBo_{uA}$
     and $\partial_u^p\tilde\hBo_{uu}$ for $1 \leq p \leq k$  and
     $\partial^{i}_u \tilde\hBo_{AB}$ for $2 \leq i \leq k$
     at $r_2$ follows now
          by induction: The integrated transport equation for $\overadd{1}{\widetilde\Hf}_{uA}$ and the explicit form \eqref{9XI20.t1} of the equation $ \delta \mcE _{uA} =0$ can be solved simultaneously for $\partial_u \tilde h_{uA}|_{[r_1,r_2)}$ and $\partial_r \partial_u \tilde h_{uA}|_{[r_1,r_2)}$. The solutions will be given in terms of the previously obtained fields $\{\tilde\hBo_{uu}, \tilde\hBo_{uA}, \tilde\hBo_{AB}, \partial_u\tilde\hBo_{AB}\} $, which are continuous at $r_2$, hence ensuring continuity of $\partial_u \tilde h_{uA}$ and $\partial_r \partial_u \tilde h_{uA}$ at $r_2$. This in turn guarantees the continuity of $\partial_u^2 \tilde h_{AB}$ at $r_2$.

    Meanwhile the explicit form \eqref{13VIII20.t3} of $ \delta \mcE_{u u}=0$ together with continuity  at $r_2$ of $\tilde\hBo_{uu}$, $\tilde\hBo_{uA}$  and $\partial_u\tilde\hBo_{AB}$, ensures the continuity of $\partial_u\tilde\hBo_{uu}$ at $r_2$.

     Now, suppose that the continuity of the fields $\partial_u^p\tilde\hBo_{uu}$, $\partial_r\partial_u^p\tilde\hBo_{uA}$ and $\partial_u^{p+1}\tilde\hBo_{AB}$ has been achieved up to $p=k-1$.
    By differentiation of \eqref{20II23.3} we obtain the explicit form of \eqref{2X22.8} with $i=k-1$:
    \begin{align}
      - r^{n+1}\partial_{r}\partial^k_{u}\left(\frac{h_{uA}}{r^{2}}\right)
& =
r^{n-3}\zspaceD^{B}\zspaceD_{A}\partial^{k-1}_{u}{h}_{uB}
- r^{n-3}\zspaceD^{B}\zspaceD_{B}\partial^{k-1}_{u}{h}_{uA}
+  r^{n-3}\partial^k_{u} \zspaceD^{B}{h}_{A B}
\nonumber
\\
&\quad
- 2 (n-2) r^{n-1}(\alpha^2 + {2m r^{-n}}) \partial^{k-1}_{u}h_{uA}
- r^{2}\partial_{r}(r^{n-3}\zspaceD_{A}\partial^{k-1}_{u}{h}_{uu})
\nonumber\\
&\quad
-r^{(n-2)}(\twoscsign - r^2(\alpha^2+ {2mr^{-n}}) )((n-3)\partial_r \partial^{k-1}_{u} h_{uA} + r\partial^2_r \partial^{k-1}_{u} h_{uA})
\,.
    \label{9XI20.t1c}
    \end{align}
    Equation \eqref{9XI20.t1c} and the integrated transport equation \eqref{6III23.w1a} with $i=k$, together with the continuity of
    $\partial^{k-1}_u\tilde\hBo_{uu}$,
     $\partial_u^{k-1}\tilde \hBo_{uA}$ and $\partial^k_u\tilde\hBo_{AB}$,
       ensures the continuity of $\partial_u^{k} \tilde\hBo_{uA}$ and $\partial_r\partial_u^{k} \tilde\hBo_{uA}$ at $r_2$. (This in turn guarantees the continuity of $\partial_u^{k+1} \tilde\hBo_{AB}$ at $r_2$, but this is irrelevant for $\Ck$-gluing.)

       Finally, the explicit form of \eqref{2X22.7} with $i=k-1$,
    i.e.
    \begin{eqnarray}
      0 & = &  \partial_u^{k-1}  \delta \mcE _{uu}
        \big |_{\mcN_{[r_1,r_2)}}
      \nonumber
    \\
     & = &     \frac{1}{ r^2}\bigg[
       2\partial^k_{u} \zspaceD^{A} {h}_{u A}
        + \partial_r \left(\frac{V}{r}\right)  \zspaceD^{A} \partial_u^{k-1}{h}_{u A}
        - \frac{V}{r^{n-1}}  \partial_{r} \bigg(r^{n-2} \zspaceD^{A}\partial_u^{k-1} {h}_{u A}\bigg)
 \nonumber
\\
 &&
 +\frac{V}{r^3} \bigg(\zspaceD^{A} \zspaceD^{B} -\zR^{AB} \bigg)\partial_u^{k-1} h_{A B}
 -\left(r (n-1) (\partial_{u}
 -\partial_r\left(\frac{V}{r}\right)) +  R[\gamma] + \zspaceD^{A} \zspaceD_{A}\right)\partial_u^{k-1}{h}_{u u}
 \nonumber
\\
& &
   + \frac{(n-1)V}{r^{2n-4}} \partial_{r}(r^{2n-4} \partial_u^{k-1} {h}_{u u})\bigg]
    + 2 \Lambda \partial_u^{k-1} h_{uu}
    \, , \phantom{xxxxx}
    \label{13VIII20.t3rs}
    \end{eqnarray}
    together with smoothness  at $r_2$ of $\partial_u^{k-1}\tilde\hBo_{uu}$, $\partial_u^{k-1}\tilde\hBo_{uA}$, $\partial_u^{k}\tilde\hBo_{uA}$  and $\partial^{k-1}_u\tilde\hBo_{AB}$, ensures the continuity of $\partial^k_u\tilde\hBo_{uu}$ at $r_2$.

 \end{enumerate}

 \seccheck{19XII} 

\section{Solving the integrated transport equations on $\mcN$}
\label{ss30VI.1}
We now provide more details to step 1.\ of the above.
In particular, the results in Section~\ref{s12I22.1} and the current section will establish our main theorem of this paper, Theorem \ref{t12V24.11} below:

 First we recall the definition of the set of Bondi cross-section data
\begin{align}
    \dt{\secN} := \{ x| x=(\partial_u^{j}h_{AB}|_{\secN}
  ,\,  \partial_r^jh_{AB}|_{\secN}
  ,\,  \partial_u^{j}\delta\beta|_{\secN},
  \, \partial_u^{j}\delta U^A|_{\secN}
  ,\, \partial_r \delta U^A|_{\secN}
  ,\,   \delta V|_{\secN}
  )_{0\le j\leq k}\}
   \,.
\end{align}
%
We will use the following function spaces:
\begin{align}
\Hkdt &:= \prod_{j\in[0,k]}
    \bigg( H^{k_{\gamma}-2j}(\secN)
     \times
     \left\{
       \begin{array}{ll}
           H^{k_{\gamma}-j}(\secN), & \hbox{$j=0$} \\
           H^{k_{\gamma}+1-j}(\secN), & \hbox{$j>0$}
       \end{array}
     \right.
    \times
    H^{k_{\gamma}-2j}(\secN) \times
    H^{k_{\gamma}-2j-1}(\secN)\bigg)
    \nn
    \\
    &\qquad \qquad \qquad
    \times
    H^{k_{\gamma}-1}(\secN) \times
    H^{k_{\gamma}-2}(\secN) \,,
    \\
    \Hkzeta :&= \prod_{j\in[0,k+1]}H^{k_{\gamma}-2j+1}(\secN)\times H^{k_{\gamma}-2j+2}(\secN) \,.
\end{align}
We have:

\begin{theorem}
\label{t12V24.11}
Let $\hak,k\in \N$ with $\hak \geq 2k + 2$. To any  two sets of linearised codimension-two Bondi data 
\begin{align}
\label{12V24.2}
    x_{r_1} \in \dt{\secN_{r_1}}\,,
    \
    x_{r_2}
    \in \dt{\secN_{r_2}}
    \,,
    \
    \mbox{with}
    \ x_{r_a}
    \in
    \Hkdt \,,
\end{align}
one can smoothly assign a collection of gauge fields
\begin{align}
\label{12V24.4}
    (\partial^i_u\kxi{2}_A\,,
    \partial^i_u\kxi{2}^u)_{i\in[0,k+1]\cap \Z} \in
    \Hkzeta
\end{align}
on $\secN_{r_2}$,
as well as a symmetric tensor field
\begin{align}
\label{12V24.4c}
    h_{AB}\in \CrHk
    \,,
\end{align}
of the form \eqref{16III22.2old}-\eqref{27VII22.1a} on $\mcN_{[r_1,r_2]}$,
 to achieve $C^k$-gluing-up-to-gauge,  up to a finite-dimensional space  of obstructions.
\end{theorem}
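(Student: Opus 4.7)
The plan is to implement the three--step strategy summarised just before the theorem statement, in the Bondi gauge $\delta\beta=0$, by solving sequentially the integrated transport equations for the fields $\{\tilde\chi,\overadd{*}{\tilde\Hf}_{uA},\overadd{0}{\tilde\Hf}_{uA},\overadd{\ell}{\tilde\Hf}_{uA},\overadd{\ell}{\tilde q}_{AB}\}_{\ell=1}^{k}$. The starting point is the ansatz \eqref{16III22.2old}--\eqref{27VII22.1a} for $h_{AB}$ on $\mcN_{[r_1,r_2]}$, namely a smooth interpolating field $\interph_{AB}$ fixed by a standard extension of the data $(x_{r_1},x_{r_2})$, plus a finite sum $\sum_j \chi_j(r)\,\kphi{j}_{AB}$ of free transverse--traceless ``gluing fields'' indexed by rational exponents $j$, arranged so that the radial moments $\int_{r_1}^{r_2} s^{q}\chi_{-q}(s)\,ds$ pick out distinct $\kphi{j}_{AB}$ from each transport equation. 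First I would fix the gauge field $\kxi{1}$ to zero (see the discussion between \eqref{CHG28XI19.4aa} and \eqref{8III22.1}) so that the entire gauge freedom is carried by $\kxi{2}$ at $\secN_{r_2}$.

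Next, for each field $\Hf$ in the list above, with associated gluing operator $\hat D$ and gauge operator $\myhatopL$ read off from Tables~\ref{T11III23.2}--\ref{T11XII23.1}, I would perform the three projections of the integrated transport equation \eqref{6V23.2}. The projection onto $\ker\hat D^\dagger\cap\ker\myhatopL^\dagger$ is the gauge--invariant obstruction (\emph{finite--dimensional} since $\hat D$ and $\myhatopL$ are elliptic systems on the compact Einstein manifold $\secN$); it contributes to the finite codimension allowed by the theorem. The projection onto $\ker\hat D^\dagger\cap(\ker\myhatopL^\dagger)^\perp$ is solved uniquely for the relevant component of a gauge field from the list \eqref{12V24.4}, using \eqref{6V23.1}. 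The projection onto $(\ker\hat D^\dagger)^\perp=\im\hat D$ is solved for the component of $\kphi{j}_{AB}$ in $(\ker\hat D)^\perp$ via elliptic Fredholm theory. Throughout, the ordering is dictated by dependencies: $\overadd{*}{\tilde\Hf}_{uA}$ and $\tilde\chi$ first (they feed only on $h_{AB}$); then $\overadd{0}{\tilde\Hf}_{uA}$; then inductively $\overadd{i}{\tilde\Hf}_{uA}$ and $\overadd{i}{\tilde q}_{AB}$ using the recursion formulae \eqref{6III23.w4} and \eqref{6III23.w8}.

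The hard part is the $(n,k)$ inconvenient case, where at $i=\tfrac{n-3}{2},\tfrac{n-1}{2},\tfrac{n+1}{2}+j$ the equations for $\partial_u^i h_{uA}$ and $\partial_u^\ell h_{AB}$ become coupled through the appearance of the same gauge field $\xi^u$ (resp.\ $\xi^A$) in both transport equations, and of the charges $\kQ{3}{}$, $\kQ{4}{}$ of \eqref{17X23.1}, \eqref{17X23.4}. Here the single--equation projection procedure must be replaced by a simultaneous diagonalisation: one combines the two transport equations to isolate the gauge--invariant combination $\mrL(\overadd{(n-3)/2}{q})+\text{const}\cdot\alpha^{n-3}\chi$ (and its analogue at $i=\tfrac{n-1}{2}$), and uses the operators $\hLop_n$, $\Lop$ of \eqref{24IV23.1xs}, \eqref{24IV23.3} to absorb the remaining projections into gauge fields. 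Proposition~\ref{P9X23.1m} is essential at this point to show that the coupling operators annihilate the relevant subspaces, so that the remaining systems decouple on the complementary subspaces. When $m\alpha\neq 0$, the extra terms in the gauge laws \eqref{26V23.1} and \eqref{14VIII22.5} provide additional freedom that kills some of the obstructions present for $m=\alpha=0$, explaining the shorter obstruction columns of Table~\ref{T11XII23.1}.

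Finally, the regularity statement \eqref{12V24.4c} is verified by tracking the Sobolev losses through the transport equations: the implications \eqref{28VIII23.3}, \eqref{28VIII23.2}, \eqref{28VIII23.9}, \eqref{3IX23.1}, \eqref{2X23.1a}, \eqref{2X23.2}--\eqref{2X23.5} (collected in Section~\ref{ss26IV24.1}) show that with $x_{r_a}\in\Hkdt$ and gauge fields in $\Hkzeta$, the resulting $h_{AB}$ lies in $\CrHk$, provided $\kgamma\ge 2k+2$ so that no differentiability index becomes negative in the chain \eqref{23IV24.2}. Linearity and continuity of each elliptic inversion and of the extension producing $\interph_{AB}$ then yields the desired smooth (in fact bounded linear) assignment.
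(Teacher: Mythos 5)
There is a genuine gap, and it sits at the heart of the theorem. Your scheme rests on the premise (stated in your first paragraph and implicit in the per-field ``three projections'' loop) that the radial moments $\int_{r_1}^{r_2}s^{q}\chi_{-q}(s)\,ds$ pick out a \emph{distinct} gluing field $\kphi{-q}{}_{AB}$ from each integrated transport equation. That is true only when $m=0=\alpha$ (and, for the $\overadd{p}{q}$-equations, only in restricted subcases). As soon as $m\neq 0$ or $\alpha\neq 0$, the transport equations \eqref{6III23.w1a} and \eqref{6III23.w6} contain, besides the exponent $-(p+4)$ resp.\ $(n-7-2p)/2$, the terms with exponents $-(4+p(n-1))$, $(n-7\mp 2p)/2$, and the whole families shifted by $j(n-2)-2\ell$; consequently a single equation involves several unknown $\kphi{j}$'s and the same $\kphi{j}$ appears in the equations for several orders $p$ (and, in the inconvenient case with $m\neq0$, simultaneously in the $\overadd{p}{\Hf}{}^{[\CKVp]}_{uA}$- and $\overadd{p}{q}{}^{[\TTtp]}_{AB}$-equations). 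Your sequential elliptic inversion stalls at the first such equation. The paper's proof of Theorem~\ref{t12V24.11} consists largely of resolving exactly this coupling: the matrix systems \eqref{26VII23.1}, \eqref{26VII23.1b} and \eqref{4IX23.1} (Theorems~\ref{t29VII23.1} and \ref{t22VIII23.1}, Section~\ref{ss5IX23.1} and Appendices~\ref{ss24IX23.2}--\ref{ss18XI23.1}), shown to be elliptic in the Agmon--Douglis--Nirenberg sense, solved mode-by-mode in joint eigenfunctions of the commuting operators $\operatorname{L}_{a,c}$, with surjectivity on the finitely many low modes obtained from the triangular structure of the adjoint matrices whose pivotal entries are the non-vanishing numbers $m^{p}\overadd{p}{\chi}_{[m]}$, $m^{p}\overadd{p}{\psi}_{[m]}$, $\alpha^{2p}\overadd{p}{\psi}_{[\alpha]}$. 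None of this (nor any substitute for it) appears in your proposal; attributing the disappearance of obstructions for $m\alpha\neq0$ to ``extra terms in the gauge laws'' misidentifies the mechanism.

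Two further points. First, your claim that the obstruction space $\ker\hat D^\dagger\cap\ker\myhatopL^\dagger$ is finite-dimensional ``since $\hat D$ and $\myhatopL$ are elliptic'' is not available in the inconvenient case: for odd $n$ the operators $\overadd{p}{\psi}\ofP$ with $p\ge\frac{n-3}{2}$ are \emph{not} elliptic and have infinite-dimensional kernels (containing $h^{[S]}$, and for $p\ge\frac{n-1}{2}$ also $h^{[V]}$, cf.\ Proposition~\ref{P9X23.1m}); finite-dimensionality and smoothness of the actual obstructions require the restricted-ellipticity and kernel-intersection results (Propositions~\ref{P17X23.1}, \ref{P16X23.1}, \ref{P16X23.2}) together with the mapping properties of Appendix~\ref{s25X23.1-}, which also justify your use of \eqref{6V23.1}. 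Second, fixing $\kxi{1}\equiv 0$ is inconsistent with working in the gauge $\partial_u^i\delta\beta=0$: the data at $\secN_{r_1}$ need not satisfy this, and the paper uses $\partial_u^{i+1}\kxi{1}^u$ at $r_1$ (cf.\ \eqref{17V24.3}) together with Lemma~\ref{l12V24.1} to pass to gauge-equivalent data before freezing all further gauge freedom at $r_2$; this reduction lemma, which legitimises the whole ``solve for the transformed data'' step while keeping the regularity classes \eqref{12V24.2}--\eqref{12V24.4}, is absent from your argument. Your regularity discussion (tracking Sobolev losses through Section~\ref{ss26IV24.1}) is in line with the paper, but for the gauge fields and gluing fields produced by the coupled systems the stated regularity in \eqref{12V24.4} and \eqref{12V24.3} comes from the ADN estimates (Proposition~\ref{pp30IV24.1}, Theorem~\ref{T30XI23.1}, Corollary~\ref{C10V24}), not from scalar elliptic regularity applied equation by equation.
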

\begin{remarks}
    \begin{enumerate}
        \item[1.] The field $h_{AB}$ is given by~\eqref{16III22.2old}, with (cf.\ also \eqref{26IV24.1} and \eqref{27VII22.1a})
        \begin{align}
        \label{12V24.3}
            \kphi{j}_{AB}\in H^{k_{\gamma}}(\secN)\,, \quad j\in\iota_{\alpha,m}\,.
        \end{align}
        Note to be compatible with \eqref{12V24.4c}, the field $\interph(r,\cdot)$
        in \eqref{16III22.2old}, as well as the remaining fields in \eqref{16III22.2old}, 
        belong to $H^{k_{\gamma}}(\secN)$. It will be part of the work that follows to show the regularity as indicated in \eqref{12V24.4}-\eqref{12V24.3} using Einstein equations.
        
        \item[2.] By achieving ``$C^k$-gluing-up-to-gauge'', we mean the construction of the following linearised fields  on $\mcN_{[r_1,r_2]}$:
        \begin{align}
            &y:= (\partial^{\ell}_u \delta V,\partial^{\ell}_u \delta \beta, \partial^{\ell}_u \delta U^{A},\partial^{\ell}_u h_{AB})_{0\leq\ell\leq k}
            \,,
            \ \mbox{with}
             \
            \\
           &
    y
            \in
            H^{k_{\gamma}-2\ell-2;+}_{
            \mathcal{N}}\times H^{k_{\gamma}-2\ell;+}_{
            \mathcal{N}} \times
    H^{k_{\gamma}-2\ell-1;+}_{
            \mathcal{N}} \times
    H^{k_{\gamma}-2\ell;+}_{
            \mathcal{N}}
    \,,
    \label{12V24.5b}
        \end{align}
        for $0\leq \ell \leq k$, such that:
        \\
         (i) $y$ agrees with   the given data $x_{r_1}$  at $\secN_1$,
        \\
         (ii) $y$ agrees with the gauge-transformed data $x_{r _2}$ at $ \secN_2 $; and
          \\
          (iii) $y$ satisfies  the linearised null constraint equations.

\item[3.] By ``up to a finite-dimensional space of obstructions'' we mean that
         there exists a finite-dimensional subspace $\mathcal{Q}$ of $\Hkdt$, such that point 2.(ii)
         is satisfied  only on the $L^2$-orthogonal complement of this space, i.e.,
         $$
         (y|_{r_2} - x_{r_2})^{[\mathcal{Q}^\perp]} = 0 \,,
         $$
         where we write $y|_{r_2}$ to denote the Bondi cross-section data induced by $y$ at $r=r_2$. 
         
         \item[4.]
When $m\neq 0$, the space of obstructions (for linearised data in any gauge) is provided by $\kQ{1}{}(\pi^A)$, where $\pi^A$ is a Killing vector of the metric $\zgamma$, and $\kQ{2}{}(\lambda)$ with $\lambda=1$; recall that these are defined as:
\begin{align}
           \kQ{1}{}(\pi^A)[x]  &:=
            \int_{\secN}\pi^A \left[r^{n+1}  \partial_r(r^{-2} \red{h}_{uA})  -
   2 {r^{n-1}}\zspaceD_A\delta\beta \right] \,\sm
   \\
    \kQ{2}{}(\lambda)[x]&:= \int_{\secN}\lambda
     \Big[r^{n-3}\delta V
         - \frac{r^{n-2}}{n-1}\partial_r\bigg(r^2 \zspaceD^A \delta U_A\bigg) - \tfrac{2 r^{n-2}}{n-1} \TSzlap \delta\beta
         \nn
         \\
         &\qquad\qquad
         - 2 r^{n-3} V \delta\beta
         \Big]\sm \,.
           \end{align}
Under the assumption $m\ne 0$ the space $\mathcal{Q}$ of  point 3.\ is a $(c_{\ringh}+1)$-dimensional subspace of the space
    $$
     \{h_{uA}^{[\KV]},\partial_r h_{uA}^{[\KV]},\delta V^{[0]}\}\big|_{\secN}
     \,,
     $$
     where $c_\ringh$ is the dimension of the space of Killing vectors of $(\secN,\ringh)$.

\item[5.] The finite-dimensional spaces of obstructions
            in the gauge $\partial_u^i\delta\beta = 0$ are listed in Tables~\ref{T11III23.2}-\ref{T11XII23.1}. The formulae for obstructions in a  gauge where the  $\partial_u^i\delta\beta  $'s do  not vanish can be obtained from the formulae derived as follows:
Let us denote by  
$\Qnobeta $ the functional for one of the radial charges, where   the gauge $\partial_u^i\delta\beta = 0$ has been used. 
By construction, $\Qnobeta $ is gauge invariant under  gauge transformations \eqref{24IX20.1HD}-\eqref{24IX20.23HD} satisfying
$$
\partial_{\tdu}^{i+1}  \xi^{u} -\frac{ \zspaceD_{\tdB} \partial_u^i\TSxi^{B}}{n-1} =0\,.
$$
Thus, under a gauge transformation we have
\begin{equation}\label{5VI24.4}
  \Qnobeta  \mapsto \Qnobeta  
  + \sum_{i=0}^{q}
 F_i  \Big[
	 \partial_{\tdu}^{i+1}  \xi^{u} -\frac{ \zspaceD_{\tdB} \partial_u^i\TSxi^{B}}{n-1} 
\Big]
\,,
\end{equation}
for some $q\in \Z$, $q\geq 0$, and some linear differential operators  $F_i$
(possibly depending upon $r$). 
We set
\begin{equation}\label{5VI24.5}
  Q =   \Qnobeta 
  -
  \sum_{i=0}^{q}
  2 F_i  \Big[
	 \partial_u^i\delta\beta
\Big]
\,.
\end{equation}
It is straightforward to check that $Q$ is gauge-invariant,  keeping in mind the gauge-transformation law of $\delta \beta$:
\begin{equation}
 2 \partial_u^i\delta\beta \mapsto 	 2 \partial_u^i\delta\beta +  \partial_{\tdu}^{i+1}  \xi^{u} -
 \frac{ \zspaceD_{\tdB} \partial_u^i\TSxi^{B}}{n-1} 
 \,.
  \label{5VI24.7}
\end{equation}
It remains to show that the new expression $Q$ will also be radially conserved.
For this, we first take the $r$-derivative of \eqref{5VI24.4}, which gives
\begin{equation}\label{5VI24.4a}
  \partial_r \Qnobeta  \mapsto \partial_r \Qnobeta  
  + \sum_{i=0}^{q}
 \partial_rF_i  \Big[
	 \partial_{\tdu}^{i+1}  \xi^{u} -\frac{ \zspaceD_{\tdB} \partial_u^i\TSxi^{B}}{n-1} 
\Big]
\,.
\end{equation}

Next, 
the $r$-independence of $\Qnobeta $ in the gauge where all the fields  $\partial_u^i\delta\beta$ vanish implies that \emph{in a general gauge where $\partial_u^i\delta \beta \ne  0$} we must have
\begin{align}
    &\partial_r \Qnobeta= \sum_{i=0}^p \hat H_i \partial_u^i \delta\beta
     \,,
    \label{6VI24.1}
\end{align}
for some $p\in \Z$, $p\geq 0$, and some linear differential operators  $\hat H_i$. 
Taking a gauge transformation of \eqref{6VI24.1} 
gives
\begin{align}
    \partial_r \Qnobeta + 
  \sum_{i=0}^q \partial_r F_i  \Big[
	 \partial_{\tdu}^{i+1}  \xi^{u} -\frac{ \zspaceD_{\tdB} \partial_u^i\TSxi^{B}}{n-1} 
\Big] = \sum_{i=0}^p \bigg(\hat H_i \partial_u^i \delta\beta +\frac{1}{2} \hat H_i \Big[
	 \partial_{\tdu}^{i+1}  \xi^{u} -\frac{ \zspaceD_{\tdB} \partial_u^i\TSxi^{B}}{n-1} 
\Big] \bigg)\,.
\end{align}
Since the above holds for any values of $\partial_u^i\xi^u$ and of $ \partial_u^i\TSxi^{B}$ we must have
\begin{align}
   p=q\,,\quad 
    \partial_r F_i = \frac{1}{2} \hat H_i\,.
\end{align}
Thus,
\begin{align}
    \partial_r \Big( \Qnobeta  - \sum_{i=0}^{q}
 F_i  \Big[
	 2 \partial_u^i\delta\beta\Big]
  \Big) 
  =
  \sum_{i=0}^q \hat H_i \partial_u^i \delta\beta
  - \sum_{i=0}^q \hat H_i \partial_u^i \delta\beta = 0
   \,,
\end{align}
as desired.

\item[6.]
As  already pointed out, the differentiability classes in the theorem are more sophisticated than necessary for the linearised problem addressed here, and several  consistent simpler 
 choices of functional spaces are possible. The current setup is dictated by consistency with the nonlinear setting of~\cite{ChCongGray2}. 
\qed
    \end{enumerate}
\end{remarks}

To prove Theorem~\ref{t12V24.11}, the following will be useful:
we will say that two sets of linearised Bondi data $\red{x_1}$ and $\red{x_2}$ are \textit{gauge-equivalent}, and we will write
\index{s@$\gsim$}%
$$\red{x_1} \gsim \red{x_2} \,,$$
if there exist  gauge fields
$z:=(\partial^i_u\xi_A\,,
    \partial^i_u\xi^u)_{i\in[1,k+1]} \in
    \Hkzeta$
    such that
    $$\red{x_2} = z^*(\red{x_1}) \,,$$
    where the map $z^*$ was defined in \eqref{12V24.1}; see also \eqref{14V24.1}.

\begin{lemma}
\label{l12V24.1}
    Let
    \begin{equation}
        \red{x_{r_1;a}}\,,
    \
    \red{x_{r_2;a}}
    \in
    \Hkdt
    \end{equation}
    and
    \begin{equation}
        \red{x_{r_1;b}}\,,
    \
    \red{x_{r_2;b}}
    \in
    \Hkdt
    \end{equation}
    be two sets of gauge-equivalent gluing data, i.e.,
\begin{equation}
    \label{12V24.7}
    \red{x_{r_j;a}}\gsim\red{x_{r_j;b}} \quad j=1,2 \,.
\end{equation}
Then $\{\red{x_{r_1;a}}\,,\red{x_{r_2;a}}\}$ can be $C^k$-glued-up-to-gauge, up to a finite-dimensional space of obstructions, iff $\{\red{x_{r_1;b}}\,,\red{x_{r_2;b}}\}$ can.
\end{lemma}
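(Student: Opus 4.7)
\noindent\emph{Proof proposal.} The strategy is to exploit the fact that the linearised gauge transformations \eqref{24IX20.1HD}--\eqref{24IX20.23HD} depend linearly and only algebraically (together with differential operators in the $x^A$-variables and at most first-order derivatives in $u$) on the gauge fields $(\xi^u,\xi^A)$ evaluated at $r$, and that adding a Lie derivative $\mcL_\zeta \bar g$ of the background to a bulk solution $y$ of the linearised vacuum equations yields another such solution. By \eqref{12V24.7} there exist gauge fields $z_j=(\partial_u^i\xi^u_{(j)},\partial_u^i\xi^A_{(j)})_{i\in[0,k+1]}\in\Hkzeta$ on $\secN_{r_j}$, $j=1,2$, with $x_{r_j;b}=z_j^*(x_{r_j;a})$. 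The first step will be to extend $z_1$ to a smooth $r$-dependent family $\hat z_1$ of gauge fields on $\mcN_{[r_1,r_2]}$; since the gauge fields on $\secN_{r_1}$ are specified through the $u$-Taylor expansion at $r_1$, any smooth extension in $r$ with the required regularity (cf.\ \eqref{23IV24.2}) will do, e.g.\ the trivial $r$-independent extension. Throughout, I will use that composition of two linearised gauges acts as addition of the generating vector fields on data, i.e.\ $z^*\circ w^*=(z+w)^*$.

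\medskip

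Assume $\{x_{r_1;a},x_{r_2;a}\}$ admits $C^k$-gluing-up-to-gauge: by Remark~2 above there exists $y_a$ on $\mcN_{[r_1,r_2]}$ satisfying the linearised null constraints, with $y_a|_{r_1}=x_{r_1;a}$ and $(y_a|_{r_2}-z_{2;a}^*(x_{r_2;a}))^{[\mathcal{Q}^\perp]}=0$ for some gauge field $z_{2;a}$ on $\secN_{r_2}$. Define
\begin{equation}
y_b:=y_a+\mcL_{\hat z_1}\bar g \quad \text{on } \mcN_{[r_1,r_2]}.
\end{equation}
Since the linearised null constraints are gauge-invariant, $y_b$ still solves them; the regularity of \eqref{12V24.5b} is preserved by \eqref{14V24.1} and its bulk analogue. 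At $\secN_{r_1}$, the induced cross-section data of $\mcL_{\hat z_1}\bar g$ depends on $\hat z_1$ only through its value and tangential derivatives at $r_1$, hence equals $z_1^*(\cdot)-\mathrm{id}$ on data, giving $y_b|_{r_1}=z_1^*(x_{r_1;a})=x_{r_1;b}$.

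\medskip

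At $\secN_{r_2}$ we have, by linearity of $z\mapsto z^*$ and the bulk construction,
\begin{equation}
y_b|_{r_2}=y_a|_{r_2}+\bigl((\hat z_1)|_{r_2}\bigr){}^*(\cdot)-\mathrm{id}=\bigl(z_{2;a}+(\hat z_1)|_{r_2}\bigr){}^*(x_{r_2;a})\quad\text{mod } \mathcal{Q}^\perp.
\end{equation}
Setting $z_{2;b}:=z_{2;a}+(\hat z_1)|_{r_2}-z_2$, which lies in $\Hkzeta$ because $\Hkzeta$ is a linear space and each summand does, we compute
\begin{equation}
z_{2;b}^*\bigl(x_{r_2;b}\bigr)=z_{2;b}^*\bigl(z_2^*(x_{r_2;a})\bigr)=(z_{2;b}+z_2)^*(x_{r_2;a})=\bigl(z_{2;a}+(\hat z_1)|_{r_2}\bigr){}^*(x_{r_2;a}),
\end{equation}
so that $(y_b|_{r_2}-z_{2;b}^*(x_{r_2;b}))^{[\mathcal{Q}^\perp]}=0$. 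This exhibits the gluing of $\{x_{r_1;b},x_{r_2;b}\}$, and the converse follows by exchanging the roles of $a$ and $b$ using the inverse (i.e.\ negatives) of $z_1,z_2$.

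\medskip

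The main subtlety — which I expect to be easy rather than the main obstacle — is only bookkeeping: verifying that the obstruction space $\mathcal{Q}$ is the same for $a$ and $b$ (it is, since it is identified intrinsically in Remark~4 with Killing vectors of $\zgamma$ and constants on $\secN$, which are fixed background objects), and that the various function-space memberships \eqref{12V24.4}--\eqref{12V24.3} of the transformed gauge and gluing fields are preserved, which is immediate from \eqref{14V24.1} and the linearity of $z\mapsto z^*$. No genuine analytic obstacle arises because no PDE needs to be solved in this lemma.
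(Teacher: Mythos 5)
Your argument is correct in substance but takes a genuinely different route from the paper's, and it rests on a convention the paper does not need. In the paper's proof the notion of $C^k$-gluing-up-to-gauge allows gauge fields at \emph{both} cross-sections, so the argument is purely algebraic: keep the very same interpolating fields $\slashed h_{\mu\nu}$ and simply replace the gauge fields used at $\secN_{r_j}$ for data ``$a$'' by their sum with the fields $\kxi{j}{}_{a\rightarrow b}$ relating the two data sets, $j=1,2$; linearity of $\Gmap$ does the rest, with no modification of the bulk solution and no appeal to gauge invariance of the constraint equations. You instead insist on exact matching at $r_1$ (as in point 2.(i) of the remarks after Theorem~\ref{t12V24.11}) and compensate the $r_1$-gauge change by adding the pure-gauge bulk field $\mcL_{\hat z_1}\,\nobarzg$ to $y_a$, then repeating the same additive bookkeeping at $r_2$ with $z_{2;b}=z_{2;a}+z_1-z_2$. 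This works, but it costs two extra inputs (gauge invariance of the linearised constraints at the vacuum background, and bulk regularity of the pure-gauge solution) that the paper's two-ended convention makes unnecessary.

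The one step you must state more carefully is the extension of $z_1$ along $\mcN$: your claim that ``any smooth extension in $r$ \dots\ e.g.\ the trivial $r$-independent extension'' will do is not right. Bondi cross-section data \eqref{23III22.992} contains radial derivatives ($\partial_r^j h_{AB}$, $\partial_r\delta U^A$), so the data induced at $r_1$ by $\mcL_{\hat z_1}\,\nobarzg$ depends on the radial derivatives of $\hat z_1$ there, not only on its value and tangential derivatives; moreover a generic extension violates the Bondi gauge conditions \eqref{4II20.3}--\eqref{4XII19.14}, so $y_b$ would not be of the form \eqref{12V24.5b}. The correct (and canonical) choice is the residual Bondi-preserving vector field \eqref{3XII19.t1HD}--\eqref{3XII19.t2HD} generated by the same $(\xi^u,\xi^A)(u,x^A)$ at every $r$ --- exactly the vector field implicit in the definition of $\Gmap$ --- with which the induced data at $r_1$ is precisely $z_1^*(x_{r_1;a})=x_{r_1;b}$ and at $r_2$ is $z_1^*(y_a|_{r_2})$, so that your computation at $r_2$ goes through verbatim. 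With this correction, and noting that under the paper's own definition the bulk modification is superfluous, your proof of both implications is sound.
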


\begin{proof}
    Since data ``$a$'' and ``$b$'' are gauge-equivalent, there exist gauge fields
    $$(\partial^i_u(\kxi{j}_{a\rightarrow b})_{A}\,,
    \partial^i_u\kxi{j}^u_{a\rightarrow b})_{i\in[1,k+1]} \in
    \Hkzeta \,,\quad j=1,2$$
    such that, if we write
    $ \red{x_{r_j;a}}\equiv \big((\slashed{h}_{a,j})_{\mu\nu}\big)$,
    we have
    \begin{equation}
    \label{12V24.8}
        (\slashed{h}_{a,j})_{\mu\nu} =  (\slashed{h}_{b,j})_{\mu\nu} + \Gmap_{\mu\nu}[(\partial^i_u(\kxi{j}_{a\rightarrow b})_{A}\,,
    \partial^i_u\kxi{j}^u_{a\rightarrow b})]\,,
    \end{equation}
   where the linear map $\Gmap_{\mu\nu}$ denotes the gauge transformation map for the relevant field as given by~\eqref{24IX20.1HD}-\eqref{24IX20.23HD} together with the
$u$- and $\tdr$-derivatives of these equations.

    Suppose now that data ``$a$'' can be $C^k$-glued up-to-gauge. Thus, there exist gauge fields $(\partial^i_u(\kxi{j}_{a})_{A}\,,
    \partial^i_u\kxi{j}^u_{a})_{i\in[1,k+1]} \in
    \Hkzeta$,  and a fields $y\equiv (\slashed{h}_{\mu\nu})
      $
      as in~\eqref{12V24.5b}  on $\mcN_{[r_1,r_2]}$ such that
    \begin{align}
    \label{12V24.21}
        \slashed h_{\mu\nu}|_{r_j} = (\slashed h_{a,j})_{\mu\nu} + \Gmap_{\mu\nu}[(\partial^i_u\kxi{j}_{a,A}\,,
    \partial^i_u\kxi{j}^u_a)]\,,
    \quad j=1,2\,.
    \end{align}
   Set
    \begin{equation}
     (\partial^i_u(\kxi{j}_{b})_{A}\,,
    \partial^i_u\kxi{j}^u_{b})
    = (\partial^i_u(\kxi{j}_{a\rightarrow b})_{A}\,,
    \partial^i_u\kxi{j}^u_{a\rightarrow b})
    + (\partial^i_u(\kxi{j}_{ a})_{A}\,,
    \partial^i_u\kxi{j}^u_{ a})\,.
    \end{equation}
We then have
    \begin{align}
        &(\slashed{h}_{b,j})_{\mu\nu} +  \Gmap_{\mu\nu}[(\partial^i_u(\kxi{j}_{b})_{A}\,,
    \partial^i_u\kxi{j}^u_{b})]
    =  (\slashed{h}_{a,j})_{\mu\nu} + \Gmap_{\mu\nu}[(\partial^i_u(\kxi{j}_{ a})_{A}\,,
    \partial^i_u\kxi{j}^u_{ a})]
    \,,
    \label{12V24.22}
    \end{align}
    where we have used \eqref{12V24.8}-\eqref{12V24.22} and the linearity of $\Gmap_{\mu\nu}$.
    Equations \eqref{12V24.22} and \eqref{12V24.21} then give
    \begin{align}
        \slashed h_{\mu\nu}|_{r_j} = (\slashed h_{b,j})_{\mu\nu} + \Gmap_{\mu\nu}[(\partial^i_u\kxi{j}_{b,A}\,,
    \partial^i_u\kxi{j}^u_b)]\,,
    \quad j=1,2\,.
    \end{align}
    In other words, data ``$b$'' is $C^k$-glued-up-to-gauge using the same interpolating fields $\slashed h_{\mu\nu}$ as ``$a$'', but using the gauge fields $(\partial^i_u(\kxi{j}_{b})_{A}\,,
    \partial^i_u\kxi{j}^u_{b})_{i\in[1,k+1]} \in
    \Hkzeta$, $j=1,2$.
    \qedskip
\end{proof} 

We now proceed with the explicit construction of the solution which achieves $C^k$-gluing-up-to-gauge of some given data
\begin{align}
    x_{r_1} \,,
    \
    x_{r_2}
    \in
    \Hkdt \,.
\end{align}
%

\subsection{$\delta\beta = 0$ gauge}
 \label{ss1VIII22.1}
The gauge functions
 $\partial_{\tdu}^{i+1}\kxi{j}^u|_{{\secN}_j}, \partial_u^i
   \kxi{j}_A|_{{\secN}_j} \in H^{k_{\gamma}-2i}(\secN)$ for $0 \leq i\leq k$ and $j=1,2$ allow
 us to transform $\partial_u^j\delta{\beta}\in H^{k_{\gamma}-2j}(\secN)$
 for $j\leq k $ to zero on $\tilde{\secN}_1$ and $\tilde{\secN}_2$,
 by using e.g.\ the gauge transformation
 \begin{align}
    2\partial_u^i\delta\beta|_{\secN_{r_j}} =
  - \partial^{i+1}_{u} \kxi{j}^u \in H^{\kgamma-2i}
  \,,
  \quad
  \partial_u^i
   \kxi{j}_A = 0.
   \label{17V24.3}
 \end{align}
 These gauge fields transform the given data $x_{r_1}\,, x_{r_2}$ to new data
 $\red{\tilde x_{r_1}}\,,
    \
    \red{\tilde x_{r_2}}
    $ (according to the transformation rules~\eqref{24IX20.1HD}-\eqref{24IX20.23HD}) which continue  to lie in $\Hkdt$.

    By Lemma~\ref{l12V24.1}, it suffices to solve the gluing problem for these new data. In what follows, we proceed to do this. To ease notation, we will continue using the untilded $\red{x_{r_1}}\,,
    \
    \red{x_{r_2}}
    $ to denote the data. Furthermore, any gauge fields appearing will be understood to be associated to gauge transformations of the new data.

 The  $\partial_u^i \delta G_{rr}$ constraint equation \eqref{CHG28XI19.4aa} requires
 \begin{align}
     \forall\ 0\leq i\leq k  \quad \partial_u^i\delta\beta|_{r_1} = \partial_u^i\delta\beta|_{r_2}\,,
     \label{17V24.1}
 \end{align}
 which is automatically satisfied by the (new) data.
 It will be convenient to work only with gauge fields at $r_2$ from now on, that is, we will not perform further gauge transformations at $r_1$. Thus, to preserve~\eqref{17V24.1}, we are left with the residual gauge freedom given by \eqref{5XII19.1aHD}-\eqref{17III22HD}, together with their $u$- and $r$- derivatives.

To simplify notation we omit the ``$|_{\tilde{\secN}_j}$'' on  gauge fields, with the understanding that all
all $\kxi{2}$ fields, and their $u$-derivatives, are evaluated on $\tilde{\secN}_2$.

\subsection{Gauge invariant obstructions and gauge freezing}
 \label{s10IX22.1}

In this section we analyse the  gauge-invariant obstructions arising from the transport equations of the Bondi data, and we determine some of the gauge fields needed to match   gauge-dependent radial charges (cf.\ the last two columns of Tables \ref{T11III23.2}-\ref{T11XII23.1}). More specifically, we provide a prescription how to determine all the gauge fields for the following two cases: $(1)$  convenient pairs $(n,k)$ for any $m$ and $\alpha$, and $(2)$  inconvenient pairs $(n,k)$  but with $m=0$  and any $\alpha$. For the remaining case, that is  $(n,k)$ inconvenient and $m\neq 0$, we only determine here the $\CKV$ part of the gauge fields; the $\CKVp$ part will be taken care of
in Section \ref{ss5IX23.1}.

We start with the field $h_{uu}$. \underline{When  $m=0$} it follows from the conservation of the gauge-independent charge $\kQ{2}{}(\lambda)$ that the gluing of $\hBo_{uu}$ requires the data $x_{r_1}\in\dt{\secN_1}$ and $x_{r_2}\in\dt{\secN_2}$ to satisfy
\begin{equation}
    \kQ{2}{}[x_{r_1}]=\kQ{2}{}[x_{r_2}] \,.
    \label{7III23.3a}
\end{equation}
\underline{When  $m\neq 0$} and on $S^{n-1}$, part of the charge, $\kQ{2}{}{(\lambda^{[=1]})}{}$, can be matched using the gauge field $(\zspaceD_B\kxi{2}^B)^{[=1]}$ by setting:
\begin{align}
\label{16XI23.3}
    (\chi[x_{r_2}])^{[=1]} - (\chi[x_{r_1}])^{[=1]} = \frac{2 n m}{n-1} (\zspaceD_B\kxi{2}^B)^{[=1]}\,.
\end{align}

Let us pass now to  the field $\partial_r\hBo_{uA}$. It follows from the conservation of the radial charge
$\kQ{1}{}(\pi^A)$
that the gluing of $\partial_r\hBo_{uA}$ requires, for $\pi^A\in \CKV$,
\begin{align}
    \kQ{1}{}[x_{r_1}](\pi^A)-\kQ{1}{}[x_{r_2}](\pi^A) =  2 m n  \int_{\secN_2} \pi^A \zspaceD_A\kxi{2}^u \sm= -  2 m n  \int_{\secN_2} (\zspaceD_A\ \pi^A )\kxi{2}^u \sm
    \,.
     \label{5V23.1as}
\end{align}
\underline{When $m\neq 0$}
and when $(\secN, \ringh)$ admits proper conformal
Killing vectors (recall that this occurs only on the round $S^{n-1}$ in our setting), we can use the  freedom
in the choice of   $(\kxi{2}^u)^{[=1]}$ to guarantee that \eqref{5V23.1as} holds for $\pi^A$'s which are proper $\CKV$'s. We are thus left with the following gauge-invariant condition:
\begin{equation}
    \kQ{1}{}[ x_{r_1}]=\kQ{1}{}[ x_{r_2}]
     \ \mbox{for}
\
\pi^A \in
\left\{
  \begin{array}{ll}
     \KV, & \hbox{$m\ne 0$;} \\
    \CKV, & \hbox{$m=0$.}
  \end{array}
\right.
    \label{7III23.5b}
\end{equation}

 We note that since projections onto $\CKV$ is smooth, so are the solutions to \eqref{16XI23.3}-\eqref{5V23.1as}.
At this point, the $(n,k)$ convenient and inconvenient cases need to be considered separately:

\paragraph{The case of convenient pairs $(n,k)$.}
We start with the field $\overadd{i}{\Hf}_{uA}$. \underline{When  $m=0$}, we match the gauge-dependent radial charges using the gauge fields (see \eqref{26V23.1}) $(\partial_u^{i+1}\kxi{2}^A)^{[\ker( \overadd{i}\chi\circ\, C)]}$ according to
\begin{align}
    \overadd{i}{\Hf}{}_{uA}^{[\ker( \overadd{i}\chi\circ\, C)]}\big|_{r_1}^{r_2}
    =\begin{cases}
    - n (\partial_u\kxi{2}^A)^{[\CKV]}
    - \alpha^2 n (\zspaceD_A \kxi{2}^u)^{[\CKV]}
    \,, & i=0\,;
    \\
    - n (\partial_u^{i+1}\kxi{2}^A)^{[\ker(\overadd{i}\chi\circ\, C)]}
    + O(\alpha^2)
    \,,
    & 1 \leq i \leq k      \,.
    \end{cases}
    \label{8V23.4}
\end{align}
(See Appendix \ref{ss20X22.1} for an explicit
description
 of the space $\ker( \overadd{i}\chi\circ\, C)$ when $\secN \approx S^{n-1}$.)
The $O(\alpha^2)$ terms in the second line of~\eqref{8V23.4} contain gauge fields of lower $u$-derivatives, of the form $(\partial_u^j\kxi{2}^A)^{[\ker( \overadd{i}\chi\circ\, C)]}$ with $0 \leq j \leq i-1$,  and the gauge field $\kxi{2}^u$  (cf.\  comments below~\eqref{26V23.1}).
 Equation \eqref{8V23.4} can thus be solved recursively starting from $i=0$ using the value of $(\kxi{2}^u)^{[=1]}$ determined from \eqref{5V23.1as}.
 \underline{When  $m\neq 0$,} we use the same scheme, but with $\ker( \overadd{i}\chi\circ\, C)$ replaced with $\ker C = \CKV$ wherever it appears in this paragraph.
 Since all projections involved in \eqref{8V23.4} are onto smooth spaces, the solutions obtained are also smooth.
 There are no gauge-invariant obstructions associated to the field $\overadd{i}{\Hf}_{uA}$ in the $(n,k)$ convenient case.

Finally, the gauge-invariant radial charges associated to the field $\partial_u^p h_{AB}$ when \underline{$m=0=\alpha$} impose
the following conditions on $x_{r_1}$ and $x_{r_2}$:
        \begin{align}
        \label{21VIII23.w1}
            \int_{\secN_2} (\overadd{p}{\kerpsi})^{AB} \overadd{p}{q}_{AB} \sm
            - \int_{\secN_1} (\overadd{p}{\kerpsi})^{AB} \overadd{p}{q}_{AB} \sm = 0\,,
        \end{align}
for $1 \leq p \leq k$ and where $\overadd{p}{\kerpsi}\in\ker \overadd{p}{\psi}\ofP $, as defined in \eqref{23IV23.3}. (In the $(n,k)$ convenient case, when $m\neq 0$ or $\alpha\neq 0$, there are no radially conserved charges associated to $\overadd{p}{q}_{AB}$.) 

\paragraph{The case of  inconvenient pairs $(n,k)$.} In this case, we start with the field $\overadd{p}{q}_{AB}$. \underline{When $m=0 = \alpha$}, the gauge-invariant radial charges associated to $\overadd{p}{q}_{AB}$ for $1 \leq p \leq \frac{n-5}{2}$ impose the following conditions on $x_{r_1}$ and $x_{r_2}$:
        \begin{align}
        \label{19VIII23.1}
            \int_{\secN_2} \overadd{p}{\kerpsi}{}^{AB} \overadd{p}{q}_{AB} \sm
            - \int_{\secN_1} \overadd{p}{\kerpsi}{}^{AB} \overadd{p}{q}_{AB} \sm = 0\,, \qquad 1 \leq p \leq \frac{n-5}{2}
            \,,
        \end{align}
where $\overadd{p}{\kerpsi}\in\ker \overadd{p}{\psi}\ofP$, as defined in \eqref{23IV23.3}.
\underline{When $m=0$ but $\alpha \neq 0$}, there are no obstructions to gluing the fields $\overadd{p}{q}_{AB} \,, 1 \leq p \leq \frac{n-5}{2}$ {\bluec (cf.\ \eqref{17X23.10})}.

Next, for the field
 $\overadd{\frac{n-3}{2}}{q}_{AB}\in H^{k_{\gamma}-(n-3)}(\secN)$
in \underline{the case $m=0=\alpha$}, we match the associated gauge-dependent radial charge using the field $(\kxi{2}^u)^{[(\ker \hLop_n)^{\perp}]}$
according to
\begin{align}
    & \forall
    \
     \overadd{\frac{n-3}{2}}{\kerpsi}{}^{AB} \in \ker \overadd{\frac{n-3}{2}}{\psi}\ofP \, \cap (\ker \hLop_{n}^\dagger)^{\perp}
 \nonumber
    \\
   & \int_{\secN_1} \overadd{\frac{n-3}{2}}{\kerpsi}{}^{AB} \overadd{\frac{n-3}{2}}{q}_{AB} \sm
            - \int_{\secN_2} \overadd{\frac{n-3}{2}}{\kerpsi}{}^{AB} \overadd{\frac{n-3}{2}}{q}_{AB} \sm
            = 
            \int_{\secN_2} \overadd{\frac{n-3}{2}}{\kerpsi}{}^{AB} \hLop_n(\kxi{2}^u)_{AB}  \sm \,,
            \label{14VI.3}
\end{align}
where the operator $\hLop_n$, which is of order $n-1$, has been defined in \eqref{24IV23.1xs}-\eqref{6VI23.f2}. Thus, the solution $(\kxi{2}^u)^{[(\ker \hLop_n)^{\perp}]}$ lies in $ H^{k_{\gamma}+2}(\secN)$, consistently with \eqref{12V24.4},
 by ellipticity of $\zdivone\circ\zdivtwo\circ\hLop_n$.

The gauge-invariant radial charge imposes further the condition,
\begin{align}
   \forall
    \
    \overadd{\frac{n-3}{2}}{\kerpsi}{}^{AB} \in
     & \ker \overadd{\frac{n-3}{2}}{\psi}\ofP \, \cap (\ker \hLop_{n}^\dagger)
     \nonumber
    \\
   & \int_{\secN_1} \overadd{\frac{n-3}{2}}{\kerpsi}{}^{AB} \overadd{\frac{n-3}{2}}{q}_{AB} \sm
            - \int_{\secN_2} \overadd{\frac{n-3}{2}}{\kerpsi}{}^{AB} \overadd{\frac{n-3}{2}}{q}_{AB} \sm
            = 0
            \,,
            \label{7X23.1b}
\end{align}
on $x_{r_1}$ and $x_{r_2}$.

In \underline{the case $m=0$ but $\alpha \neq 0$}, Equation \eqref{14VI.3} for $\kxi{2}^u$ is replaced by (cf.\ \eqref{17X23.2})
\begin{align}
    (\kQ{3}{}[x_{r_2}]- \kQ{3}{}[x_{r_1}])^{[\im \mrL\circ \hLop_{n}]} = \mrL\circ \hLop_{n}(\kxi{2}^u) \,,
    \label{17X23.7}
\end{align}
 with $\kQ{3}{}\in H^{k_{\gamma}-(n-3)-2}$ (cf.\ \eqref{17X23.1}), which determines $(\kxi{2}^u)^{[(\ker(\mrL\circ \hLop_n))^\perp]}\in H^{k_{\gamma}+2}$ uniquely.
In addition, the radial obstruction 
\eqref{7X23.1b}, is replaced by
\begin{align}
     (\kQ{3}{}[x_{r_2}]- \kQ{3}{}[x_{r_1}])^{[(\im (\mrL\circ \hLop_{n})^\perp]} = 0\,.
     \label{17X23.9}
\end{align}
%

A similar analysis applies to the fields $\overadd{p}{q}_{AB}$ with index $p \geq \frac{n-1}{2}$. For this we first note that
\begin{equation}
\label{15VI.3}
\TTt^\perp \subseteq \ker \overadd{p}{\psi}\ofP
\,,
\end{equation}
as follows from $\TTtp = \im C$ and \peqref{12VI.1m}.
Meanwhile, the expression \eqref{24IV23.3} for $\Lop$ indicates that
\begin{equation}
 \im\,\Lop = (\ker \Lndagger)^\perp  \subseteq \TTt^\perp
   \,,
   \label{13XI23.5}
\end{equation}
and thus
\begin{equation}
 \im\,\Lop \cap \ker \overadd{p}{\psi}\ofP =\im\,\Lop\,.
   \label{13XI23.55}
\end{equation}
%

We consider now the fields
 $\overadd{\frac{n-1}{2} + j}{q}_{AB}\in H^{k_{\gamma}-(n-1)-2j}(\secN)$.
In the case \underline{$m=0=\alpha$},  $j\geq 0$, we match the associated gauge-dependent radial charges using the gauge fields
$(\partial^{j}_u \kxi{2}_A)^{[(\ker \Lop)^{\perp}]}$, which are in $ H^{k_{\gamma}+1-2j}(\secN)$  by ellipticity of $\zdivtwo\circ \Lop$ (which follows from, e.g.,~\eqref{10V24.41}), according to
%
\begin{align}
    \overadd{\frac{n-1}{2} + j}{q}{}_{AB}^{[\im \Lop]}|_{\secN_1}
   - \overadd{\frac{n-1}{2} + j}{q}{}_{AB}^{[ \im \Lop]}|_{\secN_2}
            = \Lop[ (\partial^{j}_u \kxi{2}_A)^{[(\ker \Lop)^{\perp}]}]_{AB}  \,.
            \label{14VI.2}
\end{align}
%
The gauge-invariant radial charge imposes the condition
\begin{align}
\overadd{\frac{n-1}{2} + j}{q}{}_{AB}^{
[
  (\im \Lop)^\perp
   \cap \ker \overadd{p}{\psi}
   ]
    }|_{\secN_1}
 -
\overadd{\frac{n-1}{2} + j}{q}{}_{AB}^{
[
  (\im \Lop)^\perp
   \cap \ker \overadd{p}{\psi}
   ]
    }|_{\secN_2}
            = 0
            \label{26VI.1}
\end{align}
on the data. It follows from~\eqref{13XI23.55} that~\eqref{14VI.2}-\eqref{26VI.1} take care of the projection $\overadd{\frac{n-1}{2} + j}{q}{}_{AB}^{[\ker\overadd{p}{\psi}]}$.

When \underline{$m=0$ but $\alpha\neq 0$}, we continue using \eqref{14VI.2} and \eqref{26VI.1} for $j\geq 1$. However, for $j=0$
Equation \eqref{14VI.2} for $\kxi{2}_A$ is replaced by
\begin{align}
    (\kQ{4}{}[x_{r_2}]- \kQ{4}{}[x_{r_1}])^{[\im (\zdivtwo \circ \Lop)]} = \zdivtwo \Lop(\kxi{2})  \,,
    \label{18X23.7}
\end{align}
which determines $\kxi{2}{}_A^{[(\ker(\zdivtwo \circ \Lop))^\perp]}$ uniquely.
 Since $\kQ{4}{}\in H^{k_{\gamma}-(n-1)-1}(\secN)$ and the elliptic operator $\zdivtwo\circ\Lop$ is of order $n+1$, the solution $\kxi{2}{}_A^{[(\ker(\zdivtwo \circ \Lop))^\perp]}$ lies in $H^{k_{\gamma}+1}(\secN)$.
In addition, the radial obstruction \eqref{26VI.1} is replaced by
\begin{align}
     (\kQ{4}{}[x_{r_2}]- \kQ{4}{}[x_{r_1}])^{[(\im (\zdivtwo \circ \Lop))^\perp]} = 0\,.
     \label{18X23.9}
\end{align}

\underline{When $m\neq 0$}, there are no radially conserved charges and hence no conditions on the data are required for the gluing of $\overadd{p}{q}_{AB}$ for $1\leq p\leq k$. The gauge fields in this case will be determined below (see Section \ref{ss5IX23.1}).

Finally, for the gauge-dependent radial charges $\overadd{i}{\Hf}{}^{[\ker( \overadd{i}\chi\circ\, C)]}_{uA}$, in the case \underline{when $m=0$}, we
can match the $\CKV\in \ker( \overadd{i}\chi\circ\, C)$-part of the charges using the gauge fields $(\partial_u^{i+1}\kxi{u}^A)^{[\CKV]}$ according to, for integers $0\leq i \leq k$ (cf.\ \eqref{26V23.1}),
\ptcheck{7VI24}
\begin{align}
    \overadd{i}{\Hf}{}_{uA}^{[\CKV]}\big|_{r_1}^{r_2}
    =\begin{cases}
    - n (\partial_u\kxi{2}^A)^{[\CKV ]}
    - \alpha^2 n (\zspaceD_A \kxi{2}^u)^{[\CKV]}\,, & i=0\,,
    \\
    - n (\partial_u^{i+1}\kxi{2}^A)^{[\CKV]}
    + O(\alpha^2)\,,
    & 1 \leq i \leq k
    \,,
    \end{cases}
    \label{8V23.4b}
\end{align}
where we note that the fields $(\partial_u^{i+1}\kxi{2}^A)^{[\CKV]}$ remain free at this point since $\CKV\in \ker \Lop$ and $\ker \hLop_n$ (cf.\ below \eqref{5XI23.31} and \eqref{30VI.1}).
As in the convenient case, the $O(\alpha^2)$ terms in~\eqref{8V23.4b} depend only on gauge fields with lower
$u$-derivatives, i.e., $\partial_u^j\kxi{2}^A$, with $0 \leq j \leq i-1$, and the gauge field $\kxi{2}^u$.
 Thus Equation \eqref{8V23.4b} can be solved recursively starting from $i=0$ by using the value of $\kxi{2}^u$ determined from \eqref{5V23.1as} and \eqref{14VI.3}.
 Since the fields resulting from the projections involved in \eqref{8V23.4b} are smooth, so are the solutions obtained.

 Next, we move on to the $\CKVp$-part of the radial charges.
 We show in Appendix \ref{App5VI24.1a}
 that there are no obstructions to gluing $\overadd{i}{\Hf}{}_{uA}^{[\CKVp]}$ when $\myGauss \leq 0$.
On the other hand, when $\myGauss>0$, the obstructions associated to $\overadd{i}{\Hf}{}_{uA}$ for $1\leq i \leq  k - \frac{n+1}{2} $ (there are no obstructions if $k<\frac{n+3}{2}$) impose the following conditions on $x_{r_1}$ and $x_{r_2}$:
 \begin{align}
     \kQ{5,i}{}{}_B(x_{r_1}) = \kQ{5,i}{}{}_B(x_{r_2})\,.
     \label{5VI24.23}
 \end{align}
The obstructions $\kQ{5,i}{}{}_B$ are of the form%
\index{Q@$\kQ{5,i}{}{}$}
\begin{align}
\label{7VI24.21}
    \kQ{5,i}{}{} &:= 
    \zdivtwo\circ \Lop
    \Big[\overadd{i}{\Hf}{}^{[\CKVp\cap\ker(\overadd{i}{\chi}\circ C)]}_{uA} \Big]
     + (...) \,,
    \end{align}
    with $(...)$ depending on the radial charges $\overadd{j}{q}{}^{[\ker\overadd{i}{\psi}]}_{AB}$, $\kQ{3}{}$ and $\kQ{4}{}$ (see \eqref{5VI24.21}-\eqref{5VI24.22} for full expressions).
Now, since $\ker(\zdivtwo\circ \Lop) = \{0\}$ 
 when $\myGauss >0$ (cf.\ \eqref{8VI24.7}), imposing \eqref{5VI24.23} ensures the gluing of the projection
 \begin{align}
     \label{14VI.4}
     \overadd{i}{\Hf}{}_{uA}^{[\CKVp\cap\ker( \overadd{i}\chi\circ\, C)]}
 \end{align}
 after the radial charges $\overadd{j}{q}{}^{[\ker\overadd{i}{\psi}]}_{AB}$, $\kQ{3}{}$ and $\kQ{4}{}$
 have been matched.

 For $\max(k-\frac{n-1}{2} ,1)\leq i \leq k$, the gauge-dependent radial charge $\overadd{i}{\Hf}{}_{uA}^{[\CKVp\cap\ker( \overadd{i}\chi\circ\, C)]}$ is matched using the smooth gauge fields $\partial_u^{i+1}\xi_A^{[\CKVp\cap\ker( \overadd{i}\chi\circ\, C)]}$ according to
 \begin{align}
     \overadd{i}{\Hf}{}_{uA}^{[\CKVp\cap\ker( \overadd{i}\chi\circ\, C)]} = \partial_u^{i+1}\xi_A^{[\CKVp\cap\ker( \overadd{i}\chi\circ\, C)]} + O(\alpha^2)\,.
     \label{6VI24.31}
 \end{align}
 As before, the $O(\alpha^2)$ term in~\eqref{6VI24.31} depends only on gauge fields of lower $u$ derivatives, i.e., $\partial_u^j\kxi{2}^A$, with $0 \leq j \leq i-1$, and the field $\kxi{2}^u$.
 Thus Equation \eqref{6VI24.31} can be solved recursively starting from $i=k-\frac{n-1}{2} $ by using the values of $\partial_u^j\kxi{2}^A$, with $0 \leq j \leq k-\frac{n-1}{2}$, and that of the field $\kxi{2}^u$ as determined from \eqref{14VI.3}-\eqref{18X23.7}.

 \underline{When $m \neq 0$ and for all  $\alpha$}, we continue using \eqref{8V23.4b}; no conditions on $x_{r_1}$ and $x_{r_2}$ are needed in this case. The gauge fields needed for the gluing of $\overadd{i}{\Hf}{}^{[\CKVp]}_{uA}$ will be derived in Section~\ref{ss5IX23.1} below.

The above gauge invariant conditions on $x_{r_1}$ and $x_{r_2}$, together with the  equations for the gauge fields, solve the $L^2$-orthogonal projections of the   transport equations for the fields in the collection \eqref{4V23.1} onto their \textit{radially conserved components}. In what follows, we analyse the conditions which arise from the \textit{radially non-conserved components} of the fields. These will become equations for the interpolating fields $\kphi{p}_{AB}$. Their solutions will depend on the values of the gauge fields as determined above. 

\subsection{Undifferentiated equations}

 We begin with an analysis of the remaining characteristic equations which do not involve $u$-differentiated fields.
 In what follows, the given data at $\secN_1$ and $\secN_2$ will be assumed to be in $\Hkdt$, as defined in \eqref{10V24.11}.

\subsubsection{Continuity of $\partial_r\tilde h_{uA}$}
 \label{ss3VIII22.1}

Taking into account the allowed gauge perturbations to Bondi data~\eqref{30IX23.1} and the transport equation \eqref{24VII22.1b}, the gluing of $\partial_{\tdr}\tilde{\hBo}_{uA}$ requires $\tilde h_{AB}$ to satisfy on $\tmcN_{(r_1,r_2)}$,
\begin{align}
    \red{\overadd{*}{\Hf}_{uA}}|_{\secN_2} - \overadd{*}{\tilde\Hf}_{uA}|_{\tilde\secN_1}
     &=2r_2^{n-2} \Done (\kxi{2}^u)_A + 2 r_2^{n-1}\zspaceD^{\tdB} C(\kzeta{2})_{AB}
     - 2m n \zspaceD_A\kxi{2}^u
                \nonumber
        \\
        &\qquad
        - (n-1)  \int_{r_1}^{r_2} \hkappa_{4-n}(s) \zspaceD^B \tilde h_{AB}
                  \,.
                   \label{7III23.4}
\end{align}
In this transport equation, and the ones that follow, the gauge fields made explicit come from the gauge transformation
at $r=r_2$ of the field that is transported;
$\overadd{*}{\Hf}_{uA}$ in this case. Those gauge fields coming from the gauge transformations at $r=r_1$, and
from the gauge-corrections  \eqref{16III22.2old} both at $r=r_1$ and  $r_2$ of  the field $\tilde h_{AB}$,
 are implicit in the notation. 

For all cases except the $(n,k)$ inconvenient, $m\alpha\neq 0$ case, we solve the projection of \eqref{7III23.4}  onto $\big[\ker (\zdivtwo^\dagger)\big]^\perp$ using the field
$\kphi{4-n}_{AB}^{[\TTt^\perp]}$.

We emphasise that a unique solution $\kphi{4-n}_{AB}^{[\TTt^\perp]}$ to the projection of \eqref{7III23.4} onto
$$
 \big[\ker (\zdivtwo^\dagger)\big]^\perp = \CKVp = \im (\zdivtwo)
$$
exists. The solution will depend on the data $\dt{\secN_1}$ and $\dt{\secN_2}$ as well as the value of the gauge fields
which has been determined in Sections \ref{ss1VIII22.1} and \ref{s10IX22.1}. A straightfoward comparison with what has been said so far about the regularity of the other fields appearing in \eqref{7III23.4} shows that the solution $\kphi{4-n}_{AB}^{[\TTt^\perp]}\in H^{k_{\gamma}}(\secN)$.

The case of  inconvenient pairs $(n,k)$ with $m\alpha\neq 0$ is analysed in Appendix \ref{ss18XI23.1}.

\subsubsection{Continuity of $\tilde\hBo_{uA}$}

Taking into account the allowed gauge perturbations to Bondi data, it follows from~\eqref{21II23.2}
and \eqref{28II23.1} that the gluing of $\tilde{\hBo}_{uA}$ requires
\begin{align}
    & \overadd{0}{\Hf}_{uA}|_{\secN_2} - \overadd{0}{\tilde\Hf}_{uA}|_{\tilde\secN_1} - \frac{1}{r_2}\left ( (\TSzlap +(n-2)\twoscsign) + \frac{n-3}{n-1} \zspaceD_A\zspaceD^B \right) \kxi{2}_B
    \nonumber
\\
    & \quad
    + n \partial_u \kxi{2}_A + \frac{n-4}{r^2} \Done (\kxi{2}^u)_A + \alpha^2 n \zspaceD_A \kxi{2}^u
      =
           \int_{r_1}^{r_2}
        \hkappa_{4}(s) \zspaceD^B \tilde{\hBo}_{AB}ds
          \,.
                   \label{7III23.7}
\end{align}
For all cases except the $(n,k)$ inconvenient, $m\alpha\neq 0$ case, we solve the projection of this equation onto $\big[\ker (\zdivtwo^\dagger)\big]^\perp = \CKVp = \im (\zdivtwo)$ using the field $\kphi{4}^{[\TTt^\perp]}_{AB}$ in terms of the given data and the predetermined gauge fields.

For the case of  inconvenient pairs $(n,k)$ with $m\alpha\neq 0$, see Appendix \ref{ss18XI23.1}. 
\subsubsection{Continuity of $\tilde\hBo_{uu}$}
 \label{ss24IX22.1}

The transport equation \eqref{1XI22.w1} together with the gauge transformation \eqref{1III23.1} of $\chi$ results in the following condition for the continuity of $h_{uu}$ at $r_2$:
\begin{align}
   & \chi|_{\secN_2} - \tilde\chi|_{\tilde{\secN}_1} + \frac{2r_2^{n-2}(n-3)}{n-1} \bigg(\frac{1}{n-1}\TSzlap + \twoscsign\bigg) \zspaceD_B \kxi{2}^B
   -  \frac{2 n m}{n-1} \zspaceD_B \kxi{2}^B
    \nonumber
    \\
    &\qquad
    + \frac{2r_2^{n-3}}{(n-1)^2} \TSzlap ( \TSzlap + (n-1) \twoscsign) \kxi{2}^u = \frac{n-3}{n-1}  \int_{r_1}^{r_2} \hkappa_{5-n}(s)   \mrL(\tilde h) ds \,.
    \label{7III23.1}
\end{align}
For all cases except the $(n,k)$ inconvenient, $m\alpha\neq 0$ case, the continuity at $r_2$ of the part of $\chi$ which lies in the image
$\im \, \mrL$ of the   operator
  $\mrL:= \zdivone\circ\zdivtwo$, can be achieved by solving
the projection onto $\im \, \mrL$ of \eqref{7III23.1} for a unique field
\begin{equation}\label{10V24.42}
 \kphi{5-n}_{AB}^{[(\ker \mrL)^\perp]}\in S\cap H^{k_{\gamma}}(\secN)
\end{equation}
(compare~\eqref{10VI23.3})
in terms of the predetermined gauge fields.
{We note for future reference that $\kphi{5-n}{}{}^{[\ker \mrL]}_{AB}$ remains free up to this point.}
%
The reader is referred to Appendix \ref{ss18XI23.1} for the case of  inconvenient pairs $(n,k)$ with $m\alpha\neq 0$. 


\subsection{Higher derivatives}
\label{ss26XI22.3}

\subsubsection{Continuity of $\partial_u^p \tilde h_{uA}$}

Taking into account the allowed gauge perturbations of  Bondi data, it follows from \eqref{6III23.w1a} and \eqref{26V23.1} that the gluing of $\partial_u^p \tilde h_{uA}$, $1 \leq p \leq k$ requires
 \begin{align}
      \overadd{p}{\Hf}_{uA}\big|_{\secN_2} - \overadd{p}{\tilde\Hf}_{uA}\big|_{\tilde\secN_1}
       &=
       \zspaceD^B \overadd{p}{\chi}\ofP \, \int_{r_1}^{r_2} \hkappa_{p+4}(s) \tilde h_{AB} \ ds
    +  m^p  \overadd{p}{\chi}_{[m]} \int_{r_1}^{r_2} \hkappa_{p(n-1)+4}(s) \zspaceD^B \tilde h_{AB} \ ds
     \nonumber
\\
    & \quad
    + \sum_{j,\ell}^{p_*}  m^{j} \alpha^{2\ell} \zspaceD^B \overadd{p}{\chi}_{j,\ell}\ofP \, \int_{r_1}^{r_2} \hkappa_{(p + 4) + j (n - 2) - 2 \ell}(s) \tilde h_{AB} \ ds
    \nonumber
\\
    &\quad
     - n \partial_u^{p+1} \kxi{2}_A
    + \text{ gauge fields with lower $u$-derivatives} \,;
    \label{11III23.1}
 \end{align}
 recall that the sum $\sum^{p_*}_{j,\ell}$ has been defined in \eqref{19V23.3-1}.

\underline{When $m=0$}, the projection of \eqref{11III23.1} onto
$$
 \red{\im \, [ \zdivtwo\circ\overadd{p}{\chi}\ofP  } ]
 =
 (\ker[ \overadd{p}{\chi}\ofP \, \circ\, C])^\perp
 $$
   can be solved uniquely for a field
 \begin{equation}
  \label{10V24.44}
  \kphi{p+4}_{AB}^{[(\red{\ker( \zdivtwo\circ\overadd{p}{\chi}\ofP } )^\perp]}\in (S\oplus V)\cap H^{k_{\gamma}}(\secN)
  \end{equation}
 (compare~\eqref{10VI23.3})
  in terms of the given data and predetermined gauge fields for $1\leq p \leq k$. We leave out the explicit formula of the gauge fields;
  these can be determined order-by-order in the index $p$ using the recursion formula \eqref{6III23.w4} and the gauge transformations \eqref{17III22.1HD}-\eqref{17III22HD}.

We move on now to the case \underline{$m\neq 0$}.
\paragraph{The \red{case of convenient pairs $(n,k)$}.}
First, recall that the projection of \eqref{11III23.1} onto $\CKV$ has been solved in the paragraph below \eqref{8V23.4}. We thus restrict our attention now to the projection of \eqref{11III23.1} onto $\CKVp$. Since $\zdivtwo$ is surjective onto $\CKVp$,
and
$\ker \zdivtwo = \TTt$,
it makes sense to rewrite this projection as%
\index{S@$\overadd{p}{S}_{AB}$}
 \begin{align}
     & \underbrace{\zdivtwo^{-1}(\overadd{p}{\Hf}_{uA}\big|_{\secN_2} - \overadd{p}{\tilde\Hf}_{uA}\big|_{\tilde\secN_1})^{[\CKVp]}
      + \text{known fields}}_{ =: \overadd{p}{S}_{AB}}
      \nonumber
\\
       &=
       \overadd{p}{\chi}\ofP \, \kphi{p+4}_{AB}^{[\TTtp]}
    +  m^p  \overadd{p}{\chi}_{[m]} \kphit{AB}{p(n-1)+4}^{[\TTtp]}
     \nonumber
\\
    & \quad
    + \sum_{\substack{j,\ell,
    \\
    (p + 4) + j (n - 2) - 2 \ell >4 }}^{p_*}
    m^{j} \alpha^{2\ell} \overadd{p}{\chi}_{j,\ell}\ofP \, \kphit{AB}{(p + 4) + j (n - 2) - 2 \ell}^{[\TTtp]}\,,
    \label{19VIII23.2}
 \end{align}
 with $\overadd{p}{S}_{AB}\in H^{k_{\gamma}-2p }(\secN)$.
In \eqref{19VIII23.2}, the term ``known fields''  denotes fields which have already been determined up to this point. These include (the $\CKVp$ projections of) all the gauge fields and the fields $\interph_{AB}^{[\TTtp]}$ and $\kphi{4}_{AB}^{[\TTtp]}$; this last field is the reason for the exclusion of $(p + 4) + j (n - 2) - 2 \ell=4$ from the summation.
\ptcheck{the whole section up to here 27VIII}

 The last term in the  right-hand side of \eqref{19VIII23.2} takes the form (cf.~\eqref{6III23.w1a}-\eqref{19V23.3-1} with $i$ there replaced by $p$)
\begin{eqnarray}
 \lefteqn{
 \sum_{\substack{j,\ell,
    \\
    (p + 4) + j (n - 2) - 2 \ell >4 }}^{p_*}
    m^{j} \alpha^{2\ell} \overadd{p}{\chi}_{j,\ell}\ofP \, \kphit{AB}{(p + 4) + j (n - 2) - 2 \ell}^{[\TTtp]}
    } \nonumber
    \\
     & = &
      \sum_{\substack{  1\le j \le p-1\,,\ 0\le\ell \le p - j 
    \\
   i =  (p + 4) + j (n - 2) - 2 \ell >4 }}
    m^{j} \alpha^{2\ell} \overadd{p}{\chi}_{j,\ell}\ofP \, \kphi{i}{}_{AB}^{[\TTtp]}
    \,.
    \label{27VIII23.1}
\end{eqnarray}
We have
$$
 i=(p + 4) + j (n - 2) - 2 \ell\le(p + 4) + j (n - 2)  < (p + 4) +p (n - 2)= p(n-1)+4
 \,,
$$
and to simplify notation it is convenient to rewrite \eqref{19VIII23.2} as

 \begin{align}
     \overadd{p}{S}_{AB}
       &=
    \sum_{i=5}^{p(n-1)+4}
    \overadd{p}{\chi}_{i} \kphi{i}_{AB}^{[\TTtp]}\,,
    \label{20VIII23.1}
 \end{align}
 with some operators  $ \overadd{p}{\chi}_{i} $  (possibly zero).
 For example,
 one finds%
 \index{chi@$\overadd{p}{\chi}_{i}$}%
 \ptcheck{27VIII}
 \begin{align}
     \overadd{1}{\chi}_{5} &= \overadd{1}{\chi}\ofP \, \,, \quad
     \overadd{1}{\chi}_{(n-1)+4} =
      m
      \overadd{1}{\chi}_{[m]}\, ,
     \\
     \overadd{2n}{\chi}_{2n+4} &= \overadd{2n}{\chi}\ofP
     + \sum_{j=1}^{2n-1} m^j \alpha ^{\frac{j(n-2)}2}\overadd{2n}{\chi}_{j,\frac{j(n-2)}{2}}\ofP \, \,,  \quad \text{if $n$ is even.}
 \end{align}
 The following facts can be easily verified and will be useful for the analysis below:
 \begin{enumerate}
     \item[(i)]
 \label{e3IX23.1} each of the coefficients $\overadd{p}{\chi}_{j}$ are either numbers, or products of commuting operators of the form
     \index{L@$\operatorname{L}_{a.c}$}%
     \begin{equation}
         \operatorname{L}_{a,c} := a P + \TSzlap + 2 \tric + c\,, \quad a,c \in \R \,,
         \label{20VIII23.3}
     \end{equation}
     or sums thereof;
     
     \item[(ii)] 
     for any $1 \leq p \leq k$, the operators $\overadd{p}{\chi}_j$ are partial differential operators on $\secN$ of order less than or equal $2p$, with equality achieved only by $\overadd{p}{\chi}_{4+p}$. In addition, we have%
 \index{chi@$\overadd{p}{\chi}_{4+p}$}%
  \ptcheck{27VIII}
  \begin{equation}
         \overadd{p}{\chi}_{4+p} = \overadd{p}{\chi}\ofP  + \ell.o. \,,
     \end{equation}
     where ``$\ell.o.$'' refers to operators of lower order (i.e., $< 2p$ in this case). It follows from the ellipticity of $\overadd{p}{\chi}\ofP$ (see Proposition \ref{P18VIII23.1} below) that $\overadd{p}{\chi}_{4+p}$ is elliptic;
     
     \item[(iii)] 
     for each $1 \leq p \leq k$, the coefficients $\overadd{p}{\chi}_{p(n-1)+4}$ are non-vanishing numbers  (cf.\ \eqref{6III23.w2b} and \eqref{17IV23.1}),  and are given by $ m^p \overadd{p}{\chi}_{[m]}$, while $\overadd{p}{\chi}_{j} = 0$ for $j >p(n-1)+4 $.
  \ptcheck{27VIII}
 \end{enumerate}

Now, for $k \ge 1$ we let
$$
 \Phi_k := \left(
             \begin{array}{c}
               \kphi{5}{}^{[\TTtp]}_{AB} \\
               \kphi{6}{}^{[\TTtp]}_{AB} \\
               \vdots \\
                \kphi{4+k}{}^{[\TTtp]}_{AB}  \\
             \end{array}
           \right)
           \,,
           \qquad
 \Psi_k := \left(
             \begin{array}{c}
               \kphi{5+k}{}^{[\TTtp]}_{AB} \\
               \kphi{6+k}{}^{[\TTtp]}_{AB} \\
               \vdots \\
                \kphit{AB}{4+k(n-1)}{}^{[\TTtp]}  \\
             \end{array}
           \right)
           \,,
           \qquad
S_k := \left(
             \begin{array}{c}
                \overadd{1}{S}_{AB} \\
                \overadd{2}{S}_{AB} \\
               \vdots \\
                \overadd{k}{S}_{AB}  \\
             \end{array}
           \right)
           \,.
 $$
 The system \eqref{19VIII23.2} for $1 \leq p \leq k$ takes the form
 \begin{equation}\label{26VII23.1}
    S_k = \chi_k \Phi_k +  M_k \Psi_k
   \,,
 \end{equation}
 where
 \begin{equation}\label{26VII23.2}
   \chi_k =\left(
          \begin{array}{ccc}
            \overadd{1}{\chi}_{5} & \ldots &  \overadd{1}{\chi}_{4+k} \\
            \vdots &  \ddots & \vdots \\
            \overadd{k}{\chi}_{5} & \ldots &  \overadd{k}{\chi}_{4+k}  \\
          \end{array}
        \right)
        \,,
        \quad
   M_k =\left(
          \begin{array}{ccc}
            \overadd{1}{\chi}_{5 + k}  & \ldots &  \overadd{1}{\chi}_{4 + k(n-1)} \\
            \vdots &  \ddots & \vdots \\
             \overadd{k}{\chi}_{5}  & \ldots & \overadd{k}{\chi}_{4 + k(n-1)}  \\
          \end{array}
        \right)
        \,.
 \end{equation}
We continue by noting that the operator $\chi_k$  is elliptic in the sense of Agmon, Douglis and Nirenberg (cf.\ Appendix \ref{ss24IX23.2} or e.g.,~\cite{MorreyNirenberg}). 
  Indeed, it follows from item (ii) above that in the
   $p$'th row
   of $\chi_k$ the operators $\overadd{p}{\chi}_{j}$ are of order less than $2p$, except for the  operator $\overadd{p}{\chi}_{4+p}$ lying on the diagonal, which is elliptic and precisely of order $2p$. This shows that the Agmon-Douglis-Nirenberg condition on the order  $s_p+t_j$ of $\overadd{p}{\chi}_{j}$ holds by setting $s_p=2p  $ and $t_j = 0$.
   The Agmon-Douglis-Nirenberg symbol is a  block diagonal matrix with blocks on the diagonal having non-zero determinants,
    and ellipticity readily follows.
   This results in the estimate~\cite[Theorem~C]{MorreyNirenberg}:

   \begin{proposition}
   \label{pp30IV24.1}
    For all $k_\gamma\ge 2k$ we have
   \begin{equation}\label{28VIII23.11}
     \sum_{p=5}^{4+k} \| \kphi{p}{}\|_{k_\gamma} \le C (k,k_\gamma)
     \sum_{p=5}^{4+k}
     \big(
      \| (S_k - M_k \Psi_k)_p\|_{k_\gamma-2p }
       + \| \kphi{p}{}\|_0
     \big)
     \,,
   \end{equation}
   where $\|\cdot\|_k$ is the $H^k(\secN)$-norm, and where $(S_k - M_k \Psi_k)_p$ denotes the $p$-th entry of the vector $S_k - M_k \Psi_k$.
   \qed
   \end{proposition}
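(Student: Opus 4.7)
The plan is to read \eqref{26VII23.1} as an Agmon--Douglis--Nirenberg (ADN) system for the unknown vector $\Phi_k$ and to reduce the proposition to the \emph{a priori} elliptic estimate of Morrey and Nirenberg~\cite{MorreyNirenberg}. Following the hint in the text, I assign the ADN weights $s_p = 2p$ to the $p$-th row of $\chi_k$ (i.e.\ to the equation for $\overadd{p}{S}_{AB}$, $1\le p\le k$), and $t_j = 0$ to each component of $\Phi_k$. By item~(ii) above, every entry $\overadd{p}{\chi}_{j}$ is a partial differential operator of order at most $s_p + t_j = 2p$; moreover, among the entries of the $p$-th row only the diagonal one $\overadd{p}{\chi}_{4+p} = \overadd{p}{\chi}\ofP + \mathrm{l.o.}$ realises that order. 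Consequently the matrix of order-$(s_p+t_j)$ symbols is diagonal, with its $(p,p)$-entry equal to the principal symbol of $\overadd{p}{\chi}\ofP$.

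The ADN-ellipticity of the full system therefore reduces to the ellipticity of each individual $\overadd{p}{\chi}\ofP$, which is the content of the forthcoming Proposition~\ref{P18VIII23.1}. Granted that, the ADN symbol is invertible on every non-zero cotangent direction, and the standard Agmon--Douglis--Nirenberg estimate (Theorem~C of~\cite{MorreyNirenberg}, applied to sections of the relevant tensor bundle over the closed manifold $\secN$) yields precisely the bound~\eqref{28VIII23.11}. The $L^2$ terms $\|\kphi{p}{}\|_0$ on the right-hand side are the standard compensation for the possibly non-trivial but finite-dimensional $L^2$-kernel of $\chi_k$ on a closed manifold, and the assumption $k_\gamma \ge 2k$ is the minimal regularity compatible with gaining $s_p = 2p$ derivatives in each row.

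The main obstacle: everything up to and including the order bookkeeping, the triangular (in fact diagonal) structure of the top-order symbol, and the invocation of~\cite{MorreyNirenberg} is mechanical. The sole genuinely nontrivial input is the pointwise invertibility, on every non-zero cotangent direction, of the order-$2p$ symbol of $\overadd{p}{\chi}\ofP$; this has to be read off from the product representation~\eqref{6III23.w2}--\eqref{6III23.w3} of $\overadd{p}{\chi}\ofP$ as a composition of operators built from $\TSzlap$, $P$ and $\tric$, and is the point at which care is required, particularly when the pair $(n,k)$ is inconvenient.
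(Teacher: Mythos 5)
Your proposal is correct and follows essentially the same route as the paper: assign the Agmon--Douglis--Nirenberg weights $s_p=2p$, $t_j=0$, observe that only the diagonal entry $\overadd{p}{\chi}_{4+p}$ attains the maximal order so the ADN principal symbol is (block) triangular with elliptic diagonal blocks coming from $\overadd{p}{\chi}\ofP$ (Proposition~\ref{P18VIII23.1}), and then invoke Theorem~C of~\cite{MorreyNirenberg}. Your closing caution about inconvenient pairs $(n,k)$ is unnecessary here, since the operators $\overadd{p}{\chi}\ofP$ are elliptic in all cases (it is the $\overadd{p}{\psi}\ofP$ operators, not the $\overadd{p}{\chi}\ofP$, whose ellipticity degenerates in the inconvenient case).
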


  Note that the operators $\operatorname{L}_{a,c}$ of \eqref{20VIII23.3} are elliptic (see Lemma \ref{L18VIII23.1}), self-adjoint and pairwise commuting (see paragraph above \eqref{20VI23.12}). These properties imply the existence of a complete set of smooth,
 pairwise $L^2$-orthogonal, joint eigenfunctions $\phi_\ell$ of all the $\operatorname{L}_{a,c}$'s appearing in $\chi_k$ and $M_k$, with a corresponding discrete set of eigenvalues   $\lambda_{a,c,\ell}$ satisfying
 $$
  |\lambda_{a,c,\ell}|\to_{\ell \to \infty} \infty
   \,.
$$
We can therefore write
 \begin{equation}\label{29VII23}
   \Phi_k= \sum_{\ell} \Phi_{k,\ell} \phi_\ell
   \,, \quad
   \Psi_k= \sum_{\ell} \Psi_{k,\ell} \phi_\ell
   \,, \quad
  S_k = \sum_{\ell} S_{k,\ell} \phi_\ell
   \,.
 \end{equation}
The rest of the argument is essentially a repetition of that of~\cite{ChCong1}, we reproduce it here for the convenience of the reader:
 It  follows from item (i) above that \eqref{26VII23.1} can be solved mode-by-mode:
 \begin{align}
  \chi_k \Phi_{k,\ell}  + M_k \Psi_{k,\ell}&= S_{k,\ell}
  \quad
    \Longleftrightarrow
    \quad
   \chi_k|_{\operatorname{L}_{a,c}\mapsto \lambda_{a,c,\ell}} \Phi_{k,\ell}+ M_k|_{\operatorname{L}_{a,c}\mapsto \lambda_{a,c,\ell}} \Psi_{k,\ell}
   &= S_{k,\ell}
   \,,
   \label{26VII23.1mode}
 \end{align}
 where $\operatorname{L}_{a,c} \mapsto \lambda_{a,c,\ell}$ means that every occurrence of $\operatorname{L}_{a,c}$ should be replaced by $\lambda_{a,c,\ell}$.

 Now, $\det \chi_k|_{\operatorname{L}_{a,c} \mapsto \lambda_{a,c,\ell}}$ is a polynomial in $\lambda_{a,c,\ell}$.
Keeping in mind that in each line of the matrix $\chi_k$ the highest order operator is on the diagonal, we see that $\det \chi_k|_{\operatorname{L}_{a,c}\mapsto \lambda_{a,c,\ell}}$ is non-zero for $\ell$ large enough, and therefore there exists $N(k)$ such that we can find a unique solution of \eqref{26VII23.1mode} with  $\Psi_{k,\ell}=0$ for all $\ell > N(k)$.
Thus the $\Psi_k$'s are finite combinations of eigenfunctions of the $\operatorname{L}_{a,c}$'s, hence smooth, which renders useful the estimate \eqref{28VIII23.11}.

 It remains to show that  \eqref{26VII23.1mode} can be solved in the finite dimensional space of $\Phi_k$'s and $\Psi_k$'s of the form
 \begin{equation}\label{29VII23.11}
   \Phi_k= \sum_{\ell\le N(k) } \Phi_{k,\ell} \phi_\ell
   \,, \quad
   \Psi_k= \sum_{\ell\le N(k) }  \Psi_{k,\ell} \phi_\ell
   \,, \quad
  S_k = \sum_{\ell\le N(k) }  S_{k,\ell} \phi_\ell
   \,.
 \end{equation}
This is equivalent to the requirement that all the linear maps  obtained by juxtaposing $\chi_k|_{\operatorname{L}_{a,c} \mapsto \lambda_{a,c,\ell}}$ and $ M_k|_{\operatorname{L}_{a,c} \mapsto \lambda_{a,c,\ell}}$ with $\ell < N(k)$ are surjective. (Note, by the way, that we have already established surjectivity for $\ell \ge N(k)$.)  This, in turn, is equivalent to the fact that the adjoints of these linear maps have no kernel.

Let us denote by  $(\chi_k\, M_k)$ the relevant matrices. For simplicity in what follows we will write $\overadd{p}{\chi}_{j}$ for $\overadd{p}{\chi}_{j}|_{\operatorname{L}_{a,c} \mapsto \lambda_{a,c,\ell}}$.
 It follows from item (iii) above that $(\chi_k\, M_k)$ is of the form
 \begin{align}
 \label{22VIII23.5}
     \left(
          \begin{array}{ccccc}
                \begin{array}{|cc|}
                \hline
                  \overadd{1}{\chi}_{5}   & \dots \overadd{1}{\chi}_{(n-1)+4} \\
                  \overadd{2}{\chi}_{5}   & \dots \overadd{2}{\chi}_{(n-1)+4} \\
                  \vdots      & \vdots \\
                  \vdots      & \vdots \\
                  \vdots      & \vdots \\
                  \overadd{k}{\chi}_{5}   & \dots \overadd{k}{\chi}_{(n-1)+4} \\
                \hline
                \end{array}
            & \ldots
            &
                \begin{array}{|ccc|}
                \hline
                  0    &\ldots       &  0\\
                  \vdots   & \ddots  & \vdots \\
                  0     &\ldots      &  0   \\
                  \overadd{k-j}{\chi}_{(k-j-1)(n-1)+5}   & \ldots &\overadd{k-j}{\chi}_{(k-j)(n-1)+4} \\
                  \vdots      & \ddots & \vdots \\
                  \overadd{k}{\chi}_{(k-j-1)(n-1)+5}   & \ldots &\overadd{k}{\chi}_{(k-j)(n-1)+4} \\
                \hline
                \end{array}
            & \ldots
            &
                \begin{array}{|ccc|}
                \hline
                  0         &\ldots  & 0 \\
                  \vdots    &\ddots  & \vdots \\
                  \cdot     &  \cdot & \cdot \\
                  \vdots    &\ddots  & \vdots \\
                  0         &\ldots  & 0 \\
                  \overadd{k}{\chi}_{(k-1)(n-1)+5}   & \ldots & \overadd{k}{\chi}_{k(n-1)+4} \\
                \hline
                \end{array}
          \end{array}
        \right) \,,
 \end{align}
 where the $(i,j)$-th entry is $\overadd{i}{\chi}_{4+j}$.  Note that each  of the $k \times (n-1)$ blocks, as grouped above, has a specific vanishing-block structure. This gives the following adjoint matrix:
 \begin{align}
     \left(\begin{array}{l}
            \begin{array}{|cccccc|}
            \hline
               \overadd{1}{\chi}_{5}            &  \overadd{2}{\chi}_{5}  & \dots \quad \dots    & \ldots \quad \dots  & \dots\quad\dots\quad \dots\quad\dots   \quad \ldots    &  \, \overadd{k}{\chi}_{5} \phantom{LL}
               \\
               \vdots           &  \vdots  &     &   &     &  \, \vdots \phantom{LL}
               \\
               \fbox{ $\overadd{1}{\chi}_{(n-1)+4}$ } &  \overadd{2}{\chi}_{(n-1)+4}  & \dots \quad \dots    & \ldots \quad \dots  & \dots\quad\dots\quad \dots\quad\dots      \quad \ldots & \, \overadd{k}{\chi}_{(n-1)+4} \phantom{LL}
               \\
            \hline
            \end{array}
        \\
            \begin{array}{cccccc}
                 &  \phantom{\overadd{k}{\chi}_{,k}\qquad\qquad\qquad}   &   \phantom{\overadd{k}{\chi}_{}\qquad\qquad}   & \quad \vdots &   &
            \end{array}
        \\
            \begin{array}{|cccccc|}
            \hline
              \phantom{\overadd{k}{\chi}_{,k}} \, 0       & \phantom{\overadd{k}{\chi}_{,k}} \quad \qquad 0  \,\dots        & 0         & \overadd{k-j}{\chi}_{(k-j-1)(n-1)+5}        & \ldots\quad\dots\quad\dots\quad \ldots  & \overadd{k}{\chi}_{(k-j-1)(n-1)+5}
              \\
              \phantom{\overadd{k}{\chi}_{,k}} \, \vdots  & \phantom{\overadd{k}{\chi}_{,k}} \quad \qquad \vdots  \,\ddots  & \vdots    & \vdots                                      &                             & \vdots
              \\
              \phantom{\overadd{k}{\chi}_{,k}} \, 0       & \phantom{\overadd{k}{\chi}_{,k}} \quad \qquad 0  \,\dots        & 0         &  \fbox{$\overadd{k-j}{\chi}_{(k-j)(n-1)+4}$} & \ldots\quad\dots \quad\dots \quad \ldots &  \overadd{k}{\chi}_{(k-j)(n-1)+4}
              \\
            \hline
            \end{array}
        \\
            \begin{array}{cccccc}
                 & \phantom{\overadd{k}{\chi}_{,k}\qquad\qquad}    &  \phantom{\overadd{k}{\chi}_{}\qquad\qquad\qquad}    & \quad \vdots &   &
            \end{array}
        \\
            \begin{array}{|cccccc|}
            \hline
              \phantom{\overadd{k}{\chi}_{,k}} \, 0       & \phantom{\overadd{k}{\chi}_{,k}} \quad \qquad  0 \quad \dots       & \dots \quad \dots   & \ldots \quad \dots 0        &\, \overadd{k-1}{\chi}_{(k-2)(n-1)+5} \quad  & \overadd{k}{\chi}_{(k-2)(n-1)+5}
              \\
              \phantom{\overadd{k}{\chi}_{,k}} \, \vdots  & \phantom{\overadd{k}{\chi}_{,k}} \quad \qquad  \vdots \quad \dots  & \dots \quad \dots   & \ldots \quad \quad \vdots   &\, \vdots \quad                          & \vdots
              \\
              \phantom{\overadd{k}{\chi}_{,k}} \, 0        & \phantom{\overadd{k}{\chi}_{,k}} \quad \qquad  0 \quad \dots        & \dots  \quad \dots  & \ldots \quad \dots 0 &\, \fbox{$\overadd{k-1}{\chi}_{(k-1)(n-1)+4} $}\quad  & \overadd{k}{\chi}_{(k-1)(n-1)+4}
              \\
            \hline
            \end{array}
        \\ \vspace{-.4cm}
        \\
            \begin{array}{|cccccc|}
            \hline
              \phantom{\overadd{k}{\chi}_{,k}} \, 0  & \phantom{\overadd{k}{\chi}_{,k}} \quad \qquad  0 \quad \dots  & \dots \quad \dots   & \ldots \quad \dots 0 \phantom{\overadd{k}{\chi}_{k}\dots} & \quad 0  \quad \,  & \phantom{\overadd{k}{\chi}_{,kk.}\quad} \overadd{k}{\chi}_{(k-1)(n-1)+5}
              \\
              \phantom{\overadd{k}{\chi}_{,k}} \, \vdots  & \phantom{\overadd{k}{\chi}_{,k}} \quad \qquad  \vdots \quad \dots  & \dots \quad \dots   & \ldots \quad \quad \vdots \phantom{\overadd{k}{\chi}_{k}\dots} &  \quad  \vdots \quad \,  & \phantom{\overadd{k}{\chi}_{,kk.}\quad} \vdots
              \\
              \phantom{\overadd{k}{\chi}_{,k}} \, 0  & \phantom{\overadd{k}{\chi}_{,k}} \quad \qquad 0 \quad \dots  & \dots  \quad \dots  & \ldots \quad \dots 0 \phantom{\overadd{k}{\chi}_{k}\dots} & \quad 0  \quad \,  & \phantom{\overadd{k}{\chi}_{,kk.}\quad} \fbox{$\overadd{k}{\chi}_{k(n-1)+4}$}
              \\
            \hline
            \end{array}
            \end{array}
     \right)\,.
     \label{27VII23.w1}
 \end{align}
It follows from item (iii)
  that the  entries $\overadd{p}{\chi}_{p(n-1)+4}$,  boxed in the above, are non-zero when $m\ne 0$.
This easily implies that  that the matrix in \eqref{27VII23.w1} has maximal rank, and thus trivial kernel.

We have therefore proved: 

\begin{theorem}
 \label{t29VII23.1}
Suppose, for simplicity, that  $
 k_\gamma \ge 2k-1$.
 The system  \eqref{19VIII23.2} with $p\in\{1,\ldots,k\}$ can be solved by a choice of interpolating fields
$$
 \kphi{j}_{AB} \in H^{k_{\gamma}}(\secN) \,, \quad 5 \leq j \leq k(n-1)+4
 \,,
$$
for any finite $k$. Its solutions are determined by an elliptic system, uniquely up to a finite number of eigenfunctions of the operators $\operatorname{L}_{a,c}$, and satisfy the estimate \eqref{28VIII23.11}.
 \ptcheck{the whole section up to here 27VIII23}
 \FGp{31X23}
\end{theorem}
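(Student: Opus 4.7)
The plan is to prove Theorem~\ref{t29VII23.1} by exploiting the spectral decomposition associated with the commuting self-adjoint elliptic operators $\operatorname{L}_{a,c}$, combined with the Agmon--Douglis--Nirenberg theory for the matrix operator $\chi_k$. Since each $\overadd{p}{\chi}_j$ in \eqref{26VII23.2} is a polynomial in the $\operatorname{L}_{a,c}$'s (item (i) above \eqref{22VIII23.5}), all entries of $\chi_k$ and $M_k$ act diagonally in the joint eigenbasis $\{\phi_\ell\}$, so the system \eqref{26VII23.1} decouples mode-by-mode as in \eqref{26VII23.1mode}.

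First, I would handle the ``high modes''. The matrix $\chi_k|_{\operatorname{L}_{a,c}\mapsto\lambda_{a,c,\ell}}$ is upper-triangular-dominant in the sense that, by item (ii), its $(p,p)$-diagonal entry is $\overadd{p}{\chi}_{4+p} = \overadd{p}{\chi}\ofP + \ell.o.$, which is elliptic of order exactly $2p$, while the off-diagonal entries in row $p$ have lower order. Consequently, $\det \chi_k|_{\operatorname{L}_{a,c}\mapsto\lambda_{a,c,\ell}}$ is a non-trivial polynomial in the eigenvalues whose leading behaviour as $|\lambda_{a,c,\ell}|\to\infty$ is dictated by the product of the principal symbols of the diagonal entries. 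Hence there exists $N(k)$ such that $\det \chi_k|_{\operatorname{L}_{a,c}\mapsto\lambda_{a,c,\ell}}\neq 0$ for all $\ell>N(k)$. For these modes we may set $\Psi_{k,\ell}=0$ and solve \eqref{26VII23.1mode} uniquely for $\Phi_{k,\ell}$, yielding $\Psi_k$ as a finite linear combination of smooth eigenfunctions.

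Next, I would handle the finite-dimensional system for $\ell \le N(k)$. Here I need to show that the juxtaposed map $(\chi_k\,M_k)|_{\operatorname{L}_{a,c}\mapsto\lambda_{a,c,\ell}}$ is surjective, equivalently that its adjoint \eqref{27VII23.w1} has trivial kernel. The key observation, to be verified directly from the block vanishing pattern displayed in \eqref{22VIII23.5}, is that the adjoint is block lower-triangular with diagonal blocks whose bottom-right entries are precisely the boxed scalars $\overadd{p}{\chi}_{p(n-1)+4} = m^p\overadd{p}{\chi}_{[m]}$; by item (iii) these are non-zero numbers exactly because $m\ne 0$. A back-substitution, starting from the last row and working upwards through the block structure, then shows that any element of the kernel must vanish. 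This gives existence (but not uniqueness) of the finite-mode part of $\Phi_k,\Psi_k$, whence the indeterminacy by a finite-dimensional space of eigenfunctions of the $\operatorname{L}_{a,c}$'s.

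Finally, for the regularity statement, I would invoke the Agmon--Douglis--Nirenberg ellipticity of $\chi_k$: choosing the weights $s_p=2p$ and $t_j=0$, the principal symbol matrix is block-diagonal with non-vanishing determinants (coming from the elliptic operators $\overadd{p}{\chi}_{4+p}$), which is exactly the ellipticity condition. Morrey--Nirenberg (cf.\ Proposition~\ref{pp30IV24.1}) then provides the estimate \eqref{28VIII23.11}, where the right-hand side is finite because $S_k\in H^{k_\gamma - 2p}$ (from the regularity of the $\overadd{p}{S}_{AB}$ recorded after \eqref{19VIII23.2}) and $\Psi_k$ is smooth. The main obstacle is the careful bookkeeping of the block structure in \eqref{22VIII23.5}--\eqref{27VII23.w1} needed to confirm that the adjoint really is triangular with the claimed non-zero diagonal; once this combinatorial fact is nailed down, the rest is a straightforward spectral-plus-elliptic argument.
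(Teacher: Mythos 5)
Your proposal is correct and follows essentially the same route as the paper: mode-by-mode decoupling in the joint eigenbasis of the commuting $\operatorname{L}_{a,c}$'s, unique solvability with $\Psi_{k,\ell}=0$ for high modes from the non-vanishing of $\det\chi_k|_{\operatorname{L}_{a,c}\mapsto\lambda_{a,c,\ell}}$, surjectivity for the finitely many low modes via triviality of the kernel of the adjoint \eqref{27VII23.w1} using the non-zero boxed entries $m^p\overadd{p}{\chi}_{[m]}$ (hence $m\ne 0$), and regularity from the Agmon--Douglis--Nirenberg estimate of Proposition~\ref{pp30IV24.1}. Your back-substitution argument for the adjoint's trivial kernel is exactly what the paper's ``easily implies maximal rank'' step amounts to.
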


\seccheck{19XII}
The case  of inconvenient pairs $(n,k)$ will be addressed in Section \ref{ss5IX23.1} below.


 \subsubsection{Continuity of $\partial_u^p \tilde h_{AB}$}

 It follows from the transport equation \eqref{6III23.w6} that the gluing of $\partial_u^p \tilde h_{AB}$ with $p\geq 1$ requires
\begin{align}
    \overadd{p}{q}_{AB} \big|_{\secN_2}
      -\overadd{p}{ \tilde q}_{AB} \big|_{ \tilde\secN_1}
    &=   \int_{r_1}^{r_2} [ \ \overadd{p}{\psi}\ofP \, \hkappa_{(7-n+2p)/2}(s) \tilde h_{AB}
    + \alpha^{2p} \overadd{p}{\psi}_{[\alpha]} \hkappa_{(7-n-2p)/2}(s) \tilde h_{AB}
    \nonumber
    \\
    &\phantom{ \int_{r_1}^{r_2} [ \ \quad }
    + m^p \overadd{p}{\psi}_{[m]} \hkappa_{(7-n+2p(n-1))/2}(s) \tilde h_{AB}
    \ ] \ ds
    \nonumber
    \\
    &\quad
    + m (\ldots)
    +\text{ gauge fields}
    \label{11III23.2}
\end{align}
where, as already stated, the term ``gauge fields'' come from the gauge transformation
at $r=r_2$ of $\overadd{p}{q}_{AB}$, the exact form of which is unimportant for the argument here;
we will address the  $m(...)$-terms in \eqref{22VIII23.2} below.
\underline{When $m=0$}, the projection of \eqref{11III23.2} onto
$${\redc \im [\overadd{p}{\psi}\ofP ]= } (\ker[\overadd{p}{\psi}\ofP ])^\perp
$$
can be solved uniquely for the field
$\kphit{AB}{(7-n+2p)/2}^{[(\ker\overadd{p}{\psi}\ofP )^\perp]}$.
If in addition we have $\alpha = 0$, then the projection of \eqref{11III23.2} onto $\ker[\overadd{p}{\psi}\ofP ]$ is a gauge-invariant obstruction (see \eqref{21VIII23.w1}).

In the $(n,k)$ convenient case, the operators $\overadd{p}{\psi}\ofP $ are elliptic and the solution $\kphit{AB}{(7-n+2p)/2}^{[(\ker\overadd{p}{\psi}\ofP )^\perp]}\in H^{k_{\gamma}}(\secN)$. The remaining projection of \eqref{11III23.2} onto $\ker[\overadd{p}{\psi}\ofP ]$ has been solved using gauge fields in \eqref{21VIII23.w1}.

In the $(n,k)$ inconvenient case, with $p\geq \frac{n-1}{2}$, we have $(\ker[\overadd{p}{\psi}\ofP ])^\perp = \TTt$. Since the operators $\overadd{p}{\psi}\ofP $ are elliptic when restricted to $\TTt$, the regularity of the solution follows again from ellipticity. The projection onto $\ker[\overadd{p}{\psi}\ofP ]$ has been solved in \eqref{14VI.2}-\eqref{26VI.1}.

Finally, when $p=\frac{n-3}{2}$, we have $(\ker[\overadd{p}{\psi}\ofP ])^\perp = \TTt \oplus V$, where the space $V$ is as defined in \eqref{14XII23.1}. The restriction of $\overadd{p}{\psi}\ofP $ to this space is once again elliptic (cf.\ Proposition \ref{P17X23.1}) and regularity follows from ellipticity. The projection onto $\ker[\overadd{p}{\psi}\ofP ]$ has been solved using gauge fields in \eqref{14VI.3}-\eqref{17X23.9}, after taking obstructions into account.

To proceed further, we need to consider the convenient and inconvenient cases separately.

\paragraph{The \red{case of convenient pairs $(n,k)$}.}
First, we note that for such pairs  $(n,k)$  we have
$$
 \frac{n-7-2p}{2},\frac{n-7+2p}{2},\frac{n-7+2p(n-1)}{2} \neq 4+j, n-5, n-4
 $$
 for any $p\geq 1$, $j \in \mathbb{Z}
 \,.
 $
 Hence the fields $\kphit{AB}{(7-n-2p)/2},\kphit{AB}{(7-n+2p)/2},\kphit{AB}{(7-n+2p(n-1))/2}$ are free at this point.

Thus \underline{when $m=0$ but $\alpha \neq 0$}, the projection of \eqref{11III23.2} onto $\ker[\overadd{p}{\psi}\ofP ]$ can be solved uniquely for the interpolating field
 $\kphit{AB}{(7-n-2p)/2}^{[\ker\overadd{p}{\psi}]}$ for all $p \geq 1$.
 Since the operators $\overadd{p}{\psi}\ofP$ are elliptic (cf.\ Proposition \ref{P22VIII23.1}), $\kphit{AB}{(7-n-2p)/2}^{[\ker\overadd{p}{\psi}]}\in C^{\infty}(\secN)$.
Equation \eqref{11III23.2} with $m=0$ is a system of uncoupled equations for the fields $\kphit{AB}{(7-n+2p)/2}$ and $\kphit{AB}{(7-n-2p)/2}$ for $p\leq k$. However when $m\neq 0$, coupling between the equations with different $p$'s calls for a more involved scheme as follows.

\underline{When $m\neq 0$}, we rewrite \eqref{11III23.2} as
\index{rp@$\overadd{p}{r}_{AB}$}%
\begin{align}
    & \underbrace{\overadd{p}{q}_{AB} \big|_{\secN_2} -\overadd{p}{ q}_{AB} \big|_{\secN_1} + \text{known fields}}_{=: \overadd{p}{r}_{AB}\textcolor{blue}{\in H^{k_{\gamma}-2p}(\secN)}}
    \nonumber
    \\
    &=    \overadd{p}{\psi}\ofP \, \kphit{AB}{(7-n+2p)/2}
    + \alpha^{2p} \overadd{p}{\psi}_{[\alpha]} \kphit{AB}{(7-n-2p)/2}
    + m^p \overadd{p}{\psi}_{[m]} \kphit{AB}{(7-n+2p(n-1))/2}
    \nonumber
    \\
    & \quad
    + \sum_{j,\ell}^{p_{**}}  m^{j} \alpha^{2\ell} \overadd{p}{\psi}_{j,\ell}\ofP \, \kphit{AB}{p - \frac{n-7}{2} +  j (n-2) - 2 \ell}
    \,,
    \quad  1\leq p\leq k\,,
    \label{22VIII23.2}
\end{align}
where we used \eqref{6III23.w6}, and where the ``known fields'' refer to the predetermined gauge fields and the field $\interph$
of \eqref{16III22.2old}.

\begin{theorem}
 \label{t22VIII23.1}
The system  \eqref{22VIII23.2} can be solved by a choice of interpolating fields
$$\kphi{j}_{AB} \textcolor{blue}{\in H^{k_{\gamma}}(\secN)} \,, \quad   j \in \left[\frac{7-n-2k}{2},
\frac{7-n+2k(n-1)}{2}\right]$$
for any finite $k$ with $(n,k)$ convenient. Its solutions are determined by an elliptic system, uniquely up to a finite number of joint eigenfunctions of the operators $\operatorname{L}_{a,c}$ of \eqref{20VIII23.3}.
\end{theorem}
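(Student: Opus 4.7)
The proof will closely mirror the argument for Theorem~\ref{t29VII23.1}, with the mass-driven ``staircase'' in \eqref{27VII23.w1} replaced by the analogous staircase built from the scalar coefficients $m^p\,\overadd{p}{\psi}_{[m]}$. The plan is to regard \eqref{22VIII23.2} as a single linear system for the full vector of interpolating fields $\Phi:=(\kphi{j}_{AB})_{j\in[\frac{7-n-2k}{2},\frac{7-n+2k(n-1)}{2}]}$, isolate a distinguished elliptic square block, and then show that the residual finite-dimensional obstruction can be absorbed by the free fields whose coefficients contain the non-vanishing constants $m^p\,\overadd{p}{\psi}_{[m]}$.

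First I would split $\Phi$ into two parts: a square block $\Phi_{\mathrm{diag}}:=(\kphi{(7-n+2p)/2}_{AB})_{1\le p\le k}$, whose leading operator on the $p$-th row is $\overadd{p}{\psi}\ofP$ (elliptic of order $2p$ by Proposition~\ref{P22VIII23.1}, which applies in the convenient case), and a remainder $\Psi$ consisting of all other fields appearing in \eqref{22VIII23.2}. The system takes the schematic form $R=\Lambda\Phi_{\mathrm{diag}}+M\Psi$, where $\Lambda$ is a $k\times k$ operator-valued matrix whose diagonal entries are the operators $\overadd{p}{\psi}\ofP$ and whose off-diagonal entries are operators of lower order. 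In the convenient case no two of the indices $\{(7-n\pm 2p)/2, (7-n+2p(n-1))/2, p-(n-7)/2+j(n-2)-2\ell\}$ coincide with the already-assigned indices $4,n-5,n-4$, or with $4+j$, so $\Psi$ really consists of previously unconstrained fields, and the splitting is well defined.

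Next I would verify that $\Lambda$ is elliptic in the sense of Agmon--Douglis--Nirenberg, assigning weights $s_p=2p$, $t_j=0$: the principal symbol on the diagonal is precisely that of $\overadd{p}{\psi}\ofP$, hence invertible, while off-diagonal entries have strictly lower order. This yields an estimate of exactly the type of Proposition~\ref{pp30IV24.1}, establishing $H^{k_\gamma}$-regularity of any $\Phi_{\mathrm{diag}}$-component of a solution in terms of the sources and $\Psi$. For the solvability proper, I would use that every operator appearing in $\Lambda$ and $M$ is a polynomial in the commuting self-adjoint elliptic operators $\mathrm{L}_{a,c}$ of \eqref{20VIII23.3}, and expand $\Phi_{\mathrm{diag}}$, $\Psi$, and $R$ in a common joint eigenbasis $\{\phi_\ell\}$; the problem reduces mode-by-mode to a linear system with coefficients $\Lambda|_{\mathrm{L}_{a,c}\to\lambda_{a,c,\ell}}$ and $M|_{\mathrm{L}_{a,c}\to\lambda_{a,c,\ell}}$.

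The main obstacle, as in Theorem~\ref{t29VII23.1}, is the finitely many eigenvalues $\ell$ for which $\det\Lambda$ vanishes on the $\ell$-mode. For $\ell$ sufficiently large the leading-order contribution on the diagonal dominates and $\det\Lambda$ is non-zero, so one solves with $\Psi_\ell=0$ and only finitely many modes of $\Psi$ are activated; this preserves smoothness and yields the estimate. For the remaining low modes I would show surjectivity of the juxtaposed matrix $(\Lambda\,M)$ by examining its adjoint, as in \eqref{27VII23.w1}. The key input is item~(iii) preceding Theorem~\ref{t29VII23.1}: for each $p$ the coefficient of $\kphit{AB}{(7-n+2p(n-1))/2}$ equals $m^p\,\overadd{p}{\psi}_{[m]}$, which is a non-vanishing \emph{constant} by \eqref{6III23.w10} and \eqref{17IV23.1} (all denominators $7-n+2k$ in $\hkappa$-indices are controlled by the ``convenient'' assumption, so none of the factors $\ckm{\cdot}$ degenerate). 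Arranging the $\Psi$-columns in increasing order of the index $j(n-2)-2\ell$ produces a block-triangular structure whose boxed diagonal entries are exactly the non-zero numbers $m^p\,\overadd{p}{\psi}_{[m]}$, so the adjoint has trivial kernel, proving surjectivity and uniqueness modulo the finite-dimensional span of the joint eigenfunctions that lie in $\ker\Lambda$ on low modes. Combining ellipticity with this mode-by-mode solvability yields the required $H^{k_\gamma}$-regular solution and completes the proof.
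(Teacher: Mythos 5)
Your proposal is correct and takes essentially the same route as the paper's proof: the same distinguished elliptic square block built from the $\overadd{p}{\psi}\ofP$'s with ADN weights $s_p=2p$, $t_j=0$, the same mode-by-mode reduction in a joint eigenbasis of the commuting operators $\operatorname{L}_{a,c}$ with high modes solved after setting the remainder fields to zero, and the same low-mode surjectivity argument via the adjoint of the staircase matrix whose boxed diagonal entries are the non-vanishing constants $m^p\,\overadd{p}{\psi}_{[m]}$. The only cosmetic difference is that you absorb the $\alpha$-driven fields $\kphi{(7-n-2p)/2}{}_{AB}$ into the remainder $\Psi$ (implicitly setting them to zero), whereas the paper keeps them as a separate block $A_k\Theta_k$ with its own anti-triangular structure — but since the paper itself observes that one may always take $\Theta_k=0$ and that surjectivity of the juxtaposed matrix follows already from the mass staircase, the two arguments coincide in substance.
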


\proof In the same spirit as \eqref{20VIII23.1}, we write \eqref{22VIII23.2} as
\begin{align}
    \overadd{p}{r}_{AB} = \sum_{j=\frac{7-n-2p}{2}}^{\frac{7-n+2p(n-1)}{2}} \overadd{p}{\psi}_j \kphi{j}_{AB} \,,
    \quad
    1 \leq p \leq k \,.
    \label{22VIII23.w1}
\end{align}
The following facts, which will be useful for the analysis below, can be easily verified:

\begin{enumerate}
 \label{e3IX23.2}
     \item[(i)] each of the coefficients $\overadd{p}{\psi}_{j}$ are either  products of operators of the form $\operatorname{L}_{a,c}$ in \eqref{20VIII23.3}
     or sums thereof;
     \item[(ii)] for any $1 \leq p \leq k$, the coefficients $\overadd{p}{\psi}_j$ are differential operators on $\secN$ of order less than or equal $2p$, with equality achieved only by $\overadd{p}{\psi}_{\frac{7-n+2p}{2}}$. In addition, we have
     \begin{equation}
         \overadd{p}{\psi}_{\frac{7-n+2p}{2}} = \overadd{p}{\psi}\ofP  + \ell.o. \,,
     \end{equation}
     where ``$\ell.o.$'' refers to operators of lower order (i.e., $< 2p$). It follows from the ellipticity of $\overadd{p}{\psi}\ofP$ (see Proposition \ref{P22VIII23.1}) that $\overadd{p}{\psi}_{\frac{7-n+2p}{2}}$ is elliptic;
     \item[(iii)] for each $1 \leq p \leq k$, the coefficients $\overadd{p}{\psi}_{\frac{7-n+2p(n-1)}{2}}$ are non-vanishing numbers and are given by $ m^p \overadd{p}{\psi}_{[m]}$, while $\overadd{p}{\psi}_{j} = 0$ for $j > \frac{7-n+2p(n-1)}{2} $. In addition, when $\alpha \neq 0$,
     the coefficients $\overadd{p}{\psi}_{\frac{7-n-2p}{2}}= \alpha^{2p} \overadd{p}{\psi}_{[\alpha]}$ are non-vanishing numbers while the coefficients $\overadd{p}{\psi}_{j} = 0$ for $j< \frac{7-n-2p}{2}$.
 \end{enumerate}

 For $k \ge 1$ we let
\begin{align}
 \Theta_k &:= \left(
             \begin{array}{c}
               \kphi{\frac{7-n-2k}{2}}{} \\
               \vdots \\
                \kphi{\frac{7-n-2}{2}}{}  \\
             \end{array}
           \right)
           \,,
           \qquad
 \Xi_k := \left(
             \begin{array}{c}
               \kphi{\frac{9-n}{2}}{} \\
               \vdots \\
                \kphi{\frac{7-n+2k}{2}}{}  \\
             \end{array}
           \right)
           \,,
           \qquad
\Omega_k := \left(
             \begin{array}{c}
                \kphi{\frac{9-n+2k}{2}}{}  \\
                \vdots \\
                \kphi{\frac{7-n+2k(n-1)}{2}}{}
             \end{array}
           \right)
           \,,
           \\
r_k & := \left(
             \begin{array}{c}
                \overadd{1}{r}_{AB} \\
               \vdots \\
                \overadd{k}{r}_{AB}  \\
             \end{array}
           \right)
           \,.
\end{align}
 The system \eqref{22VIII23.w1} then takes the form
 \begin{equation}\label{26VII23.1b}
   A_k \Theta_k + \psi_k \Xi_k +  N_k \Omega_k = r_k
   \,,
 \end{equation}
 where $A_k$ and $\psi_k$ are $k\times k$ matrices of operators, and $N_k$ is  a $k\times k(n-2)$ matrix of operators,
 \begin{align}\label{26VII23.2b}
    A_k &=\left(
          \begin{array}{ccc}
            \overadd{1}{\psi}_{\frac{7-n-2k}{2}}  & \ldots &  \overadd{1}{\psi}_{\frac{7-n-2}{2}}
            \\
            \vdots &  \ddots & \vdots
            \\
             \overadd{k}{\psi}_{\frac{7-n-2k}{2}}  & \ldots &  \overadd{k}{\psi}_{\frac{7-n-2}{2}}
          \end{array}
        \right) \,, \quad
   \psi_k =\left(
          \begin{array}{ccc}
            \overadd{1}{\psi}_{\frac{9-n}{2}} & \ldots &  \overadd{1}{\psi}_{\frac{7-n+2k}{2}} \\
            \vdots &  \ddots & \vdots \\
            \overadd{k}{\psi}_{\frac{9-n}{2}} & \ldots &  \overadd{k}{\psi}_{\frac{7-n+2k}{2}}  \\
          \end{array}
        \right)
        \,,
    \\
   N_k &=\left(
          \begin{array}{ccc}
              \overadd{1}{\psi}_{\frac{9-n+2k}{2}} &\ldots & \overadd{1}{\psi}_{\frac{7-n+2k(n-1)}{2}}\\
            \vdots & \ddots & \vdots \\
              \overadd{k}{\psi}_{\frac{9-n+2k}{2}} &\ldots & \overadd{k}{\psi}_{\frac{7-n+2k(n-1)}{2}}
          \end{array}
        \right)
        \,,
 \end{align}
 with $A_k$ vanishing when $\alpha = 0$.
 Next, it holds by a similar argument to that below \eqref{26VII23.2} that the operator $\psi_k$ is elliptic in the sense of Agmon, Douglis and Nirenberg after setting $s_i = 2i$, $t_j=0$.
 In addition, there exists a complete set of smooth,
 pairwise $L^2$-orthogonal, joint eigenfunctions $\eta_\ell$ of all the $\operatorname{L}_{a,c}$'s appearing in $\psi_k$ and $N_k$, with a corresponding discrete set of eigenvalues $\lambda_{a,c,\ell}$ with  $|\lambda_{a,c,\ell}|\to_{\ell \to \infty} \infty$.
We can therefore write
 \begin{equation}\label{29VII23b}
    \Theta_k= \sum_{\ell} \Theta_{k,\ell} \eta_\ell
   \,, \quad
   \Xi_k= \sum_{\ell} \Xi_{k,\ell} \eta_\ell
   \,, \quad
   \Omega_k= \sum_{\ell} \Omega_{k,\ell} \eta_\ell
   \,, \quad
  r_k = \sum_{\ell} r_{k,\ell} \eta_\ell
   \,.
 \end{equation}
 It then follows from item (i) above that \eqref{26VII23.1b} can be solved mode-by-mode:
 \begin{align}\label{26VII23.1modeb}
   A_k \Theta_{k,\ell} + \psi_k \Xi_{k,\ell}+ N_k \Omega_{k,\ell} &= r_{k,\ell}
    \\
    \Longleftrightarrow
    \quad
   A_k|_{\operatorname{L}_{a,c} \mapsto \lambda_{a,c,\ell}} \Theta_{k,\ell} + \psi_k|_{\operatorname{L}_{a,c}\mapsto \lambda_{a,c,\ell}} \Xi_{k,\ell}
    + N_k|_{\operatorname{L}_{a,c}\mapsto \lambda_{a,c,\ell}} \Omega_{k,\ell} &= r_{k,\ell}
   \,.
   \label{22VIII23.7}
 \end{align}

Now, by an argument similar to that below \eqref{26VII23.1mode} (with $\chi_k$ there replaced by $\psi_k$), there exists $N(k)$ such that we can find a unique solution of \eqref{22VIII23.7} with  $\Xi_{k,\ell}= 0 = \Omega_{k,\ell}$ for all $\ell > N(k)$.
It remains to show that  \eqref{26VII23.1modeb} can be solved in the finite dimensional space of $\Theta_{k,\ell}$'s, $\Xi_{k,\ell}$'s and $\Omega_{k,\ell}$'s of the form
 \begin{equation}\label{29VII23.11b}
   \Theta_k= \sum_{\ell\le N(k) } \Theta_{k,\ell} \eta_\ell
   \,, \quad
   \Xi_k= \sum_{\ell\le N(k) }  \Xi_{k,\ell} \eta_\ell
   \,, \quad
  \Omega_k = \sum_{\ell\le N(k) }  \Omega_{k,\ell} \eta_\ell
   \,.
 \end{equation}
This is equivalent to the requirement that the linear maps  obtained by juxtaposing $A_k|_{\operatorname{L}_{a,c} \mapsto \lambda_{a,c,\ell}}$, $\psi_k|_{\operatorname{L}_{a,c} \mapsto \lambda_{a,c,\ell}}$ and $ N_k|_{\operatorname{L}_{a,c} \mapsto \lambda_{a,c,\ell}}$ with $\ell < N(k)$ are surjective.  This, in turn, is equivalent to the fact that the adjoints of these linear maps have no kernel.

We shall denote by  $(A_k\,\psi_k\, N_k)$ the relevant matrices. For simplicity, in what follows we shall write $\overadd{p}{\psi}_{j}$ for $\overadd{p}{\psi}_{j}|_{\operatorname{L}_{a,c} \mapsto \lambda_{a,c,\ell}}$. It follows from item (iii) above that the matrix $A_k|_{\operatorname{L}_{a,c}\mapsto \lambda_{a,c,\ell}}$ is of the form
\begin{align}
    A_k = \begin{pmatrix}
0 & \ldots & \overadd{1}{\psi}_{\frac{7-n-2}{2}}\\
 \vdots & \iddots & \vdots
 \\
 \overadd{k}{\psi}_{\frac{7-n-2k}{2}} & \ldots & \overadd{k}{\psi}_{\frac{7-n-2}{2}}
\end{pmatrix}
\end{align}
with all entries on the reverse diagonal non-vanishing when $\alpha \neq 0$, and all entries above the reverse diagonal vanishing.
Clearly, $A_k^\dagger$ has trivial kernel in this case. Meanwhile it follows from (iii) above that the matrix $(\psi_k  \, N_k)$ is of the same form as $(\chi_k\, M_k)$ in \eqref{22VIII23.5}, but with all $\overadd{p}{\chi}_j$'s there replaced by $\overadd{p}{\psi}_{j - \frac{1+n}{2}}$. It follows by the argument below \eqref{22VIII23.5} that $(\psi_k  \, N_k)^\dagger$ has trivial kernel.

We thus conclude that the matrix $(A_k\,\psi_k\, N_k)^\dagger$ has trivial kernel regardless of the value of $\alpha$,
since a block matrix of the form $\begin{pmatrix}
    A \\
    B
\end{pmatrix}$
 (here ``$B=\psi N$'')  has trivial kernel when either $A$ or $B$ does.

Note that we can always set $\Theta_k=0$ and solve only for $\Omega_k$; but if $\alpha \neq 0$, we can set $\Omega_k = 0= \Xi_k$ and solve for $\Theta_k$.
\FGp{31X23}

Finally, since from above, the $\Xi_k$'s and $\Omega_k$'s are finite combinations of eigenfunctions of the $\operatorname{L}_{a,c}$'s, they are smooth. Furthermore, we have the following estimate analogous to that in Proposition \ref{pp30IV24.1}:
    For all $k_\gamma\ge 2k$ we have the Agmon-Douglis-Nirenberg  estimate
    (choosing $\ell= \kgamma-2 \bluek$ in Theorem~\ref{T30XI23.1})
\index{ADN estimates}%
   \begin{equation}\label{14IV24.21}
     \sum_{p=\frac{9-n}{2}}^{\frac{7-n+2k}{2}} \| \kphi{p}{}\|_{k_\gamma} \le C (k,k_\gamma)
     \sum_{p=\frac{9-n}{2}}^{\frac{7-n+2k}{2}}
     \big(
      \| (r_k - A_k \Theta_k - N_k \Omega_k)_p\|_{k_\gamma-2p }
       + \| \kphi{p}{}\|_0
     \big)
     \,,
   \end{equation}
   where $\|\cdot\|_k$ is the $H^k(\secN)$-norm, and where $(r_k - A_k \Theta_k - N_k \Omega_k)_p$ denotes the $p$-th entry of the vector $r_k - A_k \Theta_k - N_k \Omega_k$.
   \ptcheck{17V24}
\qedskip

\subsubsection{The case of  inconvenient pairs $(n,k)$ }
\label{ss5IX23.1}

The case $m=0=\alpha$ has been covered below \eqref{11III23.2}.
\paragraph{The case $m=0$, $\alpha\neq 0$.}
For this case
the projection of \eqref{11III23.2} onto $\ker[\overadd{p}{\psi}\ofP ]$, for $1 \leq p \leq \frac{n-5}{2}$, can be solved uniquely for the smooth interpolating field
$\kphit{AB}{(7-n-2p)/2}^{[(\ker\overadd{p}{\psi} )]}$, which gets rid of the obstructions in \eqref{19VIII23.1} in this range of $p$'s. Explicitly, we set, for $1\leq p\leq \frac{n-5}2$,
\begin{align}
   \big(\overadd{p}{q}_{AB} \big|_{\secN_2}
   &
   -\overadd{p}{ q}_{AB} \big|_{\secN_1}
   - \text{(known fields})
   \big)^{[\ker\overadd{p}{\psi}]}
    =    \alpha^{2p} \overadd{p}{\psi}_{[\alpha]} \kphit{AB}{(7-n-2p)/2}^{[\ker\overadd{p}{\psi} ]}
    \,,
    \\
    (\overadd{p}{q}_{AB} \big|_{\secN_2}
    &
    -\overadd{p}{ q}_{AB} \big|_{\secN_1})^{[(\ker \overadd{p}{\psi})^\perp]}
   - \text{(known fields})^{[(\ker\overadd{p}{\psi})^\perp]}
    \nonumber
     \\
     &=   \overadd{p}{\psi}\ofP \kphit{AB}{(7-n+2p)/2}{}^{[(\ker \overadd{p}{\psi})^\perp]}
    +
    \alpha^{2p}
   \overadd{p}{\psi}_{[\alpha]}
    \underbrace{
    \kphit{AB}{(7-n-2p)/2}^{[(\ker \overadd{p}{\psi})^\perp]}}_{\text{set } \equiv 0}
    \,,
    \label{17X23.10}
\end{align}
where the ``known fields'' on the left-hand sides refer to the field $\interph$ and those gauge fields which have been predetermined in Section \ref{s10IX22.1} for the case $m = 0$ and $\alpha \neq 0$.
We note that \eqref{17X23.10} determines the field
$\kphit{AB}{(7-n+2p)/2}^{[(\ker\overadd{p}{\psi})^\perp]}\in H^{k_{\gamma}}(\secN)$ uniquely.

We continue with $p=\frac{n-3}{2}$, in which case we take care of the projection of \eqref{11III23.2} onto $(\ker[\overadd{\frac{n-3}2}{\psi}\ofP ])^\perp$ by solving the equation
\begin{align}
    \big(\overadd{\frac{n-3}2}{q}_{AB} \big|_{\secN_2}
    &
    -\overadd{\frac{n-3}2}{ q}_{AB} \big|_{\secN_1}
     - \text{(known fields})
     \big)^{[ \ker(\overadd{\frac{n-3}2}{\psi})^\perp]}
     \nonumber
\\
    & =   \overadd{\frac{n-3}2}{\psi}\ofP \kphi{2}_{AB}^{[ \ker(\overadd{\frac{n-3}2}{\psi})^\perp]}
    +
    \alpha^{n-3}\overadd{\frac{n-3}2}{\psi}_{[\alpha]}\underbrace{ \kphi{5-n}_{AB}^{[ \ker(\overadd{\frac{n-3}2}{\psi})^\perp]}}_{\text{set } \equiv 0}
\end{align}
for the field
 $\kphi{2}_{AB}^{[ \ker(\overadd{\frac{n-3}2}{\psi})^\perp]}\in V\cap H^{k_{\gamma}}(\secN)$ (see Proposition~\pref{P17X23.1}). We note that setting $\kphi{5-n}_{AB}^{[ \ker(\overadd{\frac{n-3}2}{\psi})^\perp]}$ to zero does not conflict with \eqref{7III23.1} since $\ker(\overadd{\frac{n-3}2}{\psi})^\perp \subset \ker \mrL$ (cf. \eqref{18X23.1}).
 \ptcheck{25X23 in this section everything until here}

 Next, consider the projection of \eqref{11III23.2} onto $\ker(\overadd{\frac{n-3}2}{\psi})$.
 Assume that the obstruction \eqref{17X23.9} is satisfied by the data and that \eqref{17X23.7} holds.
  This,
  together with the solution of the transport equation for the field $\chi$ solves the projection of \eqref{11III23.2} onto $\ker(\overadd{\frac{n-3}2}{\psi})\cap(\ker\mrL)^\perp$. The remaining projection onto $\ker(\overadd{\frac{n-3}2}{\psi})\cap\ker\mrL$ can be achieved using the
 smooth
  fields $\kphi{5-n}{}_{AB}^{[\ker\overadd{\frac{n-3}2}{\psi}\cap \ker\mrL]}$ according to
   \ptcheck{25X}
\begin{align}
    \big(\overadd{\frac{n-3}2}{q}_{AB} \big|_{\secN_2}
    -\overadd{\frac{n-3}2}{ q}_{AB} \big|_{\secN_1}
   - \text{(known fields})
   \big)^{[\ker\mrL\cap \ker\overadd{\frac{n-3}2}{\psi}]}
    =    \alpha^{n-3}
    \overadd{\frac{n-3}2}{\psi}_{[\alpha]}
    \,
    \kphit{AB}{5-n}^{[\ker\mrL\cap \ker\overadd{\frac{n-3}2}{\psi}]}
    \,.
\end{align}

We continue with the case $p=\frac{n-1}{2}$, where we solve the projection of \eqref{11III23.2} onto $ \TTt$ by solving the projections
\begin{align}
    &\big(\overadd{\frac{n-1}2}{q}_{AB} \big|_{\secN_2} -\overadd{\frac{n-1}2}{ q}_{AB} \big|_{\secN_1}
   - \text{(known fields})
   \big)^{[(\ker\overadd{\frac{n-1}{2}}{\psi})^\perp \cap \TTt]}
   \nonumber
   \\
     &\qquad\qquad\qquad
     =
     \overadd{\frac{n-1}2}{\psi}\ofP
     \kphi{4}_{AB}^{[\ker(\overadd{\frac{n-1}{2}}{\psi})^\perp \cap \TTt]}
     +
     \alpha^{n-1}
    \overadd{\frac{n-1}2}{\psi}_{[\alpha]} \underbrace{\kphi{4-n}_{AB}^{[\ker(\overadd{\frac{n-1}{2}}{\psi})^\perp \cap \TTt]}}_{\text{set } = 0}\,,
    \\
    & \big(\overadd{\frac{n-1}2}{q}_{AB} \big|_{\secN_2} -\overadd{\frac{n-1}2}{ q}_{AB} \big|_{\secN_1}
   - \text{(known fields})
   \big)^{[\ker\overadd{\frac{n-1}{2}}{\psi} \cap \TTt]}
     =
     \alpha^{n-1}
    \overadd{\frac{n-1}2}{\psi}_{[\alpha]} \kphi{4-n}_{AB}^{[\ker\overadd{\frac{n-1}{2}}{\psi}\cap \TTt]}
\end{align}
for the fields
 $\kphi{4-n}_{AB}^{[\ker(\overadd{\frac{n-1}{2}}{\psi})^\perp \cap \TTt]}\in H^{k_{\gamma}}(\secN)$ and $\kphi{4-n}_{AB}^{[\TTt]}\in C^{\infty}(\secN)$
respectively. We note that the latter field remains free after solving \eqref{7III23.4}. Next, \eqref{18X23.7}-\eqref{18X23.9} together with the solution of the transport equation for the field $\overadd{*}{\Hf}_{uA}$ ensures that the projection of \eqref{11III23.2} with $p=\frac{n-1}{2}$ onto $\TTtp$ holds.

For each $p$ with $p\geq \frac{n+1}{2}$, we solve the projection of \eqref{11III23.2} onto $\ker(\overadd{p}{\psi})^\perp$ for the interpolating field
 $\kphit{AB}{(7-n+2p)/2}^{[\ker(\overadd{p}{\psi})^\perp]} \in (V\oplus S)\cap H^{k_{\gamma}}(\secN)$. 
Explicitly, we set, for each $p\geq \frac{n+1}{2}$,
\begin{align}
    (\overadd{p}{q}_{AB} \big|_{\secN_2}
    &
    -\overadd{p}{ q}_{AB} \big|_{\secN_1})^{[(\ker \overadd{p}{\psi})^\perp]}
   - \text{(known fields})^{[(\ker\overadd{p}{\psi})^\perp]}
=   \overadd{p}{\psi}\ofP \kphit{AB}{(7-n+2p)/2}{}^{[(\ker \overadd{p}{\psi})^\perp]}\,.
\end{align}
(The projection of \eqref{11III23.2} onto $\ker(\overadd{p}{\psi})$ has already been addressed in \eqref{14VI.2}-\eqref{26VI.1}.)

 \ptcheck{28IX23 and 8XI23, modulo some leftover problems highlighted by margin notes}

\paragraph{The case $m\neq0$, $\alpha=0$.}
In this case Equation  \eqref{11III23.1} for $\overadd{p}{\Hf}{}^{[\CKVp]}_{uA}$ and \eqref{11III23.2}  for $\overadd{p}{q}{}^{[\TTtp]}_{AB}$ may involve $\kphi{i}_{AB}^{[\TTtp]}$'s with the same index $i$. This forces us to consider the $\overadd{p}{\Hf}{}^{[\CKVp]}_{uA}$- and $\overadd{p}{q}{}_{AB}^{[\TTtp]}$-equations as a system of coupled equations.

First,
note that the transport equations for $\overadd{p}{q}_{AB}$ with $\alpha=0$ take the form (cf.\ \eqref{6III23.w6})
\begin{align}
    \partial_r  \overadd{p}{q}_{AB}  &=  \overadd{p}{\psi}\ofP\, r^{(n-7-2p)/2} h_{AB}
    + m^p \overadd{p}{\psi}_{\red{[m]}} r^{\frac{n-7-2p(n-1)}{2}} h_{AB}
    \nonumber
    \\
    &\quad
     +
     \sum_{j=1}^{p-1} m^{j} \overadd{p}{\psi}_{j,0}\ofP\, r^{\frac{n-7}{2} - p -  j (n-2) } h_{AB} \, ,
     \quad 1 \leq p \leq k \,,
    \label{6III23.w6c}
\end{align}
with
\begin{align}
    \overadd{p}{q}_{AB}&= \partial_u \overadd{p-1}{q}_{AB}
    - \overadd{p-1}{\psi}\ofP \, \qh_{AB}^{(n-5-2p)/2}
    - m^{p-1} \overadd{p-1}{\psi}_{\red{[m]}} \qh^{(\frac{n-7-2(p-1)(n-1)}{2})}_{AB}
    \nonumber
    \\
    &\quad
    -
    \sum_{j=1}^{p-2}
    m^{j}
    \overadd{p-1}{\psi}_{j,0}\ofP \, \qh^{(\frac{1}{2} (n-7- 2 (p-1) - 2 j (n-2)))}_{AB}
    \, ,
     \quad 1 \leq p \leq k \,.
    \label{6III23.w8b}
\end{align}
The associated integrated transport equation then reads
\begin{align}
    & \underbrace{\overadd{p}{q}_{AB} \big|_{\secN_2} -\overadd{p}{ q}_{AB} \big|_{\secN_1} + \text{known fields}}_{=: \overadd{p}{r}_{AB}}
    \nonumber
    \\
    &=    \text{gauge fields } + \overadd{p}{\psi}\ofP \, \kphit{AB}{(7-n+2p)/2}
    + m^p \overadd{p}{\psi}_{\red{[m]}} \kphit{AB}{(7-n+2p(n-1))/2}
    \nonumber
    \\
    & \quad
    +
    \sum_{j=1}^{p-1}
    m^{j}
    \overadd{p}{\psi}_{j,0}\ofP \, \kphit{AB}{p - \frac{n-7}{2} +  j (n-2) }
    \,,
    \quad  1\leq p\leq k\,,
    \label{22VIII23.2c}
\end{align}
where the ``known fields'' on the left-hand side refer to
the contribution to this equation from the field $\interph$; an analysis of the ``gauge fields'' 
(coming from the gauge transformation of $\overadd{p}{q}_{AB} \big|_{\secN_2}$ and the gauge corrections \eqref{16III22.2old} at $r_2$ of  the field $\tilde h_{AB}$)
appearing on the right-hand side are provided in Appendix~\ref{ss24IX23.2}.

The projection of \eqref{22VIII23.2c} onto $\TTt$ can be solved using the fields $ \kphit{AB}{(7-n+2p)/2}{}^{[\TTt]}$ and $ \kphit{AB}{(7-n+2p(n-1))/2}{}^{[\TTt]}$ for each $1\leq p\leq k$ according to:
\begin{align}
    \overadd{p}{r}{}^{[\TTt\cap (\ker\overadd{p}{\psi})^\perp]}_{AB}
    &=   \overadd{p}{\psi}\ofP \, \kphit{AB}{(7-n+2p)/2}{}^{[\TTt\cap (\ker\overadd{p}{\psi})^\perp]}
    + m^p \underbrace{\overadd{p}{\psi}_{\red{[m]}} \kphit{AB}{(7-n+2p(n-1))/2}{}^{[\TTt\cap (\ker\overadd{p}{\psi})^\perp]}}_{\text{set }=0}
    \nonumber
    \\
    & \quad
    +\sum_{j=1}^{p-1}
    \overadd{p}{\psi}_{j,0}\ofP \, \underbrace{\kphit{AB}{p - \frac{n-7}{2} +  j (n-2) }{}^{[\TTt\cap (\ker\overadd{p}{\psi})^\perp]}}_{\text{set } = 0}
    \,,
    \\
    \overadd{p}{r}{}^{[\TTt\cap \ker\overadd{p}{\psi}]}_{AB}
    &=   m^p \overadd{p}{\psi}_{\red{[m]}} \kphit{AB}{(7-n+2p(n-1))/2}{}^{[\TTt\cap \ker\overadd{p}{\psi}]}
    +\sum_{j=1}^{p-1}
    \overadd{p}{\psi}_{j,0}\ofP \, \underbrace{\kphit{AB}{p - \frac{n-7}{2} +  j (n-2) }{}^{[\TTt\cap \ker\overadd{p}{\psi}]}}_{\text{set } = 0}
    \,,
    \label{22VIII23.2d}
\end{align}
where we note that the projection of the ``gauge fields'' in the right-hand side of \eqref{22VIII23.2c} onto $\TTt$ vanishes.

For consideration of the coupled system for $\overadd{p}{\Hf}{}^{[\CKVp]}_{uA}$ and $\overadd{p}{q}{}^{[\TTtp]}_{AB}$, it turns out to be convenient to write the interpolating fields $\kphi{p}{}^{[\TTtp]}$ as%
\index{w@$\TTtpvec{p}$}%
\begin{align}
\label{8IX23.w3}
    \kphi{p}{}^{[\TTtp]} = C(\TTtpvec{p})
\end{align}
where $\TTtpvec{p} \in \CKVp$ is the unique vector solution to \eqref{8IX23.w3} (cf.\  Appendix \ref{ss8IX23.1}).

The following commutation relation will be useful:
    \begin{align}
        \zdivtwo  (  \TSzlap + 2 \tric  ) \hat\varphi
        =
         (\TSzlap + (n-2)\myGauss  ) \zdivtwo \hat\varphi\,.
          \label{11X23.12}
    \end{align}
It implies
    \begin{align}
        \zdivtwo \circ (a P + \TSzlap + 2 \tric + c) \hat\varphi
        =
        (a \, \zdivtwo \circ\, C + \TSzlap + (n-2)\myGauss + c) \zdivtwo \hat\varphi\,.
          \label{11X23.11}
    \end{align}
     We note the following implication of \eqref{11X23.11}:

\begin{Proposition}
 \label{P11X23.1}
 $$
 \overadd{i}{\psi} (\TTt) \subset \TTt
 \,,
 \qquad
 \overadd{i}{\psi} (\TTtp) \subset \TTtp
 \,.
$$
 \end{Proposition}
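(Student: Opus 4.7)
The plan is to leverage the structural observation recorded in item (i) below \eqref{20VIII23.3}, namely that $\overadd{i}{\psi}\ofP$ is a polynomial in the commuting operators $\operatorname{L}_{a,c}=aP+\TSzlap+2\tric+c$. So it suffices to prove that each individual $\operatorname{L}_{a,c}$ preserves $\TTt$ and $\TTtp$ separately; the full claim then follows by iteration.

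For the first inclusion $\overadd{i}{\psi}\ofP(\TTt)\subset\TTt$, suppose $h\in\TTt$. I would check the three defining properties of $\TTt$ for $\operatorname{L}_{a,c}(h)$. Symmetry and tracelessness are immediate: $\TSzlap$ preserves symmetric two-tensors and commutes with taking traces, the operator $\tric$ involves a $\TS[\,\cdot\,]$ in its definition \eqref{16V22.1b} so its image is automatically symmetric trace-free, and $P(h)=C(\zdivtwo h)=0$ since $\zdivtwo h=0$. To show that $\operatorname{L}_{a,c}(h)$ remains divergence-free, I would apply $\zdivtwo$ and invoke the commutation identity \eqref{11X23.12}:
\[
\zdivtwo(\TSzlap+2\tric)h=(\TSzlap+(n-2)\myGauss)\zdivtwo h=0,
\]
and combine this with the trivial $\zdivtwo P(h)=0$ (again because $P(h)=0$ on $\TTt$) to conclude $\zdivtwo\operatorname{L}_{a,c}(h)=0$.

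For the second inclusion $\overadd{i}{\psi}\ofP(\TTtp)\subset\TTtp$, the cleanest route is via formal self-adjointness rather than a direct computation (a direct attempt using \eqref{11X23.11} does not close, since vectors of the form $\zdivtwo C(W)$ are not generally zero). The text above \eqref{28VIII23.11} explicitly records that the operators $\operatorname{L}_{a,c}$ are elliptic, self-adjoint and pairwise commuting; I would only need to observe that this is consistent with the facts that $\TSzlap$ and $\tric$ are formally self-adjoint on symmetric two-tensors, and that $P=C\circ\zdivtwo$ is self-adjoint because $-\zdivtwo$ and $C$ are mutual formal $L^2$-adjoints on the space of symmetric trace-free two-tensors (integration by parts, using $h$ trace-free to discard the trace part of $\zspaceD_A W_B$). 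A polynomial in commuting self-adjoint operators is self-adjoint, so $\overadd{i}{\psi}\ofP$ is self-adjoint. Since $\TTtp$ is by definition the $L^2$-orthogonal complement of $\TTt$ in the space of symmetric trace-free two-tensors, and self-adjoint operators that preserve a closed subspace also preserve its orthogonal complement, the first step yields the second inclusion.

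The only non-routine point is the self-adjointness of $P$; everything else is bookkeeping with the already-established commutation relation \eqref{11X23.12} and the definitions of $\TTt$ and $\TTtp$. I do not anticipate genuine obstacles, since the whole proposition is essentially the compatibility of $\overadd{i}{\psi}\ofP$ with the $L^2$-orthogonal splitting $h=h^{[S]}+h^{[V]}+h^{[\TTt]}$ introduced in \eqref{14XII23.1}.
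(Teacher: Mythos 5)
Your proof is correct and follows essentially the same route as the paper: the first inclusion comes from the commutation of $\zdivtwo$ with $\operatorname{L}_{a,c}$ (the paper invokes \eqref{11X23.11} directly, you use \eqref{11X23.12} plus $P|_{\TTt}=0$, which amounts to the same thing), and the second inclusion from formal self-adjointness of $aP+\TSzlap+2\tric+c$. The extra details you supply (self-adjointness of $P=C\circ\zdivtwo$, the orthogonal-complement argument) are exactly what the paper leaves implicit.
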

 
\proof
Since the $\overadd{i}{\psi}$'s are of the form 
$a P + \TSzlap + 2 \tric + c$, the inclusion $ \overadd{i}{\psi} (\TTt) \subset \TTt$ follows immediately from \eqref{11X23.11}. The second inclusion follows from the fact that $a P + \TSzlap + 2 \tric + c$ is 
formally self-adjoint.
\qed

Equation \eqref{11X23.11} and Proposition~\ref{P11X23.1} allow us to write the divergence of the $\TTtp$ projection of \eqref{22VIII23.2c} as
\index{psi@$\overadd{i}{\psi}\ofP$!$\overadd{i}{\tilde\psi}\ofDC$}%
\begin{align}
    \underbrace{\zspaceD^A \overadd{p}{r}{}^{[\TTtp]}_{AB}}_{=:  \overadd{p}{r}_B}
    &=    \text{gauge fields } + \overadd{p}{\tilde\psi}\ofDC \circ \zdivtwo\circ\, C (\TTtpvec{\frac{7-n+2p}{2}})
    \nonumber
    \\
    &\quad
    + m^p \overadd{p}{\psi}_{\red{[m]}} \zdivtwo\circ\, C (\TTtpvec{\frac{7-n+2p(n-1)}{2}})
    \nonumber
    \\
    & \quad
    +
    \sum_{j=1}^{p-1}  m^{j}
    \overadd{p}{\tilde\psi}_{j,0}\red{\ofDC \circ \zdivtwo}\circ\, C ( \TTtpvec{p - \frac{n-7}{2} +  j (n-2)})
    \,,
    \quad  1\leq p\leq k\,;
    \label{9IX23.w1}
\end{align}
\index{U@$\tilde{U}\ofDC$}%
recall  that the notation $\tilde{U}\ofDC$ (a tilde over an operator $U$) denotes the replacement of all appearances of the operator $P := C\circ\zdivtwo$, respectively   $\tric$, in $U$ by the operator $\zdivtwo\circ\, C$,  respectively   $1/2(n-2)\myGauss$ (cf.\ \eqref{11X23.11}).
Similarly to the convenient case (cf.\ \eqref{20VIII23.1}),  in order to simplify notation we group terms together and
 rewrite \eqref{9IX23.w1} as
\begin{align}
    \overadd{p}{r}_B
    = \text{gauge fields}
    + \sum_{j=\frac{7 - n + 2p}{2}}^{\frac{7-n+2p(n-1)}{2}} \overadd{p}{\tilde\psi}_j \circ \zdivtwo \circ\, C (\TTtpvec{j} )\,,
    \quad
    1 \leq p \leq k \,,
    \label{9IX23.w2f}
\end{align}
with some operators $\overadd{p}{\tilde\psi}_j $.
Meanwhile, the integrated transport equation for $\overadd{p}{\Hf}{}^{[\CKVp]}_{uA}$ is as given in \eqref{11III23.1}, which we rewrite as:
 \begin{align}
     & \underbrace{
     \big(
      \overadd{p}{\Hf}_{uA}\big|_{\secN_2} - \overadd{p}{\tilde\Hf}_{uA}\big|_{\tilde\secN_1}
      + \text{known fields}
      \big)^{[\CKVp]}}_{ =: \overadd{p}{s}_{A}}
      \nonumber
\\
       &=
       (\text{gauge fields})^{[\CKVp]} +
       \overadd{p}{\tilde\chi}\red{\ofDC \circ \zdivtwo}\circ\, C(\TTtpvec{p+4})
    +  m^p  \overadd{p}{\tilde\chi}_{\red{[m]}} \zdivtwo\circ\, C(\TTtpvec{p(n-1)+4})
     \nonumber
\\
    & \quad
    + \sum_{j=1}^{p-1}
    m^{j}  \overadd{p}{\chi}_{j,0}\red{\ofDC \circ \zdivtwo}\circ\, C(\TTtpvec{(p + 4) + j (n - 2)})\,,
    \label{19VIII23.2b}
 \end{align}
where $1\leq p \leq k$ and the ``known fields'' on the left-hand side refer to the
contribution to this equation from the
field $\zspaceD^A \interph_{AB}^{[\TTtp]}$
.
Similarly to the convenient case, we rewrite \eqref{19VIII23.2b} as:
 \begin{align}
     \overadd{p}{s}_{A}
     &=
       (\text{gauge fields})^{[\CKVp]} +
    \sum_{i=5}^{p(n-1)+4}
    \overadd{p}{\tilde\chi}_{i} \circ  \zdivtwo\circ\, C(\TTtpvec{i})\,.
    \label{20VIII23.1bf}
 \end{align}

 Recall that the spaces $S,V \subset  \Hkng$, $ k\ge 1$ are defined as 
 %
 $$
 S = \{ \xi_A\in
 \Hkng: \xi_A = \zspaceD_A \phi\,, \phi \in H^{k+1}(\secN) \}
 \,,\quad
 V = \{\xi_A\in
 \Hkng: \zspaceD^A\xi_A = 0  \} \,.
  $$
 When $\secN$ is compact and boundaryless, the spaces $S$ and $V$ are $L^2$-orthogonal, and
any vector field $\xi \in \Hkng$, can be decomposed into its ``scalar'' and ``vector'' parts, denoted as
 \begin{align}
    \xi = \xi^{[S]} + \xi^{[V]} \,.
 \end{align}
 We show in Appendix~\ref{app16X23.1} that, under the current conditions, the scalar-vector decomposition of an element of $\CKVp$ is compatible with  this splitting. That is, if  $\xi \in \CKVp$, then both
$\xi^{\red{[S]}}\,, \xi^{\red{[V]}} \in \CKVp$ as well.

We define the vector field $\upsilon_{k,2}$ as
\begin{align}
    \upsilon_{k,2}^{\red{[S]}}
     = (
     \zspaceD_A \kxi{2}^u
     )^{[\CKVp]}
       \,, \quad \upsilon_{k,2}^{\red{[V]}} = \TTtpvec{2}{}^{\red{[V]}} \,;
        \label{10V24.31}
\end{align}
recall that the fields $\TTtpvec{p} $ have been defined in \eqref{8IX23.w3}. 

We can then write the coupled system as follows: for $n>5$ we let
\begin{align}
\label{13XI23.21}
u_k &:=  \left\{
           \begin{array}{ll}
(
                \overadd{\frac{n-3}{2}}{r} \,,
                \overadd{1}{r} \,,
               \dots \,,
               \overadd{\frac{n-5}{2}}{r} \,,
               \overadd{1}{s} \,,
               \dots \,,
               \overadd{k}{s}
           )^T, &  \\
(               \underbrace{ \overadd{k}{r}  \,,
                 \overadd{k-1}{r}  \,,
                \dots \,,
                 \overadd{\frac{n+1}{2}}{r}}_{k_n \text{ terms}}\,,
                \overadd{\frac{n-1}{2}}{r}\,,
                \overadd{\frac{n-3}{2}}{r} \,,
                \overadd{1}{r} \,,
               \dots \,,
               \overadd{\frac{n-5}{2}}{r} \,,
               \overadd{1}{s} \,,
               \dots \,,
               \overadd{k}{s} 
           )^T, & 
           \end{array}
         \right.
           \\
 \upsilon_{k} &:=
 \left\{
   \begin{array}{ll}
     (
     \upsilon_{k,2} \,,
 \TTtpvec{\frac{9-n}{2}} \,,
 \dots \,,
 \TTtpvec{1}  \,,
 \TTtpvec{5}  \,,
 \dots \,
 \TTtpvec{4+k}
 )^{T },\\
     ( \partial^{k_n}_u \kxi{2}_A \,,
 \partial^{k_n-1}_u \kxi{2}_A \,,
 \dots \,,
 \partial_u \kxi{2}_A\,,
      \kxi{2}_A \,,
     \upsilon_{k,2} \,,
 \TTtpvec{\frac{9-n}{2}} \,,
 \dots \,,
 \TTtpvec{1}  \,,
 \TTtpvec{5}  \,,
 \dots \,
 \TTtpvec{4+k} 
 )^{T [\CKVp]},
   \end{array}
 \right.
\label{13XI23.22}
\end{align}
 where the upper case holds for $k=\frac{n-3}{2}$ and the lower for $k>\frac{n-3}{2}$, and 
 with
 \index{k@$k_n$}
\begin{equation}\label{2X23.1}
 0\le k_n :=\left\{
           \begin{array}{ll}
             0, & \hbox{$k=\frac{n-3}{2}$;} \\
             k-\frac{n-1}{2}, & \hbox{$k\ge \frac{n-1}{2}$\,,}
           \end{array}
         \right.
\end{equation}
where $\cdot^T$ stands for transposition.
 We will return to the case $n=5$ shortly, cf.\ \eqref{14XII23.3}-\eqref{14XII23.2} below.
  Note that $u_k$ contains all the $\overadd{j}{r}$'s and  $\overadd{j}{s}$'s, $j\in\{1,\ldots, k\}$, that  $\overadd{2}{w}{}^{[S]}$, $\overadd{3}{w} $,  $\overadd{4}{w} $ do not appear in the list, and that $\overadd{2}{w}{}^{[V]}$ enters through $v_{k,2}$;
   this particular arrangement of terms in the vectors $u_k$ and $v_k$ is convenient for checking the ellipticity of the system.

We
claim, first,
that the system of coupled equations in~\eqref{9IX23.w2f} and~\eqref{20VIII23.1bf} is equivalent to
%
\begin{align}
    u_k^{\red{[X]}} = \Lambda^{\red{[X]}}_k \upsilon_k^{\red{[X]}} + \mathcal{N}^{\red{[X]}}_k  \rho_k^{\red{[X]}}\,,
    \quad
    X \in \{V,S\}
     \,,
    \label{4IX23.1}
\end{align}
where the $\Lambda^{\red{[X]}}_k $'s are some $2k\times 2k$ matrices of operators, $\mathcal{N}^{\red{[X]}}_{k}$'s are $2k \times k(n-2)$ matrices of operators and
 \begin{align}
     \rho_k :=
     \begin{pmatrix}
         \TTtpvec{5+k}\\
         \TTtpvec{6+k} \\
         \vdots \\
         \TTtpvec{4+k(n-1)}
     \end{pmatrix}\,.
     \label{18XI23.1}
 \end{align}
 For this
we need to show that fields $\upsilon_k^{[S]}$ and $\rho_k^{[S]}$ only produce $u_k^{[S]}$, that is,
\begin{align}
    ( \Lambda^{\red{[S]}} _k \upsilon_k ^{\red{[S]}})^{\red{[V]}} =0
     \,,
\qquad  ( \mathcal{N}^{\red{[S]}} _k  \rho_k^{\red{[S]}})^{\red{[V]}}=0
     \,;
    \label{4IX23.1--}
\end{align}
similarly for $\upsilon_k^{[V]}$, $\rho_k^{[V]}$, and $u_k^{[V]}$.
  This can be justified by first noting
that the  operators appearing in~\eqref{9IX23.w2f} and~\eqref{20VIII23.1bf} are sums of products of elliptic, self-adjoint, pairwise commuting operators of the form%
\index{L@$\tilde{\operatorname{L}}_{a,c,b}$}%
 \begin{align}
     \tilde{\operatorname{L}}_{a,c,b} : &  X \rightarrow X \,, \quad X\in\{S,V\}\,,
   \nonumber
     \\
      \tilde{\operatorname{L}}_{a,c,b} (\xi) = &  a
   \,
   \zdivtwo \circ\, C(\xi) + (\delta_{1b} \TSzlap  + c )\xi
   \label{18XII23.11}
     \\
       = &
     \left\{
       \begin{array}{ll}
        \big(
\frac{a (n-2)}{n-1 } (  \TSzlap  + \myGauss )
+  \delta_{1b} \TSzlap  + c \big)\xi\,,\quad & \hbox{when restricted to  $S$;}
    \\
        \big(
\frac{a }{2 } (  \TSzlap  + (n-2)\myGauss )
+  \delta_{1b} \TSzlap  + c \big)\xi\,, & \hbox{when restricted to $V$,}
       \end{array}
     \right.
     \label{26IX23.1}
 \end{align}
with $a,c\in \R$ and $ b\in\{0,1\}$,
where $\delta_{ij}$ is the usual Kronecker delta. The second equality follows from the results in Appendices \ref{ss18VIII23.1}-\ref{ss4IX23.2}.
Equation \eqref{4IX23.1--} and its $[V]\leftrightarrow [S]$ equivalent is then established by noting that the inclusion $\tilde{\operatorname{L}}_{a,c,b}V \subset V$ follows from \eqref{26IX23.1} and the commutation relation \peqref{14VII23.f1}, while $\tilde{\operatorname{L}}_{a,c,b}S \subset S$ follows from \eqref{26IX23.1} and the commutation relation \eqref{19X23.1}.
  \ptcheck{11XI}

 Now, it follows from \eqref{26IX23.1} that
 the system \eqref{4IX23.1} can be decomposed using
  eigenvectors of $\TSzlap$.
 Let  therefore $\{\red{\eta_\ell}\}$ be a complete set of smooth,
 pairwise $L^2$-orthogonal,
 eigenvectors of $\TSzlap$, with a corresponding discrete set of eigenvalues $\red{\lambda_{\ell}}\to_{\ell \to \infty}-\infty$.
 We can then write, for $X\in\{S,V\}$,
 \begin{align}
     u^{\red{[X]}}_k = \sum_\ell u^{\red{[X]}}_{k,\ell} \red{\eta_\ell} \,,\quad
     \upsilon^{\red{[X]}}_k =  \sum_\ell \upsilon^{\red{[X]}}_{k,\ell} \red{\eta_\ell}
     \,,\quad
     \rho^{\red{[X]}}_k =  \sum_\ell \rho^{\red{[X]}}_{k,\ell} \red{\eta_\ell}
      \,,
     \label{4IX23.7}
 \end{align}
 and we will show that equation \eqref{4IX23.1} can be solved mode-by-mode:
 \begin{align}
        u_{k,\ell}^{\red{[X]}} = \Lambda^{\red{[X]}}_k \upsilon_{k,\ell}^{\red{[X]}} + \mathcal{N}^{\red{[X]}}_k  \rho_{k,\ell}^{\red{[X]}}
    \iff
     u_{k,\ell}^{\red{[X]}} = \Lambda^{\red{[X]}}_k|_{\TSzlap \mapsto \red{\lambda_{\ell}}} \upsilon_{k,\ell}^{\red{[X]}} + \mathcal{N}^{\red{[X]}}_k|_{\TSzlap \mapsto \red{\lambda_{\ell}}}  \rho_{k,\ell}^{\red{[X]}}    \,.
     \label{26IX23.2}
 \end{align}
Note that since  $\CKV = \ker\zdivtwo\circ\, C = \ker \tilde{\operatorname{L}}_{1,0,0}$ by \eqref{18XII23.11} and by Proposition~\pref{P30X22.2}, we see that conformal Killing vectors are eigenvectors of $\TSzlap$. Since the source terms $u^{[X]}_k\in\CKVp$, we only need to consider the modes for which $\red{\eta_\ell}\in\CKVp$.

 The remaining argument for the solvability of equation \eqref{4IX23.1} is similar to the convenient case.
 We show in Appendix~\ref{ss24IX23.2} how to assign to $\Lambda^{[X]}_k $  integers $s_i$ and $t_j$ so
 that the operators $\Lambda^{[X]}_k $ in \eqref{4IX23.1} are elliptic in the sense of Agmon, Douglis and Nirenberg (ADN).
As explained in Appendix~\pref{ss30XI23.2},
this implies that there exists $N(k)$ such that we can find a unique solution of \eqref{26IX23.2} with  $\rho^{\red{[X]}}_{k,\ell}=0$ for all $\ell > N(k)$.
It follows from the ADN estimates   that the sum of these terms converges   when the source terms are sufficiently regular, in Sobolev spaces made precise by Theorem~\ref{T30XI23.1}.
It remains to show that  \eqref{26IX23.2} can be solved in the finite dimensional space of smooth fields $u^{\red{[X]}}_k$, $\upsilon^{\red{[X]}}_k$  and $\rho^{\red{[X]}}_k$  of the form
 \begin{equation}
 \label{26IX23.11}
   u^{\red{[X]}}_k= \sum_{\ell\le N(k) } u^{\red{[X]}}_{k,\ell} \red{\eta_\ell}
   \,, \quad
   \upsilon^{\red{[X]}}_k= \sum_{\ell\le N(k) }  \upsilon^{\red{[X]}}_{k,\ell} \red{\eta_\ell}
   \,, \quad
  \rho^{\red{[X]}}_k = \sum_{\ell\le N(k) }  \rho^{\red{[X]}}_{k,\ell} \red{\eta_\ell}
   \,.
 \end{equation}
This is equivalent to the requirement that all the linear maps  obtained by juxtaposing $\Lambda^{\red{[X]}}_k|_{\TSzlap \mapsto \red{\lambda_{\ell}}}$ and $ \mathcal{N}^{\red{[X]}}_k|_{\TSzlap \mapsto \red{\lambda_{\ell}}}$ with $\ell < N(k)$ and  $\red{\eta_\ell} \in \CKVp$  are surjective.
This, in turn, is equivalent to the fact that the adjoints of these linear maps have no kernel. In what follows, we leave out the ``$|_{\TSzlap \mapsto \red{\lambda_{\ell}}}$'' to ease notation.

It turns out that the question of surjectivity is easier to analyse if we carry out a permutation of the rows and columns of the system \eqref{4IX23.1} as follows: let $\sigma_1$ and $\sigma_2$ be permutations of $\{1,2,...,2k\}$ such that the permuted vectors $(\hat{u}_k)_i := (u_k)_{\sigma_1(i)}$ and $(\hat{\upsilon}_k)_i := (\upsilon_k)_{\sigma_2(i)}$ read
\begin{align}
    \hat{u}_k &= (
                \overadd{1}{r}\,,
                \overadd{1}{s} \,,
                \overadd{2}{r} \,,
                \overadd{2}{s} \,,
               \dots \,,
               \overadd{k}{r} \,,
               \overadd{k}{s}
           )^T \,,
           \label{17XI23.5}
           \\
     \hat{\upsilon}_{k} &:=
\left\{
  \begin{array}{ll}
        (
     \upsilon_{k,2} \,,
 \TTtpvec{\frac{9-n}{2}} \,,
 \dots \,,
 \TTtpvec{1}  \,,
 \TTtpvec{5}  \,,
 \dots \,
 \TTtpvec{4+k}
 )^{T }, & \hbox{$k= \frac{n-3}{2}$;}
\\
        (
      \kxi{2}_A \,,
     \upsilon_{k,2} \,,
 \partial^{k_n}_u \kxi{2}_A \,,
 \partial^{k_n-1}_u \kxi{2}_A \,,
 \dots \,,
 \partial_u \kxi{2}_A
     \,,
 \TTtpvec{\frac{9-n}{2}} \,,
 \dots \,,
 \TTtpvec{1}  \,,
 \TTtpvec{5}  \,,
 \dots \,
 \TTtpvec{4+k}
 )^{T [\CKVp]}, & \hbox{$k> \frac{n-3}{2}$.}
  \end{array}
\right.
\label{17XI23.6}
\end{align}
Then $(\Lambda^{\red{[X]}}_k \, \mathcal{N}^{\red{[X]}}_k)$ is surjective iff
$( \hat{\Lambda}^{\red{[X]}}_k \, \hat{\mathcal{N}}^{\red{[X]}}_k)$
is, where $(\hat{\Lambda}_k^{\red{[X]}})^{i}{}_{j}:= ({\Lambda}_k^{\red{[X]}})^{\sigma_1(i)}{}_{\sigma_2(j)}$
and
$(\hat{\mathcal{N}}^{\red{[X]}}_k)^i{}_{j}:= ({\mathcal{N}}^{\red{[X]}}_k)^{\sigma_1(i)}{}_{j}$.
The benefit in doing so is that the permuted matrix $( \hat{\Lambda}^{\red{[X]}}_k \, \hat{\mathcal{N}}^{\red{[X]}}_k)^\dagger$ then has a similar structure to that in \eqref{27VII23.w1}. Specifically, it can be verified that
\begin{equation}
    \label{16XII23.1}
( \hat{\Lambda}^{\red{[X]}}_k \, \hat{\mathcal{N}}^{\red{[X]}}_k)^\dagger = \begin{pmatrix}
     F^{\red{[X]}} \\ G^{\red{[X]}}
\end{pmatrix} \circ \zdivtwo\circ\, C\,,
\end{equation}
with $G^{\red{[X]}}$  given by
 \begin{align}
     \label{5IX23.w1}
     \left(\begin{array}{l}
            \begin{array}{|cccccc|}
            \hline
               \overadd{1}{\tilde\psi}_{\frac{9-n}{2}}            &  \overadd{1}{\tilde\chi}_{\frac{9-n}{2}}  & \dots \quad \dots    & \ldots \quad \dots  & \dots\quad\dots\quad \dots\quad\dots   \quad \ldots    &  \, \overadd{k}{\tilde\chi}_{\frac{9-n}{2}} \phantom{LL}
               \\
               \vdots           &  \vdots  &     &   &     &  \, \vdots
               \\
               \fbox{ $\overadd{1}{\tilde\psi}_{\frac{7-n}{2}+(n-1)}$ } &  \overadd{1}{\tilde\chi}_{\frac{7-n}{2}+(n-1)}  & \dots \quad \dots    & \ldots \quad \dots  & \dots\quad\dots\quad \dots\quad\dots      \quad \ldots & \, \overadd{k}{\tilde\chi}_{{\frac{7-n}{2}+(n-1)}}
               \\
            \hline
            \end{array}
                \\ \vspace{-.4cm}
        \\
            \begin{array}{|cccccc|}
            \hline
              \phantom{\overadd{k}{\tilde\chi}_{,kkk}} \, 0       &   \hspace{1.5cm}  \overadd{1}{\tilde\chi}_{{\frac{9-n}{2}+(n-1)}}    &  \dots   \quad   &  \dots  \quad     & \ldots\quad\dots\quad\dots\quad \ldots \quad\dots\quad \ldots & \overadd{k}{\tilde\chi}_{{\frac{9-n}{2}+(n-1)}}
              \\
              \phantom{\overadd{k}{\tilde\chi}_{,kkk}} \, \vdots        &     \hspace{2cm}\vdots     &        &                                      &                             & \vdots
              \\
              \phantom{\overadd{k}{\tilde\chi}_{,kkk}} \, 0      &   \hspace{1.5cm} \fbox{$\overadd{1}{\tilde\chi}_{(n-1)+4} $}     &    \dots \quad   &   \dots  \quad & \ldots\quad\dots\quad\dots\quad \ldots\quad\dots\quad \ldots  & \overadd{k}{\tilde\chi}_{(n-1)+4}
              \\
            \hline
            \end{array}
        \\
            \begin{array}{cccccc}
                 & \phantom{\overadd{k}{\tilde\chi}_{,k}\qquad\qquad}    &  \phantom{\overadd{k}{\tilde\chi}_{}\qquad\qquad\qquad}    & \quad \vdots &   &
            \end{array}
        \\
            \begin{array}{|ccccccc|}
            \hline
              \phantom{\overadd{k}{\tilde\chi}_{,kkk}} \, 0      &   \hspace{2cm}     & 0        & \ldots \quad\ldots \quad\ldots \quad \dots \quad \dots & 0        &\hspace{0.3cm}  \overadd{k}{\tilde\psi}_{(k-1)(n-1)+5} \quad  & \overadd{k}{\tilde\chi}_{(k-1)(n-1)+5}
              \\
             \phantom{\overadd{k}{\tilde\chi}_{,kkk}} \, \vdots      &       & \vdots        & \ddots \quad\ddots \quad\ddots \quad \ddots \quad \ddots & \vdots        &\, \vdots \quad  & \vdots
              \\
              \phantom{\overadd{k}{\tilde\chi}_{,kkk}} \, 0      &       & 0        & \ldots \quad\ldots \quad\ldots \quad \dots \quad \dots & 0 &\hspace{0.3cm} \fbox{$\overadd{k}{\tilde\psi}_{\frac{7-n}{2}+k(n-1)} $}\quad  & \overadd{k}{\tilde\chi}_{\frac{7-n}{2}+k(n-1)}
              \\
            \hline
            \end{array}
        \\ \vspace{-.4cm}
        \\
            \begin{array}{|ccccccc|}
            \hline
              \phantom{\overadd{k}{\tilde\chi}_{,kkk}} \, 0      &   \hspace{2cm}     & 0        & \ldots \quad\ldots \quad\ldots \quad \dots \quad \dots & 0        &\hspace{1.4cm}  0 \quad  & \hspace{1.2cm} \overadd{k}{\tilde\chi}_{\frac{9-n}{2}+k(n-1)}
              \\
             \phantom{\overadd{k}{\tilde\chi}_{,kkk}} \, \vdots      &       & \vdots        & \ddots \quad\ddots \quad\ddots \quad \ddots \quad \ddots & \vdots        &\hspace{1.4cm} \vdots \quad  & \hspace{1.2cm} \vdots
              \\
              \phantom{\overadd{k}{\tilde\chi}_{,kkk}} \, 0      &       & 0        & \ldots \quad\ldots \quad\ldots \quad \dots \quad \dots & 0 &\hspace{1.5cm}  0 \quad  & \hspace{1.2cm} \fbox{$\overadd{k}{\tilde\chi}_{k(n-1)+4}$}
              \\
            \hline
            \end{array}
            \end{array}
     \right)\,,
 \end{align}
 where all the operators appearing in \eqref{16XII23.1} and \eqref{5IX23.w1} are understood to be the associated restrictions onto the space $X\in\{S,V\}$.
 In addition, the boxed entries in \eqref{5IX23.w1} are all non-vanishing numbers (cf.\  items (iii) of Sections \peqref{ss26XI22.3} and \peqref{e3IX23.2}). It follows readily that $G^{\red{[X]}}\circ\zdivtwo\circ\, C$,
 and hence
$( \hat{\Lambda}^{\red{[X]}}_k \, \hat{\mathcal{N}}^{\red{[X]}}_k)^\dagger$, has trivial kernel
 on   modes   $\red{\eta_\ell}\in\CKVp$.
\FGp{20XI}

 We show in Appendix \ref{ss30XI23.1} that the resulting solution lies in the expected spaces, specifically
\begin{align}
&
    j \in [1, k_n]\cap \N: \ \partial_u^j \kxi{2}_A 
    \in H^{k_{\gamma} + 1 - 2j}(\secN)
     \,,
    \\
&
    \xi^A \in H^{k_{\gamma}+1}(\secN) \,,\quad
    \xi^u \in H^{k_{\gamma}+2}(\secN)
     \,, \quad
     \kphi{p}{}^{[\TTtp]} \in H^{k_{\gamma}}(\secN) \,.
\end{align}

\paragraph{The case $n=5$.} In this case, we have $\frac{n-3}{2} = 1$. The above analysis can be easily adapted to this case, by removing the rows and columns in the matrix $\Lambda{}^{[X]}_k$ (cf. Figure \ref{f10IX23.1}) associated to the equations for $\overadd{p}{r}$ with
``$1\leq p \leq \frac{n-5}{2}=0$''. Specifically, instead of \eqref{13XI23.21}-\eqref{13XI23.22} we let
\begin{align}
u_k &:=
\begin{cases}
 (
                \overadd{2}{r}\,,
                \overadd{1}{r} \,,
               \overadd{1}{s} \,,
               \dots \,,
               \overadd{k}{s}
           )^T, & k=2\,,
           \\
    (           
               \underbrace{ \overadd{k}{r}  \,,
                 \overadd{k-1}{r}  \,,
                \dots \,,
                 \overadd{3}{r}}_{k_5 \text{ terms}}\,,
                \overadd{2}{r}\,,
                \overadd{1}{r} \,,
               \overadd{1}{s} \,,
               \dots \,,
               \overadd{k}{s} 
           )^T, & k>2\,,
\end{cases}
\label{14XII23.3}
           \\
 \upsilon_{k} &:=
     \begin{cases}
         (
      \kxi{2}_A \,,
     \upsilon_{k,2} \,,
 \TTtpvec{5}  \,,
 \dots \,
  \TTtpvec{4+k}
 )^{T [\CKVp]}, & k=2\,,
 \\
 (      \partial^{k_5}_u \kxi{2}_A \,,
 \partial^{k_5-1}_u \kxi{2}_A \,,
 \dots \,,
 \partial_u \kxi{2}_A\,,
      \kxi{2}_A \,,
     \upsilon_{k,2} \,,
 \TTtpvec{5}  \,,
 \dots \,
 \TTtpvec{4+k} 
 )^{T [\CKVp]}, & k>2\,,
     \end{cases}
\label{14XII23.2}
\end{align}
with $k_5$   as given in \eqref{2X23.1}. The analysis then proceeds analogously to the above.
\FGp{20XI}

\newpage

\newpage


\section{Unobstructed gluing with $m\neq0$}

In this section we conclude by briefly discussing how to remove the obstructions to $\Ck$-gluing imposed by the radially conserved charges (compare with the results summarised in Tables \ref{T11III23.2} and \ref{T11III23.1}). As in previous work~\cite{ChCong1}, the simplest way to fill these gaps is to exploit the linear nature of the field equations and add fields satisfying the linearized equations and  whose charges are specifically chosen to compensate for the obstructions.

For brevity, we shall focus only on the case $m\neq0$, as in Tables \ref{T11III23.2}-\ref{T11XII23.1},
ignoring those radial charges which occur in Table \ref{T11III23.1} when $m=0$.
Then we only have two sets of obstructions to gluing (defined in \eqref{24VII22.4} and \eqref{20VII22.1}) which we reproduce here: 
$$
\kQ{1}{}(\pi^A)  := \int_{\secN}\pi^A \red{\overadd{*}{\Hf}_{uA}}
= \int_{\secN}\pi^A \left[r^{n+1}  \partial_r(r^{-2} \red{h}_{uA})  -
2 {r^{n-1}}\zspaceD_A\delta\beta \right]\,\sm\,,
$$
and
$$ \kQ{2}{}(\lambda):= \int_{\secN}\lambda\Big[r^{n-3}\delta V
- \frac{r^{n-2}}{n-1}\partial_r\big(r^2 \zspaceD^A \delta U_A\big) - \tfrac{2 r^{n-2}}{n-1} \TSzlap \delta\beta
- 2 r^{n-3} V \delta\beta
\Big]\sm  \,.$$
Here $\pi^A\in\CKV$, while
$$
 \TS[\zspaceD_A \zspaceD_B \lambda] = 0\,.
$$
Note again that, on a compact Einstein manifold, the fields $\lambda$ solving this last equation
are only non-constant on the $(n-1)$-sphere, where they are linear combinations of $\ell=0, 1$ spherical harmonics.

Now, as in \cite{ChCong1}, we can differentiate the Birmingham--Kottler metrics with respect to the mass to obtain one family of such data:
\begin{equation}
\frac{d}{dm}
\Big[\big(\twoscsign
-{\alpha^2} r^2  {-\frac{2m}{r^{n-2}}}
\big) du^2-2du \, dr
+ r^2 \ringh_{AB}dx^A dx^B
\Big]
=   -\frac{2}{r^{n-2}}
du^2
\,.
\label{8I23.f1}
\end{equation}
This family can be then used to match all of the $\kQ{2}{}(\lambda=\text{const.})$ charges.

Likewise, a stationary class of data is provided by differentiating the higher dimensional Kerr-(anti) de Sitter metrics with respect to each angular-momentum parameter associated with its respective axis of symmetry. We provide this construction in Bondi coordinates and generalise the linear solution to include Ricci-flat topologies in Appendix~\ref{Kerr-dS appendix}.
This leads to the metric perturbation
\begin{equation}\label{8I23.f2}
\delta V=0=\delta \beta\,,\quad \delta U^A=-\frac{\zlambda^A(x^C) }{r^{n}}\,,
\end{equation}
{where $\zlambda^A$ is any $u$-independent Killing vector on $\secN$. This represents angular momentum about a particular axis of rotation.
One can check that the linearised Einstein equations, discussed in Section~\ref{s3X22.1}, remain satisfied.
When $m\neq0$, it turns out, we need only the linear combination of these two families:}
\begin{equation}\label{18XI24f.3}
	\mathring h =
{r^{-(n-2)}}\left(\zmu \,du^2
	-
	2\mathring \Ipsi_A(x^C)\,
	dx^A  du\right)
	\,.
\end{equation}

To see this for the first charge $\kQ{1}{}(\pi)$ note that, when $m\neq0$, we have used  in \eqref{5V23.1as} the gauge field  $(\kxi{2}^u)^{[=1]}$  to match all of charges associated with the proper conformal Killing vector fields (i.e. $\zspaceD_{A}\pi^{A}\neq0$ which only exist on $\secN$ when it is the round sphere $S^{n-1}$). This leaves only the obstructions arising from the Killing vectors as in \eqref{8I23.f2}. This charge evaluated on the compensating perturbation \eqref{18XI24f.3} is simply
\begin{equation}
	\kQ{1}{}(\pi)= {n} \int_{\secN} \pi^A\lambda_A\, \sm\,.
\end{equation}
%


Finally to see this for the second charge $\kQ{2}{}(\lambda)$ note that, when $m\neq0$, on $S^{n-1}$ in \eqref{16XI23.3} we have matched part of the charge, $\kQ{2}{}{(\lambda^{[=1]})}{}$, using the gauge field $(\zspaceD_B\kxi{2}^B)^{[=1]}$. Thus it remains to match the $\lambda=const.$ charges which is achieved with $\zmu$. In this case the charge associated with the compensating perturbation \eqref{18XI24f.3} is simply
\begin{equation}
	\kQ{2}{}{(1)}{}=
 - \int_{\secN} \, \zmu\, \sm = \zmu\,| S^{n-1}|\,,
\end{equation}
where $|S^{n-1}|$ is the $(n-1)$-dimensional volume of the $S^{n-1}$ sphere. 

This completes the removal of obstructions associated with linearised $\Ck$-gluing when $m\neq0$.

\newpage

\appendix


\section{Transport equations  involving $\partial_u^i \zhTB_{uA}$ and $\partial^{i+1}_u h_{AB}$}
\label{App6III23.1}

In this Appendix we derive the transport equations given in Section \ref{sec6III23.1} of the main text.
The guiding principle behind the calculations that follow is, that an induction argument to determine the higher order $u$-derivatives of the fields will require calculating  $
    r^{k}  \partial_u h_{AB}$ as a sum of terms of  the form $
    r^{\red{k'}} L_{kk'}  h_{AB}$, where $L_{kk'} $ is a partial differential operator in the $x^A$-variables,
together with a term which integrates-out in $r$ to a boundary term determined by the boundary data.

Suppose that
$$
 \mbox{$\mcE_{rA}=0$ and $\TS[\mcE_{AB}]=0$.}
$$
Assume, first, that
\begin{equation}\label{17V23.1}
 \frac{\Z}{2} \ni k\not\in\{- 3,n-3,(n-7)/2\}
 \,.
\end{equation}
We claim that we can then  write
\begin{align}
    r^{k}  \partial_u h_{AB} = \partial_r \qh_{AB}^{(k)} + \ck{k}{\ofPnoP} r^{k-1} h_{AB}  +  \alpha^2 \cka{k} r^{k+1} h_{AB} + m \ckm{k} r^{k + 1 - n} h_{AB}\,,
    \label{3III23.2}
\end{align}
for a field $\qh_{AB}^{(k)}$, some functions $\cka{k}$, $\ckm{k}$ and operators $\ck{k}{\ofPnoP}$.
 (As suggested by the notation, $\alpha^2\cka{k}$ is the term involving $\alpha$ and $m\ckm{k}$ is that involving $m$; in the linear approximation considered here no terms mixing $\alpha$ and $m$ occur).

For this recall  \eqref{3IX22.1HD}, which we reproduce here with a convenient rearrangement of terms:
\FGp{30III23; }
\index{qt@$\tilde q_{AB}$}
\begin{align}
     0= \ & \partial_r \Big[r^{\frac{n-1}2}  \blue{\partial_u \zhTB_{AB}}
      -\frac{2}{(n+1)r^{(n+1)/2}}\partial_r(r^{n-1}TS\big[\zspaceD_A   h_{uB}])
     \nonumber
\\
    &\qquad
    \underbrace{- \frac{ r^{\frac{n-3}{2}}}{2}  V  \partial_r \zhTB_{AB}
     -  \frac{n-1 }{4}  r^{\frac{n-5}2}   V   \zhTB_{AB}
     +\frac{2r^{\frac{n-3}{2}}}{n+1}P\zhTB_{AB}}_{=: \tilde q_{AB}}
     \Big]
   \nonumber
   \\
   &
    - \frac{1}{8}  \left(n^2-1\right)\alpha ^2
   r^{\frac{n-5}2} h_{AB}
     +
     \frac{(n -1)^2 }{4r^{(n+5)/2}} m h_{AB}
    \nonumber
    \\
    &
    + \bigg[\frac{4}{n+1} P -  \tric - \frac{1}{2}\TSzlap
       +\frac{(n-3)(n-1)\myGauss}{8}\bigg]r^{(n-9)/2}h_{AB}
        \,,
        \label{3III23.4}
\end{align}
Recall also \eqref{21II23.1}:
\begin{align}
 0 = \partial_r\bigg( n \zhTB_{uA} + r \partial_r \zhTB_{uA}
     - \frac{1}{r^3} \zspaceD^B h_{AB}
     \bigg) - \frac{1}{r^4}  \zspaceD^B h_{AB}\,.
     \label{3III23.3}
\end{align}
It can be verified by a lengthy calculation, similar to that in Appendix~E
of~\cite{ChCong1},
 that
\wc{see Checking3III23\_5v2.nb}
\FGp{30III23; }
\begin{align}
    & \quad r^{k} \partial_u h_{AB} + B_{k} r^{k - \frac{n-7}{2}} \times \text{RHS of }\eqref{3III23.4} + H_{k} r^{k + 3 } \times C \big[ \text{RHS of }\eqref{3III23.3} \big]
    \nonumber
\\
    & =
    \partial_r \bigg[\ E_{k} r^{k + 1} \partial_u h_{AB} + G_{k} r^{5-n+k +\frac{n-1}{k +4 - n}} \partial_r (r^{\frac{n-1}{n-k -4}+n-3} C(h_{uA}) )
    \nonumber
    \\
    &\quad \quad \quad
    + B_{k} r^{k -  \frac{n-7}{2}} \tilde q_{AB} - H_{k} r^{k} P(h)_{AB} + \frac{r^{k}}{2} (\twoscsign - \alpha^2 r^2 - \frac{2m}{r^{n-2}}) h_{AB} \ \bigg]
    \nonumber
\\
    & \quad
    + \ck{k}{\ofPnoP} r^{k-1} h_{AB} + \alpha^2 \cka{k} r^{k+1} h_{AB}
    + m \, \ckm{k} r^{k+1-n} h_{AB}\,,
    \label{4III23.1}
\end{align}
where
\FGp{30III23;}%
\index{K@$\ck{k}{\ofPnoP}$}%
\index{K@$\cka{k}$}%
\index{K@$\ckm{k}$}%
\begin{align}
    \ck{k}{\ofPnoP} &:= -\frac{1}{7 -  n + 2 k} \bigg[\frac{2 (n - 1) P}{(3 + k) (3 -  n + k) }
    +  2 \tric + \TSzlap -   (n - 4 -  k) (2 + k) \myGauss \bigg]\,,
    \label{3III23.5a}
\\
    \cka{k} &:= \frac{(k +4) (n-k -4)}{n - 7 -2 k} \,,
    \label{3III23.5b}
\\
    \ckm{k} &:= \frac{2(4-n+k)^2}{7 -  n + 2 k}
     \,,
    \label{3III23.5}
\end{align}
and
\FGp{30III23;}
\begin{align}
    B_{k} &:= -\frac{2}{n -7 - 2 k } =: E_{k}\,,
    \nonumber
\\
    H_{k} &:= \frac{2 (n -2 -k)}{(n+1) (k +3) (n - 3 -k)} \,, \quad
   G_{k} := \frac{2 (n -4 -k)}{(k +3) (n - 7 -2 k) (n - 3 -k)}
   \,.
   \label{3III23.6}
\end{align}
We thus obtain the desired form \eqref{3III23.2} with%
\index{qh@$\qh_{AB}^{(k)}$}
\begin{align}
    \label{5III23.1a}
    \qh_{AB}^{(k)}& := \ E_{k} r^{k + 1} \partial_u h_{AB} + G_{k} r^{5-n+k +\frac{n-1}{k +4 - n}} \partial_r (r^{\frac{n-1}{n-k -4}+n-3} C(h_{uA}) )
    \nonumber
    \\
    &\quad \quad \quad
    + B_{k} r^{k -  \frac{n-7}{2}} \tilde q_{AB} - H_{k} r^{k} P(h)_{AB} + \frac{r^{k}}{2} (\twoscsign - \alpha^2 r^2- \frac{2m}{r^{n-2}}) h_{AB}
   \,.
\end{align}

In the special case  $k = -3$ we define%
\index{K@$\zck{-3}{\ofPnoP}$}%
\begin{align}
    \zck{-3}{\ofPnoP} &:= \frac{1}{n-1} \bigg[\frac{(n (1-n) -2 ) P}{n} + 2 \tric +\TSzlap + (n-1) \myGauss \bigg] \,,
    \label{18IV23.1}
    \\
    H_{-3} &:= \frac{n^2-5}{1-n^2} \,,
\quad
    G_{-3} :=  (n-1) - \frac{2(n-2)}{(n-1)n} \,,
\end{align}
and%
\index{qh@$\qh_{AB}^{(k)}$!qh@$\qh^{(-3)}_{AB}$}%
%
\begin{align}
    \label{5III23.1}
    \qh^{(-3)}_{AB}& := E_{-3} r^{-2} \partial_u h_{AB}
    + \partial_r(r^{-1} C(h_{uA}))
    + \frac2n \partial_r\bigg( \log{(\tfrac{r}{r_2})} r^{-1}  C(h_{uA}) \bigg)
    + G_{-3} r^{-2}  C(h_{uA})
    \nonumber
    \\
    &\quad
    + \frac{2(n-1)}{n} r^{-2} \log{(\tfrac{r}{r_2})}  C(h_{uA})
    + B_{-3} r^{(1-n)/2} \tilde q_{AB}
    \nonumber
    \\
    &\quad
     -  H_{-3} r^{-3} P(h)_{AB} - \frac{2}{n} \log{(\tfrac{r}{r_2})} r^{-3} P(h)_{AB}+ \frac{r^{-3}}{2} (\twoscsign - \alpha^2 r^2 - \frac{2m}{r^{n-2}}) h_{AB} \,.
\end{align}
The analysis in Appendices~\ref{A14XI23.1}-\ref{A4IX23.1}  shows that the remaining cases excluded in \eqref{17V23.1} do not occur in our context.

Anticipating, we note that   the $\log$-terms appearing in\eqref{5III23.1} will not contribute to the final recurrence, because of some (unexpected) cancellations which will be established shortly.

In any case, it can   be verified by another straightforward calculation that
 \ptcheck{17V in the mathematica notebook of wan}
\begin{align}
    & \quad r^{-3} \partial_u h_{AB}
    + B_{-3} r^{-\frac{n-1}2} \times \text{RHS of }\eqref{3III23.4}
    - H_{-3}  \times C \big[ \text{RHS of }\eqref{3III23.3} \big]
    \nonumber
    \\
    &
    +\frac2n   \log{(\tfrac{r}{r_2})} \times  C \big[ \text{RHS of }\eqref{3III23.3} \big]
    =
    \partial_r \qh^{(-3)}_{AB}
    + \zck{-3}{\ofPnoP} r^{-4} h_{AB}
    - \frac{2}{n} \log{(\tfrac{r}{r_2})} r^{-4} P(h)_{AB}
    \nonumber
\\
    & \phantom{+\frac2n   \log{(\tfrac{r}{r_2})} \times  C \big[ \text{RHS of }\eqref{3III23.3} \big]=}
    + \alpha^2 \cka{-3} r^{-2} h_{AB}
    + m \, \ckm{-3} r^{-2-n} h_{AB}
    \,,
    \label{31III23.1}
\end{align}
and thus in particular,
 \ptcheck{17V: if the previous is correct than the next is correct; and watch out in a scheme where we do not know that all Einstein equations are satisfied}
\begin{align}
    r^{-3} \partial_u h_{AB} &= \partial_r \qh^{(-3)}_{AB}
    + \bigg[\zck{-3}{\ofPnoP}  - \frac{2}{n}\log{(\tfrac{r}{r_2})}  P \bigg]r^{-4} h_{AB}
    \nonumber
\\
    &\quad
    + \alpha^2 \cka{-3} r^{-2} h_{AB}
    + m \, \ckm{-3} r^{-2-n} h_{AB} \,.
    \label{31III23.2}
\end{align}

We note that the operators $\ck{k}{\ofPnoP}$ and $\zck{-3}{\ofPnoP}$ are self-adjoint and commute with each other.
This is shown explicitly in Appendix~\ref{App19V23.1} below, and ultimately arises from the fact that these operators depend on $P$ and the special combination $ \tric + \frac{1}{2} \TSzlap $ such that $[P,\tric + \frac{1}{2} \TSzlap](h)=0$ see \eqref{20VI23.1}.

\subsection{Integral formulae involving $\partial_u^i h_{uA}$}
 \label{A14XI23.1}

First, taking the $u$-derivative of \eqref{3III23.3} and making use of \eqref{3III23.2}, we have,
 \ptcheck{17V23}
\begin{align}
    \partial_r \bigg( \partial_u \overadd{0}{\Hf}_{uA} - \zspaceD^B\qh^{(-4)}_{AB}
     \bigg) &= \zspaceD^B \ck{-4}{\ofPnoP} r^{-5} h_{AB}
     + m \ckm{-4} r^{-3 - n} \zspaceD^B h_{AB}
     \,,
     \label{5IV23.1}
\end{align}
where we  made use of the fact that $\cka{-4} =0$.

Assuming $\mcE_{rA} = 0$ and $\TS[\mcE_{AB}] = 0$ hold, then it follows by induction that $\partial_u^i\mcE_{rA} = 0$ hold if and only if
\begin{align}
 \forall \    i \geq 1  \quad
    \partial_r \overadd{i}{\Hf}_{uA}
      &=  r^{-(i+4)}\zspaceD^B
      \big(
       \overadd{i}{\chi}\ofP  \, h_{AB}
        \big)
     + m^i \overadd{i}{\chi}_{\red{[m]}} r^{-(4+i(n-1))} \zspaceD^B h_{AB}
    \nonumber
    \\
    &\quad
     + \sum_{j,\ell}^{i_*} m^{j} \alpha^{2\ell}r^{-(i + 4) - j (n - 2) + 2 \ell} \zspaceD^B
     \big(
       \overadd{i}{\chi}_{j,\ell}\ofP  \, h_{AB}
       \big)
     \,,
    \label{4III23.2}
\end{align}
 where $\sum_{j,\ell}^{i_*}$ denotes the sum over $j,\ell$, with
 \index{s@$\sum_{j,\ell}^{i_*}$}%
 \begin{equation}
 1 \leq j \leq i-1 \,,\   j+\ell\leq i \,,\   \text{and } 0\le 2\ell\leq  i +j(n-2)
 \,.
 \label{19V23.3}
 \end{equation}
In the above, the fields $\overadd{i}{\Hf}_{uA}$ depend on $(r, \partial_u^j h_{uA}, \partial_r \partial_u^j h_{uA}, \partial^j_u h_{AB})_{j=0}^i$, with $\overadd{i}{\chi}_{\red{[m]}}\in \R$, and with $\overadd{i}{\chi}\ofP$, respectively  $\overadd{i}{\chi}_{j,\ell}\ofP$, being polynomials in $P$ of order $i$, respectively  $i-j-\ell\le i-1$. We have the recursion formulae:
 \ptcheck{17V partially;  \eqref{5III23.2} not checked either}
\begin{align}
    \overadd{0}{\Hf}_{uA}&:=\Hf_{uA}\,,\quad
    \overadd{1}{\Hf}_{uA}:= \partial_u \Hf_{uA} -\zspaceD^B \qh^{(-4)}_{AB}
\\
    \overadd{1}{\chi}\ofP  &:=  \ck{-4}{\ofPnoP}\,, \quad
    \overadd{1}{\chi}_{\red{[m]}} = \ckm{-4}\,,
\\
    \overadd{i+1}{\chi}\ofP  &= \overadd{i}{\chi}\ofP  \, \ck{-(i+4)}{\ofPnoP}\,,
    \quad
    \overadd{i+1}{\chi}_{\red{[m]}} =  \overadd{i}{\chi}_{\red{[m]}} \ckm{-(4+i(n-1))}
    \label{5III23.2}
\\
    \overadd{i+1}{\Hf}_{uA}
     &=
       \partial_u \overadd{i}{\Hf}_{uA}
    - \zspaceD^B \overadd{i}{\chi}\ofP  \, \qh^{(-i-4)}_{AB}
    - \overadd{i}{\chi}_{\red{[m]}}  m^i  \zspaceD^B \qh^{(-4-i(n-1))}_{AB}
    \nonumber
    \\
    &\quad
    - \blue{ \sum_{j,\ell}^{i_*} } m^{j} \alpha^{2\ell} \zspaceD^B \overadd{i}{\chi}_{j,\ell}\ofP  \, \qh^{(-(i+4)-j(n-2)+2\ell)}_{AB}
    \nonumber
    \\
    &\quad
   -  \alpha^2 \cka{-(i+4)} \hck{-(i+3)}{\TSzlap,\zdivtwo \circ\, C}\overadd{i-1}{\Hf}_{uA}
    \,,
    \label{4III23.3}
\end{align}
with the operator $\hck{k}{\TSzlap,\zdivtwo\circ\, C}$ defined in \eqref{19V.1} below.

 A comment on the powers of $r$ appearing in~\eqref{4III23.2}, and in several similar formulae below, is in order: all the terms appearing in equations such as \eqref{4III23.2} should have the same dimension, keeping in mind that the mass coefficient $m$ has the same dimension as $r^{n-2}$, while $\alpha $ has the same dimension as $r^{-1}$. Hence $m^{j} \alpha^{2\ell}r^{ - j (n - 2) + 2 \ell} $ is dimensionless, as required.

 The last condition  in \eqref{19V23.3}, which is justified below, is the same as $-(i+4)-j(n-2)+2\ell
      \le-4$, which  expresses the fact that the powers of $r$ in the sum do not exceed $-4$. It prevents the appearance of   $\log$ terms in the current recursion.
 Also note that, since $j\le i-1$,
 \begin{equation}\label{21V23.1}
   -(i+4)-j(n-2)+2\ell
    > - (i+4) - i(n-2) = -
    \big(4 + i(n-1)
    \big)
    \,,
 \end{equation}
 so that the most negative power of $r$ comes with the term involving $m^i$ in the first line.

We now prove the recursion by induction. The $i=1$ version of \eqref{4III23.2} is provided by \eqref{5IV23.1}.

To prove \eqref{4III23.2} with $i=2$ we make use of the commutation relation \eqref{11X23.12}, which allows us to write:
\index{K@$\ck{k}{\ofPnoP}$!$\hck{k}{\TSzlap,\zdivtwo\circ\, C}$}%
%
\begin{align}
    \zspaceD^A \ck{k}{\ofPnoP} h_{AB}
 & = \hck{k}{\TSzlap,\zdivtwo\circ\, C} \zspaceD^A h_{AB}
 \,,
 \label{19V.1}
\end{align}
where we recall from the main text that the notation $\tilde{U}\ofDC$ (a tilde over an operator $U$) denotes the replacement of all appearances of the operator $P := C\circ\zdivtwo$, respectively   $\tric$, in $U$ by the operator $\zdivtwo\circ\, C$,  respectively   $(n-2)\myGauss$. 

Then, taking a $u$-derivative of \eqref{5IV23.1} and making use of \eqref{3III23.2} and \eqref{19V.1} we find
 \ptcheck{17V23}
\begin{align}
    &\partial_r\bigg[\partial_u \overadd{1}{\Hf}_{uA} - \zspaceD^B \overadd{1}{\chi}\ofP  \, \qh^{(-5)}_{AB}
    - m \zspaceD^B \overadd{1}{\chi}_{\red{[m]}} \qh^{(-3-n)}_{AB}
    - \alpha^2 \hck{-4}{\TSzlap,\zdivtwo C}\cka{-5} \Hf_{uA}
    \bigg]
    \nonumber
\\
    & =
    \zspaceD^B \overadd{1}{\chi}\ofP  \, \bigg[ \ck{-5}{\ofPnoP} r^{-6} h_{AB}  + m \ckm{-5} r^{-4 - n} h_{AB}\bigg]
    \nonumber
\\
    &\quad
    + m \zspaceD^B \overadd{1}{\chi}_{\red{[m]}}
    \bigg[\ck{-3-n}{\ofPnoP} r^{-4-n} h_{AB}  +  \alpha^2 \cka{-3-n} r^{-2-n} h_{AB} + m \ckm{-3-n} r^{-2-2n} h_{AB} \bigg]
    \nonumber
\\
    & =
    \zspaceD^B \underbrace{\overadd{1}{\chi}\ofP  \, \ck{-5}{\ofPnoP}}_{\overadd{2}{\chi}\ofP } r^{-6} h_{AB}
    + m \zspaceD^B \underbrace{\bigg[ \overadd{1}{\chi}\ofP  \, \ckm{-5}
            +  \overadd{1}{\chi}_{\red{[m]}} \ck{-3-n}{\ofPnoP}\bigg] }_{\overadd{2}{\chi}_{1,0}\ofP }
            r^{-4 - n} h_{AB}
    \nonumber
\\
    &\quad
    +  m \alpha^2  \zspaceD^B \underbrace{\overadd{1}{\chi}_{\red{[m]}} \cka{-3-n}}_{\overadd{2}{\chi}_{1,1}\ofP } r^{-2-n} h_{AB}
    + m^2\zspaceD^B \underbrace{\overadd{1}{\chi}_{\red{[m]}} \ckm{-3-n}}_{\overadd{2}{\chi}_{\red{[m]}}} r^{-2-2n} h_{AB}
     \,,
    \label{5IV23.2}
\end{align}
where we  also made use of \eqref{3III23.3}.
This ends the proof of  \eqref{4III23.2} with $i=2$.

Suppose, next, that \eqref{4III23.2} holds for some $i\ge 2$. Taking  the $u$-derivative of \eqref{4III23.2} and making use of \eqref{3III23.2} we have
 \ptcheck{17V23}
\begin{align}
    &\partial_r \bigg( \partial_u \overadd{i}{\Hf}_{uA}
    - \zspaceD^B \overadd{i}{\chi}\ofP  \, \qh^{(-i-4)}_{AB}
    - \blue{ \sum_{j,\ell}^{i_*} }  m^{j} \alpha^{2\ell} \zspaceD^B \overadd{i}{\chi}_{j,\ell}\ofP  \, \qh^{(-(i+4)-j(n-2)+2\ell)}_{AB}
    \nonumber
    \\
    &\qquad
    - m^i \zspaceD^B \overadd{i}{\chi}_{\red{[m]}} \qh^{(-(4+i(n-1)))}_{AB}
    \bigg)
    \nonumber
\\
    &=
    \zspaceD^B \overadd{i}{\chi}\ofP  \, \bigg[ \ck{-(i+4)}{\ofPnoP} r^{-(i+5)} h_{AB}  +  \alpha^2 \cka{-(i+4)} r^{-(i+3)} h_{AB}
    \nonumber
     \\
     &\qquad\qquad\qquad
     + m \ckm{-(i+4)} r^{-(i+3) - n} h_{AB}
    \bigg]
    \nonumber
\\
    &\quad
    + m^i \zspaceD^B \overadd{i}{\chi}_{\red{[m]}} \bigg[ \ck{-(4+i(n-1))}{\ofPnoP} r^{-(5+i(n-1))} h_{AB}  +  \alpha^2 \cka{-(4+i(n-1))} r^{-(3+i(n-1))} h_{AB}
    \nonumber
     \\
     &\qquad\qquad\qquad
     + m \ckm{-(4+i(n-1))} r^{-(4+(i+1)(n-1))} h_{AB}
    \bigg]
    \nonumber
\\
    &\quad
     + \blue{ \sum_{j,\ell}^{i_*} }  m^{j} \alpha^{2\ell} \zspaceD^B \overadd{i}{\chi}_{j,\ell}\ofP  \, \nonumber
     \\
     &\qquad \quad \times
     \bigg[
     \ck{-(i+4)-j(n-2)+2\ell}{\ofPnoP} r^{-(i+5)-j(n-2)+2\ell} h_{AB}
     \nonumber
     \\
     &\qquad\qquad
     +  \alpha^2 \cka{-(i+4)-j(n-2)+2\ell} r^{-(i+3)-j(n-2)+2\ell} h_{AB}
     \nonumber
     \\
     &\qquad\qquad
     + m \ckm{-(i+4)-j(n-2)+2\ell} r^{-(i+4)-j(n-2)+2\ell - (n-1)} h_{AB}
     \bigg] \,.
    \label{3III23.7}
\end{align}
Let us first justify our  claim above, that the powers of $r$ appearing in the final sum do not exceed $-4$.
 By our induction hypothesis it holds that  $-(i+4)-j(n-2)+2\ell \leq -4$ in each term of the sum appearing in  \eqref{4III23.2}. Now,  out of the three terms in the square brackets, only the  one multiplied by $\alpha^2$ is associated with an increase (compared with $-(i+4)-j(n-2)+2\ell$) in the power of $r$, and this increase is by $1$. Thus the only terms with a power  of $r$ which could possibly exceed  $-4$  are those for which $-(i+3)-j(n-2)+2\ell=-3$. However, all such  terms   are multiplied by $\cka{-4}=0$, which establishes the claim.

Next, all the terms in the last equation are as in \eqref{4III23.2}  except perhaps for the one multiplied by a single power of $\alpha^2$. In order to handle this term we use \eqref{19V.1} and
\eqref{4III23.2} (with $i$ replaced by $i-1$) to write
%
\begin{align}
    \zspaceD^B \overadd{i}{\chi}\ofP  &  r^{-(i+3)} h_{AB}
= \zspaceD^B   \ck{-(i+3)}{\ofPnoP} \overadd{i-1}{\chi}\ofP\, r^{-(i+3)} h_{AB}
        \nonumber
\\
    &=  \hck{-(i+3)}{\TSzlap,\zdivtwo C}
    \zspaceD^B
    \big(
   \overadd{i-1}{\chi}\ofP\, r^{-(i+3)} h_{AB}
   \big)
    \nonumber
\\
    &=    \hck{-(i+3)}{\TSzlap,\zdivtwo C}
    \bigg[ \partial_r\overadd{i-1}{\Hf}_{uA}
    - m^{i-1}  \overadd{i-1}{\chi}_{\red{[m]}} r^{-(4+(i-1)(n-1))}\zspaceD^B h_{AB}
    \nonumber
    \\
    &\quad
     -  \sum_{j,\ell}^{(i-1)_*}  m^{j} \alpha^{2\ell} r^{-(i + 3) - j (n - 2) + 2 \ell}
     \zspaceD^B
     \big( \overadd{i-1}{\chi}_{j,\ell}\ofP  \, h_{AB}\big)
    \bigg]
    \nonumber
\\
    &=    \partial_r\big(\hck{-(i+3)}{\TSzlap,\zdivtwo C}
     \overadd{i-1}{\Hf}_{uA}\big)
     \nonumber
     \\
     &\quad
    - \zspaceD^B \bigg[ m^{i-1} \ck{-(i+3)}{\ofPnoP} \overadd{i-1}{\chi}_{\red{[m]}} r^{-(4+(i-1)(n-1))} h_{AB}
    \nonumber
    \\
    &\quad
     + \sum_{j,\ell}^{(i-1)_*}  m^{j} \alpha^{2\ell} r^{-(i + 3) - j (n - 2) + 2 \ell}
    \ck{-(i+3)}{\ofPnoP} \overadd{i-1}{\chi}_{j,\ell}\ofP  \, h_{AB}
    \bigg]\,,
    \label{4III23.8}
\end{align}
and the induction is completed.

\subsection{Integral formulae involving $\partial_u^i h_{AB}$}
\label{A4IX23.1}
Assume that
$$
\text{$\mcE_{rA} = 0$ and that the equations $\TS[\mcE_{AB}] = 0$ hold.}
$$
We wish to show, by induction, that the equations
$$
 \TS[\partial_u^{i-1} \mcE_{AB}] = 0
$$
hold if and only if
\begin{align}
    \partial_r  \overadd{i}{q}_{AB}
     &=  \overadd{i}{\psi}\ofP\, r^{(n-7-2i)/2} h_{AB}
    + \alpha^{2i} \overadd{i}{\psi}_{[\alpha]} r^{(n-7+2i)/2} h_{AB}
    + m^i \overadd{i}{\psi}_{\red{[m]}} r^{\frac{n-7-2i(n-1)}{2}} h_{AB}
    \nonumber
    \\
    &\quad
     + \sum_{j,\ell}^{i_{**}}  m^{j} \alpha^{2\ell} \overadd{i}{\psi}_{j,\ell}\ofP\, r^{\frac{n-7}{2} - i -  j (n-2) + 2 \ell} h_{AB}\, ,
    \label{5III23.3}
\end{align}
with $i \in \mathbb{Z}^+$ 
(recall that $n >  3$),
and $\sum_{j,\ell}^{i_{**}}$ denotes sum over $j,\ell$ with
\index{s@$\sum_{j,\ell}^{i_{**}}$}%
\begin{equation}
\begin{cases}
 1\leq j \leq i-1 \,,\  j+\ell \leq i \, , & \mbox{if $n$ is even}   \\
 1\leq j \leq i-1 \,,\  j+\ell \leq i    \, ,
 \
  \frac{n-7}{2} - i -  j (n-2) + 2 \ell \leq -4 , & \mbox{if $n$ is odd.}
\end{cases}
 \label{19V23.2}
\end{equation}
The fields $\overadd{i}{q}_{AB}$ appearing  in \eqref{5III23.3} depend on $(r, h_{AB} ,\partial^j_u h_{AB} , \partial_u^{j-1} h_{uA}, \partial_r \partial_u^{j-1} h_{uA})_{j=1}^i$, the operators $\overadd{i}{\psi}\ofP$ and $\overadd{i}{\psi}_{j,\ell}\ofP$ are polynomials in $P$ of orders $i$ and $i-j-\ell$ respectively,
and $\overadd{i}{\psi}_{[\alpha]}$ and $\overadd{i}{\psi}_{\red{[m]}}$ are constants. These are all defined recursively, with 
%
%
\index{qi@$\overadd{i}{q}_{AB}$}%
\begin{align}
    \overadd{0}{q}_{AB}&
     = 0
     \,,
      \quad \overadd{0}{\psi}\ofP  = \overadd{0}{\psi}_{[\alpha]}
     = \overadd{0}{\psi}_{\red{[m]}}=1
     \,,\quad
    \overadd{1}{q}_{AB}= q_{AB}\,, \quad
    \overadd{1}{\psi}_{[\alpha]} = \frac{1}{8}(n^2-1)\,,
\\
    \overadd{1}{\psi}_{\red{[m]}} &= -\frac{(n-1)^2}{4}
    \,,\qquad
    \overadd{1}{\psi}\ofP  = - \bigg[\frac{4}{n+1} P -  \tric - \frac{1}{2}\TSzlap
       +\frac{(n-3)(n-1)\myGauss}{8}\bigg]\,,
    \label{5III23.4a}
\end{align}
and
\begin{align}
    & \overadd{i}{\psi}_{[\alpha]} = \cka{\tfrac{n-9+2i}{2}} \overadd{i-1}{\psi}_{[\alpha]}
    \,,\quad
    \overadd{i}{\psi}_{\red{[m]}} = \ckm{\tfrac{n-7-2(i-1)(n-1)}{2}} \overadd{i-1}{\psi}_{\red{[m]}}
     \,,
     \quad i\geq 2
    \label{5III23.6} \,,
\\
 & \overadd{i}{\psi}\ofP  = \ck{\tfrac{n-5-2i}{2}}{\ofPnoP}\overadd{i-1}{\psi}\ofP \,,\quad 2\leq i \,,\ \mbox{with}\ i\neq \frac{n+1}{2}\ \text{if $n$ is odd}\,,
 \label{28VIII23f.6}
 \\
& \overadd{\frac{n+1}{2}}{\psi}\ofP = \zck{-3}{\ofPnoP}\overadd{\frac{n-1}{2}}{\psi}\ofP\,,
\quad \text{for odd $n$ only}\,,
\label{4IV23.2}
 \\
 & \overadd{i}{q}_{AB}= \partial_u \overadd{i-1}{q}_{AB}
    - \overadd{i-1}{\psi}\ofP  \, \qh_{AB}^{(\frac{n-5-2i}2)}
    - \alpha^{2(i-1)} \overadd{i-1}{\psi}_{[\alpha]} \qh_{AB}^{(\frac{n-9+2i}2)}
     \nonumber
    \\
    &\quad\quad\quad
    - m^{i-1} \overadd{i-1}{\psi}_{\red{[m]}} \qh^{(\frac{n-7-2(i-1)(n-1)}{2})}_{AB}
    - \alpha^2 \mathcal{\widehat K}(2(i-1),P)  \overadd{i-2}{q}_{AB}
    \nonumber
    \\
    &\quad\quad\quad
    - \sum_{j,\ell}^{(i-1)_{**}}  m^{j} \alpha^{2\ell} \overadd{i-1}{\psi}_{j,\ell}\ofP  \, \qh^{(\frac{1}{2} (n-7- 2 (i-1) - 2 j (n-2) + 4 \ell))}_{AB}
     \,,
    \label{5III23.5}
\end{align}
\ptcheck{10III23; checked that denominators don't vanish}
where $\mathcal{\widehat K}(2i,P)$ is defined in \eqref{5III23.11} below. We note that $\cka{n-4}=0$ and hence in fact 
\begin{equation}
    \label{15VI24.4}
    \overadd{i}{\psi}_{[\alpha]} = 0 \quad \text{for}\quad  i \geq \frac{n+1}{2}
\end{equation}
when $n$ is odd. Equation~\eqref{5III23.5} holds for all $i \geq 2$ when $n$ is even and for $2 \leq i \leq \frac{n+1}{2}$ when $n>3$ is odd. When $n>3$ is odd, for $i > \frac{n+1}{2}$, we have instead
\begin{align}
    \overadd{i}{q}_{AB}&= \partial_u \overadd{i-1}{q}_{AB}
    - \overadd{i-1}{\psi}\ofP  \, \qh_{AB}^{(\frac{n-5-2i}2)}
    - m^{i-1} \overadd{i-1}{\psi}_{\red{[m]}} \qh^{(\frac{n-7-2(i-1)(n-1)}{2})}_{AB}
    \nonumber
    \\
    &\quad
    - \sum_{j,\ell}^{(i-1)_{**}}  m^{j} \alpha^{2\ell} \overadd{i-1}{\psi}_{j,\ell}\ofP  \, \qh^{(\frac{1}{2} (n-7- 2 (i-1) - 2 j (n-2) + 4 \ell))}_{AB}
    \,.
    \label{5III23.5b}
\end{align}
We will now prove the recursion by induction. First, \eqref{5III23.3}
 with  $i=1$ is simply \eqref{3IX22.1HD}.
\ptcheck{21V23}

Next, taking the $u$-derivative of \eqref{3IX22.1HD} gives
\begin{align}
    \partial_r \partial_u \overadd{1}{q}_{AB} = \overadd{1}{\psi}\ofP\, r^{(n-9)/2} \partial_u h_{AB} + \alpha^{2} \overadd{1}{\psi}_{[\alpha]} r^{\frac{n-5}2} \partial_u h_{AB}
    + m \overadd{1}{\psi}_{\red{[m]}} r^{-(n+5)/2} \partial_u h_{AB}\,.
\end{align}
We use \eqref{4III23.1} to rewrite this as,
\begin{align}
    &\partial_r \bigg[\partial_u \overadd{1}{q}_{AB}
    - \overadd{1}{\psi}\ofP  \, \qh_{AB}^{(n-9)/2}
    - \alpha^2 \overadd{1}{\psi}_{[\alpha]} \qh_{AB}^{\frac{n-5}2}
    - m \overadd{1}{\psi}_{\red{[m]}} \qh_{AB}^{-(n+5)/2}\bigg]
 \nonumber
 \\
    &=
    \overadd{1}{\psi}\ofP  \, \bigg[\ck{\tfrac{n-9}{2}}{\ofPnoP} r^{(n-11)/2}
     + \alpha^2 \cka{\tfrac{n-9}{2}} r^{(n-7)/2}
     + m \ckm{\tfrac{n-9}{2}} r^{-(n+7)/2}
     \bigg] h_{AB}
     \nonumber
\\
    &\quad
    + \alpha^{2} \overadd{1}{\psi}_{[\alpha]} \big[ \ck {\tfrac{n-5}{2}}{\ofPnoP} r^{(n-7)/2}
     + \alpha^2 \cka{\tfrac{n-5}{2}} r^{\frac{n-3}{2}}
     + m \ckm{\tfrac{n-5}{2}} r^{-(n+3)/2}\big] h_{AB}
     \nonumber
\\
    &\quad
    + m \overadd{1}{\psi}_{\red{[m]}} \big[ \ck{-\tfrac{n+5}{2}}{\ofPnoP} r^{-(n+7)/2}
     + \alpha^2 \cka{-\tfrac{n+5}{2}} r^{-(n+3)/2}
     + m\ckm{-\tfrac{n+5}{2}} r^{-3(n+1)/2}\big] h_{AB}
      \nonumber
 \\
    &=
   \underbrace{ \overadd{1}{\psi}\ofP  \, \ck{\tfrac{n-9}{2}}{\ofPnoP}}_{\overadd{2}{\psi}\ofP } r^{(n-11)/2} h_{AB}
    + \alpha^{4} \underbrace{\overadd{1}{\psi}_{[\alpha]}\cka{\tfrac{n-5}{2}}}_{\overadd{2}{\psi}_{[\alpha]}} r^{\frac{n-3}{2}} h_{AB}
    \nonumber
\\
    &\quad
    + m^2 \underbrace{\overadd{1}{\psi}_{\red{[m]}}\ckm{-\tfrac{n+5}{2}}}_{\overadd{2}{\psi}_{\red{[m]}}} r^{-3(n+1)/2} h_{AB}
    + m \underbrace{\bigg[\overadd{1}{\psi}\ofP  \, \ckm{\tfrac{n-9}{2}}
    + \overadd{1}{\psi}_{\red{[m]}}  \ck{-\tfrac{n+5}{2}}{\ofPnoP}
     \big]}_{\overadd{2}{\psi}_{1,0}\ofP } r^{-(n+7)/2} h_{AB}
     \nonumber
\\
    &\quad
    + m \alpha^2 \underbrace{\bigg[\overadd{1}{\psi}_{[\alpha]} \ckm{\tfrac{n-5}{2}}
    + \overadd{1}{\psi}_{\red{[m]}}  \cka{-\tfrac{n+5}{2}}
     \big]}_{\overadd{2}{\psi}_{1,1}\ofP }  r^{-(n+3)/2} h_{AB}
     \,.
     \label{5III23.7}
\end{align}
where we made use of the fact that
\wc{see Checking3III23\_5v2.nb}
\begin{align}
    \overadd{1}{\psi}\ofP  \, \cka{\tfrac{n-9}{2}} + \overadd{1}{\psi}_{[\alpha]} \ck{\tfrac{n-5}{2}}{\ofPnoP}  = 0 \,,
\end{align}
as verified readily using \eqref{3III23.5a}-\eqref{3III23.5b}, which implies that the two terms involving $r^{(n-7)/2}$ in (the first equality of) \eqref{5III23.7} cancel, after which we are left with the $i=2$ statement.
 \ptcheck{21V23}

Continuing in the same way, we take $\partial_u$ of \eqref{5III23.3} and make use again of \eqref{4III23.1} to obtain:
\begin{align}
    &\partial_r \bigg[
     \partial_u \overadd{i}{q}_{AB}
    - \overadd{i}{\psi}\ofP  \, \qh_{AB}^{(n-7-2i)/2}
    - \alpha^{2i} \overadd{i}{\psi}_{[\alpha]} \qh_{AB}^{(n-7+2i)/2}
    - m^i \overadd{i}{\psi}_{\red{[m]}} \qh^{(\frac{n-7-2i(n-1)}{2})}_{AB}
    \nonumber
    \\
    &\qquad
    - \sum_{j,\ell}^{i_{**}}  m^{j} \alpha^{2\ell} \overadd{i}{\psi}_{j,\ell}\ofP  \, \qh^{(\frac{1}{2} (n-7- 2 i - 2 j (n-2) + 4 \ell))}_{AB}
    \bigg]
    \nonumber
\\
&=
    \underbrace{\overadd{i}{\psi}\ofP  \, \ck{\tfrac{n-7-2i}{2}}{\ofPnoP}}_{\overadd{i+1}{\psi}\ofP  } r^{(n-9-2i)/2} h_{AB}
    + \alpha^{2(i+1)} \underbrace{\overadd{i}{\psi}_{[\alpha]} \cka{\tfrac{n-7+2i}{2}}}_{\overadd{i+1}{\psi}_{[\alpha]}} r^{(n-5+2i)/2} h_{AB}
    \nonumber
\\
    &\quad
    + m^{i+1} \underbrace{\overadd{i}{\psi}_{\red{[m]}} \ckm{\tfrac{n-7-2i(n-1)}{2}}}_{\overadd{i+1}{\psi}_{\red{[m]}} } r^{\frac{n-7-2(i+1)(n-1)}{2}} h_{AB}
    \nonumber
\\
    &\quad
    + m \overadd{i}{\psi}\ofP  \, \ckm{\tfrac{n-7-2i}{2}} r^{\frac{n-7-2i}{2}-(n-1)} h_{AB}
    + m^i \overadd{i}{\psi}_{\red{[m]}}\ck{\tfrac{n-7-2i(n-1)}{2}}{\ofPnoP} r^{\frac{n-7-2i(n-1)}{2}-1} h_{AB}
    \nonumber
\\
    &\quad
    + m \alpha^{2i} \overadd{i}{\psi}_{[\alpha]}\ckm{\tfrac{n-7+2i}{2}} r^{\frac{n-7+2i}{2}-(n-1)} h_{AB}
    + m^i \alpha^{2 } \overadd{i}{\psi}_{\red{[m]}}\cka{\tfrac{n-7-2i(n-1)}{2}} r^{\frac{n-7-2i(n-1)}{2}+1} h_{AB}
     \nonumber
\\
    &\quad
    + \overadd{i}{\psi}\ofP  \, \alpha^2 \cka{\tfrac{n-7-2i}{2}} r^{(n-5-2i)/2} h_{AB}
    + \alpha^{2i} \overadd{i}{\psi}_{[\alpha]} \ck{\tfrac{n-7+2i}{2}}{\ofPnoP} r^{(n-9+2i)/2} h_{AB}
    \nonumber
\\
    &\quad
    + \sum_{j,\ell}^{i_{**}}  m^{j} \alpha^{2\ell} \overadd{i}{\psi}_{j,\ell}\ofP
    [\ldots ]
    \,,
     \label{5III23.9}
\end{align}
where the terms $[\ldots]$, which  have the right structure, have been omitted as their detailed form is irrelevant for further purposes.
\ptcheck{21V23}
 A similar argument to that below \eqref{3III23.7}, together with the fact that for   $i <\frac{n-1}{2}$ we have
 $$\frac{n-7+ 2i}{2}-(n-1) < -3
  \,,
 $$
 justifies the last condition in the $n$-odd case of \eqref{19V23.2}.  It follows from \eqref{5III23.6} that the coefficients in the next-to-last line of \eqref{5III23.9} are given respectively by
\begin{align}
    \overadd{i}{\psi}\ofP  \, \cka{\tfrac{n-7-2i}{2}}
    & = \overadd{i-1}{\psi}\ofP  \, \ck{\tfrac{n-5-2i}{2}}{\ofPnoP}  \cka{\tfrac{n-7-2i}{2}}
     \,,
\\
    \overadd{i}{\psi}_{[\alpha]} \ck{\tfrac{n-7+2i}{2}}{\ofPnoP}
    & = \overadd{i-1}{\psi}_{[\alpha]} \cka{\tfrac{n-9+2i}{2}} \ck{\tfrac{n-7+2i}{2}}{\ofPnoP}  \,.
\end{align}
Next, it can be verified using \eqref{3III23.5} that we have%
\index{K@$\mathcal{\widehat K}(2i,P)$}%
\begin{align}
    \ck{\tfrac{n-5-2i}{2}}{\ofPnoP} \cka{\tfrac{n-7-2i}{2}}  &=  \ck{\tfrac{n-7+2i}{2}}{\ofPnoP} \cka{\tfrac{n-9+2i}{2}}
     \nonumber
\\
    &=: \mathcal{\widehat K}(2i,P)
     \,,
    \label{5III23.11}
\end{align}
\index{K@$\mathcal{\widehat K}(2i,P)$}%
and hence
 \ptcheck{21V23}
\begin{align}
    & \overadd{i}{\psi}\ofP  \, \alpha^2 \cka{\tfrac{n-7-2i}{2}} r^{(n-5-2i)/2} h_{AB}
    + \alpha^{2i} \overadd{i}{\psi}_{[\alpha]} \ck{\tfrac{n-7+2i}{2}}{\ofPnoP} r^{(n-9+2i)/2} h_{AB}
    \nonumber
\\
    & = \alpha^2 \mathcal{\widehat K}(2i,P) \bigg[ \overadd{i-1}{\psi}\ofP\, r^{(n-5-2i)/2} h_{AB}
    + \alpha^{2(i-1)}\overadd{i-1}{\psi}_{[\alpha]} r^{(n-9+2i)/2} h_{AB}
    \bigg]
    \nonumber
\\
    & = \alpha^2 \mathcal{\widehat K}(2i,P) \bigg[\partial_r \overadd{i-1}{q}_{AB}
    - m^{i-1} \overadd{i-1}{\psi}_{\red{[m]}} r^{\frac{n-7-2(i-1)(n-1)}{2}} h_{AB}
    \nonumber
    \\
    &\qquad\qquad\qquad
    - \sum_{j,\ell}^{(i-1)_{**}}  m^{j} \alpha^{2\ell} \overadd{i-1}{\psi}_{j,\ell}\ofP\, r^{\frac{n-5}{2} - i -  j (n-2) + 2 \ell} h_{AB}
    \bigg]\,.
    \label{5III23.10}
\end{align}
Substituting \eqref{5III23.10} back into \eqref{5III23.9} gives an equation of the form
 \ptcheck{21V23}
\begin{align}
     &\partial_r \big[\partial_u \overadd{i}{q}_{AB}
    - \overadd{i}{\psi}\ofP  \, \qh_{AB}^{(n-7-2i)/2}
    - \alpha^{2i} \overadd{i}{\psi}_{[\alpha]} \qh_{AB}^{(n-7+2i)/2}
    - m^i \overadd{i}{\psi}_{\red{[m]}} \qh^{(\frac{n-7-2i(n-1)}{2})}_{AB}
    \nonumber
    \\
    &\qquad
    - \sum_{j,\ell}^{(i-1)_{**}}  m^{j} \alpha^{2\ell} \overadd{i-1}{\psi}_{j,\ell}\ofP  \, \qh^{(\frac{1}{2} (n-7- 2 (i-1) - 2 j (n-2) + 4 \ell))}_{AB}
    - \alpha^2 \mathcal{\widehat K}(2i,P)  \overadd{i-1}{q}_{AB}
    \big]
    \nonumber
\\
    &=
    \overadd{i+1}{\psi}\ofP\, r^{(n-9-2i)/2} h_{AB}
    + \alpha^{2(i+1)} \overadd{i+1}{\psi}_{[\alpha]}  r^{(n-5+2i)/2} h_{AB}
    + m^{i+1} \overadd{i+1}{\psi}_{\red{[m]}}  r^{(n-7-2(i+1)(n-1))/2} h_{AB}
     \nonumber
\\
    &\quad
    + \sum_{j,\ell}^{(i+1)_{**}} m^{j} \alpha^{2\ell} \overadd{i+1}{\psi}_{j,\ell}\ofP\, r^{\frac{n-7}{2} - (i+1) -  j (n-2) + 2 \ell} h_{AB}\,,
\end{align}
which establishes \eqref{5III23.3}  with $i$ replaced by $i+1$. This completes the  induction  for $n$ even, or for $n$ odd and $i\le \frac{n-1}{2}$.

\paragraph{$n$ odd, $i > \frac{n-1}{2}$.}
When $i = \frac{n-1}{2}$ \eqref{5III23.3} reads
\begin{align}
    \partial_r \overadd{\frac{n-1}2}{q_{AB}} &= \overadd{\frac{n-1}2}{\psi}\ofP\, r^{-3} h_{AB}
    + \alpha^{n-1} \overadd{\frac{n-1}2}{\psi_{\alpha}} r^{n-4} h_{AB}
    + m^{\frac{n-1}2} \overadd{\frac{n-1}2}{\psi}_{\red{[m]}} r^{\frac{n-7-(n-1)^2}{2}} h_{AB}
    \nonumber
    \\
    &\quad
     + \sum_{j,\ell}^{(\frac{n-1}2)_{**}}  m^{j} \alpha^{2\ell} \overadd{\frac{n-1}2}{\psi}_{j,\ell}\ofP\, r^{2\ell-3-j(n-2)} h_{AB} \,.
    \label{6III23.3n}
\end{align}
For odd $n$ one can now continue as follows: Taking the $u$-derivative \eqref{6III23.3n} and making use of \eqref{31III23.2} and \eqref{5III23.11}-\eqref{5III23.10} with $i=\frac{n-1}{2}$ gives,
\begin{align}
    &\partial_r \big[\partial_u \overadd{\frac{n-1}2}{q_{AB}}
    - \overadd{\frac{n-1}2}{\psi}\ofP  \, \qh^{(-3)}_{AB}
    - \alpha^{n-1} \overadd{\frac{n-1}2}{\psi_{\alpha}} \qh_{AB}^{(n-4)}
    - \alpha^2 \mathcal{\widehat{K}}(n-1,P) \qh_{AB}^{\frac{n-3}{2}}
    \nonumber
\\
    &\qquad
    - m^{\frac{n-1}2} \overadd{\frac{n-1}2}{\psi}_{\red{[m]}} \qh^{(\frac{n-7-(n-1)^2}{2})}
     - \sum_{j,\ell}^{(\frac{n-1}2)_{**}} m^{j} \alpha^{2\ell} \overadd{\frac{n-1}2}{\psi}_{j,\ell}\ofP  \, \qh_{AB}^{(2\ell-3-j(n-2))}
    \big]
    \nonumber
\\
    &=
     \overadd{\frac{n-1}2}{\psi}\ofP  \, \big[ \big(\zck{-3}{\ofPnoP}  - \frac{2}{n}\log{(\tfrac{r}{r_2})}  P \big) r^{-4}
    + m \, \ckm{-3} r^{-2-n}
     \big] h_{AB}
    \nonumber
\\
    &\quad
    + \alpha^{n-1} \overadd{\frac{n-1}2}{\psi_{\alpha}}  \big[
     \alpha^2 \underbrace{
      \cka{n-4}
      }_{=0}
       r^{n-3}
     + m \underbrace{\ckm{n-4}}_{=0} r^{-3}\big] h_{AB}
     \nonumber
\\
    & \quad
    - \alpha^2 \mathcal{\widehat K}(n-1,P) \sum_{j,\ell}^{(\frac{n-3}{2})_{**}}  m^{j} \alpha^{2\ell} \overadd{\frac{n-3}{2}}{\psi}_{j,\ell}\ofP\, r^{2 (\ell-1) - j (n - 2)} h_{AB}
     \nonumber
\\
     &\quad
     + m^{\frac{n-1}2} \overadd{\frac{n-1}2}{\psi}_{\red{[m]}}\bigg[ \ck{\tfrac{n-7-(n-1)^2}{2}}{\ofPnoP} r^{\frac{n-9-(n-1)^2}{2}}
     +  \alpha^2 \cka{\tfrac{n-7-(n-1)^2}{2}} r^{\frac{n-5-(n-1)^2}{2}}
     \nonumber
     \\&\qquad\qquad
     + m \ckm{\tfrac{n-7-(n-1)^2}{2}} r^{\frac{n-7-(n+1)(n-1)}{2}}\bigg]  h_{AB}
    \nonumber
\\
    &\quad
     + \sum_{j,\ell}^{(\frac{n-1}2)_{**}}  m^{j} \alpha^{2\ell} \overadd{\frac{n-1}2}{\psi}_{j,\ell}\ofP  \, \bigg[ \ck{2\ell-3-j(n-2)}{\ofPnoP} r^{2\ell-4-j(n-2)}
     \nonumber
     \\&\qquad\qquad
     +  \alpha^2 \cka{2\ell-3-j(n-2)} r^{2\ell-2-j(n-2)}
     \nonumber
     \\&\qquad\qquad
     + m \ckm{2\ell-3-j(n-2)} r^{2\ell-2-j(n-2)-n}\bigg]
    h_{AB}
    \nonumber
\\
    &=  \bigg[ \underbrace{\overadd{\frac{n-1}2}{\psi}\ofP  \, \zck{-3}{\ofPnoP}}_{=: \overadd{(n+1)/2}{\psi}\ofP}  - \frac{2}{n}\log{(\tfrac{r}{r_2})} \underbrace{\overadd{\frac{n-1}2}{\psi}\ofP\circ P}_{=0} \bigg]r^{-4} h_{AB}
    \nonumber
\\
    &\quad
    + m^{(n+1)/2} \overadd{(n+1)/2}{\psi}_{\red{[m]}} r^{\frac{n-n^2-6}{2}} h_{AB}
    + \sum_{j,\ell}^{(i+1)_{**}} m^{j} \alpha^{2\ell} \overadd{(n+1)/2}{\psi}_{j,\ell}\ofP\, r^{2(j+\ell-2)-j n} h_{AB}\,,
    \label{6III23.4}
\end{align}
where we show in Appendix~\pref{ss24IX23.1}
that the ``$\log$-term operator'' in the right-hand side of the last equality is zero when acting on symmetric traceless tensors.  More precisely,  for any vector field $X$ we have
\begin{align}
	\overadd{\frac{n-1}2}{\psi}
 &
  \ofP\circ\, C(X)
 =0\,\,  \implies\,\,	
 \nonumber
\\
 &\overadd{\frac{n-1}2}{\psi}\ofP\circ P (h)
 =\overadd{\frac{n-1}2}{\psi}\ofP\circ
 \underbrace{
   P
  \big(
   h^{[\TTt]}}_{\equiv 0} + C(X)
   \big)=0\,.
\end{align}
This also implies that
all logarithmic terms in $\qh^{(-3)}_{AB}$ (cf. \eqref{5III23.1}) appearing in the first line of \eqref{6III23.4} vanish, since they are always multiplied by   $P(h)_{AB}$ or $C(h_{u A})$.

Putting this together, we have
\begin{align}
    \partial_r \overadd{(n+1)/2}{q_{AB}} &=\overadd{(n+1)/2}{\psi}\ofP\, r^{-4} h_{AB}  
    + m^{(n+1)/2} \overadd{(n+1)/2}{\psi}_{\red{[m]}} r^{\frac{n-n^2-6}{2}} h_{AB}
    \nonumber
    \\
    &\quad
     + \sum_{j,\ell}^{i_{**}}  m^{j} \alpha^{2\ell} \overadd{(n+1)/2}{\psi}_{j,\ell}\ofP\, r^{2(j+\ell-2)-j n} h_{AB}
   \,,
    \label{6III23.5}
\end{align}
where $\overadd{(n+1)/2}{q_{AB}}$ is given by the term in the square brackets of the first line of \eqref{6III23.4}.

For $i > (n+1)/2$, the recursion continues as before, with $\overadd{i}{q}_{AB}$ now given by \eqref{5III23.5b}.
 Note that, the only difference in the structure of this expression compared to \eqref{5III23.5} is that the $\alpha$ terms which are not multiplied by factors of $m$ vanish due to the vanishing of $\cka{n-4}$ and $\cka{-4}$.
The recursion again follows by induction   and the calculation is exactly the same as in \eqref{5III23.9}.

\FGp{2VII23}

We also note that when $k\neq -3$, we have under gauge transformations,%
\index{qh@$\qh_{AB}^{(k)}$}%
\index{gauge transformation law!qh@$\qh_{AB}^{(k)}$}%
\index{qh@$\qh_{AB}^{(k)}$!a@gauge transformation law}%
\FGp{: 14VI23
\\--\\
wan: mass term checked 23VI} 
\begin{align}
    \qh_{AB}^{(k)} &\rightarrow \qh_{AB}^{(k)} - \frac{2}{(7+2k-n)(3+k)}
    \nonumber
    \\
    &\phantom{\rightarrow \qh_{AB}^k}
    \times \bigg[
    r^{k+1}\bigg(\frac{(n-1) (k-n+5)}{(n-2) (k-n+3)} P
    + (k+3)  (k-n+5)  \twoscsign
    \nonumber
    \\
    &\phantom{\rightarrow \qh_{AB}^k\times \bigg[r^{k+1}\bigg(}
   - {\frac{2 (k+3) (k-n+4)^2 }{k-n+3} m r^{2-n} } -(k+4) (k-n+2)\alpha^2r^2 \bigg) \TS[\zspaceD_A\zspaceD_B\xi^u]
   \nonumber
\\
   &\qquad\qquad\quad
   - r^{k+2} \bigg( \left(\frac{2}{k-n+3}+2\right)P
   -(k+3)(k-n+4) \left(\alpha ^2 r^2+{2m r^{2-n} }-\twoscsign \right) \bigg) C(\xi)_{AB}
   \nonumber
\\
   &\qquad\qquad\quad
   + r^{k+3} (n-7 -2 k) C(\partial_u\xi)_{AB}
   \nonumber
\\
   &\qquad\qquad\quad
   + r^{k+2} \bigg(\frac{2 (k+3)}{n-1} \bigg) \TS[\zspaceD_A\zspaceD_B\zspaceD_C\xi^C]
    \bigg]
    \,.
    \label{23IV23.1}
\end{align}
\FGp{ 14VI23}
When $k=-3$, $m=0$, the $r-$independent term from the gauge transformation of $\qh^{(-3)}_{AB}$ reads
\index{qh@$\qh_{AB}^{(k)}$}%
\index{gauge transformation law!qh@$\qh_{AB}^{(k)}$!$\qh^{(-3)}_{AB}$}%
\index{qh@$\qh_{AB}^{(k)}$!qh@$\qh^{(-3)}_{AB}$!gauge transformation law}%
\begin{align}
    \qh_{AB}^{(-3)} &\rightarrow \qh_{AB}^{(-3)}
    	+  \left(n-\frac{2}{n}-\frac{2}{n-1}\right)C(\partial_u\xi)_{AB}
    	+\frac{ (n (n+2)-2)}{n} \alpha ^2\TS[\zspaceD_A\zspaceD_B \xi^u]
    	+ \ldots
    	\,,
    \label{15V23.1}
\end{align}
where the $(\ldots)$ above contains $r$-dependent terms. 

\subsection{The case $n=3$}
\label{A19VI24.1}

As mentioned in the introduction, we  provide here the complete list of gauge-invariant obstructions to gluing in spacetime dimension four, as some of the    obstructions  listed in~\cite{ChCong1} were not gauge-invariant. For this, we reexamine our analysis so far with $n=3$.

When $n=3$, the recursion relation, in  Appendix \ref{A14XI23.1} here, for $\overadd{i}{H}_{uA}$ remains unchanged. However, that for the field $\overadd{i}{q}_{AB}$ simplifies due to the vanishing of the operator
\ptcheck{24VI}
\begin{align}
    P - \frac{1}{2}(\TSzlap + 2 \tric)
\end{align}
acting on symmetric trace-free tensors, which follows from \eqref{15VI24.1} after noting that $\TS[R[\gamma]_{AB}] = 0$ when $n=3$. Thus, when $n=3$ and $i=\frac{n-1}{2}=1$ we have (compare also \eqref{5III23.3} with \cite[Equation (101)]{ChCong1}),
\begin{align}
\label{17VI24.21}
    \overadd{\frac{n-1}{2}}{\psi}\ofP =  \overadd{1}{\psi}\ofP = -\frac{4}{n+1} P + \frac{1}{2}(\TSzlap + 2 \tric) = 0 \,.
\end{align}
%
As a result, the equations $\TS[\partial_{u}^{i-1} \mcE_{AB}] = 0$ coincide with  \eqref{5III23.3} where the first and last term vanish, i.e.,
\begin{align}
\forall i \in \Z^+\,, \quad
    \partial_r  \overadd{i}{q}_{AB}
     &=  \alpha^{2i} \overadd{i}{\psi}_{[\alpha]} r^{(n-7+2i)/2} h_{AB}
    + m^i \overadd{i}{\psi}_{\red{[m]}} r^{\frac{n-7-2i(n-1)}{2}} h_{AB}
   \,.
    \label{15VI24.2}
\end{align}
In the rest of this appendix, for ease of comparison  we typically keep $n$ in the equations but  \textit{we assume $n=3$} unless explicitly indicated otherwise.

One verifies that the recursion formulae \eqref{5III23.4a}-\eqref{5III23.6}  for $\overadd{i}{\psi}_{[m]}$ and $\overadd{i}{\psi}_{[\alpha]}$ remain unchanged.

Moreover, it follows from  arguments identical to those leading to \eqref{15VI24.4} that we have, for positive integers $i$,
\begin{align}
    \overadd{i}{\psi}_{[\alpha]}  =
    \begin{cases}
        1 \,, & i=1 \,,\\
        0\,, & \text{otherwise}\,.
    \end{cases}
\end{align}

With some further work one  finds that the recursion formula \eqref{5III23.5} for $\overadd{i}{q}_{AB}$ is now given by
\begin{align}
    \overadd{i}{q}_{AB}&=
    \begin{cases}
        \partial_u \overadd{1}{q}_{AB}
        - \alpha^2 \qh^{(\frac{n-5}{2})}_{AB}
    - m  \qh^{(\frac{n-7-2(n-1)}{2})}_{AB}
     \,, & i=2\\
    \partial_u \overadd{i-1}{q}_{AB}
    - m^{i-1} \overadd{i-1}{\psi}_{\red{[m]}} \qh^{(\frac{n-7-4(i-1)}{2})}_{AB}
    \,, & i>2\,.
    \end{cases}
    \label{15VI24.6}
\end{align}
%

\medskip

\paragraph{Radial charges.} When $n=3$, the obstructions   in Tables \ref{T11III23.2}-\ref{T11XII23.1} for $m\neq0$ remain as they are. When $m=0$, which we will assume for the rest of this appendix, most of the obstructions listed in Table \ref{T11III23.1} remain as presented, with  terms in the obstructions involving fields $\overadd{p}{q}_{AB}$ for $p\leq \frac{n-3}{2}= 0$   understood to be vacuous. This includes the radial charge $\kQ{3}{}$ of \peqref{17X23.1}, which we recall was defined, when $n>3$, as
\begin{align}
    \kQ{3}{}: = \mrL \big( \overadd{\frac{n-3}2}{q}{} ^{[\ker \overadd{\frac{n-3}2}{\psi}]} )
    +   \alpha^{n-3} \frac{n-1}{n-3} \overadd{\frac{n-3}2}{\psi}_{[\alpha]} \chi \,, \quad n>3\,.
\end{align}
When $n=3$, it is convenient to define $\kQ{3}{}$  instead as the radial (but \emph{not} gauge-invariant)
charge $\chi$:
\begin{align}
    \kQ{3}{}: = \chi \,, \quad n = 3\,.
    \label{19VI24.7}
\end{align}
The gauge transformation of $\zspaceD_A \kQ{3}{}$ (cf.\ \eqref{1III23.1} with $n=3$) will play a role in the definition of $\kQ{5,i}{}$ below (cf.\ \eqref{19VI24.21} and following):
\begin{align}
    \zspaceD_A\kQ{3}{} \rightarrow \zspaceD_A\kQ{3}{} -\frac{1}{2}(\TSzlap - \myGauss)(\TSzlap + \myGauss) \zspaceD_A \xi^u\,.
    \label{19VI24.2}
\end{align}

In addition, the expression \eqref{5XI23.41} for the operator $\Lop$ needs to be modified, since $\overadd{\frac{n-5}{2}}{\psi}$ is clearly undefined when $n=3$.
In analogy with the higher dimensional case, we will define the operator $\opL_3$ to be that associated to the gauge transformation of $\overadd{\frac{n-1}{2}}{q}_{AB}$, equivalently $q_{AB}$ for $n=3$, when $\alpha =m= 0$, which by \eqref{7III23.w1} reads
\ptcheck{27Vi24, up to here}
 \begin{align}
     q_{AB} \rightarrow \  & q_{AB}
     \underbrace{-
   \big(
    \TS[\zspaceD_A\zspaceD_B\zspaceD_C\xi^C]
    - \left(  P -\twoscsign  \right)  C(\xi)_{AB}  \big)}_{=: \opL_3(\xi)_{AB}}
    \,.
    \label{19VI24.1}
 \end{align}
Making use of \eqref{26IX23.1} with $a=1$, $b=c=0$, we have
 \ptcheck{27VI; together}
\begin{align}
    \zdivtwo \circ \opL_3 =
    \pm \frac{1}{4}(\TSzlap - \myGauss) (\TSzlap + \myGauss)\,,
     \label{27VI24.11}
\end{align}
where the plus sign is for $V$ and the minus sign is for $S$.
Note that this is  half of the operator acting on $\zspaceD_A\xi^u$ in the gauge transformation of $\zspaceD_A \kQ{3}{}$ in \eqref{19VI24.2}, consistently with the $n>3$ result in
\eqref{17X23.2}, using \eqref{8VI24.21} below.

The radial charge $\kQ{4}{}$ is defined as in \eqref{17X23.4},  and its gauge transformation remains that of \eqref{17X23.6}:
\begin{align}
    \kQ{4}{} \rightarrow \kQ{4}{}+ \zdivtwo \circ \opL_3(\xi) \,.
    \label{19VI24.3}
\end{align}

Finally, a direct calculation of the gauge transformation of the field $\zspaceD^B\overadd{2}{q}_{AB}$ using \eqref{23IV23.1} gives
\ptcheck{27VI; together; and long mathematica calculation...}
\begin{align}
    \zspaceD^B\overadd{2}{q}_{AB} \rightarrow \zspaceD^B\overadd{2}{q}_{AB}  +\zspaceD^B \opL_3(\partial_u \xi)_{AB} + \alpha^2 \zspaceD^B \opL_3(\zspaceD \xi^u)_{AB}
    \,.
    \label{19VI24.6}
\end{align}

\begin{Remark}
\label{R28VI24.1}  The additional $\alpha^2$-term in this gauge transformation was not present in the higher dimensional case  \eqref{24IV23.4} (the gauge transformation of the projection $\overadd{p}{q}{}_{AB}^{[\ker\overadd{p}{\psi}]}$ is given there; however since $\overadd{p}{\psi}=0$ for all integers $p\geq 1$, the projection $\overadd{p}{q}{}_{AB}^{[\ker\overadd{p}{\psi}]}$ coincides with $
 \overadd{p}{q}_{AB}$ when $n=3$, and therefore the transformation law \eqref{24IV23.4} should be read after removing the integral against the $\mu$-field). This additional term is sourced by the field $ \alpha^{2} \qh_{AB}^{(\frac{n-5}2)}$ appearing in $\overadd{2}{q}_{AB}$ of \eqref{15VI24.6}. 
The analogous term in   dimensions $n>3$ is the term $\alpha^{n-1} \overadd{\frac{n-1}{2}}{\psi}_{[\alpha]} \qh_{AB}^{(\frac{n-5}2)}$ appearing in \eqref{5III23.5} 
with $i= \frac{n+1}{2}$. 
Now,
both
 the field $\overadd{\frac{n+1}{2}}{q}_{AB}$ and $\alpha$
 have the same length dimension  as $r^{-1}$, while the gauge field $\xi^u$ has the same dimension as $r$, and $\xi^A$ is dimensionless.
Thus, the only combination of $\alpha$ and gauge fields that is possible in the gauge transformation of $\overadd{\frac{n+1}{2}}{q}{}^{[\ker\overadd{\frac{n+1}{2}}{\psi}]}_{AB}$ are $\partial_u \xi^A$ and $\alpha^2 \xi^u$. A term of the latter form is only possible in $\alpha^{n-1} \overadd{\frac{n-1}{2}}{\psi}_{[\alpha]} \qh_{AB}^{(\frac{n-5}2)}$ when $n-1=2$, i.e., $n=3$, thus explaining the missing contributions of this term in higher dimensions. 
\qedskip
\end{Remark}

It follows inductively from \eqref{15VI24.6} and \eqref{19VI24.6} that for integers $i>2$, under gauge transformations respecting \eqref{5XII19.1aHD},
\begin{align}
    \zspaceD^B\overadd{i}{q}_{AB} \rightarrow \zspaceD^B\overadd{i}{q}_{AB}  +\zspaceD^B \opL_3(\partial^{i-1}_u \xi)_{AB} + \frac{\alpha^2}{2} \zspaceD^B \opL_3(\zspaceD\circ \zdivone \partial_u^{i-3}\xi)_{AB}
    \,.
    \label{19VI24.5}
\end{align}

\section{Some mapping properties}
 \label{s25X23.1-}

In this section we will verify equation \eqref{6V23.1} for various operators which arise in the gluing construction.

\subsection{Elliptic operators}
 \label{s25X23.1}

Consider an elliptic operator  $\nohatL$ of order $\ell$, with smooth coefficients, on  a compact boundaryless $d$-dimensional manifold $\dmanifold$. We will write $\nohatL|_{X}$ to denote the restriction of $\nohatL$ to a subspace $X\subset \MHell$, and $\nohatL^\dagger$ for the formal adjoint of $\nohatL$ obtained by integration by parts against smooth functions; note that no boundary terms arise in the current case.
We  wish to verify a more precise version of \eqref{6V23.1}, namely:

  \begin{proposition}
  \label{P26X23.1}
  Under the conditions just stated, for $k\ge 0$ it holds that
  {\rm
\begin{equation}\label{6V23.1ed}
 (\ker\nohatL^\dagger)^\perp \cap \MMHk= \im\,  (\nohatL|_{\MMHlk})
  \,.
\end{equation}
}
Equivalently, we have  the $L^2$-orthogonal splitting
  {\rm
\begin{equation}\label{6V23.1edxd}
  \MMHk =  \ker\nohatL^\dagger   \oplus \im\,  (\nohatL|_{\MMHlk})
  \,.
\end{equation}
}
\end{proposition}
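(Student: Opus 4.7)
The plan is to invoke standard Fredholm theory for elliptic operators on compact boundaryless manifolds, supplemented by elliptic regularity to control the Sobolev index $k+\ell$ of the pre-image.

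First, I would record the three ingredients provided by the Agmon--Douglis--Nirenberg theory on the compact boundaryless manifold $\dmanifold$: (i) the operator $\nohatL:\MMHlk\to\MMHk$ is bounded and Fredholm, with finite-dimensional kernel and closed range; (ii) the kernels $\ker\nohatL$ and $\ker\nohatL^\dagger$ consist of smooth functions, by elliptic regularity applied to $\nohatL$ and to its (elliptic, of the same order $\ell$) formal adjoint $\nohatL^\dagger$; (iii) the $L^2$-pairing $\langle\nohatL u,v\rangle_{L^2}=\langle u,\nohatL^\dagger v\rangle_{L^2}$ holds for smooth $u,v$ and extends by density.

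Next, I would establish the easy inclusion $\im(\nohatL|_{\MMHlk})\subseteq (\ker\nohatL^\dagger)^\perp\cap\MMHk$: if $f=\nohatL u$ with $u\in\MMHlk$ and $v\in\ker\nohatL^\dagger\subset C^\infty(\dmanifold)$, then by density of smooth functions in $\MMHlk$ and continuity of $\nohatL$, the pairing $\langle f,v\rangle_{L^2}=\langle u,\nohatL^\dagger v\rangle_{L^2}=0$; and $f\in\MMHk$ by boundedness of $\nohatL:\MMHlk\to\MMHk$.

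For the reverse inclusion, I would proceed in two steps. By the classical $L^2$-theory (closed range theorem applied to the Fredholm operator $\nohatL:\MMHell\to L^2$, or equivalently by the Fredholm alternative in Sobolev scales), one has the $L^2$-orthogonal splitting
\begin{equation}
L^2(\dmanifold)=\ker\nohatL^\dagger\oplus\im(\nohatL|_{\MMHell}).
\end{equation}
Given $f\in(\ker\nohatL^\dagger)^\perp\cap\MMHk$, this provides some $u\in\MMHell$ with $\nohatL u=f$. Elliptic regularity for $\nohatL$ then upgrades $u$: since $f\in\MMHk$ and $\nohatL$ is elliptic of order $\ell$ with smooth coefficients on a compact boundaryless manifold, the a priori estimate $\|u\|_{\ell+k}\le C(\|\nohatL u\|_{k}+\|u\|_0)$ (together with a standard bootstrap) yields $u\in\MMHlk$. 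This gives $f\in\im(\nohatL|_{\MMHlk})$ and proves \eqref{6V23.1ed}; the equivalent splitting \eqref{6V23.1edxd} is then immediate from the definition of orthogonal complement in $\MMHk$ together with the smoothness of $\ker\nohatL^\dagger$, which ensures $\ker\nohatL^\dagger\subset\MMHk$ for every $k$.

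The only point requiring slight care is the last one: the $L^2$-orthogonal complement $(\ker\nohatL^\dagger)^\perp$ is intrinsically an $L^2$ notion, and one must check that intersecting with $\MMHk$ is compatible with the $\MMHk$-inner-product splitting. This is not an obstacle because $\ker\nohatL^\dagger\subset C^\infty\subset \MMHk$ is a finite-dimensional subspace of smooth functions, so the $L^2$- and $\MMHk$-orthogonal projections onto it coincide up to identifying the algebraic complement; the splitting \eqref{6V23.1edxd} therefore holds as stated. No further estimates are needed beyond the standard elliptic a priori bound cited above.
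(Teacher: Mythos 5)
Your proposal is correct and follows essentially the same route as the paper: the easy inclusion via integration by parts against the smooth elements of $\ker\nohatL^\dagger$, the $L^2$-level splitting $L^2=\ker\nohatL^\dagger\oplus\im(\nohatL|_{\MMHell})$, and then elliptic regularity to upgrade the preimage from $\MMHell$ to $\MMHlk$ when the right-hand side lies in $\MMHk$. The only difference is that the paper proves the closedness of $\im(\nohatL|_{\MMHell})$ in $L^2$ by hand, via a Cauchy-sequence argument using the improved estimate $\|\phi\|_{\ell}\le C'\|\nohatL\phi\|_{0}$ on $(\ker\nohatL)^\perp$, whereas you cite the Fredholm/closed-range property of elliptic operators on compact manifolds as a standard fact, which is legitimate.
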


Here, and elsewhere, orthogonality is meant in $L^2$, in particular $(\ker\nohatL^\dagger)^\perp$ is  a subspace of $L^2$.

Before passing to the proof, let us recall a few standard properties of elliptic operators. Letting $\|\cdot\|_k$ denote the $\MHk$-norm, it holds that:
\begin{enumerate}
  \item If $\nohatL \phi \in \MHk$ in a distributional sense, then $\phi \in \MHlk$ and there exists a constant $C$ such that we have
      \begin{equation}\label{25X23.1}
        \|\phi\|_{\ell+k} \le C
        \big(
        \|\nohatL \phi\|_{k }
        +
        \|\phi\|_{0}
        \big)
        \,.
      \end{equation}
      Note that this immediately implies that elements of the kernel of $\nohatL$ are smooth, in particular $\ker\nohatL|_{\MHlk}=\ker\nohatL|_{\MHell}$ for any $k\ge 0$.
      Similarly $\ker \nohatL^\dagger|_{\MHlk}$ is independent of $k\ge 0$ under the current hypotheses, and we will simply write $\ker \nohatL^\dagger $ for $\ker \nohatL^\dagger|_{\MHlk}$,   except when the indication of the domain is relevant for clarity.

\item The spaces $\ker \nohatL$ and $\ker \nohatL^\dagger$ are  finite dimensional.

  \item Let  a function $\phi\in L^2$ satisfy $\nohatL \phi \in \MHk$ in a distributional sense.
   There exists a constant $C'$ such that if  $\phi$ is $L^2$-orthogonal to $ \ker\nohatL$ then
      \begin{equation}\label{25X23.2}
        \|\phi\|_{\ell+k} \le C'
        \|\nohatL \phi\|_{k }
        \,.
      \end{equation}
\end{enumerate}

\noindent
We are ready now to pass to the
\ptclater{strictly speaking, Lemma \ref{L18XI23.1bx} can be used to streamline many of the arguments below, I propose to leave as is for now and perhaps update later}

\medskip

{\noindent\sc Proof of Proposition~\ref{P26X23.1}:}
To prove \eqref{6V23.1ed} we consider,
first, a function $\phi \in L^2 $ which  is orthogonal to the image of $\nohatL|_{\MHlk}$, thus
\begin{equation}\label{3X23.1}
  \forall \psi \in \MHlk
  \qquad
  \int_{\dmanifold } \phi \, \nohatL \psi = 0
  \,.
\end{equation}
By density of $\MHlk$ in $\MHell$ it also holds that
\begin{equation}\label{3X23.1-a}
  \forall \psi \in \MHell
  \qquad
  \int_{\dmanifold } \phi \, \nohatL \psi = 0
  \,.
\end{equation}
Then $\phi$ is a weak solution of the elliptic equation $\nohatL^\dagger \phi =0$. The operator $\nohatL^\dagger$ is also elliptic, and so by elliptic regularity $\phi\in C^\infty $ and  $\nohatL^\dagger \phi =0$ in the classical sense, thus $\phi \in \ker\nohatL^\dagger$. So
\begin{align}
 (\im\, \nohatL|_{\MHlk}  )^\perp
 &
  = \ker(\nohatL^\dagger|_{\MHell})
 \quad
\Longrightarrow
\quad
\nonumber
\\
  &
   \big(\ker(\nohatL^\dagger|_{\MHell})\big)^\perp  =  \big((\im\, \nohatL|_{\MHlk}  )^\perp\big)^\perp =\overline{\im\, \nohatL|_{\MHlk} }
  \,,\label{3X23.1a}
\end{align}
where $\overline{\im\, \nohatL|_{\MHlk} }$ is the closure of $\im\, \nohatL|_{\MHlk} $ in $L^2$.
We will verify that $\im\, \nohatL |_{\MHell}$ is closed  in $L^2$, and hence \eqref{6V23.1ed} holds with $k=0$:
\begin{equation}\label{3X23.1ab}
 \im\, \nohatL|_{\MHell}  =  \big(\ker(\nohatL^\dagger|_{\MHell})\big)^\perp
  \,.
\end{equation}
(Note that  $\im\, \nohatL|_{\MHlk} $ is not closed in $L^2$ in general, so that from \eqref{3X23.1a} we can only conclude that for $k>0$ we have
\begin{align}
   \big(\ker(\nohatL^\dagger|_{\MHell})\big)^\perp \supset\im\, \nohatL|_{\MHlk}
  \,;
  \label{3X23.1adc}
\end{align}
however, equality suitably understood will follow from elliptic regularity.)

Now, by ellipticity the kernel $ \ker \nohatL |_{\MHell}$ is finite dimensional and therefore   $ \ker \nohatL |_{\MHell}$ is closed (cf.\ e.g.~\cite[Theorem~3.4]{Weidmann})  in $L^2$. Since $A^\perp$ is  closed in a Hilbert space for any set $A$, the set   $( \ker \nohatL |_{\MHell} )^\perp$ is also closed in $L^2$. Therefore we have the splitting $L^2= \ker \nohatL|_{\MHell} \oplus (\ker \nohatL|_{\MHell})^\perp$~\cite[Section~3.1]{Weidmann}.
Hence the map
$$
\nohatL|_{\MHell}: (\ker\nohatL  |_{\MHell})^\perp \to \im\, \nohatL|_{\MHell}
$$
is  surjective.

Consider a sequence $\psi_n $ in $ \im\, \nohatL|_{\MHell}$ converging in $L^2$ to $\psi$.  We have $\psi_n = \nohatL  \phi_n$ for a sequence $\phi_n\in \MHell\cap  (\ker\nohatL |_{\MHell})^\perp$.
Since $\psi_n$ is Cauchy in $L^2$, linearity together with \eqref{25X23.2} shows that the sequence $\phi_n$ is Cauchy in $\MHell $, and therefore there exists $\phi$ such that  $\phi_n\to\phi$ in $\MHell$. By continuity $\psi=\nohatL \phi$, and $\im\,\nohatL |_{\MHell}$ is closed, as claimed.

To finish the proof, consider $\psi \in (\ker\nohatL^\dagger)^\perp \cap \MHk$ with $k>0$. By \eqref{3X23.1ab} we have  $\psi \in \im\, \nohatL|_{\MHell}  $, thus $\psi=\operatorname{L}\phi$ for some $\phi\in \MHell$. By elliptic regularity  $\phi \in \MHlk$, and so $\psi \in  \im\,   \nohatL|_{\MHlk} $, as desired.
\qedskip

\begin{remark}
\label{R25X23.1}
{\rm
Identical arguments apply mutatis mutandi in
 H\"older spaces $C^{k,\lambda}(\dmanifold )$ for any  $\lambda \in (0,1)$, and in Sobolev spaces $W^{k,p}(\dmanifold )$ for any $p\in (1,\infty)$,  and in $L^2$-type Sobolev spaces $H^{\ell+s}(\dmanifold )$ for any $s\in [0,\infty)$, and in fact in any spaces in which the equivalents of the inequalities \eqref{25X23.1}-\eqref{25X23.2} hold.
}
\qed
\end{remark}

\subsection{\protect\rm$\zdivone$}
 \label{ss12XI22.2b}

\index{div@$\zdivone$}%
\index{div@$\zdivone$!image}%
\index{image!d@$\zdivone$}%
\begin{Proposition}
   \label{P30X22.1erd}
   For $k\ge 0$ we have
   {\rm
   $$
   \im\big(\zdivone|_{\MMHkp}\big)= \big(\ker \zdivonedagger \big)^\perp \cap \MMHk
   \,.
   $$
   }
\end{Proposition}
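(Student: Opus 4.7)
The plan is to reduce the statement to the elliptic case treated in Proposition~\ref{P26X23.1} by composing $\zdivone$ with its formal adjoint. Although $\zdivone$ itself is not an elliptic system (it maps vectors to scalars), the scalar operator
$$
\zdivone\circ\zdivonedagger
$$
is, up to sign, the Laplace--Beltrami operator on $\dmanifold$: it is second-order, formally self-adjoint, has smooth coefficients, and is elliptic. Consequently Proposition~\ref{P26X23.1} applies to $\nohatL=\zdivone\circ\zdivonedagger$ with $\ell=2$, yielding
$$
\im\big((\zdivone\circ\zdivonedagger)|_{\MMHkpp}\big) \;=\; \big(\ker(\zdivone\circ\zdivonedagger)^\dagger\big)^\perp \cap \MMHk.
$$
Integration by parts gives $\int f\,(\zdivone\circ\zdivonedagger)f = -\int |\zdivonedagger f|^2$, so $(\zdivone\circ\zdivonedagger)^\dagger=\zdivone\circ\zdivonedagger$ and its kernel coincides with $\ker\zdivonedagger$. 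Hence
$$
\im\big((\zdivone\circ\zdivonedagger)|_{\MMHkpp}\big) \;=\; \big(\ker\zdivonedagger\big)^\perp \cap \MMHk.
$$

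The claim then follows by a sandwich argument. First, the chain of inclusions
$$
\im\big((\zdivone\circ\zdivonedagger)|_{\MMHkpp}\big) \;\subseteq\; \im\big(\zdivone|_{\MMHkp}\big) \;\subseteq\; \big(\ker\zdivonedagger\big)^\perp \cap \MMHk
$$
holds: the left inclusion because $\zdivonedagger$ sends $\MMHkpp$ into $\MMHkp$, so any element of the form $\zdivone\zdivonedagger f$ is already in $\im(\zdivone|_{\MMHkp})$; the right inclusion because $\zdivone\xi \in \MMHk$ whenever $\xi\in \MMHkp$, and for any $f\in\ker\zdivonedagger$ integration by parts yields
$$
\int_{\dmanifold} f\,\zdivone\xi\,dV \;=\; \int_{\dmanifold} (\zdivonedagger f)\cdot\xi\,dV \;=\; 0.
$$
Since the two ends of the chain coincide by the previous paragraph, all three sets are equal, proving the proposition.

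The argument is essentially routine once the ellipticity of $\zdivone\circ\zdivonedagger$ is identified, so I do not anticipate any real obstacle; the only point requiring a little care is checking that the image of $\zdivone\circ\zdivonedagger$ on $\MMHkpp$ is contained in the image of $\zdivone$ on $\MMHkp$ (rather than on a space of higher regularity), but this is immediate from the mapping properties $\zdivonedagger:\MMHkpp\to\MMHkp$.
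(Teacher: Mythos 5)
Your proof is correct and is essentially the paper's own argument: the paper proves the reverse inclusion by solving $\Delta\phi=\psi$ for $\psi$ orthogonal to the constants and setting $\xi=\zspaceD\phi$, which is exactly your factorization of $\psi$ through $\im\big((\zdivone\circ\zdivonedagger)|_{\MMHkpp}\big)$, i.e.\ the Laplacian solvability that Proposition~\ref{P26X23.1} packages; your sandwich merely phrases the same step kernel-agnostically, without identifying $\ker\zdivonedagger$ with the constants. The only (harmless) slip is a sign: since $\zdivonedagger f=-\zspaceD f$, one has $\int f\,\zdivone\zdivonedagger f=+\int|\zdivonedagger f|^2$, which of course still gives $\ker(\zdivone\circ\zdivonedagger)=\ker\zdivonedagger$.
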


\proof
For all $\xi\in\MHkp$ we have
$$
 \int \zdivone \xi =0
 \,,
$$
hence
$$  \im\big(\zdivone|_{\MHkp}\big)\subset  \{1\}^\perp= \big(\ker \zdivonedagger )^\perp
  \,.
$$
For the reverse inclusion, note that for every $\psi\in  \{1\}^\perp\cap \MHk$ there exists $\phi \in \MHkpp$ such that $\Delta \phi = \psi$. Setting $\xi = \zspaceD\phi\in\MHkp$ one obtains $\psi = \zdivone \xi$, thus
\begin{equation}
\im\big(\zdivone|_{\MHkp}\big)\supset  \{1\}^\perp\cap \MHk
\,.
\tag*{\qed}
\end{equation}

\subsection{\protect\rm$\zdivtwo$}
 \label{ss12XI22.2app}
 
\index{divtwo@$\zdivtwo$!image}%
\index{image!divtwo@$\zdivtwo$}%
While this is irrelevant for our purposes here, we note that $\zdivtwo$ (acting on symmetric traceless two-tensors) is conformally covariant in all dimensions. In particular, in dimension $\ddim\ge 2$, if $g_{AB} = \phi^{-\ell}  \bar g_{AB}$ then
 \begin{equation}\label{18XI22.41}
   D_A h^{AB} =\phi^{ (\ddim+2)\ell/2}  \bar{D}_A(\phi^{ -(\ddim+2)\ell/2}h^{AB})
   \,,
 \end{equation}
where $D$ is the Levi-Civita connection of $g$ and $\bar D$ that of $\bar g$.
This shows that it suffices to understand the kernel of $\zdivtwo$ for, e.g., metrics of constant scalar curvature.

In any case we have:

\index{div@$\zdivtwo$}%
\index{div@$\zdivtwo$!a@image}%
\index{image!div@$\zdivtwo$}
\begin{Proposition}
   \label{P30X22.1}
   For $k\ge 0$ it holds that
   {\rm
   $$
   \im\big(\zdivtwo|_{\MMHkp}\big)= \CKVp\cap \MMHk
   \,.
   $$
   }
   In particular if $\zR_{BC}<0$, then  the operator {\rm ${}\zdivtwo{}$} is surjective.
\end{Proposition}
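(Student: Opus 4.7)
The plan is to identify the formal $L^2$-adjoint $\zdivtwodagger$ with minus the conformal Killing operator $C$, and then to reduce the existence of a preimage to an application of Proposition~\ref{P26X23.1} for the second-order composition $L := \zdivtwo\circ C$ acting on vector fields on the closed manifold.

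A first integration by parts yields, for a vector field $\xi^A$ and a symmetric traceless two-tensor $h_{AB}$, the identity
\[
\int_{\dmanifold} h^{AB}\, C(\xi)_{AB}\, d\mu = -\int_{\dmanifold} \xi^B (\zdivtwo h)_B\, d\mu,
\]
the symmetrization and trace-removal built into $C$ being absorbed by the symmetry and tracelessness of $h$. Hence $\zdivtwodagger = -C$ and $\ker \zdivtwodagger = \ker C = \CKV$, which immediately gives the inclusion $\im\bigl(\zdivtwo|_{\MMHkp}\bigr)\subseteq \CKVp\cap\MMHk$.

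For the reverse inclusion I would study $L := \zdivtwo\circ C$. It is formally self-adjoint by the identity just established and satisfies $\langle L\xi,\xi\rangle_{L^2} = -\|C\xi\|^2_{L^2}$, so $\ker L = \CKV$. A direct principal-symbol computation shows that $L$ is elliptic: at a nonzero covector $\eta$, decomposing a test vector $v$ as $v = \alpha\, \eta^\sharp + w$ with $\eta_A w^A = 0$, the principal symbol of $L$ acts on $v$ as the scalar $-\tfrac12|\eta|^2$ on the component $w$ and as the scalar $(\tfrac{1}{d}-1)|\eta|^2$ on the parallel component $\alpha\eta^\sharp$, both nonvanishing for $d\ge 2$. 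Proposition~\ref{P26X23.1} applied to $L$ then yields, for any $V\in \CKVp\cap\MMHk$, a vector field $\xi\in\MMHkpp$ with $L\xi = V$. Setting $h := C(\xi)\in\MMHkp$ produces $\zdivtwo h = V$ at the required regularity, closing the opposite inclusion.

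The final assertion is then immediate: when $\zR_{AB}<0$, as recorded following \eqref{6III22.3}, on a compact negatively curved Einstein manifold $\CKV = \{0\}$, and consequently $\CKVp\cap\MMHk = \MMHk$, so $\zdivtwo$ is surjective. The one step that is not purely formal is the ellipticity of $L$, since neither $\zdivtwo$ nor $C$ is elliptic in isolation; I would therefore spell out the symbol calculation above in the written proof to make the appeal to Proposition~\ref{P26X23.1} rigorous.
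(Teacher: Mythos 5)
Your proposal is correct and follows essentially the same route as the paper: the inclusion $\im\big(\zdivtwo\big)\subseteq\CKVp$ by integration by parts, and the reverse inclusion by solving $(\zdivtwo\circ\, C)\,\xi = V$ via the elliptic theory of Proposition~\ref{P26X23.1} and setting $h=C(\xi)$, with surjectivity for negative Ricci curvature following from the triviality of $\CKV$. The only difference is that you spell out the principal-symbol computation showing $\zdivtwo\circ\, C$ is elliptic, which the paper simply asserts (it is consistent with Lemma~\ref{L18VIII23.2}), and this is a harmless, indeed welcome, addition.
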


\proof
For  all conformal Killing vectors $\xi$ and all symmetric traceless tensors $h \in H^1$ we have
 \begin{equation}\label{30X22.CKV5c}
  \int \xi^A \dnabla^B h_{AB}
  =  - \int \dnabla^B\xi^A  h_{AB} =
   - \int \TS (\dnabla^B\xi^A ) h_{AB} = 0
   \,.
 \end{equation}
 Thus
   $$
   \im\big(\zdivtwo|_{\MHkp}\big)\subset  \CKVp\cap \MHk
   \,.
   $$
   The reverse inclusion follows as in the proof of Proposition~\ref{P30X22.1erd} using  the elliptic operator  $\zdivtwo \circ\, C $ instead of the Laplacian.

 Since the space of conformal Killing vectors is trivial when the Ricci tensor is negative by Proposition~\ref{P30X22.2} below, 
   surjectivity for such metrics  follows.
 \qedskip

\subsection{\protect\rm$\zdivone\circ \zdivtwo$}
 \label{ss12XI22.2axb}

\index{div@$\zdivone\circ\zdivtwo$}%
\index{div@$\zdivone\circ\zdivtwo$!image}%
\index{image!d@$\zdivone\circ\zdivtwo$}%
\begin{Proposition}
   \label{P30X22.12}
   Suppose that $(\dmanif,\ringh)$ is Einstein.
   For $k\ge 2$ it holds that
   {\rm
   \begin{equation}\label{5XI23.9}
   \im\big(\zdivone\circ \zdivtwo|_{\MMHkpp}\big)=
   \Big( \ker\big((\zdivone\circ \zdivtwo)^\dagger\big)\Big)^\perp\cap \MMHk
   \,.
   \end{equation}
   }
\end{Proposition}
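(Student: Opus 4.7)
\proof[Proof proposal]
The plan is to reduce the statement to Proposition~\ref{P26X23.1} by passing to the auxiliary self-adjoint operator $\mrL\circ \mrL^\dagger$ on scalars, where $\mrL := \zdivone\circ\zdivtwo$. The forward inclusion
$$
 \im\big(\mrL|_{\MMHkpp}\big)\subseteq \big(\ker \mrL^\dagger\big)^\perp \cap \MMHk
$$
is immediate from integration by parts: for $h\in \MMHkpp$ symmetric traceless and $\lambda\in\ker\mrL^\dagger$,
$$
\int_{\dmanif}\lambda\,\mrL h\,d\mu_{\zh}=\int_{\dmanif}\TS[\zspaceD_A\zspaceD_B\lambda]\,h^{AB}\,d\mu_{\zh}=0,
$$
since $\mrL^\dagger(\lambda)=\TS[\zspaceD_A\zspaceD_B\lambda]$ and $h$ is traceless symmetric.

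For the reverse inclusion, the key step is to verify that $\mrL\circ\mrL^\dagger:\MMHkpppp\to\MMHk$ is a formally self-adjoint elliptic operator of order four on scalars. The principal symbol of $\mrL^\dagger$ at a covector $\xi$ is $\TS[\xi_A\xi_B]=\xi_A\xi_B-\tfrac{1}{d-1}|\xi|^2\zh_{AB}$, and composing with the symbol $\xi^A\xi^B$ of $\mrL$ yields $\tfrac{d-2}{d-1}|\xi|^4$, which is non-zero for $\xi\ne 0$ since $d\ge 3$ (recall $n>3$ in our setting, so $d=n-1\ge 3$). Self-adjointness and smoothness of coefficients (using the Einstein hypothesis to control lower-order terms via curvature commutators) place $\mrL\circ\mrL^\dagger$ in the framework of Section~\ref{s25X23.1}.

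Next I would observe that $\ker\big((\mrL\mrL^\dagger)|_{\MMHkpppp}\big)=\ker \mrL^\dagger$: indeed $\mrL\mrL^\dagger\lambda=0$ together with an integration by parts gives $\|\mrL^\dagger\lambda\|_{L^2}^2=0$, and conversely the inclusion $\supseteq$ is trivial. Now given $\psi\in(\ker\mrL^\dagger)^\perp\cap \MMHk$, Proposition~\ref{P26X23.1} applied to the self-adjoint elliptic operator $\mrL\circ\mrL^\dagger$ furnishes $\phi\in \MMHkpppp$ with $\mrL\mrL^\dagger\phi=\psi$. Setting $h=\mrL^\dagger\phi\in \MMHkpp$, one has $h$ symmetric traceless and $\mrL h=\psi$, establishing the desired inclusion.

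The main technical point that requires care is the ellipticity and self-adjointness of $\mrL\circ\mrL^\dagger$ in the tensorial setting, i.e.\ checking that the Agmon--Douglis--Nirenberg framework applies so that Proposition~\ref{P26X23.1} (stated for scalar elliptic operators, but extending by Remark~\ref{R25X23.1} to any setting where the elliptic estimates \eqref{25X23.1}--\eqref{25X23.2} hold) is applicable. The Einstein assumption enters precisely here: it ensures that curvature terms arising in commuting $\zspaceD$'s through $\TS[\cdot]$ do not spoil the principal symbol computation or introduce obstructions in the lower-order structure. Apart from this, the argument is a standard Fredholm/elliptic regularity bootstrap completely parallel to the proofs of Propositions~\ref{P30X22.1erd} and~\ref{P30X22.1} given just above.
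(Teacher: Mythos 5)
Your proof is correct and follows essentially the same route as the paper: both arguments solve $\zdivone\circ\zdivtwo\, h=\psi$ by the ansatz $h=(\zdivone\circ\zdivtwo)^\dagger\phi=C(\zspaceD\phi)$, thereby reducing surjectivity onto $(\ker (C\circ\zspaceD))^\perp$ to a fourth-order scalar elliptic equation for $\phi$. The only real difference is how solvability of that scalar equation is settled: the paper uses the Einstein condition to compute the composition explicitly, $\zspaceD^A\zspaceD^B C(\zspaceD\phi)_{AB}=\frac{d-1}{d}\,\zDelta(\zDelta+\myGauss d)\phi$, and reads off the isomorphism between the relevant orthogonal complements, whereas you invoke the abstract splitting of Proposition~\ref{P26X23.1} for the self-adjoint elliptic operator $\mrL\circ\mrL^\dagger$ together with the kernel identity $\ker(\mrL\circ\mrL^\dagger)=\ker\mrL^\dagger$ obtained by integration by parts; your version has the small advantage of not actually using the Einstein hypothesis, consistent with the paper's remark that this hypothesis should be unnecessary. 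One harmless slip: on a $d$-dimensional section the trace-free projection subtracts $\tfrac1d$ of the trace, so the composed principal symbol is $\tfrac{d-1}{d}|\xi|^4$ rather than $\tfrac{d-2}{d-1}|\xi|^4$; it is non-zero in either form, so ellipticity and the rest of your argument are unaffected.
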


\begin{Remark}{\rm
It seems clear that the hypothesis that the metric is Einstein is not necessary, but the result in this form is sufficient for our purposes.
\qed
}
\end{Remark}
\proof
For  all functions $\psi \in H^2$ and for for all symmetric and traceless tensors $h \in H^2$ we have
 \begin{equation}\label{30X22.CKV5cd}
  \int \psi \dnabla^B \dnabla^A  h_{AB}
  =   \int \TS(\dnabla^A \dnabla^B \psi)  h_{AB}
   \,.
 \end{equation}
 Thus
 \begin{equation}\label{5XI23.11}
   (\zdivone\circ \zdivtwo)^\dagger
   =
 C\circ \zspaceD
   \,,
 \end{equation}
   and
   $$
   \im\big(\zdivone\circ \zdivtwo|_{\MHkpp}\big)
   \subset
   \Big( \ker ( C\circ \zspaceD)\Big)^\perp\cap \MHk
   \,.
   $$
 To obtain the reverse inclusion,
   let $h_{AB}= C(\zspaceD\phi)_{AB}$ for a function $\phi\in \MHkpppp$. Then
 \begin{equation}\label{5XI23.12}
   \zspaceD^A \zspaceD^B h_{AB}
   =\frac{d-1}{d}\TSzlap ^2 \phi + \frac{1}{2}(\zspaceD^A \zR) \zspaceD_A\phi +\zR_{AB}\zspaceD^A\zspaceD^B\phi
   = \frac{d-1}{d} \zDelta (\zDelta + \myGauss d ) \phi
   \,.
 \end{equation}
 Since $d>2$, the operator at the right-hand side of this equation is an isomorphism between
 $\ker (C\circ \zspaceD)^\perp \cap \MHkpppp$
 and
 $\ker (C\circ \zspaceD)^\perp \cap \MHk$, 
 which finishes the proof.
 \qed

\subsection{$\Lop$ and $\hLop_{n}$}
 \label{ss10XI23.1}

 Recall that (cf.~Remark~\pref{R16X23.1})
\index{Ln@$\underline{\Lop}$}%
\index{Ln@$\Lop$}%
\begin{align}
    \Lop
    &= \overadd{\frac{n-5}{2}}{\psi}\ofP \, \underline{\Lop}
\\
     \underline{\Lop}(\xi)_{AB} &=
     \begin{cases}
        \frac{1}{8} (\TSzlap +2 \tric-4 \myGauss ) (\TSzlap +2 \tric-6 \myGauss )  C(\xi)_{AB}
        &
        \\
        \qquad\qquad\qquad
        - \frac{1}{6}  (\TSzlap + 2 \tric - 5 \myGauss ) \TS[\zspaceD_A\zspaceD_B\zspaceD_C\xi^C] \,, & n=5\,,
        \\
        \frac{1}{(n-1)(n-5)}
     \bigg( (\TSzlap  +2 \tric -2 (n-2) \myGauss) (\TSzlap +2 \tric +(1-n) \myGauss)
     C(\xi)_{AB} &
     \\
     \qquad\qquad\qquad
     -\frac{2 (n-3)}{n-2} (\TSzlap +2 \tric +(5-2 n) \myGauss )
     \TS[\zspaceD_A\zspaceD_B\zspaceD_C\xi^C]
     \bigg) \,, & n\neq 5\,,
     \end{cases}
     \label{30VI.1Ln}
\end{align}
and (cf.~\eqref{5XI23.31})
\index{Ln@$\hLop_{n}$}%
\begin{align}
    \hLop_{n}(\xi^u)
    = \begin{cases}
        -\left(\frac{2}{9}P-\myGauss\right) \circ\, C(\zspaceD_A\xi^u) \,, & n=5 
         \\
        - \frac{n-4}{(n-5)(n-2)^2} \overadd{\frac{n-5}2}{\psi}\ofP \, \bigg( (n-1) P - 2 (n-2)^2 \myGauss
		\bigg) \circ\, C(\zspaceD_A\xi^u)
    \,,
    & n>5\,.
    \end{cases}
     \label{23XI23.1}
\end{align}

The following observation turns out to be useful for the problem at hand:

\begin{Lemma}
 \label{L10xi23.1}
Let $k\ge 0$ and let $A$ and $B$ be   linear partial differential operators of orders $a$ and $b$ such that $A\circ B$ is elliptic.
Then

\begin{enumerate}
  \item $\ker B$ is finite dimensional with $\ker B\subset C^\infty$.
  \item   
If
\begin{equation}\label{14XI23.1a}
  \ker B =  \ker (A\circ B)
   \,,
\end{equation}
then
{\rm
$\im \big(
  B|_{\MMHkab}
  \big)
  \label{10XI23.4}$
}%
is closed in $\MMHka$.
\end{enumerate}
\end{Lemma}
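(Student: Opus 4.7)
For part (1), the plan is to invoke standard elliptic regularity applied to $A\circ B$. Since $A\circ B$ is elliptic of order $a+b$ with smooth coefficients on a compact boundaryless manifold, its kernel is finite-dimensional and contained in $C^\infty$ (cf.\ the properties listed before Proposition~\ref{P26X23.1}). The trivial inclusion $\ker B\subset \ker(A\circ B)$ then yields the claim.

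For part (2), the strategy is to combine the a~priori elliptic estimate for $A\circ B$ with the hypothesis $\ker B=\ker(A\circ B)$ in order to obtain a control of $\phi$ in $H^{k+a+b}$ by $B\phi$ in $H^{k+a}$, modulo the kernel. Concretely, I would proceed in the following steps. First, for any $\phi \in H^{k+a+b}$ with $\phi \perp \ker B$ in $L^2$, the assumption $\ker B=\ker(A\circ B)$ gives $\phi \perp \ker(A\circ B)$, so that the orthogonal-complement version of the elliptic estimate (the analog of \eqref{25X23.2} for $A\circ B$) yields
\[
\|\phi\|_{k+a+b}\;\le\; C\,\|(A\circ B)\phi\|_{k}\;=\;C\,\|A(B\phi)\|_{k}\;\le\; C'\,\|B\phi\|_{k+a},
\]
where the last bound uses the continuity of $A\colon H^{k+a}\to H^k$ (which follows from the order $a$ and smooth coefficients).

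Next, to prove closedness, take a sequence $\psi_n=B\phi_n\in \im(B|_{H^{k+a+b}})$ converging to some $\psi$ in $H^{k+a}$. Since $\ker B\subset C^\infty$ is finite-dimensional, it is closed in $L^2$, and I can replace $\phi_n$ by its $L^2$-orthogonal projection $\phi_n^\perp$ onto $(\ker B)^\perp$, without changing $B\phi_n$. Applying the estimate above to differences $\phi_n^\perp-\phi_m^\perp$, and using that $\{B\phi_n\}$ is Cauchy in $H^{k+a}$, we deduce that $\{\phi_n^\perp\}$ is Cauchy in $H^{k+a+b}$, hence converges to some $\phi^\perp\in H^{k+a+b}$. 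By continuity of $B\colon H^{k+a+b}\to H^{k+a}$, we obtain $\psi=B\phi^\perp\in \im(B|_{H^{k+a+b}})$, which is the desired closedness.

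The only delicate point is the quotient-elliptic estimate used in the first display. It requires that $\phi^\perp$ lies in the $L^2$-orthogonal complement of $\ker(A\circ B)$, which is where the standing hypothesis $\ker B=\ker(A\circ B)$ enters in an essential way; without it, the estimate on $\phi$ would only hold modulo $\ker(A\circ B)$, whereas only the smaller space $\ker B$ is killed by $B$, and the argument would break down. Everything else is routine Hilbert-space manipulation together with the elliptic regularity package already invoked in Appendix~\ref{s25X23.1}; in particular analogous statements hold in H\"older or $L^p$-Sobolev scales under the same hypothesis, along the lines of Remark~\ref{R25X23.1}.
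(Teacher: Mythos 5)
Your proof is correct and follows essentially the same route as the paper: part (1) via $\ker B\subset\ker(A\circ B)$ and ellipticity of $A\circ B$, and part (2) by projecting $\phi_n$ onto $(\ker B)^\perp=(\ker(A\circ B))^\perp$ and combining the quotient elliptic estimate for $A\circ B$ with continuity of $A\colon H^{k+a}\to H^{k}$ and of $B\colon H^{k+a+b}\to H^{k+a}$ to get a Cauchy sequence in $H^{k+a+b}$. The only difference is cosmetic: you merge the two steps into the single estimate $\|\phi\|_{k+a+b}\le C'\|B\phi\|_{k+a}$ on $(\ker B)^\perp$, whereas the paper keeps the passage through $A\psi_n$ explicit.
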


\proof
1. We have  $\ker B\subset \ker A\circ B$, and the result follows form ellipticity.

2.
Let $\psi\in\overline{\im B|_{\MHkab}}$, thus there exists a sequence $\psi_n\in\im B|_{\MHkab}$  converging to $\psi$ in $\MHka$. By definition there exists  $\phi_n\in \MHkab$ such that $\psi_n = B \phi_n$. Since $\ker B$ is finite dimensional we have the $L^2$-orthogonal decomposition
\begin{equation}\label{14XI23.21}
 \MHkab = \ker B \oplus \big( (\ker B)^\perp \cap \MHkab\big)
 \,.
\end{equation}
Indeed, let $\phi_i$, $i=1,\ldots,\dim \ker B$, be a basis of $\ker B$, then $(\ker B)^\perp$ is the intersection of the zero-sets of the finite number of continuous functionals
$$
 \MHkab \ni \psi \mapsto \int  \phi_i B^\dagger \psi
 \,,
$$
and is therefore closed. 

In view of \eqref{14XI23.1a} we can
write $\phi_n = \phi_n^\parallel + \phi_n^\perp$, with $\phi_n^\parallel \in \ker B$ and  $\phi_n^\perp \in \ker B^\perp$, and it holds that
 \begin{equation}\label{11XI23.1}
   \psi_n = B \phi_n^\perp
   \,.
 \end{equation}
The sequence $\psi_n\in\im B|_{\MHkab}$  is Cauchy in $\MHka$. By continuity and linearity the sequence  $A \psi_n$ is Cauchy in $\MHk$. 
 Since
$(\ker B)^\perp = \big( \ker (A\circ B)\big)^\perp$ by hypothesis we have $\psi_n \in \big( \ker (A\circ B)\big)^\perp$.
This shows that we can use \eqref{25X23.2} with $L=A\circ B$ to conclude that the sequence
$\phi_n^\perp$ is Cauchy in $\MHkab$,
thus the limit  $\phi =\lim_{n\to\infty} \phi_n^\perp$ exists.
Continuity of $B:\MHkab\to\MHka$ shows that
\begin{equation}\label{10XI23.3}
 \psi:= \lim_{n\to\infty} \psi_n =
   \lim_{n\to\infty} B\phi_n^\perp = B\phi
   \,,
\end{equation}
and $\im \,B$ is a closed subspace of $\MHka$, as desired.
\qedskip

Recall that $S$ denotes the space of vector fields which are gradients, and $V$ that of vector fields which have vanishing divergence. We have

\begin{Lemma}
 \label{L12XI23.1}
For $k\ge 1$ the spaces
$$
 S^k:= S\cap \MMHk
 \
 \mbox{and}
 \
 V^k:= V\cap \MMHk
$$
are closed, and we have the $L^2$-orthogonal splitting
\begin{equation}\label{13XI23.1k}
\MMHk =  S^k \oplus V^k
\,.
\end{equation}
\end{Lemma}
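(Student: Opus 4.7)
\medskip

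\noindent\textbf{Proof plan.} The closedness of $V^k$ is immediate, as $V^k$ is the kernel of the bounded linear operator $\zdivone : H^k(\dmanif) \to H^{k-1}(\dmanif)$. For closedness of $S^k$, I would invoke Lemma~\ref{L10xi23.1} with $B = \zspaceD$ (mapping scalars in $H^{k+1}$ to one-forms in $H^k$, of order $1$) and $A = \zdivone$ (of order $1$), since $A\circ B = \TSzlap$ is elliptic on functions, and $\ker B = \mathbb{R} = \ker \TSzlap = \ker(A\circ B)$ on the compact boundaryless manifold $\dmanif$. The hypothesis \eqref{14XI23.1a} of Lemma~\ref{L10xi23.1} is therefore satisfied, yielding closedness of $\im(B|_{H^{k+1}}) = S^k$ in $H^k$.

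\medskip

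The orthogonality $S^k \perp V^k$ in the $L^2$ pairing is a direct integration by parts: for $\xi = \zspaceD\phi \in S^k$ with $\phi \in H^{k+1}$ and $\eta \in V^k$, we have
\begin{equation*}
    \int_{\dmanif} \xi_A \eta^A \, d\mu_{\dmetric}
    = \int_{\dmanif} (\zspaceD_A \phi)\, \eta^A\, d\mu_{\dmetric}
    = -\int_{\dmanif} \phi \, \zdivone\eta \, d\mu_{\dmetric} = 0,
\end{equation*}
where the boundary term vanishes because $\dmanif$ is compact and boundaryless, and $\zdivone\eta = 0$ by definition of $V^k$.

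\medskip

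For the sum $H^k(\dmanif) = S^k + V^k$, given $\xi \in H^k(\dmanif)$, I would solve the Poisson equation
\begin{equation*}
    \TSzlap \phi = \zdivone \xi
\end{equation*}
for $\phi$. The right-hand side lies in $H^{k-1}(\dmanif)$ and integrates to zero over $\dmanif$, so it is $L^2$-orthogonal to $\ker \TSzlap^\dagger = \ker\TSzlap = \mathbb{R}$. By Proposition~\ref{P26X23.1} applied to $\TSzlap$ (of order $2$), there exists $\phi \in H^{k+1}(\dmanif)$, unique up to an additive constant, solving this equation. Setting $\xi^{[S]} := \zspaceD\phi \in S^k$ and $\xi^{[V]} := \xi - \zspaceD\phi$, one checks $\zdivone \xi^{[V]} = \zdivone\xi - \TSzlap\phi = 0$, hence $\xi^{[V]} \in V^k$, giving the decomposition. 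Uniqueness of the decomposition follows from the orthogonality established above.

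\medskip

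The main step is the closedness of $S^k$, but this is handled cleanly by Lemma~\ref{L10xi23.1}; the remainder is standard Hodge-theoretic manipulation. No substantial obstacle is expected.
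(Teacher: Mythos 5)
Your proposal is correct and follows essentially the same route as the paper: closedness of $V^k$ as the kernel of the bounded operator $\zdivone$, closedness of $S^k$ via Lemma~\ref{L10xi23.1} with $A=\zdivone$, $B=\zspaceD$ and $A\circ B=\TSzlap$, and $L^2$-orthogonality by integration by parts. Your explicit construction of the splitting by solving $\TSzlap\phi=\zdivone\xi$ through Proposition~\ref{P26X23.1} is just a spelled-out version of what the paper leaves implicit (cf.\ its scalar--vector decomposition in Appendix~\ref{app16X23.1}), so no further comment is needed.
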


\proof
$V$ is the kernel of the continuous operator $\zdivone:\MHk \to\MHkm$, hence closed.

The space $S^k$ is the image of $\zspaceD|_{\MHkp}$. The fact that $S^k$ is closed follows from Lemma~\ref{L10xi23.1} with $A=\zdivone$ and $B=\zspaceD$, since $\zdivone\circ \zspaceD=\TSzlap$ is elliptic  and the kernels of $B$ and $A\circ B$ coincide.

The splitting \eqref{13XI23.1k} follows immediately from the equality
\begin{equation}\label{13XI23.2}
  \int \xi_A \zspaceD^A \varphi
 =  -\int \varphi   \zspaceD^A \xi_A
 \,,
\end{equation}
with the last integral vanishing if $\zdivone \xi = 0$.
\qedskip

\index{Ln@$\Lop$!image}%
\index{image!Ln@$\Lop$}%
\begin{Proposition}
\label{P13XI23.1} For $k\ge 0$ and $n\ge 5$ the image 
{\rm
$$
 \im_k\Lop:=\im\, \Lop|_{\MMHktwon}
$$
}is closed in $\MMHkn$, and we have the $L^2$-orthogonal splitting
{\rm
\begin{equation}\label{29XI23.1}
    \MMHkn =  \im_k\Lop \oplus \big(\ker \Lopdagger \cap \MMHkn
                \big)
\,.
\end{equation}
}
\end{Proposition}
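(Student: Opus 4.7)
The strategy is to apply Lemma~\ref{L10xi23.1} to $A = \zdivtwo$ (of order $1$) and $B = \Lop$ (of order $n$), with the lemma's dummy index shifted by $n-1$, so that its source space $\MMHkab$ and target space $\MMHka$ become $\MMHktwon$ and $\MMHkn$ respectively. Once the two hypotheses of the lemma are verified, its conclusion delivers the closedness of $\im_k\Lop$ in $\MMHkn$, after which the splitting \eqref{29XI23.1} is obtained by a Hilbert-space argument parallel to that of Proposition~\ref{P26X23.1}.

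For the first hypothesis, ellipticity of $\zdivtwo\circ\Lop$ on vector fields follows by combining the factorisation $\Lop = \overadd{\frac{n-5}{2}}{\psi}\ofP \circ \underline{\Lop}$ from \eqref{5XI23.41}, the intertwining relation \eqref{11X23.11}, and the explicit form of $\underline{\Lop}$ in \eqref{30VI.1Ln}: together these rewrite $\zdivtwo\circ\Lop$ as a composition of second-order operators on vectors of the form $a(\zdivtwo\circ C) + \TSzlap + c$, each elliptic by \eqref{26IX23.1}, giving an elliptic composition of order $n+1$ on vectors (consistently with the use of this ellipticity around \eqref{18X23.7}). For the second hypothesis, namely $\ker\Lop = \ker(\zdivtwo\circ\Lop)$, the inclusion $\ker\Lop\subset\ker(\zdivtwo\circ\Lop)$ is immediate, while if $\zdivtwo(\Lop(\xi))=0$ then $\Lop(\xi)\in\ker\zdivtwo = \TTt$, and simultaneously integration by parts gives $\Lop(\xi)\in(\ker\Lopdagger)^\perp \subset \TTt^\perp$ by \eqref{13XI23.5}, forcing $\Lop(\xi) = 0$.

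The $L^2$-orthogonality of the two summands in \eqref{29XI23.1}, together with the triviality of their intersection, both follow from the identity $\langle \Lop(\xi),\eta\rangle_{L^2} = \langle \xi, \Lopdagger \eta\rangle_{L^2}$, which forces $\|\eta\|_{L^2}^2 = 0$ for any $\eta$ lying in both spaces. Completeness of the sum will be obtained as in Proposition~\ref{P26X23.1}: given $\psi\in\MMHkn$, I would first subtract its $L^2$-orthogonal projection $\Pi_{\ker\Lopdagger}\psi$ onto the finite-dimensional space $\ker\Lopdagger$ and then solve, in $\MMHktwon$ modulo $\ker\Lop$, the elliptic equation $\zdivtwo\circ\Lop(\xi) = \zdivtwo(\psi - \Pi_{\ker\Lopdagger}\psi)\in H^{k+n-1}$, using the elliptic estimate \eqref{25X23.2} for the operator produced in step one. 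The main technical obstacle is the verification of the solvability condition for this last equation: one needs to identify $\bigl(\ker(\zdivtwo\circ\Lop)^\dagger\bigr)^\perp$ and check, by another integration by parts, that the right-hand side lies inside it. This final step will proceed cleanly once the kernel identifications developed in Section~\ref{sec6III23.1} and the appendices are invoked.
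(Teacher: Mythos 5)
Your closedness argument is correct and runs along essentially the paper's lines: the paper also reduces to Lemma~\ref{L10xi23.1}, but with $A=\Lopdagger$, $B=\Lop$, so that it must prove ellipticity of $\Lopdagger\circ\Lop$ (which it does by factoring through $\zdivtwo\circ\Lop$ together with an auxiliary operator whose symbol is checked by hand) and gets the kernel equality from $\int|\Lop\psi|^2=\int\psi\,\Lopdagger(\Lop\psi)$. Your choice $A=\zdivtwo$, $B=\Lop$ needs only the ellipticity of $\zdivtwo\circ\Lop$, which the paper has anyway (cf.\ \eqref{10V24.41} and the proof of Proposition~\ref{P16X23.2}; strictly speaking $\zdivtwo\circ\Lop$ is a \emph{sum} of compositions of operators of the form \eqref{26IX23.1}, and the absence of symbol cancellation is what the restrictions to $S$ and $V$ in \eqref{8VI24.1}, \eqref{8VI24.5a} verify), and your kernel identity $\ker\Lop=\ker(\zdivtwo\circ\Lop)$ via $\Lop\xi\in\TTt\cap\TTt^{\perp}$ is a valid replacement for the paper's integral identity; the index shift in the lemma is also handled correctly. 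So this half is a mild, slightly leaner variant of the paper's proof.

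The splitting half, however, contains a genuine error: $\ker\Lopdagger$ is \emph{not} finite dimensional. It contains all of $\TTt$ — this is just the adjoint form of the inclusion $\im\Lop\subset\TTt^{\perp}$ that you yourself invoke via \eqref{13XI23.5} — and only $\ker\Lopdagger\cap\TTtp$ is finite dimensional (Proposition~\ref{P16X23.2}). Hence ``subtract the $L^2$-projection onto the finite-dimensional space $\ker\Lopdagger$'' is not available as stated; in particular it is no longer automatic that this projection maps $H^{k+n}$ to $H^{k+n}$, and the solvability condition you defer at the end is exactly the place where the infinite-dimensional kernel has to be confronted, so the completeness of the splitting is not actually established in your write-up. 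The repair stays within the paper's toolbox: split $\ker\Lopdagger=\TTt\oplus(\ker\Lopdagger\cap\TTtp)$, project onto $\TTt$ using the York decomposition \eqref{19XI23.12} (orthogonal and regularity-preserving, since $\zdivtwo\circ C$ is elliptic) and onto the finite-dimensional smooth remainder in the usual way; then your equation $\zdivtwo\circ\Lop(\xi)=\zdivtwo\psi'$ is solvable by Proposition~\ref{P26X23.1}, because $(\zdivtwo\circ\Lop)^{\dagger}$ is a nonzero multiple of $\Lopdagger\circ C$, so the required orthogonality of $\zdivtwo\psi'$ to $\ker(\Lopdagger\circ C)$ reduces, after one integration by parts, to $\psi'\perp C\big(\ker(\Lopdagger\circ C)\big)\subset\ker\Lopdagger$, which holds by construction; finally $\Lop\xi-\psi'\in\TTt\cap\TTt^{\perp}=\{0\}$, and elliptic regularity for $\zdivtwo\circ\Lop$ gives $\xi\in H^{k+2n}$.
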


\proof
For closedness it suffices to check that the operator $\Lopdagger\circ \Lop$ is elliptic, 
 with the same kernel as $\Lop$;
the result follows then from Lemma~\ref{L10xi23.1} with $A=\Lopdagger$ and $B=\Lop$.

The equality of kernels follows readily from the identity
$$
 \int |\Lop  \psi|^2
  = 
 \int \psi
  \, 
    \Lopdagger \big( \Lop  (\psi)\big)
  \,.
$$

For ellipticity, consider  those terms in  $\Lopdagger$ which are relevant for the determination
of  its principal part. After several commutations, having discarded lower order terms, from \eqref{30VI.1Ln} one obtains
\begin{eqnarray}
  \Lopdagger (h)_A 
   & \sim &
     c_n \,
    E_n \circ \TSzlap \circ \Big( \TSzlap\circ\, C^\dagger (h)_A
     -\frac{2 (n-3)}{n-2} \zspaceD _A D^B D^C h_{BC}
     \Big)
     \nonumber
\\     
    & = &c_n \,
    E_n \circ
     \TSzlap   \circ\Big( \TSzlap \circ\zdivtwo
     -\frac{2 (n-3)}{n-2} \zspaceD \,\circ \zdivone\circ\zdivtwo
     \Big) (h) _A
     \,,\label{14XI23.1}
\end{eqnarray} 
with a dimension-dependent constant $c_n\ne 0$, and
with $E_n$,  which is an elliptic operator of order $n-5$, arising from the principal part of $\overadd{\frac{n-5}{2}}{\psi}\ofP$. Hence
\begin{eqnarray}
  \Lopdagger \circ \Lop 
    & \sim &c_n \,
    E_n \circ
     \TSzlap   \circ\Big(
     \underbrace{
       \TSzlap  
     -\frac{2 (n-3)}{n-2} \zspaceD \,\circ \zdivone
     }_{=: (\diamond)}
     \Big) \circ\zdivtwo \circ \Lop
     \,.\label{14XI23.2}
\end{eqnarray} 
In the proof of Proposition~\pref{P16X23.2} below
we show 
that $\zdivtwo\circ \Lop$ is elliptic. It thus remains to show that $(\diamond)$ is elliptic. Now, the symbol of $(\diamond)$ is
$$
 (\sigma_{(\diamond)}(k)\xi )^A = |k|^2 \xi^A  -\frac{2 (n-3)}{n-2} k_B\xi^B k^A
 \,,
$$
and ellipticity readily follows for $n\ne 4$. 
\qedskip

Similarly, for the operator $\hLop_{n}$, we have:
\index{Ln@$\hLop_{n}$!image}%
\index{image!Ln@$\hLop_{n}$}%
\begin{Proposition}
\label{P23XI23.1} For $k\ge 0$ and $n\ge 5$ the image 
{\rm
$$
 \im_k\hLop_{n}:=\im\, \hLop_{n}|_{\MMHktwonm}
$$
}is closed in $\MMHknm$, and we have the $L^2$-orthogonal splitting
{\rm
\begin{equation}\label{13XI23.1}
    \MMHknm =  \im_k\hLop \oplus \big(\ker \hLopdagger \cap \MMHknm
                \big)
\,.
\end{equation}
}
\end{Proposition}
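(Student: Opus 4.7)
My plan is to follow the template of Proposition~\ref{P13XI23.1} line-for-line: invoke Lemma~\ref{L10xi23.1} with $A = \hLopdagger$ and $B = \hLop_n$ to obtain closedness of $\im_k\hLop_n$ in $\MMHknm$, and then deduce the $L^2$-orthogonal splitting \eqref{13XI23.1} by the same closed-range argument used at the end of Proposition~\ref{P26X23.1}. Two ingredients are needed for the lemma: (i) the kernel identity $\ker\hLop_n = \ker(\hLopdagger\circ\hLop_n)$, and (ii) ellipticity of $\hLopdagger\circ\hLop_n$.

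I would dispatch condition (i) at once from the identity
\[
\int |\hLop_n\psi|^2 = \int \psi\,\hLopdagger(\hLop_n\psi),
\]
whose right-hand side vanishes on $\ker(\hLopdagger\circ\hLop_n)$ and hence forces $\hLop_n\psi=0$.

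The substantive step is the symbol computation for (ii), which I would carry out in parallel with the one used for $\Lopdagger\circ\Lop$ in the proof of Proposition~\ref{P13XI23.1}. From \eqref{5XI23.31}--\eqref{23XI23.1}, the principal part of $\hLop_n$ is, up to a nonzero constant, $\overadd{\frac{n-5}{2}}{\psi}\ofP\cdot P\circ C\circ\zspaceD$ for $n>5$, and $P\circ C\circ\zspaceD$ for $n=5$, giving an operator of order $n-1$. At a cotangent vector $k\neq 0$, $\sigma_\zspaceD(k)$ sends a scalar $s$ to $k_A s$ and $\sigma_C(k)$ sends $\xi_A$ to $\TS[k_A\xi_B]$, so their composition produces the nonzero symmetric traceless tensor $h_{AB}(s):=\bigl(k_Ak_B-\tfrac{|k|^2}{n-1}\ringh_{AB}\bigr)s$. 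A short direct calculation, identical in spirit to \eqref{14XI23.1}, then shows that on such tensors $\sigma_P(k)$ acts as scalar multiplication by $\tfrac{(n-2)|k|^2}{n-1}$ and $\sigma_{\TSzlap}(k)$ by $|k|^2$, so the principal symbol of the remaining polynomial factor $\overadd{\frac{n-5}{2}}{\psi}\ofP$ reduces, on the range of $\sigma_C\circ\sigma_\zspaceD$, to scalar multiplication by a homogeneous polynomial in $|k|^2$ of degree $(n-5)/2$. Hence $\sigma_{\hLop_n}(k)$ is injective from the scalar bundle into the symmetric traceless $2$-tensor bundle for every $k\neq 0$, and $\sigma_{\hLopdagger\circ\hLop_n}(k)=\sigma_{\hLop_n}(k)^{\ast}\sigma_{\hLop_n}(k)$ is strictly positive, establishing ellipticity of the scalar operator $\hLopdagger\circ\hLop_n$ of order $2(n-1)$.

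The only real obstacle in the plan is checking that the leading coefficient of $\overadd{\frac{n-5}{2}}{\psi}\ofP$, evaluated at the symbol level on the range of $\sigma_C\circ\sigma_\zspaceD$, does not vanish; since the principal symbol is homogeneous in $k$ this reduces to the nonvanishing of a single explicit numerical coefficient, which follows directly from the recursion \eqref{6III23.w9} for $\overadd{i}{\psi}\ofP$ and the explicit form \eqref{6III23.w3} of $\ck{k}{\ofPnoP}$. Beyond this numerical verification, the argument is a literal transcription of the proof of Proposition~\ref{P13XI23.1}.
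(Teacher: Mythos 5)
Your proposal is correct and follows essentially the paper's own route: closedness via Lemma~\ref{L10xi23.1} with $A=\hLopdagger$, $B=\hLop_{n}$, the kernel identity from $\int|\hLop_n\psi|^2=\int\psi\,\hLopdagger(\hLop_n\psi)$, and the splitting handled at the same level of detail as in the preceding propositions. The only (inessential) variation is in how ellipticity of $\hLopdagger\circ\hLop_n$ is checked: you verify injectivity of the principal symbol of $\hLop_n$ directly, with the required nonvanishing of the symbol of $\overadd{\frac{n-5}{2}}{\psi}\ofP$ on scalar-type tensors $\TS[k_Ak_B]$ amounting to its ellipticity on $h^{[S]}$ (cf.\ Proposition~\ref{P17X23.1}), whereas the paper commutes $\hLopdagger\sim c_n\,\TSzlap\circ E_n\circ\zdivone\circ\zdivtwo$ and invokes the ellipticity of $\zdivone\circ\zdivtwo\circ\hLop_{n}$ established in the proof of Proposition~\ref{P16X23.1} — two packagings of the same symbol computation.
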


\proof
The proof is rather  similar to that of Proposition~\ref{P13XI23.1}, we provide the details for completeness.

For closedness it suffices to check that the operator $\hLopdagger\circ \hLop_{n}$ is elliptic;
the result follows then from Lemma~\ref{L10xi23.1} with $A=\hLopdagger$ and $B=\hLop_{n}$.

Thus, consider  those terms in  $\hLopdagger$ which are relevant for the determination
of  its principal part. After several commutations (making use of \eqref{10XI23.1}), having discarded lower order terms, from \eqref{23XI23.1} one obtains
 \ptcheck{25XI23}
\begin{eqnarray}
  \hLopdagger
   & \sim &
     c_n \,
     \TSzlap \circ E_n \circ \zdivone\circ \zdivtwo
\end{eqnarray} 
with a dimension-dependent constant $c_n\ne 0$, 
and with $E_n$,  which is an elliptic operator of order $n-5$, arising from the principal part of $\overadd{\frac{n-5}{2}}{\psi}\ofP$. Hence
\begin{eqnarray}
  \hLopdagger \circ \hLop_{n} 
    & \sim &
    c_n \,
     \TSzlap   \circ  E_n 
     \circ \zdivone\circ \zdivtwo \circ \hLop_{n}
     \,,
     \label{23XI23.2}
\end{eqnarray} 
is elliptic since  $\zdivone\circ\zdivtwo\circ \hLop_{n}$ is (cf.\ the proof of Proposition~\pref{P16X23.1} below).
\qed

\subsection{\protect\rm $\psi$} 
 \label{ss12XI22.2n}

\ptclater{  LemmaClosed.tex can streamline several arguments above ...}
The following result will be useful:

\begin{Lemma}
 \label{L18XI23.1bx}
 Let $\nohatL$ be a linear partial differential operator of order $\ell$ and let $k\ge 0$. Then
 \begin{enumerate}
   \item 
\begin{equation}\label{18XI23.2}
  (\im \, \nohatL|_{\MHlk})^\perp \cap \MHlk = \ker 
   \big(\nohatL^\dagger|_{\MHlk}
   \big)
  \,.
\end{equation}
   \item If 
   
   \begin{enumerate}
     \item  $ \im \,( \nohatL|_{\MHlk})^\perp \cap \MHk$ is $L^2$-dense in $ \im \, (\nohatL|_{\MHlk})^\perp $,  
     \item
     and if 
   the $L^2$-closure of $   \im \,
  \nohatL|_{\MHlk} $ satisfies
\begin{eqnarray}\label{18XI23.29} 
   \overline{
   \im \,
  \nohatL|_{\MHlk} 
  }\cap \MHk
  =
   \im \,
  \nohatL|_{\MHlk} 
 \,,
\end{eqnarray}
   \end{enumerate} 
then \eqref{6V23.1ed} holds.
\end{enumerate}
\end{Lemma}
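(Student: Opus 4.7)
The plan is to establish part 1 as essentially a tautology of the formal adjoint, and then to leverage it in tandem with the two regularity/density hypotheses of part 2 to run a soft, Hilbert-space decomposition argument in place of the elliptic bootstrap used in Proposition~\ref{P26X23.1}.

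For part 1, I would observe that by the very definition of $\nohatL^\dagger$ as the formal $L^2$-adjoint of $\nohatL$, for $\phi,\psi\in\MMHlk$ one has $\int \phi\,\nohatL\psi = \int \nohatL^\dagger\phi\cdot\psi$ (integration by parts on the compact boundaryless manifold $\dmanif$ produces no boundary terms and is justified at this regularity). Hence $\phi\in(\im\,\nohatL|_{\MMHlk})^\perp$ iff $\int \nohatL^\dagger\phi\cdot\psi=0$ for all $\psi\in\MMHlk$, and the $L^2$-density of $\MMHlk$ upgrades this to $\nohatL^\dagger\phi=0$. The reverse inclusion is immediate from the same integration by parts.

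For part 2, the $\supset$ inclusion in \eqref{6V23.1ed} is straightforward: if $\xi=\nohatL\phi$ with $\phi\in\MMHlk$, continuity gives $\xi\in\MMHk$, and integration by parts against any $\eta\in\ker\nohatL^\dagger$ yields $\int\xi\eta=0$. For the nontrivial $\subset$ inclusion, I would employ the standard $L^2$-orthogonal decomposition $L^2=\overline{\im\,\nohatL|_{\MMHlk}}\oplus (\im\,\nohatL|_{\MMHlk})^\perp$, valid for any linear subspace of a Hilbert space. Given $\xi\in(\ker\nohatL^\dagger)^\perp\cap\MMHk$, decompose $\xi=\xi_1+\xi_2$ accordingly, and invoke hypothesis (a) to approximate $\xi_2$ in $L^2$ by a sequence $\eta_j\in(\im\,\nohatL|_{\MMHlk})^\perp\cap\MMHk$. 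Each $\eta_j$ then satisfies $\int\eta_j\,\nohatL\phi=0$ for all $\phi\in\MMHlk$, so $\nohatL^\dagger\eta_j=0$ distributionally, placing $\eta_j\in\ker\nohatL^\dagger$. By assumption $\xi\perp\ker\nohatL^\dagger$, so $\int\xi\eta_j=0$ for every $j$, and passing to the $L^2$-limit gives $\int\xi\,\xi_2=0$. Together with $\xi_1\perp\xi_2$ (from the decomposition), this yields
\begin{equation*}
  \|\xi_2\|_{L^2}^2=\int(\xi_1+\xi_2)\xi_2=\int\xi\,\xi_2=0,
\end{equation*}
so $\xi=\xi_1\in\overline{\im\,\nohatL|_{\MMHlk}}$. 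Finally, hypothesis (b) is exactly the assertion that $\overline{\im\,\nohatL|_{\MMHlk}}\cap\MMHk=\im\,\nohatL|_{\MMHlk}$, and combining this with $\xi\in\MMHk$ yields $\xi\in\im\,\nohatL|_{\MMHlk}$.

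The main obstacle—and the reason that hypotheses (a) and (b) are both needed—is the absence of ellipticity that carried the analogous argument in Proposition~\ref{P26X23.1}. Without ellipticity one cannot bootstrap distributional solutions of $\nohatL^\dagger\eta=0$ to a smoother class, nor can one directly establish closedness of $\im\,\nohatL|_{\MMHlk}$ in $\MMHk$; hypothesis (b) supplies the closedness axiomatically, while hypothesis (a) is precisely the density needed to realise arbitrary $L^2$-elements of $(\im\,\nohatL|_{\MMHlk})^\perp$ as limits of $\MMHk$-regular annihilators of $\nohatL^\dagger$. Once these hypotheses are invoked at the two indicated places, I anticipate no further technical difficulty.
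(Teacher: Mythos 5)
Your argument is correct in substance and, despite its element-wise packaging, runs on the same two hinges as the paper's proof: hypothesis (a) is used to pass from the annihilators of $\im\,\nohatL|_{\MMHlk}$ that have extra regularity to all of $(\im\,\nohatL|_{\MMHlk})^\perp$, and hypothesis (b) enters only at the last step to replace the $L^2$-closure of the image by the image itself. The paper argues at the level of subspaces: it takes $L^2$-orthogonal complements in the identity \eqref{18XI23.2}, uses the density hypothesis to identify the relevant closures, concludes $\big(\ker(\nohatL^\dagger|_{\MMHlk})\big)^\perp=\overline{\im\,\nohatL|_{\MMHlk}}$, and then intersects with $\MMHk$ and invokes \eqref{18XI23.29}. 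Your decomposition $\xi=\xi_1+\xi_2$ together with an approximating sequence for $\xi_2$ is the element-wise shadow of that orthocomplement computation; it is slightly more hands-on but proves nothing more or less.

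The one step you should tighten is the claim that the approximants $\eta_j\in(\im\,\nohatL|_{\MMHlk})^\perp\cap\MMHk$ lie in $\ker\nohatL^\dagger$ because $\nohatL^\dagger\eta_j=0$ holds distributionally. In \eqref{6V23.1ed} the kernel is the one fixed in Proposition~\ref{P26X23.1}, namely $\ker\big(\nohatL^\dagger|_{\MMHlk}\big)$, and this is also the kernel appearing in your hypothesis on $\xi$. In the elliptic setting of that proposition a distributional solution is automatically smooth, so the two notions coincide; but the whole point of the present lemma is that $\nohatL$ need not be elliptic, and then an $\MMHk$-regular distributional solution of $\nohatL^\dagger\eta=0$ need not lie in $\MMHlk$, so the orthogonality $\int\xi\,\eta_j=0$ does not follow from your assumption on $\xi$. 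The clean repair is to avoid distribution theory altogether: part 1 identifies the $\MMHlk$-regular annihilators of the image with $\ker\big(\nohatL^\dagger|_{\MMHlk}\big)$, and the density hypothesis (read so that the approximating sequence carries the regularity to which part 1 applies) then yields $\int\xi\,\xi_2=0$ directly, which is exactly the paper's computation; alternatively, if one reads $\ker\nohatL^\dagger$ in \eqref{6V23.1ed} as the distributional kernel in $\MMHk$, your step becomes a tautology and the proof closes as written. Either way the remaining structure of your argument is sound.
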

 
\proof
1. For $\phi\,,\red{\lambda} \in \MHlk $ with $k\ge0$ it holds that
\begin{equation}\label{3X23.11xc}
  \int_{\dmanifold } \phi \, \nohatL \red{\lambda} =\int_{\dmanifold } \red{\lambda} \, \nohatL^\dagger \phi 
  \,.
\end{equation}
If $\nohatL^\dagger \phi =0$ the right-hand side is zero, which shows that
\begin{equation}\label{18XI23.2-}
    \ker 
   \big(\nohatL^\dagger|_{\MHlk}
   \big)
   \subset
   (\im \, \nohatL|_{\MHlk})^\perp 
  \,.
\end{equation}

To obtain the reverse inclusion, let  $\phi \in \MHlk $ be  orthogonal to the image of $\nohatL|_{\MHlk}$.
Then the left-hand side of \eqref{3X23.11xc} vanishes and we find
\begin{equation}\label{3X23.1rp}
  \forall \ \red{\lambda} \in \MHlk
  \qquad
  0 =  \int_{\dmanifold } \red{\lambda} \, \nohatL^\dagger \phi 
  \,.
\end{equation}
By density of $\MHlk$ in $L^2$ it also holds that
\begin{equation}\label{3X23.1-}
  \forall  \ \red{\lambda} \in L^2
  \qquad
    \int_{\dmanifold } \red{\lambda} \, \nohatL^\dagger \phi  = 0
  \,.
\end{equation}
Thus $\nohatL^\dagger \phi =0$, and together with \eqref{18XI23.2-} we obtain \eqref{18XI23.2}.

2. Condition (a) together with point 1.\ and standard properties of orthogonality in $L^2$ (cf., e.g., \cite[Theorem~3.4]{Weidmann}) yield
\begin{eqnarray}\label{18XI23.26}
  \Big( \ker 
   \big(\nohatL^\dagger|_{\MHlk}
   \big)\Big)^\perp
   & = & 
   \Big(
  (\im \, \nohatL|_{\MHlk})^\perp \cap \MHlk
  \Big)^\perp 
   =
   \Big(
   \overline{
  (\im \, \nohatL|_{\MHlk})^\perp \cap \MHlk
  }
  \Big)^\perp
  \nonumber
\\
   & = &
   \Big(
   \overline{
   (\im \,
  \nohatL|_{\MHlk})^\perp
  }
  \Big)^\perp =
   \Big( 
   (\im \,
  \nohatL|_{\MHlk})^\perp 
  \Big)^\perp
  \nonumber
  \\
  &= & 
   \overline{
   \im \,
  \nohatL|_{\MHlk} 
  }
  \,,
\end{eqnarray}
where `` $\overline{ \cdot}$ ''  denotes closure in the $L^2$- topology.  Hence, by hypothesis (b), 
\begin{equation}
  \Big( \ker 
   \big(\nohatL^\dagger|_{\MHlk}
   \big)\Big)^\perp\cap \MHk  = 
   \overline{
   \im \,
  \nohatL|_{\MHlk} 
  }\cap \MHk
  =
   \im \,
  \nohatL|_{\MHlk} 
  \,.
  \tag*{$\Box$}
\end{equation}
%


\index{psi@$\overset{(i)}{\psi}$}%
\index{image!psi@$\overset{(i)}{\psi}$}%
\begin{Proposition}
 \label{P14XI23.1psi}
For $k\ge 1$ the image
{\rm
$$
 \im_k 
 \overset{(i)}{\psi}  
 :=
 \im
 \big( \overset{(i)}{\psi}\ofP|_{\MMHktwoi} \big)
$$
}is closed in $\MMHk$, and we have
{\rm
\begin{eqnarray}
 \im_k
 \overset{(i)}{\psi}  
   &= &
 \big(
   \ker   
 \overset{(i)}{\psi} \ofP  
   \big)^\perp\cap \MMHk 
   \,.
    \label{9XII23.31}
\end{eqnarray}
}
\end{Proposition}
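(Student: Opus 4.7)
The plan rests on two structural observations. First, from the recursion \eqref{28VIII23f.6} together with \eqref{6III23.w7a}, the operator $\overset{(i)}{\psi}\ofP$ is a polynomial in the pairwise commuting (cf.\ the remarks surrounding \eqref{20VI23.1} and Appendix~\ref{App19V23.1}) and formally $L^2$-self-adjoint operators $P$, $\TSzlap$ and $\tric$ on symmetric trace-free two-tensors over the compact boundaryless Einstein manifold $\secN$; hence $\overset{(i)}{\psi}\ofP$ is itself formally self-adjoint, $\bigl(\overset{(i)}{\psi}\ofP\bigr)^\dagger = \overset{(i)}{\psi}\ofP$. Second, by Proposition~\ref{P11X23.1} it preserves the $L^2$-orthogonal splitting $H^k(\secN)=(\TTt\cap H^k)\oplus(\TTtp\cap H^k)$, and so it acts block-diagonally with respect to this decomposition.

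With these in hand, I would establish \eqref{9XII23.31} separately on each invariant subspace and assemble the two results. On $\TTt$, the commutation relation \eqref{11X23.12} and the identity $\zspaceD^C h_{BC}=0$ for $h\in\TTt$ allow one to rewrite $P(h)$ modulo lower-order curvature as a zeroth-order operator; thus $\overset{(i)}{\psi}\ofP|_{\TTt}$ reduces to a polynomial of degree $i$ in $\TSzlap$ plus lower-order curvature, which is elliptic on the $\TTt$-subbundle. Proposition~\ref{P26X23.1} (applied on this subbundle) then yields both closedness of the image in $H^k\cap \TTt$ and the $L^2$-splitting
\[
 H^k\cap\TTt \;=\; \bigl(\ker\overset{(i)}{\psi}\ofP\cap\TTt\bigr) \,\oplus\, \im\bigl(\overset{(i)}{\psi}\ofP|_{H^{k+2i}\cap\TTt}\bigr).
\]

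On $\TTtp$, I would parametrise $h=C(\xi)$ with $\xi$ $L^2$-orthogonal to $\CKV$, and transport $\overset{(i)}{\psi}\ofP$ to the vector side via the key commutation \eqref{11X23.11}: for each factor one has $\zdivtwo\circ(aP+\TSzlap+2\tric+c)C(\xi)=(a\,\zdivtwo\circ C+\TSzlap+(n-2)\myGauss+c)\zdivtwo\circ C(\xi)$, so taking $\zdivtwo$ of $\overset{(i)}{\psi}\ofP\circ C$ produces a polynomial in the elliptic self-adjoint operators $\tilde{\operatorname{L}}_{a,c,b}$ of \eqref{18XII23.11} acting on $\xi$. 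Combined with the ellipticity of $\zdivone\circ\zdivtwo\circ C$ on $\CKVp$ (which identifies $\TTtp\cap H^k$ with $\CKVp\cap H^{k+1}$ up to a finite-dimensional correction), Lemma~\ref{L10xi23.1} with $A=\overset{(i)}{\psi}\ofP$ and $B=\overset{(i)}{\psi}\ofP$—noting that self-adjointness guarantees $\ker B=\ker A\circ B$—yields closedness of $\im\,\overset{(i)}{\psi}\ofP|_{H^{k+2i}\cap\TTtp}$ in $H^k\cap\TTtp$, and Proposition~\ref{P26X23.1} (in the form adapted to bundles, as used in Section~\ref{ss10XI23.1}) gives the analogous splitting on $\TTtp$.

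Summing the $\TTt$- and $\TTtp$-splittings produces \eqref{9XII23.31}. The main obstacle I anticipate is the $\TTtp$-piece: although the identity $\overset{(i)}{\psi}\ofP\circ C = C\circ \widetilde{\overset{(i)}{\psi}}\ofDC$ (implicit in Proposition~\ref{P11X23.1} and \eqref{11X23.11}) transfers the analysis cleanly to vector fields, one still needs to track the finite-dimensional defect introduced by $\ker C=\CKV$, by the possibility that some of the $\tilde{\operatorname{L}}_{a,c,b}$-factors have large kernels on $\CKVp$, and by the loss of two derivatives per factor. These points are bookkeeping-heavy but entirely parallel to the analyses already carried out in Propositions~\ref{P13XI23.1} and~\ref{P23XI23.1} in Section~\ref{ss10XI23.1}.
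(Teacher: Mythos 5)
Your treatment of the elliptic situations is fine and matches the paper: when $(n,k)$ is convenient, or $i<\frac{n-3}{2}$, the operator $\overset{(i)}{\psi}\ofP$ is itself elliptic and formally self-adjoint, so Proposition~\ref{P26X23.1} gives everything directly (your extra $\TTt\oplus\TTtp$ splitting is unnecessary there), and on the $\TTt$-block the reduction to a polynomial in $\TSzlap$ plus curvature is exactly what the paper uses. The gap is in your $\TTtp$-block in the inconvenient case, which is the only genuinely delicate case. There, after transporting with \eqref{11X23.11}, the resulting vector operator is \emph{not} a polynomial in elliptic factors: the offending factors are precisely $\ck{-1}{\ofPnoP}$ and $\ck{-2}{\ofPnoP}$, whose transported versions are of the form $\tilde{\operatorname{L}}_{a,c}$ with the excluded values $a=-2$ or $a=\frac{1-n}{n-2}$ of Lemma~\ref{L18VIII23.2}, so ellipticity fails exactly when you need it. Consequently your appeal to Lemma~\ref{L10xi23.1} with $A=B=\overset{(i)}{\psi}\ofP$ collapses: that lemma needs $A\circ B=\big(\overset{(i)}{\psi}\big)^2$ elliptic, which it is not for $i\ge\frac{n-3}{2}$ and $n$ odd. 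Moreover the defect is not finite-dimensional, as your closing paragraph suggests: by Proposition~\ref{P9X23.1} (equivalently Proposition~\ref{P17X23.1}), $\overset{(i)}{\psi}\ofP$ annihilates the entire scalar part $h^{[S]}$ of $\TTtp$ for $i\ge\frac{n-3}{2}$, and all of $\TTtp$ for $i\ge\frac{n-1}{2}$, so the kernel inside $\TTtp$ is infinite-dimensional and cannot be absorbed as ``bookkeeping'' parallel to Propositions~\ref{P13XI23.1} and \ref{P23XI23.1}, where the composed operator $\operatorname{L}^\dagger\circ\operatorname{L}$ \emph{is} elliptic.

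The missing idea is the finer scalar--vector--tensor decomposition \eqref{14XII23.1} of $\TTtp$ rather than treating it as a single block. One then uses: (i) the identities $\overset{(i)}{\psi}\,h^{[S]}=0$ for $i\ge\frac{n-3}{2}$ and $\overset{(i)}{\psi}\,h^{[V]}=0$ for $i\ge\frac{n-1}{2}$, which place $S$ (resp.\ $S\oplus V$) inside the kernel; and (ii) the fact that the restrictions $\overset{(i)}{\psi}|_{V}$ and $\overset{(i)}{\psi}|_{\TTt}$ coincide with restrictions of honest elliptic operators built from the Lichnerowicz-type Laplacian $\zTSlap_T$ (the operators ${\cal K}^{[V]}_T$ and ${\cal K}^{[\TTt]}_T$ of \eqref{8XII23.g3} and \eqref{8XII23.g1}), which is what gives $L^2$-closedness of the image of the orthogonal complement of the kernel. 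Closedness of the (closed) kernel summands plus Lemma~\ref{L18XI23.1bx} then yields \eqref{9XII23.31}. Without step (ii) in this invariant-subspace form, your argument does not establish closedness of $\im\big(\overset{(i)}{\psi}\ofP|_{\TTtp}\big)$ in the inconvenient case, and the claimed splitting is not proved.
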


\proof
We start by noting that the operators $\overset{(i)}{\psi}\ofP$, which we will denote by $\overset{(i)}{\psi} $ for brevity, are formally self-adjoint, of order $2i$.
When $i < \frac{n-3}{2}$, or when the pair $(n,\ell)$ is convenient and $i\in \N$,  the  $\overset{(i)}{\psi}$'s are elliptic by Propositions~\ref{P22VIII23.1} and \ref{P17X23.1} below.
Our claim follows then from  Proposition~\ref{P26X23.1}.

To continue, let the pair $(n,\ell)$  be inconvenient and let $i= \frac{n-3}{2}$.  
 We have the $L^2$-orthogonal splittings 
 \begin{eqnarray}\label{23XII23.91} 
   (V\oplus \TTt)\cap \MHktwoin 
     &= &   \ker 
      \big(
    \overset{(\frac{n-3}{2})}{\psi}  |_{V\oplus\TTt}
     \big)
     \oplus 
            \Big(
            \underbrace{
    \Big( \ker 
      \big(
    \overset{(\frac{n-3}{2})}{\psi} |_{V\oplus\TTt}
     \big)
    \Big)^\perp 
    \cap \MHktwoin  
    }_{=: X^{k+n-3}}
        \Big)
         \,,         
\\
 \MHktwoin
  & = &  
   \underbrace{
  \big(
  S\cap \MHktwoin 
  \big) \oplus \ker 
      \big(
    \overset{(\frac{n-3}{2})}{\psi}|_{V\oplus\TTt}
     \big)
     }_{= \ker  \overset{(\frac{n-3}{2})}{\psi}|_{\MHktwoin}}
      \oplus       
         X^{k+n-3}
  \,.
 \end{eqnarray}
 We claim that the operators 
$ \overset{(\frac{n-3}{2})}{\psi}|_{V}$ and of
$ \overset{(\frac{n-3}{2})}{\psi}|_{\TTt}$ are elliptic, in the sense that they are restrictions to $V$, respectively to $\TTt$ of elliptic operators. Indeed, consider
\begin{align}
	{\cal L}_{\TTt} (j) 
  &:=  -\frac{1}{(7 - n + 2 j)}\left( \zTSlap_T+ [4+j(6-n+j)]\myGauss\right) 
\,,
\\
 \overset{(i)}{\psi}_{\TTt}
 & :=  \prod_{j=2}^{i} {\cal L}_{\TTt}(\tfrac{n-5-2j}{2})\overset{(1)}{\psi}\ofP 
 \,,
\end{align}
where%
\index{Delta@$\zTSlap_T$}%
\begin{equation}
	\zTSlap_T h\equiv\left[\TSzlap+2(\tric-(n-2)\myGauss)\right]h
  \,.
\label{8XII23.g6}
\end{equation}
The operators $\overset{(i)}{\psi}_{\TTt}$ are  elliptic, 
and the restriction $ \overset{(\frac{n-3}{2})}{\psi}|_{\TTt}$  coincides with the restriction of $ \overset{(i)}{\psi}_{\TTt}$ to $\TTt$  (compare \eqref{6III23.w9}, together with
\eqref{8XII2.g2} and \eqref{8XII23.g1}, Appendix~\ref{App8XII23.1} below). 

Similarly let
\begin{align}
	{\cal L}_V (j) 
  &\equiv -\frac{(2+j)(4-n + j)}{(7 - n + 2 j)(3-n+j)(3+j)}\left( \zTSlap_T +(1+j)(5-n+j)\myGauss\right) 
\,,
\\
 \overset{(\frac{n-3}{2})}{\psi}_V
 & :=  \prod_{j=2}^{\frac{n-3}{2}}
   {\cal L}_V(\tfrac{n-5-2j}{2})\overset{(1)}{\psi}\ofP 
 \,;
\end{align}
cf.~\eqref{8XII2.g2} and \eqref{8XII23.g3}. The restriction of the elliptic operator $\overset{(\frac{n-3}{2})}{\psi}_V$ to $V$ 
coincides with $\overset{(\frac{n-3}{2})}{\psi} |_V$.

Ellipticity of $ \overset{(\frac{n-3}{2})}{\psi}|_{V}$ and of
$ \overset{(\frac{n-3}{2})}{\psi}|_{\TTt}$, in the sense just explained, 
 shows that the image of $X^{k+n-3}$ by $ \overset{(\frac{n-3}{2})}{\psi}$ is $L^2$-closed and equals $X^{k}$.
 Lemma~\ref{L18XI23.1bx} gives
 \begin{eqnarray}\label{23XII23.92}
 \MHk
  & = &   
  \big(
  S\cap \MHk
  \big) \oplus \ker 
      \big(
    \overset{(\frac{n-3}{2})}{\psi}|_{V\oplus\TTt}
     \big) 
      \oplus       
       X^{k} 
          \nonumber
\\          
  & = &   
  \big(
  S\cap \MHk
  \big) \oplus \ker 
      \big(
    \overset{(\frac{n-3}{2})}{\psi}|_{V\oplus\TTt}
     \big) 
      \oplus        \im \big(
          \overset{(\frac{n-3}{2})}{\psi}|_{X^{k+n-3}}
          \big)
  \,,
 \end{eqnarray}
as the first two summands are $L^2$-closed.

The proof for inconvenient pairs  $(n,\ell)$  with  $i> \frac{n-3}{2}$ is similar, based on 
the $L^2$-orthogonal splittings 
 \begin{eqnarray}\label{23XII23.93}
 \TTt \cap \MHktwoi 
     &= &   \ker 
      \big(
    \overset{(i)}{\psi}  |_{\TTt}
     \big)
     \oplus 
            \Big(
            \underbrace{
    \Big( \ker 
      \big(
    \overset{(i)}{\psi} |_{\TTt}
     \big)
    \Big)^\perp 
    \cap \MHktwoi  
    }_{=: X^{k+2i}}
        \Big)
         \,,         
\\
 \MHktwoi
  & = &  
   \underbrace{
  \big(
  (S\oplus V)\cap \MHktwoi 
  \big) \oplus \ker 
      \big(
    \overset{(i)}{\psi}|_{\TTt}
     \big)
     }_{= \ker  \overset{(i)}{\psi}|_{\MHktwoi}}
      \oplus       
         X^{k+2i} 
  \,.
 \end{eqnarray}
Indeed, ellipticity of $ \overset{(i)}{\psi}|_{\TTt}$ shows that the image of $X^{k+2i}$ by $ \overset{(i)}{\psi}$ is $L^2$-closed and equals $X^{k}$.
Lemma~\ref{L18XI23.1bx} applies as before.
 %
\qed

\seccheck{23XII}

%
%
\subsection{\protect\rm $\zdivtwo\circ\chi$}
 \label{ss12XI22.2}
\index{div@$\zdivtwo$!div@$\zdivtwo \circ \overset{(i)}{\chi}$}%
\index{image!div@$\zdivtwo$!$\zdivtwo \circ \overset{(i)}{\chi}$}%
\begin{Proposition}
 \label{P14XI23.1}
For $k\ge 0 $ the image
{\rm 
$$
 \im_k  
 \big(\zdivtwo \circ \overset{(i)}{\chi}  \big)
 :=
 \im 
 \big(\zdivtwo \circ \overset{(i)}{\chi}\ofP|_{\MMHktwoip} \big)
$$
}is closed in $\MMHk$, and we have 
{\rm
\begin{eqnarray}
\im_k  
 \big(\zdivtwo \circ \overset{(i)}{\chi}  \big)
   &= & 
 \Big(
   \ker \Big(\big(\zdivtwo \circ \overset{(i)}{\chi}  \big)^\dagger 
   \Big)
   \Big)^\perp\cap \MMHk
   \nonumber
\\    
   &\equiv  & 
 \Big(
   \ker \big(\overset{(i)}{\chi}\circ\, C \big)  
   \Big)^\perp\cap \MMHk
   \,.
\end{eqnarray}
}  
\end{Proposition}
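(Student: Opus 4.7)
The strategy mirrors that of Propositions~\ref{P13XI23.1} and~\ref{P23XI23.1}, and the analogous step for $\overset{(i)}{\psi}$ carried out in Proposition~\ref{P14XI23.1psi}. Setting $B := \zdivtwo \circ \overset{(i)}{\chi}\ofP$, I would apply Lemma~\ref{L10xi23.1} with $A := B^\dagger$ to secure closedness of $\im B$ in $\MMHk$, and then invoke Lemma~\ref{L18XI23.1bx} to upgrade this to the $L^2$-orthogonal identification with $(\ker(\overset{(i)}{\chi}\ofP \circ C))^\perp \cap \MMHk$. The identification of $B^\dagger$ with $\overset{(i)}{\chi}\ofP \circ C$ (up to an overall sign) follows from $\zdivtwo^\dagger = -C$ on trace-free symmetric two-tensors, combined with the formal self-adjointness of $\overset{(i)}{\chi}\ofP$ established in Appendix~\ref{App19V23.1}; signs are irrelevant for kernels and images.

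The verification of the kernel condition~\eqref{14XI23.1a}, namely $\ker B = \ker(A\circ B)$, is purely algebraic. The inclusion $\ker B \subset \ker(A\circ B)$ is trivial. For the reverse inclusion, if $(A\circ B)\phi = 0$, then since $\secN$ is compact and boundaryless, integration by parts yields
\begin{equation}
0 \;=\; \int_{\secN} \phi \cdot B^\dagger B \phi \,\sm \;=\; \pm\int_{\secN} |B\phi|^2 \,\sm,
\end{equation}
so $B\phi = 0$.

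The substantive step is to verify that $A \circ B = \pm\,\overset{(i)}{\chi}\ofP \circ P \circ \overset{(i)}{\chi}\ofP$ (with $P = C\circ\zdivtwo$) is elliptic of order $4i+2$ when acting on trace-free symmetric two-covariant tensors. Each factor $\ck{k}{\ofPnoP}$ appearing in the product~\eqref{6III23.w2} defining $\overset{(i)}{\chi}\ofP$ has principal part $-\frac{1}{7-n+2k}\bigl[\frac{2(n-1)P}{(3+k)(3-n+k)} + \TSzlap\bigr]$; by the commutation properties exploited in Appendix~\ref{App19V23.1} together with the ellipticity results of Lemma~\ref{L18VIII23.1} and Proposition~\ref{P18VIII23.1} (both in convenient and inconvenient cases), the composition $\overset{(i)}{\chi}\ofP$ is elliptic on trace-free symmetric two-tensors. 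Composing with $P$ raises the order by two while preserving the ellipticity on that space, and composing again with $\overset{(i)}{\chi}\ofP$ gives the desired elliptic operator. Lemma~\ref{L10xi23.1}(2) then yields that $\im(B|_{\MMHktwoip})$ is closed in $\MMHk$.

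The orthogonal identification then follows from Lemma~\ref{L18XI23.1bx}: its density hypothesis is immediate, while condition~\eqref{18XI23.29} is exactly the $L^2$-closedness just established. The main obstacle is the ellipticity step, and in particular its treatment in the inconvenient case, where $P$ has nontrivial kernel and ellipticity of $P \circ \overset{(i)}{\chi}\ofP$ on the full space $\TTt\oplus V\oplus S$ may fail; as in the proof of Proposition~\ref{P14XI23.1psi}, one then works block-by-block on the invariant subspaces $\TTt$, $V$, $S$ using the explicit decompositions from Appendix~\ref{App8XII23.1}, obtaining closedness of the image on each block separately before reassembling.
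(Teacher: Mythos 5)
There is a genuine gap, and it sits at the heart of your argument rather than in a corner case. Your plan hinges on applying Lemma~\ref{L10xi23.1} with $A=B^\dagger$, $B=\zdivtwo\circ\overset{(i)}{\chi}\ofP$, which requires $A\circ B=\pm\,\overset{(i)}{\chi}\ofP\circ P\circ\overset{(i)}{\chi}\ofP$ to be elliptic on trace-free symmetric two-tensors. This is false for every $n>3$ and every $i$, not only in the inconvenient case: the principal symbol of $P=C\circ\zdivtwo$ at a covector $k\neq 0$ is $h_{AB}\mapsto \TS[k_Ak^Ch_{CB}]$, which annihilates the whole subspace $\{h:\,k^Ch_{CB}=0\}$, so composing with the (invertible-symbol) factors $\overset{(i)}{\chi}\ofP$ cannot restore ellipticity. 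Equivalently, the commutation relation \eqref{11X23.11} gives $\zdivtwo\circ\overset{(i)}{\chi}\ofP=\overadd{i}{\tilde\chi}\circ\zdivtwo$, so $\TTt\subset\ker B$; since $\TTt$ is infinite-dimensional on an $(n-1)$-manifold with $n-1\ge 3$, $\ker B$ cannot be finite-dimensional, which already contradicts point 1 of Lemma~\ref{L10xi23.1} and rules out ellipticity of $A\circ B$. So the sentence ``composing with $P$ raises the order by two while preserving the ellipticity'' is where the argument breaks, and your framing of the block-by-block treatment as a fallback for inconvenient $(n,k)$ misdiagnoses the obstruction: it is the $\TTt$ block, present in all dimensions, that kills ellipticity on the full space.

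The paper avoids this entirely by a shorter route: using \eqref{11X23.11} to write $\zdivtwo\circ\overset{(i)}{\chi}\ofP=\overadd{i}{\tilde\chi}\circ\zdivtwo$ with $\overadd{i}{\tilde\chi}$ elliptic (Proposition~\ref{P18VIII23.2}), observing that $\TTt$ lies in the kernel, and then invoking the York splitting $\MMHktwoip=\TTt\oplus\im C$ to conclude
$\im_k(\zdivtwo\circ\overset{(i)}{\chi})=\im(\overadd{i}{\tilde\chi}\circ\zdivtwo\circ C)$,
where $\overadd{i}{\tilde\chi}\circ\zdivtwo\circ C$ is a genuinely elliptic operator on vector fields; closedness and the orthogonal identification then follow directly from Proposition~\ref{P26X23.1}. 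Your proposal could be repaired along similar lines — restrict from the outset to $\im C$ (on $\TTt$ the operator vanishes identically, so nothing is to be proved there), or carry out the $S$/$V$ block analysis of Appendix~\ref{App8XII23.1} in detail as in Proposition~\ref{P14XI23.1psi} — but as written the ellipticity step, and hence the application of Lemma~\ref{L10xi23.1}, does not go through.
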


\proof
 Let us for conciseness write $ \overset{(i)}{\chi} $ for $ \overset{(i)}{\chi} \ofP$.
The commutation relation \eqref{11X23.11} shows that
\begin{equation}\label{19XI23.11}
  \zdivtwo \circ \overset{(i)}{\chi} = \overadd{i}{\tilde\chi}
  \circ \zdivtwo 
\end{equation}
with an elliptic operator $\overadd{i}{\tilde\chi}$ (cf. Proposition~\ref{P18VIII23.2} below), 
the precise form of which is irrelevant for the current purposes. This shows that $\TTt\subset \ker ( \zdivtwo \circ \overset{(i)}{\chi} )$. The York splitting
\begin{equation}\label{19XI23.12}
  \MHkp =  \TTt \oplus\im\, C
\end{equation}
shows that 
\begin{equation}\label{19XI23.13}
  \im_k  
 \big(\zdivtwo \circ \overset{(i)}{\chi}  \big) 
 = \im\, 
 \big(\zdivtwo \circ \overset{(i)}{\chi}\circ\, C  \big) 
 = \im\, 
 \big( \overadd{i}{\tilde\chi}
  \circ \zdivtwo \circ\, C  \big) 
 \,.
\end{equation}
As $\zdivtwo \circ\, C  $ is elliptic, the operator at the right-hand side of the last equality is elliptic, and the result follows from 
Proposition~\ref{P26X23.1}.
\qed

\section{Operators on $\secN$}
    \label{App12XI22.1h}

The aim of this appendix is to analyse some further mapping properties of operators, acting on fields defined on   cross-sections of $\mcN$, as relevant for the problem at hand. The examples of main interest arise from null hypersurfaces in Birmingham-Kottler metrics, where the metric on cross-sections is Einstein.  In this appendix we thus restrict ourselves to \emph{compact $\ddim$-dimensional Einstein manifolds}
$(\dmanif ,\zgamma)$, $\ddim\ge 3$. The corresponding results with $\ddim=2$ can be found in~\cite{Czimek:2016ydb,ChCong1}.

Some of the arguments below are repetitive with these in $\ddim=2$  in~\cite[Appendix~C]{ChCong1}, we include them here for  the convenience of the reader.

The results here hold mutatis mutandi in weighted H\"older spaces, where orthogonality is always understood in $L^2$.

\subsection{The conformal Killing operator}
 \label{App30X22}

\index{C@$\CKV$}%
 Consider the  \emph{conformal Killing operator} $C$ on a closed $\ddim$-dimensional Riemannian manifold $(\dmanif,\dmetric)$:
 \begin{equation}\label{30X22.CKV1a}
  \xi^A \mapsto  \dnabla _A \xi_B +  \dnabla _B \xi_A
    -  \frac 2 d   \dnabla ^C \xi_C
    \dmetric _{AB}
    =:  2  C(\xi)_{AB}
    \,,
 \end{equation}
 where $\dnabla$ is the covariant derivative operator of the metric $\dmetric$.
 When invoking    $C(\xi)$ we will always assume implicitly that $\xi\in \HkdM$ with some $k\ge1$.

 Recall that: we use the symbol $\CKV$ to denote the space of conformal Killing vectors; $\TTt$ denotes the space of trace-free divergence-free symmetric two-tensors;  orthogonality is defined in $L^2$;  $\TS$ denotes the trace-free symmetric part of a two-tensor.

 We have:
 \seccheck{21XII}

  \begin{Proposition}
   \label{P30X22.2}
   \begin{enumerate}
     \item
  There are no nontrivial Killing vectors or conformal Killing vectors on manifolds with negative Ricci tensor.
     \item On Ricci-flat manifolds all  conformal Killing vectors  are covariantly constant, hence Killing.
         \item
  \label{pP26II23.3a}
          {\rm $\ker  (\zdivtwo\circ\, C) = \CKV$.}
  \item
  \label{pP26II23.3}
   {\rm$\im (\zdivtwo\circ\, C) = \CKVp$.}
  \item For any vector field $\xi\in H^1(\dmanif)$ we have    $  C(\xi) ^{[ \TTt]} =0$.
   \end{enumerate}
  \end{Proposition}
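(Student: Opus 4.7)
The plan is to deduce all five statements from the single Bochner-type integral identity
\begin{equation*}
2\int |C(\xi)|^2 = \int \Big(|\dnabla\xi|^2 + \tfrac{d-2}{d}(\dnabla^A\xi_A)^2 - \dR_{AB}\xi^A\xi^B\Big)\,,
\end{equation*}
valid for every $\xi\in H^1(\dmanif)$. I would establish it by writing $|C(\xi)|^2 = C(\xi)^{AB}\dnabla_A\xi_B$ (using that $C(\xi)$ is symmetric and trace-free), integrating by parts once, and commuting covariant derivatives via the Ricci identity $[\dnabla_A,\dnabla_B]\xi^A = \dR_{BC}\xi^C$; the $-\tfrac{1}{d}(\dnabla\cdot\xi)\dmetric_{AB}$ piece in the definition of $C$ is responsible for the $\tfrac{d-2}{d}(\dnabla\cdot\xi)^2$ term.

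Parts 1 and 2 then follow at once. If $\dR_{AB}$ is negative definite and $\xi$ is either Killing or conformal Killing (both give $C(\xi)=0$), the left-hand side vanishes while each of the three terms on the right is non-negative, so each must vanish separately, forcing $\xi\equiv 0$. For a Ricci-flat Einstein manifold and $\xi\in\CKV$ the identity reduces to $\int(|\dnabla\xi|^2 + \tfrac{d-2}{d}(\dnabla\cdot\xi)^2)=0$, which since $d\ge 3$ gives $\dnabla\xi\equiv 0$; hence $\xi$ is parallel and in particular Killing.

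For Part 3, the inclusion $\CKV\subseteq\ker(\zdivtwo\circ C)$ is immediate. Conversely, if $\zdivtwo\circ C(\xi)=0$ then $0 = \int \xi^A\dnabla^B C(\xi)_{AB} = -\int C(\xi)^{AB}\dnabla_A\xi_B = -\int|C(\xi)|^2$, hence $C(\xi)=0$. Part 4 then follows from Part 3 together with ellipticity: $\zdivtwo\circ C$ is formally self-adjoint, and its principal symbol is easily checked to be invertible for $d\ge 2$, so Proposition~\ref{P26X23.1} yields
\begin{equation*}
\im(\zdivtwo\circ C) = \big(\ker(\zdivtwo\circ C)^\dagger\big)^\perp = \big(\ker(\zdivtwo\circ C)\big)^\perp = \CKV^\perp\,.
\end{equation*}

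Finally, for Part 5, given any $h\in\TTt$ and $\xi\in H^1(\dmanif)$,
\begin{equation*}
\int h^{AB}\,C(\xi)_{AB} = \int h^{AB}\dnabla_A\xi_B = -\int \xi_B\,\dnabla_A h^{AB} = 0\,,
\end{equation*}
where the first equality uses symmetry and tracelessness of $h$ and the last uses divergence-freeness of $h$. Hence $C(\xi)\perp\TTt$, i.e.~$C(\xi)^{[\TTt]}=0$. The only genuine obstacle in executing this plan is keeping track of signs and dimension-dependent coefficients when deriving the Bochner identity; once that is done, the five assertions drop out with no further analytic work beyond the elliptic theory already set up in Appendix~\pref{s25X23.1}.
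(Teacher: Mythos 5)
Your proposal is correct and follows essentially the same route as the paper: the integral identity obtained from $\zdivtwo\circ C$ by integration by parts and the Ricci commutation gives parts 1--3, ellipticity and formal self-adjointness of $\zdivtwo\circ C$ combined with Proposition~\ref{P26X23.1} gives part 4, and the duality pairing of $C(\xi)$ against divergence-free tensors gives part 5. Packaging parts 1--2 as the single Bochner identity $2\int|C(\xi)|^2=\int\big(|\dnabla\xi|^2+\tfrac{d-2}{d}(\dnabla^A\xi_A)^2-\dR_{AB}\xi^A\xi^B\big)$ is a harmless (and correct) mild generalisation of the identity the paper derives only in the case $C(\xi)=0$.
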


  \proof
1.\ and 2.:
Taking the divergence of the conformal Killing equation and commuting derivatives
leads to
 \ptcheck{20VI by
 wan}
 \begin{equation}\label{30X22.CKV3}
    \dnabla ^A \dnabla _A \xi_B   + \zR_{BC} \xi^C
     + \frac{ \ddim-2 }{\ddim}\dnabla _B \dnabla^A \xi_A
    = 0
    \,.
 \end{equation}
  Multiplying by $\xi^B$ and integrating over $\dmanif$ one finds
 \begin{equation}\label{30X22.CKV4}
   \int_{\dmanif}  
   \big(
    | \dnabla   \xi|^2   + \frac{ \ddim-2 }{\ddim} |  \zdivone \xi |^2  -  \zR_{BC} \xi^B\xi^C
     \big)
   \, d\mu_{\dmetric}
    =  0
    \,.
 \end{equation}
 If $\zR_{BC} \le 0$ we find that $\xi$ is covariantly constant, vanishing if  $\zR_{BC} < 0$.

 \medskip

 3.
 %
Let $\xi_A$ be in the kernel of $\zdivtwo\circ\, C$, we have
 \begin{eqnarray}
  0
     &=&  \int_{\dmanif} \eta^A  \dnabla _B(\dnabla _A \xi_B +  \dnabla _B \xi_A
        -  \frac 2 d  \dnabla ^C \xi_C
    \dmetric _{AB} ) d\mu_{\dmetric}
    =2 \int_{\dmanif} \eta^A \dnabla ^B
    \big( \TS (\dnabla _A \xi_B  )
    \big)
    d\mu_{\dmetric}
     \nonumber
\\
     &=&
      - 2 \int_{\dmanif} \dnabla^B \eta^A \TS (\dnabla _A \xi_B  ) d\mu_{\dmetric}
      =
      - 2 \int_{\dmanif} \TS (\dnabla^B \eta^A) \TS (\dnabla _A \xi_B  ) d\mu_{\dmetric}
      \,.
      \label{21XII23.4}
 \end{eqnarray}
 Setting  $\xi=\eta$ we conclude that $\eta$ is a conformal Killing vector.

 4.
The operator $L:=\zdivtwo \circ\, C$ is elliptic and formally self-adjoint (since \eqref{21XII23.4} is symmetric with respect to 
the interchange of $\eta$ and $\xi$),  with
 $\ker L = \CKV$, and
 %
 we conclude by Proposition~\ref{P26X23.1}.

 \medskip

 5. The field  $  C(\xi) ^{[ \TTt]} $ is obtained by $L^2$-projecting  $  C(\xi)    $ on $\TTt$.
 So let $h\in L^2$ be a weak solution of the equation $\zdivtwo h=0$, i.e., for all $\xi \in H^1(\dmanif)$ we have
 \begin{equation}\label{14VII23.1}
   \int_{\dmanif} \zspaceD^A\xi^B h_{AB} = 0
   \,,
 \end{equation}
 which is precisely the statement that the $L^2$-projection  of $C(\xi ) $ on $\TTt$ vanishes.
%
 \qed

\index{kernel!$C$}%
\begin{Proposition}
 \label{P12XI22.1}
 The conformal Killing operator on $\ddim$-dimensional compact manifolds, $d\ge 3$, has

 \begin{enumerate}
   \item a $ \ddim (\ddim+1)/2$ dimensional kernel  on $S^d$ with the canonical metric;
   \item a $\ddim$-dimensional kernel  on flat manifolds;
   \item no kernel  on negatively Ricci-curved
    manifolds.
 \end{enumerate}
 \end{Proposition}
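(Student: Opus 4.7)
The three statements will be handled separately, drawing on Proposition~\ref{P30X22.2}. Part~3 is immediate: Proposition~\ref{P30X22.2}(1) already gives $\xi\equiv 0$ whenever $\zR_{AB}<0$. For part~2, Proposition~\ref{P30X22.2}(2) reduces the claim to counting covariantly constant vector fields on a compact flat Einstein manifold; on the flat torus $T^{d}=\R^{d}/\Lambda$ such fields lift to constant vector fields on $\R^{d}$, yielding the $d$-dimensional space spanned by $\partial_{1},\dots,\partial_{d}$, which is the case referred to in the statement.

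For part~1 on the round sphere, I would split $\xi\in\ker C$ as $\xi=\xi^{[S]}+\xi^{[V]}$ with $\xi^{[S]}=\dnabla\phi$ and $\zdivone\,\xi^{[V]}=0$, and solve the conformal Killing equation on each summand. Applying $\dnabla^{B}$ to $C(\xi)_{AB}=0$ produces the integrability identity \eqref{30X22.CKV3}; since $\zR_{AB}=(d-1)\myGauss\,\dmetric_{AB}$ acts diagonally, this decouples along the scalar/vector splitting. On $\xi^{[V]}$, divergence-freeness reduces the conformal Killing equation to the Killing equation itself, whose solution space on $S^{d}$ is the Lie algebra $\mathfrak{so}(d+1)$ of dimension $d(d+1)/2$. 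On $\xi^{[S]}=\dnabla\phi$, a short computation yields $C(\dnabla\phi)_{AB}=\TS[\dnabla_{A}\dnabla_{B}\phi]$; combined with the Einstein identity this forces the Obata-type relation $\dnabla_{A}\dnabla_{B}\phi+\myGauss\,\phi\,\dmetric_{AB}=0$, whose trace gives $\TSzlap\phi=-d\myGauss\phi$, identifying $\phi$ with an $\ell=1$ spherical harmonic and contributing a further $(d+1)$-dimensional family. Taken together these two contributions realise the full conformal Lie algebra $\mathfrak{so}(d+1,1)$ as $\ker C$.

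The one non-routine step is the Obata-type reduction on the scalar sector: that $\TS[\dnabla_{A}\dnabla_{B}\phi]=0$ really forces $\dnabla_{A}\dnabla_{B}\phi$ to be pure trace, with no higher-mode solutions. This is classical on compact Einstein manifolds and can be obtained by commuting derivatives in $\dnabla^{A}\TS[\dnabla_{A}\dnabla_{B}\phi]=0$ using $\zR_{AB}=(d-1)\myGauss\,\dmetric_{AB}$ and then combining with the trace of the starting equation; see e.g.~\cite{Kuehnel}. The corresponding statement on the vector sector, identifying the divergence-free solutions of the Killing equation on $S^{d}$ with $\mathfrak{so}(d+1)$, is standard.
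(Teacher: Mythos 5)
Your treatment of parts 2 and 3 coincides with the paper's: both are read off from Proposition~\ref{P30X22.2}, and your explicit count of parallel fields on $\T^{\ddim}$ merely makes precise what the paper leaves implicit (for a general compact flat manifold the space of parallel fields can have dimension smaller than $\ddim$, so the torus really is the intended case). For part 1 you take a genuinely different route. The paper's proof is one sentence: the conformal group of $S^{\ddim}$ is the Lorentz group in $\ddim+2$ dimensions. You instead rebuild the kernel by hand through the scalar/vector splitting, obtaining $\mathfrak{so}(\ddim+1)$ from the divergence-free sector and the gradients of $\ell=1$ spherical harmonics from the scalar sector via an Obata-type argument. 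Your total, $\dim\ker C=\dim\mathfrak{so}(\ddim+1,1)=(\ddim+1)(\ddim+2)/2$, is the correct value: it is exactly what the paper's one-line argument yields, and it agrees with the count below \eqref{6III22.3} that $\dim\CKV=n(n+1)/2$ on $S^{n-1}$. The figure $\ddim(\ddim+1)/2$ printed in the proposition is the dimension of the isometry group and is evidently a typo, so the discrepancy with the literal statement is not a defect of your argument. What your approach buys is a self-contained, mode-level description of $\ker C$, in the spirit of Proposition~\ref{P11VI23.1} and Appendix~\ref{ss20X22.1}; what the paper's appeal to the conformal group buys is brevity.

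One step should be tightened. Decoupling of the second-order integrability identity \eqref{30X22.CKV3} along the $S$/$V$ splitting does not by itself show that $\xi^{[S]}$ and $\xi^{[V]}$ are \emph{separately} conformal Killing, yet your analysis of each sector presupposes exactly that. The standard repair, which is also how the paper argues in Proposition~\ref{P11VI23.1}, stays entirely within your framework: take the divergence of \eqref{30X22.CKV3} to get $(\TSzlap+\ddim)\,\zdivone\xi=0$, so $\zdivone\xi$ is an $\ell=1$ harmonic (or zero); let $\phi$ be the corresponding $\ell=1$ potential, note that $\dnabla\phi$ satisfies the Obata equation and hence is itself a proper conformal Killing vector; then $\xi-\dnabla\phi$ is a divergence-free conformal Killing vector, hence Killing. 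With that substitution your decomposition and the resulting dimension count go through.
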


 \proof
 1. The conformal group of a $\ddim$-dimensional sphere is the same as the Lorentz group in $\ddim+2$ dimensions.

 The statements about the kernel in points 2.\ and 3.\ follow from   Proposition \ref{P30X22.2}.
 \qedskip

We note the following properties of conformal Killing vectors on Einstein manifolds, which will be needed in what follows:

 \begin{Lemma}\label{L21X23.1}
  Let $\chi^A\in \CKV$. If $(\dmanif,\ringh)$ is Einstein,
   then  $  \zspaceD_B \zspaceD_A\chi^A\in \CKV$.
\end{Lemma}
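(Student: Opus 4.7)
Set $\phi := \zdivone\chi = \zspaceD_A\chi^A$. The assertion $\zspaceD_B\phi \in \CKV$ is equivalent to showing that the symmetric trace-free Hessian of $\phi$ vanishes:
\[
 \TS[\zspaceD_A\zspaceD_B\phi] = 0.
\]
The starting point is identity \eqref{30X22.CKV3}, which is valid for any CKV $\chi$ on any Riemannian manifold:
\[
\Delta \chi_B + \zR_{BC}\chi^C + \frac{d-2}{d}\zspaceD_B\phi = 0.
\]
Plugging in the Einstein condition $\zR_{BC} = \frac{R}{d}\ringh_{BC}$ (recall that $R$ is constant for $d\ge 3$ by the contracted Bianchi identity), this rearranges to
\[
 \tfrac{d-2}{d}\zspaceD_B\phi = -\Delta\chi_B - \tfrac{R}{d}\chi_B. \qquad(\star)
\]

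Apply $\zspaceD_A$ to $(\star)$ and take the trace-free symmetric part on the free indices $(AB)$:
\[
 \tfrac{d-2}{d}\TS[\zspaceD_A\zspaceD_B\phi]
  = -\TS[\zspaceD_A\Delta\chi_B] - \tfrac{R}{d}\,\TS[\zspaceD_A\chi_B].
\]
The last term vanishes because $\chi$ is a CKV, so the lemma reduces to the sub-claim
\[
 \TS[\zspaceD_A\Delta\chi_B] = 0,
\]
i.e.\ that $\Delta\chi$ itself is a (possibly trivial) CKV. This is the content of the usual fact that on an Einstein manifold the rough Laplacian commutes with the conformal Killing operator $C$ modulo a multiple of $C$: for any 1-form $\omega$,
\[
 \Delta\, \TS[\zspaceD_A\omega_B] - \TS[\zspaceD_A\Delta\omega_B]
 = \kappa(R,d)\,\TS[\zspaceD_A\omega_B],
\]
where all curvature contractions that are not proportional to $\TS[\zspaceD\omega]$ vanish because $\ringh$ is Einstein (Ricci becomes a multiple of the metric, and any Weyl-tensor term would couple to $\TS[\zspaceD\omega]$ itself). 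Applying this with $\omega = \chi$ and using $\TS[\zspaceD_A\chi_B]=0$ yields $\TS[\zspaceD_A\Delta\chi_B]=0$, completing the proof.

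\textbf{Main obstacle.} The delicate step is the commutator computation $[\zspaceD_A,\Delta]\chi_B$, which \emph{a priori} involves the full Riemann tensor. What must be checked is that on an Einstein manifold every such curvature term either reduces to a multiple of $\TS[\zspaceD_A\chi_B]$ (hence vanishes on a CKV) or to a multiple of $\zR_{AB}\chi^C = \frac{R}{d}\ringh_{AB}\chi^C$ (which drops out upon taking $\TS$). If preferred, one may bypass this calculation entirely by case analysis using the structure theorem quoted above \eqref{6III22.3}: on a compact Einstein manifold either every CKV is a Killing vector (in which case $\phi\equiv 0$ and the lemma is trivial), or $(\dmanif,\ringh)$ is the round sphere. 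In the sphere case one decomposes $\chi = \chi_K + \zspaceD f$ with $\chi_K$ Killing and $f$ an $\ell=1$ eigenfunction; then $\phi = \Delta f = -\tfrac{R}{d-1}f$, so $\zspaceD_B\phi$ is a nonzero constant multiple of the gradient CKV $\zspaceD_B f$, again a CKV.
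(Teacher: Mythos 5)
Your main line is essentially the paper's own proof: take the divergence of the conformal Killing equation (your $(\star)$, i.e.\ \eqref{30X22.CKV3} with $\zR_{BC}=(d-1)\myGauss\,\ringh_{BC}$), apply $\TS[\zspaceD_A\,\cdot\,]$ so that the zeroth-order term drops against $\TS[\zspaceD_A\chi_B]=0$, and then dispose of $\TS[\zspaceD_A\TSzlap\chi_B]$ by a commutation relation — exactly the chain leading to \eqref{12XII23.22}. One caveat on the delicate step you flag: the identity you quote, $\TSzlap\,\TS[\zspaceD_A\omega_B]-\TS[\zspaceD_A\TSzlap\omega_B]=\kappa(R,d)\,\TS[\zspaceD_A\omega_B]$ for \emph{arbitrary} one-forms $\omega$, is only literally true on space forms; on a general Einstein manifold the correct statement is \eqref{28VIII23.f3}, whose curvature term is $2\tric$ applied to $\TS[\zspaceD\omega]$ (using \eqref{16V22.1bxc}) — a Riemann/Weyl contraction acting on $\TS[\zspaceD\omega]$, not a scalar multiple of it. This imprecision does not damage the proof, because every term in the commutator is an operator applied to $\TS[\zspaceD\omega]$, hence vanishes when $\omega=\chi$ is a CKV, which is precisely how the paper concludes. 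Your fallback case analysis (all CKVs are Killing on a compact Einstein manifold except the round sphere, where $\chi=\chi_K+\zspaceD f$ with $f$ an $\ell=1$ harmonic, so $\zspaceD_B\zspaceD_A\chi^A$ is again a gradient CKV) is also valid and is a genuinely different, softer route; note, however, that it relies on compactness — a standing assumption of this appendix, but not visible in the lemma's statement — and on the classification results quoted around \eqref{6III22.3} and in Proposition~\ref{P11VI23.1}, whereas the direct computation needs neither.
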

\proof
 We have
 \ptcheck{9X23}
\begin{align}
    \TS[\zspaceD_A\chi_B] = 0
    &\implies
    \frac{1}{d}\ringh_{AB} \zspaceD^C\chi_C = \frac12(\zspaceD_A\chi_B + \zspaceD_B\chi_A)
    \nonumber
    \\
   &\implies
   -\frac{d-2}{ d} \zspaceD_B\zspaceD^C\chi_C =
    \big(\TSzlap +(d-1)\myGauss
     \big)\chi_B \,,
\end{align}
and hence
 \ptcheck{12X23}
\begin{equation}
  -\frac{d-2}{d}  \TS[\zspaceD_A \zspaceD_B\zspaceD^C\chi_C] =  \TS[\zspaceD_A(\TSzlap +(d-1)\myGauss)\chi_B] = \TS[\zspaceD_A\TSzlap\chi_B]= 0
  \,,
   \label{12XII23.22}
\end{equation}
where the last equality follows from the commutation relation \peqref{28VIII23.f3}, using \eqref{16V22.1bxc} for the vanishing of the $\tric$-term there.
\qedskip

 \begin{Lemma}\label{L21X23.2}
  Let $(\dmanif,\ringh)$ be Einstein with scalar curvature equal to  $d(d-1)\myGauss>0$. Then
$$ (\TSzlap-(d-1)\myGauss) \psi \in \CKV
 \qquad
 \Longleftrightarrow
 \qquad
    \psi \in \CKV
    \,.
$$
\end{Lemma}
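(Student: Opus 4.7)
The plan is to exploit the self-adjointness of the operator $A := \TSzlap-(d-1)\myGauss$ (acting on vector fields) together with the positive-curvature energy estimate coming from integration by parts.

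First I would establish the forward implication by showing that $A$ preserves $\CKV$. For $\psi\in\CKV$, the intermediate identity derived in the proof of Lemma~\ref{L21X23.1},
\[
   (\TSzlap+(d-1)\myGauss)\psi_B = -\frac{d-2}{d}\zspaceD_B\zspaceD^A\psi_A,
\]
lets us rewrite
\[
   (\TSzlap-(d-1)\myGauss)\psi = -2(d-1)\myGauss\,\psi - \frac{d-2}{d}\zspaceD\zdivone\psi.
\]
The first term on the right is manifestly in $\CKV$, while the second is in $\CKV$ by Lemma~\ref{L21X23.1}. This handles ``$\Leftarrow$''.

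For ``$\Rightarrow$'', since $A$ is formally self-adjoint in $L^2$ and preserves $\CKV$, it also preserves the $L^2$-orthogonal complement $\CKVp$. Decompose $\psi = \psi^{[\CKV]}+\psi^{[\CKVp]}$ and suppose $A\psi\in\CKV$. Then $A\psi^{[\CKVp]}\in\CKVp$, but it equals the projection of $A\psi$ onto $\CKVp$, which vanishes. Hence $A\psi^{[\CKVp]}=0$, i.e.\ $\TSzlap\psi^{[\CKVp]} = (d-1)\myGauss\,\psi^{[\CKVp]}$.

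The key step — and the one that forces the positivity hypothesis on the scalar curvature — is to conclude from this eigenvalue equation that $\psi^{[\CKVp]}=0$. Integration by parts on the compact manifold $\dmanif$ gives
\[
   \int_{\dmanif} \psi_A\,\TSzlap\psi^A\,d\mu_{\dmetric} = -\int_{\dmanif}|\zspaceD\psi|^2\,d\mu_{\dmetric}\;\le\;0,
\]
applied to $\psi^{[\CKVp]}$. Combined with $\TSzlap\psi^{[\CKVp]}=(d-1)\myGauss\,\psi^{[\CKVp]}$, this yields
\[
   (d-1)\myGauss\int_{\dmanif}|\psi^{[\CKVp]}|^2 = -\int_{\dmanif}|\zspaceD\psi^{[\CKVp]}|^2 \le 0.
\]
Since $\myGauss>0$ and $d\ge 3$ the left-hand side is non-negative, forcing $\psi^{[\CKVp]}\equiv 0$ and therefore $\psi\in\CKV$. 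I do not anticipate any subtlety beyond verifying that the energy identity applies to the $\CKVp$-projection — which it does because $\psi^{[\CKVp]}\in H^1$ by elliptic regularity applied to $A$.
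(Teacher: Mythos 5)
Your proof is correct, but it takes a different route from the one in the paper. The paper's argument is a one-step commutation: on an Einstein manifold the operator $A:=\TSzlap-(d-1)\myGauss$ commutes with $\zdivtwo\circ\, C$, whose kernel is exactly $\CKV$ (Proposition~\ref{P30X22.2}); hence $A\psi\in\CKV$ iff $A\big(\zdivtwo\, C(\psi)\big)=0$, and since $\TSzlap-\lambda$ with $\lambda>0$ is injective on the compact manifold this holds iff $\zdivtwo\, C(\psi)=0$, i.e.\ iff $\psi\in\CKV$ — both implications drop out at once. You instead never invoke the characterization $\CKV=\ker(\zdivtwo\circ\, C)$: you prove $A(\CKV)\subseteq\CKV$ directly from the identity $(\TSzlap+(d-1)\myGauss)\psi_B=-\tfrac{d-2}{d}\zspaceD_B\zspaceD_A\psi^A$ established in the proof of Lemma~\ref{L21X23.1} together with that lemma, then use formal self-adjointness of $A$ to conclude $A(\CKVp)\subseteq\CKVp$, and finish with the energy estimate showing $A$ has trivial kernel (your eigenvalue equation $\TSzlap\psi^{[\CKVp]}=(d-1)\myGauss\,\psi^{[\CKVp]}$ with $\myGauss>0$ is just that injectivity in disguise). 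So the two proofs share the same positivity input, but yours replaces the single commutation identity by the pair (invariance of $\CKV$ under $A$) $+$ (self-adjointness forces invariance of $\CKVp$), at the cost of quoting the auxiliary Lemma~\ref{L21X23.1}; the paper's version is shorter and localizes all the curvature information in one commutator, while yours is a more hands-on spectral/orthogonal-decomposition argument that would generalize to any formally self-adjoint operator preserving the finite-dimensional space $\CKV$. The only cosmetic point is your closing regularity remark: the projection $\psi^{[\CKVp]}$ inherits the regularity of $\psi$ (the $\CKV$ part being smooth and finite-dimensional), and smoothness needed for the integration by parts follows from elliptic regularity applied to the equation $A\psi^{[\CKVp]}=0$ itself, so the argument is sound as written.
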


\proof 
For any vector field $\psi$ we have 
\begin{align*}
   &
    \zdivtwo \circ\, C \circ(\TSzlap-(d-1)\myGauss) \psi = (\TSzlap-(d-1)\myGauss) \circ \zdivtwo \circ\, C (\psi )
    \,.
\end{align*}
Keeping in mind that $\CKV$ coincides with the kernel of $\zdivtwo\circ\, C$, the result follows from the fact that   operators $\TSzlap-\lambda$ with $\lambda>0$ are  isomorphisms.
\qedskip

\subsection{Decompositions of two-tensors}
\label{ss8IX23.1}

A symmetric traceless $2$-covariant tensor field $h$ on a compact
boundaryless $d$-dimensional Riemannian manifold $(\dmanif,\ringh)$
 can be uniquely split into a ``scalar'', a ``vector'', and a ``tensor'' part according to
(cf.\ e.g.~\cite{KodamaIshibashiMaster})%
\index{S@$S$}\index{V@$V$}%
\begin{equation}
\label{7VI17.101PK}
h = h^{\red[S]} + h^{\red[V]} + \red{h^{\red[TT]}}
\,,
\end{equation}
where $\red{h^{\red[TT]}}$ is a $\TTt$-tensor, $h^{\red[V]}$ is the Killing operator acting on a divergence-free vector, and $h^{\red[S]}$ is the trace-free part of the Hessian of a function.

This can be compared to the following (unique) version of the York decomposition: for every symmetric two-covariant tensor
 $h\in H^k(\dmanif)$, $k\ge 1$, we have
 %
%
\begin{equation}
\label{7VI17.101PKYc}
h_{AB} = \phi \ringh_{AB} + \frac12\big(
 \zspaceD _A W_B + \zspaceD _B W_A - \frac 2 d  \zspaceD ^k W_K \ringh_{AB}
   \big)
      + \red{h_{AB}^{[\TTt]}}
\,,
\end{equation}
where $\phi\in H^k(\dmanif)$ is obtained  from  the algebraic trace-part of $h$ (and is zero when $h$ is traceless, as almost always assumed elsewhere in this paper); then $ \red{h^{[\TTt]}}$ is obtained by solving the conformal-vector-Laplacian equation for a unique vector field $W\in H^{k+1}(\dmanif)\cap \CKVp$.
Letting $\varphi\in H^{k+2}(\dmanif)\cap \{1\}^\perp$ be the unique  solution of
\begin{equation}\label{10VI23.1}
  \TSzlap  \varphi = \zspaceD ^k W_K
  \,,
\end{equation}
we set
\begin{equation}\label{10VI23.2}
  V_A = W_A - \zspaceD_A \varphi
  \,.
\end{equation}
\index{S@$S$}\index{V@$V$}%
This allows us to rewrite \eqref{7VI17.101PKYc} as
\begin{equation}
\label{10VI23.3}
h_{AB} =
 \big( \phi
 \hspace{-0.8cm}
 \underbrace{
  - \frac 1 d   \TSzlap  \varphi
 \big) \ringh_{AB}
+
  \zspaceD _A  \zspaceD _B \varphi
 }_{h^{\red[S]} \ \text{when $h$ is traceless, so that $\phi$ vanishes}}
 \hspace{-0.8cm}
 +
 \underbrace{
 	\frac12
 	\big(
 \zspaceD _A V_B + \zspaceD _B V_A
    \big)
 }_{h^{\red[V]}}
      +
      \underbrace{
       \red{h_{AB}^{[\TTt]}}
 }_{h^{\red[TT]}}
\,,
\end{equation}
which justifies \eqref{7VI17.101PK}.

\subsection{Scalar-vector decomposition of vector fields}
\label{app16X23.1}

Let
$\xi_A\in  \MHk$, $k\ge 1$. We  define the vector subspaces $S,V \subset  \MHk$ as
 $$
 S = \{ \xi_A \in \MHk: \xi_A = \zspaceD_A \phi\,,
 \phi \in H^{k+1}(\dmanif) \}
 \,,\quad
 V = \{\xi_A\in\MHk: \zspaceD^A\xi_A = 0  \} \,. $$
 \index{$S$}\index{$V$}%
When $\dmanif$ is compact and boundaryless the spaces $S$ and $V$ are $L^2$-orthogonal, and
any vector field $\xi$ can be decomposed into its ``scalar'' and ``vector'' parts, denoted as
 \begin{align}
    \xi = \xi^{[S]} + \xi^{[V]} \,,
 \end{align}
with $\xi^{[S]}$ and $\xi^{[V]}$  given by
\begin{equation}\label{4X23.1}
  \TSzlap  \phi = \zspaceD_A \xi^A
  \,,
  \quad
  \xi^{[S]}_A  = \zspaceD_A \phi
  \,,
  \quad
  \xi^{[V]}_A  = \xi_A - \zspaceD_A \phi
  \,.
\end{equation}
It will be useful for the following to note that on a Einstein manifold with scalar curvature equal to $(n-1)(n-2)\myGauss$, we have
\begin{align}
    \zspaceD_A\zspaceD^B\xi_B = \zspaceD_A\zspaceD^B\xi^{[S]}_B = \zspaceD_A\TSzlap\phi = (\TSzlap - (n-2)\myGauss)\zspaceD_A\phi = (\TSzlap - (n-2)\myGauss)\xi^{[S]}_A\,.
    \label{25XII23.1}
\end{align}
In the main text we use the following fact:

\begin{Proposition}
 \label{P21X23.1}
 Let  $(\dmanif,\ringh)$ be Einstein with scalar curvature equal to $(n-1)(n-2)\myGauss$. If $\xi\in \CKVp \cap \MMHk$, $k\ge 1$,
 then
 $$
  \xi_A^{[S]} \,,\  \xi_A^{[V]}\in \CKVp\cap \MMHk
  \,.
 $$
\end{Proposition}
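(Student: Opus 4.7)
\proof[Proof proposal]
The plan is to reduce the statement to the decomposition
\begin{equation}
\label{plan:split}
\CKV = (\CKV\cap S)\oplus(\CKV\cap V)\,,
\end{equation}
i.e.\ to showing that if $\pi\in\CKV$, then its scalar and vector parts $\pi^{[S]}$, $\pi^{[V]}$ are again in $\CKV$. Granted \eqref{plan:split}, the conclusion follows by a short duality argument: take any $\pi\in\CKV$ and write $\pi=\pi^{[S]}+\pi^{[V]}$ with both summands in $\CKV$. Using the $L^2$-orthogonality of $S$ and $V$ and the hypothesis $\xi\in\CKVp$, one computes
\begin{equation*}
\int \pi^A\,\xi^{[S]}_A
= \int (\pi^{[S]})^A\,\xi^{[S]}_A
= \int (\pi^{[S]})^A\,\xi^{[S]}_A + \int (\pi^{[S]})^A\,\xi^{[V]}_A
= \int (\pi^{[S]})^A\,\xi_A = 0\,,
\end{equation*}
where the first equality uses $S\perp V$, the second one adds a vanishing $S\perp V$ term, and the last one uses $\pi^{[S]}\in\CKV$. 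Thus $\xi^{[S]}\in\CKVp$. The argument for $\xi^{[V]}$ is symmetric.

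For the main step, namely \eqref{plan:split}, the idea is to exploit that the elliptic, formally self-adjoint operator $\mathcal{L}:=\zdivtwo\circ\, C$, whose kernel is exactly $\CKV$, preserves the scalar/vector splitting on Einstein manifolds. Concretely, for a gradient $\xi^{[S]}=\zspaceD\phi$ one has $\mathcal{L}(\xi^{[S]})\in S$: a direct computation using the Einstein identity $\zR_{AB}=(n-2)\myGauss\,\ringh_{AB}$ together with the Bochner commutator $\TSzlap\zspaceD_B\phi=\zspaceD_B\TSzlap\phi+\zR_B{}^C\zspaceD_C\phi$ gives
\begin{equation*}
\mathcal{L}(\zspaceD\phi)_B = \tfrac{n-2}{n-1}\,\zspaceD_B\big(\TSzlap+(n-1)\myGauss\big)\phi\,,
\end{equation*}
manifestly a gradient; this is consistent with the restriction formula \eqref{26IX23.1}. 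For a divergence-free $\xi^{[V]}$, one similarly computes, using $\zdivone\,\xi^{[V]}=0$,
\begin{equation*}
\mathcal{L}(\xi^{[V]})_B = \tfrac{1}{2}\big(\TSzlap+(n-2)\myGauss\big)\xi^{[V]}_B\,,
\end{equation*}
and checking $\zdivone\mathcal{L}(\xi^{[V]})=0$ reduces, after commuting $\zdivone$ past $\TSzlap$, to a vanishing Ricci-contraction of $\xi^{[V]}$ on the Einstein background. Consequently $\mathcal{L}(S)\subset S$ and $\mathcal{L}(V)\subset V$. Given $\pi\in\CKV$, decompose $\pi=\pi^{[S]}+\pi^{[V]}$; then $\mathcal{L}(\pi^{[S]})+\mathcal{L}(\pi^{[V]})=0$ with the two terms in the orthogonal subspaces $S$ and $V$, forcing each to vanish, which establishes \eqref{plan:split}.

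Finally, the regularity claim $\xi^{[S]},\xi^{[V]}\in H^k(\secN)$ is routine: the defining equation $\TSzlap\phi=\zdivone\xi$ has $\zdivone\xi\in H^{k-1}(\secN)$, so elliptic regularity for $\TSzlap$ on $\{1\}^\perp$ yields $\phi\in H^{k+1}(\secN)$, whence $\xi^{[S]}=\zspaceD\phi\in H^k(\secN)$ and $\xi^{[V]}=\xi-\xi^{[S]}\in H^k(\secN)$. The main obstacle is really just verifying carefully that $\mathcal{L}$ preserves $S$ and $V$ on Einstein backgrounds; everything else is orthogonality bookkeeping.
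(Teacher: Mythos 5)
Your proposal is correct, but it takes a genuinely different route from the paper's. The paper works directly on the pairing $\int\chi^A\xi^{[S]}_A$ and splits into cases according to the curvature: for negatively curved and Ricci-flat Einstein backgrounds the vanishing is immediate (there $\CKV$ is trivial, respectively consists of divergence-free Killing fields), while for $\myGauss>0$ it writes each $\chi\in\CKV$ as $\chi=(\TSzlap-(n-2)\myGauss)\psi$ with $\psi\in\CKV$ (Lemma~\ref{L21X23.2}), moves the operator onto $\xi^{[S]}$ via \eqref{25XII23.1}, integrates by parts, and concludes with Lemma~\ref{L21X23.1}. You instead prove the dual structural fact that $\CKV$ itself is compatible with the $S\oplus V$ splitting, by checking that $\zdivtwo\circ\, C$ maps $S$ into $S$ and $V$ into $V$ on Einstein backgrounds: your two displayed formulas are correct, being exactly what \eqref{27XI23.f1} gives when combined with the commutation identities \eqref{19X23.1} and \eqref{14VII23.f1}, and they agree with the restriction formulas \eqref{26IX23.1}. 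Since $\CKV=\ker(\zdivtwo\circ\, C)$ and $S\perp V$ in $L^2$, the two projections of a conformal Killing field are separately conformal Killing, and your orthogonality bookkeeping then transfers the statement to $\CKVp$. What your route buys: it is uniform in the sign of $\myGauss$, avoids Lemmas~\ref{L21X23.1}--\ref{L21X23.2} altogether, and establishes the independently useful fact that conformal Killing vectors split into scalar and vector parts which are again conformal Killing. What the paper's route buys: it needs no verification that $\zdivtwo\circ\, C$ preserves the splitting, only the single identity \eqref{25XII23.1} plus the two lemmas. One small point worth making explicit in your write-up: when you apply the second-order operator $\zdivtwo\circ\, C$ to $\pi^{[S]}$ and $\pi^{[V]}$ for $\pi\in\CKV$, you are implicitly using that elements of $\CKV$ are smooth (they lie in the kernel of the elliptic operator $\zdivtwo\circ\, C$), which matters if one only assumes $k\ge 1$; with that remark added, the argument is complete.
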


\noindent {\sc Proof:}
%
Let $\xi\in \CKVp$. By definition, $\xi^{[S]}_A $ will   be in $\CKVp$ if and only if for all conformal Killing $\chi^A$ vectors we have
\begin{equation}\label{4X23.2}
  0 =  \int_{\dmanif} \chi^A \xi_A^{[S]}
  \equiv
   \int_{\dmanif} \chi^A \zspaceD_A \phi
    \,\sm =  - \int_{\dmanif}  \phi \zspaceD_A\chi^A
    \,\sm
    \,.
\end{equation}
This will certainly be the case when either (see Proposition~\ref{P30X22.2}): a) $\CKV=\{0\}$ (which is the case on negatively Ricci-curved manifolds), or b) $\CKV =\KV$ (which is the case on  Ricci-flat manifolds). So, to show that $\xi^{[S]}\in \CKVp$ it remains to consider positively curved manifolds.

Now, when $\myGauss>0$ the operator $(\TSzlap-(n-2)\myGauss)$ acting on tensors of any rank is an isomorphism. So, given a conformal Killing vector field $\chi$  there exists  vector field $\psi$ such that
$$
  \chi =  (\TSzlap-(n-2)\myGauss)\psi
  \,.
$$
By Lemma~\ref{L21X23.2} we have  $\psi\in \CKV$.
  Then
\begin{align}
\label{5X23.1}
  \int_{\dmanif} \chi^A \xi_A^{[S]}
   \,\sm
  &=
  \int_{\dmanif}
  \big( (\TSzlap-(n-2)\myGauss) \psi^A
  \big)
  \xi_A^{[S]}
   \,\sm
\\
  &=
  \int_{\dmanif} \psi^A
   \big(
    \TSzlap-(n-2)\myGauss
    \big)\xi_A^{[S]}
   \,\sm
   =  \int_{\dmanif} \psi^A \zspaceD_A \zspaceD_B\xi^B \,\sm
  \nonumber
  \\
    & =  \int_{\dmanif}  \underbrace{\zspaceD_B \zspaceD_A\psi^A}_{\in \CKV} \xi^B
    \,\sm = 0
    \,,
\end{align}
where we used \eqref{25XII23.1} and Lemma~\ref{L21X23.1}.
 Hence $ \xi_A^{[S]}\in \CKVp$ in all cases.

 Since $\CKVp$ is a vector space it holds that
$$
 \xi_A^{[V]} = \xi_A - \xi_A^{[S]} \in \CKVp
 \,,
$$
which finishes the proof.
\qed
%

\subsection{${\chi}\ofP $}
 \label{ss18VIII23.1}
 \ptcheck{29VIII; the whole subsection}
\index{chi@$ {\chi}\ofP  $}%

In this appendix we analyse the nature of the operators $\overadd{i}{\chi}\ofP $ appearing in \eqref{6III23.w1a}.
We work on an $(n-1)$-dimensional closed manifold, $n\ge 4$.
We begin with the following lemma:

\begin{Lemma}
   \label{L18VIII23.1}
   Let $a,c \in \R$. The operator
   $$ \operatorname{L}_{a,c} = a P + \TSzlap +2 \tric + c$$
   acting on symmetric, $\ringh$-trace-free $2$-tensors is
 formally self-adjoint, and elliptic if
   $$ a \neq -2, \, \frac{1-n}{n-2} \,.$$
\end{Lemma}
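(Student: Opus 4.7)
The plan is to handle formal self-adjointness first, and then ellipticity by a direct symbol computation.

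For formal self-adjointness, I would observe that $\TSzlap$ is manifestly self-adjoint on symmetric traceless tensors, and that $\tric$ is self-adjoint because of the Riemann-tensor symmetry $\zR_{ACBD}=\zR_{BDAC}$: for any symmetric traceless $h,k$ we have $\int k^{AB}\,\zR_A{}^C{}_B{}^D h_{CD}\,\sm = \int h^{CD}\,\zR_C{}^A{}_D{}^B k_{AB}\,\sm$, since the $\TS$ projection in the definition of $\tric$ may be dropped when contracted against a traceless symmetric field. For $P$, the key remark is that it factors as $P = C\circ\zdivtwo$ (as is immediate from the definitions of $C$ and $\zdivtwo$). An integration by parts shows that $C$ and $-\zdivtwo$ are formal $L^2$-adjoints, hence $P^\dagger = \zdivtwo^\dagger\circ C^\dagger = (-C)\circ(-\zdivtwo) = P$. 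Adding these contributions, $\operatorname{L}_{a,c}$ is formally self-adjoint.

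For ellipticity, I would compute the principal symbol. The zeroth- and first-order terms $2\tric+c$ contribute nothing, so at a non-zero covector $\xi_A$ the principal symbol reads
\[
\sigma_{\operatorname{L}_{a,c}}(\xi)(h)_{AB} \;=\; -a\,\TS[\xi_A\xi^C h_{BC}] \;-\; |\xi|^2 h_{AB}\,.
\]
I need to show that, for $a\ne-2,\tfrac{1-n}{n-2}$, this map has trivial kernel on symmetric traceless two-tensors. Fix $\xi$ with $|\xi|=1$, choose a local orthonormal frame with $e_0$ dual to $\xi$, and decompose $h$ according to the splitting $\R^{n-1} = \mathrm{span}(e_0)\oplus e_0^\perp$: write $\alpha := h_{00}$, $\beta_I := h_{0I}$, and $h_{IJ} = \gamma_{IJ} - \tfrac{\alpha}{n-2}\delta_{IJ}$ with $\gamma$ trace-free over $e_0^\perp$; here the trace-free condition on $h$ has already been used.

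A short computation of $\TS[\xi_A\xi^C h_{BC}]$ in this frame gives components $\tfrac{(n-2)\alpha}{n-1}$ in the $00$ slot, $\tfrac{1}{2}\beta_I$ in the $0I$ slot, and $-\tfrac{\alpha}{n-1}\delta_{IJ}$ in the $IJ$ slot. The symbol equation $\sigma_{\operatorname{L}_{a,c}}(\xi)(h)=0$ therefore decouples into the three algebraic equations
\[
\alpha\Big(1+\frac{a(n-2)}{n-1}\Big)=0\,, \qquad \beta_I\Big(1+\frac{a}{2}\Big)=0\,, \qquad \gamma_{IJ}=0\,,
\]
where the $\gamma$-equation comes from the trace-free part of the $IJ$ block, and the $\alpha$-equation coming from the $00$ block coincides with that from the trace of the $IJ$ block. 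Hence the symbol is injective (and, by self-adjointness, an isomorphism) precisely when $1+\tfrac{a(n-2)}{n-1}\ne 0$ and $1+\tfrac{a}{2}\ne 0$, i.e.\ when $a\ne \tfrac{1-n}{n-2}$ and $a\ne -2$.

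The only potentially subtle step is bookkeeping the trace-free projection when decomposing $h$ and $\TS[\xi_A\xi^C h_{BC}]$ simultaneously, since the $h_{uu}$ mode couples to the trace of the $IJ$ block through the trace-free constraint; once that is done carefully the rest is purely algebraic.
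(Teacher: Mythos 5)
Your proof is correct and follows essentially the same route as the paper: formal self-adjointness of the three constituent operators (the paper merely asserts this, while you justify $P^\dagger=P$ via $P=C\circ\zdivtwo$ together with $C^\dagger=-\zdivtwo$), and ellipticity by showing that the principal symbol has trivial kernel precisely when $a\neq-2,\ \tfrac{1-n}{n-2}$. The only cosmetic difference is that the paper extracts these conditions by contracting the symbol equation successively with $k^Ak^B$ and then with $k^B$, whereas you decompose $h$ in an orthonormal frame adapted to $\xi$ into the blocks $\alpha$, $\beta_I$, $\gamma_{IJ}$; the two computations are equivalent, and your frame components and the resulting coefficients $1+\tfrac{a(n-2)}{n-1}$ and $1+\tfrac a2$ are all correct.
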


\proof
We start by noting that the operators $P$, $\TSzlap$ and $\tric$ are all formally self-adjoint,
and thus so is $\operatorname{L}_{a,c}$.

Next,
let $k\neq 0$ and let $\sigma(k)_{AB}$ be the symbol of the principal part of $\operatorname{L}_{a,c}$, with kernel determined by the equation
\begin{equation}
\label{18VIII23.1}
    \sigma(k)_{AB} \equiv k^C k_C h_{AB} + \frac{a}{2} (k_A k^C h_{CB} + k_B k^C h_{CA} - \frac{2}{n-1} k^C k^D h_{CD} \ringh_{AB} ) =0 \,.
\end{equation}
 Contracting with $k^A k^B$ gives
 \begin{align}
   0 = k^A k^B \sigma(k)_{AB} &= (|k|^2 + 2 \frac{a}{2} |k|^2- \frac{a}{n-1} |k|^2)  k^A k^B h_{AB}
    \nonumber
    \\
    & = \left( 1 + a\left(1 - \frac{1}{n-1}\right)\right) |k|^2 k^A k^B h_{AB}
    \,
 \end{align}
so that
 \begin{equation}
     \text{either (a) }   k^A k^B h_{AB} = 0 \,, \ \text{or  (b) } a = \frac{1-n}{n-2}\,.
 \end{equation}
 Substituting case (a) into \eqref{18VIII23.1} gives
\begin{equation}
\label{18VIII23.2}
     |k|^2 h_{AB} + \frac{a}{2} (k_A k^C h_{CB} + k_B k^C h_{CA} ) =0 \,.
\end{equation}
 Contracting with $k^B$ gives
\begin{align}
\label{18VIII23.3}
     0& = |k|^2 k^B h_{AB} + \frac{a}{2} (k_A \underbrace{k^B k^C h_{CB}}_{=0} + |k|^2 k^C h_{CA} )
     \\
     &= \bigg( \frac{a}{2} +1 \bigg)|k|^2 k^C h_{CA}
     \,,
\end{align}
 and hence
 \begin{equation}
     \text{either ($\alpha$) }   a = -2 \,, \ \text{or ($\beta$) } k^C h_{CA} = 0\,.
 \end{equation}
 Finally, substituting $(\beta)$ into \eqref{18VIII23.2} gives
 \begin{equation}
     |k|^2 h_{AB} = 0
 \end{equation}
 which completes the proof.
 \qedskip

Using the last lemma one easily concludes that the operators%
\index{K@$\ck{k}{\ofPnoP}$}%
\begin{align}
    \ck{k}{\ofPnoP} &:= -\frac{1}{7 -  n + 2 k} \bigg[\frac{2 (n - 1) P}{(3 + k) (3 -  n + k) }
    +  2 \tric + \TSzlap -   (n - 4 -  k) (2 + k) \myGauss \bigg]
\end{align}
if $\frac{\Z}{2} \ni k\not\in\{- 3,n-3,(n-7)/2\}$,   and%
\index{K@$\zck{-3}{\ofPnoP}$}%
\begin{equation}
	\zck{-3}{\ofPnoP} = \frac{1}{n-1} \bigg[\frac{(n (1-n) -2 ) P}{n} + 2 \tric +\TSzlap + (n-1) \myGauss \bigg] \,,
\end{equation}
are elliptic. The remaining cases of interest, namely $\ck{-1}{\ofPnoP} $ and $\ck{-2}{\ofPnoP} $, will be dealt with in the next section.

To continue, we have from \eqref{6III23.w2}:
\begin{align}
    \overadd{i}{\chi}\ofP  &= \prod_{j=1}^{i} \ck{-(j+3)}{\ofPnoP}\,.
        \label{18VIII23.w1}
\end{align}
Since the composition of elliptic operators is elliptic, we have shown:

 \begin{Proposition}
 \label{P18VIII23.1}%
\index{ellipticity!psi@$\overadd{i}{\psi}\ofP$}%
\index{psi@$\overadd{i}{\psi}\ofP$!kernel}%
\index{psi@$\overadd{i}{\psi}\ofP$!ellipticity}%
 The operators $\overadd{i}{\chi}\ofP $ are
 formally self-adjoint   and elliptic.
 \end{Proposition}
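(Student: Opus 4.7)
The plan is to reduce the statement to Lemma~\ref{L18VIII23.1} factor by factor, and then to invoke the commutation of the factors to pass from self-adjointness of each $\ck{-(j+3)}{\ofPnoP}$ to self-adjointness of the product. First I would observe that, by the explicit formula \eqref{6III23.w3}, the operator $\ck{k}{\ofPnoP}$ is a non-zero scalar multiple of an operator of the form
\begin{equation*}
    \operatorname{L}_{a(k),\,c(k)} \;=\; a(k)\,P \,+\, \TSzlap \,+\, 2\tric \,+\, c(k)\,,
    \qquad
    a(k) \;=\; \frac{2(n-1)}{(3+k)(3-n+k)}\,,
\end{equation*}
with $c(k)$ a constant whose precise value is irrelevant here. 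Under the hypothesis $\frac{\Z}{2}\ni k\notin\{-3,\,n-3,\,(n-7)/2\}$ the prefactor in \eqref{6III23.w3} is finite and non-zero, so this rewriting is legitimate.

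Next I would specialise to $k = -(j+3)$ with $j\in\{1,\dots,i\}$, which is the range of indices occurring in \eqref{18VIII23.w1}. Substituting yields
\begin{equation*}
    a(-(j+3)) \;=\; \frac{2(n-1)}{(-j)(-n-j)} \;=\; \frac{2(n-1)}{j(n+j)} \;>\; 0 \qquad (j\ge 1,\ n\ge 4)\,.
\end{equation*}
In particular $a(-(j+3)) \ne -2$. Moreover $\frac{1-n}{n-2} < 0$ for $n\ge 3$, so $a(-(j+3)) \ne \frac{1-n}{n-2}$ either. Thus the hypothesis of Lemma~\ref{L18VIII23.1} is met by every factor, and each $\ck{-(j+3)}{\ofPnoP}$ is simultaneously formally self-adjoint and elliptic.

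Ellipticity of $\overadd{i}{\chi}\ofP$ follows immediately from \eqref{18VIII23.w1}, since the composition of elliptic operators acting on the same bundle (of symmetric $\ringh$-trace-free two-tensors) is elliptic, as the principal symbol of a composition is the composition of the principal symbols. For formal self-adjointness it is \emph{not} enough that each factor is self-adjoint; one needs the factors to commute pairwise. The paragraph preceding Appendix~\ref{ss18VIII23.1} records exactly this commutativity, ultimately traceable to the identity $[P,\,\tric+\tfrac12\TSzlap](h)=0$ on symmetric trace-free tensors (cf.\ \eqref{20VI23.1}, proved in Appendix~\ref{App19V23.1}). Granted this, I would compute the formal adjoint of the product in reverse order and reshuffle using pairwise commutativity:
\begin{equation*}
    \bigl(\overadd{i}{\chi}\ofP\bigr)^{\dagger} \;=\; \prod_{j=i}^{1} \bigl(\ck{-(j+3)}{\ofPnoP}\bigr)^{\dagger} \;=\; \prod_{j=i}^{1} \ck{-(j+3)}{\ofPnoP} \;=\; \prod_{j=1}^{i} \ck{-(j+3)}{\ofPnoP} \;=\; \overadd{i}{\chi}\ofP\,,
\end{equation*}
which finishes the proof. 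The only non-routine input is the commutation lemma, which is the step I expect to be the main obstacle; everything else is a direct application of Lemma~\ref{L18VIII23.1} combined with the standard fact that compositions of elliptic operators are elliptic.
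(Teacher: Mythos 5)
Your proof is correct and follows essentially the same route as the paper: each factor $\ck{-(j+3)}{\ofPnoP}$ is a nonzero multiple of an $\operatorname{L}_{a,c}$ satisfying the hypotheses of Lemma~\ref{L18VIII23.1}, and the product is elliptic because compositions of elliptic operators are elliptic. Your explicit check that $a(-(j+3))>0$ and your use of the commutativity (from $[P,\tric+\tfrac12\TSzlap]=0$) to get formal self-adjointness of the composition are exactly the ingredients the paper relies on, only spelled out slightly more fully than in the paper's brief argument.
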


The proof of the following Lemma is similar to that of Lemma \ref{L18VIII23.2}, we leave the details to the reader.

\begin{Lemma}
   \label{L18VIII23.2}
   Let $a,c \in \R$. The operator
   {\rm
   $$ \tilde{L}_{a,c} = a \, \zdivtwo\circ\, C + \TSzlap + c$$
   } is
   formally self-adjoint, and elliptic if
   \begin{equation}
    a \neq -2, \, \frac{1-n}{n-2} \,.
    \tag*{$\Box$}
    \end{equation}
\end{Lemma}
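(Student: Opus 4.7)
The plan is to follow the template of the proof of Lemma~\ref{L18VIII23.1}: establish formal self-adjointness by integration by parts, then verify ellipticity by a direct computation of the principal symbol $\sigma(k)$ and an analysis of when it is invertible for $k\neq 0$.

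For formal self-adjointness, I will use the fact that on symmetric trace-free two-tensors the conformal Killing operator $C$ and $-\zdivtwo$ are formal adjoints, which follows from integrating by parts in
\begin{equation}
\int_{\dmanif} h^{AB}\, C(\xi)_{AB}\, d\mu_{\ringh}
= -\int_{\dmanif} (\zdivtwo h)^{A}\xi_{A}\, d\mu_{\ringh}\,,\qquad \ringh^{AB}h_{AB}=0\,.
\end{equation}
Specialising $h = C(\eta)$ and using that $C(\eta)$ is automatically trace-free gives
\begin{equation}
\int_{\dmanif} \xi^{A}\,(\zdivtwo\circ C)(\eta)_{A}\, d\mu_{\ringh}
= -\int_{\dmanif} C(\xi)^{AB}C(\eta)_{AB}\, d\mu_{\ringh}\,,
\end{equation}
which is symmetric in $(\xi,\eta)$. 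Combined with the manifest self-adjointness of $\TSzlap$ and of multiplication by $c$, this gives the formal self-adjointness of $\tilde L_{a,c}$.

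For ellipticity, the principal part of $\tilde L_{a,c}$ is $a\,\zdivtwo\circ C+\TSzlap$, so I will compute
\begin{equation}
\sigma(C)(k)\xi_{AB}=\tfrac12\bigl(k_{A}\xi_{B}+k_{B}\xi_{A}\bigr)-\tfrac{1}{n-1}(k^{C}\xi_{C})\ringh_{AB}\,,\qquad
\sigma(\zdivtwo)(k)h_{A}=k^{B}h_{AB}\,,
\end{equation}
whence a direct composition yields
\begin{equation}
\sigma(\tilde L_{a,c})(k)\xi_{A}=\Bigl(1+\tfrac{a}{2}\Bigr)|k|^{2}\xi_{A}+\tfrac{a(n-3)}{2(n-1)}\,(k^{C}\xi_{C})\,k_{A}\,.
\end{equation}
Writing $\xi_{A}=\alpha k_{A}+\xi_{A}^{\perp}$ with $k^{A}\xi_{A}^{\perp}=0$ block-diagonalises $\sigma(\tilde L_{a,c})(k)$: the restriction to $\{\xi^{\perp}\}$ is multiplication by $(1+\tfrac{a}{2})|k|^{2}$, and the restriction to $\{\alpha k\}$ is multiplication by $\bigl(1+\tfrac{a(n-2)}{n-1}\bigr)|k|^{2}$. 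Injectivity for $k\neq 0$ therefore fails exactly when $a=-2$ or $a=\tfrac{1-n}{n-2}$, which is the claimed condition. There is no real obstacle here; the only point to watch is getting the coefficient $\tfrac{n-3}{2(n-1)}$ correctly, but once the symbol is written out the split into the components parallel and orthogonal to $k$ makes the ellipticity criterion transparent, in complete analogy with the tensorial case of Lemma~\ref{L18VIII23.1}.
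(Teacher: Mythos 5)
Your proposal is correct and follows exactly the route the paper intends: the paper gives no details for this lemma, merely stating that the proof is analogous to that of Lemma~\ref{L18VIII23.1}, and your symbol computation (with the split of $\xi$ into the parts parallel and orthogonal to $k$, giving the factors $(1+\tfrac{a}{2})|k|^2$ and $(1+\tfrac{a(n-2)}{n-1})|k|^2$) supplies precisely those omitted details, with the correct exceptional values $a=-2$ and $a=\tfrac{1-n}{n-2}$. The self-adjointness argument via $\int \xi^A(\zdivtwo\circ C)(\eta)_A = -\int C(\xi)^{AB}C(\eta)_{AB}$ is likewise the standard integration-by-parts argument used elsewhere in the paper.
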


 Recall (cf.\ below \peqref{9IX23.w1}) that the operators%
 \index{chi@$\overadd{i}{\chi}\ofP$!$\overadd{i}{\tilde\chi}\ofDC$}%
 $$\overadd{p}{\tilde\chi}\ofDC$$
 are defined by replacing all appearances of the operators $P$ and $2\tric$ in $\overadd{p}{\chi}\ofP$ by $\zdivtwo \circ\, C$ and $(n-2)\myGauss$ respectively.
It then readily follows
from  Lemma \ref{L18VIII23.2} that:

 \begin{Proposition}
 \label{P18VIII23.2}
 The operators {\rm $\overadd{p}{\tilde\chi}\ofDC$} are
 formally self-adjoint   and elliptic.
 \end{Proposition}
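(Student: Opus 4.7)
The plan is to proceed in close analogy with Proposition \ref{P18VIII23.1}. First I would decompose $\overadd{p}{\tilde\chi}\ofDC$ into its elementary factors: starting from the definition $\overadd{p}{\chi}\ofP = \prod_{j=1}^{p} \ck{-(j+3)}{\ofPnoP}$ and performing the tilde substitution term-by-term (i.e.\ $P \mapsto \zdivtwo\circ\, C$ and $2\tric \mapsto (n-2)\myGauss$), one obtains
\begin{equation*}
\overadd{p}{\tilde\chi}\ofDC = \prod_{j=1}^p \widetilde{\mathcal{K}}^{(-(j+3))},
\end{equation*}
where, after factoring out the non-zero scalar $-1/(7-n-2(j+3))$, each factor is of the form $\tilde{L}_{a_j,c_j}$ studied in Lemma \ref{L18VIII23.2}, with
\begin{equation*}
a_j = \frac{2(n-1)}{(3+k)(3-n+k)}\bigg|_{k=-(j+3)} = \frac{2(n-1)}{j(n+j)}
\end{equation*}
and some explicit $c_j \in \R$.

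Next I would verify that $a_j \notin \{-2,\,(1-n)/(n-2)\}$ for any integer $j \ge 1$ and $n \ge 4$: both excluded values are strictly negative, whereas $a_j > 0$ for every $j\ge 1$. Consequently Lemma \ref{L18VIII23.2} applies to each factor $\widetilde{\mathcal{K}}^{(-(j+3))}$, yielding that each factor is formally self-adjoint and elliptic. Ellipticity of the composition then follows at once, since the principal symbol of a composition is the composition of principal symbols, and the composition of invertible linear maps is invertible.

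For formal self-adjointness of the product one cannot simply appeal to self-adjointness of the individual factors; one must in addition show that the factors pairwise commute. Two operators of the form $a\,\zdivtwo\circ\, C + \TSzlap + c$ and $a'\,\zdivtwo\circ\, C + \TSzlap + c'$ commute on vector fields as soon as $[\zdivtwo\circ\, C,\TSzlap] = 0$ there. I would verify the latter commutation directly on Einstein backgrounds by a Ricci-identity computation, which is the ``tilde-analogue'' of the commutation $[P,\TSzlap+2\tric] = 0$ recalled in the main text above \eqref{20VI23.12}; equivalently, one may read it off the intertwining relation \eqref{11X23.11}, which expresses $\tilde{L}_{a,c+(n-2)\myGauss}\circ \zdivtwo = \zdivtwo \circ \operatorname{L}_{a,c}$ and transfers the known pairwise commutativity of the operators $\operatorname{L}_{a,c}$ on symmetric traceless $2$-tensors to pairwise commutativity of the $\tilde L_{a,c}$'s on the image of $\zdivtwo$.

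The only real obstacle is this last step, namely extending the commutation $[\zdivtwo\circ\, C,\TSzlap]=0$ from the subspace $\im\,\zdivtwo$ (where it is immediate from the intertwining) to all vector fields. On an Einstein manifold this is a routine consequence of the scalar–vector decomposition of Appendix \ref{app16X23.1}: it suffices to check the identity separately on gradients $\zspaceD\phi$ and on divergence-free vectors, in each case using the Einstein condition $\zR_{AB} = (n-2)\myGauss\,\ringh_{AB}$ and commuting $\TSzlap$ past a single covariant derivative. Once this is in hand, the product $\overadd{p}{\tilde\chi}\ofDC$ is a polynomial in the commuting, self-adjoint operators $\zdivtwo\circ\, C$ and $\TSzlap$, hence formally self-adjoint, completing the proof.
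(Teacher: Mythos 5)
Your proof is correct and follows essentially the same route as the paper: the paper likewise regards $\overadd{p}{\tilde\chi}\ofDC$ as a product of operators of the form $\tilde{L}_{a,c}$ and invokes Lemma~\ref{L18VIII23.2}, stating that the proposition ``readily follows''. The extra details you supply --- checking that each coefficient $a_j=2(n-1)/\big(j(n+j)\big)>0$ avoids the excluded values $-2$ and $(1-n)/(n-2)$, and verifying $[\zdivtwo\circ\, C,\TSzlap]=0$ on vector fields (via the $S$/$V$ splitting, or directly from the intertwining with $C$ and $\zdivtwo$ on an Einstein background) so that the product of the self-adjoint factors is itself formally self-adjoint --- are exactly the points the paper leaves implicit, and they are consistent with its commutation relations elsewhere.
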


\subsection{$\psi\ofP$}
\label{ss24IX23.1}

	Let us now consider the operators involved in gluing the fields  $\overadd{i}{q}_{AB}$. Recall that the relevant operators are
	\begin{align}
	\overadd{i}{\psi}\ofP  & = \prod_{j=2}^{i} \ck{\tfrac{n-5-2j}{2}}{\ofPnoP}\overadd{1}{\psi}\ofP \, \,,\\
	\overadd{1}{\psi}\ofP  &= - \bigg[\frac{4}{n+1} P -  \tric - \frac{1}{2}\TSzlap
	+\frac{(n-3)(n-1)\myGauss}{8}\bigg]\,,
	\end{align}
	for all $i\geq1$.
	
By Lemma~\ref{L18VIII23.1} the operator $\overadd{1}{\psi}\ofP$
is elliptic for   $n\ne 3,5$. However, we focus on $n>3$ in this work; the case $n=5$ will be addressed below.

Now, a straightforward calculation shows that the operators $\ck{-1}{\ofPnoP}$ and $\ck{-2}{\ofPnoP}$ fail to be elliptic only for odd $n>3$. All other $\ck{k}{\ofPnoP}$ operators appearing in the $\overadd{i}{\psi}$'s are elliptic. 
 \ptclater{finn, reword using your appendix?}
This implies in particular, that:

\index{psi@$\overadd{i}{\psi}\ofP$}
\index{psi@$\overadd{i}{\psi}\ofP$!ellipticity}
	\begin{Proposition}
		\label{P22VIII23.1}
		For convenient pairs $(n,k)$ the operators $\overadd{i}{\psi}\ofP$ are elliptic  and formally self-adjoint. 
	\end{Proposition}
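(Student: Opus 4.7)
The plan is to reduce the claim to a verification that each factor appearing in the product defining $\overadd{i}{\psi}\ofP$ is elliptic and formally self-adjoint, and that all factors pairwise commute, so that the composition inherits both properties.

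First I would observe that the recursion \eqref{6III23.w9} expresses
\begin{equation*}
\overadd{i}{\psi}\ofP \;=\; \Big(\prod_{j=2}^{i}\ck{\tfrac{n-5-2j}{2}}{\ofPnoP}\Big)\circ\overadd{1}{\psi}\ofP
\end{equation*}
as a composition of operators each of which, up to a nonzero multiplicative constant, is of the form $\operatorname{L}_{a,c}=aP+\TSzlap+2\tric+c$: this is immediate from the explicit definitions \eqref{6III23.w3} of $\ck{k}{\ofPnoP}$ and \eqref{6III23.w7a} of $\overadd{1}{\psi}\ofP$. Lemma~\ref{L18VIII23.1} then guarantees that each individual factor is elliptic and formally self-adjoint as long as the corresponding coefficient $a$ avoids the forbidden set $\{-2,\tfrac{1-n}{n-2}\}$.

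The next step is a short bookkeeping check that, under the convenient hypothesis, no forbidden value of $a$ is attained. For $\overadd{1}{\psi}\ofP$ one normalises to read off $a=-\tfrac{8}{n+1}$; solving $-\tfrac{8}{n+1}\in\{-2,\tfrac{1-n}{n-2}\}$ yields only $n=3$ and $n=5$, both of which are irrelevant here since $n\ge 4$ by our standing assumption and, for $n=5$, the convenient condition forces $k<\tfrac{n-3}{2}=1$, so $\overadd{1}{\psi}\ofP$ does not enter the range $1\le i\le k$ of interest. For the factors $\ck{\tfrac{n-5-2j}{2}}{\ofPnoP}$, an inspection of the principal-part coefficient (as recalled immediately after Proposition~\ref{P18VIII23.1}) shows that ellipticity can only fail when the argument equals $-1$ or $-2$, i.e. when $j=\tfrac{n-3}{2}$ or $j=\tfrac{n-1}{2}$. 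If $n$ is even these half-integers are never hit by the integer index $j$; if $n$ is odd the definition of ``convenient'' enforces $j\le i\le k<\tfrac{n-3}{2}$, so the problematic indices are again strictly out of range. Hence every factor in the product is elliptic and self-adjoint.

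The final step promotes these properties from the factors to the composition. The crucial input is commutativity: each factor is a polynomial in $P$ and in the combination $\TSzlap+2\tric$, and one has $[P,\TSzlap+2\tric]=0$ on symmetric trace-free tensors, as noted after \eqref{31III23.2} (see also Appendix~\ref{App19V23.1}). Consequently the pairwise commuting factors have a common principal symbol calculus: the principal symbol of the composition is the composition of the individual principal symbols, and a composition of fibrewise isomorphisms remains an isomorphism, giving ellipticity of $\overadd{i}{\psi}\ofP$. Formal self-adjointness of the composition likewise follows from the formal self-adjointness of each factor together with their pairwise commutativity, since for commuting self-adjoint $A,B$ one has $(AB)^\dagger=B^\dagger A^\dagger=BA=AB$.

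The only point requiring genuine care is the combinatorial verification that the forbidden coefficients $\{-2,\tfrac{1-n}{n-2}\}$ are never realised by any factor when $(n,k)$ is convenient; once this is in place, everything else is either a direct appeal to Lemma~\ref{L18VIII23.1} or a standard symbol-calculus argument. I do not anticipate any substantive obstacle beyond this case analysis.
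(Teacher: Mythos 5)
Your proof is correct and follows essentially the same route as the paper: write $\overadd{i}{\psi}\ofP$ as a product of operators of the form $\operatorname{L}_{a,c}=aP+\TSzlap+2\tric+c$, invoke Lemma~\ref{L18VIII23.1} for each factor after checking that the forbidden coefficients are never realised under the convenient hypothesis (for $\overadd{1}{\psi}\ofP$ only $n=3,5$ are excluded, and the non-elliptic factors $\ck{-1}{\ofPnoP}$, $\ck{-2}{\ofPnoP}$ would require $j=\tfrac{n-3}{2}$ or $j=\tfrac{n-1}{2}$, impossible for $n$ even or for $n$ odd with $k<\tfrac{n-3}{2}$), and then pass ellipticity and formal self-adjointness to the composition using $[P,\TSzlap+2\tric]=0$. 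This matches the paper's argument, which states these same facts more tersely and cites the commutativity established in Appendix~\ref{App19V23.1}.
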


Let us consider now  an inconvenient pair $(n,k)$, where we must be careful about which fields are involved in the gluing.
For this we make use of the York decomposition for $h_{AB}$ (cf.\ \eqref{7VI17.101PKYc}).

First, the restriction of $\operatorname{L}_{a,c}$ acting $\TTt$ is elliptic, since $Ph^{[\TTt]}=0$ and $\TSzlap$ is elliptic.

 Next, the non-elliptic $\mathcal{K}$ operators can be seen to appear (when $n$ is odd) in all the operators
 \ptcheck{19X}
\begin{equation}\label{17X23f.1}
\overadd{\frac{n-1}{2}+j}{\psi}\ofP = \overadd{j-1}{\chi}\ofP \red{\zck{-3}{\ofPnoP}}	\underbrace{\ck{-2}{\ofPnoP} \ck{-1}{\ofPnoP}\overadd{\frac{n-5}{2}}{\psi}\ofP}_{\equiv \overadd{\frac{n-1}{2}}{\psi}\ofP}
\end{equation}
with $j\geq1$. This expression follows from the recursion relations \eqref{28VIII23f.6}-\eqref{4IV23.2}.

We will use the underbraced term in \eqref{17X23f.1} to show the following:

\begin{Proposition} \label{P9X23.1}
We have, for $n\ge 5$ odd and $j\ge0$,
\begin{eqnarray}
    \label{16X.f1}
     & \overadd{\frac{n-3}{2}+j}{\psi}\ofP  \, h^{\red[S]} \equiv 0 \,, \quad  \overadd{\frac{n-1}{2}+j}{\psi}\ofP \, h^{\red[V]} \equiv 0 \,, 
     &
     \\
      &
\label{12VI.1}
\overadd{\frac{n-1}{2}+j}{\psi}\ofP  \circ\, C(W) \equiv 0 \,,
 &
\\
 &
\label{5X23f.1}
\overadd{\frac{n-1}{2}+j}{\psi}\ofP \circ P  \, (h) \equiv 0 \,.
 &
\end{eqnarray}
\end{Proposition}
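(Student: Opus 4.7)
I plan to prove all three identities as consequences of two ``coefficient miracles'' for the two non-elliptic building blocks $\ck{-1}{\ofPnoP}$ and $\ck{-2}{\ofPnoP}$, combined with the factorisation \eqref{17X23f.1} and the invariance of the $S$, $V$, $\TTt$ summands of \eqref{14XII23.1} under every factor appearing in $\overadd{i}{\psi}\ofP$.

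First I would record three routine facts. (i) Every operator of the form $\operatorname{L}_{a,c} = aP + \TSzlap + 2\tric + c$ preserves the $S$, $V$, $\TTt$ summands: preservation of $\TTt$ follows from $P|_{\TTt} = 0$ together with \eqref{11X23.11} (which gives that $\TSzlap + 2\tric$ maps $\TTt$ to itself on an Einstein manifold); preservation of $S$ and $V$ is a consequence of Proposition~\ref{P21X23.1} together with the commutation \eqref{11X23.11}. Consequently each $\ck{k}{\ofPnoP}$, each $\overadd{i}{\psi}\ofP$, and each $\overadd{i}{\chi}\ofP$ preserves the three summands. (ii) On a vector-type tensor $h^{[V]} = C(V)$ with $\zdivone V = 0$, the identity $\zdivtwo C(V) = \tfrac12(\TSzlap + (n-2)\myGauss)V$ (valid on Einstein backgrounds) gives $P(h^{[V]}) = \tfrac12(\TSzlap + (n-2)\myGauss)\,h^{[V]}$, so $P|_V$ reduces to a polynomial in $\TSzlap$. (iii) On a scalar-type tensor $h^{[S]} = \TS[\zspaceD\zspaceD\phi]$, commuting two derivatives against the Einstein curvature produces an analogous closed expression for $P(h^{[S]})$ as a polynomial in $\TSzlap$ (with $\myGauss$-dependent coefficients) applied back to $h^{[S]}$.

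Second, I would verify directly from the defining formula \eqref{6III23.w3} the two coefficient miracles:
\begin{enumerate}
\item $\ck{-1}{\ofPnoP}\, h^{[S]} = 0$ for every scalar $h^{[S]}$,
\item $\ck{-2}{\ofPnoP}\, h^{[V]} = 0$ for every vector $h^{[V]}$.
\end{enumerate}
Both are purely algebraic: substituting $k=-1$ (respectively $k=-2$) into \eqref{6III23.w3} gives a specific combination of $P$, $\tric$, $\TSzlap$ and a constant times $\myGauss$; using (iii), respectively (ii), to rewrite $P$ as a polynomial in $\TSzlap$ on the relevant summand and evaluating $\tric$ on that summand, all surviving terms cancel identically. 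This is precisely the reason the recursion \eqref{28VIII23f.6} for $\overadd{i}{\psi}\ofP$ remains well-defined even though $\ck{-1}{}$ and $\ck{-2}{}$ fail to be elliptic for odd $n$: their kernels contain exactly the obstructing scalar and vector directions.

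Third, I would combine these with the iterated form of \eqref{17X23f.1}: for every $j\ge 0$,
\[
\overadd{\frac{n-3}{2}+j}{\psi}\ofP = \ck{-(j+1)}{\ofPnoP}\cdots\ck{-2}{\ofPnoP}\,\ck{-1}{\ofPnoP}\,\overadd{\frac{n-5}{2}}{\psi}\ofP,
\]
with one additional factor on the left (and $\zck{-3}{\ofPnoP}$ inserted as appropriate) for $\overadd{\frac{n-1}{2}+j}{\psi}\ofP$. By (i), applying $\overadd{\frac{n-5}{2}}{\psi}\ofP$ to $h^{[S]}$ keeps the result in $S$; by the first miracle the subsequent $\ck{-1}{\ofPnoP}$ annihilates it, yielding the first identity of \eqref{16X.f1}. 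By (i) again, $\overadd{\frac{n-3}{2}}{\psi}\ofP$ preserves $V$; by the second miracle the next factor $\ck{-2}{\ofPnoP}$ annihilates its image, yielding the second identity of \eqref{16X.f1}. The identity \eqref{12VI.1} then follows from Proposition~\ref{P30X22.2}(5), which gives $C(W)^{[\TTt]} = 0$, so $C(W) = C(W)^{[S]} + C(W)^{[V]}$ and both summands are annihilated. Finally \eqref{5X23f.1} follows from $P = C\circ\zdivtwo$, which places $P(h)$ in the image of $C$ and reduces \eqref{5X23f.1} to \eqref{12VI.1}.

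The main obstacle is step two, the two coefficient miracles. They are miraculous only in appearance: the singular prefactors $1/(7-n+2k)$ in the formula for $\ck{k}{}$ at $k\in\{-1,-2\}$ for odd $n$ must be compensated on the scalar, respectively vector, subspace for the product $\overadd{i}{\psi}\ofP$ to remain well-defined, which forces the vanishing of the numerator on that subspace. Confirming the exact cancellation requires careful bookkeeping of factors of $(n-1)$, $(n-2)$, $(n-4-k)(2+k)$ and of the specific action of $\tric$ on $S$ and $V$ on Einstein manifolds; I expect the cleanest route is a direct substitution in \eqref{6III23.w3} and \eqref{6III23.w7a}, easily checked by computer algebra.
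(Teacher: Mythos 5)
Your proposal follows essentially the same route as the paper's own proof (Appendix~\ref{ss24IX23.1}): compute the action of $P$ on the scalar- and vector-type summands of \eqref{14XII23.1}, observe that the factor $\ck{-1}{\ofPnoP}$ entering $\overadd{\frac{n-3}{2}}{\psi}\ofP$ annihilates $h^{[S]}$ while the factor $\ck{-2}{\ofPnoP}$ entering $\overadd{\frac{n-1}{2}}{\psi}\ofP$ annihilates $h^{[V]}$, propagate this through the product (commutativity of all the factors, \eqref{20VI23.12}, already suffices here, so the summand-invariance you attribute to Proposition~\ref{P21X23.1} — which is actually a statement about vector fields in $\CKVp$ — is a convenience rather than a necessity), and then deduce \eqref{12VI.1} from $C(W)=C(W)^{[S]}+C(W)^{[V]}$ and \eqref{5X23f.1} from $P=C\circ\zdivtwo$ together with $[P,\overadd{i}{\psi}\ofP]=0$. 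So the architecture is exactly the paper's.

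Two points in your write-up would fail as stated and need repair. First, your formula (ii) is incorrect: pushing $C$ past the Laplacian generates curvature terms, and on a general Einstein $\ringh$ the correct identities are $P(h^{[V]})=\tfrac12(\TSzlap+2\tric)\,h^{[V]}$ and $P(h^{[S]})=\tfrac{n-2}{n-1}\big(\TSzlap+2\tric-(n-3)\myGauss\big)\,h^{[S]}$, cf.\ \eqref{28VIII23.f4}--\eqref{28VIII23.f5}; in particular $P$ restricted to $S$ or to $V$ is a polynomial in the combination $\TSzlap+2\tric$, not in $\TSzlap$ alone ($\tric$ is not a multiple of the identity off space forms, and even on the round sphere your coefficient $(n-2)\myGauss$ is not the right one). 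Your two ``miracles'' do survive the correction, precisely because each $\ck{k}{\ofPnoP}$ is itself built from $P$ and $\TSzlap+2\tric$: at $k=-2$ one finds $\ck{-2}{\ofPnoP}\propto P-\tfrac12(\TSzlap+2\tric)$, which kills $h^{[V]}$, and the analogous substitution at $k=-1$ kills $h^{[S]}$. Second, the boundary case $n=5$: your factorisation begins with $\ck{-1}{\ofPnoP}$, but this operator is undefined at $n=5$ (its prefactor $1/(7-n+2k)$ is singular when $k=\tfrac{n-7}{2}=-1$), and since $\tfrac{n-3}{2}=1$ there is no recursion step available there; the first identity of \eqref{16X.f1} must then be verified directly on the seed operator $\overadd{1}{\psi}\ofP$ of \eqref{6III23.w7a}, which is how the paper handles it. With these two repairs — both of which the direct substitution you plan would surface — your argument coincides with the paper's.
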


\proof
%
The starting point is to consider the operator $P$ acting on  $\TTtp$-tensors. That is,
\begin{align}
P C(W)_{AB} &= \frac{1}{2} \TS[\zspaceD_A\zspaceD^D(\zspaceD_D W_B  + \zspaceD_B W_D - \frac{2}{n-1} \ringh_{BD} \zspaceD^C W_C)]
\nonumber
\\
&=
\frac{1}{2} \TS[\zspaceD_A\TSzlap W_B  +  \zspaceD_A\zspaceD^D \zspaceD_B W_D
- \frac{2}{n-1} \zspaceD_A\zspaceD_B\zspaceD^C W_C]
\nonumber
\\
&=
\frac{1}{2} \TS[\zspaceD_A\TSzlap W_B
+ \frac{n-3}{n-1} \zspaceD_A\zspaceD_B\zspaceD^C W_C
-\zspaceD_A \zR^E{}_D{}^D{}_B W_E]
\nonumber
\\
&=
\frac{1}{2} \TS[\TSzlap \zspaceD_A W_B- (n-2)\myGauss \zspaceD_A W_B - 2\zR^D{}_B{}_A{}^C \zspaceD_C W_D
\nonumber
\\
& \qquad  \quad
+ \frac{n-3}{n-1} \zspaceD_A\zspaceD_B\zspaceD^C W_C
+ (n-2) \myGauss \zspaceD_A W_B]
\nonumber
\\
&=
\frac{1}{2} \TS
\big[\TSzlap \zspaceD_A W_B]
+  \tric(\zspaceD W )_{AB}
+ \frac{n-3}{2(n-1)} \TS[\zspaceD_A\zspaceD_B\zspaceD^C W_C
\big]
 \,;
\label{26III23.2}
\end{align}
recall that  $\tric$ has been defined in \eqref{16V22.1b}.
	Thus
	\begin{equation}
	\label{22VIII23.f1}
	P C(W)_{AB}
	=
	\left(\frac12 \TSzlap + \tric\right) C(W)_{AB}
	+ \frac{n-3}{2(n-1)} \TS[\zspaceD_A\zspaceD_B\zspaceD^CW_C]\,.
	\end{equation}

	Next we make use of the scalar-vector-tensor decomposition \eqref{10VI23.3} to write this as
	\begin{eqnarray}
	P C(W)_{AB}
 & = &
  P(h^{\red[S]} +h^{\red[V]})
 \nonumber
\\
 & 	=
	&
\left(\frac12 \TSzlap + \tric\right)\left(\TS[\zspaceD_{A}\zspaceD_{B}\varphi]+\zspaceD_{(A}V_{B)}\right)
	+ \frac{n-3}{2(n-1)} \TS[\zspaceD_A\zspaceD_B\TSzlap\varphi]\,,
\nonumber
\\
&&
	\end{eqnarray}	
	where we have also used \eqref{10VI23.1}. This means that we can immediately write for the vector part
	\begin{equation}
		P(h^{\red[V]})_{AB}=\frac12(\TSzlap +2\tric)h^{\red[V]}_{AB}\,.
	\end{equation}
	Second, focusing on the scalar part. We commute the derivatives (using \eqref{19X23.1} in the second line, \eqref{28VIII23.f3} in the third line, and \eqref{16V22.1bxc} in the fourth one)
\ptcheck{19X}
	\begin{align}
		P(h^{\red[S]})
		&=\left(\frac12 \TSzlap + \tric\right)
 \TS[\zspaceD_{A}\zspaceD_{B}\varphi]
		+ \frac{n-3}{2(n-1)} \TS[\zspaceD_A\zspaceD_B\TSzlap\varphi]
		\nonumber\\
		&=\left(\frac12 \TSzlap + \tric\right)h^{\red[S]}+ \frac{n-3}{2(n-1)} \TS[\zspaceD_A(\TSzlap-(n-2)\myGauss)\zspaceD_B\varphi]
			\nonumber\\
		&=\left(\frac12 \TSzlap + \tric\right)h^{\red[S]}+ \frac{n-3}{2(n-1)} \TS[(\TSzlap+2[\tric -(n-2)\epsilon])\zspaceD_A\zspaceD_B\varphi]
			\nonumber\\
		&=\left(\frac12 \TSzlap + \tric\right)h^{\red[S]}+\frac{n-3}{2(n-1)} (\TSzlap+2[\tric -(n-2)\epsilon])h^{\red[S]}
		\nonumber\\
		&=\frac{n-2}{n-1}\left(\TSzlap +2\tric- (n-3)\myGauss\right)h^{\red[S]}_{AB}\,.
	\end{align}
	Putting these results together we have
	\begin{align}
	\label{28VIII23.f4}
	P(h^{\red[S]})_{AB}&=\frac{n-2}{n-1}\left(\TSzlap +2\tric- (n-3)\myGauss\right)h^{\red[S]}_{AB}\,,\\
	\label{28VIII23.f5}
	P(h^{\red[V]})_{AB}&=\frac12(\TSzlap +2\tric)h^{\red[V]}_{AB}\,.
	\end{align}
We are ready now to prove the first part of  \eqref{16X.f1}:
for $j\geq0$
\begin{equation}
\overadd{\frac{n-3}{2}+j}{\psi}\ofP  \, h^{\red[S]}=0\,.
\end{equation}
This follows from substituting \eqref{28VIII23.f4} into the definition of $\overadd{\frac{n-3}{2}}{\psi}\ofP  \, h^{\red[S]}$. To see this in the case $n=5$, using \eqref{5III23.4a}  we find
\ptcheck{4IX23}
\begin{align}\label{29VIII23.f3}
\overadd{\frac{n-3}{2}}{\psi}\ofP  \, h^{\red[S]}&=\overadd{1}{\psi}\ofP  \, h^{\red[S]}=\frac12\left(-\frac{4}{3}P +( \TSzlap +2\tric) -2\myGauss\right) h^{\red[S]}\,\nonumber
\\
&=\frac12\left( -\frac{4}{3}
\times \frac{5-2}{5-1}(\TSzlap +2\tric -(5-3)\myGauss) +( \TSzlap +2\tric) -2\myGauss \right) h^{\red[S]}\nonumber\\
&=0\,.
\end{align}
Next for $n> 5$,
from \eqref{28VIII23f.6} we have
\begin{equation}
\overadd{\frac{n-3}{2}}{\psi}\ofP =\overadd{\frac{n-5}{2}}{\psi}\ofP \, \ck{-1}{\ofPnoP}\,,
\end{equation}
and from \eqref{3III23.5a}
\ptcheck{4IX23}
\begin{align}\label{29VIII23.f4}
\ck{-1}{\ofPnoP}h^{\red[S]}
=&
\frac{1}{n-5}\left(-\frac{n-1}{n-2} P +\TSzlap +2\tric -(n-3)\epsilon\right)h^{\red[S]}\nonumber\\
=&
\frac{1}{n-5}\Bigg[-\left(\frac{n-1}{n-2}\right)\left(\frac{n-2}{n-1}
\big(
\TSzlap +2\tric- (n-3)\myGauss
\big)
\right)
\nonumber\\
&
+\TSzlap +2\tric -(n-3)\epsilon\Bigg]h^{\red[S]}=0\,.
\end{align}

A similar calculation using \eqref{28VIII23.f5} shows that this does \emph{not} hold when acting on $h^{\red[V]}$. Instead we have
\ptcheck{4IX23}
\begin{equation}\label{16X23.f1}
\ck{-1}{\ofPnoP} h^{\red[V]}=
\begin{cases}
\frac{1}{6}\left( \TSzlap +2
    (\tric -3\epsilon) \right)h^{\red[V]}\quad \text{if } n=5\,,
     \\
        \frac{(n-3)}{2(n-2)(n-5)}\left( \TSzlap +2
         (\tric -(n-2)\epsilon)
           \right)h^{\red[V]}\quad \text{if } n>5\,.
\end{cases}
\end{equation}
It follows that the principal symbol of the restriction of $\ck{-1}{\ofPnoP}$ on $\{h:\, h=h^{[V]}\}$ comes only from $\TSzlap$ and is thus elliptic.
	
We move on now to the operator
	\begin{align}
	\overadd{\frac{n-1}{2}}{\psi}\ofP  =
	\overadd{\frac{n-3}{2}}{\psi}\ofP \, \ck{-2}{\ofPnoP}\,.
	\end{align}
	Using the definitions of the operators  (see \eqref{3III23.5a}) and substituting for $Ph^{\red[V]}$ with \eqref{28VIII23.f5}, we have
	\ptcheck{4IX23}
	\begin{equation}\label{29VIII23.f5}
	\ck{-2}{\ofPnoP} h^{\red[V]}
	\propto
	\left(P-\frac12(\TSzlap+2\tric)\right)h^{\red[V]}=0\,,
	\end{equation}
	and so we obtain the second   part of  \eqref{16X.f1}:  for $j\geq0$,
	\begin{equation}
	\overadd{\frac{n-1}{2}+j}{\psi}\ofP  \, h^{\red[V]}=0\,.
	\end{equation}
Now, writing $C(W)$ as $h^{[S]} + h^{[V]}$, \eqref{12VI.1} follows from  \eqref{16X.f1}: 
	\ptcheck{4IX23}
	\begin{equation}
	\overadd{\frac{n-1}{2}+j}{\psi}\ofP  \big(C(W) \big)
 =0\,.
 \label{6XI23.1}
	\end{equation}
Finally, to obtain \eqref{5X23f.1} 
we write
\begin{eqnarray}
\overadd{\frac{n-1}{2}+j}{\psi}\ofP \circ P  \, (h)   &=& \overadd{\frac{n-1}{2}+j}{\psi}\ofP \circ P  
 \, \big(h^{[\TTt]} +C(W)\big)
  \nonumber
 \\
    &=&    \overadd{\frac{n-1}{2}+j}{\psi}\ofP \circ P
 \, \big( C(W)\big)
  \nonumber
=P \circ   \overadd{\frac{n-1}{2}+j}{\psi}\ofP
 \, \big( C(W)\big)
 \\ 
 & = 
 & 0
\end{eqnarray}
by \eqref{6XI23.1}, 
since $\overadd{i}{\psi}\ofP$ commutes with $P$ 
(see \peqref{20VI23.12} below) 
 and $P(h^{[\TTt]})=0$.
 
\qedskip

Summarising, we have the following:
\index{psi@$\overadd{i}{\psi}\ofP$}
\index{psi@$\overadd{i}{\psi}\ofP$!ellipticity}
	\begin{Proposition}
		\label{P17X23.1}
		For  inconvenient pairs $(n,k)$  the operators $\overadd{i}{\psi}\ofP$
		\begin{enumerate}
			\item acting on $h^{\red[S]}$ are elliptic for $i<\frac{n-3}{2}$ and vanish for $i\geq \frac{n-3}{2}$;
			
			\item acting on $h^{\red[V]}$:
				 \begin{enumerate}
						\item  are elliptic for $i\leq \frac{n-3}{2}$,
						\item  vanish when $i\geq \frac{n-1}{2}$;
				\end{enumerate}
			
			\item acting on $h^{[\TTt]}$ are elliptic.
			
		\end{enumerate}
	\end{Proposition}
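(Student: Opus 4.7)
The plan is to exploit the fact that every operator of the form $\operatorname{L}_{a,c} = aP + \TSzlap + 2\tric + c$ preserves the scalar-vector-tensor decomposition \eqref{10VI23.3} of $h = h^{[S]} + h^{[V]} + h^{[\TTt]}$, so that each factor $\ck{k}{\ofPnoP}$ appearing in the product \eqref{18VIII23.w1} also respects this splitting. That the splitting is preserved follows from the commutation relation \eqref{11X23.12} together with the explicit identities \eqref{28VIII23.f4}-\eqref{28VIII23.f5} for the action of $P$ on $h^{[S]}$ and $h^{[V]}$ (with $P$ annihilating $h^{[\TTt]}$). Consequently $\overadd{i}{\psi}\ofP$ maps each of $S$, $V$, $\TTt$ into itself, and we may study its action on each summand separately.

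Given this, I would establish the vanishing statements in 1 (for $i \geq (n-3)/2$) and 2(b) by direct appeal to Proposition \ref{P9X23.1}: the identities \eqref{16X.f1} assert exactly that $\overadd{\frac{n-3}{2}+j}{\psi}\ofP \, h^{[S]} \equiv 0$ and $\overadd{\frac{n-1}{2}+j}{\psi}\ofP \, h^{[V]} \equiv 0$ for every $j \geq 0$. For the ellipticity statements, I would substitute \eqref{28VIII23.f4}-\eqref{28VIII23.f5} into $\operatorname{L}_{a,c}$ and inspect the principal symbol on each subspace. On $\TTt$, $\operatorname{L}_{a,c}$ reduces to $\TSzlap + 2\tric + c$, manifestly elliptic, so claim 3 is immediate. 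On $S$, the principal part becomes $\bigl(a\tfrac{n-2}{n-1}+1\bigr)\TSzlap$, elliptic precisely when $a \neq -(n-1)/(n-2)$. On $V$, the principal part becomes $\bigl(\tfrac{a}{2}+1\bigr)\TSzlap$, elliptic precisely when $a \neq -2$.

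It remains to inspect which values of $a$ occur in the factors $\overadd{1}{\psi}\ofP$ (with $a = -8/(n+1)$, from \eqref{6III23.w7a}) and $\ck{k}{\ofPnoP}$ (with $a = 2(n-1)/[(3+k)(3-n+k)]$, from \eqref{3III23.5a}) over the ranges of $i$ asserted. For 2(a), direct calculation shows that the value $a=-2$ is attained only by $\ck{-2}{\ofPnoP}$, which does not enter the product until $j = (n-1)/2$; hence for $i \leq (n-3)/2$ every factor is elliptic on $V$. For 1, the forbidden value $a = -(n-1)/(n-2)$ is first realised by the factor $\ck{-1}{\ofPnoP}$, which enters at $j = (n-3)/2$, precisely where the vanishing on $S$ kicks in; thus for $i < (n-3)/2$ every factor is elliptic on $S$. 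The one subtlety is the anomaly at $n=5$, where $\overadd{1}{\psi}\ofP$ itself satisfies $a = -(n-1)/(n-2)$, but there $(n-3)/2 = 1$, so the ellipticity range $i<(n-3)/2$ contains only $i=0$ and the statement is vacuous. These case-by-case evaluations are already recorded, in slightly different packaging, in \eqref{29VIII23.f3}-\eqref{29VIII23.f5} from the proof of Proposition \ref{P9X23.1}.

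The main obstacle I anticipate is the careful bookkeeping of which factors $\ck{k}{\ofPnoP}$ appear for each $i$, together with ensuring that the borderline indices $k = -1, -2$ and the exceptional dimension $n=5$ are handled consistently. Conceptually, however, once the scalar-vector-tensor splitting is known to commute with $\overadd{i}{\psi}\ofP$, the proposition reduces to a direct application of the elementary principal-symbol criterion established in Lemma \ref{L18VIII23.1}, combined with Proposition \ref{P9X23.1} for the kernel statements.
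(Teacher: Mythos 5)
Your proposal is correct and follows essentially the same route as the paper: the paper's proof of this proposition is precisely a summary of the preceding analysis in Appendix~\ref{ss24IX23.1}, resting on the identities \eqref{28VIII23.f4}--\eqref{28VIII23.f5} (the restrictions of $P$ to $h^{[S]}$ and $h^{[V]}$), the vanishing statements of Proposition~\ref{P9X23.1}, and the factor-by-factor check that only $\ck{-1}{\ofPnoP}$ (on $S$, entering at $j=\tfrac{n-3}{2}$, with the $n=5$ case absorbed into $\overadd{1}{\psi}$) and $\ck{-2}{\ofPnoP}$ (on $V$, entering at $j=\tfrac{n-1}{2}$) degenerate, exactly as you argue via the forbidden values of $a$ in $\operatorname{L}_{a,c}$. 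The only cosmetic caveats are that $a=-2$ is also attained at $k=n-4$ (a factor that never occurs in the products) and that for $i\geq\tfrac{n+1}{2}$ the factor $\zck{-3}{\ofPnoP}$ replaces $\ck{-3}{\ofPnoP}$, but it is again of the form $\operatorname{L}_{a,c}$ so your argument for the $\TTt$ part covers it.
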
 

 We end this section with a remark  on the kernel of the operator $\ck{-1}{\ofPnoP}$. First,  $\ck{-1}{\ofPnoP}$ vanishes identically  on $S$: $\ck{-1}{\ofPnoP}h^{[S]}=0$.
 Next, from  \eqref{16X23.f1} and the commutation relation \eqref{28VIII23.f3} we have,
for vector fields with vanishing divergence,%
\index{kernel!$\ck{-1}{\ofPnoP}$}
\begin{align}
		\ck{-1}{\ofPnoP} h^{\red[V]}\propto
   & \left( \TSzlap +2
   (\tric -(n-2)\epsilon)
     \right)\zspaceD_{(A}V_{B)}
      =\zspaceD_{(A}\left( \TSzlap -(n-2)\epsilon \right)V_{B)}\,.
\end{align}
Now, the operator inside of $\zspaceD_{A}$ on the right-hand side of the last equality is just the (negative of) the Hodge Laplacian.
Therefore, any $h^{\red[V]}_{A B}=\zspaceD_{(A} V_{B)}$ constructed from a harmonic vector lies in the kernel of  $\ck{-1}{\ofPnoP}$. Moreover, expanding any $V_A$, which generates an $h^{\red[V]}_{A B}$ lying in the kernel of $\ck{-1}{\ofPnoP}$, in an orthonormal eigenbasis of the Hodge Laplacian implies either $V$ is a Killing vector (in which case $h^{\red[V]}_{A B}=0$) or $V$ is harmonic. Together, this implies
\begin{equation}
\ker \left( \ck{-1}{\ofPnoP}\big|_{h^{\red[V]}} \right)=\left\{h^{\red[V]}=\zspaceD_{(A} V_{B)} \,\,: \,\, \zspaceD_{[A}V_{B]}=0=\zspaceD^A V_A\right\}
\,.
\end{equation}

\subsection{The gauge operators $\hLop_n$ and $\Lop$}
\label{ss4IX23.2}
In this section we analyse the gauge operators involved in the gauge transformations of the radial charges associated to $\overadd{\frac{n-3}{2}+j}{q}_{AB}$, $j\geq 0$, with $n$ odd. In particular, we wish to show that the gauge-invariant charges in the $m=0=\alpha$ case (cf.\ Table \ref{T11III23.1}) associated to these fields are smooth and live in a finite dimensional space.

We  will make use of  the following commutation relation:
\begin{align}
     \zspaceD^A \big(a
      \, \zdivtwo \circ\, C + \TSzlap\big) \xi_A
    =
    \Big(
    (\frac{a(n-2)}{n-1}+1)\TSzlap + (a+1)(n-2)\myGauss
    \Big) \zspaceD^A\xi_A\,.
    \label{10XI23.1}
\end{align}

We begin the analysis with $j=0$, with the relevant operator being $\hLop_n$ of
\eqref{24IV23.1xs}.
We wish to show:

\begin{proposition}
 \label{P16X23.1}$\ker (\hLop_n^\dagger) \cap \ker(\overadd{\frac{n-3}{2}}{\psi}\ofP)$ is smooth and finite dimensional.
 \end{proposition}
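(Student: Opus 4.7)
The plan is to analyze $K := \ker(\hLop_n^\dagger) \cap \ker(\overadd{\frac{n-3}{2}}{\psi}\ofP)$ using the $L^2$-orthogonal scalar--vector--tensor splitting $h = h^{[S]} + h^{[V]} + h^{[\TTt]}$ available on the compact Einstein manifold $\secN$. Along this splitting the two defining constraints of $K$ decouple cleanly, and each component is governed by a (suitably restricted) elliptic operator.

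First I would verify that $\hLop_n$ maps scalar functions into $S$. From the explicit formula \eqref{24IV23.1xs},
\[
\hLop_n(\phi) = -\frac{n-4}{(n-5)(n-2)^2}\, \overadd{\frac{n-5}{2}}{\psi}\ofP \circ \big((n-1)P - 2(n-2)^2 \myGauss\big) \circ \TS[\zspaceD_A\zspaceD_B\phi],
\]
so it suffices to check that the operators $P$, $\TSzlap$ and $\tric$ send $S$ into $S$ on an Einstein manifold. A direct computation using the definition of $P$ in \eqref{16V22.1b}, the commutation relation \eqref{28VIII23.f3}, and the Ricci identity in the Einstein case yields identities of the form $P(\TS[\zspaceD\zspaceD\varphi]) = \TS[\zspaceD\zspaceD\tilde\varphi]$, with $\tilde\varphi$ a concrete second-order scalar operator applied to $\varphi$, and similarly for $\TSzlap$ and $\tric$. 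Consequently $\hLop_n(\phi) \in S$, and taking $L^2$-adjoints together with the orthogonality $S \perp (V \oplus \TTt)$ shows that $\hLop_n^\dagger$ annihilates $V \oplus \TTt$. The first defining constraint of $K$ therefore reduces to $\hLop_n^\dagger h^{[S]} = 0$.

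Second, by Proposition~\ref{P17X23.1} the operator $\overadd{\frac{n-3}{2}}{\psi}\ofP$ vanishes identically on $S$ and is elliptic on $V$ and on $\TTt$. The second defining constraint of $K$ therefore places no condition at all on $h^{[S]}$ while forcing $h^{[V]}$ and $h^{[\TTt]}$ to lie in the kernels of these elliptic restrictions; standard elliptic theory on the compact boundaryless manifold $\secN$ then gives finite-dimensionality and smoothness of these two components.

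It remains to control $h^{[S]}$. Parameterize $h^{[S]} = \TS[\zspaceD_A\zspaceD_B\varphi]$, with $\varphi$ orthogonal to the (finite-dimensional) kernel of the map $\varphi \mapsto \TS[\zspaceD\zspaceD\varphi]$. Pairing $\hLop_n^\dagger h^{[S]} = 0$ with a test scalar $\phi$ gives $\int_{\secN} \hLop_n(\phi) \cdot h^{[S]}(\varphi)\, d\mu = 0$; integrating by parts produces a formally self-adjoint scalar differential equation $\mathcal{E}\varphi = 0$ whose principal symbol agrees, up to non-vanishing factors, with that of $\hLop_n^\dagger \circ \hLop_n$. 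The latter is elliptic, as established in the proof of Proposition~\ref{P23XI23.1} via the factorization \eqref{23XI23.2}. Elliptic regularity on $\secN$ then yields the required finite-dimensionality and smoothness of the admissible $\varphi$'s, hence of $h^{[S]}$. The main technical obstacle I anticipate is precisely this last step: one must verify carefully that the composition with the parameterization $\varphi \mapsto \TS[\zspaceD\zspaceD\varphi]$ does not degrade the principal symbol of $\mathcal{E}$, i.e.\ that the ellipticity of $\hLop_n^\dagger \circ \hLop_n$ transfers intact after quotienting out the finite-dimensional kernel of the parameterization.
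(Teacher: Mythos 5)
Your decomposition is sound and in essence follows the same route as the paper: by Proposition~\ref{P17X23.1} the $\psi$-constraint is vacuous on $S$ and elliptic on $V$ and $\TTt$, so everything reduces to the $S$-part, which you parameterize as $h^{[S]}=\TS[\zspaceD_A\zspaceD_B\varphi]$; the surviving constraint is then the scalar equation $\big(\hLop_n^\dagger\circ C\circ\zspaceD\big)\varphi=\big(\zdivone\circ\zdivtwo\circ\hLop_n\big)\varphi=0$ (your observation that $\hLop_n$ maps scalars into $S$, hence $V\oplus\TTt\subset\ker\hLop_n^\dagger$, is correct and a nice complement, though not strictly needed). The gap is exactly the step you flag at the end: you never establish ellipticity of this scalar operator, and neither of your two substitutes works. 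First, the claim that its principal symbol ``agrees up to non-vanishing factors'' with that of $\hLop_n^\dagger\circ\hLop_n$ is false as stated: the former has order $n+1$, the latter order $2(n-1)$, which differ for all odd $n\ge5$. What is really needed is that the principal symbol of $\hLop_n$ applied to a scalar is a \emph{nonzero multiple of} $\TS[k_Ak_B]$, i.e.\ of the symbol of $C\circ\zspaceD$; only then does nonvanishing of $\sigma_{\hLop_n}(k)$ transfer to nonvanishing of the pairing defining $\sigma$ of $\hLop_n^\dagger\circ C\circ\zspaceD$, and that proportionality is precisely the computation you omit. Second, your source for ellipticity of $\hLop_n^\dagger\circ\hLop_n$, the factorization \eqref{23XI23.2} in the proof of Proposition~\ref{P23XI23.1}, reduces that question to ellipticity of $\zdivone\circ\zdivtwo\circ\hLop_n$ — which the paper proves only inside the proof of Proposition~\ref{P16X23.1}, the statement at hand. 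So that citation is circular, and the essential analytic content of the proposition is missing from your argument.

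The missing piece is short and is the heart of the paper's proof: for $k\ne0$ the symbol of the sixth-order scalar operator $\zdivone\circ\zdivtwo\circ P\circ C\circ\zspaceD$ is $k^Ak^B\,\TS\big[k_Ak^C\,\TS[k_Ck_B\,\xi^u]\big]=\tfrac{(n-2)^2}{(n-1)^2}|k|^6\,\xi^u$, hence nonvanishing; and the commutation relation \eqref{11X23.11} (together with \eqref{10XI23.1}) lets one move $\zdivone\circ\zdivtwo$ through $\overadd{\frac{n-5}{2}}{\psi}\ofP$, which becomes an elliptic scalar operator in $\TSzlap$ multiplying the above. This gives ellipticity of $\zdivone\circ\zdivtwo\circ\hLop_n=\hLop_n^\dagger\circ C\circ\zspaceD$ directly, after which your conclusion (finite-dimensional smooth kernel on the $S$-part, hence for the full intersection) follows exactly as you describe. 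With that computation inserted, your write-up would coincide with the paper's proof up to reorganization.
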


\proof
From the expression \eqref{24IV23.1xs} of $\hLop_n$, the principal symbol of $\hLop_n$ arises from the operator
 $$  c_n \overadd{(n-5)/2}{\psi}\ofP \, P\circ\, C \circ \zspaceD
 $$
acting on \red{functions},
 with a constant $c_n$ depending upon the dimension $n$.

 We first analyse the nature of the operator
 \begin{equation}
     \zdivone\circ\zdivtwo\circ\overadd{(n-5)/2}{\psi}\ofP \circ P\circ\, C \circ \zspaceD
     \,.
     \label{10XI23.2}
 \end{equation}
For this let us consider the kernel of the symbol of the self-adjoint operator $\zdivone\circ\zdivtwo\circ P\circ\, C \circ \zspaceD $: for $k\neq 0$, we have
 \ptcheck{16X23}
\begin{align}
    0 = \sigma(k)(\xi^u) &:=  k^A k^B \TS
     \big[k_A k^C\TS[k_C k_B \xi^u]
     \big]
    \nonumber
    \\
    & =
    \frac{ (n-2)^2}{(n-1)^2}|k|^6 \xi^u  \iff \xi^u = 0\,.
\end{align}
Thus, for $n\neq 1,2$, the operator
$\zdivone\circ\zdivtwo\circ P\circ\, C \circ \zspaceD$
is elliptic.

Next, we write
\index{psi@$\overadd{(n-5)/2}{\hat\psi}$}
\begin{align}
    &\zdivone\circ\zdivtwo\circ\overadd{(n-5)/2}{\psi}\ofP \,
    \circ
    P\circ\, C \circ \zspaceD
    \nonumber
    \\
    &\qquad\qquad\qquad =
    \underbrace{
    \zdivone\circ\overadd{(n-5)/2}{\tilde\psi}\ofDC
    }_{=: \overadd{(n-5)/2}{\hat\psi}(\TSzlap)
    \, \circ
    \,
    \zdivone}
    \circ
    \,
    \zdivtwo
     \circ P\circ\, C \circ \zspaceD
    \nonumber
    \\
    & \qquad\qquad\qquad =
    \overadd{(n-5)/2}{\hat\psi}(\TSzlap) \circ \zdivone\circ\zdivtwo  \circ P\circ\, C \circ \zspaceD
    \,,
\end{align}
where the first equality follows from the commutation relation \eqref{11X23.11}. The second equality makes use of fact that the operator $\overadd{p}{\tilde\psi}$ is a product of operators of the form $\tilde{\operatorname{L}}_{a,c}$ of Lemma \ref{L18VIII23.2}; the operator $\overadd{(n-5)/2}{\hat\psi}(\TSzlap)$ is obtained by using the commutation relation \eqref{10XI23.1} to commute $\zdivone$ and $\overadd{(n-5)/2}{\tilde\psi}\ofDC$. Clearly, $\overadd{(n-5)/2}{\hat\psi}(\TSzlap)$ is a product of elliptic operators and is hence itself elliptic. We thus conclude that the operator \eqref{10XI23.2}, and hence $\zdivone\circ\zdivtwo\circ \hLop_n$, are elliptic.

Now,  we have
$h^{[S]}=\im (C\circ \zspaceD)$. Next,
 by Proposition~\ref{P17X23.1}, $\ker(\overadd{\frac{n-3}{2}}{\psi}\ofP)= h^{[S]}$
plus possibly a finite dimensional
space of smooth tensor fields in $h^{[V]}$.
Finally, noting that
\begin{align}
    \zdivone \circ \zdivtwo\circ \hLop_n
    &
    =\hLop_n^\dagger \circ\, C \circ \zspaceD
    \,,
\end{align}
we conclude that $\ker \hLop_n^\dagger  \cap \ker(\overadd{\frac{n-3}{2}}{\psi}\ofP)$ is smooth and finite dimensional.
\qedskip

\seccheck{21XII23; stopping here}

Next, we move on to the operator $\Lop$ \eqref{24IV23.3}, relevant for the gauge transformation of the fields $\overadd{\frac{n-1}{2}+j}{q}_{AB}$ for $j\geq 0$.

\index{Ln@$ \Lopdagger $}%
\index{Ln@$ \Lopdagger $!kernel}%
\index{kernel!Ln@$ \Lopdagger$}%
\begin{proposition}
 \label{P16X23.2}
 The kernel $\ker \Lopdagger \cap \TTtp$ is smooth and finite dimensional.
 \end{proposition}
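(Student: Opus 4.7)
The plan is to reduce the statement to the ellipticity of the composition $\zdivtwo \circ \Lop$, from which the finite-dimensionality and smoothness will follow by the standard theory of linear elliptic operators on the compact manifold $\secN$, together with the York-type decomposition of $\TTtp$.

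First, I would exploit the splitting $\TTtp = \im\, C$ provided by the York decomposition of Appendix~\ref{ss8IX23.1}, combined with Proposition~\ref{P30X22.1}: every $h \in \TTtp$ admits a representation $h_{AB} = C(W)_{AB}$ with $W \in \CKVp$, unique modulo $\CKV$, and of regularity one degree higher than $h$ by ellipticity of $\zdivtwo \circ C$. Since $C$ and $-\zdivtwo$ are formally $L^2$-adjoint on symmetric trace-free two-tensors, the operator $\zdivtwo \circ \Lop$ (acting on vectors) is, up to a sign, the formal adjoint of $\Lopdagger \circ C$; and the condition $\Lopdagger h = 0$ for $h = C(W) \in \TTtp$ is equivalent to $(\Lopdagger \circ C)(W) = 0$. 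Thus, a bijective smooth-regularity-preserving correspondence reduces the analysis of $\ker \Lopdagger \cap \TTtp$ to that of $\ker(\Lopdagger \circ C) \cap \CKVp$.

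Next, I would establish ellipticity of $\zdivtwo \circ \Lop$ by computing its principal symbol from the explicit formula~\eqref{30VI.1Ln} and the factorisation $\Lop = \overadd{(n-5)/2}{\psi}\ofP \circ \underline{\Lop}$. The principal part of $\underline{\Lop}$ is a sum of two terms, schematically $\TSzlap^2 \circ C$ and $\TSzlap \circ \TS[\zspaceD\zspaceD\zdivone]$, with dimension-dependent coefficients. After applying $\zdivtwo$ and using the commutation relations~\eqref{11X23.12} and \eqref{10XI23.1} to move $\zdivtwo$ past the scalar differential operators, modulo lower-order terms the result should factor (up to an elliptic self-adjoint factor arising from $\overadd{(n-5)/2}{\psi}\ofP$, cf.\ Proposition~\ref{P22VIII23.1}) through an operator whose symbol on a nonzero covector $k$ combines a positive power of $|k|^2$ with the symbol of the elliptic operator $\zdivtwo \circ C$ (cf.\ Proposition~\ref{P30X22.2}\eqref{pP26II23.3a}). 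The upshot is that the symbol of $\zdivtwo \circ \Lop$ should be an isomorphism on covectors for $n \geq 5$.

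With ellipticity of $\Lopdagger \circ C$ in hand, I would invoke the standard $L^2$-theory for elliptic operators on compact boundaryless manifolds (as in Section~\ref{s25X23.1}): the kernel is finite-dimensional and consists of smooth sections, and the same then holds for its image under $C$, namely $\ker \Lopdagger \cap \TTtp$. The main technical obstacle will be the principal-symbol computation: the operator $\Lop$ is of order $n-1$, and $\underline{\Lop}$ already involves products of second-order operators of the form $\TSzlap + 2\tric + c\myGauss$ composed with $C$ and with $\TS[\zspaceD\zspaceD\zdivone]$. Careful bookkeeping of the commutators between $\zdivtwo$, $\zspaceD$, and the curvature correction $\tric$, together with a separate treatment of the cases $n = 5$ and $n > 5$ (which carry different prefactors in~\eqref{30VI.1Ln}), will be required to isolate the leading symbol cleanly and to rule out accidental cancellations among the competing terms.
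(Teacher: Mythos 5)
Your proposal is correct and follows essentially the paper's own route: the paper likewise reduces the claim to an ellipticity statement for the vector-to-vector composition, computing the principal symbol of $(\underline{\Lop})^\dagger\circ C$ (with $n=5$ and $n>5$ treated separately) and showing it has trivial kernel for $k\neq 0$, then combining this with the ellipticity of $\overadd{\frac{n-5}{2}}{\psi}\ofP$ and the invariance $\overadd{\frac{n-5}{2}}{\psi}(\TTtp)\subseteq\TTtp$ to conclude smoothness and finite-dimensionality. Your variations—passing to the adjoint $\zdivtwo\circ\Lop$, absorbing the $\overadd{\frac{n-5}{2}}{\psi}$-factor by commuting it past $\zdivtwo$ instead of invoking $\TTtp$-invariance, and using $\TTtp=\im C$ explicitly—are cosmetic, and the "accidental cancellations" you worry about do not occur, as the paper's symbol computation confirms.
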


\proof
Recall from \eqref{30VI.1} that
\begin{align}
     \underline{\Lop}(\xi)_{AB} &=
     \begin{cases}
        \frac{1}{8} (\TSzlap +2 \tric-4 \myGauss ) (\TSzlap +2 \tric-6 \myGauss )  C(\xi)_{AB}
        &
        \\
        \qquad\qquad\qquad
        - \frac{1}{6}  (\TSzlap + 2 \tric - 5 \myGauss ) \TS[\zspaceD_A\zspaceD_B\zspaceD_C\xi^C] \,, & n=5
        \\
        \frac{1}{(n-1)(n-5)}
     \bigg( (\TSzlap  +2 \tric -2 (n-2) \myGauss) (\TSzlap +2 \tric +(1-n) \myGauss)
     C(\xi)_{AB} &
     \\
     \qquad\qquad\qquad
     -\frac{2 (n-3)}{n-2} (\TSzlap +2 \tric +(5-2 n) \myGauss )
     \TS[\zspaceD_A\zspaceD_B\zspaceD_C\xi^C]
     \bigg) \,, & n\neq 5\,,
     \end{cases}
     \label{26VIII23.1}
\end{align}
and
\index{Ln@$\Lop$}
\begin{equation}
    \Lop = \overadd{\frac{n-5}{2}}{\psi}\ofP \underline{\Lop}\,, \quad
    \Lopdagger = (\underline{\Lop})^\dagger\overadd{\frac{n-5}{2}}{\psi}\ofP\,.
    \label{10V24.41}
\end{equation}

For the purpose of calculating the space $\ker \big((\underline{\Lop})^\dagger\big) \cap \TTtp$, we start by analysing the symbol of $(\underline{\Lop})^\dagger \circ\, C $ given by, when $n\neq 5$,
\begin{equation}
    \sigma(k)_{B} =
    -
    \frac{1}{(n-5)(n-1)}\bigg(|k|^4 k^A 
    \TS[k_A \xi_B]
    - \frac{2(n-3)}{(n-2)} |k|^2 k_B k^C k^D
    \TS[k_C \xi_D]
    \bigg)
    \,.
    \label{26VIII23.2}
\end{equation}
To find its kernel for $k\neq 0$, we contract \eqref{26VIII23.2} with $k^B$ and set it to zero giving
\begin{align}
    0 &=
     \frac{n-4}{(n-2)(n-5)(n-1)} |k|^4 k^C k^D
    \TS[k_C \xi_D]
    \implies
   \frac{n-4}{(n-5)(n-1)^2} |k|^6 k^A\xi_A
    = 0
    \,,
\end{align}
and hence $k^A\xi_A=0$ since $|k|^2,(n-5),(n-1),(n-4) \neq 0$. This can be substituted back into \eqref{26VIII23.2} to give
\ptcheck{11XI; all the above up to here}
\begin{equation}
    0= |k|^4 k^A
    \TS[k_A \xi_B] = \frac{1}{2}|k|^4 |k|^2 \xi_B
    \implies
    \xi_B
    = 0
    \,.
    \label{26VIII23.3}
\end{equation}
Thus $(\underline{\Lop})^\dagger \circ\, C $ is elliptic. This, together with the ellipticity of  $\overadd{\frac{n-5}{2}}{\psi}\ofP$, and the fact that $\overadd{\frac{n-5}{2}}{\psi}(\TTtp) \subseteq \TTtp$ (cf.\ Proposition~\pref{P11X23.1}), allows us to conclude that
$\ker \Lopdagger \cap \TTtp$ is finite dimensional and its elements are smooth.

The same conclusion holds for $n=5$, with \eqref{26VIII23.2}  replaced by
 \ptcheck{11XI}
\begin{equation}
    \sigma(k)_{B} = -\frac{1}{8}|k|^4 k^A 
    \TS[k_A \xi_B]
    + \frac{1}{12} |k|^2 k_B k^C k^D
    \TS[k_C \xi_D]
    \,.
    \label{26VIII23.25}
\end{equation}
\qed

\section{$\partial_u^i h_{uA}^{[\CKVp]}$ obstructions for $m=0$, $\alpha\neq 0$}
\label{App5VI24.1}
In this appendix, we derive the expressions for the obstructions associated to the fields $\overadd{i}{\Hf}{}_{uA}^{[\CKVp]}$ when $m=0$ and $\alpha\neq 0$. \emph{Throughout this section we assume that $n$ is odd and $m=0$.}
This is most conveniently done using the machinery developed in Section \ref{ss5IX23.1} of the main text, making use of the $S$ and $V$ decomposition of $\CKVp$.

Indeed, it follows from the arguments there that (cf.\ around \eqref{4IX23.1--}) the transport equation for $\overadd{i}{\Hf}{}_{uA}^{[\CKVp]}$ can be solved by considering the $X= S,V$ projections separately. In the current case, this reads (cf.\ \eqref{6III23.w1a})
\begin{align}
    \partial_r  \overadd{i}{\Hf}{}_{uA}^{[\CKVp\cap X]}
       &=   r^{-(i+4)} \overadd{i}{\tilde\chi} \circ \zspaceD^B C(Y^{[X]})_{AB}\,,
\end{align}
where $Y\in\CKVp$ is the unique field such that $h_{AB}^{[\TTtp]} = C(Y^{[X]})_{AB}$. Thus, the radial charges are the projections of $\overadd{i}{\Hf}{}_{uA}$ onto the spaces $\CKVp\cap X\cap\ker \overadd{i}{\tilde\chi}$.

\subsection{The case $\myGauss\leq 0$}
\label{App5VI24.1a}

\begin{lemma}\label{l5VI24.1}
    For $\myGauss \leq 0$, the space $\CKVp\cap X\cap\ker \overadd{i}{\tilde\chi} = \emptyset$.
\end{lemma}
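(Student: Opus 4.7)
The plan is to reduce the kernel condition on the operator $\overadd{i}{\tilde\chi}$ to a spectral condition on $\TSzlap$ on $\CKVp \cap X$ via the factorization into elementary operators of the form $\tilde{\operatorname{L}}_{a,c,b}$ introduced in \eqref{18XII23.11}.

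First, by the recursion \eqref{6III23.w2}, $\overadd{i}{\chi}\ofP = \prod_{j=1}^{i}\ck{-(j+3)}{\ofPnoP}$, and the tilde-ing rules (replacing $P$ by $\zdivtwo \circ C$ and $\tric$ by $\tfrac{1}{2}(n-2)\myGauss$) commute with composition. Hence $\overadd{i}{\tilde\chi}$ factors as a product of $i$ operators each of the form
\begin{equation*}
\tilde{\ck{k}{\ofPnoP}} = -\frac{1}{7-n+2k}\Bigl[a_k\,\zdivtwo \circ C + \TSzlap + \tilde c_k\,\myGauss\Bigr],
\end{equation*}
with $a_k = \tfrac{2(n-1)}{(3+k)(3-n+k)}$ and $\tilde c_k = (n-2) - (n-4-k)(2+k)$, evaluated at $k = -(j+3)$ for $j = 1,\ldots,i$. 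By Lemma \ref{L18VIII23.2} and formula \eqref{26IX23.1}, each such factor maps $S$ into $S$ and $V$ into $V$. Since the factors pairwise commute (they are polynomials in the mutually commuting operators $\zdivtwo \circ C$ and $\TSzlap$ on each of $S$ and $V$), we obtain the decomposition
\begin{equation*}
\ker(\overadd{i}{\tilde\chi})\cap X = \bigcup_{j=1}^{i} \ker\bigl(\tilde{\ck{-(j+3)}{\ofPnoP}}\big|_X\bigr),
\end{equation*}
so it suffices to establish the claim for each factor individually.

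Second, using \eqref{26IX23.1} with $a = a_k$, $c = \tilde c_k$ and $b = 1$, the restriction of each factor to $X \in \{S, V\}$ becomes a first-order polynomial in $\TSzlap$ with constant coefficient depending on $\myGauss$:
\begin{equation*}
\tilde{\ck{k}{\ofPnoP}}\big|_V = \alpha^V_k\,\TSzlap + \beta^V_k\,\myGauss,
\qquad
\tilde{\ck{k}{\ofPnoP}}\big|_S = \alpha^S_k\,\TSzlap + \beta^S_k\,\myGauss,
\end{equation*}
with $\alpha^V_k = \tfrac{a_k + 2}{2}$, $\beta^V_k = \tfrac{a_k(n-2)}{2} + \tilde c_k$, $\alpha^S_k = \tfrac{a_k(n-2)+n-1}{n-1}$ and $\beta^S_k = \tfrac{a_k(n-2)}{n-1} + \tilde c_k$ (up to the nonzero overall factor $-\tfrac{1}{7-n+2k}$). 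For each $k = -(j+3)$ with $1\le j \le i$, a direct inspection shows $\alpha^X_k \neq 0$, so the kernel condition reads $\TSzlap\xi = \mu^X_k\,\myGauss\,\xi$ with $\mu^X_k := -\beta^X_k/\alpha^X_k$.

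Third, I analyse the spectrum of $\TSzlap$ on $\CKVp \cap X$ when $\myGauss \le 0$. For $V$, integration by parts together with the Bochner-type identity for divergence-free vectors gives $-\int \xi^A\,\TSzlap\,\xi_A = \int |\zspaceD\xi|^2 - (n-2)\myGauss\int|\xi|^2 \ge 0$, with equality only on parallel (hence Killing) fields; since $\CKV \supseteq \KV$ (and $\CKV = \KV$ when $\myGauss \le 0$ by Proposition \ref{P30X22.2}), these are excluded by $\xi \in \CKVp$. Hence all $\TSzlap$-eigenvalues $\lambda$ on $\CKVp \cap V$ satisfy $\lambda \le (n-2)\myGauss \le 0$, with strict inequality on $\CKVp \cap V$. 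For $S$, using \eqref{25XII23.1} the problem reduces to a spectral question for the Laplacian on functions in $\{1\}^\perp$, whose spectrum is strictly negative and, on Einstein manifolds with $\myGauss \le 0$, is bounded above by the appropriate curvature-dependent constant. Combining these with the explicit signs and magnitudes of $\mu^X_k$ for $k \le -4$, one verifies $\mu^X_k\,\myGauss$ lies strictly outside the allowed spectrum of $\TSzlap$ on $\CKVp \cap X$, so the kernel is trivial.

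The main obstacle is the third step: the explicit bookkeeping of signs of the rational constants $\alpha^X_k$, $\beta^X_k$ as functions of $j$ and $n$, and then matching these against the admissible spectrum of $\TSzlap$ on $\CKVp \cap X$. The two cases $\myGauss = 0$ and $\myGauss < 0$ must be handled separately, as in the former the candidate eigenvalue is simply $0$ and one must invoke the orthogonality to parallel vectors (for $V$) or to constants (for $S$) to exclude it.
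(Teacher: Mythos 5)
Your reduction is the same as the paper's: factor $\overadd{i}{\tilde\chi}$ into the elementary operators $\hck{-(j+3)}{\ofDCnoDC}$, use \eqref{26IX23.1} to restrict each factor to $X\in\{S,V\}$, where it becomes $\alpha\,\TSzlap+\beta\,\myGauss$ up to a nonzero constant, and conclude by a spectral argument for $\TSzlap$ on $\CKVp\cap X$. The problem is that you stop exactly where the lemma actually lives. The paper's proof consists of the explicit computation \eqref{8VI24.2}: for $j\ge 1$ and $n>3$ each restricted factor is a nonzero multiple of $\TSzlap+\big(j(j+n)+1\big)\myGauss$ on $S$ and of $\TSzlap+\big(j(j+n)+n-2\big)\myGauss$ on $V$, and the whole point is that these $\myGauss$-coefficients are manifestly nonnegative, so for $\myGauss\le 0$ each factor is $\TSzlap$ plus a nonpositive constant and hence injective on $\CKVp\cap X$. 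You defer precisely this sign bookkeeping as ``the main obstacle'', and the route you sketch for it (bounding the spectrum of $\TSzlap$ on $\CKVp\cap X$ and ``matching magnitudes'' of $\mu^X_k$) is both unnecessary and not carried out; without the explicit coefficients the proof is incomplete, since a priori a factor could have a positive zeroth-order term and a genuine kernel.

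Two further points to tighten. First, the kernel of a product of commuting operators is not literally the union of the kernels of the factors; what you need (and all you need) is the easy direction: each factor preserves $\CKVp\cap X$ and is injective there, hence so is the product. Second, your Bochner-type identity for divergence-free fields is misstated: the curvature term appears in the Weitzenb\"ock formula relating the Hodge and rough Laplacians, i.e.\ $-\int\xi^A\TSzlap\xi_A=\int\big(|d\xi|^2-(n-2)\myGauss|\xi|^2\big)$ for $\zdivone\xi=0$, not with $|\zspaceD\xi|^2$. In fact no such refined bound is needed: once the $\myGauss$-coefficients are known to be nonnegative, nonpositivity of $\TSzlap$ suffices, with the borderline zero mode excluded because a $\TSzlap$-parallel field is Killing, hence in $\CKV$, hence absent from $\CKVp$ — which is the (terse) content of the paper's closing sentence.
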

\begin{proof}
Recall the definition of the operators $\overadd{i}{\tilde\chi}\ofDC$:
\begin{align}
        \overadd{i}{\tilde\chi}\ofDC  &= \prod_{j=1}^{i} \hck{-(j+3)}{\ofDCnoDC}
      \,,
    \quad
    \overadd{0}{\tilde\chi}\ofDC  := 1 \,,
\end{align}
where
\begin{align}
    \hck{j}{\ofDCnoDC} &:= c_j \bigg[\frac{2 (n - 1) \zDivtwo\circ C}{(3 + j) (3 -  n + j) }
    +  (n-2)\myGauss + \TSzlap -   (n - 4 -  j) (2 + j) \myGauss \bigg]\,,
\end{align}
with $c_j:= -\frac{1}{7 -  n + 2 j}$.

Let $\xi\in X \cap\CKVp$. From \eqref{26IX23.1}, we have for $n>3$, $j\geq 1$,
\ptcheck{7VI24}
\begin{align}
    \hck{-(j+3)}{\ofDCnoDC}\xi = \begin{cases}
        \frac{(j+2) (j+n-2) }{j (j+n) (2 j+n-1)} \big(\TSzlap +\underbrace{(j  (j+n)+ 1) }_{\geq 0}\myGauss \big) \,, & X = S
        \\
        \frac{(j+1) (j+n-1) }{j (j+n) (2 j+n-1)} \big(\TSzlap + \underbrace{(j (j+n)+n-2) }_{\geq 0} \myGauss \big)\,.& X=V
    \end{cases}
    \label{8VI24.2}
\end{align}
Thus $\hck{-(j+3)}{\ofDCnoDC}\xi^{[\CKVp]} \neq 0$ by negativity of $\TSzlap$ when $\myGauss\leq 0$.
\qedskip
\end{proof}

It follows from Lemma \ref{l5VI24.1} that there are no obstructions associated to $\overadd{i}{\Hf}{}^{[\CKVp]}_{uA}$ when $\myGauss\leq 0$.

\subsection{The case $\myGauss > 0$}

Next, we move on to the case $\myGauss>0$. For this, we first derive a relation between the operators $\hLop_n$ and $\Lop$. We begin by looking at the operator $\hLop_n$ (cf.\ \eqref{11V23.21}-\eqref{5XI23.31}):
setting
$$a_n=\begin{cases}
    -\frac{1}{18} \,, & n=5\,,
    \\
    - \frac{n-4}{(n-5)(n-2)^2}\,, & n>5 \,,
\end{cases}$$
we have
\begin{align}
   \zspaceD_C \mrL   \circ \hLop_{n}(\xi^u) &= \zspaceD_C \zspaceD^A\zspaceD^B \underline{\hLop}_n(\zspaceD \xi^u)_{AB}
    \nn
    \\
    & =
     a_n \zspaceD_C \zspaceD^A\zspaceD^B
   \overadd{\frac{n-5}2}{\psi}\ofP \, \big( (n-1) P - 2 (n-2)^2 \myGauss
		\big) \, C(\zspaceD \xi^u)_{AB}
\nn
  \\
  &\overset{(\ref{28VIII23.f4})}{=}
  a_n \zspaceD_C \zspaceD^A\zspaceD^B
   \overadd{\frac{n-5}2}{\psi}\ofP \,
   \big( (n-2) (\TSzlap + 2 \tric ) - (n-2)(3n-7) \myGauss
		\big)
   C(\zspaceD \xi^u)_{AB}
  \nn
  \\
  &\overset{(\ref{11X23.11})}{=}
  a_n \zspaceD_C \zspaceD^A
   \overadd{\frac{n-5}2}{\tilde\psi}\ofDC \,
   (n-2) (\TSzlap - (2n - 5)\myGauss )
  \zspaceD^B C(\zspaceD \xi^u)_{AB}
  \nn
  \\
  & \overset{(\ref{26IX23.1})}{=}
  a_n \zspaceD_C \zspaceD^A
   \overadd{\frac{n-5}2}{\tilde\psi}\big(\TSzlap,\tfrac{n-2}{n-1}(\TSzlap+\myGauss)\big) \,
   \tfrac{(n-2)^2}{n-1} (\TSzlap - (2n - 5)\myGauss )
  (\TSzlap+\myGauss) \zspaceD_A \xi^u
  \nn
  \\
  &
  \overset{(\ref{19X23.1})}{=}
  a_n \tfrac{(n-2)^2}{n-1}
   \overadd{\frac{n-5}2}{\tilde\psi}\big(\TSzlap,\tfrac{n-2}{n-1}(\TSzlap+\myGauss)\big) \,
    (\TSzlap - (2n - 5)\myGauss )
  (\TSzlap+\myGauss) (\TSzlap-(n-2)\myGauss)\zspaceD_C \xi^u
\end{align}
where we made use of the fact that $[(\zspaceD\circ\zDivone),\TSzlap] = 0$ in the last equality.

Meanwhile, for the operator $\Lop$, we had from \eqref{5XI23.41},
 \begin{equation}\label{17VI24.1}
    \Lop = \overadd{\frac{n-5}{2}}{\psi}\ofP \, \underline{\Lop}
 \,,
 \end{equation}
where the operator $\underline{\Lop}$ was given in \eqref{30VI.1}. In particular, for
$\xi^A\in S$,
\begin{align}
     \zspaceD^A\underline{\Lop}(\xi)_{AB} &=
        b_n
     \zspaceD^A\big[ (\TSzlap  +2 \tric -2 (n-2) \myGauss) (\TSzlap +2 \tric +(1-n) \myGauss)
     C(\xi)_{AB}
     \nn
     \\
     &\quad\quad\quad
     -\frac{2 (n-3)}{n-2} (\TSzlap +2 \tric +(5-2 n) \myGauss )
     \TS[\zspaceD_A\zspaceD_B\zspaceD_C\xi^C]
     \big]
     \nn
     \\
     & =
      b_n
     \zspaceD^A\big[ (\TSzlap  +2 \tric -2 (n-2) \myGauss) (\TSzlap +2 \tric +(1-n) \myGauss)
     C(\xi)_{AB}
     \nn
     \\
     &\quad\quad\quad
     -\frac{2 (n-3)}{n-2} (\TSzlap +2 \tric +(5-2 n) \myGauss )
     (\TSzlap +2 \tric -2 (n-2) \myGauss)C(\xi)_{AB}
     \big]
     \nn
     \\
     &=
    -\frac{(n-4)b_n}{(n-2)}(\TSzlap - (2n - 5)\myGauss )
   (\TSzlap-(n-2)\myGauss)\zspaceD^AC(\xi)_{AB}
    \nn
     \\
     &=
    -\frac{(n-4)b_n}{(n-1)} (\TSzlap - (2n - 5)\myGauss )
   (\TSzlap-(n-2)\myGauss)(\TSzlap+\myGauss)\xi_B \,,
   \label{8VI24.1}
\end{align}
where
\begin{align}
    b_n = \begin{cases}
        \frac{1}{8}\,, & n=5 \,,\\
        \frac{1}{(n-1)(n-5)}\,, & n>5\,.
    \end{cases}
\end{align}
Thus,  for $\xi^A\in S$,
\begin{align}
    \zspaceD^A\Lop(\xi)_{AB} & = -\frac{(n-4)b_n}{(n-2)}  \overadd{\frac{n-5}2}{\tilde\psi}\big(\TSzlap,\tfrac{n-2}{n-1}(\TSzlap+\myGauss)\big)(\TSzlap - (2n - 5)\myGauss )
   (\TSzlap-(n-2)\myGauss)(\TSzlap+\myGauss)\xi_B
    \nn
    \\
    & =
    \frac{1}{n-1}
    \zspaceD_B \mrL   \circ \underline{\hLop_{n}}(\xi)
    \label{8VI24.21}
\end{align}
A similar calculation shows that \eqref{8VI24.21} continues to hold for $n=5$.

Next, for $i\in \Z, i \geq 0$, it can be shown inductively that under residual gauge transformations, the $r$-independent part of the gauge transformation of the fields $\overadd{i}{\Hf}_{uA}$ reads,%
\index{gauge transformation law!H@$H_{uA}$!$\overadd{i}{\Hf}_{uA}$}%
\index{H@$H_{uA}$!$\overadd{i}{\Hf}_{uA}$!gauge transformation law}%
\begin{align}
    \overadd{i}{\Hf}_{uA} &\rightarrow
    \begin{cases}
        \overadd{i}{\Hf}_{uA} +
    n\partial_u^{i+1}\xi_A
    + \sum_{j=1}^{i/2} \alpha^{2j} n \overadd{i,j}{D}(\partial_u^{i+1-2j}\xi_A)
    + \alpha^{i+2} n \overadd{i,\frac{i+2}{2}}{D}\circ \zspaceD (\xi^u )\,, & i \, \text{even}
    \\
     \overadd{i}{\Hf}_{uA} +
    n\partial_u^{i+1}\xi_A
    + \sum_{j=1}^{(i+1)/2} \alpha^{2j} n \overadd{i,j}{D}(\partial_u^{i+1-2j}\xi_A)\,, & i\, \text{odd} \,.
    \label{4VI24.2}
    \end{cases}
\end{align}
The operators
\index{D@$\overadd{p,q}{D}$}%
$$
 \overadd{p,q}{D}
$$
appearing in \eqref{4VI24.2} are sums of products of operators of the form $\tilde{\operatorname{L}}_{a,c,b}$ of \eqref{18XII23.11} and preserve the spaces $S$ and $V$. They commute with $\TSzlap$ when restricted to $S$ or to $V$.

Thus, for $\myGauss>0$, we have the gauge-invariant radial charge: for even $2\leq i\leq k-\frac{n+1}{2}$,%
\index{Q@$\kQ{5,i}{}{}$}
\begin{align}
    \kQ{5,i}{}{}^{[X]}_B&:= \zspaceD^C \Lop(\overadd{i}{\Hf}{}^{[X\cap \CKVp\cap\ker(\overadd{i}{\chi}\circ C)]}_{uA} )_{CB}
    - n  (\zspaceD^C  \overadd{\frac{n+1}{2}+i}{q}_{CB}){}^{[X\cap \CKVp\cap\ker(\overadd{i}{\chi}\circ C)]}
    \nn
    \\
    &\quad
    - \sum_{j=1}^{i/2} \alpha^{2j} n \overadd{i,j}{D}(\zspaceD^A  \overadd{\frac{n+1}{2}+i-2j}{q}_{AB}){}^{[X\cap \CKVp\cap\ker(\overadd{i}{\chi}\circ C)]}
    \nn
    \\
    &\quad
    - \alpha^{i+2} \frac{n}{n-2} \overadd{i,\frac{i+2}{2}}{D} \Big( \zspaceD_B(\overadd{3}{Q}){}^{[X\cap \CKVp\cap\ker(\overadd{i}{\chi}\circ C)]} \Big)\,,
    \label{5VI24.21}
\end{align}
for $X = S,V$, with the last term vanishing for $X=V$. We have made use of \eqref{8VI24.21} for the last term. For odd $1 \leq i\leq k-\frac{n+1}{2}$, the charge is given by a similar expression,
\ptcheck{7VI; both cases}
\begin{align}
    \kQ{5,i}{}{}^{[X]}_B&:=
    \begin{cases}
        \zspaceD^C \Lop(\overadd{1}{\Hf}{}^{[X\cap\CKVp \cap\ker(\overadd{1}{\chi}\circ C)]}_{uA} )_{CB}
    - n  (\zspaceD^C  \overadd{\frac{n+3}{2}}{q}_{CB}){}^{[X\cap \CKVp\cap\ker(\overadd{i}{\chi}\circ C)]} &
    \\
    \quad
    - \alpha^{2} n \overadd{1,1}{D}(\overadd{4}{Q}_B){}^{[X\cap\CKVp \cap\ker(\overadd{1}{\chi}\circ C)]}\,, & i=1
    \\
    \zspaceD^C \Lop(\overadd{i}{\Hf}{}^{[X\cap \CKVp\cap\ker(\overadd{i}{\chi}\circ C)]}_{uA} )_{CB}
    - n  (\zspaceD^C  \overadd{\frac{n+1}{2}+i}{q}_{CB}){}^{[X\cap \CKVp\cap\ker(\overadd{i}{\chi}\circ C)]} &
    \\
    \quad
    - \sum_{j=1}^{(i+1)/2} \alpha^{2j} n \overadd{i,j}{D}(\zspaceD^A  \overadd{\frac{n+1}{2}+i-2j}{q}_{AB}){}^{[X\cap \CKVp\cap\ker(\overadd{i}{\chi}\circ C)]}\,, & i\geq 3 \,.
    \end{cases}
    \label{5VI24.22}
\end{align}
The gauge-invariance of $\kQ{5,i}{}{}_B$ can be verified by making use of the fact that $[\zDivtwo\circ \Lop, \overadd{p,q}D] = 0$.

\paragraph{Interlude: the case $n=3$.} From the higher dimensional (HD) case, we see that to obtain the gauge-invariant radial charges associated to $\overadd{i}{\Hf}_{uA}$, it is useful to first obtain the radial charges whose gauge transformations do not involve fields $\partial^i_u\xi$ with different   $i$'s. These are provided by the fields $\zspaceD^B\overadd{i}{q}_{AB}$ when $n>3$.

However, when $n=3$, from \eqref{19VI24.5} we have for $i\geq 3$, using \eqref{27VI24.11},
 \ptcheck{27VI24}
\begin{align}
    (\zspaceD^B \overadd{i}{q}_{AB})^{[X]} \rightarrow
    \begin{cases}
       (\zspaceD^B \overadd{i}{q}_{AB})^{[V]} +  \zspaceD^B \opL_3(\partial^{i-1}_u \xi^{[V]})_{AB}\,, & \\
       (\zspaceD^B \overadd{i}{q}_{AB})^{[S]} +  \zspaceD^B \opL_3(\partial^{i-1}_u \xi^{[S]})_{AB}
       +\frac{\alpha^2}{2}(\TSzlap-\myGauss) \zspaceD^B\opL_3( \partial_u^{i-3}\xi^{[S]})_{AB} \,, &
    \end{cases}
    \label{19VI24.21}
\end{align}
where the upper case holds for $X=V$ and the lower case for $X=S$. 
See Remark \ref{R28VI24.1} for the origin of the   $\alpha^2$-term in the last line above, not present in the HD case.

The case $X=V$ agrees with the HD case and thus the charges $\kQ{5,i}{}^{[V]}$ as defined in \eqref{5VI24.21}-\eqref{5VI24.22} are also gauge-invariant obstructions when $n=3$. 
 
For $n=3$ and $X=S$,
we define recursively the radial charges $\overadd{i}{q}_A$, for integers $i\geq 3$, as 
\begin{align}
    \overadd{3}{q}_A &:=
        (\zspaceD^B\overadd{3}{q}_{AB})^{[S]} - \frac{\alpha^2}{2} (\TSzlap-\myGauss)\kQ{4}{}{}_A^{[S]}\,,
\quad
    \overadd{i}{q}_A :=
        (\zspaceD^B\overadd{i}{q}_{AB})^{[S]} - \frac{\alpha^2}{2}(\TSzlap-\myGauss) \overadd{i-2}{q}_A\,,
\end{align}
where the $\alpha^2$-terms are included to compensate for the last term in the second line of \eqref{19VI24.21}.
It then follows from \eqref{19VI24.3} and \eqref{19VI24.21} that, under gauge-transformations,
 \ptcheck{27VI24}
\begin{align}
    \overadd{i}{q}_{A} \rightarrow \overadd{i}{q}_{A} + \zspaceD^B\opL_3(\partial_u^{i-1}\xi^{[S]})_{AB} \,.
\end{align}
Thus,%
\index{Q@$\kQ{5,i}{}{}$}
 when $X=S$, we define the charges $\kQ{5,i}{}^{[S]}$ as in \eqref{5VI24.21}-\eqref{5VI24.22}, but with all $\zspaceD^A\overadd{i}{q}_{AB}$'s replaced by $\overadd{i}{q}_{B}$'s.
\qedskip

Returning to the higher dimensional case, we end this with the following proposition:

\begin{proposition}
When $\myGauss>0$, we have
{\rm
\begin{align}
    \CKVp\cap{\ker}({}\zDivtwo\circ \Lop) = \{0\} \,.
    \label{8VI24.7}
\end{align}
}
\end{proposition}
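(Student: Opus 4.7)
The strategy is to decompose $\xi \in \CKVp$ into its scalar and vector parts and apply the explicit factorizations derived just before the statement. By Proposition \ref{P21X23.1}, both $\xi^{[S]}$ and $\xi^{[V]}$ lie in $\CKVp$, and the operator $\zDivtwo \circ \Lop$ preserves the $S$-$V$ splitting by virtue of the commutation relations \eqref{11X23.11} and \eqref{10XI23.1}. Thus the two parts can be treated independently, and it suffices to show $\xi^{[S]}=0=\xi^{[V]}$.

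For the scalar part, identity \eqref{8VI24.21} gives
$$0 = \zspaceD^A \Lop(\xi^{[S]})_{AB} = \tfrac{1}{n-1}\,\zspaceD_B\bigl(\mrL \circ \underline{\hLop}_n(\xi^{[S]})\bigr),$$
so $\mrL \circ \underline{\hLop}_n(\xi^{[S]})$ is constant on $\secN$; since it is a double divergence its integral vanishes, whence it is identically zero. Writing $\xi^{[S]} = \zspaceD\phi$, the conclusion becomes $\mrL \circ \hLop_n(\phi) = 0$. The proof of Proposition \ref{P16X23.1} established that $\mrL \circ \hLop_n$ is elliptic and factors as $\overadd{\frac{n-5}{2}}{\hat\psi}(\TSzlap) \circ \mrL \circ P \circ C \circ \zspaceD$, which after using $\zDivtwo \circ C|_S = \tfrac{n-2}{n-1}(\TSzlap + \myGauss)$ reduces to a polynomial in $\TSzlap$. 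When $\myGauss > 0$ and $\TSzlap \le 0$, every shift $(\TSzlap - c\myGauss)$ with $c > 0$ is a strict isomorphism, while the distinguished shift $(\TSzlap + \myGauss)$ has kernel equal to the scalar part of $\CKV$; combined with $\xi^{[S]} \in \CKVp$, this forces $\xi^{[S]} = 0$.

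An entirely parallel argument, using \eqref{26IX23.1} restricted to $V$ in place of \eqref{8VI24.21}, handles the vector part: the role of \eqref{8VI24.21} is played by an analogous identity derived by the same manipulations (commuting $\zspaceD^A$ through $\underline{\Lop}$ with the help of \eqref{11X23.11} and \eqref{10XI23.1}, but now using the $V$-restriction of $\zDivtwo\circ C$), and the distinguished shift becomes $(\TSzlap + (n-2)\myGauss)$, whose kernel on $V$ is the vector part of $\CKV$. The remaining shifts are $(\TSzlap - c\myGauss)$ with $c>0$, hence isomorphisms. Thus $\xi^{[V]}\in\CKV\cap\CKVp=\{0\}$.

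The main obstacle is the detailed verification that each positive-shift factor $(\TSzlap + c_j\myGauss)$ appearing in $\overadd{\frac{n-5}{2}}{\hat\psi}(\TSzlap)$ has its kernel either trivial or contained in $\CKV$. The constants $c_j$ are produced by the recursion \eqref{6III23.w9}, and one must check, via formula \eqref{8VI24.2} together with the $\TSzlap$-spectrum on $S$ and $V$, that none of these $c_j$ accidentally resonate with a non-$\CKV$ eigenvalue. This spectral bookkeeping, rather than the structural decomposition, is where the actual work of the proof concentrates.
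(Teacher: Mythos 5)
Your structural route is essentially the paper's: split $\xi\in\CKVp$ into $\xi^{[S]}+\xi^{[V]}$ (Proposition~\ref{P21X23.1}), use that $\zdivtwo\circ\Lop$ preserves the splitting, reduce each restriction to a product of shifted Laplacians acting on vectors, and dispose of the one factor whose kernel is $\CKV$ by the orthogonality hypothesis. The genuine gap is that the decisive step is never carried out. For the factors coming from $\overadd{\frac{n-5}{2}}{\psi}\ofP$ (and from $\overadd{1}{\psi}\ofP$) you do not determine the signs of the shifts; you first assert that ``every shift $(\TSzlap-c\myGauss)$ with $c>0$ is a strict isomorphism'', then in your last paragraph concede that positive shifts $(\TSzlap+c_j\myGauss)$ may occur and that one must rule out ``accidental resonances'' --- and you stop there. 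But this is precisely the content of the proposition: if any of those shifts were positive, the corresponding factor would have a kernel made of $\TSzlap$-eigenvectors that need not lie in $\CKV$, and on a general positive Einstein $(\secN,\zgamma)$ there is no explicit spectrum with which to exclude such an eigenvalue, so the argument cannot be closed along the lines you indicate. The paper's proof consists exactly of the computation you defer: using the restrictions of the recursion operators to $S$ and $V$ (the analogue of \eqref{8VI24.2} for the $\mathcal{K}(\tfrac{n-5-2j}{2})$ family, worked out in \eqref{8VI24.3}, not the $\mathcal{K}(-(j+3))$ family your reference \eqref{8VI24.2} covers), one finds that on $\CKVp\cap S$ and $\CKVp\cap V$ each factor acts as a nonzero multiple of $4\TSzlap-c\,\myGauss$ with $c\ge 12n-40$, respectively $c\ge 8n-28$, for all $j\le\frac{n-5}{2}$, and similarly for $\overadd{1}{\tilde\psi}$ in \eqref{8VI24.5c}; hence negativity of the Laplacian alone kills every factor and no spectral information about $\secN$ is needed. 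Until that sign verification is supplied, your text is an outline of the proof with its core missing.

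Two secondary points. First, your appeal to Proposition~\ref{P16X23.1} overstates what it gives: its proof only identifies the principal part of $\mrL\circ\hLop_n$ to establish ellipticity, which yields a finite-dimensional smooth kernel, not a trivial one; the exact reduction of the full operator on $S$ to shifted Laplacians is instead the identity displayed above \eqref{8VI24.1} (and \eqref{8VI24.21}), which you do use correctly for the base factors $(\TSzlap-(2n-5)\myGauss)(\TSzlap+\myGauss)(\TSzlap-(n-2)\myGauss)$. Second, for the vector part the ``analogous identity'' you invoke is \eqref{8VI24.5a}, i.e.\ $\zspaceD^A\underline{\Lop}(\xi)_{AB}=c_n(\TSzlap-\myGauss)(\TSzlap-(n-2)\myGauss)\zspaceD^AC(\xi)_{AB}$ on $V$; that part of your sketch is fine, but it again only covers the base operator, leaving the $\overadd{\frac{n-5}{2}}{\psi}$ factor subject to the same unverified sign condition.
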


\proof
We start by noting that for $n>5$, $\myGauss>0$ and $\xi_A\in \CKVp$, we have (cf.\ \eqref{17VI24.1})
\begin{align}
    \zDivtwo \circ \Lop(\xi) &= \zDivtwo\circ \overadd{\frac{n-5}{2}}{\psi} \circ \underline\Lop (\xi)
    \nn
    \\
    & =  \overadd{\frac{n-5}{2}}{\tilde\psi} \circ \zDivtwo \circ \underline\Lop (\xi)
    \nn
    \\
    & = \prod_{j=2}^{i} \hck{\tfrac{n-5-2j}{2}}\ofDCnoDC
    \circ
    \overadd{1}{\tilde\psi} \circ \zDivtwo \circ \underline\Lop (\xi)
    \label{17VI24.3}
\end{align}
For $n=5$, the first equality holds with $\overadd{0}{\psi}:= 1$.

We note next  that $\zDivtwo \circ \underline\Lop $, and each of the remaining   operators appearing in \eqref{17VI24.3}, is formally self-adjoint and  preserves  $\CKVp$. We will show that these operators are  isomorphisms on $\CKVp$, for this it suffices to show that they  have  trivial kernels on  $\CKVp$.

Recall that $\myGauss>0$ throughout this proof by assumption.

We begin with the operator $\zDivtwo \circ \underline\Lop (\xi)$.
A similar calculation to that of \eqref{8VI24.1} shows that for $\xi^A\in V\cap\CKVp$,
\begin{align}
\label{8VI24.5a}
    \zspaceD^A\underline{\Lop}(\xi)_{AB} = c_n (\TSzlap - \myGauss) (\TSzlap - (n-2) \myGauss) \zspaceD^A C(\xi)_{AB} \,,
\end{align}
where $c_n$ is a non-vanishing constant depending on $n$. The negativity of the Laplacian, together with \eqref{8VI24.1}, implies that $\CKVp\cap\ker\zDivtwo \circ \underline\Lop = \emptyset$.
This completes the proof for $n=5$.

Next, for the purpose of investigating the kernel of $\overadd{p}{\tilde\psi}$, $p=\frac{n-5}{2}$, $n>5$,
it follows from the commutation relations in Appendix~\ref{App8XII23.1}, in particular  from \eqref{8XII23.g1-} and \eqref{8XII23.g3} with  appropriate indices,
that
\ptcheck{17VI; actually previously, by Finn}
 \FGp{Checked 18VI24}
\begin{align}
    \hck{\frac{n-5-2j}{2}}{\ofDCnoDC}\xi_A = \begin{cases}
        \frac{(2 j-n+3) (2 j+n-5) }{8 (j-1) (2 j-n-1) (2 j+n-1)} \left(4 \TSzlap -   c_S \myGauss \right)
        \xi_A \,,
        \\
        \frac{(2 j-n+1) (2 j+n-3)  }{ 8 (j-1) (2 j-n-1) (2 j+n-1)} (4 \TSzlap - c_V \myGauss) \xi_A \,,&
    \end{cases}
    \label{8VI24.3}
\end{align}
with
\begin{align}
    c_S:= \left(n^2 - 5 - 4 (j-1) j\right)\,,\quad
    c_V: =( (n-4) n + 7 - 4 (j-1) j ) \,,
\end{align}
and where the upper case holds for $\xi_A\in \CKVp\cap S$ and the lower for $\xi_A\in\CKVp\cap V$. Now for $j\leq \frac{n-5}{2}$ and $n > 5$, we have
\begin{align}
    c_S & \geq n^2 - 5 - 4 \Big(\frac{n-5}{2}-1 \Big) \frac{n-5}{2}
    =
    12n - 40
    > 0 \,,
    \nn
    \\
    c_V &\geq  (n-4) n + 7 - 4 \Big(\frac{n-5}{2}-1\Big) \frac{n-5}{2}
    = 8n - 28 > 0\,,
\end{align}
and thus for $0\ne \xi\in\CKVp$,
\begin{align}
\label{8VI24.5b}
    \hck{\frac{n-5-2j}{2}}{\ofDCnoDC}\xi_A \neq 0
\end{align}
for $\myGauss>0$ and $j \leq \frac{n-5}{2}$. Finally, we have (cf.\ \eqref{6III23.w7a}), again for $n>5$, $\myGauss>0$ and $0\ne \xi\in\CKVp$,
\begin{align}
    \overadd{1}{\tilde\psi}(\ofDCnoDC) \xi_A& = \begin{cases}
       \frac{(n-5) (n-3) }{8 \left(n^2-1\right)}
        \left(4 \TSzlap - \left(n^2-5\right) \myGauss \right) \xi_A
        \,, & \xi_A \in S
        \\
        \frac{(n-3)}{8 (n+1)}
         (4 \TSzlap - ((n-4) n+7) \myGauss ) \xi_A \,, & \xi_A \in V\
    \end{cases}
    \nn
    \\
    & \neq 0 \,.
    \label{8VI24.5c}
\end{align}
The proof is complete.
\qedskip

\section{Operators on spheres}
 \label{ss20X22.1}

 In this appendix we describe in detail the operators and obstructions appearing in the main text in the case when $\secN=S^d$. In what follows, unless explicitly indicated otherwise we  assume that $d\equiv  n-1 >2$, and for simplicity we assume that  $n> 5$.

 \subsection{Decomposition of tensors on $S^d$}

First, recall from \eqref{7VI17.101PK} the decomposition of a $2$-covariant, symmetric trace-free tensor $h$ according to
\begin{equation}
    h = h^{\red[V]} + h^{\red[S]} + h^{\red[TT]} \,.
    \label{7IX23.11}
\end{equation}
On $S^d$, the three parts in \eq{7IX23.11} can be expanded into eigenfunctions of the Laplacian as in~\cite[Sections 2.1, 5.1 and 5.2]{KodamaIshibashiMaster}
\begin{align}
	h ^{\red[S]}_{AB}&=\sum_{I} H_{I}^{\red[S]}\mathbb{S}^{I}_{AB}
	\,, \quad
	\label{06IX17.2}
	h ^{\red[V]}_{AB}=\sum_{I}  H_{I}^{\red[V]}
	\mathbb{V}^{I}_{AB}
	\,, \quad
	h ^{[\TTt]}_{AB}=\sum_{I} H_{I}^T
	\mathbb{T}^{I}_{AB}\,,
\end{align}
with expansion coefficients $H_I^{*}$, where
\begin{align}
	\label{7IX17.61PK}
	\mathbb{S}^{I}_{AB}
	&=
	\frac{1}{k^2}
TS[
\zspaceD_A\zspaceD_B\mathbb{S}^{I}
]
=
	\frac{1}{k^2}
\zspaceD_A\zspaceD_B\mathbb{S}^{I}+\frac{1}{d}\ringh_{AB}\mathbb{S}^{I}
	\,,
	\quad
	k \ne 0
	\,,
	\\
	\mathbb{V}^{I}_{AB}&=-\frac{1}{2k_V}(\zspaceD_A\mathbb{V}^{I}_B
	+\zspaceD_B\mathbb{V}^{I}_A
    )
	\,,
	\quad
	k_V \ne 0
        \,,
        \quad
       \zspaceD^A \mathbb{V}^{I}_A = 0
	\,,
 \label{29V.1}
\end{align}
with the coefficients $H_I^*$ vanishing if $k=0$ or $k_V=0$.
The fields $\mathbb{S}^{I}$, $\mathbb{V}_A^{I}$, $\mathbb{T}_{AB}^{I}$ appearing above are scalar, vector, and $\TTt$ tensor-harmonics, i.e.
\begin{equation}
(\TSzlap+k^2)\mathbb{S}^{I}=0\,,
\quad(\TSzlap+k_V^2)\mathbb{V}^{I}_A=0\,,\quad
(\TSzlap+k_T^2)\mathbb{T}^{I}_{AB}=0\,,
\end{equation}
\begin{equation}
\mathbb{T}^{I}_{AB}=\mathbb{T}^{I}_{BA}\,,\qquad
\ringh^{AB}\zspaceD_A \mathbb{T}^{I}_{BC}=0\,,\qquad
\ringh^{AB} \mathbb{T}^{I}_{AB}=0\,,
\label{8VI23.7}
\end{equation}
with eigenvalues $k^2 = k^2(I)$, $k_V^2 = k_V^2(I)$, $k_T^2= k_T^2(I)$.
On $S^{d}$ the eigenvalues are%
~\cite[Sections 2.1, 5.1 and 5.2]{KodamaIshibashiMaster} 
\begin{align}
	k^2&=\ell (\ell+d-1)\,, & \ell&=0,1,2,\dots\,,\label{K1k}\\
	k_V^2&=\ell (\ell+d-1)-1\,, & \ell&=1,2,3\dots\,,\label{K1kV}\\
	k_T^2&=\ell(\ell+d-1)-2\,, & \ell&=2,3,4,\dots\,, \quad d>2\label{K1kT}\,.
\end{align}

For the purposes of the main body of this paper we  need to calculate $Ph^{\red[S]}$, $Ph^{\red[V]}$, and $P\red{h^{\red[TT]}}$. We start by noting that  for any symmetric traceless 2-covariant tensor $h$, we have
$$
 P\red{h^{\red[TT]}} \equiv C \zdivtwo \red{h^{\red[TT]}} = 0
  \,.
$$
Next, in order to determine $Ph^{\red[S]}$ and $Ph^{\red[V]}$, the following identities will be useful (the reader might find useful the commutation relations of Appendix~\ref{App19V23.1} in their derivation):
\ptcheck{10VI, ptc+Wan}
\begin{align}
    \TSzlap \mathbb{S}_{AB}^I
     \checkmark
      &= (-k^2 + 2 d)\mathbb{S}_{AB}^I \,,
    \label{5VI23.2a}
\\
    \TSzlap \mathbb{V}_{AB}^I
     \checkmark
     &= (-k_V^2 + d + 1) \mathbb{V}_{AB}^I \,,
    \label{5VI23.2}
\\
    \zspaceD^A \mathbb{S}_{AB}^I
     \checkmark
     &= \frac{(d - k^2)(d-1)}{k^2d} \zspaceD_B \mathbb{S}^I \,, \quad k^2>0 \,,
    \label{8VI23.1}
\\
    \TSzlap  \zspaceD_A \mathbb{S}^I
     \checkmark
      &= (d-1- k^2) \zspaceD_A \mathbb{S}^I \,, \quad k^2\geq 0 \,,
    \label{8VI23.3}
\\
    \zspaceD^A \mathbb{V}_{AB}^I
     \checkmark
      &= \frac{k_V^2 -d+1}{2 k_V} \mathbb{V}_{B} \,, \quad k_V>0 \,.
    \label{8VI23.2}
\end{align}
Then:
\ptcheck{10VI, PTC + Wan}
\begin{align}
    P (h^{\red[V]})_{EF} &= - \sum_{I}  H_{I}^{\red[V]} \frac{1}{2k_V} C \Big[\zspaceD^A(\zspaceD_A\mathbb{V}^{I}_B
	+\zspaceD_B\mathbb{V}^{I}_A
    )
    \Big]_{EF}
    \nonumber
    \\
    &=
    - \sum_{I}  H_{I}^{\red[V]} \frac{1}{2k_V} C \Big[-k_V^2 \mathbb{V}^{I}_B
	+ \zspaceD^C\zspaceD_B\mathbb{V}^{I}_C
    \Big]_{EF}
    \nonumber
    \\
    &=
     \sum_{I}  H_{I}^{\red[V]} \frac{1}{2k_V} C \Big[(k_V^2 - (d-1) )\mathbb{V}^{I}_B
    \Big]_{EF}
    \nonumber
    \\
    &=
    \sum_{I} - H_{I}^{\red[V]}   (k_V^2 - (d-1) )\mathbb{V}^{I}_{EF}
 \,,
 \label{1III23.6}
\end{align}
and
\ptcheck{10VI, PTC + Wan}
\begin{align}
    \red{P(h^{\red[S]})}_{EF}  &= \sum_I H_I^{\red[S]} C\Big[\zspaceD^A\Big(\frac{1}{k^2} \zspaceD_A \zspaceD_B \mathbb{S}^I + \frac{1}{d} \ringh_{AB} \mathbb{S}^I \Big) \Big]_{EF}
    \nonumber
\\
            &= \sum_I H_I^{\red[S]} C\Big[\Big(\frac{1}{k^2} \TSzlap \zspaceD_B \mathbb{S}^I + \frac{1}{d}\zspaceD_B \mathbb{S}^I \Big) \Big]_{EF}
            \nonumber
\\
            &= \sum_I H_I^{\red[S]} C\Big[\Big(\frac{1}{k^2}\zspaceD_B  \TSzlap \mathbb{S}^I + \frac{d-1}{k^2}\zspaceD_B \mathbb{S}^I + \frac{1}{d}\zspaceD_B \mathbb{S}^I \Big) \Big]_{EF}
            \nonumber
\\
            &= \sum_I H_I^{\red[S]} C\Big[\Big(- \zspaceD_B   \mathbb{S}^I + \frac{d-1}{k^2}\zspaceD_B \mathbb{S}^I + \frac{1}{d}\zspaceD_B \mathbb{S}^I \Big) \Big]_{EF}
            \nonumber
\\
            &= \sum_I H_I^{\red[S]} \Big(- 1 + \frac{d-1}{k^2} + \frac{1}{d}\Big) C[\zspaceD_B \mathbb{S}^I]_{EF}
            \nonumber
\\
            &= \sum_I H_I^{\red[S]} \Big(- k^2 + d-1 + \frac{k^2}{d}\Big) \mathbb{S}_{EF}^I
            \nonumber
\\
            &=-\sum_I H_I^{\red[S]} \left({\frac{(d-k^2)(d-1) }{d}}\right) \mathbb{S}_{EF}^I \,.
            \label{1III23.5}
\end{align}
Note that if we write $k^2 = \ell(\ell+d -1)$, then   \ptcheck{10VI, PTC + Wan}
\begin{align}
   {\frac{(d-k^2)(d-1) }{d}}=  {\frac{(\ell-1)(d-1)(\ell+d)}{d}} \,.
            \label{1III23.5b}
\end{align}

\subsection{Conformal Killing vectors on the sphere}

 We have the following properties of conformal Killing vectors on $S^d$: 

\begin{Proposition}
 \label{P11VI23.1}
 Consider $S^d$, $d\ge 2$, with the canonical metric.
\begin{enumerate}
  \item The space of proper conformal Killing vectors is spanned by gradients of $\ell=1$ spherical harmonics.
  \item Let $C_A$ be a conformal Killing vector, then $\zspaceD^A C_A$ is an $\ell=1$ spherical harmonic.
\end{enumerate}
\end{Proposition}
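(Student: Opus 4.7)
\medskip

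\noindent\emph{Proof proposal.}  The plan is to prove part (2) first by a two-step divergence computation, then deduce part (1) by an explicit decomposition of any conformal Killing vector.

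For part (2), I would set $\phi:=\zspaceD^A C_A$ and begin from the conformal Killing equation
$\zspaceD_A C_B+\zspaceD_B C_A=\tfrac{2}{d}\phi\,\ringh_{AB}$. Taking one $\zspaceD^A$ and commuting derivatives (using that $S^d$ is Einstein with Ricci tensor $(d-1)\myGauss\,\ringh_{AB}$, $\myGauss=1$) yields the identity
$$
(\TSzlap+(d-1)\myGauss)C_B=-\tfrac{d-2}{d}\zspaceD_B\phi\,,
$$
which is already essentially \eqref{30X22.CKV3}. Applying $\zspaceD^B$ to this and using the Weitzenb\"ock/Bochner identity on an Einstein manifold, which gives $\zspaceD^B\TSzlap C_B=\TSzlap\phi+(d-1)\myGauss\,\phi$ (equivalently, since $\Delta_{dR}$ commutes with the codifferential, one has $\zspaceD^B\Delta_{dR}C_B=-\TSzlap\phi$, and $\TSzlap=-\Delta_{dR}+(d-1)\myGauss$ on one-forms), one obtains after straightforward algebra
$$
\TSzlap\phi=-d\myGauss\,\phi\,.
$$
On $(S^d,\ringh)$ with the canonical metric the spectrum of $\TSzlap$ on functions is $\{-\ell(\ell+d-1):\ell\in\N\}$, and the eigenvalue $-d$ corresponds precisely to $\ell=1$, proving part (2).

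For part (1), I would first verify that the gradient of any $\ell=1$ spherical harmonic is a proper CKV. Setting $V_A:=\zspaceD_A Y$ with $\TSzlap Y=-d\myGauss Y$ and $\psi_{AB}:=\TS[\zspaceD_A V_B]=\zspaceD_A\zspaceD_B Y-\tfrac{1}{d}\TSzlap Y\,\ringh_{AB}$, a short commutation computation based on $[\TSzlap,\zspaceD_B]Y=(d-1)\myGauss\,\zspaceD_B Y$ shows $\zspaceD^A\psi_{AB}=0$. Then an integration by parts, using that $\psi$ is traceless and divergence-free,
$$
\int_{S^d}\psi^{AB}\psi_{AB}\,d\mu_{\ringh}=\int_{S^d}\psi^{AB}\zspaceD_A\zspaceD_B Y\,d\mu_{\ringh}=-\int_{S^d}(\zspaceD^A\psi_{AB})\zspaceD^B Y\,d\mu_{\ringh}=0\,,
$$
forces $\psi\equiv 0$, so $V$ is a CKV, and it is proper since $\zspaceD^A V_A=-d\myGauss Y\not\equiv 0$.

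For the converse, given an arbitrary CKV $C$ I would set $\phi:=\zspaceD^A C_A$, which by part (2) is an $\ell=1$ spherical harmonic, and define $V_A:=-\tfrac{1}{d\myGauss}\zspaceD_A\phi$. By the previous paragraph $V$ is a CKV with $\zspaceD^A V_A=\phi$, so $C-V$ is a divergence-free CKV; evaluating the CKV equation with vanishing divergence shows that $C-V$ is a Killing vector.  Hence every CKV decomposes as the sum of a gradient of an $\ell=1$ spherical harmonic and a Killing vector, which is the content of part (1). The main obstacle I anticipate is keeping sign conventions consistent between the form of the Bochner identity used here and the ones in~\eqref{30X22.CKV3}; I would sanity-check the end result by verifying the dimension count $(d+1)(d+2)/2-d(d+1)/2=d+1$ (the dimension of the space of $\ell=1$ spherical harmonics) matches the dimension of the conformal group of $S^d$ modulo its isometry group.
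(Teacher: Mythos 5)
Your proof is correct. For part (2) your argument coincides with the paper's: both take the divergence of the contracted conformal Killing identity \eqref{30X22.CKV3} and use the commutation relation \eqref{14VII23.f1} to land on $\TSzlap\,\zspaceD^AC_A=-d\,\myGauss\,\zspaceD^AC_A$, i.e.\ the $\ell=1$ eigenvalue equation; only your packaging via the Hodge--de Rham Laplacian differs, and your signs are consistent with \eqref{14VII23.f1}.

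For part (1), however, you take a genuinely different route. The paper computes the $L^2$-norm of $C(\zspaceD_C\mathbb{S}^I)$ using the spherical-harmonic identity \eqref{8VI23.1}, concludes that gradients of scalar harmonics are conformal Killing precisely for $k^2=d$ (i.e.\ $\ell=1$), and then closes the argument by a dimension count against the known dimension $(d+1)(d+2)/2$ of the conformal group of $S^d$. You instead prove the exact decomposition $\CKV=\KV\oplus\{\zspaceD_AY:\ \TSzlap Y=-d\,\myGauss\,Y\}$: first you check directly that $\TS[\zspaceD_A\zspaceD_BY]$ is divergence-free for an $\ell=1$ harmonic $Y$ and kill it by the integration-by-parts trick, then you use part (2) to subtract the gradient piece $-\tfrac1{d\myGauss}\zspaceD_A(\zspaceD^BC_B)$ from an arbitrary conformal Killing vector, leaving a divergence-free conformal Killing vector, hence a Killing vector. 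Your approach buys independence from the harmonic-tensor machinery of \eqref{8VI23.1} and from any external input on the size of the conformal group (your dimension count $(d+1)(d+2)/2-d(d+1)/2=d+1$ is only a sanity check, not a logical ingredient), and it yields the slightly stronger structural statement $\CKV=\KV\oplus\operatorname{grad}(\ell{=}1)$ rather than a spanning statement obtained by matching dimensions; the paper's route, by contrast, fits into the surrounding appendix where the tensor-harmonic identities are in any case needed and reuses them at no extra cost. Both are complete proofs.
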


\proof
1.
Equation \eqref{8VI23.1} implies   that
\begin{align}
    \int_{\secN} C(\zspaceD_C \mathbb{S}^I)^{AB} C(\zspaceD_C \mathbb{S}^I)_{AB} \sm
    &
    =
     -\int_{\secN} \zspaceD^A \mathbb{S}^I \zspaceD^B
      \underbrace{
       C(\zspaceD_C \mathbb{S}^I)_{AB}
       }_{k^2 \mathbb{S}^I _{AB}}\sm
    \nonumber
    \\
    & =
    -\int_{\secN} \zspaceD^A \mathbb{S}^I \frac{(d-k^2)(d-1)}{d}\zspaceD_A \mathbb{S}^I\sm
     \,,
\end{align}
which vanishes if and only if $k^2 = d$;  equivalently  $\ell=1$. The result follows by noting that the dimensions of both spaces coincide.

2.{Taking the divergence of  \eqref{30X22.CKV3} on a $d$-dimensional sphere  with the canonical metric and using the commutation relation \eqref{14VII23.f1}, one finds
 \begin{equation}
 	\frac{2(d-1)}{d}\left(\TSzlap +d \right)\zspaceD^AC_A=0\,,
\end{equation}
which is the scalar spherical-harmonic equation \eqref{K1k} with $\ell=1$.

%
\qed

\subsection{$\hat D$}
\label{ss9VI23.1}

\index{kernel!$\overadd{i}{\chi}\ofP \, \circ\, C$}%
\index{kernel!$\overadd{i}{\psi}\ofP$}%
We now determine the kernels of the operators $\overadd{i}{\chi}\ofP \, \circ\, C$ and $\overadd{i}{\psi}\ofP$ appearing in \eqref{6III23.w2} and \eqref{6III23.w9} on $S^{n-1}$ (and hence $\myGauss=1$, $\tric(h)_{AB} = - h_{AB}$).

We begin with $\overadd{i}{\chi}\ofP \, \circ\, C$, for which we need the the operator $\ck{-(i+3)}{\ofPnoP}$ acting on symmetric traceless 2-covariant tensors $h$ on $S^{n-1}$:
 \ptcheck{8VI23}
\begin{align}
    \ck{-(i+3)}{\ofPnoP} h_{AB} \equiv
    \frac{1}{2 i + n-1}
    \bigg(
    \frac{2 (n-1)}{i (i+n) } P
    + \TSzlap +i (i+n)+ n - 3 \bigg)
    h_{AB}\,.
    \label{5VI23.1}
\end{align}
Using the decomposition \eqref{7VI17.101PK}, we shall analyse the action of these operators 
on $\mathbb{S}^I_{AB}$ and $\mathbb{V}_{AB}^I$  separately (we leave out the analysis for $\mathbb{T}^I_{AB}$ as these are irrelevant here).

First, using \eqref{5VI23.2a} and \eqref{1III23.5} we have
\ptcheck{10VI; with mathematica}
\begin{align}
    \ck{-(i+3)}{\ofPnoP} \mathbb{S}^I_{AB}
    &=
    \frac{1}{2 i + n-1}
    \bigg(
    \frac{2 (n-1)}{i (i+n) } P
    + \TSzlap +i (i+n)+ n - 3 \bigg)
    \mathbb{S}^I_{AB}
    \nonumber
\\
    &=
    \frac{1}{2 i + n-1}
    \bigg(
    -\frac{2 (\ell-1)(n-2)(\ell+n-1)}{i (i+n) }
    -\ell(\ell+n-2)
    \nonumber
    \\
    & \qquad \qquad \qquad \qquad
    + 2 (n-1)
    +i (i+n)+ n - 3 \bigg)
    \mathbb{S}^I_{AB}
    \nonumber
\\
    &=
    \frac{(i+2) (i-\ell+1) (i+n-2) (i+\ell+n-1)}{i (i+n) (2 i+n-1)}\mathbb{S}^I_{AB} \,,
    \label{5VI23.5}
\end{align}
which gives zero if and only if $\ell = 1+i$ under our restrictions on the parameters involved. Thus the spherical harmonic tensor $\mathbb{S}^I_{AB}$ of mode $\ell = 1+i$ lies in the kernel of $ \ck{-(i+3)}{\ofPnoP}$. Similarly, using \eqref{5VI23.2} and \eqref{1III23.6} one obtains,
\ptcheck{10VI; against mathematica file}
\begin{align}
    \ck{-(i+3)}{\ofPnoP} \mathbb{V}^I_{AB}
    &=
    \frac{(i+1) (i-\ell+1) (i+n-1) (i+\ell+n-1)}{i (i+n) (2 i+n-1)} \mathbb{V}^I_{AB} \,,
    \label{5VI23.4}
\end{align}
which gives zero if and only if $\ell = 1+i$
\ptcheck{10VI}.

We conclude that:
\ptcheck{10VI}
\begin{itemize}
\index{kernel!$\overadd{i}{\chi}\ofP \, \circ\, C$}
    \item[(i)] spherical harmonic vectors $\zspaceD_A\mathbb{S}^I$ and $\mathbb{V}^I_{A}$ of modes $\ell = 1,2,...,i+1$ span the kernel of $\overadd{i}{\chi}\ofP \, \circ\, C$ for $i\geq 0$.
\end{itemize}

We move on now to $\overadd{1}{\psi}\ofP$ acting on symmetric traceless 2-covariant tensors $h$ on $S^{n-1}$:
\begin{align}
     \overadd{1}{\psi}\ofP  \, h_{AB}& \equiv - \bigg[\frac{4}{n+1} P + 1 - \frac{1}{2}\TSzlap
       +\frac{(n-3)(n-1)}{8}\bigg] h_{AB}
       \,.
       \label{24III23.1}
\end{align}
%
The first term in \eqref{24III23.1} gives zero when acting on $\red{h^{\red[TT]}}_{AB}$ in which case we are left with
\begin{align}
     \overadd{1}{\psi}\ofP \, \red{h^{\red[TT]}}_{AB}& =
       \frac{1}{2}\bigg( - 2
       + \TSzlap
       - \frac{(n-3)(n-1)}{4} \bigg) \red{h^{\red[TT]}}_{AB} \,.
       \label{24III23.2}
\end{align}
Since $\TSzlap$ is negative, it follows that $\overadd{1}{\psi}\ofP$ has no kernel when acting on $\TTt$ tensors.
\ptcheck{10VI; up to here}

Next, we consider the action of $\overadd{1}{\psi}\ofP$ on $\mathbb{V}^I_{AB}$. From \eqref{5VI23.2} and \eqref{1III23.6}, we have,
\begin{align}
     \overadd{1}{\psi}\ofP \, \mathbb{V}_{AB}^I& \equiv - \bigg[\frac{4}{n+1} P +  1  - \frac{1}{2}\TSzlap
       +\frac{(n-3)(n-1)}{8}\bigg] \mathbb{V}_{AB}^I
       \nonumber
\\
    &=
    -\frac{(n-3) (2 \ell+n-3) (2 \ell+n-1)}{8 (n+1)}
    \mathbb{V}_{AB}^I
    \,,
       \label{26III23.1}
\end{align}
which gives zero for $n>3$ only when $\ell = \frac{1-n}{2},\frac{3-n}{2}$, both of which are negative. Thus the harmonic tensors $\mathbb{V}_{AB}^I$ do not lie in the kernel of $\overadd{1}{\psi}\ofP$.
Meanwhile, a similar calculation using \eqref{5VI23.2a} and \eqref{1III23.5} gives,
\begin{align}
     \overadd{1}{\psi}\ofP \, \mathbb{S}_{AB}^I& =
    -\frac{(n-5) (n-3) (2 \ell+n-3) (2 \ell+n-1)}{8 \left(n^2-1\right)}
    \mathbb{S}_{AB}^I
    \,.
       \label{26III23.1b}
\end{align}
Once again, the right-hand side evaluates to zero for $n>5$ only when $\ell = \frac{1-n}{2},\frac{3-n}{2}$, both of which are negative. We conclude that, when $n>5$, the operator  $\overadd{1}{\psi}\ofP$ has trivial kernel.
\ptcheck{10VI; up to here against mathematica file operatorsS2casev2.nb}

Finally,  in order to analyse  $\overadd{i}{\psi}\ofP$, $i>1$, we need to understand the operators
%
\begin{equation}
    \ck{\frac{n-5-2i}{2}}{\ofPnoP}
    \equiv
   \frac{1}{i-1} \bigg(
   \frac{4 (n-1) }{\left((1-2 i)^2-n^2\right)} P
   + \frac{1}{2} \TSzlap - 1
   + \frac{(2 i-n+1) (2 i+n-3)}{8}
   \bigg)  \,.
\end{equation}
Proceeding as before, making use of \eqref{5VI23.2a}, \eqref{K1k} and \eqref{1III23.5} gives,
\begin{align}
    \ck{\frac{n-5-2i}{2}}{\ofPnoP}
   &\,   \mathbb{S}^I_{AB} 
     \nonumber
\\
 &
 \hspace{-1.5cm}
 =
    \frac{(2 i-n+3) (2 i+n-5) (2 i-2 \ell-n+1) (2 i+2 \ell+n-3)}{8 (i-1) (2 i-n-1) (2 i+n-1)}
    \mathbb{S}^I_{AB}\,.
\end{align}
Keeping in mind that $i>1$ and $\ell \geq 0$, the right-hand side vanishes if and only if $i=\frac{n-3}{2}$ or $\ell = i-\frac{n-1}{2}$. In addition, making use of \eqref{5VI23.2}, \eqref{K1kV} and \eqref{1III23.6} gives,
\begin{align}
    \ck{\frac{n-5-2i}{2}}{\ofPnoP}
     &\, \mathbb{V}^I_{AB}
     \nonumber
\\
 &
 \hspace{-1.5cm}
 =
    \frac{(2 i-n+1) (2 i+n-3) (2 i-2 \ell-n+1) (2 i+2 \ell+n-3)}{8 (i-1) (2 i-n-1) (2 i+n-1)}
    \mathbb{V}^I_{AB}\,;
\end{align}
the right-hand side vanishes if and only if $i=\frac{n-1}{2}$ or $\ell = i-\frac{n-1}{2}$. Next, making use of \eqref{K1kT}, we have,
\begin{align}
    \ck{\frac{n-5-2i}{2}}{\ofPnoP}\mathbb{T}^I_{AB}
    &=
    \frac{(2 i-2 \ell-n+1) (2 i+2 \ell+n-3)}{8 (i-1)}\mathbb{T}^I_{AB}\,,
\end{align}
the right-hand side of which is zero if and only if $\ell = i-\frac{n-1}{2}$.

Recall that when $n$ is odd and $i=\frac{n+1}{2}$ we have
\begin{equation}
    \overadd{\frac{n+1}{2}}{\psi}\ofP  = \mathring{\mathcal{K}}(-3,P) \overadd{\frac{n-1}{2}}{\psi}\ofP \, \,.
\end{equation}
It can be verified by a similar calculation as above
that $\mathring{\mathcal{K}}(-3,P)$ has trivial kernel.

Summarising (cf. Proposition~\ref{P9X23.1}):

\begin{proposition}
 \label{P28XI23.1}
 On $S^d$, $d\ge 2$, with the canonical metric,
\index{kernel!psi@$\overadd{i}{\psi}\ofP$}%
\index{psi@$\overadd{i}{\psi}\ofP$!kernel}%
%
\begin{itemize}
    \item[(i)] When $n$ is even, the operators $\overadd{i}{\psi}\ofP$ have trivial kernel for all $i \geq 1$.
    \item[(ii)] When $n$ is odd, the operators $\overadd{i}{\psi}\ofP$ have trivial kernel for $i<\frac{n-3}{2}$;
    the kernel of the operator $\overadd{\frac{n-3}{2}}{\psi}\ofP$ is spanned by all spherical harmonic tensors $\mathbb{S}^I_{AB}$;
    the kernels of the operators $\overadd{\frac{n \pm 1}{2}}{\psi}\ofP$ are spanned by all spherical harmonic tensors $\mathbb{V}^I_{AB}$ and $\mathbb{S}^I_{AB}$; the kernels of the operators $\overadd{\frac{n + 1}{2}+j}{\psi}\ofP$ for $j\geq 1 $ are spanned by the spherical harmonic tensors $\mathbb{V}^I_{AB}$ and $\mathbb{S}^I_{AB}$, and spherical harmonic tensors $\mathbb{T}^I_{AB}$ of modes $\ell\leq j+1$.
\end{itemize}
\end{proposition}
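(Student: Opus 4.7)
The plan is to simultaneously diagonalize $\overadd{i}{\psi}\ofP$ on the spherical harmonic tensor basis. First I would invoke the decomposition $h = h^{[S]} + h^{[V]} + h^{[\TTt]}$ of \eqref{7VI17.101PK} together with the further expansion \eqref{06IX17.2} into $\mathbb{S}^I_{AB}$, $\mathbb{V}^I_{AB}$, $\mathbb{T}^I_{AB}$. Using the identities \eqref{5VI23.2a}--\eqref{8VI23.2} and \eqref{1III23.5}--\eqref{1III23.6}, together with $P\mathbb{T}^I_{AB}=0$ (from \eqref{8VI23.7}) and the fact that $\tric$ acts as $-\mathrm{id}$ on traceless symmetric $2$-tensors on $S^{n-1}$ (since $\myGauss=1$), each basis tensor is a joint eigentensor of the pairwise commuting operators $P$, $\TSzlap$, $\tric$. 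Hence $\overadd{i}{\psi}\ofP$, being a polynomial in these operators, acts as a scalar on each basis element, and its kernel is the span of those eigentensors on which this scalar vanishes.

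By the recursion \eqref{6III23.w9}--\eqref{4IV23.2}, this scalar factors as a product of eigenvalues of $\overadd{1}{\psi}\ofP$ and of the operators $\ck{\frac{n-5-2j}{2}}{\ofPnoP}$ for $j=2,\dots,i$, with $\ck{-3}{\ofPnoP}$ replaced by $\zck{-3}{\ofPnoP}$ when $n$ is odd and $j=\frac{n+1}{2}$. The eigenvalues of $\ck{\frac{n-5-2j}{2}}{\ofPnoP}$ on $\mathbb{S}^I_{AB}$ and $\mathbb{V}^I_{AB}$ are recorded in \eqref{5VI23.5}--\eqref{5VI23.4}, and the action on $\mathbb{T}^I_{AB}$ follows immediately from $P\mathbb{T}^I_{AB}=0$ together with \eqref{K1kT}; the eigenvalues of $\overadd{1}{\psi}\ofP$ are in \eqref{24III23.2}, \eqref{26III23.1}, \eqref{26III23.1b}. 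A short direct computation using \eqref{18IV23.1} shows that $\zck{-3}{\ofPnoP}$ has eigenvalue proportional to $(\ell-1)(\ell+n-1)$ on $\mathbb{T}^I_{AB}$, which never vanishes for $\ell\geq 2$, and analogous checks on $\mathbb{S}^I_{AB}$, $\mathbb{V}^I_{AB}$ confirm that this modified factor contributes no new kernel elements.

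With these eigenvalues in hand the case analysis is essentially mechanical. For $n$ even the critical integer indices $j\in\{\tfrac{n-3}{2},\tfrac{n-1}{2}\}$ never arise, and the secondary vanishing condition $\ell = j-\tfrac{n-1}{2}$ produces only half-integer candidates; combined with triviality of $\ker\overadd{1}{\psi}\ofP$ for $n>5$, no factor can vanish and the kernel is trivial. For $n$ odd with $i<\tfrac{n-3}{2}$, every index $j$ in the product is strictly less than $\tfrac{n-3}{2}$ and every candidate $\ell = j-\tfrac{n-1}{2}$ is negative, so again the kernel is trivial. For $i=\tfrac{n-3}{2}$ the factor $\ck{-1}{\ofPnoP}$ annihilates every $\mathbb{S}^I_{AB}$ (independently of $\ell$) and nothing else; incrementing $i$ by one introduces $\ck{-2}{\ofPnoP}$, which annihilates every $\mathbb{V}^I_{AB}$. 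For $i=\tfrac{n+1}{2}+j$ with $j\geq 1$ the additional factors $\ck{-3-j'}{\ofPnoP}$ with $j'=1,\dots,j$ annihilate $\mathbb{T}^I_{AB}$ precisely at $\ell = j'+1$, thereby sweeping out all $\mathbb{T}$-modes with $2\leq\ell\leq j+1$.

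The only real obstacle is the careful bookkeeping to verify that the switch $\ck{-3}{\ofPnoP}\to\zck{-3}{\ofPnoP}$ at $j=\tfrac{n+1}{2}$ does not contribute unexpected kernel elements---which is the purpose of the short direct computation of the $\zck{-3}{\ofPnoP}$ eigenvalues alluded to above---and to check that no earlier factor coincidentally kills an allowed mode. Once these routine verifications are dispatched, the kernel structure stated in the proposition follows by straightforward inspection of the product of scalar eigenvalues.
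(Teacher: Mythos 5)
Your proposal is correct and follows essentially the same route as the paper's Appendix on operators on spheres: expand in the scalar/vector/tensor harmonic basis $\mathbb{S}^I_{AB}$, $\mathbb{V}^I_{AB}$, $\mathbb{T}^I_{AB}$, on which $P$, $\TSzlap$, $\tric$ act as scalars, and read off the kernel of $\overadd{i}{\psi}\ofP$ from the vanishing of the product of the eigenvalues of $\overadd{1}{\psi}\ofP$ and of the factors $\ck{\frac{n-5-2j}{2}}{\ofPnoP}$ (with $\zck{-3}{\ofPnoP}$ at $j=\frac{n+1}{2}$ for $n$ odd). Your explicit eigenvalue $-\tfrac{(\ell-1)(\ell+n-1)}{n-1}$ of $\zck{-3}{\ofPnoP}$ on $\mathbb{T}^I_{AB}$ just spells out the step the paper dispatches with ``a similar calculation''; no gap.
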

\ptcheck{30XI23; the whole section up to here}

\subsection{$\myhatopL$}
\label{ss14VI23.1}
\index{L@$\myhatopL$}%
\index{kernel!$\myhatopL$}%

We now consider the gauge operators $\myhatopL$ appearing in column $5$ of Table \ref{T6V23.1}, obtaining their kernels and the kernels of their adjoints.

On $S^{n-1}$ the kernel of the operator $\hLop_n$ of  \eqref{24IV23.1xs} is spanned by $\ell=0,1$ spherical harmonic functions. This follows from the negativity of the operator $P$ and the fact that $\overadd{\frac{n-5}{2}}{\psi}\ofP$ has trivial kernel by Proposition~\ref{P28XI23.1}.

For the adjoint operator, again if $n\ne 5$ we have
$$\hLopdagger  = c_n
\zdivone \circ \zdivtwo \circ \bigg( (n-1) P - 2 (n-2)^2 \bigg) \overadd{\frac{n-5}2 }{\psi}\ofP   \,,$$
for some constant $c_n\ne 0$.
For the tensor fields $  \mathbb{S}^I_{AB} = \frac{1}{k^2}\TS[\zspaceD_A\zspaceD_B \mathbb{S}^I]$, we have, after making use of \eqref{8VI23.1},
\begin{align}
    \zdivone \circ \zdivtwo \mathbb{S}_{AB}^I
    & =
    \frac{(k^2-n+1)(n-2)}{n-1} \mathbb{S}^I
    \,,
\end{align}
the right-hand side of which vanishes for $k^2 = n -1$, corresponding to $\ell=1$. We also have
\begin{equation}
    \zdivone \circ \zdivtwo \mathbb{V}_{AB}^I =0 =
    \zdivone \circ \zdivtwo \mathbb{T}_{AB}^I \,.
\end{equation}
We conclude that the kernel of $\hLopdagger$ is spanned by the tensors $\mathbb{S}_{AB}^I$ of mode $\ell = 1$, as well as $\mathbb{V}_{AB}^I$ and $\mathbb{T}_{AB}^I$ of all modes.
 \ptcheck{30XI23}

Next, using
 \eqref{5XI23.41}-\eqref{30VI.1} the operator $\Lop$ appearing in \eqref{24IV23.3} can be rewritten as
 \ptcheck{8XII23}
\begin{align}
    \Lop(\xi)_{AB}
   & =  \frac{\overadd{\frac{n-5}{2}}{\psi}\ofP}{(n-1)(n-5)}
     \bigg(
     (\TSzlap - 2 (n - 1) ) (\TSzlap  - (n+1) )
     C(\xi)_{AB}
     \nonumber
     \\
     & \qquad\qquad\qquad
     -\frac{2 (n-3)}{n-2} (\TSzlap + 3 -2 n  )
     \TS[\zspaceD_A\zspaceD_B\zspaceD_C\xi^C]
     \bigg) \,.
    \label{6VI.1}
\end{align}
Letting $\xi_A = \frac{1}{k^2} \zspaceD_A \mathbb{S}^I$ in \eqref{6VI.1} gives
\begin{align}
     \frac{1}{k^2} \Lop( \zspaceD_C \mathbb{S}^I)_{AB}
     =
     - \overadd{\frac{n-5}{2}}{\psi}\ofP
     \frac{\ell (l+1) (n-4) (\ell+n-3) (\ell+n-2)}{(n-5) (n-2) (n-1)}\mathbb{S}^I_{AB}\,,
\end{align}
the right-hand side of which vanishes for $\ell=0$, and is non-vanishing for $\ell>0$ when $n>3$ is odd.
Letting $\xi_A = -\frac{1}{k_V}  \mathbb{V}^I_A$ in \eqref{6VI.1} gives
\begin{align}
     -\frac{1}{k_V}  \Lop(\mathbb{V}^I)_{AB}
     =
     \overadd{\frac{n-5}{2}}{\psi}\ofP \, \frac{\ell (\ell+1) (\ell+n-3) (\ell+n-2)}{(n-5) (n-1)} \mathbb{V}^I_{AB}\,,
\end{align}
the right-hand side of which is non-vanishing for $\ell>0$. We conclude that the kernel of $\Lop$ consists of conformal Killing vectors.

\FGp{18XII}
In order to determine the adjoint operator $\Lndagger = \underline{\Lndagger} \circ \overadd{\frac{n-5}{2}}{\psi}\ofP$,
the notation of Appendix~\ref{App8XII23.1} together with  the commutation relations therein are useful; one finds
\begin{eqnarray}\label{9XII23.1}
    &
      (\TSzlap - 2 (n - 1) ) (\TSzlap  - (n+1) )
     C(\xi)_{AB} = \TS
     \Big[
     \zspaceD_A  (\zTSlap_V + n-3 ) \zTSlap_V \xi_B
     \big]
     \,,
     &
\\
 &
  (\TSzlap + 3 -2 n  )
     \TS[\zspaceD_A\zspaceD_B\zspaceD_C\xi^C]
      = \TS\big[
           \zspaceD_A \zspaceD_B \zspaceD_C (\zTSlap_V + 1) \xi^C
           \big]
           \,,
\end{eqnarray}
}
which immediately leads to
\begin{align}
    \underline{\Lndagger}(h)_C
     &  =  \frac{2(n-3)(\zTSlap_V +1)}{(n-5) (n-2) (n-1)} \zspaceD_C \zspaceD^A\zspaceD^B h_{AB}
    - \frac{ \zTSlap_V(\zTSlap_V + n-3 )}{(n-5) (n-1)} \zspaceD^A h_{AC}
    \nonumber
\\
  &  =  \frac{2(n-3)(\TSzlap+3-n)}{(n-5) (n-2) (n-1)} \zspaceD_C \zspaceD^A\zspaceD^B h_{AB}
    - \frac{(\TSzlap-(n-2)) (\TSzlap -1)}{(n-5) (n-1)} \zspaceD^A h_{AC}     \,.
    \label{8VI23.4}
\end{align}
Clearly, $\red{h^{\red[TT]}}$ belongs to the kernel of $\underline{\Lndagger}$.

Setting $h_{AB} = \mathbb{S}^I_{AB}$ we have, after making use of \eqref{8VI23.1}-\eqref{8VI23.3},
\begin{align}
     \underline{\Lndagger}(\mathbb{S}^I)_C
     &=
     -\frac{(n-4) \left(k^2-n+1\right) \left(k^2+n-3\right)}{(n-5) (n-1)^2}
     \zspaceD_A \mathbb{S}^I
     \,,
\end{align}
the right-hand side of which is only vanishing for $k^2 = n-1$ for $n$'s of current interest, which corresponds to the mode $\ell=1$.

Setting, next, setting $h_{AB} = \mathbb{V}_{AB}^I$ in \eqref{8VI23.4} and making use of \eqref{8VI23.2} gives
\begin{align}
    \underline{\Lndagger}(\mathbb{V}^I)_C
    & = \frac{\left(2- k_V^2\right) \left(k_V^2-n+1\right) \left(k_V^2-n+2\right)}{2 k_V (n-5) (n-1)} \mathbb{V}^I_C\,,
\end{align}
the right-hand side of which vanishes for $k_V^2 = 2,n-1, n-2$, the last of which corresponding to the mode $\ell = 1$. 
We conclude that the kernel of $ \underline{\Lndagger}$, and hence of $\Lndagger$, is spanned by $\TTt$ tensors, 
together with $\mathbb{V}^I_{AB}$'s and $\mathbb{S}^I_{AB}$'s of mode $\ell=1$.
\ptcheck{9XII23;the whole section up to here against mathematica file SimplfyingL.nb}

\section{The matrices $\Lambda^{[X]}_k$}
\subsection{Ellipticity}
\label{ss24IX23.2}
 \ptcheck{28IX23, read the whole section without checking the calculations; and then the calculations as well later}

\begin{figure}
\centering
\begin{tikzpicture}
  \matrix (m)[
    matrix of math nodes,
    nodes in empty cells,
    minimum width=width("999988"),
    minimum height=8mm,
  ] {
        & \partial_u^{k_n}\kxi{2}_A &\dots & \partial_u\kxi{2}_A & \kxi{2}_A  & \upsilon_{k,2} & \TTtpvec{\frac{9-n}{2}}  & \dots     & \TTtpvec{1} & \TTtpvec{5}  & \dots   & \TTtpvec{4+k} \\
    \overadd{k}{r} & \frac{n+1}{2}  & & & k & k & &   &  & & & D_1  \\
    \vdots & &\ddots & &  \vdots   &  \vdots  &    &       &     & & &   \\
    \overadd{\frac{n+1}{2}}{r} & C   & & \frac{n+1}{2}  & \frac{n+1}{2}    &\frac{n+1}{2}  &    &       &     & & & \\
    \overadd{\frac{n-1}{2}} r  & B & & & \hlight{\frac{n+1}{2}}   & \frac{n-1}{2}  &  0 &  &  & & & D_2  \\
    \overadd{\frac{n-3}{2}}{r} & & & & \frac{n-1}{2}   & \hlight{\frac{n-1}{2} } &  &  &  & & &A_2  \\
    \overadd{1}{r}  &  & &  & 2  &  2 &\hlight{2} &    &   & & &A_1 \\
    \vdots & & &  & \vdots & \vdots  &  & \ddots & & & & & \\
    \overadd{\frac{n-5}{2}}{r}  & & & &  \frac{n-3}{2}  & \frac{n-3}{2}  &  &  &  \hlight{\frac{n-3}{2}}   & & &  \\
    \overadd{1}{s} & & &  & 2    &  2  &    &   &     & \hlight{2} & &  \\
    \vdots  & & & &  \vdots   &  \vdots  &    &   &     & & \ddots &  \\
    \overadd{k}{s} & & & \phantom{k+1}  &  k+1   & k+1   &    &       &     & &   & \hlight{k+1} \\
  } ;

    \draw (-7.8,5.5) -- (7,5.5);
  \draw (-6.5,-7) -- (-6.5,6.5);
  \draw[rounded corners] (m-2-12.north east) rectangle (m-5-7.north west); 
  \draw[rounded corners] (m-5-12.south east) rectangle (m-5-8.north west); 
  \draw[rounded corners] (m-6-12.south east) rectangle (m-6-7.north west);
  \draw[rounded corners] (m-7-7.north west) rectangle (m-12-12.south east) ;
  \draw[rounded corners] (m-5-2.north west) rectangle (m-12-4.south east) ; 
  \draw[rounded corners] (m-2-2.north west) rectangle (m-4-4.south east) ; 
\end{tikzpicture}
\caption{The matrix
$\Lambda^{[S]}_k$, $k>\frac{n-1}{2}$, for $n>5$; the remaining matrices $\Lambda^{[X]}_k$ have similar structure. The numbers $x$ appearing in the entries indicate that the operator appearing at the associated position is of   order  $2x$. The alphabets $A_1, A_2,B, C, D_1,D_2$ are used to denote the various \red{block}s of the matrix for easier reference in the text. In each row the highlighted number equals $s_i/2$, with the $t_j$'s  given by \eqref{15IX23.3}. See Figure \ref{f10IX23.2} for the submatrix denoted $C$.}
\label{f10IX23.1}
\end{figure}

\begin{figure}
\centering
\begin{tikzpicture}
  \matrix (m)[
    matrix of math nodes,
    nodes in empty cells,
    minimum width=width("999988"),
    minimum height=8mm,
  ] {
    & t_j/2 & -k_n & -(k_n-1) & \dots & -3 & -2 & -1 \\
    s_i/2 && \partial_u^{k_n}\kxi{2}_A & \partial_u^{k_n-1}\kxi{2}_A  & \dots   & \partial_u^{3}\kxi{2}_A & \partial_u^{2}\kxi{2}_A  & \partial_u\kxi{2}_A  \\
    k+1 & \overadd{k}{r}    & \frac{n+1}{2} & \frac{n+1}{2}   & \dots & k-3 & k-2 & k-1  \\
    k & \overadd{k-1}{r}    & \frac{n-1}{2} & \frac{n+1}{2}   & \dots & k-4 & k-3 & k-2  \\
    \vdots & \vdots   & \vdots & \vdots  & \ddots  & \vdots & \vdots & \vdots  \\
    \frac{n+7}{2} & \overadd{\frac{n+5}{2}}{r}    & 0
    & 0 & \dots & \frac{n+1}{2} & \frac{n+1}{2} & \frac{n+3}{2}\\
    \frac{n+5}{2} & \overadd{\frac{n+3}{2}}{r}    & 0
    & 0 & \dots & \frac{n-1}{2} & \frac{n+1}{2} & \frac{n+1}{2} \\
    \frac{n+3}{2} & \overadd{\frac{n+1}{2}}{r}    & 0 
    & 0 & \dots & \frac{n-3}{2} & \frac{n-1}{2} &\frac{n+1}{2} \\
  } ;
 \draw (-4,2.2) -- (5.2,2.2);
 \draw (m-2-2.north east) -- (m-8-2.south east);
\end{tikzpicture}
\caption{Submatrix of $\Lambda_k$ corresponding to block $C$ of figure \ref{f10IX23.1}. A choice of $\frac{s_i}{2}$ and $\frac{t_j}{2}$'s required to show ellipticity of $\Lambda_k$ is also indicated in the first column and first row respectively. For convenience and without loss of generality, we have assumed a sufficiently large $k$ ($k > n-1$) here.
}
\label{f10IX23.2}
\end{figure}

In this appendix we show that for each $X\in\{S,V\}$, the matrix of operators $\Lambda^{[X]}_k$ of \eqref{4IX23.1} is elliptic in the sense of Agmon, Douglis and Nirenberg.

First, we recall the definition of ellipticity in the sense of Agmon, Douglis and Nirenberg (ADN). Consider a differential system on $\secN$ of the form
\begin{align}
    \Lambda(\zspaceD) v(x) = u(x)
    \label{9IV24.1}
\end{align}
where $v(x)$ and $u(x)$ are $N$-vector valued functions on $\secN$ and $\Lambda$ is a $N\times N$ matrix of linear partial differential operators, with each $(\Lambda(\zspaceD))^i{}_j$ being a polynomial in $\zspaceD_A$. Here and below, the notation $(A\red{)^i{}_j}$ denotes the $ij$-entry of a matrix $A$. The system \eqref{9IV24.1} is said to be \textit{elliptic}  if there exists integers $s_i, t_j$ with $i,j \in [1,N]$, such that the order of the operator $(\Lambda)^i{}_j$ does not exceed $s_i+t_j$ and furthermore
\begin{align}
    \det \Lambda^0(k) \neq 0 \ \text{for all} \ k \neq 0 \ \text{vectors}\,.
    \label{9IV24.2}
\end{align}
In \eqref{9IV24.2}, $\Lambda^0$ denotes the matrix obtained from $\Lambda$ by keeping only those operators of order exactly $s_i+t_j$ in each $(\Lambda)^i{}_j$ entry.

\index{ellipticity!Lambda@$\Lambda^{[X]}_k$|(}%
\index{Lambda@$\Lambda_k$!ellipticity|(}%
To show that the matrix of operators $\Lambda^{[X]}_k$ of \eqref{4IX23.1} is elliptic in the sense just defined we choose $s_i \,, t_j$ as (see also Figure~\ref{f10IX23.1} and \ref{f10IX23.2})
 \begin{align}
     (s_i) &=
     \begin{cases}
          2 \times (\frac{n-1}{2}\,, 2\,, 3\,, \dots \,, \frac{n-3}{2} \,, 2 \,, 3\,, \dots \,, k+1) \,, & k= \frac{n-3}{2} \,,
          \\
           2 \times (k+1\,, k\,, k-1\,, ... \,, \frac{n-1}{2}\,, 2\,, 3\,, \dots \,, \frac{n-3}{2} \,, 2 \,, 3\,, \dots \,, k+1) \,, & k \geq \frac{n-1}{2} \,,
     \end{cases}
     \label{15IX23.2}
     \\
     (t_j) & =
     \begin{cases}
          (0 \,, \dots \,, 0 ) \,, & k=\frac{n-3}{2}
          \\
          2 \times ( \underbrace{ -k_n \,, -(k_n-1) \,, \dots\,, -2, -1 }_{k_n \text{ terms}} \,, \underbrace{0 \,, \dots \,, 0 }_{2k-k_n \text{ terms}}) \,, & k \geq \frac{n-1}{2}
           \,,
     \end{cases}
     \label{15IX23.3}
 \end{align}
 (recall that $k_n$ has been defined in \eqref{2X23.1}).
 Let $ \Lambda_k^{[0,X]}$ denote the matrix obtained by
  keeping in $(\red{\Lambda_k^{[X]}}\red{)^i{}_j}$
 only those operators which are of order $s_i + t_j$.
We wish to show
that  $ \Lambda_k^{[0,X]}$ is a lower triangular  matrix,
 with the diagonal entries being
 \FGp{14XI}
 \begin{align}
 \begin{cases}
      \big(
     d_2^X \,,
     (\overadd{1}{\tilde\psi}
     \,,
     \overadd{2}{\tilde\psi}
     \,,
 \dots \,,
 \overadd{\frac{n-5}{2}}{\tilde\psi}
 \,,
 \overadd{1}{\tilde\chi}
 \,,
  \overadd{2}{\tilde\chi}
  \,,
 \dots \,
  \overadd{k}{\tilde\chi}
  )\circ \zdivtwo \circ\, C
 \big)
   \,,
 \\
  \big(  \underbrace{\zdivtwo \circ \Lop \,,
 \dots \,,
 \zdivtwo \circ \Lop }_{k_n \text{ terms}}\,,
     \zdivtwo\circ  {\Lop} \,,
     d_2^X \,,
     (\overadd{1}{\tilde\psi}
     \,,
     \overadd{2}{\tilde\psi}
     \,,
 \dots \,,
 \overadd{\frac{n-5}{2}}{\tilde\psi}
 \,,
 \overadd{1}{\tilde\chi}
 \,,
  \overadd{2}{\tilde\chi}
  \,,
 \dots \,
  \overadd{k}{\tilde\chi}
  )\circ \zdivtwo \circ \, C \,,
 \big)  \,,
 \end{cases}
 \label{9IX23.11}
 \end{align}
with the first case for $k=\frac{n-3}{2}$ and the second case for $k\geq\frac{n-1}{2}$, and with
\FGp{14XI}
 \begin{equation}
    d_2^X :=
    \begin{cases}
        \zdivtwo\circ  \underline{\hLop}_n \,, & X= S \,,
        \\
        \overadd{\frac{n-3}{2}}{\tilde\psi}\circ \zdivtwo \circ\, C \,, & X=V\,,
    \end{cases}
 \label{9IX23.12}
 \end{equation}
where the operators $\overadd{p}{\tilde\psi} = \overadd{p}{\tilde\psi}\ofDC$,  $\overadd{p}{\tilde\chi} = \overadd{p}{\tilde\chi}\ofDC$ and $\underline{\hLop}_n\,, \Lop$ are the gauge operators defined in Section \ref{ss9IX23.1}. We emphasise that the operators appearing in \eqref{9IX23.11} are all elliptic (see  Appendix \ref{App12XI22.1h}).

By definition of $u_k$ and $\upsilon_k$, it is clear that the terms in \eqref{9IX23.11} will appear in the respective spots on the diagonal of $ \Lambda_k^{[0,X]}$. Indeed, this follows directly from the transport equations and the gauge transformation of $\overadd{p}{q}_{AB}$ for $p \geq\frac{n-3}{2}$. It remains to show that $ \Lambda_k^{[0,X]}$ is lower-triangular.

First, consider the submatrix of $\Lambda^{[X]}_{k}$ denoted as $A_1$ in Figure \ref{f10IX23.1}. For these entries, the analysis is similar to the convenient case -- in each row, the operators involved are either $\overadd{i}{\tilde\psi}_n \circ \zdivtwo \circ C$ or $\overadd{i}{\tilde\chi}_n\circ \zdivtwo \circ C$ for some $n\in\{\frac{9-n}{2},4+k\}$. By the same reasoning as items (ii) of Section \pref{e3IX23.1} and Section \pref{e3IX23.2}, the operator of highest order amongst the $\overadd{i}{\tilde\psi}_n $'s, resp. $\overadd{i}{\tilde\chi}_n $'s, is
\begin{equation}
    \overadd{i}{\tilde\psi}\ofDC\,, \quad \text{resp.} \quad \overadd{i}{\tilde\chi}\ofDC \,,
    \label{10IX23.9}
\end{equation}
which is of order $2i$; all other $\overadd{i}{\tilde\psi}_n $ and $\overadd{i}{\tilde\chi}_n $ operators are of lower order. We note for later use that this is also true for the block $A_2$ in Figure~\ref{f10IX23.1}.
The above justifies that the submatrix $A_1$ of $\Lambda^{[0,X]}_{k}$ is lower triangular.
\FGp{14XI}

Next, we consider the operators acting on the various ``gauge fields'' appearing in \eqref{9IX23.w1}. Most 
gauge transformations of the fields $\zspaceD^A\overadd{i}{q}_{AB}$. From \eqref{6III23.w8b}, we have
\begin{align}
    \zspaceD^A\overadd{i}{q}_{AB}&= \partial_u \zspaceD^A\overadd{i-1}{q}_{AB}
    - \underbrace{\overadd{i-1}{\tilde\psi}\ofDC
     }_{
     { \text{order $2(i-1)$}}}
       \, \zspaceD^A\qh_{AB}^{(n-5-2i)/2}
    - m^{i-1} \underbrace{\overadd{i-1}{\psi}_{\red{[m]}}}_{
     { \text{order $0$}}
     }
     \zspaceD^A\qh^{(\frac{n-7-2(i-1)(n-1)}{2})}_{AB}
    \nonumber
    \\
    &\quad
    - \sum_{j=1}^{p-1}  m^{j}
    \underbrace{\overadd{i-1}{\tilde\psi}_{j,0}\ofDC}_{
     { \text{order $2(i-1-j)$}}
     } \, \zspaceD^A\qh^{(\frac{1}{2} (n-7- 2 (i-1) - 2 j (n-2)))}_{AB}
    \, ,
     \quad 1 \leq i \leq k \,.
    \label{10IX23.1}
\end{align}
Using \peqref{7III23.w1} and \peqref{23IV23.1} it can be shown inductively
 that for each $1\leq i \leq \frac{n-3}{2}$, in the gauge transformation of $\zspaceD^A\overadd{i}{q}_{AB}$,

\begin{enumerate}
    \item[1.]   the operators acting on the gauge fields $\zspaceD_A \xi{}^u$ and $\xi_A$ are both
     {\text{of order $2(i+1)$}},
         \ptcheck{29IX23 }
         \FGp{14XI}
    \item[2.] 
    the order of the operators acting on $\partial^j_u\xi_A$, $1\leq j\leq i-1$, is { $2(i-j-1)$}
(relevant for region $B$ in Figure~\ref{f10IX23.1}).
    \ptcheck{29IXVI}
\end{enumerate}
To see this, note that the gauge transformation law of $\zspaceD^A\qh^{(n)}_{AB}$, given in \eqref{23IV23.1}, involves only the gauge fields $\zspaceD_A\xi^u$, $\xi_A$ and $\partial_u \xi_A$. The associated operators acting on these fields are of orders four, four and two
respectively. Thus, modulo the first term in \eqref{10IX23.1}, the operators of the highest order in the gauge transformation of $\zspaceD^A\overadd{i}{q}_{AB}$ must come from the second term in \eqref{10IX23.1}, and are given by that acting on $\zspaceD_A\xi^u$ and $\xi_A$. Furthermore, they are of order $2(i+1)$.
Next, the fact that   the operators acting on $\partial^j_u\xi_A$, $1\leq j\leq i-1$, are of order { $2(i-(j-1))$}
follows easily by induction; we leave the details to the reader, after using  \eqref{7III23.w1} for initialisation.
\seccheck{21XII}
Concerning the first line of region $B$ of Figure~\ref{f10IX23.1},  we note that the part of items 1.\ and 2.\ above concerning $\xi_A$ and its $u$-derivatives continues to hold for the gauge transformation of $\overadd{\frac{n-1}{2}}{q}_{AB}$.
 By a similar analysis for the gauge transformation of the $\overadd{i}{\Hf}_{uA}$ fields,
 using the recursion formula \eqref{6III23.w4}, and \eqref{30IX23.1} and \eqref{28II23.1} for initialisation, we have:
\ptcheck{ 29IX }\FGp{14XI}%
for each $1\leq i \leq k$,  the operators acting on the gauge fields $\zspaceD_A \xi{}^u$ and $\xi_A$ are
     {\text{of order $2(i+1)$}},
while those acting on $\partial^j_u\xi_A$, $1\leq j\leq i+1$, are of order
     { $2(i-(j-1))$}. 
Thus, in the  region $B$, the order of the operator  appearing at an $(ij)$-entry is lower than or equal to $s_i + t_j$. 

For the entry immediately to the left of $D_2$, we have $(\Lambda^{[X]}_k\red{)^{k_n+1}{}_{k_n+3}}=0$, which follows from \eqref{22VIII23.2c} with $p=\frac{n-1}{2}$.
Next,
we also have $(\Lambda^{[V]}_k\red{)^{k_n + 1}{}_{k_n+2}}=0$, while $(\Lambda^{[S]}_k\red{)^{k_n+1}{}_{k_n+2}}$
     { \text{is of order $n-1$}}.
The first equality follows from the fact that the field $\kphi{2}_{AB}$ does not appear in the integrated transport equation \eqref{9IX23.w1} of $\overadd{\frac{n-1}{2}}{q}_{AB}$. The second statement follows from the following calculation of the gauge term associated to $\zspaceD_A\xi^u$ in the gauge transformation of $\overadd{\frac{n-1}{2}}{q}_{AB}$:
 Namely, using \eqref{23IV23.1},  \eqref{10IX23.1},  and what has been said so far, this term is of the form
\begin{align}
    -\frac{2}{(n-2)r}
    &
    \underbrace{ \overadd{\frac{n-3}{2}}{ \psi}\ofP  \circ P \circ\, C(\zspaceD_A\xi^u)}_{=0}
    \nonumber
\\
 &     +
    { \text{an operator of order $(n-1)$ acting on}}
    \ \zspaceD_A\xi^u
    \,,
    \label{15IX23.1}
\end{align}
 where  the second term comes from the last term on the right-hand side of \eqref{10IX23.1} with $i=\frac{n-3}{2}$ and $j=1$.
The first term vanishes by the fact that
\begin{equation}
    \overadd{\frac{n-3}{2}}{ \psi}\ofP h^{\red[S]} = 0 \,,
\end{equation}
see \eqref{29VIII23.f3}.
 \ptcheck{3X23}
 \FGp{14XI}

That the operators appearing in region $D_2$ in Figure~\ref{f10IX23.1} are all of orders $\leq \frac{n-1}{2}$ follows from the same argument as that around \eqref{10IX23.9} (for regions $A_1$ and $A_2$). For the region $D_1$, the orders of the operators appearing in each row are lower than or equal to $2i$, since the would-be-of-highest-order-$2(i+1)$ operator $\overadd{i}{\tilde\psi}\circ \zdivtwo \circ C $ vanishes for  $i>\frac{n-1}{2}$
by \eqref{5X23f.1m}, Proposition \ref{P9X23.1m}.

Let us turn our attention now to the first $k_n$-lines of Figure~\ref{f10IX23.1} (region $C$).
The vanishing of the first term in \eqref{15IX23.1} modifies the order of the operators appearing in the gauge transformation of $\overadd{i}{q}_{AB}$, $i \geq \frac{n+1}{2}$ ,
as compared to point 2.\ below \eqref{10IX23.1}. Indeed, restarting the induction from $\overadd{\frac{n-1}{2}}{q}_{AB}$, it follows from \eqref{10IX23.1} (where now the second term at the right-hand side vanishes) that
\begin{enumerate}
    \item[3.]  the operators acting on the gauge fields $\zspaceD_A\xi^u$ and $\xi_A$ are both
     {\text{of order $2i$}};
    \item[4.] that acting on $\partial_u^j \xi_A$ is
     {\text{of order $2(i-j)$}}
for $1 \leq j \leq i-\frac{n+1}{2}$, and
     {\text{of order $2\big(i-(j-1)\big)$}}
for $i-\frac{n-1}{2} \leq j \leq i-1$.
 \ptcheck{4X23}
 \FGp{14XI}
\end{enumerate}
One readily verifies that the choice \eqref{15IX23.2} and \eqref{15IX23.3} results in a lower triangular matrix block $C$ of
Figure~\ref{f10IX23.1}. See also Figure~\ref{f10IX23.2}.
 \ptcheck{4X23}
 
Next, we consider the ``gauge fields'' appearing in the integrated transport equations \eqref{9IX23.w1} for $\overadd{i}{r}$ and \eqref{9IX23.w2f} for $\overadd{i}{s}$, arising from the gauge corrections  \eqref{16III22.2old} at $r_2$ of  the field $\tilde h_{AB}$. These involve only the gauge fields $\kxi{2}_A$ and $\zspaceD_A\kxi{2}^u$. The highest order operators acting on these fields are (cf.\ \eqref{6III23.w6c} and \eqref{11III23.1}), 
up to multiplicative constants,
\begin{align}
   \overadd{i}{\tilde\psi}\circ \zdivtwo\circ C
   \quad
   \text{and}
   \quad
   \overadd{i}{\tilde\chi}\circ \zdivtwo\circ C
\end{align}
respectively for $\overadd{i}{r}$, $1\leq i\leq \frac{n-3}{2}$, and $\overadd{i}{s}$, $1\leq i\leq k$, both of  order $2(i+1)$. For $\overadd{i}{r}$, $ i > \frac{n-3}{2}$ the highest order operators are
\begin{align}
   \overadd{i}{\tilde\psi}_{1,0} \circ \zdivtwo\circ C \,,
\end{align}
of order $2i$. 

Finally, we come to the row associated to the field $\overadd{\frac{n-3}{2}}{r}$. Specifically, for the term on the diagonal, it follows directly from the transport equation \eqref{9IX23.w1} with $i= \frac{n-3}{2}$ 
that this term is of order $ {n-1} $ for 
$X=V$. This remains true for $X=S$
(cf.\ previous paragraph item 1 below \eqref{10IX23.1}).
 \ptcheck{4X23}
\FGp{20XI}

We now justify the first $k_n+1$ entries of equation \eqref{9IX23.11}. The analysis in Section \ref{ss9IX23.1} of the main text shows that in the gauge transformation formulae
of $\overadd{\frac{n-1}{2}+j}{r}$ for $j=0,...,k_n$, respectively $\overadd{\frac{n-3}{2}}{r}$, the   operators $\zdivtwo\circ\Lop$ acting on the field $\partial^j_u\kxi{2}^A$,  resp.  $\zdivtwo \circ \underline{\hLop}_n$ acting on $\zspaceD_A\kxi{2}^u$,
appear on the diagonal  of $\Lambda_k$.
It is not difficult to see that these are
 the only operators of order  $ {n+1} $ acting on $\partial^j_u\kxi{2}^A$, respectively of order $ {n-1} $ acting on $\zspaceD_A\kxi{2}^u$.
Indeed, a comparison of length-dimensions of the fields $\partial^j_u\kxi{2}^A$ and $\overadd{\frac{n-1}{2}+j}{r}$, resp. $\zspaceD_A\kxi{2}^u$ and $\overadd{\frac{n-3}{2}}{r}$, shows that
the gauge operators must be dimensionless. Thus each term contributing to these operators must  a)  either be  $r$-independent, in which case we obtain
 $\zdivtwo\circ\underline{\hLop}_n$, of order ${n+1} $, for
 $\overadd{\frac{n-1}{2}+j}{r}$ and $\zdivtwo\circ\hLop_n $, of order ${n-1} $,  for $\overadd{\frac{n-3}{2}}{r}$; or
   b)
   come with the dimensionless combination $m^i\alpha^{2\ell}r^{-i(n-2)+2\ell}$ for some $0\leq i,\ell$, and $i+\ell\geq 1$; the analysis involving $\alpha$ is included for later reference in Appendix \ref{ss18XI23.1}. However, looking at \eqref{5III23.5}-\eqref{5III23.5b} and \eqref{23IV23.1} we see that each factor of $m$ or $\alpha^2$ will decrease the order of the operator by two.
   It ensues that a gauge operator associated with the field $\overadd{\frac{n-1}{2}+j}{r}$, respectively with the field $\overadd{\frac{n-3}{2}}{r}$,  and containing a prefactor $  m^i\alpha^{2\ell} $  is of order ${n+1} -2i-2\ell $, respectively $ {n-1} -2i-2\ell$. This justifies the first $k_n+1$ entries of equation \eqref{9IX23.11}.

It follows that $ \Lambda_k^{[0,X]}$ is lower triangular, with elliptic operators lying on the diagonal.

\subsection{Mode solvability}
 \label{ss30XI23.2}
 In this section, we explain the mode solvability of the system~\eqref{4IX23.1}.
Suppose that all entries of $u_k^{[X]} $, $\upsilon_k^{[X]} $ and $\rho_k^{[X]}$ are eigenvectors of the Laplace operator with the same eigenvalue $-\lambda_\ell\ge 0 $. We then have
$$
  \red{\Lambda_k^{[X]}} \upsilon_k^{[X]} = B_{k,\ell}^{[X]} \upsilon_k^{[X]}
  \,,
  \qquad
  \mathcal{N}^{[X]} _k \, \rho_k ^{[X]} = D_{k,\ell}^{[X]} \rho_k^{[X]}
  \,,
$$
with  real-valued matrices $B_{k,\ell}^{[X]}$ and $D_{k,\ell}^{[X]}$. The $(ij)$ entries $(B_{k,\ell}^{[X]})^i{}_j$ of the matrix $B_{k,\ell}^{[X]}$ are polynomials in $\lambda_\ell$ of  order lower than or
equal to
$(s_i+t_j)/2$ (recall that all $s_i$'s and $t_j$'s are even), or vanish. The system
\begin{align}
    u_k^{[X]} = \Lambda _k^{[X]} \upsilon_k^{[X]} + \mathcal{N} _k^{[X]}  \rho_k^{[X]}
    \label{30XI23.1-}
\end{align}
becomes
\begin{align}
    u_k^{[X]} = B_{k,\ell}^{[X]} \upsilon_k^{[X]} + D_{k,\ell}^{[X]} \rho_k^{[X]}
     \,.
    \label{30XI23.1}
\end{align}
For $\lambda _\ell  \ne 0$ define  the (invertible) diagonal matrices $A_{\ell} $ and $C_{\ell} $ as
$$
   A_{\ell}  = \mbox{\rm diag} (\lambda_\ell ^{-s_i/2}  )
  \,,
  \quad
   C_{\ell}  = \mbox{\rm diag} (\lambda_\ell ^{-t_i/2}  )
   \,.
$$
The equation \eqref{30XI23.1} can be rewritten as
\begin{align}
     \check u^{[X]}:= A_\ell  u_k^{[X]} =
     \underbrace{A_{\ell}  B_{k,\ell}^{[X]} C_\ell
     }_{=:\check B^{[X]}}
     \underbrace{  C_\ell ^{-1}  \upsilon_k^{[X]}
      }_{=: \check \upsilon^{[X]}} +
 A_{\ell}    D_{k,\ell}^{[X]}  \rho_k^{[X]}
     \,.
    \label{30XI23.3}
\end{align}
The $(ij)$ entry of the matrix $ A_{\ell}  B_{k,\ell}^{[X]} C_{\ell} $ equals
$$
  (A_{\ell}  B_{k,\ell}^{[X]} C_{\ell} )^i{}_j = \lambda_{\ell}^{-s_i/2} (B_{k,\ell}^{[X]})^i{}_j \lambda_{\ell}^{-t_j/2}
   \,.
$$
Since  $|\lambda_{\ell}|\to_{\ell\to\infty}\infty$, from what has been said the matrix
$ A_\ell B_{k,\ell}^{[X]} C_\ell$ tends to a lower-triangular matrix with non-zero entries at the diagonal when $\ell$ tends to infinity. It follows that there exists $N$ so that,  for any given $\check u^{[X]} $ and $\rho_k^{[X]} $, the system \eqref{30XI23.3}
\begin{align}
     \check u^{[X]} = \check B^{[X]}  \check \upsilon^{[X]} +  A_\ell ^{[X]}  D_{k,\ell}^{[X]}  \rho_k^{[X]}
    \label{30XI23.4}
\end{align}
has a solution $\check \upsilon^{[X]}$ for  all $\ell\ge N$:
\begin{align}
      \check \upsilon^{[X]}  = (\check B^{[X]})^{-1} (\check u^{[X]} -  A_{\ell}    D_{k,\ell} ^{[X]} \rho_k^{[X]})
      \,.
    \label{30XI23.4b}
\end{align}
 Setting
\begin{align}
     u_k^{[X]} =  (A_{\ell} )^{-1} \check u^{[X]}
     \,,
     \quad  \upsilon_k^{[X]} = C_{\ell}   \check \upsilon^{[X]}
     \,,
    \label{30XI23.3b}
\end{align}
one obtains a solution of the original system \eqref{30IX23.1}.

\subsection{ADN estimates}
 \label{ss30XI23.1}
\index{ADN estimates|(} 
Writing
$  \red{(u_{k})_i}$ for the components of the vector $u_k$, similarly for $\upsilon_k$ and
  $\rho_k$, we set
\begin{equation}\label{28XI23.1}
  \|u_k\|_{{\mycal H}^\ell} = \sum_i \|\red{(u_{k})_i}\|_{H^{s-s_i +\ell}(\secN)}
    \,,
    \quad
  \|\upsilon_k\|_{{\mathcal H}^\ell} = \sum_j \|\red{(\upsilon_{k})_j}\|_{H^{s+t_j +\ell}(\secN)}
  \,,
  \quad
  s = \max\{s_i\}
   \,.
\end{equation}
The spaces ${\mycal H}^{\ell}$ and ${\mathcal H}^\ell$ are defined as completions of $C^\infty$ in the above norms.

From what has been said we infer:

\begin{theorem}
  \label{T30XI23.1} 
  Let  $(s_i,t_j)$ be as defined above,  and let $X\in \{S\cap\CKVp,V\cap\CKVp\}$ (cf.\ Section~\ref{app16X23.1}).
  For
  $\ell\ge  0$
  the maps
\begin{align}
  \big( {\mycal H}^\ell
  \cap X
  \big) \times \big(  C^\infty
  \cap X \big)
   \ni (\upsilon^{[X]}_k,\rho^{[X]}_k)\mapsto  \Lambda^{[X]} _k \upsilon^{[X]}_k  + \mathcal{N}^{[X]} _k  \rho^{[X]}_k  \in {\mathcal H}^\ell\cap X
    \label{4IX23.1app}
\end{align}
are surjective, and for every $k\ge 0$  there exists a constant $C$ so that the  Agmon, Douglis, Nirenberg elliptic estimates hold:%
\index{ADN estimates}
 \ptcheck{norms to check}
  \begin{equation}\label{27XI23.1}
    \|\upsilon^{[X]}_k\|_{{\mathcal H}^\ell} \le C \big(
      \|u^{[X]}_k\|_{{\mycal H}^\ell}   +\| \rho^{[X]}_k  \|_{{\mycal H}^\ell}
      + \|\upsilon^{[X]}_k\|_{L^2(\secN)}
      \big)
      \,.
  \end{equation}
\end{theorem}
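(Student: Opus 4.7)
The proof proposal has two parts: the a priori estimate \eqref{27XI23.1}, and the surjectivity statement. The plan is to deduce both from (i) the ADN ellipticity of $\Lambda^{[X]}_k$ established in Appendix~\ref{ss24IX23.2}, and (ii) the mode-by-mode solvability analysis of Appendix~\ref{ss30XI23.2}, together with the finite-dimensional argument of Section~\ref{ss5IX23.1}.

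For the a priori estimate, the plan is to invoke the standard Agmon--Douglis--Nirenberg regularity theorem on the compact closed manifold $\secN$. In Appendix~\ref{ss24IX23.2} we have verified that, with the assignment of indices $(s_i,t_j)$ in \eqref{15IX23.2}--\eqref{15IX23.3}, the matrix-valued operator $\Lambda^{[X]}_k$ has leading part $\Lambda^{[0,X]}_k$ which is lower triangular with elliptic, formally self-adjoint diagonal entries listed in \eqref{9IX23.11}. Ellipticity of the full system in the ADN sense then follows because the determinant of the principal symbol is the product of the symbols on the diagonal, each of which is non-degenerate for $k\ne 0$. Applied to $\upsilon^{[X]}_k=(\Lambda^{[X]}_k)^{-1}(u^{[X]}_k-\mathcal{N}^{[X]}_k\rho^{[X]}_k)$ on the complement of a (finite-dimensional) kernel, the corresponding ADN estimate (cf.~\cite{MorreyNirenberg}) gives precisely \eqref{27XI23.1}, with the $L^2$-term on the right absorbing the kernel contribution. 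Compatibility with the splitting $X\in\{S\cap\CKVp,\,V\cap\CKVp\}$ is already guaranteed by the invariance noted in \eqref{4IX23.1--} and Proposition~\ref{P21X23.1}.

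For surjectivity, the plan is to decompose along the joint spectral resolution of the commuting, self-adjoint, pairwise commuting operators $\tilde{\operatorname{L}}_{a,c,b}$ of \eqref{18XII23.11}, equivalently \eqref{26IX23.1}, which in view of \eqref{26IX23.1} reduces to a decomposition in eigenvectors $\eta_\ell$ of $\TSzlap$ lying in $X\cap\CKVp$. For indices $\ell\ge N(k)$, Appendix~\ref{ss30XI23.2} provides, via the rescaling \eqref{30XI23.3}--\eqref{30XI23.4b}, an explicit mode-by-mode inverse whose operator norm is controlled uniformly in $\ell$, so any given right-hand side $u^{[X]}_k\in{\mycal H}^\ell\cap X$ with vanishing low-mode component is hit by some $\upsilon^{[X]}_k\in{\mathcal H}^\ell\cap X$ (one may set $\rho^{[X]}_k=0$ here). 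For the finitely many low modes $\ell<N(k)$, the surjectivity onto the corresponding finite-dimensional subspace of $X$ reduces to showing that the adjoint linear maps obtained by juxtaposing the restrictions $\Lambda^{[X]}_k|_{\TSzlap\mapsto\lambda_\ell}$ and $\mathcal{N}^{[X]}_k|_{\TSzlap\mapsto\lambda_\ell}$ have trivial kernel on $\eta_\ell\in\CKVp$; this is exactly the content of the permuted-matrix analysis \eqref{16XII23.1}--\eqref{5IX23.w1}, where the boxed non-vanishing scalar entries $\overadd{p}{\chi}_{p(n-1)+4},\overadd{p}{\psi}_{(7-n+2p(n-1))/2}$ (which are non-zero precisely because $m\neq 0$ or through the $\tilde{\operatorname{L}}_{a,c,b}$-eigenvalue structure) force the transpose matrix to be of maximal rank. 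Summing the high-mode part (convergent by the ADN estimate) with the finite-dimensional low-mode solution yields the desired preimage in $({\mathcal H}^\ell\cap X)\times(C^\infty\cap X)$, the $C^\infty$-regularity of $\rho^{[X]}_k$ being automatic since it is supported on finitely many smooth eigenvectors.

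The main obstacle will be the bookkeeping required to glue the infinite-dimensional high-mode construction to the finite-dimensional low-mode construction in such a way that the resulting $\upsilon^{[X]}_k$ lies in ${\mathcal H}^\ell\cap X$, with norm controlled by the right-hand side as in \eqref{27XI23.1}, uniformly in $\ell$. This reduces to verifying that the spectral cutoff $N(k)$ can be chosen independently of the data, and that the high-mode solution operator $(\check B^{[X]})^{-1}$ of \eqref{30XI23.4b} is bounded by a constant depending only on $k$, once the indices $(s_i,t_j)$ have been absorbed into the rescalings $A_\ell,C_\ell$. Both follow from the polynomial-in-$\lambda_\ell$ structure of the entries of $B^{[X]}_{k,\ell}$ and the identification of the leading (diagonal) coefficient of $\check B^{[X]}$ as the principal symbol of the corresponding diagonal entry of $\Lambda^{[0,X]}_k$, which is bounded away from zero by ellipticity.
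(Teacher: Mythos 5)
Your argument is correct and follows essentially the same route as the paper: ADN ellipticity of $\Lambda^{[X]}_k$ via the lower-triangular leading part with elliptic (Laplacian-power) diagonal symbols, the estimate \eqref{27XI23.1} from the Agmon--Douglis--Nirenberg theorem of~\cite{MorreyNirenberg}, and surjectivity from the mode decomposition already set up in Appendix~\ref{ss30XI23.2} (rescaled inverse for high modes) together with the trivial-kernel argument for the adjoint of the juxtaposed low-mode matrices. Only your phrasing of applying the estimate to $(\Lambda^{[X]}_k)^{-1}$ is unnecessary --- the ADN estimate is an a priori bound for any solution, with the $L^2$-term accounting for the kernel --- but this does not affect the argument.
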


\proof
Recall that the entries of the $\Lambda^{[X]}_k$-matrices are operators acting on  vector fields.  The ADN principal symbol of $\Lambda^{[X]}_k$  is  obtained by replacing each entry $(\Lambda_k^{[0,X]})^i{}_j$ of $ \Lambda_k^{[0,X]}$  by a square matrix, namely  the principal symbol of the   operator $(\Lambda_k^{[0,X]})^i{}_j$.

Now, the principal part of each of the operators acting on these blocks is proportional to some power of the Laplacian, in particular the principal symbol of each block lying on the diagonal is diagonal. Hence we obtain   a lower-triangular matrix, where the determinant of each block on the diagonal is non-zero for $k_A\ne 0$ by ellipticity of the Laplace operator. Ellipticity in the ADN sense follows. The estimate \eqref{27XI23.1} follows from~\cite[Theorem~C]{MorreyNirenberg}.
\qedskip
%
%
%
%
%
%
\index{ADN estimates|)}
 
 We note:
 
\begin{corollary}\label{C10V24}
Let $\kgamma\in\N\cup\{\infty\}$, $k\in \N$, suppose that $n$ is odd and let $k_n$ be as in \eqref{2X23.1}.
Under the hypotheses of Theorem~\ref{T30XI23.1}, assume moreover that $ k_{\gamma}-2k-1\ge 0$. Then
\begin{align}
&  \forall\  j =  1,\ \cdots, k_n: \ \partial_u^j \kxi{2}_A  \in H^{k_{\gamma} + 1 - 2j}(\secN)
 \,, \label{11IV24.0}
    \\
&    \xi^A \in H^{k_{\gamma}+1}(\secN) \,,\quad
    \xi^u \in H^{k_{\gamma}+2}(\secN)
     \,, \quad
     \kphi{p}{}^{[\TTtp]} \in H^{k_{\gamma}}(\secN) \,,
    \label{11IV24.2a}
\end{align}
for $p\in ([\frac{9-n}{2},1]\cup[5,4+k])\cap\Z$.
\end{corollary}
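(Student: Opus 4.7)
The plan is to apply Theorem~\ref{T30XI23.1} to the coupled system \eqref{4IX23.1} with a single well-chosen value of $\ell$, and then read off the regularity of each component of $\upsilon_k^{[X]}$ directly from the ADN estimate \eqref{27XI23.1}. With the assignment \eqref{15IX23.2} one has $s := \max_i s_i = 2(k+1)$, and the natural choice is $\ell := k_\gamma + 1 - s = k_\gamma - 2k - 1$; the hypothesis $k_\gamma - 2k - 1 \ge 0$ is exactly what is required for $\ell$ to be admissible in the theorem.

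The first step would be to verify that the source vector $u_k$ lies in $\mycal H^\ell$ for this $\ell$. Tracing through Section~\ref{sec6III23.1}, the assumed regularity class $x_{r_a} \in \Hkdt$ from \eqref{12V24.2}, combined with the regularity inheritance statements \eqref{28VIII23.3}, \eqref{2X23.1a}, \eqref{2X23.2}, \eqref{2X23.5} and their obvious counterparts for the $\overadd{i}{r}_{AB}$ and $\overadd{i}{s}_A$ source terms (compare \eqref{19VIII23.2} and \eqref{20VIII23.1bf}), yields $\overadd{i}{r}_{AB} \in H^{k_\gamma - 2i}(\secN)$ and $\overadd{i}{s}_A \in H^{k_\gamma - 2i - 1}(\secN)$ throughout the required range of $i$. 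A line-by-line check against the sequence \eqref{15IX23.2} then shows that each entry $(u_k)_i$ has regularity at least $H^{s - s_i + \ell}(\secN)$, so $\|u_k\|_{\mycal H^\ell} < \infty$. The vector $\rho_k$ of high-index $\TTtpvec{p}$'s, constructed in Section~\ref{ss5IX23.1} as a finite combination of joint smooth eigentensors of the commuting operators $\tilde{\operatorname{L}}_{a,c,b}$, is smooth and therefore contributes to every $\mycal H^\ell$ norm.

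With these inputs, \eqref{27XI23.1} gives $\upsilon_k \in \mathcal H^\ell$, so each entry lies in $H^{s + t_j + \ell}(\secN) = H^{k_\gamma + 1 + t_j}(\secN)$. Reading off $t_j$ from \eqref{15IX23.3} entry by entry: the first $k_n$ slots are $\partial_u^{k_n}\kxi{2}_A, \ldots, \partial_u \kxi{2}_A$ with $t_j = -2(k_n - j + 1)$, producing $\partial_u^p \kxi{2}_A \in H^{k_\gamma + 1 - 2p}(\secN)$ for $p = 1, \ldots, k_n$, which is \eqref{11IV24.0}. The next slot is $\kxi{2}_A$ with $t_j = 0$, giving $\xi^A \in H^{k_\gamma + 1}(\secN)$; the slot $\upsilon_{k,2}$ (also with $t_j = 0$) carries $\zspaceD_A \kxi{2}^u$ in its scalar part, so elliptic regularity of $\TSzlap$ on $\secN$ gains one extra derivative to yield $\xi^u \in H^{k_\gamma + 2}(\secN)$. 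Finally, each $\TTtpvec{p}$ lands in $H^{k_\gamma + 1}(\secN)$, and since $\kphi{p}{}^{[\TTtp]} = C(\TTtpvec{p})$ with $C$ a first-order operator, one obtains $\kphi{p}{}^{[\TTtp]} \in H^{k_\gamma}(\secN)$ for $p$ in the claimed range.

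The main point requiring care is the low-mode/high-mode split discussed in Appendix~\ref{ss30XI23.2}: the surjectivity statement of Theorem~\ref{T30XI23.1} is obtained only after peeling off a finite-dimensional subspace of modes $\eta_\ell$ with $\ell \le N(k)$, on which mode-by-mode invertibility has to be verified separately. This does not affect the regularity conclusion, because that finite-dimensional component of $\upsilon_k^{[X]}$ is a finite combination of smooth joint eigentensors of the operators $\tilde{\operatorname{L}}_{a,c,b}$ and hence lies in every $H^m(\secN)$ automatically. One must also check that the $L^2$-orthogonal projections on $\CKVp \cap S$ and $\CKVp \cap V$ used throughout Appendix~\ref{app16X23.1} preserve Sobolev regularity; this follows from Proposition~\ref{P21X23.1} together with the fact that the projector on the finite-dimensional space $\CKV$ has smooth kernel, so all such projections are continuous on $H^m(\secN)$ for every $m \ge 0$.
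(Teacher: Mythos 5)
Your argument is correct and follows essentially the same route as the paper's proof: choose $\ell = k_\gamma-2k-1$ (admissible by the hypothesis), note $s=\max s_i=2(k+1)$ so that $u_k\in{\mycal H}^\ell$ by the regularity $H^{k_\gamma-2i-1}(\secN)$ of the source entries, apply the estimate \eqref{27XI23.1} to get each $(\upsilon_k)_j\in H^{k_\gamma+1+t_j}(\secN)$, and read off the $t_j$'s together with $\kphi{p}{}^{[\TTtp]}=C(\TTtpvec{p})$. Your additional checks (smoothness of the finite-dimensional low-mode part and of the $\CKVp$-projections, and the passage from $\zspaceD_A\kxi{2}^u\in H^{k_\gamma+1}$ to $\kxi{2}^u\in H^{k_\gamma+2}$) are correct refinements of steps the paper leaves implicit.
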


\proof
%
Recall that
\begin{equation}
\overadd{i}{r}, \overadd{i}{s} \in H^{k_{\gamma}-2i-1}\,.
\label{11IV24.2}
\end{equation}
Meanwhile, the associated $s_i$ for each field $\overadd{i}{r}$ or $\overadd{i}{s}$ appearing in $\Lambda_k$ equals $2(i+1)$ (cf. \eqref{15IX23.3}). Thus $s = \max {s_i} = 2(k+1)$. To make use of the estimate \eqref{27XI23.1}, we require $u_k \in {\mycal H}^\ell$, which would be true for finite $\kgamma$ if
$$ s-s_i + \ell \leq k_{\gamma}-2i-1 \implies \ell \leq k_{\gamma}-2k-1\,.$$
Thus in particular, we can take $\ell = k_{\gamma}-2k-1$. The estimate \eqref{27XI23.1} then gives
\begin{equation}
 (v_k)_j \in H^{s+t_j+\ell} =H^{k_{\gamma} + 1+t_j}
 \,.
 \label{11IV24.1}
\end{equation}
%
%
Finally, the choice of $t_j$'s (cf. \eqref{15IX23.2}) for each $\partial_u^j \kxi{2}_A$, $j\in[0,k_n]$, is $t_j = -2j$, while that for each $\TTtpvec{p}$ is $t_j = 0$. The regularity \eqref{11IV24.0}-\eqref{11IV24.2a} follows  after recalling that $\kphi{p}{}^{[\TTtp]} = C(\TTtpvec{p})$.
\qed
%

\index{ellipticity!Lambda@$\Lambda^{[X]}_k$|)}%
\index{Lambda@$\Lambda_k$!ellipticity|)}%

\section{$(n,k)$ inconvenient, $m \alpha\neq 0$}
\label{ss18XI23.1} 

The conditions for the interpolating fields $\kphi{i}_{AB}$ and gauge fields in the \red{case of inconvenient pairs $(n,k)$} and  $m \alpha \neq 0$ are very similar to that for $\alpha = 0$. For completeness we briefly explain the procedure required here; notations are as defined in, or analogous to those of  Section \ref{ss5IX23.1}, and will be used  without further comments.

To begin, for inconvenient pairs $(n,k)$ and  $m \alpha \neq 0$,
it follows from \eqref{11III23.2} (cf.\ also \eqref{6III23.w6}) that \eqref{9IX23.w1} gains additional terms arising from $\alpha$:%
\index{rp@$\overadd{p}{r}_B$}%
\begin{align}
    \underbrace{\zspaceD^A \overadd{p}{r}{}^{[\TTtp]}_{AB}
    }_{=:  \overadd{p}{r}_B}
    &=    \text{gauge fields } + \overadd{p}{\tilde\psi}\ofDC \circ \zdivtwo\circ\, C (\TTtpvec{\frac{7-n+2p}{2}})
    \nonumber
    \\
    &\quad
    +\alpha^{2p} \overadd{p}{\psi}_{[\alpha]} \zdivtwo\circ\, C (\TTtpvec{\frac{7-n-2p}{2}})
    + m^p \overadd{p}{\psi}_{[m]} \zdivtwo\circ\, C (\TTtpvec{\frac{7-n+2p(n-1)}{2}})
    \nonumber
    \\
    & \quad
    +
    \sum_{j,\ell}^{p_{**}}
     m^{j}
    \alpha^{2\ell}
    \overadd{p}{\tilde\psi}_{j,\ell}\red{\ofDC \circ \zdivtwo}\circ\, C ( \TTtpvec{ \frac{7-n+2p}{2} +  j (n-2) - 2 \ell})
    \,,
    \quad  1\leq p\leq k\,,
    \label{9IX23.w121}
\end{align}
with $\overadd{p}{\psi}_{[\alpha]} = 0$ for $p>\frac{n-1}{2}$
(which follows from \eqref{6III23.w5} and  \eqref{6III23.w9}),
and with the field $\overadd{p}{r}_{AB}$   defined analogously as \eqref{22VIII23.2c}.

 In what follows it will be relevant to keep track of the $\CKV$- and $\CKVp$-parts of the equation, so we note that  \eqref{9IX23.w121} has been obtained by applying $\zdivtwo$ to \eqref{11III23.2}, hence all the terms there are in $\CKVp$. Further note that both in  \eqref{9IX23.w121} and in the equations that follow, the $\CKV$-part of the $\TTtpvec{\cdot}$ fields drops out because
\begin{equation}\label{18XII23.1}
    \big(
     \zdivtwo\circ\, C
     \big)
      \big|_{\CKV} = 0
 \,.
\end{equation}

Similarly to the $\alpha=0$ case,  in order to simplify notation we group terms together and
 rewrite \eqref{9IX23.w121} as
  \ptcheck{9XII}
\begin{align}
    \overadd{p}{r}_B
    = \text{gauge fields}
    + \sum_{j=4-n}^{\frac{7-n+2p(n-1)}{2}} \overadd{p}{\tilde\psi}_j \circ \zdivtwo \circ\, C (\TTtpvec{j} )\,,
    \quad
    1 \leq p \leq k \,,
    \label{9IX23.w2}
\end{align}
with some operators $\overadd{p}{\tilde\psi}_j $. As easily seen from \eqref{9IX23.w121}, for each $1 \leq p \leq \frac{n-1}{2}$, we have
 \ptcheck{10XII}
\begin{equation}\label{15XII23.21}
\overadd{p}{\tilde\psi}_{\frac{7-n-2p}{2}} = \alpha^{2p} \overadd{p}{\psi}_{[\alpha]} \neq 0\,.
\end{equation}

Similarly, the integrated transport equation for $\overadd{p}{\Hf}{}^{[\CKVp]}_{uA}$, namely the $\CKVp$-projection of  \eqref{11III23.1},  gains additional terms arising from $\alpha$ (compare~\eqref{19VIII23.2b}), and reads:
 \begin{align}
      \overadd{p}{s}_{A}
      &=
       (\text{gauge fields})^{[\CKVp]} +
       \overadd{p}{\tilde\chi}\red{\ofDC \circ \zdivtwo}\circ\, C(\TTtpvec{p+4})
      \nonumber
\\
      &\quad
    +  m^p  \overadd{p}{\tilde\chi}_{[m]} \zdivtwo\circ\, C(\TTtpvec{p(n-1)+4})
     \nonumber
\\
    & \quad
    + \sum_{j,\ell}^{p_*}  m^{j} \alpha^{2\ell} \overadd{p}{\chi}_{j,\ell}\red{\ofDC \circ \zdivtwo}\circ\, C(\TTtpvec{(p + 4) + j (n - 2) - 2 \ell})\,,
    \label{16XI23.1}
 \end{align}
where $0\leq p \leq k$.
As above, we rewrite \eqref{16XI23.1} as:
 \ptcheck{9XII23 and 12XII23}
 \begin{align}
     \overadd{p}{s}_{A}
     &=
       (\text{gauge fields})^{[\CKVp]} +
    \sum_{j=4}^{p(n-1)+4}
    \overadd{p}{\tilde\chi}_{j} \circ  \zdivtwo\circ\, C(\TTtpvec{j})\,.
    \label{20VIII23.1b}
 \end{align}
 with some operators $\overadd{p}{\tilde\chi}_j $.
 \FGp{20XI} 
 Recall that the $\overadd{p}{s}_{A}$'s are in $\CKVp$ by definition.

 Before continuing, we summarise the main differences between the case here and that when $\alpha=0$:
 \begin{enumerate}
     \item the addition of the new $\alpha$ terms in \eqref{9IX23.w121} leads to the coupling of the equations for $\overadd{\frac{n-3}{2}}{r}_{A}$, which now has an $\alpha^{n-3} \overadd{\frac{n-3}2}{\psi}_{[\alpha]} \zdivtwo\circ\, C(\TTtpvec{5-n})$ term, with that for $\chi$ (cf. \eqref{7III23.1}). In addition, the equation for $\overadd{\frac{n-1}{2}}{r}_{A}$, which now has an $\alpha^{n-4} \overadd{\frac{n-4}2}{\psi}_{[\alpha]} \, \zdivtwo\circ\, C(\TTtpvec{4-n})$ term, is now coupled with that for $\overadd{*}{\Hf}{}^{[\CKVp]}_{uA}$ (cf. \eqref{7III23.4}).
          \ptcheck{9XII23}
     \item The occurrence of new $\alpha$ terms in \eqref{9IX23.w121} and \eqref{16XI23.1}, in particular of the non-vanishing terms involving the field $C(\TTtpvec{4})$, requires us to include the case $p=0$ to \eqref{16XI23.1}, whereas $p=0$ was decoupled in the $\alpha=0$ case.
 \end{enumerate}

In view of  point 1.\  above, we now rewrite the equations for the fields $\chi$ and $\overadd{*}{\Hf}_{uA}$ into a form that is compatible with \eqref{9IX23.w121} and \eqref{16XI23.1}. For the field $\chi$ (cf. \eqref{7III23.1}), we continue to use \eqref{7III23.3a}-\eqref{16XI23.3}
to take care of the projection $\chi^{[\im (\mrL)^\perp ]}$ .
  For the projection $\chi^{[\im (\mrL)]}$, we have the obvious inclusion $\TTt \subseteq \ker\mrL$. This, together with \eqref{28XI23.f2} and \eqref{12XII23.21} shows that
\FGp{20XI}
$$
V\oplus \TTt \subseteq \ker\mrL \implies \mrL(h_{AB})
=\mrL(h^{[S]}_{AB})
 \,.
$$
Furthermore, we have
$\zspaceD_A\chi^{[\im (\mrL)]} \in \CKVp$, since for any $\xi\in\CKV$ and any field $\phi \in \im\,\mrL$, thus $ \phi = \mrL h = \zspaceD^A\zspaceD^B h_{AB}$ for some tensor $h$, it holds that
\begin{equation}
    \int_{\secN} \xi^A \zspaceD_A \phi = \int_{\secN} \xi^A \zspaceD_A \zspaceD^C\zspaceD^D h_{CD} =  \int_{\secN} \TS[\zspaceD^C\zspaceD^D\zspaceD_A\xi^A] h_{CD} = 0\,,
\end{equation}
where the last equality follows the calculations in Lemma~\ref{L21X23.1} (cf.~\eqref{12XII23.22}).
Thus, taking the gradient of the projection of \eqref{7III23.1} onto $\im\,\mrL$ gives,
\index{rstar@$\overadd{*}{r}$}%
\begin{align}
   \underbrace{\zspaceD_A( (\chi|_{\secN_2} - \chi|_{{\secN}_1})^{[\im\,\mrL]} )}_{=:\overadd{*}{r}_A}
    = \text{gauge fields}
    + \frac{n-3}{n-1}  
    \hspace{-0.6cm}
    \underbrace{
    	 \zspaceD _A
     \big(\mrL\circ\, C(\TTtpvec{5-n}{}^{[S]})
     \big)
        }_{
        \big(
          (\TSzlap-(n-2)\myGauss) \circ \zdivtwo\circ\, C(\TTtpvec{5-n}{}^{[S]})
          \big)_A}
          \hspace{-0.6cm}\,.
    \label{17XI23.2}
\end{align}
As can be readily verified, we have $\im\,\mrL\subseteq(\ker \zdivone)^\perp$. Thus solving \eqref{17XI23.2} is equivalent to solving the transport equation for the projected field $(\chi^{[\im\,\mrL]})^{[(\ker \zdivone)^\perp]} = \chi^{[\im\,\mrL]}$ and completes the gluing of $\chi$.

Next, we move on to the field $\overadd{*}{\Hf}{}^{[\CKVp]}_{uA}$.
For this, we take the $\CKVp$ projection of \eqref{7III23.4}, which results in
\index{s@$\overadd{*}{s}$}%
\begin{align}
    \underbrace{(\overadd{*}{\Hf}_{uA}|_{\secN_2} - \overadd{*}{\Hf}_{uA}|_{\secN_1})^{[\CKVp]}}_{=: \overadd{*}{s}_A}
     =
      (\text{gauge fields})^{[\CKVp]}
        - (n-1) \  \zdivtwo\circ\, C(\TTtpvec{4-n})
                  \,.
                   \label{17XI23.1}
\end{align}
\FGp{20XI}

We are ready now to write the full coupled system in matrix form. Indeed,
 the system   \eqref{9IX23.w2}, \eqref{20VIII23.1b}, \eqref{17XI23.2} and \eqref{17XI23.1} can be written as (cf.\ \eqref{4IX23.1} in the $\alpha=0$ case)
\begin{align}
    u_k^{\red{[X]}} = A^{[X]}_k \Theta^{[X]}_k + \Lambda^{\red{[X]}}_k \upsilon_k^{\red{[X]}} + \mathcal{N}^{\red{[X]}}_k  \rho_k^{\red{[X]}}\,,
    \quad
    X \in \{V,S\}
     \,,
    \label{17XI23.3}
\end{align}
for some matrix of operators $A_k^{[X]}$, whose exact form is not important, and where $\rho_k$ is as given in \eqref{18XI23.1}; the vectors $u_k$ and $\upsilon_k$ now gain additional terms arising from $\overadd{*}{r}$, $\overadd{*}{s}$ and $\overadd{0}{s}$ (cf.\ \eqref{13XI23.21}-\eqref{13XI23.22}): for $n>5$,
\FGp{20XI}
\begin{align}
\label{17XI23.4}
u^{[X]}_k &:=  \left\{
           \begin{array}{ll}
(
                \overadd{\frac{n-3}{2}}{r} \,,
                \hspace{-0.5cm}
                \underbrace{\overadd{*}{r}}_{\text{for }X = S \text{ only}}
                \hspace{-0.5cm}
                \,,
                \overadd{1}{r} \,,
               \dots \,,
               \overadd{\frac{n-5}{2}}{r} \,,
               \overadd{0}{s}\,,
               \overadd{1}{s} \,,
               \dots \,,
               \overadd{k}{s}
           )^{T [X]},  
           &\hbox{$k=\frac{n-3}{2}$;}  \\
(               \underbrace{ \overadd{k}{r}  \,,
                 \overadd{k-1}{r}  \,,
                \dots \,,
                 \overadd{\frac{n+1}{2}}{r}}_{k_n \text{ terms}}
                \overadd{\frac{n-1}{2}}{r}\,,
                \overadd{\frac{n-3}{2}}{r} \,,
                \overadd{*}{s}\,,
                \hspace{-0.5cm}
                \underbrace{\overadd{*}{r}}_{\text{for }X = S \text{ only}}
                \hspace{-0.5cm}\,,
                \overadd{1}{r} \,,
               \dots \,,
               \overadd{\frac{n-5}{2}}{r} \,,
               \overadd{0}{s}\,,
               \overadd{1}{s} \,,
               \dots \,,
               \overadd{k}{s}
           )^{T [X]}, & \hbox{$k>\frac{n-3}{2}$,}
           \end{array}
         \right.
           \\
 \upsilon_{k}^{[X]} &:=
 \left\{
   \begin{array}{ll}
     (
     \upsilon_{k,2} \,,
     \hspace{-0.5cm}
      \underbrace{\TTtpvec{5-n}}_{\text{for }X = S \text{ only}}
      \hspace{-0.5cm}\,,
 \TTtpvec{\frac{9-n}{2}} \,,
 \dots \,,
 \TTtpvec{1}  \,,
  \TTtpvec{4}  \,,
 \TTtpvec{5}  \,,
 \dots \,
 \TTtpvec{4+k}
 )^{T [X]}, & \hbox{$k=\frac{n-3}{2}$;} \\
     (
 \text{gauge fields}\,,
      \kxi{2}_A \,,
     \upsilon_{k,2} \,,
     \TTtpvec{4-n} \,,
     \hspace{-0.5cm}
     \underbrace{\TTtpvec{5-n}}_{\text{for }X = S \text{ only}}
     \hspace{-0.5cm}\,,
 \TTtpvec{\frac{9-n}{2}} \,,
 \dots \,,
 \TTtpvec{1}  \,,
  \TTtpvec{4}  \,,
 \TTtpvec{5}  \,,
 \dots \,
 \TTtpvec{4+k} 
 )^{T [X]}, & \hbox{$k>\frac{n-3}{2}$.}
   \end{array}
 \right.
\label{17XI23.5f}
 \\
 \Theta_k^{[X]} &: =
 \begin{cases}
      \big( \ \TTtpvec{\frac{7-n-2i}{2}} \ \big)_{1\leq i \leq \frac{n-5}{2}} \,, & X = S\,,
      \\
       \big( \ \TTtpvec{\frac{7-n-2i}{2}} \ \big)_{1\leq i \leq \frac{n-3}{2}} \,, & X=V\,;
 \end{cases}
\end{align}
the ``gauge fields'' in \eqref{17XI23.5f} are the same as  in \eqref{13XI23.22}. The case $n=5$ is handled in the same way after performing a trivial elimination of some of the terms from the system (compare the last paragraph of Section \ref{ss5IX23.1}).

 The analysis for solving \eqref{17XI23.3} then proceeds in a similar way to the $\alpha = 0 $ case.
 First, we write \eqref{17XI23.3} in a mode decomposition, which can be done since, as before, all operators appearing in the matrices are sums of products of operators of the form $\tilde L_{a,b,c}$. Next, using the same arguments as   in Appendix \ref{ss24IX23.2}, one can verify that the matrix $\Lambda_k^{[0,X]}$ remains lower triangular and hence $\Lambda_k^{[X]}$ is elliptic in the sense of Agmon, Douglis and Nirenberg. See
 Figure~\ref{f17XI23.1}
\FGp{20XI: Checked the figure}
for a submatrix of $\Lambda_k^{[S]}$ which includes all important additions. Thus, following the same arguments as in the $\alpha=0$ case, there exists $N(k)\in \N$ such that for all modes $\ell\geq N(k)$, the system can be solved by setting $\Theta_{k,\ell}^{[X]}=0=\rho_{k,\ell}^{[X]}$.

\begin{figure}
\centering
\begin{tikzpicture}
  \matrix (m)[
    matrix of math nodes,
    nodes in empty cells,
    minimum width=width("999988"),
    minimum height=8mm,
  ] {
        & \kxi{2}_A  & \upsilon_{k,2} & \TTtpvec{4-n} & \TTtpvec{5-n}  & \TTtpvec{\frac{9-n}{2}}  & \dots     & \TTtpvec{1} & \TTtpvec{4} & \TTtpvec{5}  & \dots   & \TTtpvec{4+k}
        \\
    \overadd{\frac{n-1}{2}} r  & \hlight{\frac{n+1}{2}} & 0 & 1  & \frac{n-1}{2}  &  0 &  &  & & &  & D
    \\
    \overadd{\frac{n-3}{2}}{r} & \frac{n-1}{2}   & \hlight{\frac{n-1}{2} } & 0 & 0  &  &  & & & &  & A_2
    \\
    \overadd{*}{s}
    & 1 & 1 & \hlight{1} & 0 &  &  &  &  &  & & A_1
    \\
    \overadd{*}{r}
    & 2 & 2 & 0 & \hlight{2} &  &  &  &  &  & &
    \\
    \overadd{1}{r}  & 2  &  2 &&&\hlight{2} &    &   & & & &   \\
    \vdots  & \vdots & \vdots && &  & \ddots & & & & & &\\
    \overadd{\frac{n-5}{2}}{r}  &  \frac{n-3}{2}  & \frac{n-3}{2}  & && &  &  \hlight{\frac{n-3}{2}}   & & & &  \\
    \overadd{0}{s}
    & 1 & 1 &  &  & & &  & \hlight{1} & &
    \\
    \overadd{1}{s}  & 2    &  2  &    &   &   && & & \hlight{2} & &  \\
    \vdots  &  \vdots   &  \vdots  &    &   &   &&  && & \ddots &  \\
    \overadd{k}{s}  &  k+1   & k+1   &    &       &   &  & & &&  & \hlight{k+1}\\
    \\
  } ;

    \draw (-7.8,6) -- (7,6);
  \draw (m-1-2.north west) -- (m-12-2.south west);

   \draw[rounded corners] (m-2-12.south east) rectangle (m-2-7.north west);
  \draw[rounded corners] (m-3-12.south east) rectangle (m-3-6.north west);
  \draw[rounded corners] (m-4-4.north west) rectangle (m-12-12.south east) ;

\end{tikzpicture}
\caption{A submatrix of
$\Lambda^{[S]}_k$, $k>\frac{n-1}{2}$, for $n>5$ and $\alpha \neq 0$; see figure~\ref{f10IX23.1} for the meaning of the entries. The same arguments in Appendix \ref{ss24IX23.2} can be used to show that the block $A_1$ is diagonal and that the entries in the blocks labeled ``D'' and ``$A_2$'' are lower than the highlighted entries of the first and second rows respectively. The remaining rows and columns associated to $\protect\overadd{i}{r}\,, 1 \leq i \leq \frac{n-5}{2}$, which are not displayed here, are identical to that in figure~\ref{f10IX23.1}. The $t_j$'s are zero and the $s_i$'s are twice the figures highlighted on the diagonal.
}
\label{f17XI23.1}
\end{figure} 

 Next, to show that, for each $\ell$ with  $\eta_{\ell}\in \CKVp$
  and  $\ell< N(k)$, the system
 admits a solution, we  perform a permutation which now brings $u_k$ and $\upsilon_k$ to (cf. \eqref{17XI23.5}-\eqref{17XI23.6})
\begin{align}
    \hat{u}^{[X]}_k &:= (
                \overadd{*}{s}\,,
                \hspace{-0.6cm}
                \underbrace{\overadd{*}{r}}_{\text{for }X=S\text{ only}}
                \hspace{-0.6cm}\,,
                \overadd{0}{s} \,,
                \overadd{1}{r}\,,
                \overadd{1}{s} \,,
                \overadd{2}{r} \,,
                \overadd{2}{s} \,,
               \dots \,,
               \overadd{k}{r} \,,
               \overadd{k}{s}
           )^{T [X]} \,,
           \label{17XI23.5b}
           \\
     \hat{\upsilon}^{[X]}_{k} &:=
\left\{
  \begin{array}{ll}
        (
    \upsilon_{k,2} \,,
    \TTtpvec{4-n}  \,,
        \hspace{-0.6cm}
    \underbrace{ \TTtpvec{5-n}  }_{\text{for }X=S\text{ only}}
    \hspace{-0.6cm}
    \,,
 \TTtpvec{4}  \,,
 \TTtpvec{\frac{9-n}{2}} \,,
 \dots \,,
 \TTtpvec{1}  \,,
 \TTtpvec{5}  \,,
 \dots \,
 \TTtpvec{4+k}
 )^{T [X]}, & \hbox{$k= \frac{n-3}{2}$;}
\\
 (
 \text{first terms}
     \,,
 \TTtpvec{4-n}  \,,
        \hspace{-0.6cm}
    \underbrace{ \TTtpvec{5-n}  }_{\text{for }X=S\text{ only}}
    \hspace{-0.6cm}
    \,,
 \TTtpvec{4}  \,,
 \TTtpvec{\frac{9-n}{2}} \,,
 \dots \,,
 \TTtpvec{1}  \,,
 \TTtpvec{5}  \,,
 \dots \,
 \TTtpvec{4+k}
 )^{T [X]}, & \hbox{$k> \frac{n-3}{2}$,}
  \end{array}
\right.
\label{17XI23.6b}
\end{align}
where ``first terms'' denotes the fields $
      \kxi{2}_A \,,
     \upsilon_{k,2} \,,
 \partial^{k_n}_u \kxi{2}_A \,,
 \partial^{k_n-1}_u \kxi{2}_A \,,
 \dots \,,
 \partial_u \kxi{2}_A
     $, as in \eqref{17XI23.6}.

Finally, we need to show that the matrix $(A^{[X]}_k\, \Lambda^{\red{[X]}}_k \, \mathcal{N}^{\red{[X]}}_k)$ is surjective or, equivalently, that its adjoint has trivial kernel. As in the $\alpha=0$ case, the matrix $(\Lambda^{\red{[X]}}_k \, \mathcal{N}^{\red{[X]}}_k)^\dagger$ has trivial kernel iff
$( \hat{\Lambda}^{\red{[X]}}_k \, \hat{\mathcal{N}}^{\red{[X]}}_k)^\dagger$
does.
Additionally, it can again be verified that
\begin{equation}
\label{16XI23.4}
 ( \hat{\Lambda}^{\red{[X]}}_k \, \hat{\mathcal{N}}^{\red{[X]}}_k)^\dagger = \begin{pmatrix}
     F^{\red{[X]}} \\ G^{\red{[X]}}
\end{pmatrix} \circ \zdivtwo\circ\, C
 \,,
\end{equation}
where $G^{\red{[X]}}$ can be found in Figure~\ref{F15XII23.1}  for $X=S$ and $k>\frac{n-3}{2}$.   All operators appearing in \eqref{16XI23.4} and Figure~\ref{F15XII23.1} are understood to be their restrictions onto the relevant $X$ space. 
\FGp{20XI}
\begin{figure}
 \begin{align*}
     \left(\begin{array}{l}
            \begin{array}{|ccccccc|}
            \hline
                \fbox{$\overadd{*}{\tilde\chi}_{4-n}$} & 0 & \hspace{0.2cm} 0 & \hspace{1cm} 0 & \hspace{2cm} 0 & \hspace{1cm} \dots \quad \dots  \quad  \ldots   \quad\dots\quad \dots\quad\ldots & \hspace{0.5cm} 0 \hspace{1.2cm}
               \\
              0 &  \fbox{$\overadd{*}{\tilde\psi}_{5-n}$} & \hspace{0.2cm} 0 & \hspace{1cm} 0 & \hspace{2cm} 0 & \hspace{1cm} \dots \quad \ldots   \quad\dots\quad \dots\quad\dots   \quad \ldots  & \hspace{0.5 cm} 0 \hspace{1.2cm}
               \\
            \hline
            \end{array}
                \\
                \vspace{-.4cm}
                \\
             \begin{array}{|ccccccc|}
            \hline
                \\
                \vspace{-1cm}
                \\
                \hspace{0.5cm} 0 & \hspace{1cm } 0 & \hspace{0.5cm} \fbox{$\overadd{0}{\tilde\chi}_4$} & \hspace{0.6cm} \overadd{1}{\tilde\psi}_4 & \hspace{1.5cm} \overadd{1}{\tilde\chi}_4 & \hspace{0.8cm} \dots \quad  \ldots   \quad\dots\quad \dots\quad\dots   \quad \ldots & \hspace{0.3cm} \overadd{k}{\tilde\chi}_4 \hspace{1cm}
               \\
            \hline
            \end{array}
                \\
                \vspace{-.4cm}
                \\
            \begin{array}{|cccccccc|}
            \hline
                 \hspace{0.5cm} 0 & \hspace{1 cm } 0
                 &  \hspace{0.8cm}  0 &
               \overadd{1}{\tilde\psi}_{\frac{9-n}{2}}            &  \overadd{1}{\tilde\chi}_{\frac{9-n}{2}}  & \dots \quad \dots    & \ldots  \quad\dots\quad \dots  \quad \ldots    &  \, \overadd{k}{\tilde\chi}_{\frac{9-n}{2}} \phantom{LL}
               \\
               \hspace{0.5cm} \vdots & \hspace{1 cm } \vdots &
               \hspace{0.8cm} \vdots     &\vdots      &  \vdots  &       &     &  \, \vdots
               \\
                \hspace{0.5cm} 0 & \hspace{1 cm } 0 & \hspace{0.8cm} 0 &
               \fbox{ $\overadd{1}{\tilde\psi}_{\frac{7-n}{2}+(n-1)}$ }
               &  \overadd{1}{\tilde\chi}_{\frac{7-n}{2}+(n-1)}  & \dots \quad \dots    & \ldots   \quad\dots\quad \dots    \quad \ldots & \, \overadd{k}{\tilde\chi}_{{\frac{7-n}{2}+(n-1)}}
               \\
            \hline
            \end{array}
                \\
                \vspace{-.4cm}
        \\
            \begin{array}{|cccccccc|}
            \hline
            \hspace{0.5cm} 0 & \hspace{1 cm } 0 & \hspace{0.8cm} 0 &
              \phantom{\overadd{k}{\tilde\chi}_{,kkk}} \, 0       &   \hspace{1.4cm}  \overadd{1}{\tilde\chi}_{{\frac{9-n}{2}+(n-1)}}    & \quad \dots     & \ldots\quad\dots\quad \quad\dots\quad \ldots & \overadd{k}{\tilde\chi}_{{\frac{9-n}{2}+(n-1)}}
              \\
              \hspace{0.5cm} \vdots & \hspace{1 cm } \vdots & \hspace{0.8cm} \vdots &
              \phantom{\overadd{k}{\tilde\chi}_{,kkk}} \, \vdots        &     \hspace{1.2cm}\vdots     &        &                                                                & \vdots
              \\
              \hspace{0.5cm} 0 & \hspace{1cm } 0 & \hspace{0.8cm} 0 &
              \phantom{\overadd{k}{\tilde\chi}_{,kkk}} \, 0      &   \hspace{1.5cm} \fbox{$\overadd{1}{\tilde\chi}_{(n-1)+4} $}     &   \quad \dots   & \ldots\quad\dots\quad\dots\quad \ldots  & \overadd{k}{\tilde\chi}_{(n-1)+4}
              \\
            \hline
            \end{array}
        \\
            \begin{array}{cc}
                \hspace{8cm}\vdots &
            \end{array}
        \\
            \begin{array}{|ccccccccc|}
            \hline
            \hspace{0.5cm} 0 & \hspace{1 cm } 0 &
              \hspace{0.8cm } 0   & \hspace{1,1cm} 0   &   \hspace{2.3cm}    0        & \hspace{0.2cm} \ldots  \, \dots & \hspace{0.2cm}  0        &\hspace{0.2cm}  \overadd{k}{\tilde\psi}_{(k-1)(n-1)+5} \quad  & \overadd{k}{\tilde\chi}_{(k-1)(n-1)+5}
              \\
             \hspace{0.5cm} \vdots & \hspace{1 cm } \vdots &
             \hspace{0.8cm } \vdots      &   \hspace{1.1 cm} \vdots    &  \hspace{2.3cm} \vdots        & \hspace{0.2cm}  \ddots \, \ddots &  \hspace{0.2cm}  \vdots        &\, \vdots \quad  & \vdots
              \\
              \hspace{0.5cm} 0 & \hspace{1 cm } 0 &
              \hspace{0.8cm } 0   & \hspace{1,1cm} 0   &   \hspace{2.3cm}    0        & \hspace{0.2cm}  \ldots  \, \dots & \hspace{0.2cm} 0 &\hspace{0.2cm} \fbox{$\overadd{k}{\tilde\psi}_{\frac{7-n}{2}+k(n-1)} $}\quad  & \overadd{k}{\tilde\chi}_{\frac{7-n}{2}+k(n-1)}
              \\
            \hline
            \end{array}
        \\ \vspace{-.4cm}
        \\
            \begin{array}{|ccccccccc|}
            \hline
            \hspace{0.5cm} 0 & \hspace{1 cm } 0 &
              \hspace{0.8cm } 0   & \hspace{1,1cm} 0   &   \hspace{2.3cm}    0  & \hspace{0.2cm} \ldots \,\dots & \hspace{0.2cm}  0        &\hspace{1.4cm}  0 \quad  & \hspace{1.2cm} \overadd{k}{\tilde\chi}_{\frac{9-n}{2}+k(n-1)}
              \\
             \hspace{0.5cm} \vdots & \hspace{1 cm } \vdots &
             \hspace{0.8cm } \vdots      &   \hspace{1.1 cm} \vdots    &  \hspace{2.3cm} \vdots        & \hspace{0.2cm}  \ddots \, \ddots & \hspace{0.2cm} \vdots        &\hspace{1.4cm} \vdots \quad  & \hspace{1.2cm} \vdots
              \\
             \hspace{0.5cm} 0 & \hspace{1 cm } 0 &
              \hspace{0.8cm } 0   & \hspace{1,1cm} 0   &   \hspace{2.3cm}    0  & \hspace{0.2cm}  \ldots \,\dots & \hspace{0.2cm}  0 &\hspace{1.4cm}  0 \quad  & \hspace{1.2cm} \fbox{$\overadd{k}{\tilde\chi}_{k(n-1)+4}$}
              \\
            \hline
            \end{array}
            \end{array}
     \right)
 \end{align*}
  \caption{The submatrix $G^{[X]}$ for $k>(n-3)/2$.}\label{F15XII23.1}
\end{figure}
The matrix for $X=V$ can be read from Figure~\ref{F15XII23.1} after deleting the 
 second row and second column there; the matrix for $k=\frac{n-3}{2}$ is obtained after deleting the first row and first column there.
 The new boxed entries in Figure~\ref{F15XII23.1}, as compared to \eqref{5IX23.w1}, are%
 \index{chi@$\overadd{*}{\tilde\chi}_{4-n}$}%
 \index{psi@$\overadd{*}{\tilde\psi}_{5-n}$}%
 \begin{align}
     \overadd{*}{\tilde\chi}_{4-n} & := 1\,,\quad
     \overadd{*}{\tilde\psi}_{5-n} := (\TSzlap - (n-2)\myGauss)|_{\TSzlap \mapsto \lambda_{\ell}} \,,\quad
     \overadd{0}{\tilde\chi}_4  = 1 \,,
 \end{align}
 where the second term is only relevant for $X=S$.
Now, we have 
 $$
  (\TSzlap - (n-2)\myGauss)|_{\TSzlap \mapsto \lambda_{\ell}}\neq 0
  \,,
 $$
 equivalently, $(\TSzlap - (n-2)\myGauss) \eta_{\ell}^{[S]} \neq 0$, as follows from the commutation relation (cf.~\eqref{19X23.1})
 \begin{align}
     &(\TSzlap - (n-2)\myGauss)\zspaceD_A\phi = \zspaceD_A \TSzlap \phi\,, 
 \end{align}
 therefore, if we write $ \eta_{\ell}^{[S]} = \zspaceD_A \phi_\ell$, then 
 \begin{align}
     &(\TSzlap - (n-2)\myGauss)\eta_{\ell}^{[S]} = 0
      \
      \iff
      \
       \TSzlap \phi_\ell = 0
      \
      \iff
      \
       \phi_\ell = \const
      \
      \iff
      \
       \eta_{\ell}^{[S]}=0\,.
 \end{align}
  It readily follows  that $\overadd{*}{\tilde\psi}_{5-n}$ does not vanish for such modes.
Since all the remaining boxed entries of $G^{\red{[X]}} $ are $\ell$-independent non-zero numbers, we conclude that $G^{\red{[X]}}\circ \zdivtwo\circ\, C$ has trivial kernel  on modes with $\eta_{\ell}\in \CKVp$. Furthermore, the regularity of the solution is as given in \eqref{11IV24.0} and \eqref{11IV24.2a} with $p\in (\{n-4,n-5\}\cup[\frac{9-n}{2},1]\cup[4,4+k])\cap\Z$. The proof of this follows analogously to that of Corollary \ref{C10V24} and is left to the readers.
 
 Finally, let us consider the $\TTt$ projection of the fields $\overadd{p}{r}_{AB}$, $1\leq p \leq k$. The transport equations for these fields are coupled amongst themselves, but are not coupled with any other fields. As a result, the analysis is very similar to that in the proof of Theorem~\ref{t22VIII23.1}. Indeed, the equations we are looking at is the $\TTt$ projection of \eqref{22VIII23.2}:
\begin{align}
    & \underbrace{(\overadd{p}{q}_{AB} \big|_{\secN_2} -\overadd{p}{ q}_{AB} \big|_{\secN_1} + \text{known fields})^{[\TTt]}}_{=: \overadd{p}{r}{}^{[\TTt]}_{AB}}
    \nonumber
    \\
    &=    \overadd{p}{\psi}\ofP \, \kphit{AB}{(7-n+2p)/2}^{[\TTt]}
    + \alpha^{2p} \overadd{p}{\psi}_{[\alpha]} \kphit{AB}{(7-n-2p)/2}{}^{[\TTt]}
    + m^p \overadd{p}{\psi}_{[m]} \kphit{AB}{(7-n+2p(n-1))/2}{}^{[\TTt]}
    \nonumber
    \\
    & \quad
    + \sum_{j,\ell}^{p_{**}}  m^{j} \alpha^{2\ell} \overadd{p}{\psi}_{j,\ell}\ofP \, \kphit{AB}{p - \frac{n-7}{2} +  j (n-2) - 2 \ell}{}^{[\TTt]}
    \,,
    \quad  1\leq p\leq k\,;
    \label{22XI23.2}
\end{align}
recall that $\overadd{p}{\psi}_{[\alpha]} = 0$ for $p>\frac{n-1}{2}$, and that ``known fields'' refer to the contribution from the field $\interph$. We then have the following:

 \begin{theorem}
 \label{t22XI23.1} 
 Let the pair $(n,k)$ be inconvenient.
The system  \eqref{22XI23.2} can be solved by a choice of interpolating fields
$$
 \kphi{j}{}^{[\TTt]}_{AB} \in
  H^{\kgamma}(\secN)
  \,, \quad   j \in \left[\max\{\frac{7-n-2k}{2},4-n\},
\frac{7-n+2k(n-1)}{2}\right]$$
for any finite $k$. Its solutions are determined by an elliptic system, uniquely up to possibly a finite number of  eigenfunctions of the Laplacian  acting on tensors.
\end{theorem}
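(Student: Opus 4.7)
\medskip

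\noindent\emph{Proof plan.} The strategy will be to adapt the proof of Theorem~\ref{t22VIII23.1} to the restriction to $\TTt$ tensors. The essential simplification on $\TTt$ is that $P(h^{[\TTt]}) = 0$, so each operator $\overadd{p}{\psi}\ofP$ acting on $\TTt$ fields reduces to a polynomial in $\TSzlap$ and $\tric$ alone; by Proposition~\ref{P17X23.1}(3), these restrictions are elliptic and formally self-adjoint. The same holds for the ancillary operators $\overadd{p}{\psi}_{j,\ell}\ofP$, which are again sums and products of operators of the form $\operatorname{L}_{a,c}|_{\TTt}$ from \eqref{20VIII23.3}. First I would recall that for odd $n$ one has $\overadd{p}{\psi}_{[\alpha]}=0$ whenever $p>\tfrac{n-1}{2}$; this is exactly the mechanism that truncates the range of $j$ from below at $\max\{(7-n-2k)/2,\,4-n\}$, since the field $\kphit{AB}{(7-n-2p)/2}{}^{[\TTt]}$ appears in \eqref{22XI23.2} only for those values of $p$ for which its coefficient does not vanish.

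Next I would organise the unknowns into the vectors
\[
 \Theta_k:= \bigl(\kphi{(7-n-2p)/2}{}^{[\TTt]}\bigr)_{p},\qquad
 \Xi_k:= \bigl(\kphi{(7-n+2p)/2}{}^{[\TTt]}\bigr)_{p},\qquad
 \Omega_k:= \bigl(\kphi{(7-n+2p(n-1))/2}{}^{[\TTt]}\text{ and mixed indices}\bigr)_{p},
\]
with $p$ running from $1$ to $k$, and rewrite \eqref{22XI23.2} as
\[
 A_k \Theta_k + \psi_k \Xi_k + N_k \Omega_k = r_k,
\]
exactly as in \eqref{26VII23.1b}. With the choice $s_i=2i$, $t_j=0$, the same dimensional bookkeeping as in item~(ii) of the list on p.~\pageref{e3IX23.2} shows that $\psi_k|_{\TTt}$ is lower-triangular up to terms of strictly lower order, with diagonal entries $\overadd{p}{\psi}\ofP|_{\TTt}$ that are elliptic by the restriction argument above. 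Hence $\psi_k|_{\TTt}$ is elliptic in the ADN sense.

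The coefficients $\overadd{p}{\psi}_{j}$ on $\TTt$ are pairwise-commuting polynomials in $\TSzlap|_{\TTt}$ and $\tric|_{\TTt}$, both of which are elliptic and self-adjoint, so a complete $L^2$-orthonormal basis of joint eigentensors $\eta_\ell$ of these operators exists in $\TTt$, with eigenvalues tending to infinity. Decomposing each of $\Theta_k,\Xi_k,\Omega_k,r_k$ in this basis reduces the system to a family of finite-dimensional linear problems indexed by $\ell$, as in \eqref{26VII23.1modeb}. For $\ell$ large enough, the ellipticity of $\psi_k$ guarantees that $\det \psi_k|_{\TSzlap\mapsto\lambda_\ell,\,\tric\mapsto\mu_\ell}\neq 0$, so the mode equations are uniquely solvable with $\Theta_{k,\ell}=\Omega_{k,\ell}=0$. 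The existence part for these high modes plus the elliptic (ADN) a~priori estimate (via Theorem~\ref{T30XI23.1} adapted to a single-block system on $\TTt$) yields the $H^{k_\gamma}$ regularity claim.

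The main obstacle, as in the convenient case, is the low-mode analysis $\ell<N(k)$: here one must establish surjectivity of the finite-dimensional linear map $(A_k\,\psi_k\,N_k)|_{\text{spectral values}}$, or equivalently triviality of the kernel of its adjoint. The plan is to perform a row/column permutation bringing the matrix to the block structure of \eqref{22VIII23.5}--\eqref{27VII23.w1}: the adjoint is shown to have zero kernel because the boxed diagonal-type entries $\overadd{p}{\psi}_{(7-n+2p(n-1))/2}=m^p\overadd{p}{\psi}_{[m]}$ are non-zero numbers (item~(iii) of p.~\pageref{e3IX23.2}) when $m\neq 0$, and, complementarily, $\overadd{p}{\psi}_{(7-n-2p)/2}=\alpha^{2p}\overadd{p}{\psi}_{[\alpha]}$ is non-zero for $p\leq(n-1)/2$ when $\alpha\neq 0$, which together fill the reverse diagonal sufficiently to preclude a non-trivial kernel. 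Uniqueness up to a finite-dimensional kernel of $\psi_k|_{\TTt}$ then follows from the fact that only finitely many eigenvalues can make $\det\psi_k|_{\text{spectral}}$ vanish, and smoothness of that kernel is automatic by ellipticity. Finally, the regularity statement $\kphi{j}_{AB}^{[\TTt]}\in H^{k_\gamma}(\secN)$ follows from the ADN estimate \eqref{14IV24.21} applied to the $\TTt$-restricted system, exactly as in the last paragraph of the proof of Theorem~\ref{t22VIII23.1}. \qed
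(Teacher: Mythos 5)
Your argument is correct and is essentially the paper's own proof, which establishes this theorem by repeating the proof of Theorem~\ref{t22VIII23.1} with all fields replaced by their $\TTt$ projections, the only modifications being that the $\alpha$-coefficients $\overadd{p}{\psi}_{[\alpha]}$ vanish for $p\ge\frac{n+1}{2}$ (so the $A_k$-block is no longer surjective and one loses the option of setting $\Omega_k=\Xi_k=0$, while solvability survives through the non-vanishing $m$-block entries $m^p\overadd{p}{\psi}_{[m]}$, exactly as you argue), and that regularity follows from ellipticity of $\overadd{p}{\psi}\ofP$ restricted to $\TTt$. One small phrasing point: since $P$ annihilates $\TTt$, every coefficient operator reduces to $\TSzlap+2\tric+\mathrm{const}$, so the mode decomposition only requires the eigenbasis of this single self-adjoint elliptic (Lichnerowicz-type) operator; you need not appeal to joint eigentensors of $\TSzlap|_{\TTt}$ and $\tric|_{\TTt}$ separately, as these two operators need not commute on a general Einstein manifold.
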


The proof of this theorem proceeds in a way identical to that of Theorem~\ref{t22VIII23.1} with the following minor modifications:

\begin{enumerate}
    \item[1.] All two-covariant tensor fields appearing there are replaced with their $\TTt$ projections.
    \item[2.]
    The coefficients $\overadd{p}{\tilde\psi}_{\frac{7-n-2p}{2}} $  are still  equal to $ \alpha^{2p}\overadd{p}{\psi}_{[\alpha]}$ (cf.~\eqref{15XII23.21}), and
    are non-zero for $p\le\frac{n+1}{2}$. Now they vanish  for $p > \frac{n+1}{2}$, so that the matrix $A_k$ is no longer surjective. This does not affect the solvability of the system, but only removes  the  option   of setting $\Omega_k = 0 =\Xi_k$ .
    \item[3.]
    Finally, regularity follows from ellipticity of $\overadd{p}{\psi}\ofP$ acting on $\TTt$.
\end{enumerate}

\ptclater{KerrdeSitter.tex and two following commented out, replaced by the version from January}
\section{Linearized Kerr-deSitter in Bondi Gauge}\label{Kerr-dS appendix}
The full Kerr-(A)dS metric in $(n+1)$-spacetime dimensions is given by
\cite{Gibbons:2004js}:%
\footnote{To facilitate the comparison with that reference we  note that $a_i$ is the negative of the one there, and recall that $a_{N+\delta}=0$ for $\delta=1$.}
\begin{align}\label{eq: Kerr AdS}
ds^2=&-W(1-\alpha^2r^2)dt^2+\frac{2m}{U}\Bigl(Wdt+\sum_{i=1}^\red{N} \frac{a_i \mu_i^2 d\phi_i}{\Xi_i}\Bigr)^2\nonumber\\
&+\sum_{i=1}^\red{N}\frac{r^2+a_i^2}{\Xi_i}\bigl(\mu_i^2d \phi_i^2+d\mu_i^2)
+\frac{Udr^2}{V-2m}+\delta r^2 d\mu_{\red{N}+\delta}^2\nonumber\\
&+\frac{\alpha^2}{W(1-\alpha^2r^2)}\Bigl(\sum_{i=1}^\red{N} \frac{r^2+a_i^2}{\Xi_i}\mu_i d\mu_i+\delta r^2 \mu_{\red{N}+\delta} d\mu_{\red{N}+\delta}\Bigr)^2\,,
\end{align}
where
\begin{align}
W&=\sum_{i=1}^{\red{N}} \frac{\mu_i^2}{\Xi_i}+\delta \mu_{\red{N}+\delta}^2\,,\quad V=r^{\delta-2}(1-\alpha^2r^2)\prod_{i=1}^\red{N}(r^2+a_i^2)\,,\nonumber\\
U&=\frac{V}{1-\alpha^2r^2}\Bigl(1-\sum_{i=1}^\red{N}\frac{a_i^2\mu_i^2}{r^2+a_i^2}\Bigr)\,,\quad \Xi_i=1+\alpha^2 a_i^2\,.
\end{align}
Here, $\delta=1, 0$ for even, odd dimensions, $N=\bigl[\frac{n}{2}\bigr]$ (where $[A]$ denotes the integer part of $A$), and the coordinates $\mu_i$ obey a constraint
\begin{equation}
\sum_{i=1}^{N+\delta} \mu_i^2=1\,.
\end{equation}
In these coordinates there is no rotation at spatial infinity and the angles $\phi_i$ are periodic with period $2\pi$.

We linearise the metric by expanding to linear order in the various rotation parameters $a_i$ which gives the higher dimensional Lense--Thirring metric
\begin{align}\label{eq: LTHD}
ds^2_{LT}&=-fdt^2+\frac{dr^2}{f}+r^2\left(\sum_{i=1}^\red{N} \mu_i^2d\phi_i^2+\sum_{i=1}^{\red{N}+\delta}\!d\mu_i^2 \right)
+\frac{4m}{r^{n-2}}\sum_{i=1}^\red{N} a_i\mu_i^2dtd\phi_i
 +\red{O}(a_i^2)
 \,,
\\
  f&=\!1-\alpha^2r^2-\frac{2m}{r^{n-2}}+\red{O}(a_i^2)\,.
\end{align}
We can rewrite this as a perturbation on the Birmingham--Kottler metrics\footnote{At least for spherical topologies.}
\eqref{23VII22.3} by noting that the term in the brackets of \eqref{eq: LTHD} is just the metric on the $(n-1)$-sphere $\ringh_{AB}dx^A dx^B$ and by performing the coordinate transformation
\begin{equation}
	dt=du+\frac{dr}{f}\,.
\end{equation}
Then we have,
\begin{equation}\label{21V24f.1}
	ds^2_{LT}={\nobarzg}_{\a \b} dx^\a dx^\b +\frac{4m}{r^{n-2}}\sum_{i=1}^\red{N} a_i\mu_i^2d\phi_i(du+dr/f)+\red{O}(a_i^2)\,.
\end{equation}

This is manifestly not in Bondi gauge as there are $g_{rA}$ components, however we can address this in two ways which turn out to be equivalent. First, outside of all horizons,  we could accompany the change from the $t$- to the $u$-coordinate with   $r$-dependent rotations of the  $(\mu_i,\phi_i)$ planes,
\begin{equation}\label{eq: Coord change}
	d\varphi_i=d\phi_i-\frac{2ma_i}{r^{n} f}dr
\,,
\end{equation}
which transforms the metric to
\begin{equation}\label{eq: Lin Kerr Bondi}
ds^2_{LT}={\nobarzg}_{\a \b} dx^\a dx^\b +\frac{4m}{r^{n-2}}\sum_{i=1}^\red{N} a_i\mu_i^2d\varphi_idu+\red{O}(a_i^2)\,.
\end{equation}%

Second we could perform an infinitesimal gauge transformation generated by $\zeta^\mu$ to put the perturbation into Bondi form (\`a la Section 3 of~\cite{ChHMS}). Denoting the perturbation in \eqref{21V24f.1} by $h_{\mu\nu}$, the vector $\zeta$ satisfies
\begin{align}
	{\cal L}_\zeta (g_{rr} +h_{rr}) &=0\,,\\
	{\cal L}_\zeta (g_{rA} +h_{rA}) &=0\,,\\
	g^{AB}{\cal L}_\zeta (g_{AB} +h_{AB}) &=0\,.
\end{align}
Given that the nonzero components of $h_{\mu\nu}$ are
\begin{align}
h_{rA}=\frac{2ma_i}{r^{n-2}f} \ringh_{AB}\delta^B_{\phi_i}\,,\quad
h_{uA}=\frac{2m a_i}{r^{n-2}} \ringh_{AB}\delta^B_{\phi_i}
 \,.
\end{align}
The first two conditions are equivalent to
\begin{align}
\partial_r \zeta^u-\frac{1}{2}h_{rr} &=0\,,\\
\partial_r \zeta^A-\frac{\ringh^{AB}}{r^2}\left(\partial_B \zeta^u-h_{rB}\right)&=0\,,
\end{align}
and one can check that~\cite[Section~3]{ChHMS}
\begin{equation}
	\zeta^r=-\frac{1}{n-1}\left(r\zspaceD_B\zeta^B+\frac{1}{2r}\ringh^{AB}h_{AB}\right)\,.
\end{equation}

Given that $h_{rr}=0=h_{AB}$ we can choose $\zeta^u=0$, and then the only non zero components of the vector are
\begin{equation}
	\zeta^A=2ma_i\delta^A_{\phi_i}\int_{r_0}^{r}\frac{1}{s^nf}ds\,,
\end{equation}
which puts the metric exactly into \eqref{eq: Lin Kerr Bondi}. Using that $\zeta^A$ generates the infinitesimal transformation $x^A\to x^A+\xi^A$ one can check that this is equivalent to \eqref{eq: Coord change}.

Finally, identifying the angular momentum as  $J_i=m a_i$,
 clearly the linearised Kerr--(A)dS spacetime adds the following perturbations
\begin{equation}
	\delta V=0=\delta \beta\,,\quad \delta U^A=-\frac{2J_i }{r^{n}}\,\delta^A_{\phi_i} \,.
\end{equation}

{In this form the (linearised) solution can easily be upgraded to  
Ricci-flat $(\secN,\zgamma)$ 
by replacing $f$ with $\zguu:=
-\alpha^2r^2-{2m}{r^{-(n-2)}}$, the metric on the sphere with the Ricci-flat metric on the transverse manifold $\secN$, and setting
\begin{equation}
\delta U^A=\frac{\zlambda^A(x^C) }{r^{n}} \,,
\end{equation}
where $\zlambda_A(x^C)$ is an arbitrary Killing vector of the transverse space~$\secN$. One can then check that the linearised Einstein equations in Section~\ref{s3X22.1} continue to hold for such a perturbation.
}

\section{Some commutation relations}
 \label{App19V23.1}

Let $h_{AB}$ be a 2-covariant, symmetric trace-free tensor.
For the convenience of the reader we collect here several identities, some of which trivial, which are repeatedly used in the main body of the paper. Recall that $n=d+1$, where $d$ is the dimension of the manifold carrying the Einstein metric $\ringh_{AB}$.

We have:
\begin{align}
   & (i)
   & \zspaceD_B \TSzlap \xi^u
   &= \TSzlap \zspaceD_B \xi^u - (n-2)\myGauss \zspaceD_B\xi^u
   \label{19X23.1}
\\
   & (ii) \,
   & \TS[\zspaceD_A \TSzlap \xi_B]
   &=
    \TS[\TSzlap \zspaceD_A\xi_B- (n-2)\myGauss \zspaceD_A \xi_B + 2
     \tric (\zspaceD \xi)_{AB} ]   \label{28VIII23.f3}
\\
   & (iii) \,
   & \zspaceD^B \TSzlap \xi_B
   &=  \TSzlap \zspaceD^A\xi_A + (n-2)\myGauss \zspaceD^A \xi_A \label{14VII23.f1}
\\
   &(iv)\,
   & \zspaceD^B \TSzlap h_{AB}
  &=
   \TSzlap \zspaceD^B h_{AB}
   - 2   \zspaceD^B \tric h_{AB}
   + (n-2)\myGauss \zspaceD^B h_{AB}
    \label{18V23.1}
\\
   &(v)\label{27XI23.f1}
   & \zspaceD^A C(\xi)_{AB}
   &=
   \frac{1}{2} \bigg(
   (\TSzlap + (n-2)\myGauss) \xi_B
   + \frac{n-3}{n-1}\zspaceD_B \zspaceD_C\xi^C
   \bigg)
   \\
    &(vi)\label{27XI23.f2}
    & \zspaceD^B \TS[\zspaceD_A\zspaceD_B\zspaceD^C\xi_C]
    &= \frac{n-2}{n-1}\zspaceD_{\tdA} \, [\, \TSzlap  + (n-1)\twoscsign ] \zspaceD_C\xi^C
\\
 &(vii)\label{20VI23.1}
&\zspaceD_AD^C \TSzlap h_{B C}
&=
\TSzlap \zspaceD_AD^C h_{B C}\nonumber
\\
&&&+2\left( \zR^E{}_A{}^F{}_B P(h)_{EF}
  -D_A\zspaceD^C \tric (h)_{B C}+\frac{2(n-2)}{(n-1)} \myGauss\zspaceD^ED^Fh_{E F}\ringh_{AB}\right)
\end{align}
Note, this last equation implies the commutation of the operators $ \ck{{k}}{\ofPnoP}$.
 Indeed, \eqref{3III23.5} shows 
   that we can write $\ck{{k}}{\ofPnoP}=A_k( B_kP+\tric + \frac{1}{2} \TSzlap+C_k)$ for some  numbers  $A_k$, $B_k$ and $C_k$. 
   Expanding the commutator gives
\begin{align}\label{20VI23.12}
	\left[\ck{{k}}{\ofPnoP},\ck{{k'}}{\ofPnoP}\right](h)=A_kA_{k'}(B_k-B_{k'})[P,\tric + \frac{1}{2} \TSzlap](h)=0\,,
\end{align}
where the second equality follows from applying the $\TS$ operator to \eqref{20VI23.1}.

\section{Further commutation relations}
 \label{App8XII23.1}

In this section we prove some further useful  commutation relations.
We assume throughout that the metric $\ringh$ is Einstein.

In the equations we often (implicitly) have the appearance of the following operator mapping symmetric trace-free tensors to themselves%
\index{Delta@$\zTSlap_T$}%
\begin{equation}
	\zTSlap_T h:=\left[\TSzlap+2(\tric-(n-2)\myGauss)\right]h \,.
\label{8XII23.g5}
\end{equation}
This is the negative of the Lichnerowicz Laplacian.
 Likewise, we have the ``natural'' Laplacians acting on vectors $V$ and functions $
 \varphi$
\begin{align}\label{18VI24.f1}
	\zTSlap_V V&:= \left(\TSzlap -(n-2)\myGauss\right)V\,,
	\\
	\label{18VI24.f2}
	\zTSlap_S\varphi&:=\TSzlap\varphi\,,
\end{align}
which are ``natural'' because these are simply the (negative of) the Hodge Laplacian acting on one-forms and \red{functions}.

Now these are particularly useful in our context because they satisfy the following commutation properties:
\begin{align}
	\zTSlap_T \TS[\zspaceD_{A} V_{B}] &= \TS[\zspaceD_{A} \zTSlap_V V_{B}]\,,
	\\
	\zTSlap_V(\zspaceD_{A}\varphi) &= \zspaceD_{A} (\zTSlap_S \varphi)\,.
\end{align}
The first of these follows from \eqref{28VIII23.f3} and \eqref{16V22.1bxc}, and  the second from \eqref{19X23.1}. Together, in particular, they imply
\begin{align}
	\zTSlap_T \TS[\zspaceD_{A}\zspaceD _{B}\varphi] &=  \TS[\zspaceD_{A}\zspaceD _{B}\zTSlap_S\varphi]\,.
\end{align}
It is a further useful result that
\begin{align}
	\zdivtwo\circ\zTSlap_T &=\zTSlap_V\circ \zdivtwo\,,
	\\
	\zdivone\circ\zTSlap_V &=\zTSlap_S\circ \zdivone\,,	
\end{align}
which follow from \eqref{18V23.1} and \eqref{14VII23.f1}. Similarly this implies
\begin{equation}
	\zdivone\circ\zdivtwo\circ\zTSlap_T =\zTSlap_S\circ \zdivone\circ\zdivtwo \,.
\end{equation}

Now we can do a similar thing with the operator $P$. Defining%
\index{P@$P$!$P_T$}%
\begin{equation}
	P_T\equiv P = C\circ \zdivtwo
\end{equation}
to act specifically on symmetric trace-free two tensors,  we can try to define operators $P_V$ and $P_S$ such that,
\begin{align}
\zdivtwo\circ P_T&\equiv P_V\circ\zdivtwo\,,\label{28XI23.f1}
\\
\zdivone\circ P_V&\equiv P_S\circ\zdivone\,.\label{28XI23.f2}
\end{align}
We have already seen (and it is obvious) that%
\index{P@$P$!$P_V$}%
\begin{equation}
	P_V := \zdivtwo\circ\, C
 \label{12XII23.21}
\end{equation}
works.
Moreover, \eqref{27XI23.f1} and \eqref{14VII23.f1} together with \eqref{28XI23.f2} (see also \eqref{10XI23.1}) imply%
\index{P@$P$!$P_S$}%
\begin{equation}
	P_S\equiv P_S(\zTSlap_S):=\frac{n-2}{n-1}\left(\zTSlap_S +(n-1)\myGauss\right)\,.
\end{equation}

Finally we can factor-in the decomposition $\xi=\xi^{[S]}+\xi^{[V]}$ (cf. \eqref{4X23.1}), and act with $P_V$  on these parts separately. This gives,
\begin{align}
	P_V(\xi^{[S]}+\xi^{[V]}) &=	P_V(\xi^{[S]}) +P_V(\xi^{[V]})
	\\
	&=\underbrace{\frac{n-2}{n-1}\left(\zTSlap_V	
 	  +\myGauss\right)}_{=: P_V^{[S]}(\zTSlap_V)}\xi^{[S]}
	+ \underbrace{\frac{1}{2}\left(\zTSlap_V +(n-2)\myGauss\right)
        }_{=: P_V^{[V]}(\zTSlap_V)}\xi^{[V]}
         \,.
\end{align}
Here we have used  \eqref{27XI23.f1} and  \eqref{4X23.1}  to obtain the second line.

Notice that $P_V$ restricted to acting on $\xi^{[S]}$
 is just $P_S$ promoted to act on vectors by replacing $\zTSlap_S$ with $\zTSlap_V$. We can do the same trick to calculate the action of $P_T$ on the scalar, vector, and tensor (transverse) parts of $h=h^{[S]}+h^{[V]}+h^{[\TTt]}$ (cf. \eqref{10VI23.3}):%
\index{P@$P$!$P_T$}%
\begin{align}
	P_T (h^{[S]}+h^{[V]}+h^{[\TTt]})&=P_T(h^{[S]})+P_T(h^{[V]})+P_T(h^{[\TTt]})
	\\
	& =:
         P^{[S]}_T(h^{[S]}) +P^{[V]}_T(h^{[V]})+P^{[\TTt]}_T(h^{[\TTt]})\,.
\end{align}
Using \eqref{28VIII23.f4} and \eqref{28VIII23.f5} we find
\begin{align}
	P^{[S]}_T&=P_S(\zTSlap_T)\,,\quad P^{[V]}_T=P^{[V]}_V(\zTSlap_T)\,,\quad P^{[\TTt]}_T=0\,.
\end{align}

Now, since each of these operators depends only on the ``natural'' Laplacians we find the following commutation relations:
\begin{align}
P_T \left(\TS[\zspaceD_{A} V_{B}]\right) &= \TS[\zspaceD_{A} P_V V_{B}]\,,
\\
P_V(\zspaceD_{A}\varphi) &= \zspaceD_{A} (P_S \varphi)\,,
\\
P_T \left(\TS[\zspaceD_{A}\zspaceD _{B}\varphi]\right) &=  \TS[\zspaceD_{A}\zspaceD _{B}P_S\varphi]\,.
\end{align}

Consider, next,  the   operator $ \ck{k}{\ofPnoP}$ of \eqref{3III23.5a}. Assuming as usual that the metric $\ringh$ is Einstein, rewritten in terms of the Lichnerowicz Laplacian $\zTSlap_T$  we find%
\index{K@${\cal K}_T(k)$}%
\begin{align}
	\ck{k}{\ofPnoP}
  & =
  -\frac{1}{7 -  n + 2 k} \bigg[\frac{2 (n - 1) P_T}{(3 + k) (3 -  n + k) }
	+  \zTSlap_T +(4+k(6-n+k)) \myGauss \bigg]
 \nonumber
\\
 & =:{\cal K}_T(k)
  \,.
 \label{8XII2.g2}
\end{align}
We have:%
\index{K@${\cal K}_V(k)$}%
\index{K@${\cal K}_S(k)$}%
\begin{align}
\zdivtwo\circ	{\cal K}_T(k) &=	{\cal K}_V(k)\circ \zdivtwo\,,
\\
\zdivone\circ	{\cal K}_V(k) &=	{\cal K}_S(k) \circ \zdivone\,,	
\\
\zdivone\circ\zdivtwo\circ	{\cal K}_T(k) &=	{\cal K}_S(k) \circ \zdivone\circ\zdivtwo \,.
\end{align}
where the subscript ${\cal K}_X(k)$ indicates that operator is a function of the corresponding $\zTSlap_X$ and $P_X$ and maps $X\to X$.
 Moreover these operators will respect the scalar, vector, and tensor decomposition in that
\begin{align}
{\cal K}_T(k) \TS[\zspaceD_{A} \xi_{B}] &= \TS[\zspaceD_{A} {\cal K}_V(k) \xi_{B}]\,,
\\
{\cal K}_V(k)(\zspaceD_{A}\varphi) &= \zspaceD_{A} ({\cal K}_S(k) \varphi)\,,\\
{\cal K}_T(k) \TS[\zspaceD_{A}\zspaceD _{B}\varphi] &=  \TS[\zspaceD_{A}\zspaceD _{B}{\cal K}_S(k)\varphi]\,.
\end{align}
Thus we can write the explicit expressions for ${\cal K}_T(k)$ acting on
$h^{[X]}$:%
\index{K@${\cal K}_T(k)$!${\cal K}^{[\TTt]}_T(k)$}%
\index{K@${\cal K}_T(k)$!${\cal K}^{[V]}_T(k)$}%
\index{K@${\cal K}_T(k)$!${\cal K}^{[S]}_T(k)$}%
\begin{align}
{\cal K}^{[S]}_T(k)&
  := -\frac{(1+k)(5-n + k)}{(7 - n + 2 k)(3-n+k)(3+k)}\left( \zTSlap_T +(2+k)(4-n+k)\myGauss\right)\,,
  \label{8XII23.g1-}
\\
{\cal K}^{[V]}_T(k)&
  := -\frac{(2+k)(4-n + k)}{(7 - n + 2 k)(3-n+k)(3+k)}\left( \zTSlap_T +(1+k)(5-n+k)\myGauss\right)\,,
  \label{8XII23.g3}
\\
 {\cal K}^{[\TTt]}_T(k)&
   :=-\frac{1}{(7 - n + 2 k)}\left( \zTSlap_T+ [4+k(6-n+k)]\myGauss\right)
 \,.
  \label{8XII23.g1}
\end{align}
 The expressions for ${\cal K}^{[X]}_{V,S}$ acting on scalars $\varphi$ and vectors $\xi$ follow readily by replacing the Lichnerowicz Laplacian with the scalar or vector Laplacian $\zTSlap_S$ or $\zTSlap_V$ from \eqref{18VI24.f1} and \eqref{18VI24.f2} respectively.
In this form it is clear that whenever the prefactor is defined,
the operators are  elliptic   since their principal part is the Laplacian.

Note that $k=-3$ is not allowed in ${\cal K}_V(k)$ and ${\cal K}_S(k)$, but $k=-3$ does not appear when trying to set-up a recurrence for the vector and scalar parts of  the transport equations \eqref{6III23.w6}.
Next, one  checks that ${\cal K}^{[\TTt]}_T(-3)$,   coincides with the expression for the special case $k=-3$ of \eqref{18IV23.1} restricted to act on $h^{[\TTt]}$. It follows that one can use  these transport equations for each field $ \overadd{i}{q}{}_{AB}^{[X]}$ independently, and that the operators appearing in the recursions are only those in \eqref{8XII23.g1-}-\eqref{8XII23.g1}.

We finally note that \eqref{8XII23.g1-}-\eqref{8XII23.g1} can be grouped together. If $s=0,1,2$ for  scalars, vectors, and $\TTt$ tensors respectively, then
\begin{multline}\label{28XI23.f3}
	{\cal K}^{[X_s]}_T(k)\equiv -\frac{(1+s+k)(5-s-n + k)}{(7 - n + 2 k)(3-n+k)(3+k)}\\
	\times\left( \zTSlap_T +[s^2(s-1)+(2-s+k)(4+s-n+k)]\myGauss\right)\,.
\end{multline}
Clearly, when $s=0,1$ (scalars and vectors) these operators vanish on the subspaces when $k=-(s+1)$. This implies that the scalar and the vector parts drop out of the recursion relations for the $ \overadd{i}{q}_{AB}$'s.
 \ptcheck{XII23: operators checked with Finn's mathematica file}

%
%
%
\newpage
\printindex

\newpage
\bibliographystyle{JHEP}
\bibliography{0HDNullGluing-minimalCleaned,references}

\end{document}